\theoremstyle{plain}
\newtheorem{thm}{Theorem}
\newtheorem{prop}[thm]{Proposition}
\newtheorem{lemma}[thm]{Lemma}
\newtheorem{cor}[thm]{Corollary}
\theoremstyle{definition}
\newtheorem{definition}[thm]{Definition}
\newtheorem{remark}[thm]{Remark}
\renewcommand{\max}{{\mathrm{max}}}
\newcommand{\roId}{{\mathrm{Id}}}
\newcommand{\rotr}{\mathrm{tr}}
\newcommand{\grad}{\mathrm{grad}}
\newcommand{\roloc}{\mathrm{loc}}
\newcommand{\tn}[1]{\ensuremath{\mathbb{T}^{#1}}}
\newcommand{\rn}[1]{\ensuremath{\mathbb{R}^{#1}}}
\newcommand{\sn}[1]{\ensuremath{\mathbb{S}^{#1}}}
\newcommand{\nn}[1]{\ensuremath{\mathbb{N}^{#1}}}
\newcommand{\bM}{\bar{M}}
\newcommand{\bS}{\bar{S}}
\newcommand{\bL}{\bar{L}}
\newcommand{\bx}{\bar{x}}
\newcommand{\bkappa}{\bar{\kappa}}
\newcommand{\g}{\gamma}
\newcommand{\dka}{\dot{\kappa}}
\newcommand{\dvka}{\dot{\varkappa}}
\newcommand{\Ndot}{\dot{N}}
\newcommand{\G}{\Gamma}
\newcommand{\de}{\delta}
\newcommand{\bge}{\bar{g}}
\renewcommand{\d}{\partial}
\newcommand{\mfl}{\mathfrak{l}}
\newcommand{\mfC}{\mathfrak{C}}
\newcommand{\mfS}{\mathfrak{S}}
\newcommand{\chmfC}{\check{\mathfrak{C}}}
\newcommand{\mfD}{\mathfrak{D}}
\newcommand{\mfL}{\mathfrak{L}}
\newcommand{\mff}{\mathfrak{f}}
\newcommand{\mfu}{\mathfrak{u}}
\newcommand{\mfv}{\mathfrak{v}}
\newcommand{\mfb}{\mathfrak{b}}
\newcommand{\dmfk}{\dot{\mathfrak{k}}}
\newcommand{\mfg}{\mathfrak{g}}
\newcommand{\bmfh}{\bar{\mathfrak{h}}}
\newcommand{\mfN}{\mathfrak{N}}
\newcommand{\mfn}{\mathfrak{n}}
\newcommand{\mfR}{\mathfrak{R}}
\newcommand{\mfK}{\mathfrak{K}}
\newcommand{\hmfK}{\hat{\mathfrak{K}}}
\newcommand{\chmfE}{\check{\mathfrak{E}}}
\newcommand{\chmfF}{\check{\mathfrak{F}}}
\newcommand{\tr}{\mathrm{tr}}
\newcommand{\rodiv}{\mathrm{div}}
\newcommand{\robg}{\mathrm{bg}}
\newcommand{\rofin}{\mathrm{fin}}
\newcommand{\roper}{\mathrm{per}}
\newcommand{\ro}{\mathbb{R}}
\newcommand{\bbE}{\mathbb{E}}
\newcommand{\bbP}{\mathbb{P}}
\newcommand{\bbU}{\mathbb{U}}
\newcommand{\bbV}{\mathbb{V}}
\newcommand{\rodim}{\mathrm{dim}}
\newcommand{\rotot}{\mathrm{tot}}
\newcommand{\rorem}{\mathrm{rem}}
\newcommand{\roinj}{\mathrm{inj}}
\newcommand{\ioih}{\iota_{\mathrm{ih}}}
\newcommand{\roScal}{\mathrm{Scal}}
\newcommand{\refer}{\mathrm{ref}}
\newcommand{\msA}{\mathscr{A}}
\newcommand{\msE}{\mathscr{E}}
\newcommand{\msR}{\mathscr{R}}
\newcommand{\msC}{\mathscr{C}}
\newcommand{\bmsC}{\overline{\mathscr{C}}}
\newcommand{\msD}{\mathscr{D}}
\newcommand{\msH}{\mathscr{H}}
\newcommand{\msO}{\mathscr{O}}
\newcommand{\ldr}[1]{\langle #1\rangle}
\newcommand{\sfh}{\mathsf{h}}
\newcommand{\sfg}{\mathsf{g}}
\newcommand{\sfx}{\mathsf{x}}
\newcommand{\sfy}{\mathsf{y}}
\newcommand{\bna}{\overline{\nabla}}
\newcommand{\mT}{\mathcal{T}}
\newcommand{\mK}{\mathcal{K}}
\newcommand{\mB}{\mathcal{B}}
\newcommand{\mD}{\mathcal{D}}
\newcommand{\mI}{\mathcal{I}}
\newcommand{\chmI}{\check{\mathcal{I}}}
\newcommand{\mJ}{\mathcal{J}}
\newcommand{\mC}{\mathcal{C}}
\newcommand{\mN}{\mathcal{N}}
\newcommand{\mM}{\mathcal{M}}
\newcommand{\mL}{\mathcal{L}}
\newcommand{\hOm}{\hat{\Omega}}
\newcommand{\mH}{\mathcal{H}}
\newcommand{\ml}{\mathcal{L}}
\renewcommand{\a}{\alpha}
\renewcommand{\b}{\beta}
\newcommand{\bfI}{\mathbf{I}}
\newcommand{\bfJ}{\mathbf{J}}
\newcommand{\bfK}{\mathbf{K}}
\newcommand{\chk}{\check{k}}
\newcommand{\chkappa}{\check{\kappa}}
\newcommand{\chN}{\check{N}}
\newcommand{\chxi}{\check{\xi}}
\newcommand{\chh}{\check{h}}
\newcommand{\che}{\check{e}}
\newcommand{\chE}{\check{E}}
\newcommand{\chF}{\check{F}}
\newcommand{\hE}{\hat{E}}
\newcommand{\hC}{\hat{C}}
\newcommand{\hu}{\hat{u}}
\newcommand{\hv}{\hat{v}}
\newcommand{\hf}{\hat{f}}
\newcommand{\chom}{\check{\omega}}
\newcommand{\chphi}{\check{\phi}}
\newcommand{\me}{\mathcal{E}}
\newcommand{\hme}{\hat{\me}}
\newcommand{\chme}{\check{\me}}
\newcommand{\e}{\epsilon}
\newcommand{\abs}[1]{|#1|}
\newcommand{\norm}[1]{\left\|#1\right\|}
\newcommand{\Vi}{\mathcal{V}}
\newcommand{\Ui}{\mathcal{U}}
\newcommand{\supp}{\mathrm{supp}}
\newcommand{\BaE}{\mathrm{E}}
\newcommand{\BaF}{\mathrm{F}}
\newcommand{\BaG}{\mathrm{G}}
\newcommand{\Vol}{{\mathrm{Vol}}}
\newcommand{\U}{{\mathcal U}}
\newcommand{\V}{{\mathcal V}}
\newcommand{\W}{{\mathcal W}}
\newcommand{\bnabla}{\overline{\nabla}}
\newcommand{\bR}{\bar{R}}
\renewcommand{\S}{\Sigma}
\newcommand{\s}{\sigma}
\newcommand{\n}{\nabla}
\newcommand{\h}{h}
\renewcommand{\k}{k}
\newcommand{\bh}{\bar \h}
\newcommand{\bk}{\bar \k}
\newcommand{\bga}{{\bar \gamma}}
\newcommand{\bGa}{{\bar \Gamma}}
\begin{document}

\author{Hans Ringstr\"{o}m}

\title[Local existence theory for a class of CMC gauges]{Local existence theory for a class of CMC gauges
  for the Einstein-non-linear scalar field equations}
\begin{abstract}
  The purpose of this article is to develop a local existence theory for a class of CMC gauges for the Einstein-non-linear
  scalar field equations. We do so in the context of closed and parallelisable initial manifolds. The assumption that the
  initial manifold is parallelisable is not a topological restriction in the case of closed oriented $3$-manifolds, but it
  is a restriction in higher dimensions. The results include local existence, uniqueness, a continuation criterion and Cauchy
  stability. Previous results concerning related gauges have been restricted
  to the case of $\tn{n}$ spatial topology. Moreover, they have not covered the Einstein-non-linear scalar field setting.
  The main motivation for developing this theory is that it forms the basis for a family of past global non-linear stability
  results we derive in a separate article. 
\end{abstract}

\maketitle

\section{Introduction}

In a sequence of works, \cite{rasql,rasq,rsh,specks3}, Rodnianski and Speck use what they refer to as CMC transported spatial coordinates
to prove stable big bang formation. They consider the Einstein-scalar field, stiff fluid and higher dimensional vacuum equations. With the
exception of \cite{specks3}, treating the case of $\sn{3}$-spatial topology, they assume $\tn{n}$-spatial topology. The articles \cite{rasq,specks3}
are concerned with with the stability of spatially homogeneous solutions, in the $3+1$-dimensional setting, that are close to isotropy. However,
\cite{rsh} concerns the stability of Kasner solutions in higher dimensions; i.e., spatially homogeneous vacuum solutions with spatial topology
$\tn{n}$ that are, by necessity, anisotropic. Finally, \cite{GIJ}, written in collaboration with Fournodavlos, demonstrates stability of spatially
homogeneous solutions with $\tn{n}$-spatial topology for the maximal range in which stability can hold. The matter models considered in \cite{GIJ}
are vacuum and scalar fields. However, in \cite{GIJ}, the authors consider a different system of equations, including a Fermi Walker transported frame. 
A related result is to be found in \cite{fau}. In this article, the authors prove stable big bang formation for the FLRW solutions to the Einstein
scalar field equations with compact hyperbolic spatial topology. Moreover, in this setting, the authors are able to prove future global non-linear
stability. In other words, they obtain a global non-linear stability result for the solutions they consider. It is important to note that all the
results mentioned so far are based on CMC gauges. One issue associated with this choice is that it is non-local, even though the causal structure
close to the big bang is local. The CMC gauges thus introduce a connection between different regions of spacetime that should, from the point of
view of physics, be independent. For these reasons, it is of interest to mention \cite{bao,baotwo,boz}, in which the authors do prove stable big bang
formation using a gauge that can be localised.

It is natural to try to extend the results of \cite{GIJ} to more general spatial topologies. Moreover, it is of interest to find a condition on
initial data that ensures big bang formation, in particular conclusions concerning the singularity similar to those of \cite{GIJ}, but which 
is independent of any background solution. There is also a natural notion of initial data on the singularity introduced in \cite{RinQC}. 
If one wishes to prove stable big bang formation, the most natural class of solutions to prove stability of is those with asymptotics corresponding
to data on the singularity. Finally, it is natural to wish to prove stable big bang formation generally for spatially locally homogeneous solutions
with quiescent asymptotics. As always, the Einstein-scalar field equations is a natural system to consider. However, given the interest in the physics
literature in both inflation and different models of dark energy, it is natural to consider the Einstein-non-linear scalar field equations.
In this case, it is reasonable to expect a large class of solutions with quiescent singularities as well as future asymptotics that can be demonstrated
to be stable. In other words, it is natural to expect a large class of solutions for which one can demonstrate both future and past global non-linear
stability.

We address all the questions raised in the previous paragraph in separate articles; one concerning the asymptotics of spatially homogeneous solutions
to the Einstein-non-linear scalar field equations, see \cite{RinSH}, and one concerning finding stable regimes etc., see \cite{GPR}. However, the
results require a local existence theory for the relevant gauge fixed equations. The purpose of this article is to provide such a theory.

The local existence theory we develop here concerns a class of gauge fixed equations which is similar to the equations used in \cite{GIJ}. However,
we here allow additional degrees of freedom: an antisymmetric matrix depending on the spacetime coordinates and the unknowns can be specified
freely etc. Moreover, we consider more general manifolds. In particular, we only require the spatial manifold to be closed and parallelisable.
Finally, we address the Einstein-non-linear scalar field setting. The main issues in proving local existence in this setting are similar to those
already addressed in \cite[Section~14.3, p.~4412-4420]{rasq}. However, in \cite{rasq}, the authors refer to \cite{CaK} as the main source of the
relevant ideas. The main complication is due to the appearance of too many derivatives of the metric components (in our case, the frame coefficients)
in the second order equation for the components of the second fundamental form. However, this issue can be addressed by integrating by parts in both
space and time; this idea seems to be common to all the local existence proofs and seems to originate with \cite{CaK}.

The actual equations that appear in the study of the Einstein-non-linear scalar field system are quite lengthy. For this reason, it is natural to
develop a local existence theory for an abstract model system and then to demonstrate that the Einstein-non-linear scalar field system can be
considered to be a special case of the general theory. Next, we describe the abstract model systems we consider and state the relevant well
posedness results. 

\subsection{Model systems}\label{ssection:modelsystem}
Let $\bM$ be a closed, connected and oriented $n$-dimensional manifold, $\mI\subset\rn{}$ be an open interval, $\bh_{\refer}$ be a Riemannian metric
on $\bM$ and assume that there is a global orthonormal frame $\{E_{i}\}$ on $\bM$ with respect to $\bh_{\refer}$. Let $M:=\bM\times \mI$ and
$n_{i}\in\nn{}$, $i=1,2$; here and in what follows we use the notation $\nn{}_0$ for the set of non-negative integers (i.e., $0,1,2,...$) and
$\nn{}$ for the set of positive integers (i.e., $1,2,3,\dots$). Let $f_{i}$, $i=1,2,3,4$, be smooth functions of the following types:
\begin{align*}
  & f_{1}:M\times\rn{n_{1}+n_{2}+n+1}\rightarrow\rn{n_{1}},\\
  & f_{2}:M\times\rn{(n+1)n_{1}+(n+2)n_{2}+(n+2)(n+1)}\rightarrow\rn{n_{2}},\\
  & f_{3}:M\times\rn{(n+1)n_{1}+n_{2}}\rightarrow\rn{},\\
  & f_{4}:M\times\rn{(n+1)n_{1}+(n+2)n_{2}+(n+2)(n+1)/2}\rightarrow\rn{}.  
\end{align*}
Next, for $i,j=1,\dots,n$, and $l=1,2$, let $\bh_{l}^{ij}$ be smooth functions of the following types:
\[
\bh_{1}^{ij}:M\times\rn{n_{1}+n+1}\rightarrow \rn{},\ \ \
\bh_{2}^{ij}:M\times\rn{n_{1}}\rightarrow \rn{}.
\]
Assume also that $\bh_{l}^{ij}=\bh_{l}^{ji}$ for $i,j=1,\dots,n$ and $l=1,2$. Assume, moreover, that there is a $1<\lambda\in\rn{}$ such that if
$\xi\in\rn{n}$, then
\begin{equation}\label{eq:lambdabd}
  \lambda^{-1}|\xi|^{2}\leq \bh_{1}^{ij}(p)\xi_{i}\xi_{j}\leq \lambda |\xi|^{2},\ \ \
  \lambda^{-1}|\xi|^{2}\leq \bh_{2}^{ij}(q)\xi_{i}\xi_{j}\leq \lambda |\xi|^{2}
\end{equation}
for all $p\in M\times\rn{n_{1}+n+1}$ and $q\in M\times \rn{n_{1}}$, where we use the Einstein summation convention. Next, for $i,j=1,\dots,n$,
let $\g^{ij}$ be smooth functions of the following type:
\[
\g^{ij}:M\times\rn{n_{1}+n_{2}+n+1}\rightarrow M_{n_{2}\times n_{1}}(\rn{});
\]
here $M_{k\times l}(\mathbb{K})$ denotes the $k\times l$-matrices with entries in $\mathbb{K}$. Finally, let $\zeta$ be a smooth function of the
following type:
\[
\zeta:M\times\rn{(n+1)n_{1}+n_{2}}\rightarrow\rn{},
\]
where we, in addition, assume that there is a $1<\lambda_{\zeta}\in\rn{}$ such that
\begin{equation}\label{eq:lambdazetalowerandupperbd}
  \lambda_{\zeta}^{-1}\leq \zeta(p)\leq \lambda_{\zeta},
\end{equation}
for all $p\in M\times\rn{(n+1)n_{1}+n_{2}}$. In addition to the above, assume $f_{i}$, $\bh_{l}^{ij}$, $\g^{ij}$ and $\zeta$ and all their
partial derivatives to be globally bounded (where we by partial derivatives in the $\bM$ direction mean applications of $E_{\bfI}$, where
$E_{\bfI}\psi:=E_{i_{1}}\cdots E_{i_{k}}\psi$ for $\bfI=(i_{1},\dots,i_{k})$ and $1\leq i_j\leq n$ for $1\leq j\leq k$). 

Given the above functions, consider the following system of equations (the meaning of the notation $f_{i}[u,v,N_{1}]$ etc. is clarified in connection with
(\ref{eq:fonemeaning}) below):
\begin{subequations}\label{seq:themodel}
  \begin{align}
    \d_{t}u & = f_{1}[u,v,N_{1}],\label{eq:themodelu}\\
    \d_{t}^{2}v & = \bh^{ij}_{1}[u,N_{1}]E_{i}E_{j}v+\g^{ij}[u,v,N_{1}]E_{i}E_{j}u+f_{2}[u,v,N_{1},N_{2}],\label{eq:themodelv}\\
    \Delta_{\bh_{2}[u]}N_{1} = & \zeta[u,v]N_{1}+f_{3}[u,v],\label{eq:themodelN}\\
    \Delta_{\bh_{2}[u]}N_{2} = & \zeta[u,v]N_{2}+f_{4}[u,v,N_{1}],\label{eq:themodelNdot}    
  \end{align}
\end{subequations}
where $\bh_{2}[u]$ is the Riemannian metric such that $\bh_{2}^{ij}[u]$ are the components of the inverse of the matrix with components
$\bh_{2}[u](E_{i},E_{j})$. We are interested in solving the initial value problem for these equations. In other words, given $t_{0}\in \mI$
and initial data $u_{0}\in C^{\infty}(\bM,\rn{n_{1}})$ and $v_{0,i}\in C^{\infty}(\bM,\rn{n_{2}})$, $i=0,1$, we wish to find an open interval
$\mI_{0}\subseteq\mI$ containing $t_{0}$ and a solution $(u,v,N_{1},N_{2})$ to (\ref{seq:themodel}) on $M_{0}:=\bM\times \mI_{0}$, with
\begin{equation}\label{eq:domandrangesol}
  u\in C^{\infty}(M_{0},\rn{n_{1}}),\ \ \
  v\in C^{\infty}(M_{0},\rn{n_{2}}),\ \ \
  N_{1},N_{2}\in C^{\infty}(M_{0},\rn{}),  
\end{equation}
such that $u(\cdot,t_{0})=u_{0}$, $v(\cdot,t_{0})=v_{0,0}$ and $(\d_{t}v)(\cdot,t_{0})=v_{0,1}$ (in practice, we are also interested in finding solutions
in other regularity classes). In (\ref{seq:themodel}), $f_{1}[u,v,N_{1}]$ should, given $u,v,N_{1}$, be thought of as the function that takes $p\in M_{0}$ to
\begin{equation}\label{eq:fonemeaning}
  f_{1}(p,u(p),v(p),N_{1}(p),(E_{1}N_{1})(p),\dots,(E_{n}N_{1})(p)).
\end{equation}
The interpretation of $f_{2}[u,v,N_{1},N_{2}]$ is similar, only that it depends on $u$, $E_{i}u$, $v$, $E_{i}v$, $\d_{t}v$, $N_{1}$, $N_{2}$, $E_{i}N_{1}$,
$E_{i}N_{2}$, $E_{i}E_{j}N_{1}$ and $E_{i}E_{j}N_{2}$ (where we assume $i\leq j$ in the last two expressions, so that the last two expressions only give rise
to dependence on $n(n+1)$ variables). Similarly, $f_{3}$ only depends on $u$, $E_{i}u$ and $v$. However, $f_{4}$ only depends on $u$, $v$, $E_{i}u$, $E_{i}v$,
$\d_{t}v$, $N_{1}$, $E_{i}N_{1}$ and $E_{i}E_{j}N_{1}$ (where we assume $i\leq j$ in the last expression). Next, $\bh_{1}^{ij}$ only depends on $u$, $N_{1}$ and
$E_{i}N_{1}$ and $\bh_{2}^{ij}$ only depends on $u$. The $\g^{ij}$'s only depend on $u$, $v$, $N_{1}$ and $E_{i}N_{1}$. Finally, $\zeta$ has the same dependence
as $f_{3}$.

\begin{definition}\label{def:modelsystem}
  An equation of the form (\ref{seq:themodel}), where $\bM$; $\mI$; $\bh_{\refer}$; $\{ E_{i}\}$; $f_{i}$, $i=1,2,3,4$; $\bh_{l}^{ij}$, $i,j=1,\dots,n$ and
  $l=1,2$; $\g^{ij}$, $i,j=1,\dots,n$; and $\zeta$ satisfy the above conditions, is said to be a \textit{model system}. 
\end{definition}
\begin{remark}
  Given a model system, we below use all the notation introduced prior to the statement of Definition~\ref{def:modelsystem} without further comment. 
\end{remark}
\begin{remark}
  We are mainly interested in the case that $N_{2}=\d_{t}N_{1}$, so that (\ref{eq:themodelNdot}) can, essentially, be derived from (\ref{eq:themodelN}).
  However, it is convenient to assume that $f_{4}$ and all its derivatives are bounded, and the equation obtained by time differentiating
  (\ref{eq:themodelN}) does not have that property. In fact, one would have to modify the resulting equation in a neighbourhood of the initial data in
  order to obtain a system of the desired form. Here we therefore prefer to consider (\ref{seq:themodel}) and discuss the necessary modifications when
  applying the general theory to a particular system. 
\end{remark}

We state the basic existence and uniqueness theorem in finite degree of regularity in Proposition~\ref{prop:localexistence} below. In
Lemma~\ref{lemma:contcriterionfdreg}, we then state the associated continuation criterion. Combining these results yields the following corollary for
smooth solutions.

\begin{cor}\label{cor:localexmodsys}
  Fix a model system and
  \begin{equation}\label{eq:modsysid}
    \Ui_{0}\in C^{\infty}(\bM,\rn{n_{1}}),\ \ \ \Vi_{0,0}\in C^{\infty}(\bM,\rn{n_{2}}),\ \ \
    \Vi_{0,1}\in C^{\infty}(\bM,\rn{n_{2}}).
  \end{equation}
  Let $t_{0}\in \mI=:(t_{-},t_{+})$. Then there are $t_{a},t_{b}$ with $t_0\in (t_{a},t_{b})\subseteq\mI$, and a unique smooth solution $u,v,N_{1},N_{2}$
  to (\ref{seq:themodel}) on $\bM\times (t_{a},t_{b})$ satisfying
  \begin{equation}\label{eq:uvdtvindata}
    u(\cdot,t_{0})=\Ui_{0},\ \ \
    v(\cdot,t_{0})=\Vi_{0,0},\ \ \
    (\d_{t}v)(\cdot,t_{0})=\Vi_{0,1}.
  \end{equation}
  Moreover, either $t_{b}=t_{+}$ or 
  \begin{equation}\label{eq:Tbcharacterization}
    \lim_{t\uparrow t_{b}}\sup_{t_{0}\leq\tau\leq t}\left(\|u(\cdot,\tau)\|_{C^{2}}+\|v(\cdot,\tau)\|_{C^{2}}
      +\|\d_{t}v(\cdot,\tau)\|_{C^{1}}\right)=\infty.
  \end{equation}
  The statement concerning $t_{a}$ is similar. 
\end{cor}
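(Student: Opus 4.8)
The plan is to derive the corollary by combining the finite-regularity local existence and uniqueness statement of Proposition~\ref{prop:localexistence} with the continuation criterion of Lemma~\ref{lemma:contcriterionfdreg}, promoting finite regularity to smoothness by means of uniqueness. First I would observe that the smooth data \eqref{eq:modsysid} lies in the Sobolev spaces underlying Proposition~\ref{prop:localexistence} in every admissible degree of regularity. Applying that proposition in each such class produces a solution on an open interval around $t_{0}$, and its uniqueness part forces any two such solutions --- whether in different degrees of regularity, or in the same degree on overlapping intervals --- to agree on the intersection of their domains. Letting $(t_{a},t_{b})$ denote the union of all open subintervals of $\mI$ that contain $t_{0}$ and on which a solution exists in every finite degree of regularity, one thereby obtains a well-defined solution on $\bM\times(t_{a},t_{b})$ that lies in every Sobolev class, hence is smooth, satisfies \eqref{eq:uvdtvindata}, and is unique; this interval is by construction the maximal such interval.

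It then remains to establish the characterisation of $t_{b}$, the case of $t_{a}$ being entirely analogous (alternatively, by applying the $t_{b}$-statement to the time-reversed system, which is again a model system). Arguing by contradiction, suppose $t_{b}<t_{+}$ and that the limit superior in \eqref{eq:Tbcharacterization} is finite. Then, in each finite degree of regularity, the continuation criterion of Lemma~\ref{lemma:contcriterionfdreg} rules out a breakdown at $t_{b}$: the hypothesis bounds, uniformly on $[t_{0},t_{b})$, exactly the quantities entering that criterion, so the solution extends in each regularity class past $t_{b}$. By uniqueness these extensions coincide with the original solution and with one another, so the extended object is smooth and solves \eqref{seq:themodel} with the prescribed data on an interval strictly larger than $(t_{a},t_{b})$, contradicting maximality. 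Hence either $t_{b}=t_{+}$ or the limit superior in \eqref{eq:Tbcharacterization} is $+\infty$.

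Two points warrant care. First, the elliptic equations \eqref{eq:themodelN}--\eqref{eq:themodelNdot} introduce no additional continuation obstruction: since $\zeta\geq\lambda_{\zeta}^{-1}>0$ by \eqref{eq:lambdazetalowerandupperbd}, the operator $\Delta_{\bh_{2}[u]}-\zeta[u,v]$ is invertible, and elliptic estimates express $N_{1},N_{2}$, together with their derivatives, in terms of $u$ and $v$; this is precisely why only $u$, $v$ and $\d_{t}v$ appear in \eqref{eq:Tbcharacterization}. Second, and this I expect to be the genuine difficulty, the real work sits inside Lemma~\ref{lemma:contcriterionfdreg} and the a priori estimates behind Proposition~\ref{prop:localexistence}: one must show that the comparatively weak $C^{2}$/$C^{1}$ control in \eqref{eq:Tbcharacterization} propagates all higher Sobolev norms of $u$ and $v$, notwithstanding the presence in \eqref{eq:themodelv} of the term $\g^{ij}[u,v,N_{1}]E_{i}E_{j}u$ carrying two derivatives of $u$ alongside the principal term $\bh^{ij}_{1}[u,N_{1}]E_{i}E_{j}v$, which --- after the differentiations needed for a high-order energy estimate --- generates more derivatives of the frame coefficients than a naive estimate can absorb. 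This is handled by integrating by parts in both space and time, the device going back to \cite{CaK} and used in \cite{rasq}. Granting Proposition~\ref{prop:localexistence} and Lemma~\ref{lemma:contcriterionfdreg}, the corollary itself is then the routine patching-and-maximal-interval argument sketched above.
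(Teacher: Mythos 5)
Your proposal is essentially the paper's argument: combine Proposition~\ref{prop:localexistence} for local existence in each Sobolev class, Lemma~\ref{lemma:contcriterionfdreg} for the continuation criterion, and Lemma~\ref{lemma:uniquemodelsol} for uniqueness, and observe that the $C^{2}$/$C^{1}$ continuation criterion makes the maximal existence interval independent of the Sobolev regularity $m$, so smooth data yields a smooth solution. The paper's write-up is slightly more economical --- it fixes a single $m>n/2+2$ to obtain the maximal interval, then runs Lemma~\ref{lemma:contcriterionfdreg} for every $m'>n/2+2$ on that same interval to bootstrap to smoothness, rather than constructing the interval as a union over regularity classes --- but the logic is the same, and the $m$-independence of $T_m$ that you invoke via the contradiction argument is exactly the point of the remark following Lemma~\ref{lemma:contcriterionfdreg}.
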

\begin{proof}
  The proof is to be found at the end of Subsection~\ref{ssection:cont crit smooth sol}.  
\end{proof}
\begin{definition}\label{def:maxintex}
  Fix a model system, a $t_{0}\in\mI$ and initial data as in (\ref{eq:modsysid}). Then the interval $(t_{a},t_{b})$ obtained by appealing to Corollary~\ref{cor:localexmodsys}
  is called the \textit{maximal interval of existence} corresponding to the initial data.
\end{definition}

Next, we state the relevant Cauchy stability result.

\begin{prop}\label{prop:CauchyStability}
  Fix a model system, a $t_{0}\in\mI$ and initial data as in (\ref{eq:modsysid}). Let $u$, $v$, $N_{1}$ and $N_{2}$ denote the corresponding solution, obtained by appealing
  to Corollary~\ref{cor:localexmodsys}, and $\mI_{\robg}$ be the corresponding maximal interval of existence; see Definition~\ref{def:maxintex}. Fix an integer $m>n/2+1$.
  Finally, fix a $t_{\rofin}\in\mI_{\robg}$. Then, given $\e>0$, there is a $\delta>0$ such that if
  \begin{equation}\label{eq:roperid}
    \Ui_{0,\roper}\in C^{\infty}(\bM,\rn{n_{1}}),\ \ \ \Vi_{0,0,\roper}\in C^{\infty}(\bM,\rn{n_{2}}),\ \ \
    \Vi_{0,1,\roper}\in C^{\infty}(\bM,\rn{n_{2}})
  \end{equation}
  are such that
  \[
    \|\Ui_{0,\roper}-\Ui_{0}\|_{H^{m+1}(\bM)}
    +\|\Vi_{0,0,\roper}-\Vi_{0,0}\|_{H^{m+1}(\bM)}
    +\|\Vi_{0,1,\roper}-\Vi_{0,1}\|_{H^{m}(\bM)}<\delta,
  \]
  and if $u_{\roper}$, $v_{\roper}$, $N_{1,\roper}$ and $N_{2,\roper}$ is the smooth solution to (\ref{seq:themodel}) satisfying
  \begin{equation}\label{eq:roperidcond}
    u_{\roper}(\cdot,t_{0})=\Ui_{0,\roper},\ \ \
    v_{\roper}(\cdot,t_{0})=\Vi_{0,0,\roper},\ \ \
    (\d_{t}v_{\roper})(\cdot,t_{0})=\Vi_{0,1,\roper},
  \end{equation}
  cf. Corollary~\ref{cor:localexmodsys}, with corresponding maximal existence interval $\mI_{\roper}$, cf. Definition~\ref{def:maxintex}, then
  $t_{\rofin}\in\mI_{\roper}$ and 
  \[
  \|u-u_{\roper}\|_{H^{m+1}(\bM_{t_{\rofin}})}+\|v-v_{\roper}\|_{H^{m+1}(\bM_{t_{\rofin}})}+\|\d_{t}v-\d_{t}v_{\roper}\|_{H^{m+1}(\bM_{t_{\rofin}})}<\e.
  \]
\end{prop}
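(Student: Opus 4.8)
The plan is a bootstrap argument built on energy estimates for the difference of the two solutions, combined with the continuation criterion of Lemma~\ref{lemma:contcriterionfdreg}. I would first reduce to the case $t_{\rofin}\geq t_{0}$, the case $t_{\rofin}\leq t_{0}$ being analogous and $t_{\rofin}=t_{0}$ immediate. Since the data (\ref{eq:roperid}) are smooth, Corollary~\ref{cor:localexmodsys} produces a unique smooth solution $(u_{\roper},v_{\roper},N_{1,\roper},N_{2,\roper})$ on its maximal interval of existence $\mI_{\roper}=(t_{a,\roper},t_{b,\roper})$, and I would compare it with the background on subintervals of $[t_{0},t_{b,\roper})$. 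Throughout one uses that on the compact set $\bM\times[t_{0},t_{\rofin}]$ the background solution is bounded in every $H^{s}(\bM)$, together with the embeddings $H^{m+1}(\bM)\hookrightarrow C^{2}(\bM)$, $H^{m}(\bM)\hookrightarrow C^{1}(\bM)$ and the algebra property of $H^{m}(\bM)$, all of which follow from $m>n/2+1$.

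Setting $\delta u:=u_{\roper}-u$, $\delta v:=v_{\roper}-v$, $\delta N_{i}:=N_{i,\roper}-N_{i}$, I would subtract the two copies of (\ref{seq:themodel}) and use the fundamental theorem of calculus on $f_{i}$, $\bh_{l}^{ij}$, $\g^{ij}$, $\zeta$ (all of whose derivatives are globally bounded). This gives a transport equation $\d_{t}\delta u=F_{u}$ with $F_{u}$ depending linearly on $\delta u$, $\delta v$, $\delta N_{1}$, $E_{i}\delta N_{1}$; a wave equation $\d_{t}^{2}\delta v=\bh_{1}^{ij}[u_{\roper},N_{1,\roper}]E_{i}E_{j}\delta v+\g^{ij}[u_{\roper},v_{\roper},N_{1,\roper}]E_{i}E_{j}\delta u+F_{v}$, where $F_{v}$ consists of differences of the coefficients $\bh_{1}^{ij},\g^{ij}$ applied to $E_{i}E_{j}$ of the \emph{smooth} background fields (hence harmless, since all the extra derivatives fall on smooth functions) plus the difference of the $f_{2}$-terms; and elliptic equations $(\Delta_{\bh_{2}[u_{\roper}]}-\zeta[u_{\roper},v_{\roper}])\delta N_{i}=G_{i}$, with $G_{i}$ built from differences of the $f_{3},f_{4},\zeta$-terms and from $(\Delta_{\bh_{2}[u_{\roper}]}-\Delta_{\bh_{2}[u]})N_{i}$. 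For the elliptic part one uses that $\Delta_{\bh_{2}[u_{\roper}]}-\zeta[u_{\roper},v_{\roper}]$ is uniformly invertible (by (\ref{eq:lambdabd}) the metric $\bh_{2}[u_{\roper}]$ is uniformly comparable to $\bh_{\refer}$, and testing against $\delta N_{i}$ and using $\zeta\geq\lambda_{\zeta}^{-1}>0$ from (\ref{eq:lambdazetalowerandupperbd}) yields the $L^{2}$ bound), so elliptic regularity gives $\|\delta N_{1}\|_{H^{m+2}}+\|\delta N_{2}\|_{H^{m+2}}\leq C(\|\delta u\|_{H^{m+1}}+\|\delta v\|_{H^{m+1}}+\|\d_{t}\delta v\|_{H^{m}})$, with $C$ depending only on the background and on an a priori bound for $\|u_{\roper}\|_{H^{m+1}}$; the elliptic unknowns are thus slaved to the hyperbolic ones.

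For the hyperbolic part I would run an energy estimate with $\me(t):=\|\delta u(\cdot,t)\|_{H^{m+1}}^{2}+\|\delta v(\cdot,t)\|_{H^{m+1}}^{2}+\|\d_{t}\delta v(\cdot,t)\|_{H^{m}}^{2}$ plus the correction terms that, as in the proof of Proposition~\ref{prop:localexistence}, arise from the Christodoulou--Klainerman integration by parts in space \emph{and} time; the latter is what makes the term $\g^{ij}E_{i}E_{j}\delta u$ in the equation for $\delta v$ manageable: one integrates by parts in space to move an $E_{i}$ off $\delta u$, then removes the $\d_{t}$ that lands on $E_{i}E_{\bfI}v$ by integrating by parts in time, using $\d_{t}\delta u=F_{u}$ to rewrite $E_{i}E_{\bfI}\d_{t}\delta u$ in terms of already-controlled quantities (here one also needs boundedness of the relevant first time derivatives of the structure functions, obtained from the transport equation for $u_{\roper}$ and the time-differentiated elliptic equations). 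Feeding in the elliptic estimate for $\delta N_{i}$ and the global bounds on all structure functions, I expect $\tfrac{d}{dt}\me\leq C\me$ on any subinterval of $[t_{0},t_{b,\roper})$ along which $(u_{\roper},v_{\roper},\d_{t}v_{\roper})$ remains within distance $1$ of $(u,v,\d_{t}v)$ in $H^{m+1}\times H^{m+1}\times H^{m}$, the constant $C$ depending only on the background over $[t_{0},t_{\rofin}]$. By Gr\"onwall, $\me(t)\leq e^{C(t-t_{0})}\me(t_{0})$ with $\me(t_{0})\leq C'\delta^{2}$.

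Finally I would close the bootstrap. As long as $\me^{1/2}\leq\e_{1}:=\min(\e,1)$, the embeddings bound $\|u_{\roper}\|_{C^{2}}+\|v_{\roper}\|_{C^{2}}+\|\d_{t}v_{\roper}\|_{C^{1}}$ by a constant depending only on the background and on $\e_{1}$, so by the continuation criterion (\ref{eq:Tbcharacterization}) the number $t_{b,\roper}$ cannot lie in $(t_{0},t_{\rofin}]$; moreover, if $\delta$ is chosen so that $e^{C(t_{\rofin}-t_{0})}C'\delta^{2}<\e_{1}^{2}/4$, then Gr\"onwall improves $\me^{1/2}\leq\e_{1}$ to $\me^{1/2}<\e_{1}/2$. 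A standard continuity argument (using that $\me$ is continuous wherever the perturbed solution exists, that the continuation criterion guarantees existence slightly past any time at which $\me^{1/2}\leq\e_{1}$, and the Gr\"onwall improvement just noted) then yields $\me^{1/2}<\e_{1}/2$ on all of $[t_{0},t_{\rofin}]$ and $t_{\rofin}\in\mI_{\roper}$. Hence $\me(t_{\rofin})^{1/2}<\e_{1}\leq\e$, which is the asserted estimate (the lapse differences being bounded in the same way via the elliptic estimate, if desired). The main obstacle, inherited from the local existence proof, is exactly the derivative loss caused by $\g^{ij}E_{i}E_{j}\delta u$ and by the $u,N_{1}$-dependence of $\bh_{1}^{ij}$ in the second-order equation for $\delta v$; it is for this reason essential to run the difference estimate with the same space--time integration-by-parts structure as in Proposition~\ref{prop:localexistence} rather than with a naive energy.
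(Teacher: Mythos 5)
Your proposal is correct and rests on the same underlying ideas as the paper's proof: derive equations for the differences, observe that the lapse differences are slaved to the hyperbolic unknowns via elliptic estimates, run an energy estimate for the hyperbolic differences with the space--time integration by parts to handle the derivative loss from $\g^{ij}E_iE_j\delta u$, obtain a homogeneous Gr\"onwall estimate, and close with a bootstrap against the continuation criterion. The one organizational difference worth flagging is that the paper does not re-derive these estimates for the difference system; instead it shows that, after cutting off the remainder functions obtained from the fundamental theorem of calculus (so that they and all their derivatives remain globally bounded), the difference system is \emph{itself} a model system in the sense of Definition~\ref{def:modelsystem}, with structure functions $\mfg_i$, $\bmfh_l^{ij}$, $\eta^{ij}$, $\theta$ that vanish at zero. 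This means $\ioih=0$, so the improved homogeneous forms of the Schauder and Sobolev estimates (\ref{eq:NobSchauderfz}), (\ref{eq:NtwobSchauderfz}) and of Lemma~\ref{lemma:lapseestimatehighnorm} (Remark~\ref{remark:Nbmainestimateimp}), and the energy estimate (\ref{eq:tEtwompodiffest}) of Lemma~\ref{lemma:energyestimatev}, can all be invoked verbatim, giving (\ref{eq:CauchyEnergyEstimate}) directly. Your argument achieves the same homogeneous Gr\"onwall inequality by carrying out the analogous computation explicitly on the differences and controlling the remainder terms through the bootstrap hypothesis rather than through a global cutoff, which is valid but amounts to re-running the machinery instead of re-using it.
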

\begin{proof}
  The proof is to be found in Subsection~\ref{ssection:stabilitymodelsystem} below. 
\end{proof}

\subsection{Einstein's equations}
Next, we apply the above results to the \textit{Einstein non-linear scalar field equations}, which, given a \textit{potential} $V\in C^{\infty}(\ro)$,
can be written
\begin{subequations}\label{seq:ENLSFE}
  \begin{align}
    G = & T,\\
    \Box_{g}\phi = & V'\circ\phi.
  \end{align}
\end{subequations}
A solution to these equations consists of a Lorentz manifold $(M,g)$ and a function $\phi\in C^{\infty}(M)$ satisfying (\ref{seq:ENLSFE}), where
$\Box_{g}$ is the wave operator associated with $g$;
\begin{align*}
  G := & \mathrm{Ric}-\tfrac{1}{2}Sg;\\
  T := & d\phi\otimes d\phi-\left[\tfrac{1}{2}|d\phi|_g^2+V\circ\phi\right]g;
\end{align*}
and $\mathrm{Ric}$ and $S$ denote the Ricci tensor and scalar curvature of $g$ respectively. In what follows, we also refer to $\phi$ as the
\textit{scalar field}; $G$ as the \textit{Einstein tensor}; and $T$ as the \textit{stress energy tensor}. We are interested in solving
(\ref{seq:ENLSFE}) given initial data, defined as follows. 
\begin{definition}\label{def:id}
  \textit{Initial data for the Einstein non-linear scalar field equations}, with potential $V\in C^{\infty}(\rn{})$, consist of the following: a connected
  smooth manifold $\bM$; a smooth Riemannian metric $h$ on $\bM$; a smooth symmetric covariant $2$-tensor field $\kappa$ on $\bM$; and two functions
  $\phi_{0}$, $\phi_{1}\in C^\infty(\bM)$ satisfying the \textit{constraint equations}
  \begin{subequations}\label{seq:constraintsBA}
    \begin{align}
      \roScal_h-|\kappa|_{h}^{2}+(\rotr_{h}\kappa)^{2} = & \phi_{1}^{2}+|d\phi_0|_{h}^2+2V\circ\phi_{0},\label{eq:ham con original}\\
      \rodiv_{h}\kappa-d\rotr_{h}\kappa = & \phi_1d\phi_0.\label{eq:mom con original}
    \end{align}
  \end{subequations}  
  Here $\roScal_h$ denotes the scalar curvature associated with $h$. In case $\rotr_{h}\kappa$ is constant, the initial data are said to have
  \textit{constant mean curvature} or to be \textit{CMC}. 
\end{definition}
\begin{remark}
  The equations (\ref{eq:ham con original}) and (\ref{eq:mom con original}) are referred to as the \textit{Hamiltonian} and
  \textit{momentum constraint equations} respectively.
\end{remark}
\begin{remark}
  Needless to say, one can consider initial data with a lower degree of regularity. 
\end{remark}
In \cite{GPR}, the system (\ref{seq:ENLSFE}) is considered in the context of the formation of big bang singularities. However, in order to obtain
a system of equations with a well posed initial value problem and which can be used to deduce geometric conclusions concerning the asymptotics in
the direction of the singularity, it is necessary to fix the gauge. This can be done in many ways, but in \cite{GPR} the equations used are a
special case of the system (\ref{seq:thesystem}) we discuss next. In fact, the conclusions of this article are used in the proof of the past global
existence results in \cite{GPR} and in the proof of the results concerning
the existence of developments given data on the singularity in \cite{andres}. However, the step from Corollary~\ref{cor:localexmodsys} and
Proposition~\ref{prop:CauchyStability} to the type of existence and stability results needed in, e.g., \cite{GPR} is substantial. Our next goal is
therefore to formulate local existence and Cauchy stability results for solutions to a slight generalisation of the equations in
\cite{GPR}. In order to formulate the equations, assume first $\bM$ to be a closed $n$-dimensional manifold
and assume that there is a global frame $\{E_{i}\}$ of $T\bM$ with dual frame
$\{\eta^i\}$. We wish to solve the equations on $M:=\bM\times\mI$, where $\mI\subset (0,\infty)$ is an open interval. Moreover, we, by standard abuse
of notation, let $t$ denote both the second coordinate of a point in $\bM\times \mI$ as well as the function that maps $(\bx,s)\in M$ to $s$. In
particular, $t\in\mI$. In this subsection we are interested in the following system of equations: 
\begin{subequations}\label{seq:thesystem}
  \begin{align}
    \d_{t}e_{I}^{i} = & f_{I}^{J}e_{J}^{i},\label{eq:eIievo}\\
    \d_{t}\omega_{i}^{I} = & -f_{J}^{I}\omega_{i}^{J},\label{eq:omIievo}\\
    \d_{t}\bk_{IJ} = & \bna_{I}\bna_{J}N+2N\bk_{IL}\bk_{LJ}-Nt^{-1}\bk_{IJ}+Ne_{I}(\phi)e_{J}(\phi)\label{eq:dtbkIJ}\\
    &+\tfrac{2N}{n-1}(V\circ\phi)\de_{IJ}-N\bR_{IJ}+f_{I}^{L}\bk_{LJ}+\bk_{IL}f_{J}^{L},\nonumber\\
    \Delta_{\bge}N = & t^{-2}(N-1)+N\big[\bk_{IJ}\bk_{IJ}+[e_{0}(\phi)]^{2}-\tfrac{2}{n-1}V\circ\phi-\tfrac{1}{t^{2}}\big],\label{eq:lapseeq}\\
    \Box_{g}\phi = & -\vartheta e_{0}(\phi)+V'\circ\phi,\label{eq:scalarfieldfe}\\
    (\bna_{e_{L}}\bk)(e_{L},e_{I}) = & e_{I}(\theta)+e_{I}(\phi)e_{0}(\phi),\label{eq:momcon}\\
    t^{-2} = & \bk_{IJ}\bk_{IJ}+[e_{0}(\phi)]^{2}+e_I(\phi)e_I(\phi)+2V\circ\phi-\bS,\label{eq:Hamcon}
  \end{align}
\end{subequations}
where $V\in C^\infty(\ro)$, $i,I,J\in\{1,\dots,n\}$ and we sum over repeated indices, even if both are downstairs indices. The variables in
(\ref{seq:thesystem}) are $e_{I}^{i}$, $\omega^{I}_{i}$, $\bk_{IJ}$, $N$ and $\phi$; we refer to them as the \textit{basic variables}. A solution
is tacitly understood to consist of a manifold $M$ as above, on which the basic variables are defined. Moreover, it is understood that $N>0$ on
$M$ and that, initially, $\omega^I_ie^i_J=\delta^I_J$. In order to explain how the
constituents of (\ref{seq:thesystem}) are to be interpreted, define, given $e_I^i$ and $\omega^I_i$, $e_{I}:=e_{I}^{i}E_{i}$ and
$\omega^{I}:=\omega^{I}_{i}\eta^{i}$. Then, due to (\ref{eq:eIievo}) and (\ref{eq:omIievo}),
\begin{equation}\label{eq:omega IJ minus delta IJ ev eq}
  \d_t(\omega^J(e_I)-\de_I^J)=-f_K^J(\omega^K(e_I)-\de^K_I)+f_I^K(\omega^J(e_K)-\de^J_K).
\end{equation}
Since $\omega^J(e_I)=\de_I^J$ initially, (\ref{eq:omega IJ minus delta IJ ev eq}) yields the conclusion that $\omega^J(e_I)=\de_I^J$ whenever the
solution is defined. In particular, $\{e_I\}$ is thus a frame and $\{\omega^I\}$ is its dual frame on the tangent spaces of the hypersurfaces
$\bM_{t}:=\bM\times\{t\}$. We define a Lorentz metric on $M$ by
\begin{equation}\label{eq:gdefintro}
  g:=-N^{2}dt\otimes dt+\textstyle{\sum}_{I}\omega^{I}\otimes\omega^{I}.
\end{equation}
Letting $e_{0}:=N^{-1}\d_{t}$, $\{e_{\a}\}$ is then an orthonormal frame with respect to $g$. The induced metric on the hypersurfaces
$\bM_{t}$ is 
\begin{equation}\label{eq:bgeitoomegaIintro}
  \bge:=\textstyle{\sum}_{I}\omega^{I}\otimes\omega^{I}.
\end{equation}
In (\ref{seq:thesystem}), $\bna$, $\bR$ and $\bS$ denote the Levi-Civita connection, the Ricci curvature and the scalar curvature of
$\bge$ respectively. Moreover, the upper case Latin indices refer to components with respect to the frame $\{e_I\}$. Next, $\bk_{IJ}$ is required
to be symmetric. Moreover,
\[
\bk:=\bk_{IJ}\omega^{I}\otimes\omega^{J};
\]
$\theta:=\bk_{II}$; and $\vartheta:=\theta-t^{-1}$. Turning to (\ref{eq:lapseeq}), we assume 
\begin{equation}\label{eq:zetadefintro}
  \zeta:=\bk_{IJ}\bk_{IJ}+(e_0\phi)^{2}-\tfrac{2}{n-1}V\circ\phi
\end{equation}
to satisfy $\zeta>0$ on $M$. Combining this assumption with (\ref{eq:lapseeq}) and the maximum principle, it follows that $N>0$; see
(\ref{eq:Ntzlowerbd}) and the adjacent text for more details. Concerning $f_{I}^{J}$, we assume
\begin{equation}\label{eq:fIJsym}
  f_{I}^{J}+f_{J}^{I}=-2N\bk_{IJ}
\end{equation}
(recall that $\bk_{IJ}$ is symmetric). Due to this requirement, $\bk_{IJ}$ are the components of the second fundamental
form of the $\bM_t$-hypersurfaces with respect to the frame $\{e_I\}$; see Lemma~\ref{lemma:frameorth}. Let $a_I^J:=(f_I^J-f_J^I)/2$. Then we, finally,
assume $a_I^J$ to only depend on the spacetime coordinates, $e^i_I$, $\omega^I_i$, $N$, $\bk_{IJ}$, $\phi$, $N^{-1}\d_{t}\phi$, $E_{i}\phi$ and $E_{i}N$.
That the condition $a_I^J=0$ yields the equations considered in \cite{GPR} is verified in Remark~\ref{remark:GPR special case}.

Next, let us state the relevant well posedness results for (\ref{seq:thesystem}). We use the following notation:
\begin{subequations}\label{seq:leCknorms}
  \begin{align}
    \|e(\cdot,t)\|_{C^{k}} := & \big(\textstyle{\sum}_{i,I}\sum_{|\bfI|\leq k}\sup_{\bx\in\bM}|(E_{\bfI}e_{I}^{i})(\bx,t)|^{2}\big)^{1/2},\\
    \|\omega(\cdot,t)\|_{C^{k}} := & \big(\textstyle{\sum}_{i,I}\sum_{|\bfI|\leq k}\sup_{\bx\in\bM}|(E_{\bfI}\omega_{i}^{I})(\bx,t)|^{2}\big)^{1/2},\\
    \|\bk(\cdot,t)\|_{C^{k}} := & \big(\textstyle{\sum}_{I,J}\sum_{|\bfI|\leq k}\sup_{\bx\in\bM}|(E_{\bfI}\bk_{IJ})(\bx,t)|^{2}\big)^{1/2},\\
    \|\d_{t}\bk(\cdot,t)\|_{C^{k}} := & \big(\textstyle{\sum}_{I,J}\sum_{|\bfI|\leq k}\sup_{\bx\in\bM}|(E_{\bfI}\d_{t}\bk_{IJ})(\bx,t)|^{2}\big)^{1/2},\\    
    \|\phi(\cdot,t)\|_{C^{k}} := & \big(\textstyle{\sum}_{|\bfI|\leq k}\sup_{\bx\in\bM}|(E_{\bfI}\phi)(\bx,t)|^{2}\big)^{1/2}.
  \end{align}
\end{subequations}
For an explanation of the meaning of $E_{\bfI}$ etc., we refer the reader to Subsection~\ref{ssection:conventionsframe}. Next, given two real numbers
$t_{a}$ and $t_{b}$, define $I_{t_{a},t_{b}}$ to be the closed and non-empty interval with $t_{a}$ and $t_{b}$ as endpoints. In order to formulate the
continuation criterion, it is convenient to introduce the following expression
\begin{equation}\label{eq:mCdef}
  \begin{split}
    \mC(t) := & \textstyle{\sup}_{\tau\in I_{t_{0},t}}[\|e(\cdot,\tau)\|_{C^{2}}+\|\omega(\cdot,\tau)\|_{C^{2}}+\|\bk(\cdot,\tau)\|_{C^{2}}
      +\|\d_{t}\bk(\cdot,\tau)\|_{C^{1}}+\|\phi(\cdot,\tau)\|_{C^{3}}\\
      & \phantom{\sup \ \ +xy} +\|\d_{t}\phi(\cdot,\tau)\|_{C^{2}}+\|e_{0}\phi(\cdot,\tau)\|_{C^{2}}
      +\|\d_{t}e_{0}\phi(\cdot,\tau)\|_{C^{1}}+\|1/\zeta(\cdot,\tau)\|_{C^{0}}]
  \end{split}
\end{equation}
where $\zeta$ is defined by (\ref{eq:zetadefintro}). 

Next, we state local existence and uniqueness of solutions to (\ref{seq:thesystem}), including a continuation criterion, given initial data.
\begin{thm}\label{thm:fosyslocalexistence}
  Fix a function $V\in C^{\infty}(\rn{})$ and let $(\bM,h,\kappa,\phi_{0},\phi_{1})$ be smooth CMC initial data for the Einstein-non-linear scalar
  field equations with potential $V$, see Definition~\ref{def:id}, where $\bM$ is a closed, connected, oriented and parallelisable $n$-dimensional
  manifold, $\tr_{h}\kappa>0$ and
  \begin{equation}\label{eq:zetapos}
    |\kappa|_{h}^{2}+\phi_{1}^{2}-\tfrac{2}{n-1}V\circ\phi_{0}>0
  \end{equation}
  on $\bM$. Let $t_{0}:=1/\tr_{h}\kappa$. Fix a frame $\{E_{i}\}$ of $\bM$ with dual frame $\{\eta^{i}\}$. Next, let $\{\xi_{I}\}$ be an orthonormal
  frame of $(\bM,h)$ with dual frame $\{\rho^{I}\}$. Let $a_{I}^{J}$, where $I,J\in\{1,\dots,n\}$, be such that $a_{I}^{J}=-a_{J}^{I}$ and such that
  $a_{I}^{J}$ only depends on the spacetime coordinates, $e^i_I$, $\omega^I_i$, $\bk_{IJ}$, $\phi$, $N^{-1}\d_{t}\phi$, $E_{i}\phi$, $N$ and $E_{i}N$,
  where $e^i_I$, $\omega^I_i$, $\bk_{IJ}$, $\phi$ and $N$ are the basic variables of the system (\ref{seq:thesystem}). Define $f_{I}^{J}$, where
  $I,J\in\{1,\dots,n\}$, by
  \begin{equation}\label{eq:fIJthmdef}
    f_{I}^{J}:=-N\bk_{IJ}+a_{I}^{J}.
  \end{equation}
  Let $e_{I}^{i}|_{t_{0}}:=\xi_{I}^{i}$ and $\omega_{i}^{I}|_{t_{0}}:=\rho_{i}^{I}$, where $\xi_{I}^{i}$ and $\rho_{i}^{I}$ are defined by
  $\xi_{I}=\xi_{I}^{i}E_{i}$ and $\rho^{I}=\rho_{i}^{I}\eta^{i}$ respectively. Next, let $\bk_{IJ}|_{t_{0}}:=\kappa(\xi_{I},\xi_{J})$ and $N|_{t_{0}}$
  be defined as the unique solution to the equation that results by replacing $t$, $\phi$ and $e_{0}\phi$ in (\ref{eq:lapseeq}) by
  $t_{0}$, $\phi_{0}$ and $\phi_{1}$ respectively. Define, finally, $\phi|_{t_{0}}:=\phi_{0}$ and $(\d_{t}\phi)|_{t_{0}}:=N|_{t_{0}}\phi_{1}$.
  Then there is an open interval $\mI\subseteq (0,\infty)$, containing $t_{0}$, and a unique solution to (\ref{seq:thesystem}) on
  $M:=\bM\times\mI$, corresponding to these initial data, such that $N>0$ and $\zeta>0$; such that $\bk_{II}=1/t$; and such that
  $\bk_{IJ}=\bk_{JI}$. Moreover, if $\bM_{t}:=\bM\times\{t\}$ for $t\in\mI$ and 
  \begin{equation}\label{eq:gthmdef}
    g:=-N^{2}dt\otimes dt+\textstyle{\sum}\omega^{I}\otimes\omega^{I},
  \end{equation}
  where $\omega^{I}:=\omega^{I}_{i}\eta^{i}$, then $(M,g,\phi)$ is a solution to the Einstein-non-linear scalar field equations which induces the
  initial data $(h,\kappa,\phi_{0},\phi_{1})$ on $\bM_{t_{0}}$; and is such that the mean curvature of $\bM_{t}$ is $1/t$ for
  $t\in\mI$. Finally, if $\mI=(t_{-},t_{+})$, then either $t_{-}=0$ or $\mC(t)\rightarrow\infty$ as $t\downarrow t_{-}$. Similarly, either
  $t_{+}=\infty$ or $\mC(t)\rightarrow\infty$ as $t\uparrow t_{+}$.  
\end{thm}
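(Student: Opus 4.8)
The plan is to recognise the gauge‑fixed system (\ref{seq:thesystem}), after suitable algebraic manipulation and a modification near the initial surface, as a special case of the model system (\ref{seq:themodel}), to apply Corollary~\ref{cor:localexmodsys}, and then to recover the remaining equations of (\ref{seq:thesystem}) — in particular the constraints — together with the Einstein--non-linear scalar field equations and the continuation criterion by a sequence of propagation arguments.

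For the reduction I would group the unknowns as follows: the $u$-variables are the frame coefficients $e_I^i$ and $\omega_i^I$; the $v$-variables are $\bk_{IJ}$ and $\phi$, together with the auxiliary quantities $e_0\phi=N^{-1}\d_t\phi$ and the $E_i\phi$, which have to be carried as independent variables because $a_I^J$ (hence $f_I^J=-N\bk_{IJ}+a_I^J$) and the lapse coefficient $\zeta$ are permitted to depend on $e_0\phi$ and $E_i\phi$ while (\ref{eq:themodelu}) and (\ref{eq:themodelN}) only admit zeroth-order dependence on $u,v$ and dependence on $N_1,E_iN_1$; and $N_1:=N$, $N_2:=\d_tN$. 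Then (\ref{eq:eIievo}) and (\ref{eq:omIievo}) are of the form (\ref{eq:themodelu}). For the $v$-block one uses (\ref{eq:scalarfieldfe}) — which, decomposing $\Box_g$ with respect to the foliation and the frame and using $\vartheta=\bk_{II}-t^{-1}$, reads $\d_t^2\phi=N^2\bge^{IJ}e_I^ke_J^lE_kE_l\phi+(\text{lower order})$ — together with the equations obtained by applying $E_i$ to it (for $E_i\phi$), taking the second time-derivative of $e_0\phi$, and time-differentiating (\ref{eq:dtbkIJ}). In the last of these, $\d_t(\bna_I\bna_JN)$ contributes $\bna_I\bna_JN_2$ plus lower-order terms, while $-N\d_t\bR_{IJ}$ produces $N^2\Delta_{\bge}\bk_{IJ}$ plus second $E$-derivatives of the frame coefficients and cross terms in the second derivatives of $\bk$; the cross terms are eliminated using the momentum constraint (\ref{eq:momcon}) together with $\d_I\theta=\d_It^{-1}=0$, and what survives is exactly a term $\g^{kl}[u,v,N_1]E_kE_lu$ — the "too many derivatives of the frame coefficients" already handled in the proof of Proposition~\ref{prop:localexistence}. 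One then checks that $\bh_1^{ij}=N^2\sum_Ie_I^ie_I^j$ and $\bh_2^{ij}$ (the inverse of $\bge$ in the $\{E_i\}$-frame) are symmetric, that $\zeta$ as in (\ref{eq:zetadefintro}) is positive, and that, after modifying every nonlinearity near $t_0$ and outside a large compact subset of the range of the unknowns and away from $\{N\le0\}\cup\{\zeta\le0\}$ as in the remark following Definition~\ref{def:modelsystem}, the global boundedness and the ellipticity bounds (\ref{eq:lambdabd}), (\ref{eq:lambdazetalowerandupperbd}) hold; these modifications do not affect the solution near $t_0$. The initial data for the model system are the ones prescribed in the theorem; that $N|_{t_0}$ exists and is positive follows from (\ref{eq:zetapos}) (invertibility of $\Delta_h-\zeta|_{t_0}$ on the closed manifold $\bM$) and the maximum principle, and $\d_t\bk|_{t_0}$ is read off from (\ref{eq:dtbkIJ}). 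Corollary~\ref{cor:localexmodsys} then yields a unique smooth solution on $\bM\times(t_a,t_b)$ satisfying (\ref{eq:Tbcharacterization}).

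Next I would recover all of (\ref{seq:thesystem}). That $\omega^J(e_I)=\de_I^J$, so that $\{e_I\}$ is a frame with dual $\{\omega^I\}$, is the computation around (\ref{eq:omega IJ minus delta IJ ev eq}). That $N_2=\d_tN$ follows by elliptic uniqueness, the difference solving $\Delta_{\bge}W=\zeta W$ with $\zeta>0$. The identities $e_0\phi=N^{-1}\d_t\phi$, the relations tying the $E_i\phi$-variables to $\phi$, and (\ref{eq:dtbkIJ}) itself (rather than merely its time derivative) are propagated from $t_0$ by the uniqueness statement for linear systems of model type. Taking the trace and the antisymmetric part of (\ref{eq:dtbkIJ}) and using (\ref{eq:lapseeq}) and (\ref{eq:fIJsym}), the quantities $\bk_{II}-t^{-1}$ and $\bk_{IJ}-\bk_{JI}$ satisfy homogeneous linear equations and vanish at $t_0$ (since $t_0=1/\tr_h\kappa$ and $\kappa$ is symmetric), hence vanish identically, so $\vartheta=0$; and $N>0$, $\zeta>0$ persist near $t_0$ by continuity. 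It remains to propagate (\ref{eq:momcon}) and (\ref{eq:Hamcon}): they hold at $t_0$, being (\ref{eq:ham con original}) and (\ref{eq:mom con original}) for the CMC data (note $d\tr_h\kappa=0$, which removes the $e_I(\theta)$-term), and by the standard constraint-propagation argument (the contracted second Bianchi identity together with the evolution equation (\ref{eq:dtbkIJ}) and the wave equation (\ref{eq:scalarfieldfe}) with $\vartheta=0$) the quantities measuring their failure solve a closed homogeneous linear first-order system with vanishing data at $t_0$, hence vanish on $M$. Consequently the Lorentz metric $g$ of (\ref{eq:gthmdef}) and $\phi$ solve $G=T$ and $\Box_g\phi=V'\circ\phi$, the hypersurfaces $\bM_t$ have mean curvature $1/t$, and the induced data on $\bM_{t_0}$ is $(h,\kappa,\phi_0,\phi_1)$.

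Uniqueness within the class of solutions of (\ref{seq:thesystem}) satisfying the stated positivity, trace and symmetry properties follows from the uniqueness part of Corollary~\ref{cor:localexmodsys}, since any such solution solves the model system near $t_0$. The continuation criterion follows by comparing the blow-up criterion (\ref{eq:Tbcharacterization}) for $(u,v,\d_tv)$ with $\mC$ in (\ref{eq:mCdef}): the $C^2$-norms of $e,\omega,\bk$, the $C^1$-norm of $\d_t\bk$ and $\|1/\zeta\|_{C^0}$ appear in $\mC$ directly, while the $C^3$-control of $\phi$ and the control of $e_0\phi$ and $\d_te_0\phi$ demanded by $\mC$ are recovered from the equations and from elliptic regularity for $N$; conversely, boundedness of $\mC$ bounds the model-system norm and keeps $\zeta$ away from $0$, so the solution of (\ref{seq:thesystem}) extends, and the argument for $t_-$ is the same. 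I expect the crux to be the reduction: verifying that the time-differentiated equation for $\bk_{IJ}$, after the momentum-constraint substitution and the elimination of $N$ via the lapse equation, has principal part precisely of the shape $\bh_1^{ij}E_iE_jv+\g^{ij}E_iE_ju$ admitted by (\ref{eq:themodelv}), with $\bh_1^{ij}$ agreeing with the one coming from the scalar-field equation, and arranging the auxiliary variables so that every dependence restriction in Definition~\ref{def:modelsystem} is met; once this is in place, the propagation arguments and the continuation criterion are lengthy but routine.
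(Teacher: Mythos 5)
Your overall architecture is right: pass to a second‑order equation for $\bk_{IJ}$ by time‑differentiating (\ref{eq:dtbkIJ}), carry $e_0\phi$ and $E_i\phi$ as auxiliary $v$‑components, set $N_1=N$ and $N_2=\d_tN$, truncate the nonlinearities to obtain a genuine model system, invoke Corollary~\ref{cor:localexmodsys}, and then recover (\ref{seq:thesystem}) and the Einstein equations by propagation. The crux is where the proposal goes wrong, and it goes wrong at the propagation step.

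First, the claim that $\vartheta:=\bk_{II}-t^{-1}=0$ and $\bk_{IJ}=\bk_{JI}$ follow by ``taking the trace and antisymmetric part of (\ref{eq:dtbkIJ})'' is circular. The solution you obtain from the model system solves the \emph{second-order} equation (\ref{eq:thebkIJeq}), not (\ref{eq:dtbkIJ}); the quantity $\msD_{IJ}:=N^{-1}(\d_t\bk_{IJ}-\dvka_{IJ})$ measuring the failure of (\ref{eq:dtbkIJ}) is exactly one of the things that has to be proved to vanish, and there is no ``uniqueness for linear systems of model type'' that yields (\ref{eq:dtbkIJ}) directly. (The symmetry of $\bk_{IJ}$ can be handled cleanly without (\ref{eq:dtbkIJ}): $\bk_{[IJ]}$ satisfies a homogeneous wave equation with zero data, cf.\ Remark~\ref{remark:thebkIJeq antisymm}; but $\vartheta=0$ cannot.)

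Second, the constraint propagation is \emph{not} the standard contracted‑Bianchi first‑order argument, and you cannot establish $\vartheta=0$ first and then propagate the constraints with $\vartheta=0$ assumed. The quantities $\vartheta$, $\msD_{IJ}$, $\msH$, $\msE$ and $\bmsC_I$ satisfy a genuinely coupled system (\ref{eq:varthetaevoeq}), (\ref{eq:msDevoeq}), (\ref{eq:msHevoeq}), (\ref{eq:ezmsE}), (\ref{eq:ezbmsCI}): $e_0\vartheta$ involves $\msD_{MM}+\msH$, while $e_0\msH$ and $e_0\msE$ involve $\bna_M\bna_M\vartheta$ and $\bna_M\bmsC_M$, and $e_0\bmsC_I$ involves $\bna_M\msD_{MI}$ and $\bna_J\vartheta$. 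A naïve first‑order estimate loses derivatives. The paper instead derives wave equations for $\d_t\vartheta$ and $\bmsC_I$, namely (\ref{eq:ezsqdtvartheta}) and (\ref{eq:ezsqbmsC}), couples them to the lower‑order quantities through the energy $\hme_\eta$ built from $\hme_\vartheta$, $\hme_{\msC}$ and $\me_\rorem$, and controls the bad cross terms $f^{ij}E_iE_j(\vartheta)\,\d_t^2\vartheta$ and $h^{LJKi}E_i(\msD_{LJ})\,\d_t\bmsC_K$ by integrating by parts in both space and time — the same device that makes the main local existence proof work. Only after the Gr\"{o}nwall argument for $\hme_\eta$ with vanishing data (Remark~\ref{remark:sotoEin}) does one conclude $\vartheta=\msH=\msD=\bmsC=0$, and hence (\ref{eq:dtbkIJ}), (\ref{eq:momcon}), (\ref{eq:Hamcon}), the CMC condition, and $G=T$, $\Box_g\phi=V'\circ\phi$.

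Relatedly, the description ``the cross terms are eliminated using the momentum constraint'' in the reduction to (\ref{eq:themodelv}) is misleading: one does not substitute constraints that are not yet known to hold; one instead designs the second‑order equation (\ref{eq:assumedeqbkIJ})/(\ref{eq:thebkIJeq}) so that the offending terms are packaged as $\bna_{(I}\msC_{J)}$ and $\bna_I\bna_J\vartheta$, and then closes the circle via the coupled propagation argument above. Without these points, the proof does not go through.
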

\begin{remark}
  Orientable $3$-dimensional closed manifolds are parallelisable. The requirement that $\bM$ be parallelisable is thus only a non-trivial
  condition in case $n\neq 3$. 
\end{remark}
\begin{remark}\label{remark:GPR special case}
  In case $a_{I}^{J}=0$, where $a_I^J:=(f_I^J-f_J^I)/2$, the equations in (\ref{seq:thesystem}) take the form of the corresponding
  equations in \cite[Proposition~53]{GPR}, with the following caveats: In the momentum constraint equation, i.e. (\ref{eq:momcon}), the
  additional condition of constant mean curvature has to be imposed, which means that the first term on the right hand side of
  (\ref{eq:momcon}) vanishes. Second, in the lapse equation, i.e. (\ref{eq:lapseeq}), it is necessary to substitute the spatial scalar curvature
  from the Hamiltonian constraint in order to obtain our lapse equation from the lapse equation in \cite{GPR}. We leave the verifications of these
  statements to the reader. Note, however, that the main discrepancy between \cite{GPR} and (\ref{seq:thesystem}) is that in \cite{GPR}, the authors
  explicitly write out the spatial Ricci and scalar curvatures etc. Moreover, the relevant expressions for these quantities are to be found in
  \cite[Lemma~55]{GPR}. Note also that the second order evolution equation in \cite{GPR} for $\phi$ coincides with (\ref{eq:phi}). In addition, the
  evolution equation for $\bga^{I}_{JK}$ included in \cite{GPR} can be deduced by appealing to (\ref{eq:dtbgaIJK}) and the evolution equation for
  $e_I\phi$ can be deduced by computing the commutator $[e_0,e_I]$. 
\end{remark}
\begin{proof}
  The proof is to be found in Subsection~\ref{ssection:locexeinst}.
\end{proof}
Finally, we state the corresponding result on Cauchy stability.
\begin{thm}\label{thm:CauchystabEinstein}
  Given notation, assumptions and conclusions as in the statement of Theorem~\ref{thm:fosyslocalexistence}, let $e_{I}^{i}$, $\omega_{i}^{I}$,
  $\bk_{IJ}$, $\phi$ and $N$ be the solution obtained in the conclusions. Let, moreover,
  $\mI$ be the corresponding existence interval. Given $t_{1}\in\mI$, $m>n/2+1$ and $\e>0$, there is then a $\delta>0$ such that the following
  holds. Let $(\bM,\chh,\chkappa,\chphi_{0},\chphi_{1})$ be smooth CMC initial data for the Einstein-non-linear scalar field equations with potential $V$,
  satisfying $\tr_{\chh}\chkappa=1/t_{0}$, and let $\{\chxi_{I}\}$ be an orthonormal frame of $(\bM,\chh)$. Let $\che_{I}^{i}$, $\chom_{i}^{I}$, $\chk_{IJ}$,
  $\chphi$ and $\chN$ be the corresponding solution, obtained by applying Theorem~\ref{thm:fosyslocalexistence}. Let, moreover, $\chmI$ be the
  corresponding existence interval and $\che_{0}=\chN^{-1}\d_{t}$. Then, if
  \begin{subequations}\label{seq:Cauchycriterion}
    \begin{align}
      \textstyle{\sum}_{I,i}\|\che_{I}^{i}-e_{I}^{i}\|_{H^{m+1}(\bM_{t_{0}})}+\textstyle{\sum}_{I,i}\|\chom^{I}_{i}-\omega^{I}_{i}\|_{H^{m+1}(\bM_{t_{0}})} & < \delta,\\
      \textstyle{\sum}_{I,J}\|\bk_{IJ}-\chk_{IJ}\|_{H^{m+1}(\bM_{t_{0}})}+\textstyle{\sum}_{I,J}\|\d_{t}\bk_{IJ}-\d_{t}\chk_{IJ}\|_{H^{m}(\bM_{t_{0}})} & <\delta,\\
      \|\phi-\chphi\|_{H^{m+2}(\bM_{t_{0}})}+\|\d_{t}\phi-\d_{t}\chphi\|_{H^{m+1}(\bM_{t_{0}})}& \\
      +\|e_{0}\phi-\che_{0}\chphi\|_{H^{m+1}(\bM_{t_{0}})}+\|\d_{t}e_{0}\phi-\d_{t}\che_{0}\chphi\|_{H^{m}(\bM_{t_{0}})} & < \delta,
    \end{align}
  \end{subequations}  
  the interval $\chmI$ contains $t_{1}$ and (\ref{seq:Cauchycriterion}) holds with $t_{0}$ replaced by $t_{1}$ and $\delta$ replaced by $\e$.  
\end{thm}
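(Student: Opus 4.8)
The plan is to deduce Theorem~\ref{thm:CauchystabEinstein} from Proposition~\ref{prop:CauchyStability} by casting the Einstein--non-linear scalar field system (\ref{seq:thesystem}) as a particular model system in the sense of Definition~\ref{def:modelsystem}, exactly as Theorem~\ref{thm:fosyslocalexistence} must be deduced from Corollary~\ref{cor:localexmodsys}. So the first step is to recall the translation constructed in the proof of Theorem~\ref{thm:fosyslocalexistence} (in Subsection~\ref{ssection:locexeinst}): one groups the first-order variables $e_I^i$, $\omega_i^I$, $\bk_{IJ}$, together with $\phi$ and $e_0\phi = N^{-1}\d_t\phi$ as appropriate, into the model-system unknown $u$; the genuinely second-order variable, namely the scalar field $\phi$ (whose wave equation (\ref{eq:scalarfieldfe}) becomes, after writing $\Box_g$ out in the frame $\{e_\alpha\}$, an equation of the form (\ref{eq:themodelv}) with principal part $\bh_1^{ij}E_iE_j\phi$ of Laplace type), into $v$; and the lapse $N$ and its time derivative $\d_t N$ into $N_1$ and $N_2$, solving the elliptic equations (\ref{eq:themodelN})--(\ref{eq:themodelNdot}) — here (\ref{eq:lapseeq}) for $N_1 = N$ and its appropriately modified time derivative for $N_2$. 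One must check that with this identification the structural hypotheses hold: symmetry and uniform ellipticity (\ref{eq:lambdabd}) of $\bh_1$ and $\bh_2$, the bound (\ref{eq:lambdazetalowerandupperbd}) on $\zeta$ — which is precisely where the running assumption $\zeta>0$ of Theorem~\ref{thm:fosyslocalexistence} and the smallness of the perturbation are used, since $1/\zeta$ enters the continuation quantity $\mC(t)$ — and global boundedness of all the coefficient functions and their $E_{\bfI}$-derivatives, which is arranged by the standard device of modifying the nonlinearities outside a neighbourhood of the solution under consideration (as in the remark after Definition~\ref{def:modelsystem}).

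The second step is to translate the initial-data hypothesis (\ref{seq:Cauchycriterion}) into the hypothesis of Proposition~\ref{prop:CauchyStability}. Given $\e>0$ and $t_1\in\mI$, choose the model-system accuracy $\e'$ so that the model-system conclusion (an $H^{m+1}$ bound on $u-u_\roper$, $v-v_\roper$, $\d_t v - \d_t v_\roper$ at time $t_1$) implies, after reversing the change of variables and using the Moser/Sobolev product and composition estimates on the closed manifold $\bM$, the bound (\ref{seq:Cauchycriterion}) with $t_0\mapsto t_1$ and $\delta\mapsto\e$. Proposition~\ref{prop:CauchyStability} then produces a $\delta'>0$; one then checks that the smallness (\ref{seq:Cauchycriterion}) of the geometric initial data, together with continuity of $N|_{t_0}$ as the solution of the elliptic lapse equation in terms of $(\chphi_0,\chphi_1,\chkappa)$ and of $\che_0\chphi = \chN^{-1}\chphi_1$, forces the corresponding model-system data $(\Ui_{0,\roper},\Vi_{0,0,\roper},\Vi_{0,1,\roper})$ to be within $\delta'$ (in the norms $H^{m+1}\times H^{m+1}\times H^m$) of the data of the reference solution. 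This last point requires a small lemma: the map from CMC geometric data to model-system data is continuous in the relevant Sobolev topologies — elliptic regularity for the lapse gives continuous dependence of $N|_{t_0}$, and everything else is algebraic. Having matched the data, Proposition~\ref{prop:CauchyStability} gives $t_1\in\mI_\roper$, which after untranslation is exactly the assertion that $\chmI$ contains $t_1$, plus the desired $\e$-closeness at $t_1$.

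A point that needs care, rather than a genuine obstacle, is the bookkeeping of which derivatives of which variables appear in each slot of the model system, so that the dependence restrictions listed after (\ref{eq:fonemeaning}) are respected — in particular that $a_I^J$, and hence $f_I^J$, depends only on the allowed low-order quantities (spacetime coordinates, $e_I^i$, $\omega_i^I$, $\bk_{IJ}$, $\phi$, $N^{-1}\d_t\phi$, $E_i\phi$, $N$, $E_iN$), which is precisely why this restriction was imposed in the hypotheses of both theorems; and that the frame equations (\ref{eq:eIievo})--(\ref{eq:omIievo}) and the $\bk_{IJ}$-equation (\ref{eq:dtbkIJ}) — whose right-hand side contains $\bna_I\bna_J N$ and $\bR_{IJ}$, i.e.\ second $E$-derivatives of $N$ and of $\omega^I_i$ — still fit the first-order template (\ref{eq:themodelu}) once $N$, $\d_tN$ are regarded as the externally-supplied elliptic variables $N_1,N_2$ with their own $E_iN_1$, $E_iE_jN_1$ dependence, and once the spatial curvature terms are recognized as functions of $u$ and $E_iu$, $E_iE_ju$ allowed through the $\g^{ij}[u,v,N_1]E_iE_ju$ term. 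The one genuinely delicate estimate — the loss-of-derivatives issue in the $\bk_{IJ}$ equation flagged in the introduction and handled by integration by parts in space and time — is already absorbed into Proposition~\ref{prop:CauchyStability}, so here it only needs to be invoked, not reproven. Thus the main work of this proof is the careful verification that (\ref{seq:thesystem}) is an instance of Definition~\ref{def:modelsystem} with data depending continuously on $(\chh,\chkappa,\chphi_0,\chphi_1)$, and the bulk of the analytic content is inherited.
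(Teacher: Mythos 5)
Your overall strategy — reduce to Proposition~\ref{prop:CauchyStability} by viewing (\ref{seq:thesystem}) as an instance of a model system, using the translation built for Theorem~\ref{thm:fosyslocalexistence} — is the same one the paper uses, and the concluding steps (translate (\ref{seq:Cauchycriterion}) into the $H^{m+1}\times H^{m+1}\times H^m$ smallness that Proposition~\ref{prop:CauchyStability} requires, invoking continuity of the initial-data map including the elliptic solve for $N|_{t_0}$; untranslate the conclusion at time $t_1$) are essentially what the paper does. However, your description of the translation itself contains a genuine error, and it is precisely the error the whole construction of Section~\ref{section:applgentheorytoEFE} exists to avoid.

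You put $\bk_{IJ}$ (and, in places, $\phi$ and $e_0\phi$) into $u$, and you claim the first-order equation (\ref{eq:dtbkIJ}) for $\bk_{IJ}$ ``still fits the first-order template (\ref{eq:themodelu})''. It does not. The right-hand side of (\ref{eq:dtbkIJ}) contains $\bna_I\bna_J N$ — i.e.\ $E_iE_jN_1$ — and $\bR_{IJ}$ — i.e.\ $E_iE_j\omega^I_i$, second spatial derivatives of the $u$-variables — whereas $f_1$ in (\ref{eq:themodelu}) is only permitted to depend on $u$, $v$, $N_1$ and the first derivatives $E_iN_1$, not on $E_iu$, $E_iE_ju$ or $E_iE_jN_1$. (You even invoke ``the $\g^{ij}[u,v,N_1]E_iE_ju$ term'' to excuse the curvature terms, but that term lives in the \emph{second}-order template (\ref{eq:themodelv}), not in (\ref{eq:themodelu}); this is internally inconsistent.) In the paper the decomposition is: $u$ collects only the frame and coframe coefficients $e_I^i$, $\omega_i^I$; $v$ collects $\bk_{IJ}$, $\phi$, $E_i\phi$ and $e_0\phi$; $N_1 = N-1$ and $N_2 = \d_t N$. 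The point of this decomposition is exactly that $\bk_{IJ}$ must be a $v$-variable, satisfying the \emph{second}-order wave equation (\ref{eq:bkIJsosys}) obtained in Lemma~\ref{lemma:thebkIJeq} by differentiating the first-order evolution equation in time and exploiting the constraints — this is the auxiliary system (\ref{seq:thesystemso}), and it is the mechanism by which the loss-of-derivatives issue you flagged is actually resolved. That issue is \emph{not} ``already absorbed into Proposition~\ref{prop:CauchyStability}'': the proposition only applies after the equations have been recast so that every slot has the prescribed dependence, and the recasting for $\bk_{IJ}$ forces it into $v$. Also a minor point: $N_1 = N - 1$ rather than $N_1 = N$, so that $N_1$ vanishes at the background and (\ref{eq:lapseeqso}) has the model form with bounded $\mfN$.

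With the decomposition corrected — $\bk_{IJ}$, $\phi$, $e_0\phi$, $E_i\phi$ all in $v$; only the frame coefficients in $u$ — the rest of your argument goes through and matches the paper's (very terse) proof, which simply invokes Proposition~\ref{prop:CauchyStability} together with the relation between (\ref{seq:thesystem}), the auxiliary system (\ref{seq:thesystemso}), and the model system (\ref{seq:themodel}) established at the beginning of Subsection~\ref{ssection:locexeinst} and in the proof of Theorem~\ref{thm:fosyslocalexistence}.
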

\begin{proof}
  The proof is to be found at the end of Subsection~\ref{ssection:locexeinst}.
\end{proof}

\subsection{Outline}
In Section~\ref{section:modelsystem}, we develop a local existence theory for the model systems introduced in Subsection~\ref{ssection:modelsystem}.
This includes local existence and uniqueness, a continuation criterion, as well as Cauchy stability. One difficulty turns out to be the elliptic estimates
for $N_1$ and $N_2$ (solving (\ref{eq:themodelN}) and (\ref{eq:themodelNdot})). Note that, given $u$ and $v$, the equation (\ref{eq:themodelN}) is a linear
elliptic equation for $N_1$. Moreover, since $\zeta>0$, this equation has a unique solution and there are standard elliptic Schauder and Sobolev estimates
for $N_1$. However, for our applications, this information is not sufficient. We need to know how the constants appearing in the Schauder and Sobolev
estimates depend on $u$ and $v$ (and that the dependence is continuous with respect to appropriate norms). Unfortunately, we were unable to find a
statement of the estimates including sufficiently detailed information concerning the dependence of the constants for our applications. The relevant
Schauder and Sobolev estimates are to be found in Appendices~\ref{section:schaudermodeleq}, \ref{section:sobestellmodeleq} and
\ref{appendix:spec def sob}. To obtain the required regularity of the lapse function, as a
function of time, we also need an appropriate implicit function theorem argument; see Appendix~\ref{section:contdeponcoeffellmodeleq} for the
details. Once the local existence theory for solutions to the model system has been developed, we turn to applications to Einstein's equations.
In Section~\ref{section:applgentheorytoEFE}, we take the first steps in relating the model system to the equations we actually
wish to solve. In the end, the system we wish to solve is (\ref{seq:thesystem}). Unfortunately, this system cannot be solved directly. In
fact, we have to solve an auxiliary system which, as opposed to (\ref{seq:thesystem}), is second order in the coefficients of the second fundamental
form. The logic is then that we use initial data for (\ref{seq:thesystem}) to generate initial data for the auxiliary second order system.
The second order system can then be solved by appealing to the local existence theory for the model system. However, at the end we wish to return
to (\ref{seq:thesystem}). To this end we, in Section~\ref{section:applgentheorytoEFE} and using the auxiliary second order system, derive a system
of equations for expressions quantifying the violations of the CMC condition, of the Hamiltonian constraint, of the momentum constraint and of the
first order equation for the coefficients of the second fundamental form. We also demonstrate that given a solution to the auxiliary second order
system with appropriate initial data, the expressions quantifying the violations all vanish. This yields a solution to the original system
(\ref{seq:thesystem}), which is also a solution to the Einstein-non-linear scalar field system. Finally, in Section~\ref{section:thesystemlocexist},
we prove well posedness for the original system (\ref{seq:thesystem}).

\subsection{Acknowledgements}\label{ssection:acknoledgements}
The author would like to express his gratitude to Oliver Petersen for his help in the writing of Appendix~\ref{section:sobestellmodeleq}. 
This research was funded by Vetenskapsr\aa det (the Swedish Research Council), dnr. 2022-03053. 

\section{Model system}\label{section:modelsystem} 

\subsection{A simplified model system}
In order to solve the Cauchy problem associated with (\ref{seq:themodel}), the idea is to set up an iteration: we consider a sequence of solutions to
simplified versions of the model system and prove that the sequence converges to a solution of the actual system; in fact, the overall architecture of
the local existence argument is taken from \cite{majda}. In practice, this means that we, given
$u_{a}\in C^{\infty}(M,\rn{n_{1}})$, $v_{a}\in C^{\infty}(M,\rn{n_{2}})$ and $N_{i,a}\in C^{\infty}(M,\rn{})$, $i=1,2$, wish to solve the initial value problem for 
\begin{subequations}\label{seq:linmodsys}
  \begin{align}
    \d_{t}u_{b} = & f_{1}[u_{a},v_{a},N_{1,a}],\label{eq:ublinmodsys}\\
    \d_{t}^{2}v_{b} = & \bh^{ij}_{1}[u_{b},N_{1,b}]E_{i}E_{j}v_{b}+\g^{ij}[u_{b},v_{a},N_{1,b}]E_{i}E_{j}u_{b},\label{eq:vblinmodsys}\\
    & +f_{2}[u_{a},v_{a},N_{1,a},N_{2,a}]\nonumber\\
    \Delta_{\bh_{2}[u_{b}]}N_{1,b} = & \zeta[u_{b},v_{a}]N_{1,b}+f_{3}[u_{b},v_{a}],\label{eq:Nblinmodsys}\\
    \Delta_{\bh_{2}[u_{b}]}N_{2,b} = & \zeta[u_{b},v_{a}]N_{2,b}+f_{4}[u_{b},v_{a},N_{1,b}].\label{eq:Nbdotlinmodsys}    
  \end{align}
\end{subequations}
Note, to begin with, that there is no problem in solving (\ref{seq:linmodsys}). In fact, the following lemma holds.

\begin{lemma}\label{lemma:ubvbNbsol}
  Fix a model system in the sense of Definition~\ref{def:modelsystem}. Fix $u_{a}\in C^{\infty}(M,\rn{n_{1}})$, $v_{a}\in C^{\infty}(M,\rn{n_{2}})$ and
  $N_{i,a}\in C^{\infty}(M,\rn{})$, $i=1,2$. Fix, moreover, $t_{0}\in\mI$, $u_{0}\in C^{\infty}(\bM,\rn{n_{1}})$ and $v_{0,i}\in C^{\infty}(\bM,\rn{n_{2}})$,
  $i=0,1$. Then there is a unique solution $u_{b}\in C^{\infty}(M,\rn{n_{1}})$, $v_{b}\in C^{\infty}(M,\rn{n_{2}})$ and $N_{i,b}\in C^{\infty}(M,\rn{})$,
  $i=1,2$, to (\ref{seq:linmodsys}) such that $u_{b}(\cdot,t_{0})=u_{0}$, $v_{b}(\cdot,t_{0})=v_{0,0}$ and $(\d_{t}v_{b})(\cdot,t_{0})=v_{0,1}$.
\end{lemma}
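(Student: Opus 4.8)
The plan is to exploit the triangular structure of the simplified system (\ref{seq:linmodsys}): the right-hand side of (\ref{eq:ublinmodsys}) involves only the frozen functions $u_{a},v_{a},N_{1,a}$, so $u_{b}$ can be produced first; with $u_{b}$ in hand, (\ref{eq:Nblinmodsys}) becomes a linear elliptic equation for $N_{1,b}$ on $\bM$ (with $t$ a parameter); (\ref{eq:Nbdotlinmodsys}) is then a linear elliptic equation for $N_{2,b}$ with the now-known source $f_{4}[u_{b},v_{a},N_{1,b}]$; and finally, with $u_{b},N_{1,b}$ both fixed, (\ref{eq:vblinmodsys}) is a linear second-order hyperbolic equation for $v_{b}$. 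Each of these four problems I would solve by standard theory, and uniqueness is immediate at every stage, so the resulting solution is unique.

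First I would set
\[
u_{b}(\bx,t):=u_{0}(\bx)+\int_{t_{0}}^{t}f_{1}[u_{a},v_{a},N_{1,a}](\bx,\tau)\,d\tau;
\]
this is the unique solution of (\ref{eq:ublinmodsys}) with $u_{b}(\cdot,t_{0})=u_{0}$, and it lies in $C^{\infty}(M,\rn{n_{1}})$ since the integrand is a smooth function on $M$.

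Next I would solve for $N_{1,b}$. For each fixed $t\in\mI$, equation (\ref{eq:Nblinmodsys}) is a linear elliptic equation on the closed manifold $\bM$: by symmetry of $\bh_{2}^{ij}$ and the bound (\ref{eq:lambdabd}), $\bh_{2}[u_{b}(\cdot,t)]$ is a smooth Riemannian metric, and by (\ref{eq:lambdazetalowerandupperbd}) the zeroth-order coefficient $-\zeta[u_{b},v_{a}](\cdot,t)$ is bounded above by $-\lambda_{\zeta}^{-1}<0$. Standard elliptic theory for such an operator then applies: the maximum principle gives uniqueness, the Fredholm alternative gives existence, and elliptic regularity gives $N_{1,b}(\cdot,t)\in C^{\infty}(\bM)$. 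To upgrade this to joint smoothness $N_{1,b}\in C^{\infty}(M)$, I would observe that all coefficients and the source of (\ref{eq:Nblinmodsys}) depend smoothly on $t$, so differentiating the equation in $t$ yields, for each $t$-derivative of $N_{1,b}$, an elliptic equation with the same operator and a source built out of lower-order quantities; iterating this together with the parameter-dependence estimates of Appendix~\ref{section:contdeponcoeffellmodeleq} gives the claim. With $N_{1,b}$ now a fixed smooth function, $f_{4}[u_{b},v_{a},N_{1,b}]$ is a smooth source on $M$, and the identical argument applied to (\ref{eq:Nbdotlinmodsys}) yields a unique $N_{2,b}\in C^{\infty}(M,\rn{})$.

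Finally, with $u_{b}$ and $N_{1,b}$ fixed, (\ref{eq:vblinmodsys}) reads $\d_{t}^{2}v_{b}=A^{ij}E_{i}E_{j}v_{b}+F$, where $A^{ij}:=\bh_{1}^{ij}[u_{b},N_{1,b}]$ is uniformly positive definite on $M$ by (\ref{eq:lambdabd}) and $F:=\g^{ij}[u_{b},v_{a},N_{1,b}]E_{i}E_{j}u_{b}+f_{2}[u_{a},v_{a},N_{1,a},N_{2,a}]$ is a known element of $C^{\infty}(M,\rn{n_{2}})$. This is a linear, strictly hyperbolic system with scalar principal part $\d_{t}^{2}-A^{ij}E_{i}E_{j}$ on $\bM\times\mI$ with $\bM$ closed; the standard energy-estimate and existence theory for linear hyperbolic equations then furnishes a unique $v_{b}\in C^{\infty}(M,\rn{n_{2}})$ on all of $M$ satisfying $v_{b}(\cdot,t_{0})=v_{0,0}$ and $(\d_{t}v_{b})(\cdot,t_{0})=v_{0,1}$ (no spatial boundary is present and the coefficients are smooth and bounded on $\mI$, so no restriction to a subinterval is forced). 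Then $(u_{b},v_{b},N_{1,b},N_{2,b})$ is the asserted solution. The one point that goes beyond textbook facts is the joint $t$-regularity of the lapse-type functions $N_{1,b},N_{2,b}$ in the elliptic step — i.e. smooth dependence on the parameter $t$ of a $t$-indexed family of solutions of elliptic equations on $\bM$ — and I expect this to be the main, if modest, obstacle; it is precisely why the careful statement is deferred to Appendix~\ref{section:contdeponcoeffellmodeleq}.
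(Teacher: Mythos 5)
Your proposal is correct and follows essentially the same decomposition and order of reasoning as the paper: solve the ODE for $u_b$, then the elliptic equation for $N_{1,b}$ (for each $t$) with the $t$-smoothness supplied by Appendix~\ref{section:contdeponcoeffellmodeleq}, then the linear wave equation for $v_b$, and finally the elliptic equation for $N_{2,b}$ — the paper solves $v_b$ before $N_{2,b}$, but since neither depends on the other this makes no difference. The one small discrepancy is cosmetic: you phrase the joint $t$-regularity of $N_{1,b}$ as an iterated differentiate-in-$t$ argument, whereas the paper invokes Corollary~\ref{cor: Ck regularity} (the implicit-function-theorem result of that appendix) directly, which yields $C^m$ dependence on $t$ in one step; both routes draw on the same appendix and lead to the same conclusion.
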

\begin{proof}
  Given $u_{a}$, $v_{a}$, $N_{1,a}$ and $u_{0}$, it is clear that the initial value problem for (\ref{eq:ublinmodsys}) has a unique smooth solution
  on $M$. In (\ref{eq:vblinmodsys})--(\ref{eq:Nbdotlinmodsys}), $u_{b}$ can then be considered to be a given smooth function on $M$. Next,
  since $\zeta$ has a strictly positive lower bound and $u_{b}$ is a given smooth function, it is clear that for each $t\in\mI$, the equation
  (\ref{eq:Nblinmodsys}) has a unique smooth solution $N_{1,b}(\cdot,t)$; cf. Lemma~\ref{lemma:existence sol to elliptic PDE} and
  Remark~\ref{remark:coeff smooth sol smooth}. However, we also need to know that the dependence on $t$ is smooth. To this end, we appeal to
  Corollary~\ref{cor: Ck regularity}. In order to verify that the conditions are satisfied, note that for every $k,m\in\nn{}_0$ with
  $k>n/2+1$, we can think of
  \[
  t\mapsto [\bh_{2}[u_{b}](\cdot,t),\zeta[u_{b},v_{a}](\cdot,t),f_{3}[u_{b},v_{a}](\cdot,t)]
  \]
  as being a $C^{m}$ map from $\mI$ to space required in Corollary~\ref{cor: Ck regularity}. It then follows from Corollary~\ref{cor: Ck regularity}
  that $N_{1,b}$ is a $C^{m}$ map from $\mI$ to $H^{k+2}(\bM)$. Since $k>n/2+1$ and $m$ are arbitrary, we conclude that $N_{1,b}\in C^{\infty}(M)$. At this
  stage $u_{b}$ and
  $N_{1,b}$ have been determined. They are also smooth on $M$. That the equation (\ref{eq:vblinmodsys}) has a unique smooth solution, given initial data,
  then follows from, e.g., \cite[Theorem~12.19, p.~144]{RinCauchy}. Finally, since $u_{b}$, $v_{b}$ and $N_{1,b}$ are now given smooth functions, the proof
  of the fact that the equation (\ref{eq:Nbdotlinmodsys}) has a unique solution $N_{2,b}\in C^{\infty}(M)$ is identical to the proof in the case of $N_{1,b}$. 
\end{proof}

\subsection{Schauder estimates for the lapse functions}
In the proof of local existence, and in the derivation of a continuation criterion, we need Schauder estimates. The general form of the estimates
we need is standard. However, we need more detailed information concerning the dependence of the constants than is normally given. In
Appendix~\ref{section:schaudermodeleq} below, we therefore derive the relevant estimates for a model equation, including the needed information concerning
the constants. Here we state the conclusions obtained by applying the general estimates to (\ref{eq:Nblinmodsys}) and (\ref{eq:Nbdotlinmodsys}).
\begin{lemma}\label{lemma:schauder}
  Let $n_{1},n_{2}\in\nn{}$, $(\bM,\bh_{\refer})$ be a closed, connected and oriented Riemannian manifold, and assume $(\bM,\bh_{\refer})$ to have a
  global orthonormal frame
  $\{E_{i}\}$. Assume, in addition, $\mI$, $\bh_{2}^{ij}$, $\zeta$, $f_{3}$ and $f_{4}$ to have the properties described at the beginning of
  Subsection~\ref{ssection:modelsystem}. Let $\mJ\subset \mI$ be an interval and assume $u_{b}:\bM\times\mJ\rightarrow\rn{n_{1}}$ and
  $v_{a}:\bM\times\mJ\rightarrow\rn{n_{2}}$ to be such that $\d_{t}v_{a}$ is well defined on $\bM\times\mJ$; such that $u_{b}(\cdot,t)\in C^{2}$ for all
  $t\in\mJ$; such that $v_{a}(\cdot,t)\in C^{2}$ for all $t\in\mJ$; and such that $(\d_{t}v_{a})(\cdot,t)\in C^{1}$ for all $t\in\mJ$.  Finally, assume
  $N_{1,b}(\cdot,t)$ to be a $C^{2,1}$ solution to (\ref{eq:Nblinmodsys}) for all $t\in\mJ$, and $N_{2,b}(\cdot,t)$ to be a $C^{2,1}$ solution to
  (\ref{eq:Nbdotlinmodsys}) for all $t\in\mJ$. Let $\ro_{+}:=[0,\infty)$. Then there is a covering of $\bM$ (depending only on $(\bM,\bh_{\refer})$) and
  functions $\mB_{1}:\ro_{+}^{2}\rightarrow\ro_{+}$ and $\mB_{2}:\ro_{+}^{3}\rightarrow\ro_{+}$ that are continuous and increasing in each of their arguments,
  such that
  \begin{subequations}\label{seq:NibSchauder}
    \begin{align}
      \|N_{1,b}(\cdot,t)\|_{C^{2,1}} \leq & \mB_{1}[\|u_{b}(\cdot,t)\|_{C^{2}},\|v_{a}(\cdot,t)\|_{C^{1}}],\label{eq:NobSchauder}\\
      \|N_{2,b}(\cdot,t)\|_{C^{2,1}} \leq &
      \mB_{2}[\|u_{b}(\cdot,t)\|_{C^{2}},\|v_{a}(\cdot,t)\|_{C^{2}},\|(\d_{t}v_{a})(\cdot,t)\|_{C^{1}}]\label{eq:NtbSchauder}
    \end{align}
  \end{subequations}
  for all $t\in\mJ$. Here the functions $\mB_{i}$ only depend on bounds on $\bh_{2}^{ij}$, $\zeta$, $f_{3}$ and $f_{4}$ and
  their derivatives up to order two, one, one and one respectively; $(\bM,\bh_{\refer})$; the covering; the frame $\{E_{i}\}$; the $\lambda$ appearing in
  (\ref{eq:lambdabd}); and the $\lambda_{\zeta}$ appearing in (\ref{eq:lambdazetalowerandupperbd}). 
\end{lemma}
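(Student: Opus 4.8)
The plan is to deduce (\ref{seq:NibSchauder}) from the a priori Schauder estimate for the model elliptic equation established in Appendix~\ref{section:schaudermodeleq}, the only real work being to record how the constant in that estimate depends on the data of (\ref{eq:Nblinmodsys}) and (\ref{eq:Nbdotlinmodsys}). Since (\ref{seq:NibSchauder}) is a pointwise-in-$t$ statement and nothing in the hypotheses couples different values of $t$, I would fix an arbitrary $t\in\mJ$ once and for all and argue entirely on $\bM$, regarding $u_{b}(\cdot,t)$, $v_{a}(\cdot,t)$, $(\d_{t}v_{a})(\cdot,t)$, $N_{1,b}(\cdot,t)$, $N_{2,b}(\cdot,t)$ as functions of $\bx\in\bM$; the interval $\mJ$ then plays no further role. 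For such a fixed $t$, (\ref{eq:Nblinmodsys}) is a linear second order equation for $N_{1,b}(\cdot,t)$ whose principal part is the Laplace--Beltrami operator $\Delta_{\bh_{2}[u_{b}]}$ --- whose coefficients, in the frame $\{E_{i}\}$, are universal algebraic expressions in $\bh_{2}^{ij}[u_{b}]$, their frame derivatives, and the (fixed) structure functions of $\{E_{i}\}$ --- with zeroth order coefficient $-\zeta[u_{b},v_{a}]$ and inhomogeneity $f_{3}[u_{b},v_{a}]$; likewise (\ref{eq:Nbdotlinmodsys}) has the same operator and zeroth order coefficient and inhomogeneity $f_{4}[u_{b},v_{a},N_{1,b}]$. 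Both are therefore instances of the model equation of Appendix~\ref{section:schaudermodeleq}, and I would apply the a priori estimate proved there on the fixed finite coordinate cover of $\bM$ furnished by that appendix, which depends only on $(\bM,\bh_{\refer})$.

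For $N_{1,b}$: by (\ref{eq:lambdabd}) the operator $\Delta_{\bh_{2}[u_{b}]}$ is uniformly elliptic with ellipticity constant governed by $\lambda$, \emph{independently of $u_{b}$}; this uniformity is what prevents the Schauder constant from degenerating. The coefficients of $\Delta_{\bh_{2}[u_{b}]}$ and the zeroth order coefficient $\zeta[u_{b},v_{a}]$ are controlled, in the norms demanded by the appendix estimate, by a continuous increasing function of $\|u_{b}(\cdot,t)\|_{C^{2}}$ and (for $\zeta$) $\|v_{a}(\cdot,t)\|_{C^{1}}$ --- here one uses the global bounds on $\bh_{2}^{ij}$ up to order two and on $\zeta$ up to order one together with the fact, recorded in Subsection~\ref{ssection:modelsystem}, that $\bh_{2}^{ij}$ depends only on $u$ while $\zeta$ depends only on $u$, $E_{i}u$ and $v$, so that two frame derivatives of $\bh_{2}^{ij}[u_{b}]$, resp.\ one frame derivative of $\zeta[u_{b},v_{a}]$, involve at most $\|u_{b}(\cdot,t)\|_{C^{2}}$, resp.\ $\|u_{b}(\cdot,t)\|_{C^{2}}$ and $\|v_{a}(\cdot,t)\|_{C^{1}}$. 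The inhomogeneity $f_{3}[u_{b},v_{a}]$ is bounded the same way, using global bounds on $f_{3}$ up to order one. Finally, since $\zeta[u_{b},v_{a}]\ge\lambda_{\zeta}^{-1}>0$ by (\ref{eq:lambdazetalowerandupperbd}), the maximum principle applied to (\ref{eq:Nblinmodsys}) gives $\|N_{1,b}(\cdot,t)\|_{C^{0}}\le\lambda_{\zeta}\|f_{3}[u_{b},v_{a}](\cdot,t)\|_{C^{0}}$, the $C^{0}$ input needed to run the appendix estimate globally. Combining these inputs produces (\ref{eq:NobSchauder}) with $\mB_{1}$ a composition of continuous increasing functions, depending only on the quantities listed in the statement.

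For $N_{2,b}$: the operator and zeroth order coefficient are exactly as above, so the only new point is the inhomogeneity $f_{4}[u_{b},v_{a},N_{1,b}]$, which depends on $u$, $E_{i}u$, $v$, $E_{i}v$, $\d_{t}v$, $N_{1}$, $E_{i}N_{1}$ and $E_{i}E_{j}N_{1}$. Using the global bounds on $f_{4}$ up to order one, the relevant norm of $f_{4}[u_{b},v_{a},N_{1,b}]$ is bounded by a continuous increasing function of $\|u_{b}(\cdot,t)\|_{C^{2}}$, $\|v_{a}(\cdot,t)\|_{C^{2}}$, $\|(\d_{t}v_{a})(\cdot,t)\|_{C^{1}}$ and $\|N_{1,b}(\cdot,t)\|_{C^{2,1}}$ --- note that $C^{2,1}$ is precisely the regularity of $N_{1,b}$ that is needed to handle the $E_{i}E_{j}N_{1}$-dependence of $f_{4}$. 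Feeding in the already established bound (\ref{eq:NobSchauder}) for $\|N_{1,b}(\cdot,t)\|_{C^{2,1}}$, and using $\|v_{a}(\cdot,t)\|_{C^{1}}\le\|v_{a}(\cdot,t)\|_{C^{2}}$, the appendix estimate then yields (\ref{eq:NtbSchauder}) with $\mB_{2}$ again a composition of continuous increasing functions with the asserted dependence. In both cases one may, if necessary, replace a bounding function by its monotone majorant to ensure that $\mB_{1}$ and $\mB_{2}$ are genuinely continuous and increasing in each argument.

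The substantive content is entirely in Appendix~\ref{section:schaudermodeleq}: producing a Schauder estimate for the model equation in which the constant is an explicit, continuous, increasing function of the ellipticity constant and of norms of the coefficients and right-hand side, and relating the frame-defined $C^{k}$ norms used here to the chart-based Hölder norms in which Schauder estimates are naturally phrased. Granting that, the only things to be careful about in the present proof are: that the ellipticity constant of $\Delta_{\bh_{2}[u_{b}]}$ is bounded in terms of $\lambda$ uniformly in $u_{b}$; that the assignments $\psi\mapsto\bh_{2}^{ij}[\psi]$, $(\psi_{1},\psi_{2})\mapsto\zeta[\psi_{1},\psi_{2}]$, and likewise for $f_{3}$ and $f_{4}$, send bounded sets of the relevant $C^{k}$ spaces into bounded sets of the spaces required by the appendix estimate, with explicit continuous increasing bounds --- this is exactly where the global boundedness of the $f_{i}$, $\bh_{l}^{ij}$, $\g^{ij}$ and $\zeta$ together with all their partial derivatives is used; and, as just noted, matching the $C^{2,1}$ output of the estimate for $N_{1,b}$ to the regularity of $N_{1,b}$ that $f_{4}$ consumes, which is why the bound for $N_{1,b}$ is derived first and then substituted.
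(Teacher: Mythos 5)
Your proof proposal follows essentially the same approach as the paper: fix $t$, observe that (\ref{eq:Nblinmodsys}) and (\ref{eq:Nbdotlinmodsys}) are instances of the model elliptic equation of Appendix~\ref{section:schaudermodeleq}, verify the hypotheses of the global Schauder estimate (Theorem~\ref{thm:globalSchauder}) from the model-system bounds (\ref{eq:lambdabd}) and (\ref{eq:lambdazetalowerandupperbd}), and feed the resulting bound on $N_{1,b}$ into the $f_4$-dependence to conclude for $N_{2,b}$. The only cosmetic differences are a sign slip (the zeroth-order coefficient is $+\zeta[u_b,v_a]$, not $-\zeta[u_b,v_a]$, after writing (\ref{eq:Nblinmodsys}) as $-\Delta_{\bh_2[u_b]}N_{1,b}+\zeta N_{1,b}=-f_3$) and that the maximum-principle step you supply is actually already subsumed inside the proof of Theorem~\ref{thm:globalSchauder}, so it is harmless redundancy rather than a gap.
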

\begin{remark}\label{remark:def Hoelder spaces}
  In the statement of the lemma we use the notation $C^{k,\a}=C^{k,\a}(h_\refer)$ for the sake of brevity; see
  Appendix~\ref{section:hoelder spaces on manifolds} for a definition of H\"{o}lder spaces and norms.
\end{remark}
\begin{remark}
  In case $f_{3}$ is such that $f_{3}(\xi)=0$ for all $\xi$ with the property that the last $(n+1)n_{1}+n_{2}$ components of $\xi$ equal zero, then
  (\ref{eq:NobSchauder}) can be improved to
  \begin{equation}\label{eq:NobSchauderfz}
    \|N_{1,b}(\cdot,t)\|_{C^{2,1}} \leq \mB_{3}[\|u_{b}(\cdot,t)\|_{C^{2}},\|v_{a}(\cdot,t)\|_{C^{1}}]\cdot(\|u_{b}(\cdot,t)\|_{C^{2}}+\|v_{a}(\cdot,t)\|_{C^{1}}),
  \end{equation}
  where $\mB_{3}$ has the same dependence and properties as $\mB_{1}$ except that it depends on one more derivative of $f_{3}$. Similarly, in case,
  in addition, $f_{4}$ is such that $f_{4}(\xi)=0$ for all $\xi$ with the property that the last $(n+1)n_{1}+(n+2)n_{2}+(n+2)(n+1)/2$ components of
  $\xi$ equal zero, then (\ref{eq:NtbSchauder}) can be improved to
  \begin{equation}\label{eq:NtwobSchauderfz}
    \begin{split}
      \|N_{2,b}(\cdot,t)\|_{C^{2,1}} \leq & \mB_{4}[\|u_{b}(\cdot,t)\|_{C^{2}},\|v_{a}(\cdot,t)\|_{C^{2}},\|\d_{t}v_{a}(\cdot,t)\|_{C^{1}}]\\
      & \cdot(\|u_{b}(\cdot,t)\|_{C^{2}}+\|v_{a}(\cdot,t)\|_{C^{2}}+\|\d_{t}v_{a}(\cdot,t)\|_{C^{1}}),
    \end{split}    
  \end{equation}
  where $\mB_{4}$ has the same dependence and properties as $\mB_{2}$ except that it depends on one more derivative of $f_{3}$ and $f_{4}$.
\end{remark}
\begin{proof}
  We are here only interested in considering one $\bM_t:=\bM\times\{t\}$ at a time; i.e., we focus on $N_{1,b}(\cdot,t)$ etc. However,
  for the sake of brevity, we consistently omit the argument $(\cdot,t)$ in what follows. Since $\bh_{2}^{ij}$ only depends on $u$, the
  metric $\bh_{2}[u_{b}]$ is a $C^{2}$ metric on $\bM_t$. Since $\zeta$ and $f_{3}$ only depend on $u$, $E_{i}u$ and $v$, it is clear that $f_{3}[u_{b},v_{a}]$
  and $\zeta[u_{b},v_{a}]$ are both $C^{1}$. Finally, note that since $\bh_{2}^{ij}$ and $\zeta$ satisfy (\ref{eq:lambdabd}) and
  (\ref{eq:lambdazetalowerandupperbd}) respectively, (\ref{eq:hhrefabd}) below is satisfied. By an application of Theorem~\ref{thm:globalSchauder} below
  to (\ref{eq:Nblinmodsys}), we conclude that (\ref{eq:NobSchauder}) holds. In case $f_{3}$ is such that $f_{3}(\xi)=0$ for all $\xi$ with the property
  that the last $(n+1)n_{1}+n_{2}$ components of $\xi$ equal zero, we can rewrite $f_{3}[u_{b},v_{a}]$ as
  \begin{equation}\label{eq:fthreezerorepr}
    f_{3}[u_{b},v_{a}]=\textstyle{\int}_{0}^{1}(\d_{u}f_{3})[su_{b},sv_{a}]ds\cdot (u_{b},E_{1}u_{b},\dots,E_{n}u_{b})
    +\textstyle{\int}_{0}^{1}(\d_{v}f_{3})[su_{b},sv_{a}]ds\cdot v_{a}.
  \end{equation}
  Here $\d_{u}f_{3}$ denotes the derivative of $f_{3}$ with respect to all the $u$-related arguments; i.e., $u$, $E_{1}u,\dots,E_{n}u$. Applying
  Theorem~\ref{thm:globalSchauder} below to (\ref{eq:Nblinmodsys}), using the representation (\ref{eq:fthreezerorepr}) when estimating the last factor
  on the right hand side of (\ref{eq:uCtwooneglbd}), yields the improvement (\ref{eq:NobSchauderfz}). 

  Turning to $f_{4}$, note that it depends on $N_{1,b}$, $E_{i}N_{1,b}$ and $E_{i}E_{j}N_{1,b}$. However, bounds on these quantities in $C^{0,1}$ follow from
  (\ref{eq:NobSchauder}) (or (\ref{eq:NobSchauderfz}), in case $f_{3}$ satisfies the condition stated prior to (\ref{eq:NobSchauderfz})). Beyond these
  quantities, $f_{4}$ only depends on $u_{b}$, $v_{a}$, $E_{i}u_{b}$, $E_{i}v_{a}$ and $\d_{t}v_{a}$. These objects are clearly bounded in $C^{0,1}$ by
  $\|u_{b}\|_ {C^{2}}$, $\|v_{a}\|_{C^{2}}$ and $\|\d_{t}v_{a}\|_{C^{1}}$. Applying Theorem~\ref{thm:globalSchauder} to (\ref{eq:Nbdotlinmodsys}), keeping
  (\ref{eq:NobSchauder}) in mind, yields the conclusion that (\ref{eq:NtbSchauder}) holds. In case $f_{4}$ satisfies the condition stated prior to
  (\ref{eq:NtwobSchauderfz}), then there is a representation of $f_{4}$ analogous to (\ref{eq:fthreezerorepr}). The derivation of (\ref{eq:NtwobSchauderfz})
  is therefore similar to the derivation of (\ref{eq:NobSchauderfz}). There is, however, one difference: $f_{4}$ depends on up to two derivatives of $N_{1,b}$.
  This difference can, however, be handled by appealing to (\ref{eq:NobSchauderfz}), assuming $f_{3}$ satisfies the condition stated prior to
  (\ref{eq:NobSchauderfz}). We conclude that (\ref{eq:NtwobSchauderfz}) holds. 
\end{proof}

In the proof of local existence, we, unfortunately, also need to control $\d_{t}N_{1,b}$ in $C^{1}$.

\begin{lemma}
  With assumptions, notation and conclusions as in Lemma~\ref{lemma:ubvbNbsol}, there is a covering of $\bM$ (depending only on $(\bM,\bh_{\refer})$)
  and a function $\mC:\ro_{+}^{5}\rightarrow\ro_{+}$ which is continuous and increasing in each of its arguments, such that
  \begin{equation}\label{eq:dtNobest}
    \begin{split}
      & \|\d_{t}N_{1,b}(\cdot,t)\|_{C^{2,1}} \\
      \leq & \mC[\|u_{a}(\cdot,t)\|_{C^{2}},\|u_{b}(\cdot,t)\|_{C^{2}},\|v_{a}(\cdot,t)\|_{C^{2}},\|\d_{t}v_{a}(\cdot,t)\|_{C^{1}},\|N_{1,a}(\cdot,t)\|_{C^{2,1}}],
      \end{split}
  \end{equation}
  for all $t\in\mI$. Here the function $\mC$ only depends on bounds on
  $\bh_{2}^{ij}$, $\zeta$, $f_{1}$ and $f_{3}$ and their derivatives up to order three, two, two and two respectively; $(\bM,\bh_{\refer})$; the covering;
  the frame $\{E_{i}\}$; the $\lambda$ appearing in (\ref{eq:lambdabd}); and the $\lambda_{\zeta}$ appearing in (\ref{eq:lambdazetalowerandupperbd}).
\end{lemma}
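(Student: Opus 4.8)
The plan is to differentiate the elliptic equation (\ref{eq:Nblinmodsys}) with respect to $t$ and apply the global Schauder estimate, Theorem~\ref{thm:globalSchauder}, to the resulting equation for $w:=\d_{t}N_{1,b}$. By Lemma~\ref{lemma:ubvbNbsol}, $u_{b}$, $v_{b}$, $N_{1,b}$ and $N_{2,b}$ are all smooth on $M=\bM\times\mI$, so $w$ is well defined and smooth and (\ref{eq:Nblinmodsys}) may be differentiated freely; since the assertion is pointwise in $t$, I would work on a single slice $\bM\times\{t\}$. Written in the frame $\{E_{i}\}$, the operator $\Delta_{\bh_{2}[u_{b}]}$ has principal coefficients $\bh_{2}^{ij}[u_{b}]$ and first-order coefficients $b^{k}[u_{b}]$ which are fixed smooth functions of $\bh_{2}^{ij}[u_{b}]$, of the corresponding metric components, of $E_{l}\bh_{2}^{ij}[u_{b}]$ and of the (fixed) structure constants of $\{E_{i}\}$. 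Consequently $[\d_{t},\Delta_{\bh_{2}[u_{b}]}]$ is a second-order operator in $N_{1,b}$ whose coefficients are $\d_{t}\bh_{2}^{ij}[u_{b}]$ and $\d_{t}b^{k}[u_{b}]$, and differentiating (\ref{eq:Nblinmodsys}) yields an equation for $w$ of the same type as (\ref{eq:Nblinmodsys}):
\begin{align*}
  \Delta_{\bh_{2}[u_{b}]}w&=\zeta[u_{b},v_{a}]w+F,\\
  F&:=-(\d_{t}\bh_{2}^{ij}[u_{b}])E_{i}E_{j}N_{1,b}-(\d_{t}b^{k}[u_{b}])E_{k}N_{1,b}\\
  &\quad+(\d_{t}\zeta[u_{b},v_{a}])N_{1,b}+\d_{t}(f_{3}[u_{b},v_{a}]).
\end{align*}
Since the principal part and the zeroth-order coefficient of the operator acting on $w$ are unchanged, the ellipticity (\ref{eq:lambdabd}) and the positivity (\ref{eq:lambdazetalowerandupperbd}) hold with the same $\lambda$ and $\lambda_{\zeta}$ (this is the analogue of the bound (\ref{eq:hhrefabd}) verified in the proof of Lemma~\ref{lemma:schauder}), so Theorem~\ref{thm:globalSchauder} reduces the claim to an estimate of $\|F(\cdot,t)\|_{C^{0,1}}$ with a constant depending only on $\|u_{b}(\cdot,t)\|_{C^{2}}$, $\|v_{a}(\cdot,t)\|_{C^{1}}$, $\lambda$, $\lambda_{\zeta}$, $(\bM,\bh_{\refer})$, the covering and the frame.

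To bound $F$, the crucial point is that by (\ref{eq:ublinmodsys}) $\d_{t}u_{b}=f_{1}[u_{a},v_{a},N_{1,a}]$ is explicit, hence $\d_{t}(E_{i}u_{b})=E_{i}(f_{1}[u_{a},v_{a},N_{1,a}])$ as well. Differentiating $f_{1}$ via the chain rule, and using the global boundedness of $f_{1}$ together with the fact that $f_{1}$ depends on $u_{a}$, $v_{a}$, $N_{1,a}$ and $E_{j}N_{1,a}$, one gets $\|\d_{t}u_{b}(\cdot,t)\|_{C^{1,1}}+\|\d_{t}(E_{i}u_{b})(\cdot,t)\|_{C^{0,1}}\le\mathfrak{A}[\|u_{a}(\cdot,t)\|_{C^{2}},\|v_{a}(\cdot,t)\|_{C^{2}},\|N_{1,a}(\cdot,t)\|_{C^{2,1}}]$, with $\mathfrak{A}$ continuous and increasing and depending on bounds on $f_{1}$ and its derivatives up to order two; here one uses the embedding $C^{2}(\bM)\hookrightarrow C^{1,1}(\bM)$ and that $N_{1,a},E_{j}N_{1,a}\in C^{1,1}$ whenever $N_{1,a}\in C^{2,1}$, which is what brings in $\|N_{1,a}\|_{C^{2,1}}$. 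Since $\bh_{2}^{ij}$ depends only on $u$, the chain rule expresses $\d_{t}\bh_{2}^{ij}[u_{b}]$ and $\d_{t}b^{k}[u_{b}]$ in terms of $u_{b}$, $E_{l}u_{b}$, $\d_{t}u_{b}$ and $\d_{t}(E_{l}u_{b})$; their $C^{0,1}$-norms are therefore bounded by $\|u_{b}(\cdot,t)\|_{C^{2}}$ (using up to three derivatives of $\bh_{2}^{ij}$, the extra derivatives coming from the $E_{l}$-derivatives in $b^{k}$) together with $\mathfrak{A}$. Multiplying these by $E_{i}E_{j}N_{1,b}$, resp. $E_{k}N_{1,b}$, which lie in $C^{0,1}$ with norm $\le\|N_{1,b}(\cdot,t)\|_{C^{2,1}}$, controls the first two terms of $F$. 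Finally, as $\zeta$ and $f_{3}$ depend only on $u$, $E_{i}u$ and $v$, the chain rule expresses $\d_{t}\zeta[u_{b},v_{a}]$ and $\d_{t}(f_{3}[u_{b},v_{a}])$ in terms of $u_{b}$, $E_{i}u_{b}$, $v_{a}$, $\d_{t}u_{b}$, $\d_{t}(E_{i}u_{b})$ and $\d_{t}v_{a}$, so their $C^{0,1}$-norms are bounded by $\|u_{b}(\cdot,t)\|_{C^{2}}$, $\|v_{a}(\cdot,t)\|_{C^{1}}$, $\|\d_{t}v_{a}(\cdot,t)\|_{C^{1}}$ (using up to two derivatives of $\zeta$ and of $f_{3}$) and $\mathfrak{A}$; this is where $\|\d_{t}v_{a}\|_{C^{1}}$ enters.

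To reach the stated form, I would then eliminate $\|N_{1,b}(\cdot,t)\|_{C^{2,1}}$ by inserting the Schauder estimate (\ref{eq:NobSchauder}) of Lemma~\ref{lemma:schauder}, namely $\|N_{1,b}(\cdot,t)\|_{C^{2,1}}\le\mB_{1}[\|u_{b}(\cdot,t)\|_{C^{2}},\|v_{a}(\cdot,t)\|_{C^{1}}]$. Combining the previous step with this gives $\|F(\cdot,t)\|_{C^{0,1}}\le\mC_{0}[\|u_{a}(\cdot,t)\|_{C^{2}},\|u_{b}(\cdot,t)\|_{C^{2}},\|v_{a}(\cdot,t)\|_{C^{2}},\|\d_{t}v_{a}(\cdot,t)\|_{C^{1}},\|N_{1,a}(\cdot,t)\|_{C^{2,1}}]$, with $\mC_{0}$ continuous, increasing, and depending on the model data only through bounds on $\bh_{2}^{ij}$, $\zeta$, $f_{1}$, $f_{3}$ and their derivatives up to orders three, two, two and two respectively. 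Plugging this into the Schauder estimate for the $w$-equation from the first paragraph then produces (\ref{eq:dtNobest}) with the claimed dependence of $\mC$. The main obstacle is the bookkeeping in the middle paragraph: one must check that $[\d_{t},\Delta_{\bh_{2}[u_{b}]}]N_{1,b}$ is genuinely only second order in $N_{1,b}$ (no third $N_{1,b}$-derivatives appear), keep every factor in the right Hölder class so that all products stay in $C^{0,1}$, and observe that nothing stronger than $\d_{t}u_{b}\in C^{1,1}$ — equivalently $N_{1,a}\in C^{2,1}$ — is ever needed, which is precisely why the stated hypotheses and derivative orders suffice.
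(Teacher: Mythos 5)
Your proof is correct and follows essentially the same route as the paper: time-differentiate the elliptic equation (\ref{eq:Nblinmodsys}), observe that $\d_t N_{1,b}$ solves an elliptic equation of the same type with source terms controlled by the same data, use (\ref{eq:ublinmodsys}) to express $\d_t u_b$ and $\d_t E_i u_b$ through $f_1$, eliminate $\|N_{1,b}\|_{C^{2,1}}$ via (\ref{eq:NobSchauder}), and apply Theorem~\ref{thm:globalSchauder}. The bookkeeping you carry out (including the derivative counts $3,2,2,2$ for $\bh_2^{ij},\zeta,f_1,f_3$) matches the paper's.
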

\begin{remark}
  Remark~\ref{remark:def Hoelder spaces} is equally relevant here. 
\end{remark}
\begin{proof}
  As in the proof of Lemma~\ref{lemma:schauder}, we omit the argument $(\cdot,t)$ in the present proof. Differentiating (\ref{eq:Nblinmodsys}) yields
  \[
  \Delta_{\bh_{2}[u_{b}]}(\d_{t}N_{1,b}) = \zeta[u_{b},v_{a}]\d_{t}N_{1,b}+g[u_{a},u_{b},v_{a},N_{1,a},N_{1,b}],
  \]
  where $g$ is a smooth function depending only on the spacetime coordinates, $u_{a}$, $E_{i}u_{a}$, $u_{b}$, $E_{i}u_{b}$, $v_{a}$, $E_{i}v_{a}$,
  $\d_{t}v_{a}$, $N_{1,a}$, $E_{i}N_{1,a}$, $E_{i}E_{j}N_{1,a}$, $N_{1,b}$, $E_{i}N_{1,b}$ and $E_{i}E_{j}N_{1,b}$. In order to arrive at this conclusion
  we used the fact that (\ref{eq:ublinmodsys}) holds. Moreover, $g$ is a linear combination of terms built up of the following factors: components
  of $(\bh_{2}^{ij})^{(m)}[u_{b}]$,
  $\zeta^{(l)}[u_{b},v_{a}]$, $f_{1}^{(l)}[u_{a},v_{a},N_{1,a}]$ and $f_{3}^{(l)}[u_{b},v_{a}]$, where $m\in\{0,1,2\}$ and $l\in\{0,1\}$ (here $\zeta^{(l)}$
  denotes the $l$'th derivative of $\zeta$); the components of the inverse of the matrix with components $\bh_{2}^{ij}[u_{b}]$; and factors on which $g$
  depends; i.e., $u_{a}$, $E_{i}u_{a}$ etc. In particular, this means that there is a function $\mC_{g}:\ro_{+}^{6}\rightarrow\ro_{+}$ which is continuous
  and increasing in each of its arguments such that 
  \begin{equation*}
    \begin{split}
      & \|g[u_{a},u_{b},v_{a},N_{1,a},N_{1,b}]\|_{C^{0,1}}\\
      \leq & \mC_{g}[\|u_{a}\|_{C^{2}},\|u_{b}\|_{C^{2}},\|v_{a}\|_{C^{2}},\|\d_{t}v_{a}\|_{C^{1}},
      \|N_{1,a}\|_{C^{2,1}},\|N_{1,b}\|_{C^{2,1}}]
    \end{split}
  \end{equation*}
  for all $t\in\mI$. Here the function $\mC_{g}$ only depends on bounds on
  $\bh_{2}^{ij}$, $\zeta$, $f_{1}$ and $f_{3}$ and their derivatives up to order three, two, two and two respectively; $(\bM,\bh_{\refer})$; 
  the frame $\{E_{i}\}$; and the $\lambda$ appearing in (\ref{eq:lambdabd}). However, by appealing to (\ref{eq:NobSchauder}), dependence on
  $\|N_{1,b}\|_{C^{2,1}}$ can be eliminated. The estimates for $\zeta[u_{b},v_{a}]$ and $\bh_{2}[u_{b}]$ are the same as in the proof of
  Lemma~\ref{lemma:schauder}. Appealing to Theorem~\ref{thm:globalSchauder} now yields the desired conclusion.
\end{proof}

\subsection{High order estimates for the lapse functions} Next, we bound the $N_{i,b}$, $i=1,2$, with respect to Sobolev norms.

\begin{lemma}\label{lemma:lapseestimatehighnorm}
  Given the assumptions and notation introduced in Lemma~\ref{lemma:ubvbNbsol}, there is a covering of $\bM$ (depending only on $(\bM,\bh_{\refer})$)
  and a function $\mC_{1}:\ro_{+}^{2}\rightarrow\ro_{+}$ which is continuous and monotonically increasing in each of its arguments, such that
  \begin{equation}\label{eq:Nbmainestimate}
    \|N_{1,b}(\cdot,t)\|_{H^{m+2}}\leq \mC_{1}[\|u_{b}(\cdot,t)\|_{C^{2}},\|v_{a}(\cdot,t)\|_{C^{1}}]\left(\|u_{b}(\cdot,t)\|_{H^{m+1}}+\|v_{a}(\cdot,t)\|_{H^{m}}+1\right)
  \end{equation}
  for all $t\in\mI$. Here $\mC_{1}$ only depends on the covering; $(\bM,\bh_{\refer})$; $\{E_{i}\}$; $m$; $\lambda$; $\lambda_{\zeta}$; a bound on up to
  $\max\{1,m\}$ derivatives of $\zeta$ and $f_{3}$; and up to $\max\{2,m+1\}$ derivatives of $\bh_{2}^{ij}$. Next, there is a function
  $\mC_{2}:\ro_{+}^{3}\rightarrow\ro_{+}$ which is continuous and increasing in each of its arguments, such that
  \begin{equation}\label{eq:Nbdotmainestimate}
    \begin{split}
      \|N_{2,b}(\cdot,t)\|_{H^{m+2}} \leq &
      \mC_{2}[\|u_{b}(\cdot,t)\|_{C^{2}},\|v_{a}(\cdot,t)\|_{C^{2}},\|\d_{t}v_{a}(\cdot,t)\|_{C^{1}}]\\
      & \cdot\left(\|u_{b}(\cdot,t)\|_{H^{m+1}}+\|v_{a}(\cdot,t)\|_{H^{m+1}}+\|\d_{t}v_{a}(\cdot,t)\|_{H^{m}}+1\right)
    \end{split}
  \end{equation}
  for all $t\in\mI$. Here $\mC_{2}$ only depends on the covering; $(\bM,\bh_{\refer})$; $\{E_{i}\}$; $m$; $\lambda$; $\lambda_{\zeta}$; a bound on up to
  $\max\{1,m\}$ derivatives of $\zeta$, $f_{3}$ and $f_{4}$; and up to $\max\{2,m+1\}$ derivatives of $\bh_{2}^{ij}$.
\end{lemma}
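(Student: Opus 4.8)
The plan is to treat (\ref{eq:Nblinmodsys}) and (\ref{eq:Nbdotlinmodsys}) \emph{slicewise} (suppressing the argument $(\cdot,t)$, as in the proof of Lemma~\ref{lemma:schauder}) as linear elliptic equations on $\bM$, and to combine three ingredients: a \emph{tame} elliptic Sobolev estimate for the model equation $\Delta_{\bh_{2}[u_{b}]}N=F$, whose constant depends on $\bh_{2}[u_{b}]$ only through low-order norms (this is the content of Appendix~\ref{section:sobestellmodeleq}, and it applies because $\bh_{2}[u_{b}]$ is uniformly elliptic by (\ref{eq:lambdabd})); Moser-type product and composition estimates to control the coefficients and the right-hand side; and an interpolation/absorption step to deal with the fact that $N_{1,b}$ appears on the right-hand side of its own equation. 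Throughout, $\|N_{1,b}\|_{C^{1}}$ and $\|N_{1,b}\|_{L^{2}}$ will be controlled by $\|N_{1,b}\|_{C^{2,1}}$, hence by $\mB_{1}[\|u_{b}\|_{C^{2}},\|v_{a}\|_{C^{1}}]$ through Lemma~\ref{lemma:schauder}.

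For $N_{1,b}$, write (\ref{eq:Nblinmodsys}) as $\Delta_{\bh_{2}[u_{b}]}N_{1,b}=F_{1}$ with $F_{1}:=\zeta[u_{b},v_{a}]N_{1,b}+f_{3}[u_{b},v_{a}]$. The elliptic estimate of Appendix~\ref{section:sobestellmodeleq} then has the schematic form
\[
\|N_{1,b}\|_{H^{m+2}}\leq \Phi\big(\|\bh_{2}[u_{b}]\|_{C^{2}},\lambda\big)\big(\|F_{1}\|_{H^{m}}+\|\bh_{2}[u_{b}]\|_{H^{m+1}}\|N_{1,b}\|_{C^{1}}+\|N_{1,b}\|_{L^{2}}\big),
\]
with $\Phi$ continuous and increasing. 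Since $\bh_{2}^{ij}$ depends only on $u_{b}$, the composition estimates bound $\|\bh_{2}[u_{b}]\|_{C^{2}}$ by a function of $\|u_{b}\|_{C^{2}}$ and $\|\bh_{2}[u_{b}]\|_{H^{m+1}}$ by $C(\|u_{b}\|_{C^{0}})(1+\|u_{b}\|_{H^{m+1}})$, requiring up to $\max\{2,m+1\}$ derivatives of $\bh_{2}^{ij}$; and since $\zeta,f_{3}$ depend only on $u_{b},E_{i}u_{b},v_{a}$, one gets $\|\zeta[u_{b},v_{a}]\|_{C^{0}}\leq\lambda_{\zeta}$ together with $\|\zeta[u_{b},v_{a}]\|_{H^{m}}+\|f_{3}[u_{b},v_{a}]\|_{H^{m}}\leq C(\|u_{b}\|_{C^{1}},\|v_{a}\|_{C^{0}})(1+\|u_{b}\|_{H^{m+1}}+\|v_{a}\|_{H^{m}})$, requiring up to $\max\{1,m\}$ derivatives of $\zeta,f_{3}$.

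It remains to handle $\|\zeta[u_{b},v_{a}]N_{1,b}\|_{H^{m}}$, which by the tame product estimate is $\leq C\big(\|\zeta\|_{C^{0}}\|N_{1,b}\|_{H^{m}}+\|\zeta\|_{H^{m}}\|N_{1,b}\|_{C^{0}}\big)$; the second summand is already controlled, and for the first we use $\|N_{1,b}\|_{H^{m}}\leq\ve\|N_{1,b}\|_{H^{m+2}}+C_{\ve}\|N_{1,b}\|_{L^{2}}$ and choose $\ve$ small relative to $\Phi(\|\bh_{2}[u_{b}]\|_{C^{2}},\lambda)$ and $\lambda_{\zeta}$ (hence, ultimately, depending only on $\|u_{b}\|_{C^{2}}$), so that the $\|N_{1,b}\|_{H^{m+2}}$-contribution is absorbed into the left-hand side. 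Collecting everything and using Lemma~\ref{lemma:schauder} for the residual low-order norms of $N_{1,b}$ yields (\ref{eq:Nbmainestimate}) with $\mC_{1}$ of the stated dependence. For $N_{2,b}$ one argues identically on (\ref{eq:Nbdotlinmodsys}); the only difference is that $f_{4}$ depends additionally on $v_{a},E_{i}v_{a},\d_{t}v_{a}$ and on $N_{1,b},E_{i}N_{1,b},E_{i}E_{j}N_{1,b}$, so the composition estimate gives $\|f_{4}[u_{b},v_{a},N_{1,b}]\|_{H^{m}}\leq C(\|u_{b}\|_{C^{1}},\|v_{a}\|_{C^{1}},\|\d_{t}v_{a}\|_{C^{0}},\|N_{1,b}\|_{C^{2}})(1+\|u_{b}\|_{H^{m+1}}+\|v_{a}\|_{H^{m+1}}+\|\d_{t}v_{a}\|_{H^{m}}+\|N_{1,b}\|_{H^{m+2}})$; here $\|N_{1,b}\|_{C^{2}}$ is controlled by Lemma~\ref{lemma:schauder} and $\|N_{1,b}\|_{H^{m+2}}$ by the already-proved (\ref{eq:Nbmainestimate}), and running the same interpolation/absorption step produces (\ref{eq:Nbdotmainestimate}).

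The main obstacle is bookkeeping rather than conceptual: one has to make sure that the high-order Sobolev norms enter only linearly at every step (the tame structure of both the elliptic estimate and the Moser estimates) and that each ``bad'' constant depends exclusively on the permitted low-order quantities --- in particular that the absorption parameter $\ve$, which can only be fixed once $\Phi(\|\bh_{2}[u_{b}]\|_{C^{2}},\lambda)$ is known, does not covertly reintroduce a dependence on high-order norms. Tracking the precise number of derivatives of $\bh_{2}^{ij}$, $\zeta$, $f_{3}$ and $f_{4}$ used at each point is what produces the derivative counts $\max\{2,m+1\}$ and $\max\{1,m\}$ recorded in the statement.
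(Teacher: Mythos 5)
Your proposal is correct in spirit, but you take a detour that the paper avoids. The paper applies Theorem~\ref{thm: elliptic estimate} directly to the operator $L:=-\Delta_{\bh_{2}[u_{b}]}+\zeta[u_{b},v_{a}]$ (with $a=\zeta$, $f=-f_{3}$): since $\lambda_{\zeta}^{-1}\leq\zeta\leq\lambda_{\zeta}$, the hypothesis (\ref{eq:hhrefequivellest}) is satisfied and (\ref{eq:Hkptwonormu}) gives
$\|N_{1,b}\|_{H^{m+2}}\leq\mfD_{3}[\,\cdot\,]\big(\|f_{3}\|_{H^{m}}+\|\bh_{2}[u_{b}]\|_{H^{m+1}}+\|\zeta\|_{H^{m}}\big)$,
with no $N_{1,b}$ on the right at all; Moser estimates then immediately yield (\ref{eq:Nbmainestimate}). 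You instead write the equation as $\Delta_{\bh_{2}[u_{b}]}N_{1,b}=\zeta N_{1,b}+f_{3}$, i.e.\ you move $\zeta N_{1,b}$ to the right-hand side, and this forces you to re-estimate $\|\zeta N_{1,b}\|_{H^{m}}$ and then absorb it via interpolation $\|N_{1,b}\|_{H^{m}}\leq\ve\|N_{1,b}\|_{H^{m+2}}+C_{\ve}\|N_{1,b}\|_{L^{2}}$. That works, but it is extra machinery the paper's formulation makes unnecessary. A second point: the estimate you invoke for the pure Laplacian, with $\|N_{1,b}\|_{L^{2}}$ on the right, is standard but is not quite what Appendix~\ref{section:sobestellmodeleq} supplies — Theorem~\ref{thm: elliptic estimate} is stated for $-\Delta_{h}+a$ with $a$ bounded above and away from zero, giving a genuinely invertible operator and hence no residual $\|u\|_{L^{2}}$-term. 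If you want your version, you would either have to re-derive the estimate for $\Delta_{h}$ alone (taking care of the kernel consisting of constants on a closed manifold) or apply the theorem with, say, $a\equiv 1$ and shift $(\zeta-1)N_{1,b}$ to the right. Both are doable, but since $\zeta$ already has the required two-sided bound (\ref{eq:lambdazetalowerandupperbd}), keeping it on the left — as the paper does — is both simpler and sharper, and also matches more cleanly with the improvement discussed in Remark~\ref{remark:Nbmainestimateimp}. Your treatment of $N_{2,b}$ and the derivative bookkeeping is otherwise correct.
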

\begin{remark}\label{remark:def Hoelder Sob spaces}
  In the proof and the statement of the lemma, we use the notation $C^{k}=C^{k}(h_\refer)$, $C^{k,\a}=C^{k,\a}(h_\refer)$ and $H^k=H^k(h_\refer)$ for the sake of
  brevity; see Appendix~\ref{section:hoelder spaces on manifolds} for a definition of $C^k$- and H\"{o}lder spaces and norms, and
  Appendix~\ref{appendix:sobolev} for a definition of Sobolev spaces and norms.
\end{remark}
\begin{remark}\label{remark:Nbmainestimateimp}
  If, in addition to the assumptions of the lemma, $m>n/2+1$ and $f_{3}$ is such that $f_{3}(\xi)=0$ for all $\xi$ with the property that the
  last $(n+1)n_{1}+n_{2}$ components of $\xi$ equal zero, then (\ref{eq:Nbmainestimate}) can be improved in the sense that the $+1$ inside the
  parenthesis on the right hand side can be removed. However, this improvement comes at the price of $\mC_{1}$ depending on one more derivative
  of $\zeta$, $f_{3}$ and $\bh_{2}^{ij}$. Assuming, in addition, $f_{4}$ to be such that $f_{4}(\xi)=0$ for all $\xi$ with the property that the
  last $(n+1)n_{1}+(n+2)n_{2}+(n+2)(n+1)/2$ components of $\xi$ equal zero, then (\ref{eq:Nbdotmainestimate}) can be improved in that the $+1$
  inside the parenthesis on the right can be removed. Again, this improvement comes at the price of $\mC_{2}$ depending on one more derivative
  of $\zeta$, $f_{3}$, $f_{4}$ and $\bh_{2}^{ij}$.
\end{remark}
\begin{proof}
  As in the proof of Lemma~\ref{lemma:schauder}, we omit the argument $(\cdot,t)$ in the present proof.
  We derive the desired estimates by appealing to Theorem~\ref{thm: elliptic estimate} with $\S=\bM$, $h_{\refer}=\bh_{\refer}$, $h=\bh_{2}[u_{b}]$,
  $a=\zeta[u_{b},v_{a}]$ and $f$ equal to $-f_{3}[u_{b},v_{a}]$ or $-f_{4}[u_{b},v_{a},N_{1,b}]$ in the case of (\ref{eq:Nbmainestimate}) and
  (\ref{eq:Nbdotmainestimate}) respectively. In order to verify that the conditions of Theorem~\ref{thm: elliptic estimate} are satisfied, note
  that (\ref{eq:lambdabd}) and (\ref{eq:lambdazetalowerandupperbd}) imply that (\ref{eq:hhrefequivellest}) holds, with $c_{1}=\lambda$ and
  $c_{2}=\lambda_\zeta$. The estimate (\ref{eq:Nbmainestimate}) follows by combining (\ref{eq:Hkptwonormu}), the norm equivalence stated in
  Subsection~\ref{ssection:conventionsframe} and the Moser estimates stated in (\ref{eq:moserest}). In order to obtain the improvement
  described in Remark~\ref{remark:Nbmainestimateimp},
  we need to appeal to (\ref{eq:mainHkptwoestell}). Moreover, we need to estimate the terms appearing in the parenthesis on the right hand side.
  When applying (\ref{eq:mainHkptwoestell}) to (\ref{eq:Nblinmodsys}), $f=-f_{3}[u_{b},v_{a}]$, and we can rewrite $f_3$ as in (\ref{eq:fthreezerorepr}).
  Applying $E_{\bfI}$ to (\ref{eq:fthreezerorepr}) results in a linear combination of terms, each of which consists of one integral of a derivative of
  $f_{3}$ (along $[su_{b},sv_{a}]$) and at least one factor consisting of a derivative of $u_{b}$ or $v_{a}$. In order to estimate such a term in $L^{2}$,
  we extract the integral in $L^{\infty}$ and apply Moser estimates, see (\ref{eq:moserest}), to what remains. This results in an estimate of the
  form (\ref{eq:Nbmainestimate}), where the $+1$ on the right hand side has been removed. However, the corresponding $\mC_{1}$ depends on one more
  derivative of $f_{3}$ (in what follows, we omit similar remarks). Next, we need to estimate the second term inside the parenthesis on the right hand
  side of (\ref{eq:mainHkptwoestell}). In our case, this term takes the form
  \begin{equation}\label{eq:sectermimpHkptNobest}
    \begin{split}
      \|N_{1,b}\|_{C^{1}}\|\bh_{2}[u_{b}]\|_{H^{m+1}} \leq & \|N_{1,b}\|_{C^{1}}\|\bh_{2}[u_{b}]-\bh_{2}[0]\|_{H^{m+1}}
      +\|N_{1,b}\|_{C^{1}}\|\bh_{2}[0]\|_{H^{m+1}}.
    \end{split}
  \end{equation}
  In the first term on the right hand side of (\ref{eq:sectermimpHkptNobest}), we can estimate the first factor by appealing to (\ref{eq:NobSchauder})
  and the second factor by an argument similar to the one used to estimate the first term inside the parenthesis on the right hand side of
  (\ref{eq:mainHkptwoestell}); i.e., $f_{3}[u_{b},v_{a}]$. The second factor in the second term on the right hand side of (\ref{eq:sectermimpHkptNobest})
  can be estimated by a constant depending only on $m$, $(\bM,\bh_{\refer})$, $\{E_{i}\}$ and up to $m+1$ derivatives of $\bh_{2}^{ij}$. In order to
  estimate the first factor in the second term, it is sufficient to appeal to (\ref{eq:NobSchauderfz}) and Sobolev embedding. The argument needed to
  estimate the third term inside the parenthesis on the right hand side of (\ref{eq:mainHkptwoestell}) is essentially identical to the argument in the
  case of the second term. The improvement of (\ref{eq:Nbmainestimate}) described in Remark~\ref{remark:Nbmainestimateimp} follows. 

  In order to prove (\ref{eq:Nbdotmainestimate}), we proceed in the same way. The only difference is that we, in this case, need to estimate
  $f_{4}[u_{b},v_{a},N_{1,b}]$ in $C^{0,1}$ and in $H^{m}$. Due to the dependence of $f_{4}$, estimating $f_{4}[u_{b},v_{a},N_{1,b}]$ in $C^{0,1}$ entails
  estimating $u_{b}$ in $C^{2}$, $v_{a}$ in $C^{2}$, $\d_{t}v_{a}$ in $C^{1}$ and $N_{1,b}$ in $C^{2,1}$. Moreover, the dependence of the function $\mC_{2}$ on
  these norms is continuous and increasing in each of the norms. However, the dependence on the $C^{2,1}$-norm of $N_{1,b}$ can, by appealing to
  (\ref{eq:NobSchauder}), be replaced by dependence on the $C^{2}$-norm of $u_{b}$ and the $C^{1}$-norm of $v_{a}$. In order to estimate $f_{4}$ in $H^{m}$
  we appeal to Moser estimates. This leads to a continuous dependence on $u_{b}$ in $C^{1}$, $v_{a}$ in $C^{1}$, $\d_{t}v_{a}$ in $C^{0}$ and $N_{1,b}$ in
  $C^{2}$ (where the dependence on the $C^{2}$-norm of $N_{1,b}$ can, by appealing to (\ref{eq:NobSchauder}), be replaced by continuous dependence on the
  $C^{2}$-norm on $u_{b}$ and the $C^{1}$-norm of $v_{a}$). Moreover, we obtain linear dependence on the $H^{m+1}$-norm of $u_{b}$, the $H^{m+1}$-norm of
  $v_{a}$, the $H^{m}$-norm of $\d_{t}v_{a}$ and the $H^{m+2}$-norm of $N_{1,b}$. However, the latter dependence can, by appealing to
  (\ref{eq:Nbmainestimate}), be replaced by linear dependence on the $H^{m+1}$-norm of $u_{b}$ and the $H^{m}$-norm of $v_{a}$. In order to obtain the improvement
  described in Remark~\ref{remark:Nbmainestimateimp}, we, again, need to appeal to (\ref{eq:mainHkptwoestell}). The argument is very similar to the
  proof of the improvement of (\ref{eq:Nbmainestimate}). We leave the details to the reader. Let us, however, make two observations. First, when estimating
  $f_{4}[u_{b},v_{a},N_{1,b}]$ in $H^{m}$ by rewriting it in analogy with (\ref{eq:fthreezerorepr}), we, in addition to the norms appearing in the
  second factor on the right hand side of (\ref{eq:Nbdotmainestimate}), need to estimate $\|N_{1,b}\|_{H^{m+2}}$. However, this can be done by
  appealing to the improvement of (\ref{eq:Nbmainestimate}) obtained in Remark~\ref{remark:Nbmainestimateimp}. Second, when estimating the second term inside
  the parenthesis on the right hand side of (\ref{eq:mainHkptwoestell}), we first estimate it in analogy with (\ref{eq:sectermimpHkptNobest}) and then we
  appeal to (\ref{eq:NtbSchauder}), (\ref{eq:NtwobSchauderfz}) and Sobolev embedding. The statements of the lemma and Remark~\ref{remark:Nbmainestimateimp} follow. 
\end{proof}

\subsection{Bounding solutions to the simplified model system in the high norm; energy estimates}
Next, we wish to estimate the following energy of the solution $u_{b}$ to (\ref{eq:ublinmodsys}):
\[
E_{1,m+1}[u_{b}](t):=\tfrac{1}{2}\textstyle{\sum}_{|\bfI|\leq m+1}\int_{\bM_{t}}|E_{\bfI}u_{b}|^{2}d\mu_{\bh_{\refer}}=\frac{1}{2}\|u_{b}\|_{H^{m+1}(\bM_{t})}^{2},
\]
where we use the notation introduced in Appendix~\ref{appendix:sobolev}, in particular Subsection~\ref{ssection:conventionsframe}. Moreover,
we, for the sake of brevity, write $H^k(\bM_t)$ instead of $H^k(\bM_t;\bbE)$, where $\bbE:=\{E_i\}_{i=1}^n$. The bound we derive for
$E_{1,m+1}[u_{b}]$ depends on $u_{a}$, $v_{a}$ and $N_{1,a}$. For that reason, it is convenient to introduce the notation
\begin{equation}\label{eq:mfNadef}
  \begin{split}
    \mfN_{a}(t) := & \|u_{a}\|_{H^{m+1}(\bM_{t})}+\|v_{a}\|_{H^{m+1}(\bM_{t})}+\|\d_{t}v_{a}\|_{H^{m}(\bM_{t})}\\
    & +\|N_{1,a}\|_{H^{m+2}(\bM_{t})}+\|N_{2,a}\|_{H^{m+2}(\bM_{t})}+\ioih.
  \end{split}
\end{equation}
Here $\ioih:=0$ in case $f_{1}(p,0)=0$, $f_{2}(p,0)=0$, $f_{3}(p,0)=0$ and $f_{4}(p,0)=0$ for all $p\in M$. If one of these conditions is violated, then
$\ioih:=1$. Making this distinction makes it possible to apply the estimates below both to prove local existence and to prove
Cauchy stability. We include $N_{2,a}$ in (\ref{eq:mfNadef}) since, even though the evolution of the norm of $u_{b}$ does not depend on this quantity, the
evolution of the norm of $v_{b}$, e.g., does. The (time dependent) constants appearing in the estimates also depend on
\begin{equation}\label{eq:smfnadef}
  \begin{split}
    \mfn_{a}(t) := & \|u_{a}(\cdot,t)\|_{C^{2}}+\|v_{a}(\cdot,t)\|_{C^{2}}+\|\d_{t}v_{a}(\cdot,t)\|_{C^{1}}\\
    & +\|N_{1,a}(\cdot,t)\|_{C^{2,1}}+\|N_{2,a}(\cdot,t)\|_{C^{2,1}},
  \end{split}  
\end{equation}
where we use the notation $C^{k,\a}=C^{k,\a}(\bh_{\refer})$; cf., e.g., (\ref{eq:dtNobest}), giving an estimate of $\d_{t}N_{1,b}$ which is needed to estimate
the time derivative of the energy associated with $v_{b}$. With this notation, there is a continuous and monotonically increasing function
$\mC_{a}:\ro_{+}\rightarrow\ro_{+}$ such that 
\begin{equation}\label{eq:dtEompoest}
  \left|\d_{t}E_{1,m+1}[u_{b}](t)\right|\leq \mC_{a}[\mfn_{a}(t)]\mfN_{a}(t)E_{1,m+1}^{1/2}[u_{b}](t)
\end{equation}
for all $t\in\mI$. Here $\mC_{a}$ only depends on bounds on $m$, $\{E_{i}\}$, $(\bM,\bh_{\refer})$, $f_{1}$ and its partial derivatives up to order $m+2-\ioih$.
In order to obtain this estimate, it is sufficient to appeal to (\ref{eq:ublinmodsys}) and Moser estimates; see (\ref{eq:moserest}). In case $\ioih=0$,
we rewrite $f_{1}[u_{a},v_{a},N_{1,a}]$ in analogy with (\ref{eq:fthreezerorepr}); this is what leads to an extra derivative in case $\ioih=0$. In practice, a better
estimate than (\ref{eq:dtEompoest}) holds, but this improvement will not be of use in our applications. Due to (\ref{eq:dtEompoest}), we obtain bounds on
$\|u_{b}\|_{H^{m+1}}$ and, assuming $m>n/2+1$, on $\|u_{b}\|_{C^{2}}$. Below, we appeal to these bounds in order to control $v_{b}$ and $N_{i,b}$, $i=1,2$.
For that reason, it is convenient to define
\begin{equation}\label{eq:mfnabdef}
  \mfN_{a,b}(t):=\mfN_{a}(t)+\|u_{b}\|_{H^{m+1}(\bM_{t})},\ \ \
  \mfn_{a,b}(t):=\mfn_{a}(t)+\|u_{b}(\cdot,t)\|_{C^{2}}.
\end{equation}
Combining this notation with (\ref{eq:dtNobest}) yields the conclusion that there is a continuous and monotonically increasing function
$\mC_{1,b}:\ro_{+}\rightarrow\ro_{+}$ such that
\begin{equation}\label{eq:dtNobitomfn}
  \|\d_{t}N_{1,b}(\cdot,t)\|_{C^{2,1}}\leq\mC_{1,b}[\mfn_{a,b}(t)],
\end{equation}
where $\mC_{1,b}$ only depends on bounds on $\bh_{2}^{ij}$, $\zeta$, $f_{1}$ and $f_{3}$ and their derivatives up to order three, two, two and two
respectively; $(\bM,\bh_{\refer})$; the covering; the frame $\{E_{i}\}$; the $\lambda$ appearing in (\ref{eq:lambdabd}); and the $\lambda_{\zeta}$
appearing in (\ref{eq:lambdazetalowerandupperbd}).

Next, we wish to estimate the energy for $v_{b}$. Consider, to this end, 
\begin{equation}\label{eq:Etwomponevbdef}
  E_{2,m+1}[v_{b}]:=\tfrac{1}{2}\textstyle{\sum}_{|\bfI|\leq m}\int_{\bM_{t}}(|\d_{t}E_{\bfI}v_{b}|^{2}+\bh^{ij}_{1}[u_{b},N_{1,b}]E_{i}E_{\bfI}v_{b}\cdot E_{j}E_{\bfI}v_{b}
  +|E_{\bfI}v_{b}|^{2})d\mu_{\bh_{\refer}}.
\end{equation}
Due to the second term on the right hand side of (\ref{eq:vblinmodsys}), it is not possible to derive straightforward energy estimates for this
quantity. However, 
\begin{equation}\label{eq:hEtwompodef}
  \hE_{2,m+1}[v_{b},u_{b}]:=E_{2,m+1}[v_{b}]
  +\textstyle{\sum}_{|\bfI|\leq m}\int_{\bM_{t}}\g^{ij}[u_{b},v_{a},N_{1,b}]E_{j}E_{\bfI}(u_{b})\cdot E_{i}E_{\bfI}v_{b}d\mu_{\bh_{\refer}}
\end{equation}
is an object for which we can estimate the time evolution. On the other hand, this quantity has the disadvantage that it does not dominate the
Sobolev norms we wish to control. For this reason, it is convenient to let
\[
\a_{\g}:=\textstyle{\sup}_{\xi\in M\times\rn{n_{1}+n_{2}+n+1}}\textstyle{\sum}_{i,j}\|\g^{ij}(\xi))\|;
\]
note that we assume $\g^{ij}$ to be globally bounded. Then 
\[
\big|\textstyle{\sum}_{|\bfI|\leq m}\int_{\bM_{t}}\g^{ij}E_{j}E_{\bfI}(u_{b})\cdot E_{i}E_{\bfI}v_{b}d\mu_{\bh_{\refer}}\big|\leq \tfrac{1}{2}E_{2,m+1}[v_{b}]
+C_{\lambda,n}\a_{\g}^{2}E_{1,m+1}[u_{b}],
\]
where $C_{\lambda,n}$ only depends on $\lambda$ and $n$. Introducing
\begin{equation}\label{eq:bgdef}
  \b_{\g}:=C_{\lambda,n}\a_{\g}^{2}+1,
\end{equation}
we conclude that if
\begin{equation}\label{eq:tEtwompo}
  \chE_{2,m+1}[v_{b},u_{b}]:=\hE_{2,m+1}[v_{b},u_{b}]+\b_{\g}E_{1,m+1}[u_{b}],
\end{equation}
then
\begin{equation}\label{eq:tEdominance}
  E_{1,m+1}[u_{b}](t)+\tfrac{1}{2}E_{2,m+1}[v_{b}](t)\leq \chE_{2,m+1}[v_{b},u_{b}](t).
\end{equation}
We derive the following estimate for $\chE_{2,m+1}[v_{b},u_{b}]$.

\begin{lemma}\label{lemma:energyestimatev}
  With assumptions, notation and conclusions as in Lemma~\ref{lemma:ubvbNbsol}, fix $m\in\nn{}$ with $m>n/2+1$ and define $\chE_{2,m+1}[v_{b},u_{b}]$ by
  (\ref{eq:tEtwompo}). Then there is a covering of $\bM$ (depending only on $(\bM,\bh_{\refer})$) and a continuous and monotonically increasing function
  $\mC_{m}:\ro_{+}\rightarrow\ro_{+}$ such that, given $t_{1}\in \mI$,
  \begin{equation}\label{eq:tEtwompodiffest}
    \begin{split}
      & |\chE_{2,m+1}[v_{b},u_{b}](t_{1})-\chE_{2,m+1}[v_{b},u_{b}](t_{0})|\\
      \leq & \big|\textstyle{\int}_{t_{0}}^{t_{1}}\mC_{m}\circ\mfn_{a,b}\cdot\big(E_{2,m+1}[v_{b}]+\mfN_{a,b}(\|v_{b}\|_{C^{1}}+1)E_{2,m+1}^{1/2}[v_{b}]
      +\mfN_{a,b}^{2}\big)dt\big|.
    \end{split}
  \end{equation}
  Here $\mC_{m}$ only depends on bounds on up to $\max\{3,m+2-\ioih\}$ derivatives of $\bh_{2}^{ij}$; $m+1-\ioih$ derivatives of $\zeta$, $f_{1}$, $f_{2}$
  and $f_{3}$;  and $m+1$ derivatives of $\g^{ij}$ and $\bh_{1}^{ij}$; $(\bM,\bh_{\refer})$; the covering; the frame $\{E_{i}\}$; $m$; $\lambda$; and
  $\lambda_{\zeta}$.
\end{lemma}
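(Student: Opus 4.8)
The plan is to derive a differential identity for $\chE_{2,m+1}[v_b,u_b]$ by differentiating in time, then integrate and estimate. First I would fix a multi-index $\bfI$ with $|\bfI|\leq m$ and apply $E_{\bfI}$ to the evolution equation (\ref{eq:vblinmodsys}) to obtain
\[
\d_t^2(E_{\bfI}v_b)=\bh_1^{ij}[u_b,N_{1,b}]E_iE_j(E_{\bfI}v_b)+\g^{ij}[u_b,v_a,N_{1,b}]E_iE_j(E_{\bfI}u_b)+(\text{commutator and }f_2\text{ terms}).
\]
The commutator terms $[E_{\bfI},\bh_1^{ij}E_iE_j]v_b$ and $[E_{\bfI},\g^{ij}E_iE_j]u_b$ are, by the Leibniz rule and the structure equations of the frame $\{E_i\}$, at most order $|\bfI|+1$ in $v_b$ (resp. $u_b$) with coefficients built from derivatives of $\bh_1^{ij}$, $\g^{ij}$, $u_b$, $N_{1,b}$, $v_a$ and the frame; these are controlled in $L^2$ via Moser estimates (\ref{eq:moserest}), giving a bound by $\mC_m\circ\mfn_{a,b}\cdot\mfN_{a,b}$ after invoking (\ref{eq:Nbmainestimate}) to eliminate dependence on $\|N_{1,b}\|_{H^{m+1}}$. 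Similarly the contribution of $f_2[u_a,v_a,N_{1,a},N_{2,a}]$ is estimated in $H^m$ by $\mC_m\circ\mfn_{a,b}\cdot\mfN_{a,b}$ (using (\ref{eq:Nbdotmainestimate}) for the $N_{2,a}$-dependence, and rewriting $f_2$ \`a la (\ref{eq:fthreezerorepr}) in the $\ioih=0$ case to gain the extra derivative).

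Next I would compute $\d_t E_{2,m+1}[v_b]$. The $|\d_t E_{\bfI}v_b|^2$ term produces, after using the equation above, $\int \d_t E_{\bfI}v_b\cdot(\bh_1^{ij}E_iE_j E_{\bfI}v_b)$ plus lower-order and source terms; the principal part is integrated by parts in space to pair with $\tfrac12\d_t\int \bh_1^{ij}E_iE_{\bfI}v_b\cdot E_jE_{\bfI}v_b$, leaving only a term with $\d_t\bh_1^{ij}[u_b,N_{1,b}]$. Since $\bh_1^{ij}$ depends on $u_b$, $N_{1,b}$ and $E_iN_{1,b}$, its time derivative involves $\d_t u_b=f_1[u_a,v_a,N_{1,a}]$ and $\d_t N_{1,b}$ and $E_i\d_t N_{1,b}$; the $C^0$-norm of $\d_t\bh_1^{ij}$ is then bounded by $\mC_m\circ\mfn_{a,b}$ thanks to (\ref{eq:dtNobitomfn}), which controls $\|\d_t N_{1,b}\|_{C^{2,1}}$. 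This is the one genuinely delicate point: without the estimate (\ref{eq:dtNobitomfn}) on $\d_t N_{1,b}$ (which is why that lemma was proved) the energy $E_{2,m+1}$ itself would not have a controllable time derivative, and this is also the reason the modified energy $\hE_{2,m+1}$ is needed — the naive cross term coming from $\g^{ij}E_iE_j u_b$ in (\ref{eq:vblinmodsys}) contributes $\int \d_t E_{\bfI}v_b\cdot\g^{ij}E_iE_j E_{\bfI}u_b$, which has too many derivatives on $u_b$ and cannot be absorbed directly; subtracting $\d_t$ of the correction term in (\ref{eq:hEtwompodef}) cancels the top-order piece, leaving $\int\d_t\g^{ij}\cdot E_jE_{\bfI}u_b\cdot E_iE_{\bfI}v_b$ and $\int\g^{ij}E_jE_{\bfI}(\d_t u_b)\cdot E_iE_{\bfI}v_b$, both of order $\leq m+1$ in $u_b$ and hence bounded by $\mC_m\circ\mfn_{a,b}\cdot\mfN_{a,b}(\|v_b\|_{C^1}+1)E_{2,m+1}^{1/2}[v_b]$ after using $\d_t u_b=f_1[u_a,v_a,N_{1,a}]$, Moser estimates and, where $\ioih=0$, the representation (\ref{eq:fthreezerorepr}).

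Collecting all contributions, $\d_t\hE_{2,m+1}[v_b,u_b]$ is bounded in absolute value by
\[
\mC_m\circ\mfn_{a,b}\cdot\big(E_{2,m+1}[v_b]+\mfN_{a,b}(\|v_b\|_{C^1}+1)E_{2,m+1}^{1/2}[v_b]+\mfN_{a,b}^2\big)+\mC_m\circ\mfn_{a,b}\cdot\mfN_{a,b}\cdot E_{1,m+1}^{1/2}[u_b];
\]
the last term, coming from the $f_1$-contribution to $\d_t\b_\g E_{1,m+1}[u_b]$ via (\ref{eq:dtEompoest}), is absorbed into the displayed form since $E_{1,m+1}^{1/2}[u_b]\leq\mfN_{a,b}$ and $\b_\g$ is a fixed constant depending only on $\lambda$, $n$ and $\a_\g$. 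Adding $\d_t(\b_\g E_{1,m+1}[u_b])$ thus changes nothing qualitatively, so $\d_t\chE_{2,m+1}[v_b,u_b]$ obeys the same bound, and integrating from $t_0$ to $t_1$ gives (\ref{eq:tEtwompodiffest}). Finally I would bookkeep the derivative count: the commutator terms cost $m+1$ derivatives of $\g^{ij}$ and $\bh_1^{ij}$; the $\d_t\bh_1^{ij}$ computation via (\ref{eq:dtNobitomfn}) costs up to three derivatives of $\bh_2^{ij}$ and two of $\zeta$, $f_1$, $f_3$; the $f_2$-estimate costs $m+1-\ioih$ derivatives of $f_2$ (and via the Sobolev lapse estimates, of $\zeta$, $f_3$); and the $\ioih$-shift in $f_1$ and $f_3$ accounts for the $\max\{3,m+2-\ioih\}$ in $\bh_2^{ij}$ — yielding exactly the dependence asserted in the lemma.
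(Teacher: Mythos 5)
Your proposal follows essentially the same route as the paper's proof: differentiate the modified energy in time, use (\ref{eq:dtNobitomfn}) to control $\d_t\bh_1^{ij}$, Moser estimates together with the lapse estimates of Lemma~\ref{lemma:lapseestimatehighnorm} for the commutator and $f_2$-contributions, the cross-correction term in (\ref{eq:hEtwompodef}) to eliminate the top-order $\g^{ij}$-term, and (\ref{eq:dtEompoest}) to absorb the $u_b$-energy. The derivative bookkeeping, including the role of $\ioih$, is also correct.

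One step is phrased imprecisely and should be made explicit. You claim that $\d_t$ of the correction term in (\ref{eq:hEtwompodef}) ``cancels the top-order piece'' $\sum_{|\bfI|\leq m}\int\g^{ij}E_iE_jE_{\bfI}u_b\cdot\d_t E_{\bfI}v_b\,d\mu_{\bh_\refer}$. However, the top-order part of $\d_t$ of the correction term is $\sum_{|\bfI|\leq m}\int\g^{ij}E_jE_{\bfI}u_b\cdot\d_t E_iE_{\bfI}v_b\,d\mu_{\bh_\refer}$, which is not the same expression: in the former $E_i$ acts on $u_b$, in the latter on $\d_t v_b$. The cancellation holds only after an intermediate spatial integration by parts (as in (\ref{eq:vbenspderfs})), which moves $E_i$ from $u_b$ to $\d_t v_b$ and produces further lower-order remainders (involving $E_i(\g^{ij})$ and $\rodiv_{\bh_\refer}E_i$) that must also be estimated. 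Since you correctly identify the genuine leftover pieces $\int\d_t(\g^{ij})\cdot E_jE_{\bfI}u_b\cdot E_iE_{\bfI}v_b$ and $\int\g^{ij}E_jE_{\bfI}(\d_t u_b)\cdot E_iE_{\bfI}v_b$, this is evidently an elision rather than a misunderstanding, but it is precisely the observation that motivates the paper's presentation through successive integrations by parts in space and then in time.
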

\begin{remark}
  Due to (\ref{eq:tEdominance}), we can replace $E_{2,m+1}$ with $\chE_{2,m+1}$ on the right hand side of (\ref{eq:tEtwompodiffest}). Moreover,
  since $m>n/2$, $\|v_{b}\|_{C^{1}}$ can, due to Sobolev embedding, be replaced by the square root of $\chE_{2,m+1}$. Combining these observations
  with (\ref{eq:tEtwompodiffest}), the prior bound on $\mfN_{a,b}$ and Gr\"{o}nwall's lemma yields a bound on $\chE_{2,m+1}$.
\end{remark}
\begin{proof}
  Time differentiating $E_{2,m+1}$ yields
  \begin{equation}\label{eq:dtEtomponefirststep}
    \begin{split}
      \d_{t}E_{2,m+1}[v_{b}]
      := & \textstyle{\sum}_{|\bfI|\leq m}\int_{\bM_{t}}(\d_{t}^{2}E_{\bfI}v_{b}\cdot\d_{t}E_{\bfI}v_{b}+(\d_{t}\bh_{1}^{ij}/2)E_{i}E_{\bfI}v_{b}\cdot E_{j}E_{\bfI}v_{b}\\
      & \phantom{\textstyle{\sum_{i=1}\int\bM_{t}}+}+\bh_{1}^{ij}E_{i}E_{\bfI}v_{b}\cdot E_{j}E_{\bfI}\d_{t}v_{b}
      +E_{\bfI}v_{b}\cdot E_{\bfI}\d_{t}v_{b})d\mu_{\bh_{\refer}}.
    \end{split}
  \end{equation}
  In order to estimate the contribution from the second term in the integrand, note that $\d_{t}\bh_{1}^{ij}$ is a linear combination of terms
  consisting of four types
  of factors: a first derivative of $\bh_{1}^{ij}$; $\d_{t}u_{b}$; $\d_{t}N_{1,b}$; and $E_{i}\d_{t}N_{1,b}$. Appealing to (\ref{eq:ublinmodsys}),
  (\ref{eq:dtNobitomfn}) and the assumptions, cf. Subsection~\ref{ssection:modelsystem}, it is clear that the contribution can be estimated by
  $\mC[\mfn_{a,b}(t)]E_{2,m+1}$, where $\mC:\ro_{+}\rightarrow\ro_{+}$ is continuous and monotonic. Moreover, $\mC$ only depends on bounds on $\bh_{2}^{ij}$,
  $\zeta$, $f_{1}$, $f_{3}$, $\g^{ij}$ and $\bh_{1}^{ij}$ and their derivatives up to order three, two, two, two, two, one and one respectively;
  $(\bM,\bh_{\refer})$; the covering; the frame $\{E_{i}\}$; $m$; the $\lambda$ appearing in (\ref{eq:lambdabd}); and the $\lambda_{\zeta}$ appearing in
  (\ref{eq:lambdazetalowerandupperbd}). Here we include dependence on, e.g., $\g^{ij}$, which is strictly speaking not necessary. The reason for this
  is that we wish to include all the dependence that will appear in the course of the proof, instead of adding dependence incrementally. 
  The contribution from the fourth term in the integrand of the right hand side of (\ref{eq:dtEtomponefirststep}) can be estimated by a numerical
  constant times $E_{2,m+1}[v_{b}]$. Next, consider
  \begin{equation}\label{eq:vbenspderfs}
    \begin{split}
      \textstyle{\int}_{\bM_{t}}\bh_{1}^{ij}E_{i}E_{\bfI}v_{b}\cdot E_{j}E_{\bfI}\d_{t}v_{b}d\mu_{\bh_{\refer}}
      = & -\textstyle{\int}_{\bM_{t}}E_{j}(\bh_{1}^{ij})E_{i}E_{\bfI}v_{b}\cdot E_{\bfI}\d_{t}v_{b}d\mu_{\bh_{\refer}}\\
      & -\textstyle{\int}_{\bM_{t}}\bh_{1}^{ij}E_{j}E_{i}E_{\bfI}v_{b}\cdot E_{\bfI}\d_{t}v_{b}d\mu_{\bh_{\refer}}\\
      & -\textstyle{\int}_{\bM_{t}}\bh_{1}^{ij}E_{i}E_{\bfI}v_{b}\cdot E_{\bfI}\d_{t}v_{b}\rodiv_{\bh_{\refer}}E_{j}d\mu_{\bh_{\refer}};
    \end{split}
  \end{equation}
  cf., e.g., the derivation of \cite[(B.7), p.~211]{RinWave}. 
  By (\ref{eq:NobSchauder}) and arguments similar to the above, the first term on the right hand side can be estimated by $\mC[\mfn_{a,b}(t)]E_{2,m+1}$,
  where $\mC$ has the same dependence as before. The contribution from the third term on the right hand side of (\ref{eq:vbenspderfs}) can be estimated
  by a constant times $E_{2,m+1}$, where the constant only depends on $\lambda$, $(\bM,\bh_{\refer})$ and $\{E_{i}\}$. In order to estimate the second term
  on the right hand side of (\ref{eq:vbenspderfs}), we need to consider $[\bh_{1}^{ij}E_{j}E_{i},E_{\bfI}]$. This expression is a sum of terms of the form
  $\lambda_{\bfI_{1},\bfI_{2}} E_{\bfI_{1}}(\bh_{1}^{ij})E_{\bfI_{2}}$,
  where $\lambda_{\bfI_{1},\bfI_{2}}$ only depends on the frame, $\bfI_{1}$ and $\bfI_{2}$; $|\bfI_{1}|+|\bfI_{2}|\leq |\bfI|+2$; $|\bfI_{1}|\geq 1$ if
  $|\bfI_{1}|+|\bfI_{2}|=|\bfI|+2$; and $|\bfI_{2}|\geq 1$. We thus need to estimate 
  \begin{equation}\label{eq:bhoneijvbprelest}
    \|E_{\bfI_{1}}(\bh_{1}^{ij})E_{\bfI_{2}}v_{b}\|_{L^{2}}.
  \end{equation}
  Expanding the first factor inside the norm leads to a linear combination of terms consisting of derivatives of $\bh_{1}^{ij}$ times spatial derivatives
  applied to $u_{b}$, $N_{1,b}$ and $E_{i}N_{1,b}$. Extracting the derivatives of $\bh_{1}^{ij}$ in the supremum norm, applying the Moser estimate (\ref{eq:moserest})
  to what remains and appealing to (\ref{eq:NobSchauder}) and (\ref{eq:Nbmainestimate}) (and, in case $\ioih=0$, to Remark~\ref{remark:Nbmainestimateimp}) yields
  \[
    \|E_{\bfI_{1}}(\bh_{1}^{ij})E_{\bfI_{2}}v_{b}\|_{L^{2}}\leq \mC_{m}[\mfn_{a,b}(t)]\big(\|v_{b}\|_{C^{1}}\mfN_{a,b}+E_{2,m+1}^{1/2}[v_{b}]\big),
  \]
  where $\mC_{m}$, in addition to the dependence of $\mC$, only depends on $m$; bounds on up to $m+1-\ioih$ derivatives of $\zeta$, $f_{1}$, $f_{2}$
  and $f_{3}$; bounds on up to $m+2-\ioih$ derivatives of $\bh_2^{ij}$; 
  and bounds on up to $m+1$ derivatives of $\g^{ij}$ and $\bh_{1}^{ij}$ (we here add dependence that will only be necessary later on). Adding up, 
  \begin{equation}\label{eq:fsEtwompovbest}
    \begin{split}
      & \big|\d_{t}E_{2,m+1}[v_{b}]
      -\textstyle{\sum}_{|\bfI|\leq m}\int_{\bM_{t}}E_{\bfI}(\d_{t}^{2}v_{b}-\bh_{1}^{ij}E_{i}E_{j}v_{b})\cdot\d_{t}E_{\bfI}v_{b}d\mu_{\bh_{\refer}}\big|\\
      \leq & \mC_{m}[\mfn_{a,b}(t)]\big(E_{2,m+1}[v_{b}]+\|v_{b}\|_{C^{1}}\mfN_{a,b}E_{2,m+1}^{1/2}[v_{b}]\big),
    \end{split}
  \end{equation}
  where $\mC_{m}$ has the dependence described above. Next, note that 
  \begin{equation}\label{eq:EbfIwoperven}
    \begin{split}
      & \textstyle{\sum}_{|\bfI|\leq m}\int_{\bM_{t}}E_{\bfI}(\d_{t}^{2}v_{b}-\bh_{1}^{ij}E_{i}E_{j}v_{b})\cdot\d_{t}E_{\bfI}v_{b}d\mu_{\bh_{\refer}}\\
      = & \textstyle{\sum}_{|\bfI|\leq m}\int_{\bM_{t}}E_{\bfI}(\g^{ij}E_{i}E_{j}u_{b})\cdot\d_{t}E_{\bfI}v_{b}d\mu_{\bh_{\refer}}
      +\sum_{|\bfI|\leq m}\int_{\bM_{t}}E_{\bfI}f_{2}\cdot \d_{t}E_{\bfI}v_{b}d\mu_{\bh_{\refer}}.
    \end{split}
  \end{equation}
  The second term can be estimated by appealing to Moser estimates; it is bounded in absolute value by $\mC_{m}[\mfn_{a,b}(t)]\mfN_{a}E_{2,m+1}^{1/2}[v_{b}]$,
  where $\mC_{m}$ has the dependence described above. Note that, in case $\ioih=0$, we first rewrite $f_{2}$ in analogy with (\ref{eq:fthreezerorepr}).
  In the first term on the right hand side of (\ref{eq:EbfIwoperven}), if not all the derivatives hit $u_{b}$, the corresponding expressions can
  estimated similarly to the estimate of (\ref{eq:bhoneijvbprelest}). What we are left with is a sum of terms of the form
  \begin{equation*}
    \begin{split}
      & \textstyle{\int}_{\bM_{t}}\g^{ij}E_{i}E_{j}E_{\bfI}u_{b}\cdot\d_{t}E_{\bfI}v_{b}d\mu_{\bh_{\refer}}\\
      = & -\textstyle{\int}_{\bM_{t}}E_{i}(\g^{ij})E_{j}E_{\bfI}u_{b}\cdot\d_{t}E_{\bfI}v_{b}d\mu_{\bh_{\refer}}
      -\int_{\bM_{t}}\g^{ij}E_{j}E_{\bfI}u_{b}\cdot\d_{t}E_{i}E_{\bfI}v_{b}d\mu_{\bh_{\refer}}\\
      & -\textstyle{\int}_{\bM_{t}}\g^{ij}E_{j}E_{\bfI}u_{b}\cdot\d_{t}E_{\bfI}v_{b}\cdot (\rodiv_{\bh_{\refer}}E_{i})d\mu_{\bh_{\refer}}.
    \end{split}
  \end{equation*}
  The first and last terms can be estimated similarly to the above. In order to estimate the second term, we also need to integrate in time. In fact,
  \begin{equation*}
    \begin{split}
      & -\textstyle{\int}_{t_{0}}^{t_{1}}\int_{\bM_{t}}\g^{ij}E_{j}E_{\bfI}u_{b}\cdot\d_{t}E_{i}E_{\bfI}v_{b}d\mu_{\bh_{\refer}}dt\\
      = & -\big[\textstyle{\int}_{\bM_{t}}\g^{ij}E_{j}E_{\bfI}u_{b}\cdot E_{i}E_{\bfI}v_{b}d\mu_{\bh_{\refer}}\big]_{t_{0}}^{t_{1}}
      +\textstyle{\int}_{t_{0}}^{t_{1}}\int_{\bM_{t}}\d_{t}(\g^{ij})E_{j}E_{\bfI}u_{b}\cdot E_{i}E_{\bfI}v_{b}d\mu_{\bh_{\refer}}dt\\
      & +\textstyle{\int}_{t_{0}}^{t_{1}}\int_{\bM_{t}}\g^{ij}E_{j}E_{\bfI}(\d_{t}u_{b})\cdot E_{i}E_{\bfI}v_{b}d\mu_{\bh_{\refer}}dt.
    \end{split}
  \end{equation*}
  In order to estimate the second term on the right hand side, we proceed as in the estimate of the contribution from the second term in the
  integrand on the right hand side of (\ref{eq:dtEtomponefirststep}). The third term on the right hand side can be estimated by appealing to
  (\ref{eq:ublinmodsys}) and Moser estimates; again, we first rewrite $f_{1}$ in analogy with (\ref{eq:fthreezerorepr}) in case $\ioih=0$. The
  first term gives rise to boundary terms that have to be included in the energy. This is what leads to the energy defined in (\ref{eq:hEtwompodef}).
  The above estimates lead to the conclusion that
  \begin{equation}\label{eq:hEtwompodiffest}
    \begin{split}
      & \big|\hE_{2,m+1}[v_{b},u_{b}](t_{1})-\hE_{2,m+1}[v_{b},u_{b}](t_{0})\big|\\
      \leq & \big|\textstyle{\int}_{t_{0}}^{t_{1}}\mC_{m}[\mfn_{a,b}(t)](E_{2,m+1}[v_{b}]+(\|v_{b}\|_{C^{1}}+1)\mfN_{a,b}E_{2,m+1}^{1/2}[v_{b}])dt\big|.
    \end{split}
  \end{equation}
  Combining (\ref{eq:hEtwompodiffest}) with (\ref{eq:dtEompoest}) and (\ref{eq:tEtwompo}) yields (\ref{eq:tEtwompodiffest}). The lemma follows.
\end{proof}

\subsection{Convergence in a low norm}\label{ssection:conlownorm}

Next, we estimate differences of solutions in a low norm. The relevant situation to consider is the following. Fix a model system.
Then, using notation and assumptions as in Subsection~\ref{ssection:modelsystem}, let $\mJ\subseteq\mI$, $M_{\mJ}:=\bM\times\mJ$, and
assume that
\begin{subequations}\label{seq:uabietcreg}
  \begin{align}
    u_{a,i},u_{b,i} \in & C^{1}(\mJ,C^{1}(\bM,\rn{n_{1}}))\cap C^{0}(\mJ,C^{2}(\bM,\rn{n_{1}})),\\
    v_{a,i},v_{b,i} \in & C^{2}(\mJ,C^{0}(\bM,\rn{n_{2}}))\cap C^{1}(\mJ,C^{1}(\bM,\rn{n_{2}}))\cap C^{0}(\mJ,C^{2}(\bM,\rn{n_{2}})),\\
    N_{1,a,i},N_{1,b,i} \in & C^{1}(\mJ,C^{1}(\bM))\cap C^{0}(\mJ,C^{2,1}(\bM)),\\
    N_{2,a,i},N_{2,b,i} \in & C^{0}(\mJ,C^{2,1}(\bM))
  \end{align}
\end{subequations}
for $i=1,2$. Assume, finally, that $u_{b,i},v_{b,i},N_{1,b,i},N_{2,b,i}$ are solutions to (\ref{seq:linmodsys}) corresponding to
$u_{a,i},v_{a,i},N_{1,a,i},N_{2,a,i}$ for $i=1,2$. In practice, we are going to apply the results of the present subsection in two different settings.
One setting is when proving uniqueness of solutions. The other setting is when proving convergence of an iteration. In the latter setting, the $u_{a,i}$
etc. are given smooth functions on $M$. Moreover, we specify smooth initial data $u_{0,i}$, $v_{0,0,i}$, $v_{0,1,i}$, $i=1,2$, as in the statement of
Lemma~\ref{lemma:ubvbNbsol}. Appealing to Lemma~\ref{lemma:ubvbNbsol} then yields corresponding solutions $u_{b,i}$, $v_{b,i}$, $N_{l,b,i}$. In what
follows, it is convenient to use the notation
\begin{equation}\label{eq:UVLdef}
  U_{a}:=u_{a,2}-u_{a,1},\ \ \
  V_{a}:=v_{a,2}-v_{a,1},\ \ \
  L_{l,a}:=N_{l,a,2}-N_{l,a,1},
\end{equation}
$l=1,2$. We similarly define $U_{b}:=u_{b,2}-u_{b,1}$ etc. In analogy with (\ref{eq:mfNadef}), we introduce the notation
\begin{equation}\label{eq:mNadef}
  \begin{split}
    \mN_{a}(t) := & \|U_{a}\|_{H^{1}(\bM_{t})}+\|V_{a}\|_{H^{1}(\bM_{t})}+\|\d_{t}V_{a}\|_{L^{2}(\bM_{t})}\\
    & +\|L_{1,a}\|_{H^{2}(\bM_{t})}+\|L_{2,a}\|_{H^{2}(\bM_{t})}.
  \end{split}
\end{equation}
Since the constants appearing in the estimates depend on the norms of both $u_{a,1}$ and $u_{a,2}$ etc., it is also convenient to use the notation
\begin{equation}\label{eq:mfladef}
  \begin{split}
    \mfl_{a}(t) := & \textstyle{\sum}_{i=1}^{2}\|u_{a,i}(\cdot,t)\|_{C^{2}}+\textstyle{\sum}_{i=1}^{2}\|v_{a,i}(\cdot,t)\|_{C^{2}}
    +\textstyle{\sum}_{i=1}^{2}\|\d_{t}v_{a,i}(\cdot,t)\|_{C^{1}}\\
    & +\textstyle{\sum}_{i=1}^{2}\|N_{1,a,i}(\cdot,t)\|_{C^{2,1}}+\textstyle{\sum}_{i=1}^{2}\|N_{2,a,i}(\cdot,t)\|_{C^{2,1}},
  \end{split}
\end{equation}
where $C^{k,\a}=C^{k,\a}(\bh_{\refer})$. Similarly to (\ref{eq:mfnabdef}), it is also convenient to introduce the notation
\begin{subequations}\label{seq:mNabmflab}
  \begin{align}
    \mN_{a,b}(t) := & \mN_{a}(t)+\|U_{b}\|_{H^{1}(\bM_{t})},\label{eq:mNabdef}\\
    \mfl_{a,b}(t) := & \mfl_{a}(t)+\textstyle{\sum}_{i=1}^{2}\|u_{b,i}(\cdot,t)\|_{C^{2}}.\label{eq:mflabdef}
  \end{align}
\end{subequations}

\textit{Estimates for the lapse function.} We begin by estimating $L_{1,b}$ and $L_{2,b}$.

\begin{lemma}\label{lemma:lapseestimates}
  Given the above assumptions and notation, there is a covering of $\bM$ (depending only on $(\bM,\bh_{\refer})$) and a continuous and monotonically
  increasing function $\mfC_{L}:\ro_{+}\rightarrow\ro_{+}$ such that 
  \begin{align}
    \|L_{1,b}\|_{H^{2}(\bM_{t})} \leq & \mfC_{L}[\mfl_{a,b}(t)]\left(\|U_{b}\|_{H^{1}(\bM_{t})}+\|V_{a}\|_{L^{2}(\bM_{t})}\right),\label{eq:Lobest}\\
    \|L_{2,b}\|_{H^{2}(\bM_{t})} \leq & \mfC_{L}[\mfl_{a,b}(t)]
    \left(\|U_{b}\|_{H^{1}(\bM_{t})}+\|V_{a}\|_{H^{1}(\bM_{t})}+\|\d_{t}V_{a}\|_{L^{2}(\bM_{t})}\right)\label{eq:Ltwobest}
  \end{align}
  for $t\in\mJ$. Here $\mfC_{L}$ only depends on bounds on $\bh_{2}^{ij}$, $\zeta$, $f_{3}$ and $f_{4}$ and their derivatives up to order two, one,
  one and one respectively; $(\bM,\bh_{\refer})$; the covering; the frame $\{E_{i}\}$; $\lambda$; and $\lambda_{\zeta}$.
\end{lemma}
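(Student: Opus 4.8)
The plan is to obtain each of (\ref{eq:Lobest}) and (\ref{eq:Ltwobest}) from a single linear elliptic equation for the difference: the equation obtained by subtracting the two copies of (\ref{eq:Nblinmodsys}) (respectively (\ref{eq:Nbdotlinmodsys})) corresponding to $i=1,2$, followed by an application of the basic elliptic estimate of Theorem~\ref{thm: elliptic estimate} with $m=0$, used exactly as in the proof of Lemma~\ref{lemma:lapseestimatehighnorm}. Throughout I would work at a fixed $t\in\mJ$ and suppress the argument $(\cdot,t)$; the regularity recorded in (\ref{seq:uabietcreg}) makes all the pointwise manipulations below legitimate, and $\bh_{2}^{ij}$, $\zeta$ satisfy (\ref{eq:lambdabd}) and (\ref{eq:lambdazetalowerandupperbd}) for every value of their arguments, so Theorem~\ref{thm: elliptic estimate} applies with metric $\bh_{2}[u_{b,2}]$ and coefficient $\zeta[u_{b,2},v_{a,2}]$. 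The only real difference from Lemma~\ref{lemma:lapseestimatehighnorm} is that the forcing term of the difference equation is built entirely out of \emph{differences} of the coefficient functions, so no additive constant survives and the whole estimate is controlled by the low norms on the right hand sides of (\ref{eq:Lobest})--(\ref{eq:Ltwobest}). The main bookkeeping obstacle is checking that these differences never cost more than one derivative of $U_{b}$ (and, for $f_{4}$, no more than $E_{i}V_{a}$ and $\d_{t}V_{a}$); this works because the relevant lapse functions $N_{1,b,1}$, $N_{2,b,1}$ are controlled in $C^{2,1}$ by the Schauder estimates of Lemma~\ref{lemma:schauder}.

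For $L_{1,b}$, subtracting the two instances of (\ref{eq:Nblinmodsys}) and rearranging yields
\[
\Delta_{\bh_{2}[u_{b,2}]}L_{1,b}-\zeta[u_{b,2},v_{a,2}]L_{1,b}=F,
\]
where $F$ is the sum of $(\zeta[u_{b,2},v_{a,2}]-\zeta[u_{b,1},v_{a,1}])N_{1,b,1}$, of $f_{3}[u_{b,2},v_{a,2}]-f_{3}[u_{b,1},v_{a,1}]$, and of $-(\Delta_{\bh_{2}[u_{b,2}]}-\Delta_{\bh_{2}[u_{b,1}]})N_{1,b,1}$. Theorem~\ref{thm: elliptic estimate} then gives $\|L_{1,b}\|_{H^{2}}\leq\mfC_{L}[\mfl_{a,b}]\,\|F\|_{L^{2}}$, so it remains to bound $\|F\|_{L^{2}}$. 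For the first two summands I would use the fundamental theorem of calculus exactly as in (\ref{eq:fthreezerorepr}): since $f_{3}$ and $\zeta$ depend only on $u$, $E_{i}u$ and $v$, each difference becomes an integral of a bounded derivative of $f_{3}$ (resp.\ $\zeta$) along the segment joining the two arguments, times components of $U_{b}$, $E_{1}U_{b},\dots,E_{n}U_{b}$ and $V_{a}$; in the $\zeta$-term one carries in addition the factor $N_{1,b,1}$, bounded in $C^{0}$ by $\mB_{1}[\|u_{b,1}\|_{C^{2}},\|v_{a,1}\|_{C^{1}}]$ via (\ref{eq:NobSchauder}). For the third summand I would write $\Delta_{\bh_{2}[u_{b}]}\psi=\bh_{2}^{ij}[u_{b}]E_{i}E_{j}\psi+b^{k}[u_{b},E u_{b}]E_{k}\psi$, where $b^{k}$ involves $\bh_{2}^{ij}$, its first frame-derivatives and the structure constants of $\{E_{i}\}$ and so depends only on $u_{b}$ and $E_{l}u_{b}$; the difference $(\Delta_{\bh_{2}[u_{b,2}]}-\Delta_{\bh_{2}[u_{b,1}]})N_{1,b,1}$ is then a sum of terms, each a $C^{0}$-bounded factor (using $\|N_{1,b,1}\|_{C^{2,1}}\leq\mB_{1}[\dots]$ for the derivatives of $N_{1,b,1}$) times $U_{b}$ or some $E_{l}U_{b}$. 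Adding up, $\|F\|_{L^{2}}\leq\mfC_{L}[\mfl_{a,b}](\|U_{b}\|_{H^{1}}+\|V_{a}\|_{L^{2}})$, which is (\ref{eq:Lobest}).

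For $L_{2,b}$ the same subtraction of (\ref{eq:Nbdotlinmodsys}) produces
\[
\Delta_{\bh_{2}[u_{b,2}]}L_{2,b}-\zeta[u_{b,2},v_{a,2}]L_{2,b}=G,
\]
with $G$ the sum of $(\zeta[u_{b,2},v_{a,2}]-\zeta[u_{b,1},v_{a,1}])N_{2,b,1}$, of $f_{4}[u_{b,2},v_{a,2},N_{1,b,2}]-f_{4}[u_{b,1},v_{a,1},N_{1,b,1}]$, and of $-(\Delta_{\bh_{2}[u_{b,2}]}-\Delta_{\bh_{2}[u_{b,1}]})N_{2,b,1}$, and again $\|L_{2,b}\|_{H^{2}}\leq\mfC_{L}[\mfl_{a,b}]\|G\|_{L^{2}}$. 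The $\zeta$- and Laplacian-difference terms are treated exactly as for $L_{1,b}$, now using $\|N_{2,b,1}\|_{C^{2,1}}\leq\mB_{2}[\|u_{b,1}\|_{C^{2}},\|v_{a,1}\|_{C^{2}},\|\d_{t}v_{a,1}\|_{C^{1}}]$ from (\ref{eq:NtbSchauder}); they are bounded by $\mfC_{L}[\mfl_{a,b}](\|U_{b}\|_{H^{1}}+\|V_{a}\|_{L^{2}})$. For the $f_{4}$-difference, the fundamental theorem of calculus yields bounded derivatives of $f_{4}$ along the connecting segment times components of $U_{b}$, $E_{i}U_{b}$, $V_{a}$, $E_{i}V_{a}$, $\d_{t}V_{a}$, $L_{1,b}$, $E_{i}L_{1,b}$ and $E_{i}E_{j}L_{1,b}$ (recall that $f_{4}$ depends on $u$, $E_{i}u$, $v$, $E_{i}v$, $\d_{t}v$, $N_{1}$, $E_{i}N_{1}$, $E_{i}E_{j}N_{1}$). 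The first five groups are bounded in $L^{2}$ by $\|U_{b}\|_{H^{1}}+\|V_{a}\|_{H^{1}}+\|\d_{t}V_{a}\|_{L^{2}}$, while the three $L_{1,b}$-groups contribute $\|L_{1,b}\|_{H^{2}}$, which by the already-established (\ref{eq:Lobest}) is at most $\mfC_{L}[\mfl_{a,b}](\|U_{b}\|_{H^{1}}+\|V_{a}\|_{L^{2}})$. Collecting the bounds yields (\ref{eq:Ltwobest}), completing the proof.
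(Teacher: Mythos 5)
Your proposal reproduces the paper's argument: subtract the two copies of (\ref{eq:Nblinmodsys}) (resp.\ (\ref{eq:Nbdotlinmodsys})), read the result as a single elliptic equation for $L_{1,b}$ (resp.\ $L_{2,b}$) with forcing built from coefficient differences, bound the forcing in $L^2$ via the fundamental theorem of calculus combined with the Schauder bounds (\ref{eq:NobSchauder}), (\ref{eq:NtbSchauder}), and close the $L_{2,b}$ estimate by feeding (\ref{eq:Lobest}) back in. The fact that you place $\Delta_{\bh_2[u_{b,2}]}$ on the left while the paper uses $\Delta_{\bh_2[u_{b,1}]}$ (and correspondingly which of $N_{l,b,1}$, $N_{l,b,2}$ appears in the coefficient differences) is immaterial; both rearrangements are algebraically equivalent and lead to the same bound.

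One citation should be corrected. The inequality $\|L_{l,b}\|_{H^2}\leq\mfC_L[\mfl_{a,b}]\|F\|_{L^2}$ that you use is \emph{not} the one used in the proof of Lemma~\ref{lemma:lapseestimatehighnorm}: that proof relies on (\ref{eq:Hkptwonormu}), which contains the additive terms $\|h\|_{H^{k+1}}+\|a\|_{H^{k}}$ and therefore cannot yield a bound that vanishes as $\|U_b\|_{H^1}+\|V_a\|_{L^2}\to 0$. Similarly, (\ref{eq:mainHkptwoestell}) with $k=0$ carries the terms $\|u\|_{W^{1,\infty}}\|h\|_{H^1}+\|u\|_{L^\infty}\|a\|_{L^2}$; here $u=L_{l,b}$ and the Schauder bound controls $\|L_{l,b}\|_{W^{1,\infty}}$ only uniformly, not by the low-norm differences, so this estimate does not close either. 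The estimate you actually use — $\|u\|_{H^2}\leq C\|Lu\|_{L^2}$ with $C$ depending only on $\|h\|_{W^{2,\infty}}$ and $\|a\|_{W^{1,\infty}}$, hence only on $\mfl_{a,b}$ — is (\ref{eq:basic Htwo est}) from Proposition~\ref{prop: bound on the inverse}, supplemented by Remark~\ref{remark:basic Htwo est low reg} for the relevant regularity (or, equivalently, (\ref{eq:u Hk plus two Lu Hk}) with $k=0$). That is the reference the paper appeals to, and it is the correct one; with that substitution, your argument is complete.
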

\begin{proof}
  Due to Lemma~\ref{lemma:schauder}, there is a function $\mfC$, with the same properties and dependence as $\mfC_{L}$ appearing in the statement,
  such that
  \begin{equation}\label{eq:NiblSchaudest}
    \|N_{1,b,1}\|_{C^{2,1}}+\|N_{1,b,2}\|_{C^{2,1}}+\|N_{2,b,1}\|_{C^{2,1}}+\|N_{2,b,2}\|_{C^{2,1}}\leq\mfC[\mfl_{a,b}(t)]
  \end{equation}
  for all $t\in\mJ$; as before, we ignore the argument $(\cdot,t)$. Next, due to (\ref{eq:Nblinmodsys}),
  \begin{equation}\label{eq:DeltabhtwodiffLocase}
    \begin{split}
      \Delta_{\bh_{2}[u_{b,1}]}L_{1,b} = & \zeta[u_{b,2},v_{a,2}]L_{1,b}+(\Delta_{\bh_{2}[u_{b,1}]}-\Delta_{\bh_{2}[u_{b,2}]})N_{1,b,2}\\
      & +(\zeta[u_{b,2},v_{a,2}]-\zeta[u_{b,1},v_{a,1}])N_{1,b,1}+f_{3}[u_{b,2},v_{a,2}]-f_{3}[u_{b,1},v_{a,1}].
    \end{split}
  \end{equation}
  At this stage, we can appeal to Proposition~\ref{prop: bound on the inverse} and Remark~\ref{remark:basic Htwo est low reg},
  with $h=\bh_{2}[u_{b,1}]$, $a=\zeta[u_{b,2},v_{a,2}]$ and $u=L_{1,b}$. Keeping (\ref{eq:NiblSchaudest}) in mind, this yields (\ref{eq:Lobest}).
  Next, note that (\ref{eq:Nbdotlinmodsys}) yields
  \begin{equation}\label{eq:DeltabhtwodiffLtwocase}
    \begin{split}
      \Delta_{\bh_{2}[u_{b,1}]}L_{2,b} = & \zeta[u_{b,2},v_{a,2}]L_{2,b}+(\Delta_{\bh_{2}[u_{b,1}]}-\Delta_{\bh_{2}[u_{b,2}]})N_{2,b,2}\\
      & +(\zeta[u_{b,2},v_{a,2}]-\zeta[u_{b,1},v_{a,1}])N_{1,b,2}\\
      & +f_{4}[u_{b,2},v_{a,2},N_{1,b,2}]-f_{4}[u_{b,1},v_{a,1},N_{1,b,1}].
    \end{split}
  \end{equation}
  Again, we can appeal to Proposition~\ref{prop: bound on the inverse} and Remark~\ref{remark:basic Htwo est low reg},
  with $h=\bh_{2}[u_{b,1}]$, $a=\zeta[u_{b,2},v_{a,2}]$ and $u=L_{2,b}$. Keeping (\ref{eq:NiblSchaudest}) in mind, this yields an estimate
  analogous to (\ref{eq:Ltwobest}), but with the one difference that there is an additional term $\|L_{1,b}\|_{H^{2}(\bM_{t})}$ in the paranthesis
  on the right hand side. However, this term can be eliminated by appealing to (\ref{eq:Lobest}).
\end{proof}

\textit{Energy estimates.} In order to control $U_{b}$, we introduce the energy
\begin{equation}\label{eq:meonedef}
  \me_{1}:=\tfrac{1}{2}\textstyle{\sum}_{|\bfI|\leq 1}\int_{\bM}|E_{\bfI}U_{b}|^{2}d\mu_{\bh_{\refer}}.
\end{equation}
In order to control $V_{b}$, we, in analogy with the derivation of the bounds in the high norm, introduce three energies.
To begin with, let
\begin{equation}\label{eq:metwodef}
  \me_{2}:=\tfrac{1}{2}\textstyle{\int}_{\bM}[|\d_{t}V_{b}|^{2}+\bh^{ij}_{1}[u_{b,1},N_{1,b,1}]E_{i}V_{b}\cdot E_{j}V_{b}
  +|V_{b}|^{2}]d\mu_{\bh_{\refer}}.
\end{equation}
Due to the second term on the right hand side of (\ref{eq:vblinmodsys}), we are not able to derive estimates for this energy
directly. Instead, we have to consider
\begin{equation}\label{eq:hmetwodef}
  \hme_{2}:=\me_{2}+\textstyle{\int}_{\bM}\g^{ij}[u_{b,1},v_{a,1},N_{1,b,1}]E_{j}U_{b}\cdot E_{i}V_{b}d\mu_{\bh_{\refer}}.
\end{equation}
For this quantity, it is possible to derive energy estimates. However, it might be negative, and does not bound the norms we wish to
control. On the other hand, by a combination of Cauchy-Schwarz and Young's inequality, 
\[
  \big|\textstyle{\int}_{\bM}\g^{ij}[u_{b,1},v_{a,1},N_{1,b,1}]E_{j}U_{b}\cdot E_{i}V_{b}d\mu_{\bh_{\refer}}\big|\leq \tfrac{1}{2}\me_{2}+\mfb_{D}\me_{1},
\]
where $\mfb_{D}$ is a constant only depending on $\lambda$, $n$ and bounds on the $\g^{ij}$'s. Introducing
\begin{equation}\label{eq:chmetwodef}
  \chme_{2}:=\hme_{2}+(\mfb_{D}+1)\me_{1}
\end{equation}
then yields an energy which we can estimate and which is such that
\begin{equation}\label{eq:chmetwodom}
  \tfrac{1}{2}\me_{2}+\me_{1}\leq \chme_{2}.
\end{equation}

\begin{lemma}\label{lemma:estimatesofdifferences}
  Fix a model system and an interval $\mJ\subseteq \mI$. Let $u_{a,i}$, $v_{a,i}$, $N_{1,a,i}$, $N_{2,a,i}$ and
  $u_{b,i}$, $v_{b,i}$, $N_{1,b,i}$, $N_{2,b,i}$, $i=1,2$, have the regularity described by (\ref{seq:uabietcreg}). Assume moreover, that
  $u_{b,i}$, $v_{b,i}$, $N_{1,b,i}$, $N_{2,b,i}$ is a solution to (\ref{seq:linmodsys}) corresponding to $u_{a,i}$, $v_{a,i}$, $N_{1,a,i}$, $N_{2,a,i}$.
  Then, using the notation introduced in (\ref{eq:mNadef})--(\ref{seq:mNabmflab}) and (\ref{eq:meonedef})--(\ref{eq:chmetwodef}), there is a
  continuous and monotonically increasing function $\mfC:\ro_{+}\rightarrow\ro_{+}$ such that
  \begin{equation}\label{eq:meoneenest}
    |\me_{1}(t_{1})-\me_{1}(t_{0})|\leq \big|\textstyle{\int}_{t_{0}}^{t_{1}}\mfC[\mfl_{a}(t)]\mN_{a}(t)\me_{1}^{1/2}(t)dt\big|
  \end{equation}
  for all $t_{0},t_{1}\in\mJ$, where $\mfC$ only depends on bounds on $f_{1}$ and its partial derivatives up to order $2$. In addition, there
  is a continuous and monotonically increasing function $\chmfC:\ro_{+}\rightarrow\ro_{+}$ such that,
  \begin{equation}\label{eq:chmeenest}
    \begin{split}
      |\chme_{2}(t_{1})-\chme_{2}(t_{0})|
      \leq & \big|\textstyle{\int}_{t_{0}}^{t_{1}}\chmfC[\mfl_{a,b}(t)](\|v_{b,2}\|_{C^{2}}+\|\d_{t}N_{1,b,1}\|_{C^{1}}+1)[\me_{2}(t)+\mN_{a,b}^{2}(t)]dt\big|
    \end{split}    
  \end{equation}
  for all $t_{0},t_{1}\in\mJ$,
  where $\chmfC:\ro_{+}\rightarrow\ro_{+}$ only depends on bounds on $\bh_{2}^{ij}$, $\bh_{1}^{ij}$,
  $\g^{ij}$, $\zeta$, $f_{1}$, $f_{2}$, $f_{3}$ and $f_{4}$ and their derivatives up to order two, one, one, one, two, one, one and one respectively;
  $(\bM,\bh_{\refer})$; the covering; $\{E_{i}\}$; $\lambda$; and $\lambda_{\zeta}$.
\end{lemma}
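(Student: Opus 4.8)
The plan is to prove (\ref{eq:meoneenest}) by a direct energy estimate and (\ref{eq:chmeenest}) by adapting the argument in the proof of Lemma~\ref{lemma:energyestimatev} to differences of solutions and to the low-norm energies (\ref{eq:meonedef})--(\ref{eq:chmetwodef}); as there, I would suppress the argument $(\cdot,t)$. For (\ref{eq:meoneenest}), subtracting the two copies of (\ref{eq:ublinmodsys}) gives $\d_{t}U_{b}=f_{1}[u_{a,2},v_{a,2},N_{1,a,2}]-f_{1}[u_{a,1},v_{a,1},N_{1,a,1}]$, and I would rewrite this difference, as in (\ref{eq:fthreezerorepr}), as an integral over $s\in[0,1]$ of a first derivative of $f_{1}$ along the segment joining the two sets of arguments, contracted against $(U_{a},V_{a},L_{1,a},E_{1}L_{1,a},\dots,E_{n}L_{1,a})$ --- the variables on which $f_{1}$ depends. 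Differentiating $\me_{1}$ in time and applying $E_{\bfI}$ with $|\bfI|\le 1$ to this representation, each resulting term is a product of a factor bounded in $C^{0}(\bM_{t})$ by a continuous increasing function of $\mfl_{a}(t)$ (involving at most two derivatives of $f_{1}$) and a factor among $E_{\bfJ}U_{a}$, $E_{\bfJ}V_{a}$ ($|\bfJ|\le 1$), $E_{\bfJ}L_{1,a}$ ($|\bfJ|\le 2$), each bounded in $L^{2}(\bM_{t})$ by $\mN_{a}(t)$; hence $\|E_{\bfI}\d_{t}U_{b}\|_{L^{2}(\bM_{t})}\le\mfC[\mfl_{a}]\mN_{a}$, and Cauchy--Schwarz followed by integration in $t$ gives (\ref{eq:meoneenest}).

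For (\ref{eq:chmeenest}), subtracting the two copies of (\ref{eq:vblinmodsys}) and adding and subtracting yields
\begin{equation*}
  \begin{split}
    \d_{t}^{2}V_{b} = & \ \bh_{1}^{ij}[u_{b,1},N_{1,b,1}]E_{i}E_{j}V_{b}
      +\big(\bh_{1}^{ij}[u_{b,2},N_{1,b,2}]-\bh_{1}^{ij}[u_{b,1},N_{1,b,1}]\big)E_{i}E_{j}v_{b,2}\\
    & +\g^{ij}[u_{b,1},v_{a,1},N_{1,b,1}]E_{i}E_{j}U_{b}
      +\big(\g^{ij}[u_{b,2},v_{a,2},N_{1,b,2}]-\g^{ij}[u_{b,1},v_{a,1},N_{1,b,1}]\big)E_{i}E_{j}u_{b,2}\\
    & +f_{2}[u_{a,2},v_{a,2},N_{1,a,2},N_{2,a,2}]-f_{2}[u_{a,1},v_{a,1},N_{1,a,1},N_{2,a,1}].
  \end{split}
\end{equation*}
Differentiating $\me_{2}$ in time and integrating by parts in space in the term $\int_{\bM_{t}}\bh_{1}^{ij}[u_{b,1},N_{1,b,1}]E_{i}V_{b}\cdot E_{j}\d_{t}V_{b}\,d\mu_{\bh_{\refer}}$ (using $\bh_{1}^{ij}=\bh_{1}^{ji}$ to identify $\bh_{1}^{ij}E_{j}E_{i}V_{b}$ with $\bh_{1}^{ij}E_{i}E_{j}V_{b}$), the first term on the right-hand side of the display cancels the leading part of $\int_{\bM_{t}}\d_{t}^{2}V_{b}\cdot\d_{t}V_{b}\,d\mu_{\bh_{\refer}}$, leaving $\d_{t}\me_{2}$ equal to the integral over $\bM_{t}$ of the four remaining source terms times $\d_{t}V_{b}$, plus the terms $\tfrac12\d_{t}\bh_{1}^{ij}\,E_{i}V_{b}E_{j}V_{b}$, $-E_{j}(\bh_{1}^{ij})E_{i}V_{b}\d_{t}V_{b}$, $-\bh_{1}^{ij}E_{i}V_{b}\d_{t}V_{b}\,\rodiv_{\bh_{\refer}}E_{j}$ and $V_{b}\d_{t}V_{b}$. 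By Lemma~\ref{lemma:schauder}, $\|N_{l,b,i}\|_{C^{2,1}}$ ($l,i\in\{1,2\}$) is bounded by a continuous increasing function of $\mfl_{a,b}$, and since $\d_{t}\bh_{1}^{ij}[u_{b,1},N_{1,b,1}]$ is built from a first derivative of $\bh_{1}^{ij}$ together with $\d_{t}u_{b,1}=f_{1}[\dots]$, $\d_{t}N_{1,b,1}$ and $E_{i}\d_{t}N_{1,b,1}$, this group of four terms contributes at most $\chmfC[\mfl_{a,b}](\|\d_{t}N_{1,b,1}\|_{C^{1}}+1)\me_{2}$. For the coefficient-difference source terms, I would rewrite the differences of $\bh_{1}^{ij}$ and $\g^{ij}$ as in (\ref{eq:fthreezerorepr}), bound them in $L^{2}(\bM_{t})$ by $\chmfC[\mfl_{a,b}]\mN_{a,b}$ using Moser estimates together with Lemma~\ref{lemma:lapseestimates} (to control $\|L_{1,b}\|_{H^{2}}$ and $\|L_{2,b}\|_{H^{2}}$ by $\mN_{a,b}$), and then use the factor $\|v_{b,2}\|_{C^{2}}$, resp. $\|u_{b,2}\|_{C^{2}}\le\mfl_{a,b}$, and $\|\d_{t}V_{b}\|_{L^{2}}\le\sqrt{2}\,\me_{2}^{1/2}$; the $f_{2}$-difference involves only the $a$-indexed variables and is bounded in $L^{2}(\bM_{t})$ by $\chmfC[\mfl_{a}]\mN_{a}$.

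The one term that resists this treatment --- and the main obstacle --- is $\int_{\bM_{t}}\g^{ij}[u_{b,1},v_{a,1},N_{1,b,1}]E_{i}E_{j}U_{b}\cdot\d_{t}V_{b}\,d\mu_{\bh_{\refer}}$, which carries two spatial derivatives of $U_{b}$ while $\me_{1}$ controls only one; this is exactly the obstruction that forced the introduction of $\hme_{2}$ and $\chme_{2}$, and I would treat it as in the proof of Lemma~\ref{lemma:energyestimatev}: integrate by parts in space to move one $E$ off $U_{b}$, and in the resulting term $-\int_{\bM_{t}}\g^{ij}E_{j}U_{b}\cdot\d_{t}E_{i}V_{b}\,d\mu_{\bh_{\refer}}$ integrate by parts in time. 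The time boundary term equals $-\big[\int_{\bM_{t}}\g^{ij}[u_{b,1},v_{a,1},N_{1,b,1}]E_{j}U_{b}\cdot E_{i}V_{b}\,d\mu_{\bh_{\refer}}\big]_{t_{0}}^{t_{1}}$, i.e. $(\hme_{2}(t_{0})-\me_{2}(t_{0}))-(\hme_{2}(t_{1})-\me_{2}(t_{1}))$, and is absorbed by passing from $\me_{2}$ to $\hme_{2}$; the remaining bulk terms involve $\d_{t}\g^{ij}$ (estimated using $\|\d_{t}N_{1,b,1}\|_{C^{1}}$ as above) and $E_{j}\d_{t}U_{b}=E_{j}(f_{1}[\dots_{2}]-f_{1}[\dots_{1}])$ (bounded in $L^{2}(\bM_{t})$ by $\chmfC[\mfl_{a}]\mN_{a}$), contracted against $E_{i}V_{b}$, hence are $\le\chmfC[\mfl_{a,b}](\|\d_{t}N_{1,b,1}\|_{C^{1}}+1)\mN_{a,b}\me_{2}^{1/2}$. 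Collecting all of the above bounds $|\hme_{2}(t_{1})-\hme_{2}(t_{0})|$ by the time-integral of $\chmfC[\mfl_{a,b}](\|v_{b,2}\|_{C^{2}}+\|\d_{t}N_{1,b,1}\|_{C^{1}}+1)(\me_{2}+\mN_{a,b}\me_{2}^{1/2})$; applying Young's inequality to the cross term, using $\me_{1}=\tfrac12\|U_{b}\|_{H^{1}(\bM_{t})}^{2}\le\tfrac12\mN_{a,b}^{2}$ together with (\ref{eq:meoneenest}) to bound $|\me_{1}(t_{1})-\me_{1}(t_{0})|$ by the time-integral of $\mfC[\mfl_{a}]\mN_{a,b}^{2}$, and recalling $\chme_{2}=\hme_{2}+(\mfb_{D}+1)\me_{1}$, the estimate (\ref{eq:chmeenest}) follows. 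The genuine difficulty throughout is thus confined to the loss-of-derivatives term $\g^{ij}E_{i}E_{j}U_{b}$, dealt with by the space-and-time integration by parts, together with the bookkeeping of exactly which low-order norms of $v_{b,2}$, $u_{b,2}$ and $\d_{t}N_{1,b,1}$ get consumed.
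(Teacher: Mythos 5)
Your proposal is correct and follows essentially the same route as the paper's proof: the integral-representation trick for differences of $f_1$ (and of the coefficients), the spatial integration by parts moving one $E$ off $U_{b}$, the additional integration by parts in time with the boundary term absorbed into $\hme_{2}$, and the use of Lemma~\ref{lemma:lapseestimates} together with the $C^{2,1}$ bounds from Lemma~\ref{lemma:schauder} to control the $L_{1,b}$- and $N_{l,b,i}$-dependent pieces. The bookkeeping at the end (Young's inequality on the cross term, $\me_1^{1/2}\le\tfrac{1}{\sqrt2}\mN_{a,b}$, and $\chme_2=\hme_2+(\mfb_D+1)\me_1$) matches how the paper assembles (\ref{eq:chmeenest}) from the $\hme_2$ estimate and (\ref{eq:meoneenest}).
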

\begin{remark}
  The estimate (\ref{eq:meoneenest}) gives a bound on $\mN_{a,b}$, which justifies the appearance of this quantity in (\ref{eq:chmeenest}).
\end{remark}
\begin{remark}
   Combining (\ref{eq:chmetwodom}) and (\ref{eq:chmeenest}) yields
  \begin{equation}\label{eq:chmetwoimpest}
    |\chme_{2}(t_{1})-\chme_{2}(t_{0})|\leq \big|\textstyle{\int}_{t_{0}}^{t_{1}}C(\chme_{2}+\mN_{a}^{2})dt\big|,
  \end{equation}
  where $C$ depends on $\mfl_{a,b}$, $\|\d_{t}N_{1,b,1}\|_{C^{1}}$ and $\|v_{b,2}\|_{C^{2}}$.
\end{remark}
\begin{proof}
  Time differentiating $\me_{1}$ introduced in (\ref{eq:meonedef}) and appealing to the equations for $u_{b,i}$ yields 
  \[
    |\d_{t}\me_{1}|\leq \mfC[\mfl_{a}]\mN_{a}\me_{1}^{1/2}
  \]
  for $t\in\mJ$, where $\mfC:\ro_{+}\rightarrow\ro_{+}$ is a continuous and monotonically increasing function depending only on bounds on
  $f_{1}$ and its partial derivatives up to order $2$. The estimate (\ref{eq:meoneenest}) follows.

  Next, we time differentiate the energy introduced in (\ref{eq:metwodef}). The term that results when the time derivative hits the last
  term in the integrand can be estimated by the energy. Consider the expression that results
  when the time derivative hits $\bh_{1}^{ij}[u_{b,1},N_{1,b,1}]$. Appealing to (\ref{eq:ublinmodsys}) leads to the conclusion that
  \[
  \|\d_{t}(\bh_{1}^{ij}[u_{b,1},N_{1,b,1}])\|_{C^{0}}\leq C(\|\d_{t}N_{1,b,1}\|_{C^{1}}+1),  
  \]
  where $C$ is a constant depending only on $(\bM,\bh_{\refer})$, $\{E_{i}\}$, bounds on $f_{1}$ and bounds on the derivative of
  $\bh_{1}^{ij}$. Next, it is natural to integrate by parts in analogy with (\ref{eq:vbenspderfs}). The analogue of the last term on the
  right hand side can be estimated by $C\me_{2}$, where $C$ only depends on $\lambda$, $(\bM,\bh_{\refer})$ and $\{E_{i}\}$. In order to estimate
  the first term on the right hand side, we need to estimate
  \[
  \|E_{j}(\bh_{1}^{ij}[u_{b,1},N_{1,b,1}])\|_{C^{0}}\leq C(\|u_{b,1}\|_{C^{1}}+\|N_{1,b,1}\|_{C^{2}}+1),
  \]
  where $C$ only depends on bounds on the first derivative of $\bh_{1}^{ij}$, $(\bM,\bh_{\refer})$ and $\{E_{i}\}$. Combining this
  estimate with (\ref{eq:NiblSchaudest}) yields the conclusion that there is a continuous and monotonically increasing function
  $\mfC:\ro_{+}\rightarrow\ro_{+}$ such that 
  \begin{equation*}
    \begin{split}
       \big|\d_{t}\me_{2}-\textstyle{\int}_{\bM}(\d_{t}^{2}V_{b}-\bh^{ij}_{1}[u_{b,1},N_{1,b,1}]E_{j}E_{i}V_{b})\cdot\d_{t}V_{b}d\mu_{\bh_{\refer}}\big|
      \leq & \mfC[\mfl_{a,b}](\|\d_{t}N_{1,b,1}\|_{C^{1}}+1)\me_{2}.
    \end{split}
  \end{equation*}
  Moreover, $\mfC$ only depends on bounds on $\bh_{2}^{ij}$, $\bh_{1}^{ij}$, $\zeta$, $f_{1}$, $f_{3}$ and $f_{4}$ and their derivatives up
  to order two, one, one, one, one and one respectively; $(\bM,\bh_{\refer})$; the covering; $\{E_{i}\}$; $\lambda$; and
  $\lambda_{\zeta}$. On the other hand,
  \begin{equation}\label{eq:diffVbrem}
    \begin{split}
      & \d_{t}^{2}V_{b}-\bh^{ij}_{1}[u_{b,1},N_{1,b,1}]E_{j}E_{i}V_{b}\\
      = & \g^{ij}[u_{b,2},v_{a,2},N_{1,b,2}]E_{i}E_{j}u_{b,2}-\g^{ij}[u_{b,1},v_{a,1},N_{1,b,1}]E_{i}E_{j}u_{b,1}\\
      & +f_{2}[u_{a,2},v_{a,2},N_{1,a,2},N_{2,a,2}]-f_{2}[u_{a,1},v_{a,1},N_{1,a,1},N_{2,a,1}]\\
      & +\big(\bh^{ij}_{1}[u_{b,2},N_{1,b,2}]-\bh^{ij}_{1}[u_{b,1},N_{1,b,1}]\big)E_{i}E_{j}v_{b,2}.
    \end{split}
  \end{equation}
  The second line on the right hand side can be estimated in $L^{2}$ by $C\mN_{a}$, where $C$ is a constant depending only on bounds on the first
  derivative of $f_{2}$, $(\bM,\bh_{\refer})$ and $\{E_{i}\}$. In order to estimate the third line on the right hand side, note that
  \[
  \|\bh^{ij}_{1}[u_{b,2},N_{1,b,2}]-\bh^{ij}_{1}[u_{b,1},N_{1,b,1}]\|_{2}\leq C\left(\|U_{b}\|_{L^{2}(\bM_{t})}+\|L_{1,b}\|_{H^{1}(\bM_{t})}\right),
  \]
  where $C$ is a constant depending only on bounds on the first derivative of $\bh_{1}^{ij}$. Combining this estimate with
  (\ref{eq:Lobest}) yields a continuous and monotonically increasing function $\mfC:\ro_{+}\rightarrow\ro_{+}$ such that the third line on
  the right hand side of (\ref{eq:diffVbrem}) can be estimated by
  \[
  \mfC[\mfl_{a,b}(t)]\|v_{b,2}\|_{C^{2}}\mN_{a,b}(t).
  \]
  Here $\mfC$ only depends on bounds on $\bh_{2}^{ij}$, $\bh_{1}^{ij}$, $\zeta$, $f_{3}$ and $f_{4}$ and their derivatives up to order two, one, one,
  one and one respectively; $(\bM,\bh_{\refer})$; the covering; $\{E_{i}\}$; $\lambda$; and $\lambda_{\zeta}$. Finally, the first
  term on the right hand side of (\ref{eq:diffVbrem}) can be written
  \begin{equation}\label{eq:Vbenprobterm}
    \begin{split}
      & \g^{ij}[u_{b,2},v_{a,2},N_{1,b,2}]E_{i}E_{j}u_{b,2}-\g^{ij}[u_{b,1},v_{a,1},N_{1,b,1}]E_{i}E_{j}u_{b,1}\\
      = & \left(\g^{ij}[u_{b,2},v_{a,2},N_{1,b,2}]-\g^{ij}[u_{b,1},v_{a,1},N_{1,b,1}]\right)E_{i}E_{j}u_{b,2}
      +\g^{ij}[u_{b,1},v_{a,1},N_{1,b,1}]E_{i}E_{j}U_{b}.
    \end{split}
  \end{equation}
  The first term on the right hand side can be estimated by $\mfC[\mfl_{a,b}]\mN_{a,b}$ in $L^{2}$ by an argument similar to the one required to estimate
  the third line on the right hand side of (\ref{eq:diffVbrem}). Here $\mfC:\ro_{+}\rightarrow\ro_{+}$ is a continuous and monotonically increasing
  function depending only on bounds on $\bh_{2}^{ij}$, $\g^{ij}$, $\zeta$, $f_{3}$ and $f_{4}$ and their derivatives up to order two, one, one, one and
  one respectively; $(\bM,\bh_{\refer})$; the covering; $\{E_{i}\}$; $\lambda$; and $\lambda_{\zeta}$. However, the second term on the right hand
  side of (\ref{eq:Vbenprobterm}) cannot be estimated directly in $L^{2}$. Instead we, in analogy with the proof of Lemma~\ref{lemma:energyestimatev},
  have to consider
  \begin{equation}\label{eq:prebadpartVben}
    \begin{split}
      & \textstyle{\int}_{\bM}\g^{ij}[u_{b,1},v_{a,1},N_{1,b,1}]E_{i}E_{j}U_{b}\cdot\d_{t}V_{b}d\mu_{\bh_{\refer}}\\
      = & -\textstyle{\int}_{\bM}E_{i}\left(\g^{ij}[u_{b,1},v_{a,1},N_{1,b,1}]\right)E_{j}U_{b}\cdot\d_{t}V_{b}d\mu_{\bh_{\refer}}\\
      & -\textstyle{\int}_{\bM}\g^{ij}[u_{b,1},v_{a,1},N_{1,b,1}]E_{j}U_{b}\cdot\d_{t}E_{i}V_{b}d\mu_{\bh_{\refer}}\\
      & -\textstyle{\int}_{\bM}\g^{ij}[u_{b,1},v_{a,1},N_{1,b,1}]E_{j}U_{b}\cdot\d_{t}V_{b}\cdot\rodiv_{\bh_{\refer}}E_{i}d\mu_{\bh_{\refer}}
    \end{split}
  \end{equation}
  in greater detail. Due to (\ref{eq:NiblSchaudest}), the first and the last term can be estimated by $\mfC[\mfl_{a,b}]\mN_{a,b}\me_{2}^{1/2}$, where
  $\mfC:\ro_{+}\rightarrow\ro_{+}$ is a continuous and monotonically increasing function depending only on bounds on $\bh_{2}^{ij}$, $\g^{ij}$, $\zeta$,
  $f_{3}$ and $f_{4}$ and their derivatives up to order two, one, one, one and one respectively; $(\bM,\bh_{\refer})$; the covering; 
  $\{E_{i}\}$; $\lambda$; and $\lambda_{\zeta}$. In order to handle the second term on the right hand side of (\ref{eq:prebadpartVben}), we need to
  integrate in time as well. Consider, to this end
  \begin{equation}\label{eq:Vbentimeint}
    \begin{split}
      & -\textstyle{\int}_{t_{0}}^{t_{1}}\int_{\bM}\g^{ij}[u_{b,1},v_{a,1},N_{1,b,1}]E_{j}U_{b}\cdot\d_{t}E_{i}V_{b}d\mu_{\bh_{\refer}}dt\\
      = & -\left[\textstyle{\int}_{\bM}\g^{ij}[u_{b,1},v_{a,1},N_{1,b,1}]E_{j}U_{b}\cdot E_{i}V_{b}d\mu_{\bh_{\refer}}\right]_{t_{0}}^{t_{1}}\\
      & +\textstyle{\int}_{t_{0}}^{t_{1}}\int_{\bM}\d_{t}(\g^{ij}[u_{b,1},v_{a,1},N_{1,b,1}])E_{j}U_{b}\cdot E_{i}V_{b}d\mu_{\bh_{\refer}}dt\\
      & +\textstyle{\int}_{t_{0}}^{t_{1}}\int_{\bM}\g^{ij}[u_{b,1},v_{a,1},N_{1,b,1}]E_{j}\d_{t}U_{b}\cdot E_{i}V_{b}d\mu_{\bh_{\refer}}dt.
    \end{split}
  \end{equation}
  The last term on the right hand side can be estimated by
  \[
  \big|\textstyle{\int}_{t_{0}}^{t_{1}}\int_{\bM}\mfC[\mfl_{a}(t)]\mN_{a}(t)\me_{2}^{1/2}(t)dt\big|,
  \]
  where $\mfC:\ro_{+}\rightarrow\ro_{+}$ is a continuous and monotonically increasing function depending only on $\lambda$, bounds on $\g^{ij}$
  and bounds on $f_{1}$ and its partial derivatives up to order $2$; note that, to estimate $E_{j}\d_{t}U_{b}$ in $L^{2}$, it is sufficient to proceed
  as in the proof of (\ref{eq:meoneenest}). The second term on the right hand side of (\ref{eq:Vbentimeint}) can be estimated by
  \[
  \big|\textstyle{\int}_{t_{0}}^{t_{1}}\int_{\bM}\mfC[\mfl_{a}(t)](\|\d_{t}N_{1,b,1}\|_{C^{1}}+1)\mN_{a,b}(t)\me_{2}^{1/2}(t)dt\big|,
  \]
  where $\mfC:\ro_{+}\rightarrow\ro_{+}$ is a continuous and monotonically increasing function depending only on $\lambda$, bounds on $f_{1}$
  and bounds on $\g^{ij}$ and its first derivative. Here we used the fact that $u_{b,1}$ satisfies (\ref{eq:ublinmodsys}). Including the first
  term on the right hand side of (\ref{eq:Vbentimeint}) in the energy yields
  \[
  |\hme_{2}(t_{1})-\hme_{2}(t_{0})|\leq \big|\textstyle{\int}_{t_{0}}^{t_{1}}\mfC[\mfl_{a,b}](\|v_{b,2}\|_{C^{2}}+\|\d_{t}N_{1,b,1}\|_{C^{1}}+1)
  (\me_{2}+\mN_{a,b}\me_{2}^{1/2})dt\big|,
  \]
  where $\mfC:\ro_{+}\rightarrow\ro_{+}$ is a continuous and monotonically increasing function depending only on bounds on $\bh_{2}^{ij}$, $\bh_{1}^{ij}$,
  $\g^{ij}$, $\zeta$, $f_{1}$, $f_{2}$, $f_{3}$ and $f_{4}$ and their derivatives up to order two, one, one, one, two, one, one and one respectively;
  $(\bM,\bh_{\refer})$; the covering; $\{E_{i}\}$; $\lambda$; and $\lambda_{\zeta}$. Combining this observation with (\ref{eq:meoneenest}) yields
  (\ref{eq:chmeenest}). 
\end{proof}

\subsection{Uniqueness}

Due to the estimates derived in the previous subsection, we are in a position to prove uniqueness of solutions.

\begin{lemma}\label{lemma:uniquemodelsol}
  Fix a model system and an interval $\mJ\subseteq \mI$ with non-empty interior. Assume that there are two solutions $u_{i}$, $v_{i}$ and $N_{l,i}$,
  $i=1,2$, $l=1,2$, to (\ref{seq:themodel}) on $M_{\mJ}:=\bM\times\mJ$ with the following regularity:
  \begin{subequations}\label{seq:uabietcregunique}
    \begin{align}
      u_{i}\in & C^{1}(\mJ,C^{1}(\bM,\rn{n_{1}}))\cap C^{0}(\mJ,C^{2}(\bM,\rn{n_{1}})),\\
      v_{i}\in & C^{2}(\mJ,C^{0}(\bM,\rn{n_{2}}))\cap C^{1}(\mJ,C^{1}(\bM,\rn{n_{2}}))\cap C^{0}(\mJ,C^{2}(\bM,\rn{n_{2}})),\\
      N_{1,i}\in & C^{1}(\mJ,C^{1}(\bM))\cap C^{0}(\mJ,C^{2,1}(\bM)),\\
      N_{2,i}\in & C^{0}(\mJ,C^{2,1}(\bM))
    \end{align}
  \end{subequations}
  for $i=1,2$. Assume, moreover, that $t_{0}\in \mJ$ and that $u_{1}(\cdot,t_{0})=u_{2}(\cdot,t_{0})$,
  $v_{1}(\cdot,t_{0})=v_{2}(\cdot,t_{0})$ and $(\d_{t}v_{1})(\cdot,t_{0})=(\d_{t}v_{2})(\cdot,t_{0})$. Then the two solutions coincide on $M_{\mJ}$. 
\end{lemma}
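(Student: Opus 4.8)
The plan is to reduce the uniqueness statement to the difference estimates of Subsection~\ref{ssection:conlownorm}, exploiting the elementary observation that a solution of the full model system is a solution of the simplified system (\ref{seq:linmodsys}) with coinciding ``$a$'' and ``$b$'' data. Concretely, if $(u_i,v_i,N_{1,i},N_{2,i})$ solves (\ref{seq:themodel}) on $M_{\mJ}$, set $u_{a,i}:=u_{b,i}:=u_i$, $v_{a,i}:=v_{b,i}:=v_i$ and $N_{l,a,i}:=N_{l,b,i}:=N_{l,i}$. Then each equation of (\ref{seq:linmodsys}) specialises, when $a=b$, to the corresponding equation of (\ref{seq:themodel}), so $(u_{b,i},v_{b,i},N_{1,b,i},N_{2,b,i})$ is a solution of (\ref{seq:linmodsys}) corresponding to $(u_{a,i},v_{a,i},N_{1,a,i},N_{2,a,i})$; moreover (\ref{seq:uabietcregunique}) implies (\ref{seq:uabietcreg}). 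Hence Lemma~\ref{lemma:estimatesofdifferences} and Lemma~\ref{lemma:lapseestimates} apply, with the simplification that the difference quantities (\ref{eq:UVLdef}) satisfy $U_a=U_b=:U$, $V_a=V_b=:V$, $L_{l,a}=L_{l,b}=:L_l$, so in particular $\mN_a=\mN_{a,b}$.

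Next I would fix $t_1\in\mJ$ and set $\mathcal{K}:=I_{t_0,t_1}\subseteq\mJ$. Since $\mathcal{K}$ is compact and, by (\ref{seq:uabietcregunique}), the solution components depend continuously on $t$ with values in the relevant $C^k$-spaces, the number
\[
\Lambda:=\sup_{t\in\mathcal{K}}\big(\mfl_{a,b}(t)+\|v_{b,2}(\cdot,t)\|_{C^2}+\|\d_tN_{1,b,1}(\cdot,t)\|_{C^1}\big)
\]
is finite. Using Lemma~\ref{lemma:lapseestimates} together with $U_a=U_b$ and $V_a=V_b$, the lapse differences obey
$\|L_1(\cdot,t)\|_{H^2}+\|L_2(\cdot,t)\|_{H^2}\leq C(\Lambda)\big(\|U(\cdot,t)\|_{H^1}+\|V(\cdot,t)\|_{H^1}+\|\d_tV(\cdot,t)\|_{L^2}\big)$ on $\mathcal{K}$; recalling (\ref{eq:mNadef}) and (\ref{eq:mNabdef}) this gives $\mN_a(t)\leq C(\Lambda)\big(\|U(\cdot,t)\|_{H^1}+\|V(\cdot,t)\|_{H^1}+\|\d_tV(\cdot,t)\|_{L^2}\big)$, and by the coercivity (\ref{eq:lambdabd}) of $\bh_1$ together with (\ref{eq:chmetwodom}) the right-hand side is bounded by $C'(\Lambda)\,\chme_2^{1/2}(t)$, so that $\mN_a^2(t)\leq C'(\Lambda)^2\chme_2(t)$ on $\mathcal{K}$. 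Inserting this into (\ref{eq:chmetwoimpest}) yields
\[
|\chme_2(t_1)-\chme_2(t_0)|\leq\Big|\int_{t_0}^{t_1}C''(\Lambda)\,\chme_2(t)\,dt\Big|.
\]

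Finally, the hypotheses $u_1(\cdot,t_0)=u_2(\cdot,t_0)$, $v_1(\cdot,t_0)=v_2(\cdot,t_0)$ and $(\d_tv_1)(\cdot,t_0)=(\d_tv_2)(\cdot,t_0)$ give $U(\cdot,t_0)=V(\cdot,t_0)=(\d_tV)(\cdot,t_0)=0$, so $\me_1(t_0)=\me_2(t_0)=0$ and hence $\chme_2(t_0)=0$ by (\ref{eq:chmetwodef}). Since $\chme_2$ is continuous and nonnegative on $\mathcal{K}$ (indeed $\chme_2\geq\me_1\geq 0$ by (\ref{eq:chmetwodom})), Gr\"{o}nwall's lemma forces $\chme_2\equiv 0$ on $\mathcal{K}$; by (\ref{eq:chmetwodom}) again $\me_1$ and $\me_2$ vanish on $\mathcal{K}$, i.e. $U\equiv 0$ and $V\equiv 0$ there, and then Lemma~\ref{lemma:lapseestimates} gives $L_1\equiv L_2\equiv 0$. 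Thus the two solutions coincide on $\bM\times\mathcal{K}$, and since $t_1\in\mJ$ was arbitrary, on all of $M_{\mJ}$. The only genuinely delicate points are the bookkeeping that controls $\mN_a$ by a multiple of $\chme_2^{1/2}$ and the verification that the ``background'' sup-norms entering the constants of Lemma~\ref{lemma:estimatesofdifferences} are finite on the compact subinterval; the rest is a standard Gr\"{o}nwall argument.
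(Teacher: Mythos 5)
Your proof is correct and follows essentially the same route as the paper's: set $u_{a,i}=u_{b,i}=u_i$ (and likewise for the other unknowns) so that the difference estimates of Subsection~\ref{ssection:conlownorm} apply with $U_a=U_b$, $V_a=V_b$, $L_{l,a}=L_{l,b}$; use Lemma~\ref{lemma:lapseestimates} to absorb the lapse contributions into $\chme_2^{1/2}$; and close with Gr\"onwall from the vanishing initial data. The only cosmetic difference is that you explicitly pass to a compact subinterval to make the constants uniform, whereas the paper works with a continuous $t$-dependent coefficient; both are standard and valid.
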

\begin{proof}
  Using the notation introduced at the beginning of Subsection~\ref{ssection:conlownorm}, $u_{a,i}=u_{b,i}=u_{i}$, and similarly for $v$, $N_{1}$
  and $N_{2}$. We therefore use the notation $U_{a}=U_{b}=u_{2}-u_{1}=:U$, and similarly for $V$, $L_{1}$ and $L_{2}$; cf. (\ref{eq:UVLdef}). Moreover,
  we let $\mN(t)$ denote the right hand side of (\ref{eq:mNadef}) with $U_{a}$ replaced by $U$ etc. 
  Then $\mN_{a,b}\leq 2\mN$, using the notation introduced in (\ref{eq:mNabdef}). We also let $\mfl$ denote the right hand side of
  (\ref{eq:mfladef}). Again $\mfl_{a,b}\leq 2\mfl$, where $\mfl_{a,b}$ is introduced in (\ref{eq:mflabdef}). Next, note that
  combining the definition of $\me_{1}$ and $\chme_{2}$ with (\ref{eq:chmetwodom}) and the definition of $\mN$ yields
  \begin{equation}\label{eq:mNest}
    [\mN(t)]^{2}\leq C_{1}\big[\chme_{2}(t)+\|L_{1}(\cdot,t)\|_{H^{2}(\bM)}^{2}+\|L_{2}(\cdot,t)\|_{H^{2}(\bM)}^{2}\big]\leq \mfC[\mfl(t)]\chme_{2}(t)
  \end{equation}
  where $C_{1}$ only depends on the $\lambda$ and $n$; and $\mfC$ has the same properties and dependence as
  the function $\mfC_{L}$ appearing in the statement of Lemma~\ref{lemma:lapseestimates}. Combining (\ref{eq:mNest}) with (\ref{eq:chmetwodom}) and
  (\ref{eq:chmeenest}) yields
  \begin{equation}\label{eq:chmetwounique}
    |\chme_{2}(t_{1})-\chme_{2}(t_{0})|\leq \big|\textstyle{\int}_{t_{0}}^{t_{1}}C(t)\chme_{2}(t)dt\big|,
  \end{equation}
  where $C$ depends on the model system and the solutions. Combining (\ref{eq:chmetwounique}) with the fact that $\chme_{2}(t_{0})=0$ yields the
  conclusion that $\chme_{2}(t)=0$ for all $t\in\mJ$, so that $\mN(t)=0$ for all $t\in\mJ$ due to (\ref{eq:mNest}). This means that $u_{1}=u_{2}$,
  $v_{1}=v_{2}$ and $N_{l,1}=N_{l,2}$ for $l=1,2$. The lemma follows. 
\end{proof}

\subsection{Local existence}
Next, we wish to prove local existence. Fix, to this end, a model system and introduce the following inner product. Let $\mJ\subset \mI$
be an interval, $M_\mJ:=\bM\times\mJ$, $u\in C^{0}(M_\mJ,\rn{n_{1}})$, $v\in C^{0}(M_\mJ,\rn{n_{2}})$ and $N_{1}\in C^{0}(M_\mJ)$.
Assume, moreover, that $N_{1}$ is differentiable with respect to the $\bM$-variables and that $E_{i}N_{1}\in C^{0}(M_\mJ)$, $i=1,\dots,n$. Next, let
$\xi_{i}\in H^{m+1}(\bM,\rn{n_{1}})$, $\eta_{i}\in H^{m+1}(\bM,\rn{n_{2}})$ and $\chi_{i}\in H^{m}(\bM,\rn{n_{2}})$, $i=1,2$, and
\begin{equation}\label{eq:Htinnerprod}
  \begin{split}
    & \ldr{(\xi_{1},\eta_{1},\chi_{1}),(\xi_{2},\eta_{2},\chi_{2})}_{H,t}\\
    = & \tfrac{1}{2}\textstyle{\sum}_{|\bfI|\leq m}\textstyle{\int}_{\bM_{t}}
    (E_{\bfI}\chi_{1}\cdot E_{\bfI}\chi_{2}+\bh^{ij}_{1}[u,N_{1}](\cdot,t)E_{i}E_{\bfI}\eta_{1}\cdot E_{j}E_{\bfI}\eta_{2}
    +E_{\bfI}\eta_{1}\cdot E_{\bfI}\eta_{2})d\mu_{\bh_{\refer}}\\
    & +\tfrac{1}{2}\b_{\g}\textstyle{\sum}_{|\bfI|\leq m+1}\textstyle{\int}_{\bM_{t}}E_{\bfI}\xi_{1}\cdot E_{\bfI}\xi_{2}d\mu_{\bh_{\refer}}\\
    & +\tfrac{1}{2}\textstyle{\sum}_{|\bfI|\leq m}\textstyle{\int}_{\bM_{t}}\g^{ij}[u,v,N_{1}](\cdot,t)E_{j}E_{\bfI}\xi_{1}\cdot E_{i}E_{\bfI}\eta_{2}d\mu_{\bh_{\refer}}\\
    & +\tfrac{1}{2}\textstyle{\sum}_{|\bfI|\leq m}\textstyle{\int}_{\bM_{t}}\g^{ij}[u,v,N_{1}](\cdot,t)E_{j}E_{\bfI}\xi_{2}\cdot E_{i}E_{\bfI}\eta_{1}d\mu_{\bh_{\refer}}
  \end{split}
\end{equation}
for $t\in \mJ$, where $\b_{\g}$ is defined by (\ref{eq:bgdef}); when we speak of this inner product in what follows, the choice of the functions $u$, $v$
and $N_{1}$ should be clear from the context. For each $t\in \mJ$, (\ref{eq:Htinnerprod}) defines an inner product on
$H:=H^{m+1}(\bM,\rn{n_{1}})\times H^{m+1}(\bM,\rn{n_{2}})\times H^{m}(\bM,\rn{n_{2}})$. To justify this statement, note that $\ldr{\cdot,\cdot}_{H,t}$ is
symmetric and linear (over the real numbers) in both arguments. Moreover, due to
(\ref{eq:tEdominance}), it is clear that there is a constant depending only on the $\lambda$ appearing in (\ref{eq:lambdabd}) and bounds on the
$\g^{ij}$, say $C_{\lambda,\g}>1$, such that if $\xi=(u_{1},v_{1,0},v_{1,1})$, then 
\begin{equation}\label{eq:Htequiv}
  C_{\lambda,\g}^{-1}\ldr{\xi,\xi}_{H,t}\leq \|u_{1}\|_{H^{m+1}(\bM)}^{2}+\|v_{1,0}\|_{H^{m+1}(\bM)}^{2}+\|v_{1,1}\|_{H^{m}(\bM)}^{2}
  \leq C_{\lambda,\g}\ldr{\xi,\xi}_{H,t}
\end{equation}
for all $t\in\mJ$. This means that $\ldr{\cdot,\cdot}_{H,t}$ has the required positive definiteness properties. In fact, $H$ is a Hilbert space
with this inner product, and this inner product is equivalent to the standard one.

\begin{prop}\label{prop:localexistence}
  Fix a model system, $m>n/2+2$, $\Ui_{0}\in H^{m+1}(\bM,\rn{n_{1}})$, $\Vi_{0,0}\in H^{m+1}(\bM,\rn{n_{2}})$ and
  $\Vi_{0,1}\in H^{m}(\bM,\rn{n_{2}})$. Given a compact interval $\mJ:=[T_{1},T_{2}]\subset \mI$, there is a covering of $\bM$ (depending only on
  $(\bM,\bh_{\refer})$) and a $T>0$ (depending only on the model system, $\mJ$, $m$, $(\bM,\bh_{\refer})$, $\{E_{i}\}$, the covering and an upper bound on
  the sum of the $H^{m+1}$-norms of $\Ui_{0}$ and $\Vi_{0,0}$ and the $H^{m}$-norm of $\Vi_{0,1}$) such that if $t_{0}\in \mJ$ and
  $\mI_{T}:=[t_{0}-T,t_{0}+T]$, then there is a unique solution $u,v,N_{1},N_{2}$ to (\ref{seq:themodel}) on $M_{T}:=\bM\times\mI_{T}$ such that
  \begin{subequations}\label{seq:uabietcreglocalexistence}
    \begin{align}
      u\in & C^{1}(\mI_{T},C^{1}(\bM,\rn{n_{1}}))\cap C^{0}(\mI_{T},C^{2}(\bM,\rn{n_{1}})),\\
      v\in & C^{2}(\mI_{T},C^{0}(\bM,\rn{n_{2}}))\cap C^{1}(\mI_{T},C^{1}(\bM,\rn{n_{2}}))\cap C^{0}(\mI_{T},C^{2}(\bM,\rn{n_{2}})),\\
      N_{1}\in & C^{1}(\mI_{T},C^{1}(\bM))\cap C^{0}(\mI_{T},C^{2,1}(\bM)),\\
      N_{2}\in & C^{0}(\mI_{T},C^{2,1}(\bM))
    \end{align}
  \end{subequations}
  and (\ref{eq:uvdtvindata}) hold. Furthermore, if $\kappa_{0}$ is the smallest integer strictly larger than $n/2$,
  \begin{subequations}\label{seq:regularityoneuvNi}
    \begin{align}
      u\in & \textstyle{\bigcap}_{j=1}^{m+1-\kappa_{0}}C^{j}\left(\mI_{T},H^{m+2-j}(\bM,\rn{n_{1}})\right),\\
      v\in & \textstyle{\bigcap}_{j=0}^{m-\kappa_{0}}C^{j}\left(\mI_{T},H^{m+1-j}(\bM,\rn{n_{2}})\right),\\
      N_{1}\in & \textstyle{\bigcap}_{j=1}^{m-\kappa_{0}}C^{j}\left(\mI_{T},H^{m+3-j}(\bM)\right),\\
      N_{2}\in & \textstyle{\bigcap}_{j=0}^{m-1-\kappa_{0}}C^{j}\left(\mI_{T},H^{m+2-j}(\bM)\right).
    \end{align}
  \end{subequations}  
  Given these functions, define
  \[
  \chmfE_{2,m+1}[v,u](t) = \ldr{(u(\cdot,t),v(\cdot,t),(\d_{t}v)(\cdot,t)),(u(\cdot,t),v(\cdot,t),(\d_{t}v)(\cdot,t))}_{H,t},
  \]
  where the inner product is defined by (\ref{eq:Htinnerprod}), and the functions $u$, $v$ and $N_{1}$ appearing in (\ref{eq:Htinnerprod}) are the
  ones determined by the solution. Then there is a continuous and monotonically increasing function $\mfC:\ro_{+}\rightarrow\ro_{+}$ such that 
  \begin{equation}\label{eq:chmfEmainbound}
    \chmfE_{2,m+1}[v,u](t_{1})+1\leq \left[\chmfE_{2,m+1}[v,u](t_{0})+1\right]\exp\big(\big|\textstyle{\int}_{t_{0}}^{t_{1}}\mfC[\ell(t)]dt\big|\big)
  \end{equation}
  for $t_{1}\in\mI_{T}$. Here $\mfC$ only depends on $m$, $(\bM,\bh_{\refer})$, $\{E_{i}\}$, the covering and the model system. Moreover,
  \begin{equation}\label{eq:elldef}
    \ell(t):=\|u(\cdot,t)\|_{C^{2}}+\|v(\cdot,t)\|_{C^{2}}+\|\d_{t}v(\cdot,t)\|_{C^{1}}.
  \end{equation}
  Finally, the bound on $\chmfE_{2,m+1}[v,u](t)$ for $t\in\mI_{T}$ can be assumed to only depend on the model system, $m$, $(\bM,\bh_{\refer})$,
  $\{E_{i}\}$, the covering and an upper bound on the sum of the $H^{m+1}$-norms of $\Ui_{0}$ and $\Vi_{0,0}$ and the $H^{m}$-norm of $\Vi_{0,1}$.
\end{prop}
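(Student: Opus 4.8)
The plan is to construct the solution by the iteration scheme of \cite{majda}. Fix $t_{0}\in\mJ$ and let $R$ be an upper bound for $\|\Ui_{0}\|_{H^{m+1}(\bM)}+\|\Vi_{0,0}\|_{H^{m+1}(\bM)}+\|\Vi_{0,1}\|_{H^{m}(\bM)}$. First I would reduce to the case of smooth initial data by mollification, the general finite-regularity case being recovered at the end from the difference estimates of Subsection~\ref{ssection:conlownorm} applied to the smooth approximants. Given smooth data, define a sequence of iterates $(u^{(j)},v^{(j)},N_{1}^{(j)},N_{2}^{(j)})$ on $M=\bM\times\mI$: take $u^{(0)}(\cdot,t):=\Ui_{0}$ and $v^{(0)}(\cdot,t):=\Vi_{0,0}+(t-t_{0})\Vi_{0,1}$, with $N_{1}^{(0)},N_{2}^{(0)}$ the solutions of (\ref{eq:themodelN})--(\ref{eq:themodelNdot}) with $u^{(0)},v^{(0)}$ inserted, and then let $(u^{(j+1)},v^{(j+1)},N_{1}^{(j+1)},N_{2}^{(j+1)})$ be the solution of the simplified system (\ref{seq:linmodsys}) with $(u_{a},v_{a},N_{1,a},N_{2,a})=(u^{(j)},v^{(j)},N_{1}^{(j)},N_{2}^{(j)})$ and the prescribed data; each such iterate exists and is smooth on $M$ by Lemma~\ref{lemma:ubvbNbsol}.

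The heart of the matter is a bound on the iterates in the high norm on a time interval $\mI_{T}=[t_{0}-T,t_{0}+T]$, with $T>0$ depending only on $R$, the model system, $\mJ$, $m$, $(\bM,\bh_{\refer})$, $\{E_{i}\}$ and the covering produced by the appendices. I would prove, by induction on $j$, that there is a constant $K=K(R,\dots)$ such that $\mfN_{a}$ and $\mfn_{a}$, evaluated on the $j$-th iterate, are $\leq K$ on $\mI_{T}$. Assuming this for iterate $j$: the estimate (\ref{eq:dtEompoest}) together with Gr\"{o}nwall bounds $\|u^{(j+1)}\|_{H^{m+1}}$, hence, since $m>n/2+2$ and by Sobolev embedding, $\|u^{(j+1)}\|_{C^{2}}$; the lapse functions $N_{1}^{(j+1)},N_{2}^{(j+1)}$ are then bounded in $H^{m+2}$ and in $C^{2,1}$ by Lemmas~\ref{lemma:lapseestimatehighnorm} and~\ref{lemma:schauder}, and $\d_{t}N_{1}^{(j+1)}$ in $C^{2,1}$ by (\ref{eq:dtNobitomfn}); finally, feeding all this into Lemma~\ref{lemma:energyestimatev} and applying Gr\"{o}nwall to $\chE_{2,m+1}[v^{(j+1)},u^{(j+1)}]$ — using (\ref{eq:tEdominance}) to dominate $\|v^{(j+1)}\|_{C^{1}}$ by its square root — closes the estimate provided $T$ is taken small enough that the resulting exponential factor and the inhomogeneous contribution stay below the thresholds fixing $K$. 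The crucial point is that the constants in Lemmas~\ref{lemma:schauder}, \ref{lemma:lapseestimatehighnorm} and~\ref{lemma:energyestimatev} depend on the iterates only through their $C^{k}$-norms and otherwise only on fixed data; this is what makes $T$ uniform in $j$, and it is the main obstacle of the whole argument — it is precisely why the detailed dependence of the elliptic Schauder and Sobolev constants had to be isolated in the appendices.

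Next I would show the iterates are Cauchy in the low norm. Apply Lemma~\ref{lemma:estimatesofdifferences} (together with Lemma~\ref{lemma:lapseestimates}) with iterate $j-1$ in the role of the ``$1$'' functions and iterate $j$ in the role of the ``$2$'' functions, so that the $u_{b,i}$ etc.\ are iterates $j$ and $j+1$; the uniform high-norm bound controls $\mfl_{a}$, $\mfl_{a,b}$, $\|v_{b,2}\|_{C^{2}}$ and $\|\d_{t}N_{1,b,1}\|_{C^{1}}$ by a constant. Since all iterates share the same data, $\me_{1}$ and $\chme_{2}$ vanish at $t_{0}$, so combining (\ref{eq:meoneenest}), (\ref{eq:chmeenest}) and the lapse estimates (\ref{eq:Lobest})--(\ref{eq:Ltwobest}) with Gr\"{o}nwall, after possibly shrinking $T$, yields $\sup_{\mI_{T}}\mN^{(j+1)}\leq\tfrac{1}{2}\sup_{\mI_{T}}\mN^{(j)}$, where $\mN^{(j)}$ is the quantity (\ref{eq:mNadef}) for the difference of iterates $j$ and $j-1$. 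Hence $(u^{(j)},v^{(j)})$ converges in $C^{0}(\mI_{T},H^{1})$, $(\d_{t}v^{(j)})$ in $C^{0}(\mI_{T},L^{2})$, and $(N_{1}^{(j)},N_{2}^{(j)})$ in $C^{0}(\mI_{T},H^{2})$; interpolating against the uniform high-norm bound gives convergence in every lower Sobolev norm, hence — since $m>n/2+2$ — in the $C$-topologies of (\ref{seq:uabietcreglocalexistence}). This suffices to pass to the limit in (\ref{seq:linmodsys}): the right-hand sides are continuous in these topologies and the elliptic solves are continuous by Lemmas~\ref{lemma:schauder}--\ref{lemma:lapseestimatehighnorm}, so the limit solves (\ref{seq:themodel}) with data (\ref{eq:uvdtvindata}). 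The Sobolev time-regularity (\ref{seq:regularityoneuvNi}) then follows by bootstrapping from the equations: (\ref{eq:themodelu}) trades a time derivative of $u$ for a spatial one, (\ref{eq:themodelv}) does the analogous thing for $v$, and (\ref{eq:themodelN})--(\ref{eq:themodelNdot}) combined with Corollary~\ref{cor: Ck regularity} upgrade the time-regularity of $N_{1},N_{2}$ while gaining two spatial derivatives.

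Uniqueness on $\mI_{T}$ is immediate from Lemma~\ref{lemma:uniquemodelsol}, since the constructed solution has the regularity required there. It remains to establish (\ref{eq:chmfEmainbound}). Apply the energy estimates of Lemma~\ref{lemma:energyestimatev} and (\ref{eq:dtEompoest}) to the solution itself, i.e.\ with $u_{a}=u_{b}=u$, $v_{a}=v_{b}=v$ and $N_{l,a}=N_{l,b}=N_{l}$; then $\mfn_{a,b}$ reduces to a constant times $\ell$, and $\mfN_{a,b}$ is bounded, via (\ref{eq:tEdominance}), Lemma~\ref{lemma:lapseestimatehighnorm} and Sobolev embedding, by a constant times $\chmfE_{2,m+1}^{1/2}+1$. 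Thus (\ref{eq:tEtwompodiffest}) becomes a differential inequality
\[
\big|\d_{t}\big(\chmfE_{2,m+1}[v,u]+1\big)\big|\leq \mfC[\ell]\big(\chmfE_{2,m+1}[v,u]+1\big),
\]
after absorbing the cross term and the inhomogeneous term $\mfN_{a,b}^{2}$ into $\chmfE_{2,m+1}+1$ by Young's inequality — which is exactly why the ``$+1$'' is inserted. Gr\"{o}nwall's lemma then gives (\ref{eq:chmfEmainbound}). Finally, the bound on $\chmfE_{2,m+1}$ on $\mI_{T}$ in terms of the listed quantities alone follows because $\chmfE_{2,m+1}(t_{0})$ is controlled by $R$ via (\ref{eq:Htequiv}), while the uniform high-norm bound from the iteration controls $\ell$, hence the exponent in (\ref{eq:chmfEmainbound}), on $\mI_{T}$.
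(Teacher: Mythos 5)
Your architecture matches the paper's: the Majda-style iteration via the linearised system~(\ref{seq:linmodsys}) and Lemma~\ref{lemma:ubvbNbsol}, the uniform high-norm bound by induction over the iterates combining~(\ref{eq:dtEompoest}), Lemmas~\ref{lemma:schauder}, \ref{lemma:lapseestimatehighnorm} and~\ref{lemma:energyestimatev} with Gr\"{o}nwall, the geometric contraction in the low norm via Lemmas~\ref{lemma:lapseestimates} and~\ref{lemma:estimatesofdifferences}, and the appeal to Lemma~\ref{lemma:uniquemodelsol} for uniqueness. These parts of the proposal are sound.

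There is, however, a genuine gap in the final third of your argument, precisely at the point the paper devotes the most care to. Interpolating the low-norm Cauchy estimate against the uniform $H^{m+1}$ bound gives convergence in $C^0(\mI_T,H^s)$ only for $s<m+1$, and hence a limit in $C^0(\mI_T,H^m)$ for $u,v$, in $C^0(\mI_T,H^{m-1})$ for $\d_tv$, and in $C^0(\mI_T,H^{m+1})$ for $N_1,N_2$. That is enough for~(\ref{seq:uabietcreglocalexistence}) and for passing to the limit in~(\ref{seq:linmodsys}), as you say. But it does \emph{not} give $u(\cdot,t)\in H^{m+1}$ pointwise in $t$, let alone $u\in C^0(\mI_T,H^{m+1})$, which is needed both for $\chmfE_{2,m+1}[v,u](t)$ to be finite and for the base cases of~(\ref{seq:regularityoneuvNi}). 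You then propose to obtain~(\ref{eq:chmfEmainbound}) by ``applying the energy estimates to the solution itself.'' That step is not justified with the regularity at hand: Lemma~\ref{lemma:energyestimatev} and~(\ref{eq:dtEompoest}) are derived by time-differentiating energies of smooth functions and integrating by parts in time, and neither the finiteness of $\chmfE_{2,m+1}[v,u]$ nor its absolute continuity in $t$ follows from the convergence you have established. Calling the Sobolev time-regularity~(\ref{seq:regularityoneuvNi}) a ``bootstrap from the equations'' similarly presupposes the top-index regularity you have not yet proved.

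The paper closes this gap by a different route, and you need it too: first show that for each fixed $t$, the Fourier coefficients $\hu_l(i,t)\to\hu(i,t)$ so that, by Fatou, $u(\cdot,t)\in H^{m+1}$ with a uniform bound; then prove that $t\mapsto u(\cdot,t)$ is \emph{weakly} continuous into $H^{m+1}$; derive the energy inequality~(\ref{eq:chmfEmainbound}) by combining the iterate-level estimate~(\ref{eq:chElpoest}) with Fatou and the weak convergence (this is where $\chmfE_{2,m+1}[v,u]\le\limsup_l\chF_{2,m+1}[v_l,u_l]$ is used, via the inner-product characterisation~(\ref{eq:Htinnerprod})); and finally upgrade weak continuity to \emph{strong} continuity at $t_0$ (hence everywhere, by uniqueness) by writing $\ldr{\xi(t)-\eta,\xi(t)-\eta}_{H,t_0}$ and using the weak continuity together with the upper bound from~(\ref{eq:chmfEmainbound}), as in~(\ref{eq:Htzdiffexp}). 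Only after the strong continuity $u\in C^0(\mI_T,H^{m+1})$ (and likewise for $v,\d_tv,N_1,N_2$) is in hand does the bootstrap from~(\ref{eq:themodelu})--(\ref{eq:themodelNdot}) and Corollary~\ref{cor: Ck regularity} give the higher time-regularity in~(\ref{seq:regularityoneuvNi}). Without this block of argument, your proof establishes a weaker solution than claimed and does not justify~(\ref{eq:chmfEmainbound}), which is exactly what is needed for the continuation criterion in Lemma~\ref{lemma:contcriterionfdreg}.

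A minor point: your initial iterate $v^{(0)}:=\Vi_{0,0}+(t-t_0)\Vi_{0,1}$ is harmless, but recall that the iterates' initial data must be smooth approximants $\Ui_{0,l},\Vi_{0,0,l},\Vi_{0,1,l}$ converging to the given data, satisfying~(\ref{eq:UilVilbd}); otherwise Lemma~\ref{lemma:ubvbNbsol} does not apply. Your plan to mollify first and then recover the general case via Subsection~\ref{ssection:conlownorm} achieves the same end and is acceptable, but be aware that the paper folds this into the iteration itself.
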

\begin{remark}
  The only reason for the dependence of $T$ on $\mJ$ is to ensure that $\mI_T\subseteq\mI$.
\end{remark}
\begin{remark}
  It should be sufficient to assume $m>n/2+1$. However, we expect the corresponding proof to lead to technical complications that we do not wish to
  address here.
\end{remark}
\begin{proof}
  The uniqueness statement is an immediate consequence of Lemma~\ref{lemma:uniquemodelsol}. In the proof we therefore focus on the existence statement,
  the additional regularity properties of solutions and the energy estimate (\ref{eq:chmfEmainbound}). 
  
  \textit{Setting up the iteration.} Let $\Ui_{0,l}\in C^{\infty}(\bM,\rn{n_{1}})$ and $\Vi_{0,0,l},\Vi_{0,1,l}\in C^{\infty}(\bM,\rn{n_{2}})$, $l\in\nn{}_0$, be
  such that if
  \[
  \|\Ui_{0}\|_{H^{m+1}}+\|\Vi_{0,0}\|_{H^{m+1}}+\|\Vi_{0,1}\|_{H^{m}}\leq C_{0},
  \]
  then
  \begin{equation}\label{eq:UilVilbd}
    \|\Ui_{0,l}\|_{H^{m+1}}+\|\Vi_{0,0,l}\|_{H^{m+1}}+\|\Vi_{0,1,l}\|_{H^{m}}\leq C_{0}+1
  \end{equation}
  and
  \[
    \lim_{l\rightarrow\infty}[\|\Ui_{0,l}-\Ui_{0}\|_{H^{m+1}}+\|\Vi_{0,0,l}-\Vi_{0,0}\|_{H^{m+1}}+\|\Vi_{0,1,l}-\Vi_{0,1}\|_{H^{m}}]=0
  \]
  (in what follows, we impose additional restrictions on the sequence). 
  Let now $u_{0}(\bx,t):=\Ui_{0,0}(\bx)$, $v_{0}(\bx,t):=\Vi_{0,0,0}(\bx)$, $N_{1,0}:=1$ and $N_{2,0}:=0$. Note that $u_{0}$, $v_{0}$, $N_{1,0}$ and $N_{2,0}$ are
  smooth functions on $M:=\bM\times \mI$. Given that $u_{l}$, $v_{l}$, $N_{1,l}$ and $N_{2,l}$ have been defined, define $u_{l+1}$, $v_{l+1}$,
  $N_{1,l+1}$ and $N_{2,l+1}$ by solving (\ref{seq:linmodsys}) with $u_{a}=u_{l}$, $v_{a}=v_{l}$, $N_{1,a}=N_{1,l}$, $N_{2,a}=N_{2,l}$, initial data for $u_{l+1}$
  given by $\Ui_{0,l+1}$, initial data for $v_{l+1}$ given by $\Vi_{0,0,l+1}$ and initial data for $\d_{t}v_{l+1}$ given by $\Vi_{0,1,l+1}$. That
  the solution exists and is smooth follows from Lemma~\ref{lemma:ubvbNbsol}. 

  \textit{Bounds in the high norm.} The first goal is to prove that the sequence $(u_{l},v_{l},N_{1,l},N_{2,l})$ is bounded in the appropriate norm
  on an interval of the form $\mI_{T}:=[-T+t_{0},t_{0}+T]$ for any $t_{0}\in \mJ$ and some suitable choice of $T>0$. The relevant expression we wish
  to bound is
  \begin{equation}\label{eq:mfNldef}
    \begin{split}
      \mfN_{l}(t) := & \|u_{l}\|_{H^{m+1}(\bM_{t})}+\|v_{l}\|_{H^{m+1}(\bM_{t})}+\|\d_{t}v_{l}\|_{H^{m}(\bM_{t})}\\
      & +\|N_{1,l}\|_{H^{m+2}(\bM_{t})}+\|N_{2,l}\|_{H^{m+2}(\bM_{t})}+1.
    \end{split}
  \end{equation}
  By Sobolev embedding and the fact that $m>n/2+1$, a bound on $\mfN_{l}(t)$ yields a bound on
  \begin{equation}\label{eq:nuldef}
    \begin{split}
      \nu_{l}(t) := & \|u_{l}(\cdot,t)\|_{C^{2}}+\|v_{l}(\cdot,t)\|_{C^{2}}+\|\d_{t}v_{l}(\cdot,t)\|_{C^{1}}\\
      & +\|N_{1,l}(\cdot,t)\|_{C^{2,1}}+\|N_{2,l}(\cdot,t)\|_{C^{2,1}}+1.
    \end{split}
  \end{equation}
  In fact, we obtain bounds on $N_{i,l}$ in $C^{3}(\bh_\refer)$, but we choose the $C^{2,1}$-norm since it fits better with Schauder estimates.
  Assume, inductively, that, for some $0\leq l\in\nn{}_0$, there is a constant $\mM$ such that
  \begin{equation}\label{eq:indassmfNj}
    \mfN_{j}(t)\leq\mM
  \end{equation}
  for $t\in \mI_{T}$ and $0\leq j\leq l$. Note that for any $t_{0}\in \mJ$ and any $T$ such that $[T_{1}-T,T_{2}+T]\subset\mI$, this statement holds for
  $l=0$, assuming $\mM$ to be large enough. Moreover, for $l=0$, the bound only depends on $(\bM,\bh_{\refer})$ and $C_{0}$ due to (\ref{eq:UilVilbd})
  and the definitions of $u_{0}$, $v_{0}$, $N_{1,0}$ and $N_{2,0}$. Since $m>n/2+1$, combining (\ref{eq:indassmfNj}) with Sobolev embedding yields
  $\nu_{j}(t)\leq C\mM$ for $t\in\mI_{T}$. Let $0\leq j\leq l$. Then, due to (\ref{eq:dtEompoest}),
  \[
    E_{1,m+1}[u_{j+1}](t)+1\leq E_{1,m+1}[u_{j+1}](t_{0})+1+\big|\textstyle{\int}_{t_{0}}^{t}K_{1}(E_{1,m+1}[u_{j+1}](s)+1)ds\big|,
  \]
  where $K_{1}$ is a constant depending only on $\mM$, $(\bM,\bh_{\refer})$, $\{E_{i}\}$, $m$ and bounds on $f_{1}$ and its partial derivatives up to
  order $m+1$. Appealing to Gr\"{o}nwall's lemma yields 
  \[
    E_{1,m+1}[u_{j+1}](t)+1\leq (E_{1,m+1}[u_{j+1}](t_{0})+1)e^{K_{1}|t-t_{0}|}
  \]
  for all $t\in\mI_{T}$ and $j\in\{0,\dots,l\}$. Combining this estimate with (\ref{eq:UilVilbd}) yields
  \begin{equation}\label{eq:Eomponeulpobd}
    E_{1,m+1}[u_{j+1}](t)+1\leq (C_{0}^{2}+2)e^{K_{1}T}
  \end{equation}
  for all $t\in\mI_{T}$ and $j\in\{0,\dots,l\}$. Assuming $T$ to be small enough that $K_{1}T\leq \ln 2$, we obtain a uniform bound on $u_{j+1}$ in
  $H^{m+1}(\bM_{t})$ for $t\in\mI_{T}$ and $j\in\{0,\dots,l\}$ which only depends on $C_{0}$. By Sobolev embedding, we thereby obtain a uniform bound
  on $u_{j+1}$ in $C^{2}(\bM_{t})$ for $t\in\mI_{T}$ and $j\in\{0,\dots,l\}$ which only depends on $C_{0}$,
  $(\bM,\bh_{\refer})$, the frame $\{E_{i}\}$ and $m$. At this stage, we can appeal to Lemma~\ref{lemma:energyestimatev}. Note that the $\mfN_{a,b}$
  appearing in this lemma is, in our setting, bounded by a constant depending only on $C_{0}$ and $\mM$, assuming $K_{1}T\leq \ln 2$. By Sobolev
  embedding, this implies that $\mfn_{a,b}$ is bounded by a constant depending only on $(\bM,\bge_{\refer})$, the frame $\{E_{i}\}$, $m$, $C_{0}$ and
  $\mM$. Due to Sobolev embedding and (\ref{eq:tEdominance}), the estimate (\ref{eq:tEtwompodiffest}) implies that
  \[
  \left|\chE_{2,m+1}[v_{j+1},u_{j+1}](t_{1})-\chE_{2,m+1}[v_{j+1},u_{j+1}](t_{0})\right|
  \leq  \big|\textstyle{\int}_{t_{0}}^{t_{1}}K_{2}(\chE_{2,m+1}[v_{j+1},u_{j+1}]+1)dt\big|,
  \]
  where $K_{2}$ only depends on $(\bM,\bh_{\refer})$, the frame $\{E_{i}\}$, the covering mentioned in Lemma~\ref{lemma:energyestimatev}, the model system,
  $m$, $C_{0}$ and $\mM$. In analogy with the derivation of (\ref{eq:Eomponeulpobd}), we obtain
  \begin{equation}\label{eq:chEtmponeulpobd}
    \chE_{2,m+1}[v_{j+1},u_{j+1}](t)\leq C_{\g,\lambda}(C_{0}^{2}+1)
  \end{equation}
  for all $t\in\mI_{T}$, assuming $T$ to be such that $\max\{K_{1},K_{2}\}T\leq \ln 2$. Here $C_{\g,\lambda}$ is a constant depending only on $n$, $m$,
  $\lambda$ and the global bound on the $\|\g^{ij}\|$. Combining the above conclusions with (\ref{eq:lambdabd})
  yields
  \begin{equation}\label{eq:ulpovlpointerm}
    \|u_{k}\|_{H^{m+1}(\bM_{t})}+\|v_{k}\|_{H^{m+1}(\bM_{t})}+\|\d_{t}v_{k}\|_{H^{m}(\bM_{t})}\leq C_{\g,\lambda}\ldr{C_{0}}
  \end{equation}
  for all $t\in\mI_{T}$ and $k\in\{1,\dots,l+1\}$, assuming $T$ to be such that $\max\{K_{1},K_{2}\}T\leq \ln 2$, where $C_{\g,\lambda}$ has the same
  dependence as in (\ref{eq:chEtmponeulpobd}). Due to the choice of $u_{0}$ and $v_{0}$, the estimate (\ref{eq:ulpovlpointerm}) also holds for
  $k=0$. Combining (\ref{eq:ulpovlpointerm}) with Sobolev embedding yields
  \begin{equation}\label{eq:ulpovlpointermsob}
  \|u_{k}(\cdot,t)\|_{C^{2}}+\|v_{k}(\cdot,t)\|_{C^{2}}+\|\d_{t}v_{k}(\cdot,t)\|_{C^{1}}\leq C_{2}\ldr{C_{0}}
  \end{equation}
  for all $t\in\mI_{T}$ and $k\in\{0,\dots,l+1\}$, assuming $T$ to be such that $\max\{K_{1},K_{2}\}T\leq \ln 2$, where $C_{2}$ only depends on
  $(\bM,\bh_{\refer})$, $\{E_{i}\}$, $m$, $\lambda$ and the global bound on the $\|\g^{ij}\|$. What remains to bound (in $H^{m+2}$) is $N_{1,l+1}$ and
  $N_{2,l+1}$. This can be achieved by appealing to Lemma~\ref{lemma:lapseestimatehighnorm} with $u_{b}=u_{l+1}$ and $v_{a}=v_{l}$. Combining
  Lemma~\ref{lemma:lapseestimatehighnorm} with (\ref{eq:ulpovlpointerm}) and (\ref{eq:ulpovlpointermsob}) yields
  \[
  \|N_{1,l+1}\|_{H^{m+2}(\bM_{t})}+\|N_{2,l+1}\|_{H^{m+2}(\bM_{t})}\leq C_{3}\ldr{C_{0}}
  \]
  for all $t\in\mI_{T}$, assuming $T$ to be such that $\max\{K_{1},K_{2}\}T\leq \ln 2$, where $C_{3}$ only depends on $C_{0}$, $(\bM,\bh_{\refer})$, the
  covering, $\{E_{i}\}$, $m$ and the system. To summarise, 
  \[
  \mfN_{l+1}(t)\leq C_{\mfN}
  \]
  for all $t\in\mI_{T}$, assuming $\max\{K_{1},K_{2}\}T\leq \ln 2$, where $C_{\mfN}$ only depends on $(\bM,\bh_{\refer})$, the covering, $\{E_{i}\}$, the model
  system, $m$ and $C_{0}$. Due to the observations made below (\ref{eq:indassmfNj}), we can thus fix a constant $\mM$ with the same dependence as
  $C_{\mfN}$ and deduce that (\ref{eq:indassmfNj}) holds for all $j$ and all $t\in\mI_{T}$, where $T\leq 1/K$ for some constant $K$ with the same type
  of dependence as $C_{\mfN}$.

  \textit{Convergence in the low norm.} Define, for $l\geq 1$, $\mN_{l}$ in analogy with (\ref{eq:mNadef}), where $U_{a}$ is replaced by $u_{l}-u_{l-1}$
  etc. Our next goal is to prove that there is a constant $\mK_{0}$ such that
  \begin{equation}\label{eq:mNlest}
    \mN_{l}(t)\leq \mK_{0}2^{-l}
  \end{equation}
  for all $t\in\mI_{T}$ and $l\geq 1$. Note that for $l\leq 2$, this estimate holds due to (\ref{eq:indassmfNj}), where $\mK_{0}$ only depends on
  $(\bM,\bh_{\refer})$, the covering, $\{E_{i}\}$, the model system, $m$ and $C_{0}$. Assume now, inductively, that (\ref{eq:mNlest}) holds for some
  $l\geq 1$. Appealing to (\ref{eq:meoneenest}) then yields the conclusion that
  \begin{equation}\label{eq:meolpoprelest}
    |\me_{1,l+1}(t_{1})-\me_{1,l+1}(t_{0})|\leq \big|\textstyle{\int}_{t_{0}}^{t_{1}}\mK_{1}\mK_{0}2^{-l}\me_{1,l+1}^{1/2}dt\big|
  \end{equation}
  for $t_{1}\in\mI_{T}$, where $\me_{1,l+1}$ is given by the right hand side of (\ref{eq:meonedef}) with $U_{b}=u_{l+1}-u_{l}$. Moreover, $\mK_{1}$ is a
  constant depending only on $(\bM,\bh_{\refer})$, the covering, $\{E_{i}\}$, the model system, $m$ and $C_{0}$. From (\ref{eq:meolpoprelest}), it follows
  that
  \[
  \left(\me_{1,l+1}(t_{1})+4^{-l-1}\right)\leq \left(\me_{1,l+1}(t_{0})+4^{-l-1}\right)e^{\mK_{1}\mK_{0}T}
  \]
  for $t_{1}\in\mI_{T}$. Assume, to begin with, $T$ to be small enough, depending only on $\mK_{0}$ and $\mK_{1}$, that $\mK_{1}\mK_{0}T\leq\ln 2$. Then
  \begin{equation}\label{eq:meolpoest}
    \me_{1,l+1}(t_{1})\leq 2\me_{1,l+1}(t_{0})+(2\mK_{1}\mK_{0}T)4^{-l-1}
  \end{equation}
  for $t_{1}\in\mI_{T}$. In particular, making appropriate assumptions concerning the initial sequence, the first term on the right hand side can be
  assumed to be smaller than $4^{-l-1}$. Assuming $T$ to be small enough that $2\mK_{1}\mK_{0}T\leq 1$, we conclude that $2\me_{1,l+1}(t)\leq 4^{-l}$ for
  $t_{1}\in\mI_{T}$. This means that the $\mN_{a,b}$ appearing in, e.g., (\ref{eq:chmeenest}) can be estimated by $(\mK_{0}+1)2^{-l}$. Combining this
  observation with (\ref{eq:dtNobest}), (\ref{eq:chmetwodom}) and (\ref{eq:chmeenest}) yields
  \[
  |\chme_{2,l+1}(t_{1})-\chme_{2,l+1}(t_{0})|\leq \big|\textstyle{\int}_{t_{0}}^{t_{1}}\mK_{2}\ldr{\mK_{0}}^{2}\big(\chme_{2,l+1}(t)+4^{-l-1}\big)dt\big|
  \]
  for $t_{1}\in\mI_{T}$, where $\mK_{2}$ is a constant depending only on $(\bM,\bh_{\refer})$, the covering, $\{E_{i}\}$, $m$, the model system and $C_{0}$.
  Here $\chme_{2,l+1}$ is defined by (\ref{eq:metwodef}), (\ref{eq:hmetwodef}) and (\ref{eq:chmetwodef}), where $\me_{1}$ is replaced by $\me_{1,l+1}$,
  $U_{b}=u_{l+1}-u_{l}$, $V_{b}=v_{l+1}-v_{l}$, $u_{b,1}=u_{l}$, $v_{a,1}=v_{l-1}$ and $N_{1,b,1}=N_{1,l}$. In particular, 
  \[
  \left(\chme_{2,l+1}(t_{1})+4^{-l-1}\right)\leq \left(\chme_{2,l+1}(t_{0})+4^{-l-1}\right)e^{\mK_{2}\ldr{\mK_{0}}^{2}T}
  \]
  for $t_{1}\in\mI_{T}$. In analogy with (\ref{eq:meolpoest}), we deduce that
  \[
  \chme_{2,l+1}(t_{1})\leq 2\chme_{2,l+1}(t_{0})+(2\mK_{2}\ldr{\mK_{0}}^{2}T)4^{-l-1}
  \]
  for $t_{1}\in\mI_{T}$, assuming $T$ to be small enough that $\mK_{2}\ldr{\mK_{0}}^{2}T\leq\ln 2$. Combining this estimate with (\ref{eq:metwodef})
  and (\ref{eq:chmetwodom}), it follows that
  \begin{equation}\label{eq:vdifflpoest}
    \|v_{l+1}-v_{l}\|_{H^{1}(\bM_{t})}^{2}+\|\d_{t}v_{l+1}-\d_{t}v_{l}\|_{L^{2}(\bM_{t})}^{2}\leq \mK_{3}\ldr{\mK_{0}}^{2}(\chme_{2,l+1}(t_{0})+4^{-l-1}T)
  \end{equation}
  for $t\in\mI_{T}$, where $\mK_{3}$ has the same dependence as $\mK_{2}$. Note that this estimate also holds with $l$ replaced by $l-1$,
  assuming $l\geq 2$; note that since we know (\ref{eq:mNlest}) to hold for $l=1$ and $l=2$, we know (\ref{eq:vdifflpoest}) to hold for $l=1$ and
  $l=2$. Assume now that $l\geq 2$. Then, combining (\ref{eq:meolpoest}); (\ref{eq:vdifflpoest}); (\ref{eq:vdifflpoest}) with $l$ replaced by $l-1$;
  and Lemma~\ref{lemma:lapseestimates} yields
  \[
  [\mN_{l+1}(t)]^{2}\leq \mK_{4}\ldr{\mK_{0}}^{2}(\chme_{2,l+1}(t_{0})+\chme_{2,l}(t_{0})+4^{-l-1}T)
  \]
  for $t\in\mI_{T}$, where $\mK_{4}$ has the same dependence as $\mK_{2}$; to obtain this estimate we also used (\ref{eq:chmetwodom}). Assuming
  $T$ to be small enough, the bound depending only on $\mK_{0}$ and $\mK_{4}$, and assuming the initial sequence to have been chosen appropriately,
  we conclude that (\ref{eq:mNlest}) holds with $l$ replaced by $l+1$ and $\mK_{0}$ replaced by $1$. If $\mK_{0}'$ is the constant appearing in
  (\ref{eq:mNlest}) for $1\leq l\leq 2$ (note that this constant can be assumed to have the same dependence as $\mK_{2}$), let $\mK_{0}:=\max\{\mK_{0}',1\}$.
  Due to the dependence of $\mM$, noted above, we conclude that we only need to require $T$ to be small enough, the bound depending only on
  $(\bM,\bh_{\refer})$, the covering, $\{E_{i}\}$, the model system, $m$ and $C_{0}$ (since we here tacitly assume $T$ to be small enough that $\mI_{T}\subset\mI$,
  $T$ also, strictly speaking, depends on $\mJ$). This ensures that both the bounds (\ref{eq:indassmfNj}) and (\ref{eq:mNlest}) hold for all $j$ and $l$
  respectively. 
  
  We conclude that $u_{l}$, $v_{l}$, $N_{1,l}$ and $N_{2,l}$ converge to limits $u$, $v$, $N_{1}$ and $N_{2}$ respectively, satisfying
  \begin{align*}
    & u\in C^{0}(\mI_{T},H^{1}(\bM)),\ \ \
    v\in C^{0}(\mI_{T},H^{1}(\bM))\cap C^{1}(\mI_{T},L^{2}(\bM))\\
    & N_{1},N_{2}\in C^{0}(\mI_{T},H^{2}(\bM)).
  \end{align*}
  Below we refer to this convergence as \textit{convergence in the low norm}. Next we wish to combine this convergence with boundedness in the high
  norm, i.e. (\ref{eq:indassmfNj}), and interpolation in order to obtain convergence in intermediate norms. To this end, recall that the
  $H^k(\bM;\bbE)$ and $H^k(\bh_\refer)$ norms are equivalent, see Subsection~\ref{ssection:conventionsframe}, and that (\ref{eq:HmHpmequiv}) holds.
  Here the $\|\cdot\|_{(s)}$ norm is defined by (\ref{eq:specsobsp}) with $h=\bh_\refer$. Combining convergence in the low
  norm with (\ref{eq:indassmfNj}) and standard interpolation estimates, see (\ref{eq:Hsinterpol}), yields the conclusion that
  for every $0<s<m$, $u_{l}$, $v_{l}$, $N_{1,l}$ and $N_{2,l}$ converge to limits $u$, $v$, $N_{1}$ and $N_{2}$ respectively, satisfying
  \begin{align*}
    & u\in C^{0}(\mI_{T},H_{(s+1)}(\bh_{\refer};\rn{n_1})),\ \ \
    v\in C^{0}(\mI_{T},H_{(s+1)}(\bh_{\refer};\rn{n_2}))\cap C^{1}(\mI_{T},H_{(s)}(\bh_{\refer};\rn{n_2}))\\
    & N_{1},N_{2}\in C^{0}(\mI_{T},H_{(s+2)}(\bh_\refer));
  \end{align*}  
  the Sobolev spaces $H_{(s)}$ and associated norms are introduced in Appendix~\ref{ssection:specsob} below. In particular, we have this regularity
  for $s=m-1$. This means that $u_{l}$, $v_{l}$, $N_{1,l}$ and $N_{2,l}$ converge to limits $u$, $v$, $N_{1}$ and $N_{2}$ respectively,
  satisfying
  \begin{align*}
    & u\in C^{0}(\mI_{T},H^{m}(\bM)),\ \ \
    v\in C^{0}(\mI_{T},H^{m}(\bM))\cap C^{1}(\mI_{T},H^{m-1}(\bM))\\
    & N_{1},N_{2}\in C^{0}(\mI_{T},H^{m+1}(\bM)).
  \end{align*}
  Combining this observation with the equations satisfied by the sequence yields  
  \[
  u\in C^{1}(\mI_{T},H^{m}(\bM)),\ \ \ v\in C^{2}(\mI_{T},H^{m-2}(\bM)).
  \]
  Combining the above observations with the fact that $u$, $v$ and $N_{1}$ solve (\ref{eq:themodelN}) (this is a consequence
  of the above convergence and Sobolev embedding) and Corollary~\ref{cor: Ck regularity} (with $k=m-1$) yields
  \[
  N_{1}\in C^{1}(\mI_{T},H^{m+1}(\bM)).
  \]
  Combining the above observations with Sobolev embedding, the solution satisfies (\ref{seq:uabietcreglocalexistence}).
  
  \textit{Bounding the solution in the high norm.} Next, we wish to prove that $u(\cdot,t)\in H^{m+1}(\bM,\rn{n_{1}})$ for all $t\in\mI_{T}$ etc. Define,
  to this end, 
  \[
  \hu(i,t):=\ldr{u(\cdot,t),\varphi_{i}}=\textstyle{\int}_{\bM}u(\cdot,t)\varphi_{i}d\mu_{\bh_{\refer}},
  \]
  where we use the notation introduced in Appendix~\ref{ssection:specsob} in the case that $(\S,h)=(\bM,\bh_{\refer})$. We introduce similar notation
  for $v$ and the $N_{j}$, $j=1,2$. Note that $\hu_{l}(i,t)\rightarrow\hu(i,t)$ for
  all $t\in \mI_{T}$. This means that 
  \begin{equation*}
    \begin{split}
      & \textstyle{\sum}_{i=0}^{N}\ldr{\lambda_{i}}^{2(m+1)}|\hu(i,t)|^{2}\\
      = & \textstyle{\lim}_{l\rightarrow\infty}\textstyle{\sum}_{i=0}^{N}\ldr{\lambda_{i}}^{2(m+1)}|\hu_{l}(i,t)|^{2}
      \leq \limsup_{l\rightarrow\infty}\|u_{l}(\cdot,t)\|_{(m+1)}^{2}\\
      \leq & C\textstyle{\limsup}_{l\rightarrow\infty}\|u_{l}\|_{H^{m+1}(\bM_{t})}^{2}\leq C\mM^{2}
    \end{split}
  \end{equation*}  
  for all $N\in\nn{}_0$, where $C$ only depends on $m$, $\{E_{i}\}$ and $(\bM,\bh_{\refer})$.
  Due to this and similar arguments, we conclude that $u(\cdot,t)\in H^{m+1}(\bM,\rn{n_{1}})$, $v(\cdot,t)\in H^{m+1}(\bM,\rn{n_{2}})$,
  $(\d_{t}v)(\cdot,t)\in H^{m}(\bM,\rn{n_{2}})$ and $N_{i}(\cdot,t)\in H^{m+2}(\bM)$, $i=1,2$. Moreover,
  \[
  \|u(\cdot,t)\|_{H^{m+1}(\bM)}+\|v(\cdot,t)\|_{H^{m+1}(\bM)}+\|(\d_{t}v)(\cdot,t)\|_{H^{m}(\bM)}
  +\textstyle{\sum}_{i=1}^{2}\|N_{i}(\cdot,t)\|_{H^{m+2}(\bM)}\leq C\mM
  \]
  for all $t\in\mI_{T}$, where $C$ only depends on $m$, $\{E_{i}\}$ and $(\bM,\bh_{\refer})$.

  \textit{Weak continuity.} Let $f$ be an element in the dual of $H_{(m+1)}(\bh_\refer;\rn{n_{1}})$. Then $f_{\chi}=f$, for some function
  $\chi\in H_{(-m-1)}(\bh_\refer;\rn{n_{1}})$, where $f_{\chi}:=\ldr{\cdot,\chi}$; cf. Appendix~\ref{ssection:specsob}, in particular (\ref{eq:fchidef md}).
  We wish to prove that $t\mapsto f(u(\cdot,t))$ defines a continuous function of $t$. Let, to this end, $\chi_{j}\in C^{\infty}(\bM,\rn{n_{1}})$
  be such that $\chi_{j}\rightarrow\chi$ in $H_{(-m-1)}(\bh_\refer;\rn{n_{1}})$. Consider
  \begin{equation*}
    \begin{split}
      |f[u(\cdot,t)]-f[u_{l}(\cdot,t)]| = & |\ldr{u(\cdot,t)-u_{l}(\cdot,t),\chi}|\\
      \leq & C\mM\|\chi-\chi_{j}\|_{(-m-1)}+\big|\textstyle{\int}_{\bM}[u(\cdot,t)-u_{l}(\cdot,t)]\chi_{j}d\mu_{\bh_{\refer}}\big|.
    \end{split}
  \end{equation*}
  Given $\e>0$, choose $j$ large enough that the first term on the right hand side is smaller than $\e/2$. Given this $j$, choose $l$ large enough
  that the second term is smaller than $\e/2$ for all $t\in\mI_{T}$. Then the left hand side is smaller than $\e$ for all $t\in\mI_{T}$. In particular
  $f[u_{l}(\cdot,t)]$ converges uniformly to $f[u(\cdot,t)]$. This means that the map $t\mapsto f[u(\cdot,t)]$ is continuous; i.e., $u$ considered as
  a map from $\mI_{T}$ to $H^{m+1}(\bM,\rn{n_{1}})$ is weakly continuous. Similarly $v$ can be considered to be a weakly continuous map from $\mI_{T}$ to
  $H^{m+1}(\bM,\rn{n_{2}})$ and $\d_{t}v$ can be considered to be a weakly continuous map from $\mI_{T}$ to $H^{m}(\bM,\rn{n_{2}})$. Finally, the
  $N_{i}$ can be considered to be weakly continuous maps from $\mI_{T}$ to $H^{m+2}(\bM,\rn{n_{2}})$.

  \textit{Energy inequality.} In order to prove (\ref{eq:chmfEmainbound}), it is of interest to return to Lemma~\ref{lemma:energyestimatev}. Let
  \begin{equation*}
    \begin{split}
      & \chF_{2,m+1}[v_{l+1},u_{l+1}](t)\\
      := & \ldr{[u_{l+1}(\cdot,t),v_{l+1}(\cdot,t),(\d_{t}v_{l+1})(\cdot,t)],[u_{l+1}(\cdot,t),v_{l+1}(\cdot,t),(\d_{t}v_{l+1})(\cdot,t)]}_{H,t}, 
    \end{split}
  \end{equation*}  
  where $\ldr{\cdot,\cdot}_{H,t}$ is introduced in (\ref{eq:Htinnerprod}) and the functions $u$, $v$ and $N_{1}$ appearing in (\ref{eq:Htinnerprod}) are
  part of the solution obtained above. We define $\chE_{2,m+1}[v_{l+1},u_{l+1}]$ in exactly the same way, but with $u$, $v$ and $N_{1}$ appearing in
  (\ref{eq:Htinnerprod}) replaced by $u_{l+1}$, $v_{l}$ and $N_{1,l+1}$ respectively; note that this definition coincides with the one given in
  (\ref{eq:tEtwompo}). Due to the convergence and boundedness derived so far, we know that
  \begin{equation}\label{eq:chFtmpochEtmpolim}
    \lim_{l\rightarrow\infty}(\chF_{2,m+1}[v_{l+1},u_{l+1}]-\chE_{2,m+1}[v_{l+1},u_{l+1}])=0.
  \end{equation}
  In fact, the convergence is uniform on $\mI_{T}$. Next, note that Lemma~\ref{lemma:lapseestimatehighnorm} yields bounds on the $N_{i,l}$ in terms of
  $v_{l-1}$ and $u_{l}$. More specifically, for $l\geq 3$, there is a continuous and monotonically increasing function $\mC_{1}:\ro_{+}\rightarrow \ro_{+}$
  such that 
  \begin{equation}\label{eq:NilHmptwosumest}
    \begin{split}
      \textstyle{\sum}_{i=1}^{2}\|N_{i,l}(\cdot,t)\|_{H^{m+2}(\bM)}^{2}\leq & \mC_{1}[\ell_{1,l}(t)][\chE_{2,m+1}[v_{l},u_{l}](t)+\chE_{2,m+1}[v_{l-1},u_{l-1}](t)+1]
    \end{split}
  \end{equation}
  for $t\in\mI_{T}$. Here
  \[
  \ell_{1,l}(t):=\|u_{l}(\cdot,t)\|_{C^{2}}+\|v_{l-1}(\cdot,t)\|_{C^{2}}+\|\d_{t}v_{l-1}(\cdot,t)\|_{C^{1}}.
  \]
  Moreover, $\mC_{1}$ only depends on $m$, the covering, $(\bM,\bh_{\refer})$, $\{E_{i}\}$ and the model system. Next, we wish to apply
  Lemma~\ref{lemma:energyestimatev}. Note, to this end, that $\mfN_{a,b}$ appearing in the statement of Lemma~\ref{lemma:energyestimatev}
  can in the present setting be estimated according to
  \begin{equation*}
    \begin{split}
      \mfN_{a,b}^{2}(t) \leq & \mC_{2}[\ell_{1,l}(t)](\chE_{\rotot,m+1,l}+1)
    \end{split}
  \end{equation*}
  for $t\in\mI_{T}$, where
  \[
  \chE_{\rotot,m+1,l}(t):=\chE_{2,m+1}[v_{l+1},u_{l+1}](t)+\chE_{2,m+1}[v_{l},u_{l}](t)+\chE_{2,m+1}[v_{l-1},u_{l-1}](t).
  \]
  Moreover, $\mC_{2}$ has the same dependence and properties as $\mC_{1}$. In order to obtain this estimate, we appealed to
  (\ref{eq:NilHmptwosumest}). Next, note that in the present setting, Lemma~\ref{lemma:schauder} yields
  \[
  \mfn_{a,b}(t)\leq \mB[\ell_{1,l+1}(t)+\ell_{1,l}(t)],
  \]
  where $\mB:\ro_{+}\rightarrow\ro_{+}$ is a continuous and monotonically increasing function depending only on the covering, $(\bM,\bh_{\refer})$,
  $\{E_{i}\}$ and the model system. Combining the above observations with Lemma~\ref{lemma:energyestimatev} yields a continuous and monotonically
  increasing function $\mC_{E}:\ro_{+}\rightarrow\ro_{+}$ such that 
  \begin{equation}\label{eq:chElpoest}
    \begin{split}
      & \chE_{2,m+1}[v_{l+1},u_{l+1}](t_{1})\\
      \leq & \chE_{2,m+1}[v_{l+1},u_{l+1}](t_{0})+\big|\textstyle{\int}_{t_{0}}^{t_{1}}
      \mC_{E}[\ell_{l}(t)](\chE_{\rotot,m+1,l}(t)+1)dt\big|
    \end{split}    
  \end{equation}
  for $t\in\mI_{T}$, where
  \[
  \ell_{l}(t):=\ell_{1,l+1}(t)+\ell_{1,l}(t)+\|v_{l+1}(\cdot,t)\|_{C^{1}}.
  \]
  Moreover, $\mC_{E}$ only depends on $m$, the covering, $(\bM,\bh_{\refer})$, $\{E_{i}\}$ and the model system. Next, let
  $\chmfE_{2,m+1}[v,u]$ be defined as in the statement of the lemma. Moreover, let
  \[
  \chmfF_{2,m+1}:=\limsup_{l\rightarrow\infty}\chF_{2,m+1}[v_{l},u_{l}].
  \]
  Combining this notation with (\ref{eq:chFtmpochEtmpolim}), (\ref{eq:chElpoest}), Fatou's lemma and an argument similar to Lebesgue's dominated
  convergence theorem yields
  \begin{equation}\label{eq:chmfFestimate}
    \chmfF_{2,m+1}(t_{1})\leq \chmfE_{2,m+1}[v,u](t_{0})+\big|\textstyle{\int}_{t_{0}}^{t_{1}}\mfC[\ell(t)][\chmfF_{2,m+1}(t)+1]dt\big|
  \end{equation}  
  for $t\in\mI_{T}$, where $\mfC$ is a function with the properties given in the statement of the lemma. Moreover, $\ell(t)$ is defined by
  (\ref{eq:elldef}). Combining (\ref{eq:chmfFestimate}) with Gr\"{o}nwall's lemma yields
  \begin{equation}\label{eq:chmfFGrLe}
    \chmfF_{2,m+1}(t_{1})+1\leq [\chmfE_{2,m+1}[v,u](t_{0})+1]\exp\big(\big|\textstyle{\int}_{t_{0}}^{t_{1}}\mfC[\ell(t)]dt\big|\big)
  \end{equation}
  for $t_{1}\in\mI_{T}$.

  Next, let $\ldr{\cdot,\cdot}_{H,t}$ be the inner product introduced in (\ref{eq:Htinnerprod}), where the functions $u$, $v$ and $N_{1}$ appearing in
  (\ref{eq:Htinnerprod}) are part of the solution obtained above. Then
  \begin{equation}\label{eq:chmfEmfFest}
    \begin{split}
      \chmfE_{2,m+1}[v,u](t) = & \ldr{(u(\cdot,t),v(\cdot,t),(\d_{t}v)(\cdot,t)),(u(\cdot,t),v(\cdot,t),(\d_{t}v)(\cdot,t))}_{H,t}\\
      = & \textstyle{\lim}_{l\rightarrow\infty}\ldr{(u(\cdot,t),v(\cdot,t),(\d_{t}v)(\cdot,t)),(u_{l}(\cdot,t),v_{l}(\cdot,t),(\d_{t}v_{l})(\cdot,t))}_{H,t}\\
      \leq & \chmfE_{2,m+1}^{1/2}[v,u](t)\chmfF_{2,m+1}^{1/2}(t),
    \end{split}
  \end{equation}
  where the second equality is due to the weak convergence of $u_{l}(\cdot,t)$ to $u(\cdot,t)$ etc., and we used the fact that
  \[
  \chmfF_{2,m+1}(t)
  =\limsup_{l\rightarrow\infty}\ldr{(u_{l}(\cdot,t),v_{l}(\cdot,t),(\d_{t}v_{l})(\cdot,t)),(u_{l}(\cdot,t),v_{l}(\cdot,t),(\d_{t}v_{l})(\cdot,t))}_{H,t}. 
  \]
  Due to (\ref{eq:chmfEmfFest}), $\chmfE_{2,m+1}[v,u](t)\leq \chmfF_{2,m+1}[v,u](t)$, so that (\ref{eq:chmfFGrLe}) yields (\ref{eq:chmfEmainbound}). 

  \textit{Strong continuity.} Next, we wish to prove strong continuity of the solution. Due to uniqueness, it is sufficient to prove continuity
  at $t=t_{0}$. In other words, it is sufficient to prove that
  \[
  \lim_{t\rightarrow t_{0}}\left(\|(u,v)(\cdot,t)-(u,v)(\cdot,t_{0})\|_{H^{m+1}(\bM)}
  +\|(\d_{t}v)(\cdot,t)-(\d_{t}v)(\cdot,t_{0})\|_{H^{m}(\bM)}\right)=0.
  \]
  Let, to this end,
  \[
  \xi(t):=[u(\cdot,t),v(\cdot,t),(\d_{t}v)(\cdot,t)],\ \ \
  \eta:=(\Ui_{0},\Vi_{0,0},\Vi_{0,1}).
  \]
  Then
  \begin{equation}\label{eq:Htzdiffexp}
    \ldr{\xi(t)-\eta,\xi(t)-\eta}_{H,t_{0}} = \ldr{\xi(t),\xi(t)}_{H,t_{0}}-2\ldr{\xi(t),\eta}_{H,t_{0}}+\ldr{\eta,\eta}_{H,t_{0}}.
  \end{equation}
  Note that the last term on the right hand side is $\chmfE_{2,m+1}[v,u](t_{0})$. Due to the weak continuity of the solution, the second term
  converges to $-2\chmfE_{2,m+1}[v,u](t_{0})$. In order to calculate the limit of the first term, note first that
  \[
  \lim_{t\rightarrow t_{0}}[\ldr{\xi(t),\xi(t)}_{H,t}-\ldr{\xi(t),\xi(t)}_{H,t_{0}}]=0.
  \]
  Moreover, due to (\ref{eq:chmfEmainbound}),
  \[
  \lim_{t\rightarrow t_{0}}\ldr{\xi(t),\xi(t)}_{H,t}\leq \chmfE_{2,m+1}[v,u](t_{0}).
  \]
  Combining these observations with (\ref{eq:Htequiv}) yields strong continuity at $t_{0}$ and thus, by the above argument, strong continuity
  on $\mI_{T}$. More specifically,
  \begin{equation}\label{eq:uvbasstrongreg}
    u\in C^{0}(\mI_{T},H^{m+1}(\bM,\rn{n_{1}})),\ \ 
    v\in C^{0}(\mI_{T},H^{m+1}(\bM,\rn{n_{2}})),\ \ 
    \d_{t}v\in C^{0}(\mI_{T},H^{m}(\bM,\rn{n_{2}})).
  \end{equation}
  Note, however, that $\d_{t}v$ here denotes the time derivative of $v$ considered as a function on $M_{T}$ and not the time derivative
  of $v$ considered as a function from $\mI_{T}$ to $H^{m}(\bM,\rn{n_2})$. Nevertheless, from (\ref{eq:uvbasstrongreg}), it can be deduced that
  \begin{equation}\label{eq:vbasstrongreg}
    v\in C^{1}(\mI_{T},H^{m}(\bM,\rn{n_{2}}));
  \end{equation}
  we leave the details to the reader. In what follows, we take steps of this type without further comment. Combining (\ref{eq:uvbasstrongreg}),
  (\ref{eq:vbasstrongreg}), (\ref{eq:themodelN}) and Corollary~\ref{cor: Ck regularity} yields
  \begin{equation}\label{eq:Nonebasstrongreg}
    N_{1}\in C^{0}(\mI_{T},H^{m+2}(\bM)).
  \end{equation}
  Combining (\ref{eq:uvbasstrongreg}), (\ref{eq:Nonebasstrongreg}), (\ref{eq:themodelNdot}) and Corollary~\ref{cor: Ck regularity} yields
  \begin{equation}\label{eq:Ntwobasstrongreg}
    N_{2}\in C^{0}(\mI_{T},H^{m+2}(\bM)).
  \end{equation}
  
  \textit{Differentiability with respect to time.} In order to derive higher order regularity in time, note that combining
  (\ref{eq:uvbasstrongreg}) and (\ref{eq:Nonebasstrongreg}) with (\ref{eq:themodelu}) yields the conclusion that
  \begin{equation}\label{eq:dtustrongreg}
    u\in C^{1}(\mI_{T},H^{m+1}(\bM,\rn{n_{1}})).
  \end{equation}
  Due to (\ref{eq:uvbasstrongreg}), (\ref{eq:Nonebasstrongreg}), (\ref{eq:Ntwobasstrongreg}) and (\ref{eq:themodelv}), it similarly follows that
  \begin{equation}\label{eq:vCtwostrong}
    v\in C^{2}(\mI_{T},H^{m-1}(\bM,\rn{n_{2}})). 
  \end{equation}
  Next, combining (\ref{eq:vbasstrongreg}) and (\ref{eq:dtustrongreg}) with (\ref{eq:themodelN}) and Corollary~\ref{cor: Ck regularity} yields
  \begin{equation}\label{eq:NoneConestrong}
    N_{1}\in C^{1}(\mI_{T},H^{m+2}(\bM)).
  \end{equation}
  Combining (\ref{eq:vbasstrongreg}), (\ref{eq:dtustrongreg}), (\ref{eq:vCtwostrong}) and (\ref{eq:NoneConestrong}) with
  (\ref{eq:themodelNdot}) and  Corollary~\ref{cor: Ck regularity} yields
  \begin{equation}\label{eq:NtwoConestrong}
    N_{2}\in C^{1}(\mI_{T},H^{m+1}(\bM)).
  \end{equation}
  Continuing as above, an inductive argument yields (\ref{seq:regularityoneuvNi}). The proposition follows. 
\end{proof}

\subsection{Continuation criterion, smooth solutions}\label{ssection:cont crit smooth sol}

Next, we derive a continuation criterion and prove the existence of smooth solutions, given smooth initial data.
The statements and proofs of the results are very similar to the corresponding statements and results of
\cite[Section~9.4, pp.~87--89]{RinCauchy}.
\begin{lemma}\label{lemma:contcriterionfdreg}
  Fix a model system, $m>n/2+2$, $\Ui_{0}\in H^{m+1}(\bM,\rn{n_{1}})$, $\Vi_{0,0}\in H^{m+1}(\bM,\rn{n_{2}})$ and $\Vi_{0,1}\in H^{m}(\bM,\rn{n_{2}})$.
  Let $t_{0}\in\mI$ and $T>0$ be such that $t_0+T\in\mI$. Assume that there is a solution $u,v,N_{1},N_{2}$ to (\ref{seq:themodel}) on
  $\mI_{T}:=[t_{0},t_{0}+T]$ satisfying
  (\ref{seq:uabietcreglocalexistence}) and (\ref{eq:uvdtvindata}). Then $u$, $v$, $N_{1}$ and $N_{2}$ satisfy (\ref{seq:regularityoneuvNi}).
  Moreover, (\ref{eq:chmfEmainbound}) is satisfied for $t_{1}\in \mI_{T}$. Let $T_{m}$ be the supremum of the times $T$ such that there is a
  solution on $[t_{0},t_{0}+T]$ satisfying (\ref{eq:uvdtvindata}) and (\ref{seq:uabietcreglocalexistence})--(\ref{seq:regularityoneuvNi}). Then
  either $T_{m}=t_{+}-t_0$ or
  \begin{equation}\label{eq:Tmcharacterization}
    \lim_{t\uparrow T_{m}}\sup_{t_{0}\leq\tau\leq t_{0}+t}\left(\|u(\cdot,\tau)\|_{C^{2}}+\|v(\cdot,\tau)\|_{C^{2}}
      +\|\d_{t}v(\cdot,\tau)\|_{C^{1}}\right)=\infty,
  \end{equation}
  where $t_{\pm}$ are defined by $\mI=(t_{-},t_{+})$.
\end{lemma}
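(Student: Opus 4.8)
The plan is to establish the continuation criterion by a standard bootstrap/open–closed argument, leaning on the local existence result of Proposition~\ref{prop:localexistence} and the uniqueness in Lemma~\ref{lemma:uniquemodelsol}. First, given a solution $u,v,N_1,N_2$ on $\mI_T=[t_0,t_0+T]$ with the regularity (\ref{seq:uabietcreglocalexistence}), I would note that the improved regularity (\ref{seq:regularityoneuvNi}) follows exactly as in the proof of Proposition~\ref{prop:localexistence}: the bounds in the high norm were derived there from the energy estimate together with Lemma~\ref{lemma:lapseestimatehighnorm} and Corollary~\ref{cor: Ck regularity}, and those arguments apply verbatim to any solution of the stated regularity once one knows it agrees with a locally constructed solution on a neighbourhood of $t_0$ — which is precisely the content of uniqueness. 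Similarly, (\ref{eq:chmfEmainbound}) holds for $t_1\in\mI_T$ because that inequality was proved for the constructed solution and, by uniqueness, the given solution coincides with it.

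Next I would set $T_m$ as in the statement and suppose $T_m<t_+-t_0$ while (\ref{eq:Tmcharacterization}) fails; the goal is a contradiction. Failure of (\ref{eq:Tmcharacterization}) gives a constant $K_0$ with $\|u(\cdot,\tau)\|_{C^2}+\|v(\cdot,\tau)\|_{C^2}+\|\d_t v(\cdot,\tau)\|_{C^1}\le K_0$ for all $\tau\in[t_0,t_0+T_m)$, i.e. $\ell(\tau)\le K_0$ on that interval, where $\ell$ is as in (\ref{eq:elldef}). Feeding this into (\ref{eq:chmfEmainbound}) yields a uniform bound $\chmfE_{2,m+1}[v,u](\tau)\le C_1$ for $\tau\in[t_0,t_0+T_m)$, and hence by (\ref{eq:Htequiv}) a uniform bound on $\|u(\cdot,\tau)\|_{H^{m+1}}+\|v(\cdot,\tau)\|_{H^{m+1}}+\|\d_t v(\cdot,\tau)\|_{H^m}$. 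I would then pick an increasing sequence $t_j\uparrow t_0+T_m$; the solutions are bounded in the relevant Sobolev norms at each $t_j$, so applying Proposition~\ref{prop:localexistence} with initial time $t_j$ (and the compact interval $\mJ$ chosen to contain a neighbourhood of $t_0+T_m$) gives a uniform existence time $T>0$, depending only on the model system, $\mJ$, $m$, $(\bM,\bh_\refer)$, $\{E_i\}$, and the uniform Sobolev bound — crucially \emph{not} on $j$. Choosing $j$ with $t_0+T_m-t_j<T$, the solution emanating from $t_j$ extends past $t_0+T_m$; by uniqueness (Lemma~\ref{lemma:uniquemodelsol}) it agrees with the original solution on the overlap, producing a solution of the required regularity on an interval strictly longer than $[t_0,t_0+T_m]$, contradicting the definition of $T_m$.

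The one technical point requiring care is matching regularity classes at the junction: the newly constructed piece from $t_j$ has the regularity (\ref{seq:uabietcreglocalexistence})–(\ref{seq:regularityoneuvNi}) on $[t_j,t_j+T]$, and one must check that gluing it to the original solution on $[t_0,t_j]$ yields a solution with that same regularity on all of $[t_0, t_0+T_m+\delta]$ for some $\delta>0$. This is where the first paragraph is used again: since the glued object is a solution of class (\ref{seq:uabietcreglocalexistence}) on the enlarged interval, the bootstrap of the first part upgrades it automatically to (\ref{seq:regularityoneuvNi}). I expect the main obstacle — really the only subtle issue — to be verifying that the existence time $T$ supplied by Proposition~\ref{prop:localexistence} is genuinely uniform as the base point $t_j$ approaches $t_0+T_m$; this hinges on the final sentence of Proposition~\ref{prop:localexistence}, namely that the bound on $\chmfE_{2,m+1}$ (and hence $T$) depends only on an upper bound for the initial Sobolev norms, together with the uniform control of those norms along $[t_0,t_0+T_m)$ obtained from (\ref{eq:chmfEmainbound}) and the contradiction hypothesis. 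The statement about $t_a$ (the left endpoint) is obtained by the same argument with time reversed.
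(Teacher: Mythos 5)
Your argument for the continuation criterion (the second paragraph, selecting an increasing sequence $t_j$, extracting a uniform existence time from Proposition~\ref{prop:localexistence} using the $\chmfE_{2,m+1}$ bound, and gluing) is essentially the same as the paper's and is correct, as is the remark at the end about the gluing point. However, the first paragraph has a genuine gap.

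You claim that the upgraded regularity (\ref{seq:regularityoneuvNi}) and the inequality (\ref{eq:chmfEmainbound}) hold on all of $\mI_T=[t_0,t_0+T]$ because the given solution agrees, by uniqueness, with the one produced by Proposition~\ref{prop:localexistence}. But Proposition~\ref{prop:localexistence} only produces a solution on a small interval $[t_0-T',t_0+T']$ whose length $T'$ depends on the size of the initial data, and there is no reason for $T'$ to reach $T$. So a single application of uniqueness gives the desired regularity and the energy inequality only near $t_0$, not on $\mI_T$. Moreover, you cannot simply run the energy estimate directly on the a priori given solution, since the hypothesis (\ref{seq:uabietcreglocalexistence}) is a $C^k$-type regularity and does not even guarantee that $\chmfE_{2,m+1}[v,u](t)$ is finite for $t>t_0+T'$ before the Sobolev regularity has been established there.

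What is missing is the connectedness argument that propagates the conclusion along $[t_0,t_0+T]$. The paper defines $\msA_{+}$ as the set of $T_*\in[0,T]$ such that (\ref{seq:uvreg}) (the relevant part of (\ref{seq:regularityoneuvNi})) and (\ref{eq:chmfEmainbound}) hold up to $t_0+T_*$, and shows this set is non-empty (Proposition~\ref{prop:localexistence}), open (apply Proposition~\ref{prop:localexistence} and uniqueness at $t_0+T_*$) and closed (the bound (\ref{eq:chmfEmainbound}) keeps the Sobolev norms under control up to the endpoint, so the local existence time does not degenerate). An equivalent fix would be to carry out, for the first part as well, the iteration you already used in the second paragraph: pick an increasing sequence of times, re-launch the local solution from each, and note the existence time is uniformly bounded below because $\ell(\tau)$ is bounded on the compact interval $\mI_T$ by (\ref{seq:uabietcreglocalexistence}). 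Either way, this propagation step must be made explicit — it is the heart of the first part of the lemma, and the second part then logically relies on it.
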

\begin{remark}
  Due to the characterisation of $T_{m}$, it is clear that $T_{m}$ is independent of $m$. Note also that a similar argument
  gives a similar conclusion in the opposite time direction. 
\end{remark}
\begin{proof}
  Consider the estimate (\ref{eq:chmfEmainbound}). Due to Proposition~\ref{prop:localexistence}, we know that this estimate holds on an interval
  containing $t_{0}$ in its interior. Let $\msA_{+}$ be the set of $0\leq T_{*}\leq T$ such that
  \begin{subequations}\label{seq:uvreg}
    \begin{align}
      u\in & C^{1}\left([t_{0},t_{0}+T_{*}],H^{m+1}(\bM,\rn{n_{1}})\right),\\
      v\in & \textstyle{\bigcap}_{j=0}^{1}C^{j}\left([t_{0},t_{0}+T_{*}],H^{m+1-j}(\bM,\rn{n_{2}})\right)
    \end{align}
  \end{subequations}  
  and such that (\ref{eq:chmfEmainbound}) holds for all $t_{1}\in [t_{0},t_{0}+T_{*}]$. Due to Proposition~\ref{prop:localexistence}, we know that there is a
  $T_{*}>0$ such that $T_{*}\in\msA_{+}$. This means that $\msA_{+}$ is a non-empty and connected subset of $[0,T]$. We need to prove that it is open and
  closed. In order to prove openness,
  assume $T_{*}\in\msA_+$. Then we can apply Proposition~\ref{prop:localexistence} to the initial data induced at $t=t_{0}+T_{*}$. This leads to a solution
  in a neighbourhood, say $I_{\e}:=[t_{0}+T_{*}-\e,t_{0}+T_{*}+\e]$ for some $\e>0$, of $t=t_{0}+T_{*}$ (which coincides with the original solution on their
  common existence interval due to uniqueness; cf. Lemma~\ref{lemma:uniquemodelsol}). 
  This means that (\ref{seq:uvreg}) holds with $T_{*}$ replaced by
  $T_{*}+\min\{\e,T-T_{*}\}$ and that (\ref{eq:chmfEmainbound}) holds for $t_{1}\in [t_{0},t_{0}+T_{*}+\min\{\e,T-T_{*}\}]$. Thus $\msA_{+}$ is open. To
  prove that it is closed, assume $T_{*}$ to belong to the closure of $\msA_{+}$. Due to (\ref{eq:chmfEmainbound}), which holds for any $t_{1}<t_{0}+T_{*}$,
  it is clear that there is, on the set $[t_{0},t_{0}+T_{*})$, a uniform bound on the $H^{m+1}$-norms of $u(\cdot,t)$ and $v(\cdot,t)$ and the $H^{m}$-norm
  of $(\d_{t}v)(\cdot,t)$. Due to the dependence of the existence time on the data, the equation etc. in Proposition~\ref{prop:localexistence}, it is
  clear that we obtain a solution beyond $t_{0}+T_{*}$, that (\ref{seq:uvreg}) holds and that (\ref{eq:chmfEmainbound}) holds for
  $t_{1}\in [t_{0},t_{0}+T_{*}]$. This means that $\msA_{+}=[0,T]$.
  Since $u$, $v$, $N_{1}$ and $N_{2}$ locally have the regularity (\ref{seq:regularityoneuvNi}), it is clear that they have this regularity on $\mI_{T}$. 
  
  Finally, to prove the characterisation of $T_{m}$, assume that $T_{m}<t_{+}-t_0$, but that (\ref{eq:Tmcharacterization}) does not hold. This means that
  $\|u(\cdot,t)\|_{C^{2}}$, $\|v(\cdot,t)\|_{C^{2}}$ and $\|\d_{t}v(\cdot,t)\|_{C^{1}}$ are uniformly bounded on $[t_{0},t_{0}+T_{m})$.
  Combining this observation with (\ref{eq:chmfEmainbound}) yields the conclusion that $\chmfE_{2,m+1}[v,u]$ is uniformly bounded on this interval.
  However, by the local existence result, Proposition~\ref{prop:localexistence}, we can then extend the solution beyond $t_{0}+T_{m}$. This contradicts
  the definition of $T_{m}$. 
\end{proof}

Finally, we are in a position to prove Corollary~\ref{cor:localexmodsys}.
\begin{proof}[Proof of Corollary~\ref{cor:localexmodsys}]
  Given smooth initial data, we can apply Proposition~\ref{prop:localexistence} and Lemma~\ref{lemma:contcriterionfdreg} for a fixed $m>n/2+2$. This
  yields a solution with an existence interval as characterised in the statement of the corollary. Moreover, since the initial data are smooth, we can
  apply Lemma~\ref{lemma:contcriterionfdreg} for every $m>n/2+2$. This means that the solution satisfies (\ref{seq:regularityoneuvNi}) for every
  $m>n/2+2$. In particular, the solution is smooth. The uniqueness follows from Lemma~\ref{lemma:uniquemodelsol}.
\end{proof}

\subsection{Stability}\label{ssection:stabilitymodelsystem}
Next, we prove Cauchy stability of solutions to (\ref{seq:themodel}); i.e., Proposition~\ref{prop:CauchyStability}.

\begin{proof}[Proof of Proposition~\ref{prop:CauchyStability}]
  In what follows, we refer to the solution $u$, $v$, $N_{1}$, $N_{2}$ to (\ref{seq:themodel}) as the \textit{background solution}. Given initial data
  as in (\ref{eq:roperid}), we refer to the corresponding solution (denoted $u_{\roper}$, $v_{\roper}$, $N_{1,\roper}$ and $N_{2,\roper}$) to (\ref{seq:themodel})
  satisfying (\ref{eq:roperidcond}) as the \textit{perturbed solution}. Denote the corresponding existence intervals obtained by appealing to Corollary~\ref{cor:localexmodsys}
  by $\mI_{\robg}$ and $\mI_{\roper}$. We are interested in controlling the distance between the two solutions. It is therefore of interest to
  derive a system of equations for $\mfu:=u_{\roper}-u$, $\mfv:=v_{\roper}-v$, $\mfN_{1}:=N_{1,\roper}-N_{1}$ and $\mfN_{2}:=N_{2,\roper}-N_{2}$.
  Compute
  \begin{equation}\label{eq:dtmfuleq}
    \d_{t}\mfu=\mff_{1}[\mfu,\mfv,\mfN_{1}],
  \end{equation}
  where
  \begin{equation}
    \begin{split}
      \mff_{1}[\mfu,\mfv,\mfN_{1}] = & \textstyle{\int}_{0}^{1}(\d_{u}f_{1})[s\mfu+u,\mfv+v,\mfN_{1}+N_{1}]ds \cdot \mfu\\
      & +\textstyle{\int}_{0}^{1}(\d_{v}f_{1})[u,s\mfv+v,\mfN_{1}+N_{1}]ds \cdot \mfv\\
      & +\textstyle{\int}_{0}^{1}(\d_{N_{1}}f_{1})[u,v,s\mfN_{1}+N_{1}]ds \cdot (\mfN_{1},E_{1}\mfN_{1},\dots,E_{n}\mfN_{1}).
    \end{split}
  \end{equation}
  Here $\d_{N_{1}}f_{1}$ denotes the derivative of $f_{1}$ with respect to all arguments corresponding to $N_{1}$; i.e., $N_{1}$, $E_{1}N_{1},\dots,E_{n}N_{1}$.  
  In particular, $\mff_{1}$ is such that $\mff_{1}[0,0,0]=0$ (setting the third variable to zero here also entails putting the $E_{i}\mfN_{1}$ arguments
  to zero). We would like to think of (\ref{eq:dtmfuleq}) as fitting into the framework introduced at the beginning of Subsection~\ref{ssection:modelsystem}.
  However, $\mff_{1}$ then has to satisfy the conditions imposed on $f_{1}$. To this end, we interpret the occurrences of $u$, $v$ and $N_{1}$ as given
  functions (contributing to the (non-explicit) $t$ and $\bx$ dependence). In order to obtain the desired bounds on the functions appearing in the system,
  it is therefore natural to replace $\mI$ appearing in the definition of a model system by an open interval $\mI_{0}$, where $\mI_{0}$ has compact closure
  contained in $\mI_{\robg}$. We can also assume that $t_{0},t_{\rofin}\in\mI_{0}$. However, even so, $\mff_{1}$ and its derivatives are typically not bounded.
  Nevertheless, this can be arranged by replacing the last factor in the first term on the right hand side (i.e., $\mfu$) by $\phi\circ\mfu$, where
  $\phi:=C_{0}^{\infty}(\rn{n_{1}},\rn{n_{1}})$ is such that $\phi(\xi)=\xi$ for $|\xi|\leq 1$; and similarly for the last factors in the second and third terms
  on the right hand side. This leads to a function $\mfg_{1}$, satisfying the requirements introduced in Subsection~\ref{ssection:modelsystem}, such that
  $\mfg_{1}[0,0,0]=0$, and such that if $|\mfu|\leq 1$, $|\mfv|\leq 1$, $|\mfN_{1}|\leq 1$ and $|E_{i}\mfN_{1}|\leq 1$, $i=1,\dots,n$, then $\mfu$, $\mfv$ and
  $\mfN_{1}$ solve (\ref{eq:dtmfuleq}) iff they solve
  \[
    \d_{t}\mfu=\mfg_{1}[\mfu,\mfv,\mfN_{1}].
  \]
  Next, (\ref{eq:themodelv}) yields
  \begin{equation*}
    \begin{split}
      \d_{t}^{2}\mfv = & \bh_{1}^{ij}[\mfu+u,\mfN_{1}+N_{1}]E_{i}E_{j}\mfv
      +(\bh_{1}^{ij}[\mfu+u,\mfN_{1}+N_{1}]-\bh_{1}^{ij}[u,N_{1}])E_{i}E_{j}v\\
      & +\g^{ij}[\mfu+u,\mfv+v,\mfN_{1}+N_{1}]E_{i}E_{j}\mfu\\
      & +(\g^{ij}[\mfu+u,\mfv+v,\mfN_{1}+N_{1}]-\g^{ij}[u,v,N_{1}])E_{i}E_{j}u\\
      & +f_{2}[\mfu+u,\mfv+v,\mfN_{1}+N_{1},\mfN_{2}+N_{2}]-f_{2}[u,v,N_{1},N_{2}].
    \end{split}
  \end{equation*}
  By arguments similar to the above, this equation is equivalent to 
  \[
  \d_{t}^{2}\mfv=\bmfh_{1}^{ij}[\mfu,\mfN_{1}]E_{i}E_{j}\mfv+\eta^{ij}[\mfu,\mfv,\mfN_{1}]E_{i}E_{j}\mfu
  +\mfg_{2}[\mfu,\mfv,\mfN_{1},\mfN_{2}],
  \]
  assuming $|\mfu|\leq 1$, $|E_{i}\mfu|\leq 1$, $|\mfv|\leq 1$, $|E_{i}\mfv|\leq 1$, $|\d_{t}\mfv|\leq 1$, $|\mfN_{1}|\leq 1$, $|\mfN_{2}|\leq 1$,
  $|E_{i}\mfN_{1}|\leq 1$, $|E_{i}\mfN_{2}|\leq 1$, $|E_{i}E_{j}\mfN_{1}|\leq 1$ and $|E_{i}E_{j}\mfN_{2}|\leq 1$, $i,j=1,\dots,n$, where $\bmfh_{1}^{ij}$,
  $\eta^{ij}$ and $\mfg_{2}$ have the same properties as $\bh_{1}^{ij}$, $\g^{ij}$ and $f_{2}$, introduced at the beginning of
  Subsection~\ref{ssection:modelsystem}, respectively. Moreover, $\mfg_{2}[0,0,0,0]=0$.

  Turning to the lapse equations, (\ref{eq:themodelN}) yields
  \begin{equation*}
    \begin{split}
      & \Delta_{\bh_{2}[\mfu+u]}\mfN_{1}+(\Delta_{\bh_{2}[\mfu+u]}-\Delta_{\bh_{2}[u]})N_{1}\\
      = & \zeta[\mfu+u,\mfv+v]\mfN_{1}+(\zeta[\mfu+u,\mfv+v]-\zeta[u,v])N_{1}\\
      & +f_{3}[\mfu+u,\mfv+v]-f_{3}[u,v].
    \end{split}
  \end{equation*}
  Again, due to arguments similar to the above, this equation is equivalent to
  \begin{equation}\label{eq:DeltamfNonel}
    \Delta_{\bmfh_{2}[\mfu]}\mfN_{1}=\theta[\mfu,\mfv]\mfN_{1}+\mfg_{3}[\mfu,\mfv],
  \end{equation}
  assuming $|\mfu|\leq 1$, $|\mfv|\leq 1$ and $|E_{i}\mfu|\leq 1$, $i=1,\dots,n$, where $\bmfh_{2}^{ij}$, $\theta$ and $\mfg_{3}$ have the same properties
  as $\bh_{2}^{ij}$, $\zeta$ and $f_{3}$, introduced at the beginning of Subsection~\ref{ssection:modelsystem}, respectively. Moreover, $\mfg_{3}[0,0]=0$.
  Finally, (\ref{eq:themodelNdot}) yields an equation which, due to arguments similar to the above, is equivalent to
  \[
  \Delta_{\bmfh_{2}[\mfu]}\mfN_{2}=\theta[\mfu,\mfv]\mfN_{2}+\mfg_{4}[\mfu,\mfv,\mfN_{1}],
  \]
  assuming $|\mfu|\leq 1$, $|\mfv|\leq 1$, $|E_{i}\mfu|\leq 1$, $|E_{i}\mfv|\leq 1$, $|\d_t\mfv|\leq 1$, $|\mfN_{1}|\leq 1$, $|E_{i}\mfN_{1}|\leq 1$ and
  $|E_{i}E_{j}\mfN_{1}|\leq 1$, $i,j=1,\dots,n$, where $\bmfh_{2}^{ij}$ and $\theta$ are the same functions as in the case of (\ref{eq:DeltamfNonel}) and
  $\mfg_{4}$ has the same properties as $f_{4}$, introduced at the beginning of Subsection~\ref{ssection:modelsystem}. Moreover, $\mfg_{4}[0,0,0]=0$. To
  summarise, we obtain the following system
  \begin{subequations}\label{seq:stabsys}
    \begin{align}
      \d_{t}\mfu = & \mfg_{1}[\mfu,\mfv,\mfN_{1}],\\
      \d_{t}^{2}\mfv = & \bmfh_{1}^{ij}[\mfu,\mfN_{1}]E_{i}E_{j}\mfv+\eta^{ij}[\mfu,\mfv,\mfN_{1}]E_{i}E_{j}\mfu+\mfg_{2}[\mfu,\mfv,\mfN_{1},\mfN_{2}],\\
      \Delta_{\bmfh_{2}[\mfu]}\mfN_{1} = & \theta[\mfu,\mfv]\mfN_{1}+\mfg_{3}[\mfu,\mfv],\\
      \Delta_{\bmfh_{2}[\mfu]}\mfN_{2} = & \theta[\mfu,\mfv]\mfN_{2}+\mfg_{4}[\mfu,\mfv,\mfN_{1}].
    \end{align}
  \end{subequations}
  Note here that $\bmfh_{1}^{ij}$, $\bmfh_{2}^{ij}$ and $\theta$ satisfy (\ref{eq:lambdabd}) and (\ref{eq:lambdazetalowerandupperbd}) with the same values
  of $\lambda$ and $\lambda_{\zeta}$ as $\bh_{1}^{ij}$, $\bh_{2}^{ij}$ and $\zeta$.  
  Since this system satisfies the conditions described at the beginning of Subsection~\ref{ssection:modelsystem} (with $\mI$ replaced by $\mI_{0}$),
  Corollary~\ref{cor:localexmodsys} applies. Given smooth initial data at $t_{0}\in\mI_{0}$, say $\mfu_{0}$, $\mfv_{0,0}$ and $\mfv_{0,1}$, we thus obtain
  a smooth solution $\mfu$, $\mfv$, $\mfN_{1}$ and $\mfN_{2}$ to (\ref{seq:stabsys}). Denote the corresponding existence interval by $\mI_{\max}\subset\mI_{0}$.
  Restricting the system given by (\ref{seq:stabsys}) to the subinterval $\mI_{\max}$, it still satisfies the conditions described at the beginning of
  Subsection~\ref{ssection:modelsystem} (with $\mI$ replaced by $\mI_{\max}$). The advantage of taking this perspective is that we can then think of the
  solution $\mfu$, $\mfv$, $\mfN_{1}$ and $\mfN_{2}$ as arising from an application of Lemma~\ref{lemma:ubvbNbsol}; i.e., $u_{a}=u_{b}=\mfu$,
  $v_{a}=v_{b}=\mfv$ etc. Moreover, we can appeal to (\ref{eq:NobSchauderfz}) and (\ref{eq:NtwobSchauderfz}) in order to conclude that there is a
  continuous and monotonically increasing function $\mB:\ro_{+}\rightarrow\ro_{+}$ such that
  \begin{equation}\label{eq:mfNonetwobSchauderest}
    \begin{split}
      \|\mfN_{1}(\cdot,t)\|_{C^{2,1}}+\|\mfN_{2}(\cdot,t)\|_{C^{2,1}} \leq & \mB[\mfl(t)]\mfl(t)
    \end{split}    
  \end{equation}
  for all $t\in\mI_{\max}$, where
  \begin{equation}\label{eq:mfldef}
    \mfl(t):=\|\mfu(\cdot,t)\|_{C^{2}}+\|\mfv(\cdot,t)\|_{C^{2}}+\|\d_{t}\mfv(\cdot,t)\|_{C^{1}}.
  \end{equation}
  Here $\mB$ only depends on bounds on $\bmfh_{2}^{ij}$, $\theta$, $\mfg_{3}$ and $\mfg_{4}$ and
  their derivatives up to order two, one, two and two respectively; $(\bM,\bh_{\refer})$; the covering; $\{E_{i}\}$; the $\lambda$ appearing in
  (\ref{eq:lambdabd}); and the $\lambda_{\zeta}$ appearing in (\ref{eq:lambdazetalowerandupperbd}). Returning to (\ref{eq:smfnadef}) and
  (\ref{eq:mfnabdef}), it is thus clear that $\mfn_{a,b}$ is bounded by the right hand side of (\ref{eq:mfNonetwobSchauderest}) (with a trivial modification
  of $\mB$). Next, Lemma~\ref{lemma:lapseestimatehighnorm} and Remark~\ref{remark:Nbmainestimateimp} yield the conclusion that there is a
  continuous and monotonically increasing function $\mC:\ro_{+}\rightarrow\ro_{+}$ such that
  \begin{equation}\label{eq:mfNonetwoSobest}
    \begin{split}
      \|\mfN_{1}\|_{H^{m+2}(\bM_{t})}+\|\mfN_{2}\|_{H^{m+2}(\bM_{t})} \leq & \mC[\mfl(t)](\|\mfu\|_{H^{m+1}(\bM_{t})}+\|\mfv\|_{H^{m+1}(\bM_{t})}+\|\d_{t}\mfv\|_{H^{m}(\bM_{t})})
    \end{split}    
  \end{equation}
  for all $t\in\mI_{\max}$. Here $\mC$ only depends on the covering; $(\bM,\bh_{\refer})$; $\{E_{i}\}$; $m$; $\lambda$; $\lambda_{\zeta}$; a bound on up to
  $m+1$ derivatives of $\theta$, $\mfg_{3}$ and $\mfg_{4}$; and up to $m+2$ derivatives of $\bmfh_{2}^{ij}$.
  Returning to (\ref{eq:mfNadef}) and
  (\ref{eq:mfnabdef}), it is thus clear that $\mfN_{a,b}$ is bounded by the right hand side of (\ref{eq:mfNonetwoSobest}) (with a trivial modification
  of $\mC$). Returning to (\ref{eq:tEtwompodiffest}) with this information at hand yields a continuous and monotonically increasing function
  $\mC_{m}$ such that 
  \begin{equation}\label{eq:mftEtwompodiffest}
    \begin{split}
      |\chE_{2,m+1}[\mfv,\mfu](t_{1})-\chE_{2,m+1}[\mfv,\mfu](t_{0})|
      \leq & \big|\textstyle{\int}_{t_{0}}^{t_{1}}\mC_{m}[\mfl(t)]\chE_{2,m+1}[\mfv,\mfu](t)dt\big|.
    \end{split}
  \end{equation}
  Here $\mC_{m}$ only depends on bounds on up to $m+2$ derivatives of $\theta$, $\mfg_{1}$, $\mfg_{2}$, $\mfg_{3}$, $\mfg_{4}$, $\bmfh_{1}^{ij}$,
  $\bmfh_{2}^{ij}$ and $\eta^{ij}$; $(\bM,\bh_{\refer})$; the covering; $\{E_{i}\}$; $m$; $\lambda$; and $\lambda_{\zeta}$. In what follows, it is of interest
  to note that, due to Sobolev embedding, there is a constant $K_{\mfl}\geq 1$, depending only on $\lambda$, $\{E_{i}\}$ and $(\bM,\bh_{\refer})$, such that
  \begin{equation}\label{eq:mflSobemb}
    \mfl(t)\leq K_{\mfl}\chE_{2,m+1}^{1/2}[\mfv,\mfu](t)
  \end{equation}
  for all $t\in\mI_{\max}$. Combining (\ref{eq:mftEtwompodiffest}) with Gr\"{o}nwall's lemma yields
  \begin{equation}\label{eq:CauchyEnergyEstimate}
    \chE_{2,m+1}[\mfv,\mfu](t_{1})\leq \chE_{2,m+1}[\mfv,\mfu](t_{0})\exp\big(\big|\textstyle{\int}_{t_{0}}^{t_{1}}\mC_{m}[\mfl(t)]dt\big|\big)
  \end{equation}
  for all $t\in\mI_{\max}$. Let $C_{m}:=\mC_{m}(1)$ and let $t_{\rofin}$ be as in the statement of the proposition. Assume that $t_{\rofin}\geq t_{0}$ (the
  argument in the case that $t_{\rofin}\leq t_{0}$ is similar and left to the reader) and that 
  \begin{equation}\label{eq:chEinconCauchy}
    \chE_{2,m+1}[\mfv,\mfu](t_{0})\leq (3K_{\mfl})^{-2}\exp(-C_{m}|t_{\rofin}-t_{0}|).
  \end{equation}
  Let $\msA$ be the set of $t\in [t_{0},t_{\rofin}]$ such that $[t_{0},t]\subset\mI_{\max}$ and such that
  \begin{equation}\label{eq:chECauchyboot}
    \chE_{2,m+1}[\mfv,\mfu](s)\leq (2K_{\mfl})^{-2}
  \end{equation}
  for all $s\in [t_{0},t]$. Due to (\ref{eq:chEinconCauchy}), it is clear that $\msA$ is non-empty. It is also connected by definition. What remains
  is to prove that it is open and closed. In order to prove that it is open, assume $t\in\msA$. Then $[t_{0},t]\subset\mI_{\max}$ by definition. In
  particular, the solution exists beyond $t$. Moreover, combining (\ref{eq:mflSobemb}) and (\ref{eq:chECauchyboot}) yields
  \[
    \mfl(s)\leq 1/2
  \]
  for all $s\in [t_{0},t]$. By continuity, there is therefore an $\e>0$ such that $[t_{0},t+\e)\subset\mI_{\max}$ and $\mfl(s)\leq 1$ for all
  $s\in [t_{0},t+\e)$. This means, in particular, that $\mC_{m}[\mfl(s)]\leq C_{m}$ for all $s\in [t_{0},t+\e)$. Combining this observation
  with (\ref{eq:CauchyEnergyEstimate}) and (\ref{eq:chEinconCauchy}) yields
  \[
    \chE_{2,m+1}[\mfv,\mfu](s)\leq (3K_{\mfl})^{-2}\exp(-C_{m}|t_\rofin-t_{0}|)
    \exp\big(\big|\textstyle{\int}_{t_{0}}^{s}\mC_{m}[\mfl(t)]dt\big|\big)\leq (3K_{\mfl})^{-2}
  \]
  for all $s\in [t_{0},t+\e)\cap [t_0,t_\rofin]$. This means that the bootstrap assumption (\ref{eq:chECauchyboot}) is satisfied for all
  $s\in [t_{0},t+\e)\cap [t_0,t_\rofin]$. Thus
  $\msA$ is open. Finally, we need to prove that it is closed. Assume, to this end that $t_{k}\in\msA$ is an increasing sequence converging to
  $t_{*}\in\mI_{0}$. There are two possibilities. Either $[t,t_{*}]\subset \mI_{\max}$ or $t_{*}\notin \mI_{\max}$. In the former case, $t_{*}\in\msA$
  due to continuity. In case $t_{*}\notin \mI_{\max}$, we have to have
  \begin{equation}\label{eq:contcritCauchy}
    \textstyle{\limsup}_{t\rightarrow t_{*}}\mfl(t)=\infty
  \end{equation}
  due to the Corollary~\ref{cor:localexmodsys}. On the other hand, since $t_{k}\in\msA$, combining (\ref{eq:mflSobemb}) and (\ref{eq:chECauchyboot})
  for $t\in [t_{0},t_{k}]$ yields the conclusion that $\mfl(t)\leq 1/2$ for $t\in [t_{0},t_{k}]$. Since $t_{k}\rightarrow t_{*}$, this means that
  $\mfl(t)\leq 1/2$ for all $t\in [t_{0},t_{*})$. This conclusion contradicts (\ref{eq:contcritCauchy}). Thus $t_{*}\in \msA$, and $\msA$ is closed.
  This means that $\msA=[t_{0},t_{\rofin}]$. By the above arguments, it also follows that $\mfl(t)\leq 1/2$ for all $t\in [t_{0},t_{\rofin}]$. Combining
  this observation with (\ref{eq:CauchyEnergyEstimate}) yields the conclusion that
  \begin{equation}\label{eq:CauchymapLipschitz}
    \chE_{2,m+1}[\mfv,\mfu](t)\leq \chE_{2,m+1}[\mfv,\mfu](t_{0})\exp\left(C_{m}|t_{\rofin}-t_{0}|\right)
  \end{equation}
  for all $t\in [t_{0},t_{\rofin}]$. Combining this observation with (\ref{eq:mfNonetwobSchauderest}) and (\ref{eq:mflSobemb}), it is clear that
  there is an $\e_{0}>0$ such that if $\chE_{2,m+1}[\mfv,\mfu](t_{0})\leq\e_{0}$, then
  \[
  \mfl(t)+\|\mfN_{1}(\cdot,t)\|_{C^{2,1}}+\|\mfN_{2}(\cdot,t)\|_{C^{2,1}}\leq 1
  \]
  for all $t\in [t_{0},t_{\rofin}]$. This means, in particular, that the solution $\mfu$, $\mfv$, $\mfN_{1}$ and $\mfN_{2}$ to (\ref{seq:stabsys})
  corresponds to a solution $u_{\roper}=u+\mfu$, $v_{\roper}=v+\mfv$, $N_{1,\roper}=N_{1}+\mfN_{1}$ and $N_{2,\roper}=N_{2}+\mfN_{2}$ to
  (\ref{seq:themodel}). Combining this observation with (\ref{eq:CauchymapLipschitz}), it is clear that given $\e>0$, there is a $\delta>0$
  such that the statement of the proposition holds. 
\end{proof}

\section{Gauge choice and constraint propagation}\label{section:applgentheorytoEFE}

Our next goal is to apply the general theory to gauge fixed versions of Einstein's equations. In order to clarify our gauge choice, let, to begin with,
$(M,g)$ be a Lorentz manifold, where $M:=\bM\times \mI$, $\bM$ is an $n$-dimensional connected (and typically oriented and closed) manifold and $\mI$ is an
open interval. Assume, moreover, the metrics induced on the leaves $\bM_t:=\bM\times\{t\}$, say $\bge$, to be Riemannian. Then the
\textit{lapse function} $N$ and the \textit{shift vector field} $\chi$ are defined by
\begin{equation}\label{eq:LapseShiftdef}
  \d_{t}=NU+\chi,
\end{equation}
where $\d_{t}$ denotes differentiation with respect to the second variable in the division $\bM\times \mI$, $U$ denotes the future directed unit normal
to the leaves $\bM_{t}$ and $\chi$ is required to be tangential to $\bM_{t}$. The gauge choices we are interested in here are all such that the shift
vector field vanishes. This means that the spacetime metric takes the form
\begin{equation}\label{eq:g zero shift}
  g=-N^2dt\otimes dt+\bge,
\end{equation}
where the lapse function $N$ is strictly positive. In what follows, we typically assume $\bM$ to have a global frame $\{E_i\}$, though this is not
absolutely necessary. Next, recall that if $\chi=0$, then
\begin{equation}\label{dtbkijgeo}
  \d_{t}\bk_{ij}=\dmfk_{ij},
\end{equation}
where
\begin{equation}\label{eq:dmfk def}
  \dmfk_{ij}:=\bna_{i}\bna_{j}N+2N\bk_{i}^{\phantom{i}m}\bk_{mj}-N(\tr_{\bge}\bk)\bk_{ij}+NR_{ij}-N\bR_{ij};
\end{equation}
see, e.g., the first displayed equation after \cite[(263), p.~64]{RinGeo}. Here the indices refer to a fixed frame on $\bM$, say $\{E_{i}\}$.
It is important to note that (\ref{dtbkijgeo}) holds for any foliation as above. In particular, it holds irrespective of whether Einstein's equations
(and additional conditions, such as the CMC condition) are satisfied or not. In (\ref{eq:dmfk def}), $\bna$ denotes the Levi-Civita connection of the
Riemannian metric $\bge$ induced on the leaves of the foliation; $\bk$ denotes the induced second fundamental form; $R_{ij}$ denotes the spatial
components of the spacetime Ricci tensor; and $\bR_{ij}$ denotes the components of the Ricci tensor associated with $\bge$. Specialising to the
Einstein-non-linear scalar field setting with a CMC foliation, the equation we actually wish to solve is
\begin{equation}\label{eq:dtbkij}
  \d_{t}\bk_{ij}=\dka_{ij},
\end{equation}
where
\begin{equation}\label{eq:dkaijdefinition}
  \dka_{ij}:=\bna_{i}\bna_{j}N+2N\bk_{i}^{\phantom{i}m}\bk_{mj}-Nt^{-1}\bk_{ij}+N\bna_{i}\phi\bna_{j}\phi
  +\tfrac{2N}{n-1}(V\circ\phi)\bge_{ij}-N\bR_{ij}.
\end{equation}
To arrive at (\ref{eq:dtbkij}) from (\ref{dtbkijgeo}), we made the substitutions $\tr_{\bge}\bk=1/t$ (which is the CMC condition) and
\[
  \mathrm{Ric}=d\phi\otimes d\phi+\tfrac{2}{n-1}(V\circ\phi)g,
\]
which are a consequence of the Einstein-non-linear scalar field equations. In order to relate (\ref{dtbkijgeo}) and (\ref{eq:dtbkij}), it
is convenient to introduce
\begin{equation}\label{eq:msDijdef}
  \msD_{ij}:=N^{-1}(\dmfk_{ij}-\dka_{ij})=-\vartheta\bk_{ij}+R_{ij}-\bna_{i}\phi\bna_{j}\phi-\tfrac{2}{n-1}(V\circ\phi)\bge_{ij},
\end{equation}
where $\theta:=\tr_{\bge}\bk$ and $\vartheta:=\theta-1/t$. Next, we introduce an orthonormal frame with respect to which we express the equations. 

\subsection{Orthonormal frame}
The orthonormal frame we use here is defined by an evolution equation; see (\ref{eq:evoluteqframe}) below. We begin by deriving conditions on the
evolution equation that ensure that the frame remains orthonormal. 

\begin{lemma}\label{lemma:frameorth}
  Let $(M,g)$ be a Lorentz manifold, where $M=\bM\times \mI$ and $\mI$ is an open interval. Assume $\d_{t}$ to be timelike and orthogonal to the leaves
  $\bM_{t}:=\bM\times\{t\}$ of the foliation, and let $N>0$ be such that $g(\d_{t},\d_{t})=-N^{2}$; cf. (\ref{eq:LapseShiftdef}). Assume that there
  is a global frame $\{E_{i}\}$ of the tangent space of $\bM$. Let $\{e_{I}\}$, $I=1,\dots,n:=\rodim{\bM}$, be a frame for the leaves of the foliation.
  Assume the frame to be orthonormal on one leaf and to satisfy the evolution equations
  \begin{equation}\label{eq:evoluteqframe}
    \d_{t}e_{I}^{i}=f_{I}^{J}e_{J}^{i},
  \end{equation}
  for some functions $f_{I}^{J}$, where $e_{I}^{i}$ is defined by the condition that $e_{I}=e_{I}^{i}E_{i}$. Then the frame $\{e_{I}\}$ is orthonormal
  on all the leaves if and only if the components of the symmetric part of the matrix with components $f_{I}^{J}$ are given by $-N\bk_{IJ}$;
  i.e., if
  \begin{equation}\label{eq:fIJsymmpart}
    f_{I}^{J}+f_{J}^{I}=-2N\bk_{IJ},
  \end{equation}
  where $\bk$ denotes the second fundamental form of the leaves of the foliation and $\bk_{IJ}:=\bk(e_{I},e_{J})$. 
\end{lemma}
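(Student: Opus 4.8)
The plan is to track the time evolution of the inner products $g(e_I, e_J)$ along the foliation and show that the condition (\ref{eq:fIJsymmpart}) is precisely what forces these to remain equal to $\delta_{IJ}$. First I would recall that since $\partial_t$ is orthogonal to the leaves $\bM_t$, it is proportional to the future unit normal $U$, namely $\partial_t = NU$ (zero shift). The key geometric identity is that for a frame $\{e_I\}$ tangent to the leaves, the Lie derivative of the induced metric along $U$ relates to the second fundamental form: $(\mathcal{L}_U \bge)(e_I, e_J) = -2\bk(e_I, e_J)$ with the sign convention $\bk(X,Y) = -g(\nabla_X U, Y)$ (or $+2\bk$ with the opposite convention; I would fix the convention to match $\dmfk$ in (\ref{eq:dmfk def})). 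Equivalently, $\partial_t\big(g(e_I, e_J)\big)$ can be computed directly: writing $g_{IJ}(t) := g(e_I(t), e_J(t))$ where $e_I(t) = e_I^i(t) E_i$, we have
\[
  \partial_t g_{IJ} = (\partial_t e_I^i) g(E_i, e_J) + e_I^i \big(\partial_t g(E_i, e_J)\big)
  = f_I^K g(e_K, e_J) + (\text{terms from } \partial_t g \text{ and } \partial_t e_J).
\]

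The cleanest route is to split $\partial_t g_{IJ}$ into the contribution coming from the evolution of the frame coefficients (which is $f_I^K g_{KJ} + f_J^K g_{IK}$ by (\ref{eq:evoluteqframe})) and the contribution coming from the $t$-dependence of $g$ acting on fixed vector fields tangent to the leaves. For the latter, I would use the standard formula for the evolution of the spatial metric under a zero-shift foliation, $\partial_t \bge(X,Y) = -2N\bk(X,Y)$ for $X,Y$ tangent to $\bM_t$ (this is just the definition of the second fundamental form composed with $\partial_t = NU$; see the references the paper already cites for (\ref{dtbkijgeo})). Putting these together,
\[
  \partial_t g_{IJ} = f_I^K g_{KJ} + f_J^K g_{IK} - 2N\bk_{IJ},
\]
where on the right $\bk_{IJ} := \bk(e_I, e_J)$ and I am implicitly using that the $e_I$ themselves are the evolving frame (so the $-2N\bk$ term is evaluated on the current frame, not a fixed one — one should be slightly careful here, but since we will be working at points where $g_{IJ} = \delta_{IJ}$ the distinction is harmless for the ODE argument).

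Now the argument concludes by a uniqueness-of-ODE-solutions observation. Suppose (\ref{eq:fIJsymmpart}) holds, i.e. $f_I^K + f_K^I = -2N\bk_{IK}$. I claim $g_{IJ}(t) \equiv \delta_{IJ}$. Indeed, plug $g_{KJ} = \delta_{KJ}$ into the right-hand side: $f_I^K \delta_{KJ} + f_J^K \delta_{IK} - 2N\bk_{IJ} = f_I^J + f_J^I - 2N\bk_{IJ} = 0$ by hypothesis. So the constant function $g_{IJ} = \delta_{IJ}$ solves the linear ODE system $\partial_t g_{IJ} = f_I^K g_{KJ} + f_J^K g_{IK} - 2N\bk_{IJ}$ with the given initial condition (orthonormality on one leaf); by uniqueness for linear ODEs with continuous coefficients, it is the only solution, hence the frame stays orthonormal. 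Conversely, if the frame is orthonormal on all leaves, then $g_{IJ}(t) \equiv \delta_{IJ}$, so $\partial_t g_{IJ} \equiv 0$, and feeding this back into the evolution equation gives $0 = f_I^J + f_J^I - 2N\bk_{IJ}$, which is (\ref{eq:fIJsymmpart}). (This mirrors the computation already displayed in the excerpt around (\ref{eq:omega IJ minus delta IJ ev eq}), which is essentially the dual statement for $\omega^J(e_I) - \delta_I^J$.)

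The main obstacle — really the only subtle point — is pinning down the correct sign and the correct statement of $\partial_t \bge(X,Y) = -2N\bk(X,Y)$ and making sure it is consistent with the sign convention for $\bk$ used throughout the paper (the one appearing in (\ref{eq:dmfk def}) and (\ref{eq:dkaijdefinition})). I would resolve this by citing the same source the paper uses for (\ref{dtbkijgeo}) (i.e. \cite{RinGeo}), where the convention is fixed, rather than re-deriving it. A secondary bookkeeping point is that the $-2N\bk$ term in the evolution of $g_{IJ}$ is naturally expressed on a fixed frame while the frame-coefficient terms involve the moving frame; the honest way to handle this is to write everything in terms of the fixed background frame $\{E_i\}$, i.e. track $g_{ij}^{\mathrm{ambient}} := g(E_i, E_j)$ and $e_I^i$ separately — the evolution of $g(E_i,E_j)$ is $-2N\bk(E_i,E_j)$ with no frame-coefficient terms, and then $g(e_I,e_J) = e_I^i e_J^j g(E_i,E_j)$ has a product-rule expansion whose three pieces reassemble into exactly the displayed ODE. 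Once the convention is fixed the rest is a two-line linear-ODE uniqueness argument.
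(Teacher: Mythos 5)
Your overall strategy — time-differentiate $g(e_I,e_J)$, expand by the product rule into the frame-coefficient part plus a term coming from the $t$-dependence of $g$ restricted to the leaves, and then close with a linear-ODE uniqueness argument — is exactly the paper's proof (the paper states it in one line without spelling out the computation). The reduction to a linear inhomogeneous ODE in the $n^2$ unknowns $g_{IJ}(t)$, and the observation that $\bk_{IJ}(t)$ is a known $t$-dependent coefficient determined by the ambient metric and the evolving frame rather than something that feeds back through $g_{IJ}$ itself, are the right things to say.

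There is, however, a concrete sign error which you correctly flag as the delicate point but do not actually resolve, and which causes your written arithmetic to fail. You use $\partial_t\bge(X,Y)=-2N\bk(X,Y)$, so your evolution equation is $\partial_t g_{IJ}=f_I^Kg_{KJ}+f_J^Kg_{IK}-2N\bk_{IJ}$. Plugging $g_{KJ}=\delta_{KJ}$ gives $f_I^J+f_J^I-2N\bk_{IJ}$, and under the hypothesis (\ref{eq:fIJsymmpart}), i.e.\ $f_I^J+f_J^I=-2N\bk_{IJ}$, this equals $-4N\bk_{IJ}$, not $0$; so $\delta_{IJ}$ is not a fixed point and the uniqueness argument does not apply. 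With the paper's conventions the correct identity is $\partial_t\bge(X,Y)=+2N\bk(X,Y)$. A quick sanity check: the paper normalises $\theta=\bk_{II}=1/t>0$; for an FLRW model $g=-dt^2+a^2\delta$ with $a\sim t^{1/n}$ one has $\theta=n\dot a/a=1/t>0$ and $\partial_t\bge_{ij}=2a\dot a\,\delta_{ij}>0$, so the proportionality between $\partial_t\bge$ and $\bk$ must carry a $+$ sign. (Equivalently, this is the convention $\bk(X,Y)=g(\nabla_XU,Y)$ with $U=e_0=N^{-1}\partial_t$ the future unit normal; your stated convention $\bk(X,Y)=-g(\nabla_XU,Y)$ is the opposite one, which would flip the sign in (\ref{eq:fIJsymmpart}).) Once the sign is corrected to $+2N\bk_{IJ}$, the fixed-point check reads $f_I^J+f_J^I+2N\bk_{IJ}=-2N\bk_{IJ}+2N\bk_{IJ}=0$, and both directions of your argument go through as you intended.
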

\begin{remark}\label{remark:fIJbreakdown}
  It is important to note that any choice of antisymmetric part of the matrix with components $f_{I}^{J}$ is compatible with the frame $\{e_{I}\}$
  being orthonormal. In what follows, we write $f_{I}^{J}=s_{I}^{J}+a_{I}^{J}$, where $s_{I}^{J}=s_{J}^{I}$ and $a_{I}^{J}=-a_{J}^{I}$. By the lemma,
  $s_{I}^{J}=-N\bk_{IJ}$.
\end{remark}
\begin{remark}
  If $\{\omega^{I}\}$ is the frame dual to $\{e_{I}\}$ and $\omega^{I}_{i}$ is defined by the condition that
  $\omega^{I}=\omega^{I}_{i}\eta^{i}$, where $\{\eta^{i}\}$ is the frame dual to $\{E_{i}\}$, then $\de_{I}^{J}=\omega^{J}_{i}e^{i}_{I}$, so that
  \[
    \d_{t}\omega^{I}_{i}=-f^{I}_{J}\omega^{J}_{i}.
  \]  
\end{remark}
\begin{proof}
  By time differentiating the condition $\de_{IJ}=g(e_{I},e_{J})$, it can be verified that the condition that the frame remains
  orthonormal is equivalent to (\ref{eq:fIJsymmpart}). The lemma follows. 
\end{proof}

Next, define $\bga_{IJ}^{K}$ by the relation $[e_{I},e_{J}]=\bga_{IJ}^{K}e_{K}$. It is of interest to derive evolution equations for $\bga_{IJ}^{K}$.
\begin{lemma}\label{lemma:Jacobi identity}
  Given assumptions and notation as in Lemma~\ref{lemma:frameorth}, define $\bga_{IJ}^{K}$ as above. Then
  \begin{equation}\label{eq:dtbgaIJK}
    \begin{split}
      \d_{t}\bga_{IJ}^{K} = & e_{I}(f_{J}^{K})-e_{J}(f_{I}^{K})+f_{I}^{M}\bga_{MJ}^{K}+f_{J}^{M}\bga_{IM}^{K}-f_{M}^{K}\bga_{IJ}^{M}.
    \end{split}
  \end{equation}  
\end{lemma}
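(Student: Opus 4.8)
The plan is to derive \eqref{eq:dtbgaIJK} directly by time-differentiating the defining relation $[e_I,e_J]=\bga_{IJ}^K e_K$ and feeding in the evolution equation \eqref{eq:evoluteqframe}. The key observation is that $\d_t$ and the spatial vector fields $e_I$ do \emph{not} commute as operators on functions, but the failure to commute is governed precisely by \eqref{eq:evoluteqframe}: since $e_I=e_I^i E_i$ and the $E_i$ are $t$-independent, $\d_t(e_I\psi)=\d_t(e_I^i)E_i\psi+e_I^i E_i(\d_t\psi)=f_I^J e_J\psi+e_I(\d_t\psi)$, i.e. $[\d_t,e_I]=f_I^J e_J$ as operators. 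This is the only analytic input needed; everything else is bookkeeping.

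First I would write $\d_t\big([e_I,e_J]\psi\big)$ in two ways. On one hand, $[e_I,e_J]\psi=(\bga_{IJ}^K e_K)\psi$, so $\d_t\big([e_I,e_J]\psi\big)=(\d_t\bga_{IJ}^K)e_K\psi+\bga_{IJ}^K\,\d_t(e_K\psi)=(\d_t\bga_{IJ}^K)e_K\psi+\bga_{IJ}^K\big(f_K^M e_M\psi+e_K(\d_t\psi)\big)$. On the other hand, expanding $[e_I,e_J]\psi=e_I(e_J\psi)-e_J(e_I\psi)$ and applying $\d_t$ termwise using $[\d_t,e_I]=f_I^Je_J$ repeatedly, one gets $\d_t\big(e_I(e_J\psi)\big)=f_I^M e_M(e_J\psi)+e_I\big(f_J^M e_M\psi+e_J(\d_t\psi)\big)=f_I^M e_M(e_J\psi)+e_I(f_J^M)e_M\psi+f_J^M e_I(e_M\psi)+e_I(e_J(\d_t\psi))$, and similarly with $I,J$ swapped. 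Subtracting the swapped version, the pure $\d_t\psi$ terms reassemble into $[e_I,e_J](\d_t\psi)=\bga_{IJ}^K e_K(\d_t\psi)$, which exactly cancels the $\bga_{IJ}^K e_K(\d_t\psi)$ term on the other side.

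What remains after that cancellation is an identity of first-order operators applied to an arbitrary $\psi$, namely
\begin{equation*}
  (\d_t\bga_{IJ}^K)e_K+\bga_{IJ}^K f_K^M e_M = f_I^M[e_M,e_J]+e_I(f_J^K)e_K-f_J^M[e_M,e_I]-e_J(f_I^K)e_K,
\end{equation*}
where I have used $f_I^M e_M(e_J\cdot)-f_J^M e_J(e_M\cdot)$-type regrouping; expanding $[e_M,e_J]=\bga_{MJ}^K e_K$ and $[e_M,e_I]=\bga_{MI}^K e_K$ turns the right-hand side into a pure first-order operator $\sum_K(\,\cdots)e_K$. Since $\{e_K\}$ is a frame, I can read off the coefficient of each $e_K$ and obtain
\begin{equation*}
  \d_t\bga_{IJ}^K = e_I(f_J^K)-e_J(f_I^K)+f_I^M\bga_{MJ}^K-f_J^M\bga_{MI}^K-\bga_{IJ}^M f_M^K.
\end{equation*}
Relabelling $\bga_{MI}^K=-\bga_{IM}^K$ in the fourth term gives exactly \eqref{eq:dtbgaIJK}. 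The only point requiring a little care — and the main (mild) obstacle — is keeping the repeated-commutator bookkeeping straight so that the second-order terms $f_I^M e_M e_J$ etc. are correctly matched against the $[e_M,e_J]$ contributions and the antisymmetrization is performed consistently; no inequality, no hard estimate, and no earlier result beyond Lemma~\ref{lemma:frameorth}'s setup is needed.
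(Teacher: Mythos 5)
Your proof is correct, and the essential analytic input is the same as the paper's: both start from the operator commutator $[\d_t,e_I]=f_I^J e_J$ (a direct consequence of (\ref{eq:evoluteqframe}) and the $t$-independence of the $E_i$), time-differentiate the relation $[e_I,e_J]=\bga_{IJ}^K e_K$, and read off (\ref{eq:dtbgaIJK}) by comparing coefficients of $e_K$ in the resulting first-order operator identity. The difference is organisational. You expand $\d_t\big([e_I,e_J]\psi\big)$ by hand and manually track the cancellation of the $\bga_{IJ}^K e_K(\d_t\psi)$ terms and the regrouping of $f_I^M e_M e_J$-type second-order pieces into commutators. The paper instead packages the same computation in one line by observing that $\d_t$ is itself a vector field on $M=\bM\times\mI$ and invoking the Jacobi identity for the triple $\d_t,e_I,e_J$:
\begin{equation*}
0=[\d_t,[e_I,e_J]]+[e_I,[e_J,\d_t]]+[e_J,[\d_t,e_I]].
\end{equation*}
Substituting $[e_I,e_J]=\bga_{IJ}^K e_K$ and $[\d_t,e_I]=f_I^J e_J$ into each bracket then yields
\begin{equation*}
[\d_t,[e_I,e_J]]=(\d_t\bga_{IJ}^K+\bga_{IJ}^M f_M^K)e_K,\qquad
[e_J,[\d_t,e_I]]=[e_J,f_I^M e_M]=e_J(f_I^K)e_K+f_I^M\bga_{JM}^K e_K,
\end{equation*}
and the analogue for $[e_I,[e_J,\d_t]]$, from which (\ref{eq:dtbgaIJK}) follows at once. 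Your version makes explicit which cancellations are occurring — useful if you want to see the mechanism — while the Jacobi-identity route is shorter and makes the structural reason for those cancellations transparent. Both are valid.
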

\begin{remark}\label{remark:depofbgaIJK}
  Note also that $\bga_{IJ}^{K}$ is polynomial in $\eta_{ij}^k$, $\omega^{I}_{i}$, $e_{J}^{j}$ and $E_{k}e_{L}^{l}$ (and of first order in the last
  expression). Here $\eta_{ij}^k$ is defined by $[E_i,E_j]=\eta_{ij}^kE_k$. 
\end{remark}
\begin{proof}
  Note that $0=[\d_t,[e_{I},e_{J}]]+[e_{I},[e_{J},\d_t]]+[e_{J},[\d_t,e_{I}]]$ due to the Jacobi identity. Moreover, $[e_J,\d_t]=-f_J^Me_M$ and 
  \begin{equation*}
    \begin{split}
      [\d_t,[e_{I},e_{J}]] = & [\d_t,\bga_{IJ}^{K}e_{K}]
      = (\d_t\bga_{IJ}^{K}+\bga_{IJ}^Mf_M^K)e_{K}.
    \end{split}
  \end{equation*}
  Combining these observations yields (\ref{eq:dtbgaIJK}).
\end{proof}

Our next goal is to derive a second order equation for $\bk_{IJ}$. To this end, we first need to derive a first order equation.
However, 
\[
  \d_{t}\bk_{IJ}=\d_{t}(\bk_{ij}e_{I}^{i}e_{J}^{j})=(\d_{t}\bk_{ij})e_{I}^{i}e_{J}^{j}+f_{I}^{M}\bk_{MJ}+\bk_{IM}f_{J}^{M}.
\]
If (\ref{eq:dtbkij}) holds (and note, very carefully, that we most of the time do not assume (\ref{eq:dtbkij}) to hold, we only deduce it
at the very end)
\begin{equation}\label{eq:dtbkIJfv}
  \begin{split}
    \d_{t}\bk_{IJ} = & \dka_{IJ}+f_{I}^{M}\bk_{MJ}+\bk_{IM}f_{J}^{M}\\
    = & \bna_{I}\bna_{J}N+2N\bk_{IM}\bk_{MJ}-Nt^{-1}\bk_{IJ}+Ne_{I}(\phi)e_{J}(\phi)\\
    &+\tfrac{2N}{n-1}(V\circ\phi)\de_{IJ}-N\bR_{IJ}+f_{I}^{M}\bk_{MJ}+\bk_{IM}f_{J}^{M};
  \end{split}
\end{equation}
here we use the convention that we sum over repeated indices, even if both are subscripts. In order to derive a second order equation for
$\bk_{IJ}$, we need to differentiate (\ref{eq:dtbkIJfv}). Most of the terms that result do not contribute to the symbol. However, we do need
to compute $\d_{t}\bR_{IJ}$.
\begin{lemma}\label{lemma:dtbRIJ}
  Given assumptions and notation as in Lemma~\ref{lemma:frameorth}, assume $\mI\subset (0,\infty)$, assume $\phi$ to be a smooth function on $M$ and
  assume the frame $\{e_{I}\}$ to be orthonormal on all the leaves of the foliation. Let $\bR_{IJ}$ denote the components of the spatial Ricci curvature
  with respect to the orthonormal frame $\{e_{I}\}$, let $\theta:=\tr_{\bge}\bk$ and let $\vartheta:=\theta-1/t$. Let, moreover,
  \begin{equation}\label{eq:msCIdef}
    \msC(e_{I}):=(\bna_{e_{L}}\bk)(e_{L},e_{I})-e_{I}(\theta)-e_{0}(\phi)e_{I}(\phi),
  \end{equation}
  where $e_{0}:=N^{-1}\d_{t}$. Then
  \begin{equation}\label{eq:dtbRIJ}
    \d_{t}\bR_{IJ}-N\bna_{I}\msC_{J}-N\bna_{J}\msC_{I}-N\bna_{I}\bna_{J}\vartheta=-e_{M}e_{M}(N\bk_{IJ})+\Upsilon_{IJ}+\mfR_{IJ},
  \end{equation}
  where $\mfR_{IJ}$ is polynomial in $f_{L}^{M}$, $e_{L}^{i}$, $\omega^{L}_{i}$, $N$, $N^{-1}$, $\bk_{LM}$, $\d_{t}\phi$, $E_{i}f_{L}^{M}$, $E_{i}e_{L}^{j}$,
  $E_{i}N$, $E_{i}\bk_{LM}$, $E_{i}\phi$, $E_{i}\d_{t}\phi$, $E_{i}E_{j}\phi$ and $E_{i}E_{j}N$. Moreover, $\Upsilon_{IJ}$ is a sum of terms, each of which
  can be written as a polynomial in $f_{L}^{M}$, $e_{L}^{i}$, $\omega^{L}_{i}$ and $\bk_{LM}$ multiplied by a single factor of the form $E_{i}E_{j}e_{K}^{k}$. 
\end{lemma}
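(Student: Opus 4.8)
The plan is to reduce the computation of $\d_t\bR_{IJ}$ to the classical linearisation formula for the Ricci tensor and then to pass to the orthonormal frame $\{e_I\}$. Since the shift vanishes one has $\d_t\bge_{ij}=2N\bk_{ij}$ in the frame $\{E_i\}$; this is a property of the foliation alone and, like \eqref{dtbkijgeo}, does not use any of the equations, and the sign is the one forced by \eqref{eq:dmfk def}. Writing $h_{ij}:=\d_t\bge_{ij}=2N\bk_{ij}$, the standard formula for the linearisation of the Ricci operator reads, in the coordinate frame,
\[
\d_t\bR_{ij}=-\tfrac12\bna^k\bna_kh_{ij}-\tfrac12\bna_i\bna_j(\tr_{\bge}h)+\tfrac12\bna^k\bna_ih_{kj}+\tfrac12\bna^k\bna_jh_{ki}+\mathcal{Q}_{ij},
\]
where $\mathcal{Q}_{ij}$ is algebraic in $h$ and in the spatial Riemann tensor.

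First I would substitute $h_{ij}=2N\bk_{ij}$ and $\tr_{\bge}h=2N\theta$, and then use \eqref{eq:msCIdef} together with $\theta=\vartheta+t^{-1}$ — so that $e_i(\theta)=e_i(\vartheta)$ on each leaf — to rewrite the divergence $\bna^k\bk_{ki}$ in terms of $\msC$, $\bna_i\vartheta$ and $e_0(\phi)\bna_i\phi$. Distributing the covariant derivatives by the Leibniz rule, and commuting a pair of covariant derivatives once (which produces a term algebraic in the spatial curvature times $N\bk$, of the same type as $\mathcal{Q}_{ij}$), isolates: the principal term $-\bna^k\bna_k(N\bk_{ij})$; the terms $N\bna_i\msC_j$, $N\bna_j\msC_i$ and $N\bna_i\bna_j\vartheta$ — these come, with the right coefficients, from $\bna^k\bna_ih_{kj}+\bna^k\bna_jh_{ki}$ and from $\bna_i\bna_j(\tr_{\bge}h)$, and are precisely the quantities moved to the left-hand side of \eqref{eq:dtbRIJ}; and a remainder built from (i) a covariant derivative of $N$ times a covariant derivative of $\bk$, (ii) a Hessian of $N$ times $\bk$, (iii) a Hessian of $\phi$ times $\d_t\phi$ and a gradient of $\phi$ times a gradient of $\d_t\phi$, and (iv) terms algebraic in the spatial Riemann tensor and carrying a factor $N\bk$.

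Next I would pass to the orthonormal frame. From $\bR_{IJ}=\bR_{ij}e_I^ie_J^j$ and \eqref{eq:evoluteqframe} one gets $\d_t\bR_{IJ}=(\d_t\bR_{ij})e_I^ie_J^j+f_I^M\bR_{MJ}+f_J^M\bR_{IM}$. Contracting the identity obtained above with $e_I^ie_J^j$ converts $-\bna^k\bna_k(N\bk_{ij})$ into $-e_Me_M(N\bk_{IJ})$ up to connection corrections involving at most one covariant derivative of $\bk$, and converts $N\bna_i\msC_j$, $N\bna_i\bna_j\vartheta$ into $N\bna_I\msC_J$, $N\bna_I\bna_J\vartheta$. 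I then invoke two structural facts: by Remark~\ref{remark:depofbgaIJK} each $\bga_{LM}^K$, hence each connection coefficient $\Gamma_{LM}^K$, is a polynomial in $e_L^i$, $\omega_i^L$, $E_ie_L^j$ and the (fixed, smooth) structure functions of $\{E_i\}$; and the spatial Riemann and Ricci tensors are polynomials of first order in the second frame-derivatives $E_iE_je_K^k$, with coefficients polynomial in $e$, $\omega$ and $E_ie_L^j$. Consequently the curvature terms of type (iv) and the terms $f_I^M\bR_{MJ}$, $f_J^M\bR_{IM}$ each carry either a factor $N\bk_{IJ}=-\tfrac12(f_I^J+f_J^I)$ (by \eqref{eq:fIJsymmpart}) or an explicit factor $f$, so the parts of them containing a second frame-derivative $E_iE_je_K^k$ are exactly of the form (polynomial in $f_L^M$, $e_L^i$, $\omega_i^L$, $\bk_{LM}$) times one such factor, i.e. contributions to $\Upsilon_{IJ}$, while the remaining parts only involve $e$, $\omega$ and $E_ie_L^j$ and belong to $\mfR_{IJ}$. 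The remainder terms of type (i)–(iii), once written in the frame — where covariant Hessians introduce at most one factor $\bga$ and hence at most one $E_ie_L^j$ — involve only the quantities listed in the statement; here one uses that $E_if_L^M$ and $E_ie_L^j$ appear in that list, that $E_i\omega_j^L$ is a polynomial in $\omega$ and $E_ie_L^j$ (differentiate $\omega_j^Le_J^j=\delta^L_J$), and that $E_i(N^{-1}\d_t\phi)$ is a polynomial in $N^{-1}$, $E_iN$, $\d_t\phi$ and $E_i\d_t\phi$. Collecting the terms gives \eqref{eq:dtbRIJ}. Alternatively one may differentiate $\bR_{IJ}$ directly as a polynomial in $\bga_{LM}^K$ and $e_K(\bga_{LM}^N)$, using Lemma~\ref{lemma:Jacobi identity} for $\d_t\bga_{LM}^K$ and the identity $[\d_t,e_K]=f_K^Me_M$; this route reorganises the same terms and reaches the same conclusion.

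The main obstacle is the bookkeeping in the last step: one must check that, apart from the principal term and the displayed $\msC$- and $\vartheta$-contributions, no term survives with two background derivatives of $\bk$, with three of $\phi$ or of $N$, or with a factor $E_iE_je_K^k$ that is not multiplied by a polynomial in $f,e,\omega,\bk$. This rests on the two structural facts above, together with the observation that in the linearisation of $\bR_{ij}$ the spatial curvature always appears contracted against $h=2N\bk$ and that the frame conversion produces extra curvature only with an explicit factor $f$. Pinning down the exact signs and the coefficients in front of the $\msC$- and $\vartheta$-terms is the other point requiring care.
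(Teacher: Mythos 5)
Your proof is correct, and it takes a genuinely different route from the one in the paper. The paper never invokes the coordinate-frame linearisation of the Ricci operator; instead it works throughout in the orthonormal frame $\{e_I\}$, expressing $\bR_{IJ}$ explicitly via (\ref{eq:bRIJformula}) in terms of the structure coefficients $\bga_{IJ}^K$ and the connection coefficients $\bGa_{IJ}^K$ (obtained from the Koszul formula (\ref{eq:Koszulexpr})), then time-differentiates directly using Lemma~\ref{lemma:Jacobi identity} for $\d_t\bga_{IJ}^K$ and $[\d_t,e_K]=f_K^Je_J$. The key terms $-e_Me_M(N\bk_{IJ})$, $N\bna_{(I}\bna_{J)}\vartheta$ and $N\bna_{(I}\msC_{J)}$ are then extracted by rewriting $f_M^M=-N\theta$, $f_J^M=-N\bk_{JM}+a_J^M$ and $e_M(\bk_{MJ})$ using (\ref{eq:divbkMMJ}). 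You instead start from the standard DeTurck-type variation formula for $\bR_{ij}$ in the fixed background frame $\{E_i\}$, substitute $h_{ij}=\d_t\bge_{ij}=2N\bk_{ij}$ (the sign is indeed the one forced by Lemma~\ref{lemma:frameorth}, since $\d_t\de_{IJ}=0$ gives $(\d_t\bge)(e_I,e_J)=-(f_I^J+f_J^I)=2N\bk_{IJ}$), and only afterwards convert to the frame $\{e_I\}$ via $\d_t\bR_{IJ}=(\d_t\bR_{ij})e_I^ie_J^j+f_I^M\bR_{MJ}+f_J^M\bR_{IM}$. The coefficient check works out as you describe: the two divergence terms each produce $N\bna_i\msC_j+N\bna_i\bna_j\vartheta+\dots$ after inserting (\ref{eq:msCIdef}) and $\theta=\vartheta+t^{-1}$, while $-\bna_i\bna_j(\tr h)=-2\bna_i\bna_j(N\theta)$ contributes $-N\bna_i\bna_j\vartheta$, leaving a net $+N\bna_i\bna_j\vartheta$. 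The bookkeeping you highlight as the main obstacle — that every second frame-derivative $E_iE_je_K^k$ comes multiplied by a polynomial in $f,e,\omega,\bk$ alone (with $N\bk$ replaced by $-\tfrac12(f+f^T)$), while no term with two $E$-derivatives of $\bk$ survives — does indeed go through, and the remaining remainder sits inside the list for $\mfR_{IJ}$. The paper's route is more self-contained (it only uses Lemmas~\ref{lemma:frameorth} and \ref{lemma:Jacobi identity} plus Koszul), whereas your route buys conceptual clarity by importing the well-known linearisation formula; both require the same final sorting of terms, and you correctly note that the direct frame computation you mention as an alternative is precisely what the paper does.
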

\begin{remark}
  Note that if the Einstein-non-linear scalar field equations are satisfied, then $\msC$ introduced in (\ref{eq:msCIdef}) vanishes. Moreover, if
  the CMC condition $\theta=1/t$ holds, then $\vartheta=0$. 
\end{remark}
\begin{remark}\label{remark:Ei str coeff conv}
  In the statement of the lemma, as well as in the proof, we consistently omit reference to dependence on spatial derivatives of the structure
  coefficients of $\{E_i\}$; i.e., $E_{\bfI}\eta^i_{jk}$, where we use the notation introduced in Remark~\ref{remark:depofbgaIJK} and
  Subsection~\ref{ssection:conventionsframe}. In particular, it is tacitly understood that the coefficients of the polynomials $\mfR_{IJ}$ and
  $\Upsilon_{IJ}$ are allowed to depend on expressions of the form $E_{\bfI}\eta^i_{jk}$. 
\end{remark}
\begin{remark}\label{remark:symmetry}
  The expressions appearing in (\ref{eq:dtbRIJ}) are largely geometric. However, we later use them in a different, non-geometric, context
  (in particular, below, the notation $\bk_{IJ}$ is used for variables that are only a posteriori demonstrated to be the components of the second
  fundamental form with respect to a frame). In
  this later context, the symmetry properties of the last objects on the right hand side of (\ref{eq:dtbRIJ}) are important. It is therefore important
  to note that in the current context, we can replace $\Upsilon_{IJ}$ and $\mfR_{IJ}$ with $\Upsilon_{(IJ)}$ and $\mfR_{(IJ)}$ respectively, where the
  parentheses denote symmetrisation, since all the terms on the left hand side and the first term on the right hand side of (\ref{eq:dtbRIJ}) are
  symmetric in the geometric context. 
\end{remark}
\begin{proof}
  It is convenient to begin by expressing the spatial Ricci curvature in terms of the structure coefficients $\bga_{IJ}^{K}$ and the connection
  coefficients $\bGa_{IJ}^{K}$. Here $\bGa^{K}_{IJ}$ is defined by the condition that
  \[
    \bna_{e_{I}}e_{J}=\bGa_{IJ}^{K}e_{K}.
  \]
  Due to the Koszul formula, 
  \begin{equation}\label{eq:Koszulexpr}
    \bGa_{IJ}^{K}=\ldr{\bna_{e_{I}}e_{J},e_{K}}=\tfrac{1}{2}(-\bga_{JK}^{I}+\bga_{KI}^{J}+\bga_{IJ}^{K}),
  \end{equation}
  where $\ldr{\cdot,\cdot}:=\bge$ and $[e_{I},e_{J}]=\bga_{IJ}^{K}e_{K}$. Note, for future reference, that $\d_{t}\bga_{IJ}^{K}$ and $\d_{t}\bGa_{IJ}^{K}$
  are polynomials in $f_{L}^{M}$, $E_{i}f_{L}^{M}$, $e_{L}^{i}$, $\omega^{L}_{i}$ and $E_{i}e_{L}^{j}$ (keeping the convention introduced in
  Remark~\ref{remark:Ei str coeff conv} in mind); see (\ref{eq:dtbgaIJK}), (\ref{eq:Koszulexpr}) and
  Remark~\ref{remark:depofbgaIJK}. Moreover, the polynomials are of order one in $E_{i}f_{L}^{M}$ and $E_{i}e_{L}^{j}$. Next, note that 
  \begin{equation}\label{eq:bRIJformula}
    \bR_{IJ}=e_{I}(\bGa^{J}_{MM})-e_{M}(\bGa^{J}_{IM})+\bGa^{L}_{MM}\bGa_{IL}^{J}-\bGa_{IL}^{M}\bGa^{J}_{LM}-\bga^{M}_{IL}\bGa^{J}_{ML},
  \end{equation}
  where we sum over all repeated indices. When time differentiating (\ref{eq:bRIJformula}), it is clear that the only terms that cannot immediately be
  included in $\mfR_{IJ}$ arise when the time derivative hits the first two terms on the right hand side of (\ref{eq:bRIJformula}). In short, we need
  to focus on 
  \begin{equation}\label{eq:bLIJdef}
    \bL_{IJ}:=f_{I}^{L}e_{L}(\bGa^{J}_{MM})-f_{M}^{L}e_{L}(\bGa^{J}_{IM})+e_{I}(\d_{t}\bGa^{J}_{MM})-e_{M}(\d_{t}\bGa^{J}_{IM}). 
  \end{equation}
  Due to Remark~\ref{remark:depofbgaIJK} and (\ref{eq:Koszulexpr}), it is clear that the first two terms on the right hand side give rise to expressions
  that can be included in either $\Upsilon_{IJ}$ or $\mfR_{IJ}$. We therefore need to focus on the last two terms.  Note, to this end, that 
  $\bGa^{J}_{MM}=\bga^{M}_{JM}$. Combining this observation with (\ref{eq:dtbgaIJK}) yields
  \begin{equation}\label{eq:dtbGaJMM}
    \begin{split}
      \d_{t}\bGa^{J}_{MM} = & \d_{t}\bga^{M}_{JM}=e_{J}(f_{M}^{M})-e_{M}(f_{J}^{M})+f_{J}^{K}\bga_{KM}^{M}\\
      = & -e_{J}(N\theta)+e_{M}(N\bk_{MJ})-e_{M}(a_{J}^{M})+f_{J}^{K}\bga_{KM}^{M},
    \end{split}
  \end{equation}
  where $\theta:=\tr_{\bge}\bk$. Rewriting $\theta=\vartheta+1/t$, $-e_{J}(N\theta)=-e_{J}(N)\theta-Ne_{J}(\vartheta)$, so that applying $e_{I}$ to the first
  term on the far right hand side of (\ref{eq:dtbGaJMM}) yields $-N\bna_{I}\bna_{J}\vartheta$ plus expressions that can be included in $\mfR_{IJ}$. The terms
  that result when applying $e_{I}$ to the last term on the far right hand side can, similarly to the first two terms on the right hand side of
  (\ref{eq:bLIJdef}), be included in either $\Upsilon_{IJ}$ or $\mfR_{IJ}$. What remains to consider is $e_{M}(N\bk_{MJ})$ and $-e_{M}(a_{J}^{M})$, or, more
  precisely, $e_{I}e_{M}(N\bk_{MJ})$ and $-e_{I}e_{M}(a_{J}^{M})$; cf. the second last term on the right hand side of (\ref{eq:bLIJdef}). Next, consider
  \begin{equation}\label{eq:dtbGaJIM}
    \begin{split}
      \d_{t}\bGa^{J}_{IM} = & \tfrac{1}{2}(-\d_{t}\bga_{MJ}^{I}+\d_{t}\bga_{JI}^{M}+\d_{t}\bga_{IM}^{J})\\
      = & e_{M}(N\bk_{IJ})-e_{J}(N\bk_{IM})+e_{I}(a_{M}^{J})+\tfrac{1}{2}f_{M}^{K}(\bga_{IK}^{J}-\bga_{KJ}^{I})+\tfrac{1}{2}f_{J}^{K}(\bga^{M}_{KI}-\bga_{MK}^{I})\\
      & +\tfrac{1}{2}f_{I}^{K}(\bga_{JK}^{M}+\bga_{KM}^{J})+\tfrac{1}{2}f^{I}_{K}\bga^{K}_{MJ}
      -\tfrac{1}{2}f_{K}^{M}\bga^{K}_{JI}-\tfrac{1}{2}f_{K}^{J}\bga^{K}_{IM},
    \end{split}
  \end{equation}
  where we appealed to Lemmas~\ref{lemma:frameorth} and \ref{lemma:Jacobi identity}. Similarly to the above, applying $e_{M}$ to the last six terms on
  the far right hand side results in expressions that can be included in either $\Upsilon_{IJ}$ or $\mfR_{IJ}$. Combining (\ref{eq:bLIJdef}),
  (\ref{eq:dtbGaJMM}) and (\ref{eq:dtbGaJIM}) with the above observations yields the conclusion that the important terms in $\d_{t}\bR_{IJ}$ are
  \begin{equation}\label{eq:leadingorderterms}
    \begin{split}
      & -N\bna_{I}\bna_{J}\vartheta +e_{I}e_{M}(N\bk_{MJ})-e_{I}e_{M}(a_{J}^{M})\\
      & -e_{M}e_{M}(N\bk_{IJ})+e_{M}e_{J}(N\bk_{IM})-e_{M}e_{I}(a_{M}^{J})\\
      = & -e_{M}e_{M}(N\bk_{IJ})-N\bna_{I}\bna_{J}\vartheta+e_{I}e_{M}(N\bk_{MJ})\\
      & +e_{J}e_{M}(N\bk_{IM})+\bga_{MI}^{K}e_{K}(a_{J}^{M})+\bga_{MJ}^{K}e_{K}(N\bk_{IM}).
    \end{split}
  \end{equation}
  The last two terms give rise to expressions that can be included in $\mfR_{IJ}$. We therefore only need to focus on the first four terms.
  However, we wish to keep the first two terms; cf. (\ref{eq:dtbRIJ}). This leaves the third and fourth terms on the right hand side of
  (\ref{eq:leadingorderterms}). On the other hand, the sum of these terms can, to leading order, be expressed as a symmetrised gradient of
  $\msC$. Compute, in order to justify this statement, that
  \begin{equation}\label{eq:divbkMMJ}
    e_{M}(\bk_{MJ})=\msC(e_{J})+e_{J}(\theta)+e_{0}(\phi)e_{J}(\phi)+\bGa_{MM}^{K}\bk_{KJ}+\bGa_{MJ}^{K}\bk_{MK}. 
  \end{equation}
  From this equality, we can deduce two things. First, $\msC(e_{L})$ can be included in $\mfR_{IJ}$. Second, applying $e_{I}$ to (\ref{eq:divbkMMJ})
  yields
  \begin{equation*}
    \begin{split}
      e_{I}e_{M}(\bk_{MJ}) = & \bna_{I}\msC_{J}+\bna_{I}\bna_{J}\theta+\msC_{K}\bGa^{K}_{IJ}+\bGa_{IJ}^{K}e_{K}(\theta)\\
      & +e_{I}[e_{0}(\phi)e_{J}(\phi)+\bGa_{MM}^{K}\bk_{KJ}+\bGa_{MJ}^{K}\bk_{MK}].
    \end{split}
  \end{equation*}  
  All the terms on the right hand side but the first two give rise to expressions that can be included in $\Upsilon_{IJ}$ or $\mfR_{IJ}$. Keeping
  in mind that $\bna_{J}\theta=\bna_{J}\vartheta$ and that $\bna_{I}\bna_{J}\vartheta=\bna_{J}\bna_{I}\vartheta$, the lemma follows. 
\end{proof}

Next, we wish to calculate the time derivative of $\dvka_{IJ}$, where
\begin{equation}\label{eq:dvkaIJ}
  \dvka_{IJ}:=\dka_{IJ}+f_{I}^{M}\bk_{MJ}+\bk_{IM}f_{J}^{M}.
\end{equation}
Note that $\dvka_{IJ}$ is given by the far right hand side of (\ref{eq:dtbkIJfv}). 

\begin{lemma}\label{lemma:dtdvka}
  Given assumptions and notation as in Lemma~\ref{lemma:dtbRIJ}, let $\dvka_{IJ}$ be given by (\ref{eq:dvkaIJ}). Then 
  \begin{equation}\label{eq:dtdvkaIJ}
    \d_{t}\dvka_{IJ}+N^{2}\bna_{I}\msC_{J}+N^{2}\bna_{J}\msC_{I}+N^{2}\bna_{I}\bna_{J}\vartheta=Ne_{M}e_{M}(N\bk_{IJ})-(\d_{t}N)\bR_{IJ}+\hOm_{IJ}+\hmfK_{IJ},
  \end{equation}
  where $\hmfK_{IJ}$ is polynomial in $f_{L}^{M}$, $e_{L}^{i}$, $\omega^{L}_{i}$, $N$, $N^{-1}$, $\bk_{LM}$, $\d_{t}N$, $\d_{t}\phi$, $\d_{t}\bk_{LM}$,
  $\d_{t}f_{L}^{M}$, $E_{i}f_{L}^{M}$, $E_{i}e_{L}^{j}$, $E_{i}N$, $E_{i}\bk_{LM}$, $E_{i}\phi$, $E_{i}\d_{t}N$, $E_{i}\d_{t}\phi$, $E_{i}E_{j}\phi$, $E_{i}E_{j}N$,
  $E_{i}E_{j}\d_{t}N$ and $V^{(l)}\circ\phi$ (where the dependence on the last expression is of first order). Moreover, $\hOm_{IJ}$ is a
  sum of terms, each of which can be written as a polynomial in $N$, $f_{L}^{M}$, $e_{L}^{i}$, $\omega^{L}_{i}$ and $\bk_{LM}$ multiplied by
  a single factor of the form $E_{i}E_{j}e_{L}^{k}$. 
\end{lemma}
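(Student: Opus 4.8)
The plan is to differentiate the explicit formula for $\dvka_{IJ}$ term by term and to observe that only the spatial Ricci term produces anything beyond what is manifestly admissible for $\hmfK_{IJ}$ or $\hOm_{IJ}$; for that one term, Lemma~\ref{lemma:dtbRIJ} supplies precisely the required identity. Recall from (\ref{eq:dtbkIJfv}) and (\ref{eq:dvkaIJ}) that
\[
\dvka_{IJ}=\bna_{I}\bna_{J}N+2N\bk_{IM}\bk_{MJ}-Nt^{-1}\bk_{IJ}+Ne_{I}(\phi)e_{J}(\phi)+\tfrac{2N}{n-1}(V\circ\phi)\de_{IJ}-N\bR_{IJ}+f_{I}^{M}\bk_{MJ}+\bk_{IM}f_{J}^{M}.
\]

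First I would apply $\d_{t}$ to every term except $-N\bR_{IJ}$, using $\d_{t}e_{I}^{i}=f_{I}^{J}e_{J}^{i}$ (so that $\d_{t}(e_{I}\psi)=f_{I}^{M}e_{M}\psi+e_{I}(\d_{t}\psi)$) together with the chain rule for $V\circ\phi$. Each resulting term is visibly a polynomial in the variables allowed in $\hmfK_{IJ}$: differentiating $2N\bk_{IM}\bk_{MJ}$ and $Nt^{-1}\bk_{IJ}$ introduces $\d_{t}N$ and $\d_{t}\bk_{LM}$ (with coefficients that are smooth functions of $t$); differentiating $Ne_{I}(\phi)e_{J}(\phi)$ introduces $f_{L}^{M}$, $E_{i}\phi$ and $E_{i}\d_{t}\phi$; differentiating $\tfrac{2N}{n-1}(V\circ\phi)\de_{IJ}$ introduces $V\circ\phi$ and $V'\circ\phi$, i.e. first order in $V^{(l)}\circ\phi$; and differentiating $f_{I}^{M}\bk_{MJ}+\bk_{IM}f_{J}^{M}$ introduces $\d_{t}f_{L}^{M}$ and $\d_{t}\bk_{LM}$. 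For the Hessian term I would write $\bna_{I}\bna_{J}N=e_{I}e_{J}(N)-\bGa_{IJ}^{K}e_{K}(N)$ and differentiate; the only point to note is that $\d_{t}(e_{I}e_{J}N)$ contains $e_{I}e_{J}(\d_{t}N)=\bna_{I}\bna_{J}(\d_{t}N)+\bGa_{IJ}^{K}e_{K}(\d_{t}N)$, which is absorbed into $\hmfK_{IJ}$ since that expression is allowed to depend on $E_{i}E_{j}\d_{t}N$, $E_{i}\d_{t}N$ and, through $\bGa$, on $E_{i}e_{L}^{j}$, while $\d_{t}\bGa_{IJ}^{K}$ is — by the computation recorded in the proof of Lemma~\ref{lemma:dtbRIJ} — polynomial in $f_{L}^{M}$, $E_{i}f_{L}^{M}$, $e_{L}^{i}$, $\omega_{i}^{L}$ and $E_{i}e_{L}^{j}$, so $(\d_{t}\bGa_{IJ}^{K})e_{K}N$ likewise lies in $\hmfK_{IJ}$. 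Crucially, none of these differentiations generates a second spatial derivative of a frame coefficient.

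Finally I would treat $\d_{t}(-N\bR_{IJ})=-(\d_{t}N)\bR_{IJ}-N\,\d_{t}\bR_{IJ}$. The first summand is exactly the term $-(\d_{t}N)\bR_{IJ}$ in (\ref{eq:dtdvkaIJ}). Into the second I substitute (\ref{eq:dtbRIJ}), namely
\[
\d_{t}\bR_{IJ}=N\bna_{I}\msC_{J}+N\bna_{J}\msC_{I}+N\bna_{I}\bna_{J}\vartheta-e_{M}e_{M}(N\bk_{IJ})+\Upsilon_{IJ}+\mfR_{IJ};
\]
multiplying by $-N$ and moving the three $\bna$-terms to the left reproduces the left-hand side of (\ref{eq:dtdvkaIJ}), the term $Ne_{M}e_{M}(N\bk_{IJ})$ is the remaining explicit right-hand side term, $-N\Upsilon_{IJ}$ has exactly the form required of $\hOm_{IJ}$ (it is $N$ times a sum of polynomials in $f_{L}^{M},e_{L}^{i},\omega_{i}^{L},\bk_{LM}$ each carrying a single factor $E_{i}E_{j}e_{L}^{k}$), and $-N\mfR_{IJ}$ is a polynomial in variables permitted in $\hmfK_{IJ}$, since the dependence list of $\mfR_{IJ}$ in Lemma~\ref{lemma:dtbRIJ} is contained in that of $\hmfK_{IJ}$. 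Adding up all contributions — and, if one wishes, symmetrising $\Upsilon_{IJ}$ and $\mfR_{IJ}$ as in Remark~\ref{remark:symmetry} — yields (\ref{eq:dtdvkaIJ}). The computation is little more than bookkeeping once Lemma~\ref{lemma:dtbRIJ} is in hand; the only step demanding care, and the reason for the long dependence lists in the statement, is the verification that $-N\bR_{IJ}$ is the unique source both of the principal term $Ne_{M}e_{M}(N\bk_{IJ})$ and of any factor $E_{i}E_{j}e_{L}^{k}$, and that every other differentiated term lands inside the admissible list for $\hmfK_{IJ}$.
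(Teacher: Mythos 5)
Your proof is correct and follows the paper's argument: differentiate $\dvka_{IJ}$ term by term using $\d_t e_I^i=f_I^Je_J^i$, expand the Hessian $\bna_I\bna_J N$ via the connection coefficients so that time differentiation lands inside the admissible $\hmfK_{IJ}$-list, observe that all differentiated terms except $-N\bR_{IJ}$ are harmless, and for that last term apply the product rule and substitute Lemma~\ref{lemma:dtbRIJ}, whereupon $-N\Upsilon_{IJ}$ and $-N\mfR_{IJ}$ are absorbed into $\hOm_{IJ}$ and $\hmfK_{IJ}$ respectively. This is exactly the paper's route.
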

\begin{remark}\label{remark:t and x dependence}
  In the statement and the proof, explicit dependence on $t$ is tacit, and we use the convention introduced in
  Remark~\ref{remark:Ei str coeff conv}. 
\end{remark}
\begin{remark}\label{remark:symmetrisation dt dvka}
  As in Remark~\ref{remark:symmetry}, we can replace $\hOm_{IJ}$ and $\hmfK_{IJ}$ with $\hOm_{(IJ)}$ and $\hmfK_{(IJ)}$ respectively.
\end{remark}
\begin{proof}
  We compute the time derivative of $\dvka_{IJ}$ by considering one term at a time on the far right hand side of (\ref{eq:dtbkIJfv}).
  Note that the first term on the right hand side can be written
  \[
  \bna_{I}\bna_{J}N=e_{I}(e_{J}N)-\bGa_{IJ}^{K}e_{K}(N)=e_{I}^{i}E_{i}(e_{J}^{k})E_{k}N+e_{I}^{i}e_{J}^{k}E_{i}E_{k}N-\bGa_{IJ}^{K}e_{K}(N).
  \]
  Time differentiating this expression, keeping (\ref{eq:dtbGaJIM}) in mind, leads to terms that can be included in $\hmfK_{IJ}$. Time differentiating
  the remaining terms (except for the sixth) on the far right hand side of (\ref{eq:dtbkIJfv}) also leads to terms that can be included in $\hmfK_{IJ}$.
  The term that result when the time derivative hits $N$ in the sixth term is included explicitly on the right hand side. Finally,
  \[
  -N\d_{t}\bR_{IJ}+N^{2}\bna_{I}\msC_{J}+N^{2}\bna_{J}\msC_{I}+N^{2}\bna_{I}\bna_{J}\vartheta=Ne_{M}e_{M}(N\bk_{IJ})-N\Upsilon_{IJ}-N\mfR_{IJ},
  \]
  where we appealed to (\ref{eq:dtbRIJ}). The lemma follows. 
\end{proof}

\subsection{Equations for deviations from the CMC condition and from Einstein's equations}

In the end, we wish to solve the Einstein-non-linear scalar field equations. Moreover, we wish to prove that the foliation we construct has leaves
of constant mean curvature. We thus need to take the step from the equations we actually solve to the equations we wish to solve. This involves, to
begin with, the problem of deriving equations for the relevant deviations and then proving that, given appropriate initial conditions, the deviations
all vanish. We begin by deriving an evolution equation for $\msD_{ij}$ defined in (\ref{eq:msDijdef}). 

\begin{lemma}\label{lemma:ezmsDij}
  Given assumptions and notation as in Lemma~\ref{lemma:dtbRIJ}, let $\msD_{ij}$ be defined by (\ref{eq:msDijdef}) and $\bmsC$ be defined by
  $\bmsC:=\msC+\tfrac{1}{2}d\vartheta$. Assume, moreover, that 
  \begin{equation}\label{eq:assumedeqbkIJ}
    \begin{split}
      e_{0}^{2}(\bk_{IJ}) = & N^{-1}e_{M}e_{M}(N\bk_{IJ})-N^{-2}(\d_{t}N)\bR_{IJ}+N^{-2}\hOm_{(IJ)}\\
      & +N^{-2}\hmfK_{(IJ)}-e_{0}(\ln N)N^{-1}\dvka_{IJ}+N^{-1}f_{I}^{K}\msD_{KJ}+N^{-1}f^{K}_{J}\msD_{IK},
    \end{split}  
  \end{equation}
  where $\hOm_{IJ}$ and $\hmfK_{IJ}$ are the expressions arising in Lemma~\ref{lemma:dtdvka} and $\dvka_{IJ}$ is defined by (\ref{eq:dvkaIJ}). Then
  \begin{equation}\label{eq:msDevoeq}
    e_{0}(\msD_{ij})=\bna_{i}\bmsC_{j}+\bna_{j}\bmsC_{i},
  \end{equation}
  where the lower script Latin indices refer to the frame $\{E_{i}\}$.
\end{lemma}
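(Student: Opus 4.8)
The plan is to compute $e_0(\msD_{ij})$ directly from the definition (\ref{eq:msDijdef}) and show, using the assumed equation (\ref{eq:assumedeqbkIJ}) together with Lemmas~\ref{lemma:dtbRIJ} and \ref{lemma:dtdvka}, that all terms organise into the symmetrised gradient of $\bmsC = \msC + \tfrac12 d\vartheta$. It is cleanest to work in the orthonormal frame $\{e_I\}$ and only at the end translate back to the $\{E_i\}$ indices (the statement's index convention is cosmetic: $\msD_{ij} = \msD_{IJ}\omega^I_i\omega^J_j$, and a tensorial identity in one frame holds in the other). So I would first establish the frame version
\begin{equation*}
  e_0(\msD_{IJ}) = \bna_I\bmsC_J + \bna_J\bmsC_I + (\text{correction terms})
\end{equation*}
and check the correction terms vanish given (\ref{eq:assumedeqbkIJ}).

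First I would recall that $\msD_{IJ} = N^{-1}(\dmfk_{IJ} - \dka_{IJ})$ where $\dmfk_{IJ}$ is the \emph{geometric} right-hand side for $\d_t\bk_{IJ}$ (valid for any foliation, by (\ref{dtbkijgeo})), while $\dka_{IJ}$ is the Einstein-CMC right-hand side. Hence $\d_t\bk_{IJ} = \dmfk_{IJ} + f_I^M\bk_{MJ} + \bk_{IM}f_J^M = \dvka_{IJ} + N\msD_{IJ}$ always holds geometrically, by (\ref{eq:dvkaIJ}) and the transformation law for $\bk_{IJ}$ between frames (exactly as in the computation preceding (\ref{eq:dtbkIJfv})). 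Differentiating once more: $\d_t^2\bk_{IJ} = \d_t\dvka_{IJ} + (\d_t N)\msD_{IJ} + N\d_t\msD_{IJ}$. Now I substitute $\d_t\dvka_{IJ}$ from (\ref{eq:dtdvkaIJ}) (in symmetrised form, using Remark~\ref{remark:symmetrisation dt dvka}) and compare with $N\,e_0^2(\bk_{IJ}) = N^{-1}\d_t^2\bk_{IJ} - (\d_t\ln N)N^{-1}\d_t\bk_{IJ}$, where $e_0^2(\bk_{IJ})$ is assumed to be given by (\ref{eq:assumedeqbkIJ}). Writing $N\d_t\bk_{IJ} = N\dvka_{IJ} + N^2\msD_{IJ}$ in the lower-order term, the two expressions for $\d_t^2\bk_{IJ}$ must agree; after cancelling the common leading term $N e_M e_M(N\bk_{IJ})$, the term $-(\d_tN)\bR_{IJ}$, and the $\hOm_{(IJ)}$, $\hmfK_{(IJ)}$, $e_0(\ln N)\dvka_{IJ}$ contributions against the explicitly matching pieces of (\ref{eq:assumedeqbkIJ}), what survives on one side is $-N^2\bna_I\msC_J - N^2\bna_J\msC_I - N^2\bna_I\bna_J\vartheta$ (from moving the left-hand side of (\ref{eq:dtdvkaIJ}) across) and on the other side $N\d_t\msD_{IJ} + (\d_tN)\msD_{IJ} - f_I^K\msD_{KJ} - f_J^K\msD_{IK}$ (from the $\d_tN\,\msD$, $N\d_t\msD$, and the $f_I^K\msD_{KJ}+f_J^K\msD_{IK}$ terms of (\ref{eq:assumedeqbkIJ})). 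The combination $N\d_t\msD_{IJ} + (\d_tN)\msD_{IJ} - f_I^K\msD_{KJ} - f_J^K\msD_{IK}$ is precisely $N^2 e_0(\msD_{ij})e^i_Ie^j_J$ rewritten in the frame, because the frame evolution (\ref{eq:evoluteqframe}) contributes the $f_I^K\msD_{KJ}$ and $f_J^K\msD_{IK}$ correction terms when converting $\d_t\msD_{IJ}$ to $\d_t(\msD_{ij})e^i_Ie^j_J$ and the $e_0$ versus $\d_t$ difference accounts for a factor of $N$. Solving then gives $N^2 e_0(\msD_{ij})e^i_Ie^j_J = N^2(\bna_I\msC_J + \bna_J\msC_I + \bna_I\bna_J\vartheta)$; dividing by $N^2$ and using $\bna_I\bna_J\vartheta = \bna_{(I}\bna_{J)}\vartheta = \bna_{(I}(\tfrac12 d\vartheta)_{J)} + \bna_{(J}(\tfrac12 d\vartheta)_{I)}$ on the symmetric part, i.e. recognising $\bna_I\msC_J + \bna_J\msC_I + \bna_I\bna_J\vartheta = \bna_I\bmsC_J + \bna_J\bmsC_I$ with $\bmsC = \msC + \tfrac12 d\vartheta$, yields (\ref{eq:msDevoeq}) in the $\{e_I\}$ frame, hence in the $\{E_i\}$ frame since both sides are components of the same spatial $(0,2)$-tensor $e_0(\msD)$ and $2\bna_{(i}\bmsC_{j)}$.

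The main obstacle is the bookkeeping: verifying that every term produced by differentiating $\dvka_{IJ}$ twice and by the change-of-frame corrections either matches one of the explicitly written terms in (\ref{eq:assumedeqbkIJ}) or lands inside the polynomial remainders $\hOm_{(IJ)}$, $\hmfK_{(IJ)}$ (whose contents are already catalogued in Lemma~\ref{lemma:dtdvka}). In particular one must be careful that the $\bR_{IJ}$ appearing with coefficient $-N^{-2}\d_tN$ in (\ref{eq:assumedeqbkIJ}) is exactly the $-(\d_tN)\bR_{IJ}$ term of (\ref{eq:dtdvkaIJ}) divided by $N$, and that the symmetrisations are consistent throughout (which is why the symmetrised forms from Remarks~\ref{remark:symmetry} and \ref{remark:symmetrisation dt dvka} are invoked). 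The key identity enabling the final step is simply that $\bna_I\bna_J\vartheta$ is already symmetric in $I,J$ (Hessian of a scalar), so absorbing the $\tfrac12 d\vartheta$ into $\msC$ to form $\bmsC$ is purely formal. No genuinely new computation beyond Lemmas~\ref{lemma:frameorth}--\ref{lemma:dtdvka} is needed; the content of the lemma is the organising observation that $\d_t^2\bk_{IJ}$, computed geometrically, forces $e_0(\msD)$ to be a pure symmetrised gradient once the assumed evolution equation (\ref{eq:assumedeqbkIJ}) is imposed.
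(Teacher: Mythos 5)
Your proposal follows essentially the same route as the paper's proof: start from the geometric identity $\d_t\bk_{IJ}=\dvka_{IJ}+N\msD_{IJ}$ (i.e., $N$ times (\ref{eq:geometrickneq})), differentiate, substitute (\ref{eq:dtdvkaIJ}) and the assumed equation (\ref{eq:assumedeqbkIJ}), cancel the common terms, and recognise what survives as $2\bna_{(i}\bmsC_{j)}$ after observing that the $f$-dependent terms in (\ref{eq:assumedeqbkIJ}) are exactly the frame-conversion corrections between $\{e_I\}$ and $\{E_i\}$.

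There is one bookkeeping slip in your intermediate display: the combination surviving the cancellation is $N\,\d_t\msD_{IJ}-Nf_I^K\msD_{KJ}-Nf_J^K\msD_{IK}$ (this is the quantity that equals $N^2e_0(\msD_{ij})e^i_Ie^j_J$), not $N\,\d_t\msD_{IJ}+(\d_tN)\msD_{IJ}-f_I^K\msD_{KJ}-f_J^K\msD_{IK}$. The $(\d_tN)\msD_{IJ}$ produced by differentiating $N\msD_{IJ}$ on the geometric side cancels against an identical term hidden on the other side, namely the result of combining $N^2\cdot\bigl(-e_0(\ln N)N^{-1}\dvka_{IJ}\bigr)$ with the $(\d_t\ln N)\d_t\bk_{IJ}$ correction relating $\d_t^2$ to $N^2e_0^2$ and using $\d_t\bk_{IJ}-\dvka_{IJ}=N\msD_{IJ}$; and the $f$-terms inherit a factor of $N$ once (\ref{eq:assumedeqbkIJ}) is multiplied through by $N^2$. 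With these coefficients corrected, the argument closes exactly as you describe, and (\ref{eq:msDevoeq}) follows.
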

\begin{remark}
  For future reference, it is convenient to note that (\ref{eq:msDevoeq}) implies that
  \begin{equation}\label{eq:eztrbgemsD}
    e_{0}(\tr_{\bge}\msD)=-2\bk^{ij}\msD_{ij}+2\bna^{i}\bmsC_{i}.
  \end{equation}
\end{remark}
\begin{remark}
    Assuming that the initial data for $\d_{t}\bk_{IJ}$ at some time, say $t=t_{0}$, is given by
    \begin{equation}\label{eq:iddtbkIJ}
      \d_{t}\bk_{IJ}|_{t=t_{0}}=\dvka_{IJ}|_{t=t_{0}},
    \end{equation}
    then (\ref{eq:geometrickneq}) below yields the conclusion that
    \begin{equation}\label{eq:msDvanin}
      \msD_{IJ}|_{t=t_{0}}=0.
    \end{equation}
\end{remark}
\begin{proof}
  Note, to begin with, that, for purely geometric reasons,
  \begin{equation}\label{eq:geometrickneq}
    e_{0}(\bk_{IJ})=N^{-1}\dvka_{IJ}+\msD_{IJ}.
  \end{equation}
  Time differentiating (\ref{eq:geometrickneq}) and appealing to (\ref{eq:dtdvkaIJ}) and Remark~\ref{remark:symmetrisation dt dvka} yields
  \begin{equation*}
    \begin{split}
      e_{0}^{2}(\bk_{IJ}) = & N^{-1}e_{M}e_{M}(N\bk_{IJ})-N^{-2}(\d_{t}N)\bR_{IJ}+N^{-2}\hOm_{(IJ)}+N^{-2}\hmfK_{(IJ)}-\bna_{I}\msC_{J}-\bna_{J}\msC_{I}\\
      & -\bna_{I}\bna_{J}\vartheta+e_{0}(\msD_{IJ})-e_{0}(\ln N)N^{-1}\dvka_{IJ}.
    \end{split}
  \end{equation*}
  Combining this equation with (\ref{eq:assumedeqbkIJ}) yields
  \begin{equation}\label{eq:msDevoeqpre}
    e_{0}(\msD_{IJ})=N^{-1}f_{I}^{K}\msD_{KJ}+N^{-1}\msD_{IK}f^{K}_{J}+\bna_{I}\msC_{J}+\bna_{J}\msC_{I}+\bna_{I}\bna_{J}\vartheta.
  \end{equation}
  Note that this can be viewed as an evolution equation for $\msD_{IJ}$. With respect to the frame $\{E_{i}\}$, this equation reads (\ref{eq:msDevoeq}).
\end{proof}

For future reference, it is convenient to make the following observation.

\begin{lemma}\label{lemma:thebkIJeq}
  Given assumptions and notation as in Lemma~\ref{lemma:dtbRIJ}, (\ref{eq:assumedeqbkIJ}) is equivalent to 
  \begin{equation}\label{eq:thebkIJeq}
    \d_{t}^{2}\bk_{IJ} = N^{2}e^{l}_{K}e^{m}_{K}E_{l}E_{m}\bk_{IJ}+\Omega_{IJ}+\mfK_{IJ},
  \end{equation}
  where $\mfK_{IJ}$ is polynomial in $f_{I}^{J}$, $e_{I}^{i}$, $\omega^{I}_{i}$, $N$, $N^{-1}$, $\bk_{IJ}$, $\d_{t}N$, $\d_{t}\phi$, $\d_{t}\bk_{IJ}$,
  $\d_{t}f_{I}^{J}$, $E_{i}f_{I}^{J}$, $E_{i}e_{I}^{j}$, $E_{i}N$, $E_{i}\bk_{IJ}$, $E_{i}\phi$, $E_{i}\d_{t}N$, $E_{i}\d_{t}\phi$, $E_{i}E_{j}\phi$, $E_{i}E_{j}N$,
  $E_{i}E_{j}\d_{t}N$ and $V^{(l)}\circ\phi$ (where the dependence on the last expression is of first order). Moreover, $\Omega_{IJ}$ is a sum of terms,
  each of which can be written as a polynomial in $N$, $f_{I}^{J}$, $e_{I}^{i}$, $\omega^{I}_{i}$ and $\bk_{IJ}$ multiplied by a single factor of the
  form $E_{i}E_{j}e_{I}^{k}$.
\end{lemma}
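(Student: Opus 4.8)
The plan is to show that \eqref{eq:thebkIJeq} is just \eqref{eq:assumedeqbkIJ} rewritten in terms of $\d_t$ instead of $e_0=N^{-1}\d_t$, with the right-hand-side terms reorganised so that the principal part becomes a clean second-order spatial operator acting on the unknowns $\bk_{IJ}$ expressed in the fixed frame $\{E_i\}$. First I would expand $e_0^2(\bk_{IJ}) = N^{-1}\d_t(N^{-1}\d_t\bk_{IJ}) = N^{-2}\d_t^2\bk_{IJ} - N^{-3}(\d_t N)(\d_t\bk_{IJ})$, so that multiplying \eqref{eq:assumedeqbkIJ} through by $N^2$ gives
\[
\d_t^2\bk_{IJ} = N e_M e_M(N\bk_{IJ}) + N^{-3}(\d_t N)(\d_t\bk_{IJ})\cdot N^2 - (\d_t N)\bR_{IJ} + \hOm_{(IJ)} + \hmfK_{(IJ)} - (\d_t\ln N)\,N\dvka_{IJ} + N f_I^K\msD_{KJ} + N f_J^K\msD_{IK},
\]
up to bookkeeping of which $N$-powers land where. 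The key point is to identify the principal term: $Ne_Me_M(N\bk_{IJ})$. Writing $e_M = e_M^l E_l$, we have $e_Me_M(N\bk_{IJ}) = e_M^l E_l(e_M^m E_m(N\bk_{IJ}))$, and the only contribution with two $E$-derivatives hitting $\bk_{IJ}$ itself is $N e_M^l e_M^m E_l E_m(N\bk_{IJ})$, and of that, after Leibniz, the genuine principal part is $N^2 e_K^l e_K^m E_l E_m \bk_{IJ}$ (renaming $M\to K$). Every other term produced — where a derivative hits $N$, or hits an $e_M^l$, or where the commutator of $E_l,E_m$ is used, or lower-order pieces of the Leibniz expansion of $E_lE_m(N\bk_{IJ})$ — is at worst first order in $\bk_{IJ}$ and polynomial in exactly the listed data, hence absorbable into $\mfK_{IJ}$.

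Next I would go term by term through the non-principal right-hand side of \eqref{eq:assumedeqbkIJ} (times $N^2$) and verify each one has the dependence claimed for $\mfK_{IJ}$ or, in the case of the terms carrying a bare second $E$-derivative of a frame coefficient, for $\Omega_{IJ}$. The $\hOm_{(IJ)}$ term is, by Lemma~\ref{lemma:dtdvka}, already exactly of the form ``polynomial in $N, f_L^M, e_L^i, \omega^L_i, \bk_{LM}$ times a single $E_iE_j e_L^k$'', so multiplied by $N^{-2}\cdot N^2 = 1$ it contributes to $\Omega_{IJ}$ directly (the $N^{-2}$ is harmless since $N^{-1}$ is allowed in $\mfK$, but here it cancels). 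The $\hmfK_{(IJ)}$ term has precisely the list of allowed arguments for $\mfK_{IJ}$ — one checks the lists match verbatim. The terms $-(\d_t N)\bR_{IJ}$, $-(\d_t\ln N)N^{-1}\dvka_{IJ}$, and $N^{-1}f_I^K\msD_{KJ} + N^{-1}f_J^K\msD_{IK}$ (each now multiplied by $N^2$) must be rewritten: here I would substitute the definition \eqref{eq:msDijdef} of $\msD_{ij}$, the definition \eqref{eq:dvkaIJ} of $\dvka_{IJ}$, the definition \eqref{eq:dkaijdefinition} of $\dka_{IJ}$, and the expression \eqref{eq:bRIJformula} for $\bR_{IJ}$ in terms of connection and structure coefficients, then use \eqref{eq:Koszulexpr} and Remark~\ref{remark:depofbgaIJK} to see that $\bR_{IJ}$ is polynomial in $f, e, \omega, \bk, E_i e_L^j, E_iE_j e_L^k$ etc. The second $E$-derivatives of frame coefficients that appear here go into $\Omega_{IJ}$; everything else — including $\bna_I\bna_J N$, the potential terms $V\circ\phi$, the $e_I\phi\, e_J\phi$ terms, $\bk_{IM}\bk_{MJ}$, and all the $f_I^K \bk_{KJ}$ cross terms — is visibly first order and polynomial in the allowed arguments, hence part of $\mfK_{IJ}$.

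The converse direction — that \eqref{eq:thebkIJeq} implies \eqref{eq:assumedeqbkIJ} — is the same computation run backwards: divide by $N^2$, re-collect the principal term as $N^{-1}e_Me_M(N\bk_{IJ})$ (moving the Leibniz remainders back into $\hmfK$), and reassemble the $\msD$, $\dvka$, $\bR_{IJ}$ terms; since all manipulations are algebraic identities valid pointwise (no integration, no inequalities), equivalence is immediate once both sides have been matched. The main obstacle, and the only place demanding care, is the bookkeeping in the principal-part extraction: one must be scrupulous that when $E_lE_m$ is applied to the product $N\bk_{IJ}$ and when the sum $e_M^l e_M^m E_lE_m$ is antisymmetrised into symmetric plus commutator parts, the only surviving second-order-in-$\bk$ term is $N^2 e_K^l e_K^m E_lE_m\bk_{IJ}$ — in particular that the contribution $N^2 e_K^l e_K^m (E_lN)(E_m\bk_{IJ})$ and its partner are genuinely first order in $\bk$ (they are), and that the commutator term $N^2 e_K^l e_K^m\,\eta_{lm}^p E_p(N\bk_{IJ})$ is likewise only first order. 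Granting this, assembling $\mfK_{IJ}$ and $\Omega_{IJ}$ with the stated dependences is routine, and one should also record, as in Remarks~\ref{remark:symmetry} and \ref{remark:symmetrisation dt dvka}, that the symmetrisations are legitimate because the left-hand side of \eqref{eq:thebkIJeq} and its principal term are symmetric in $I,J$.
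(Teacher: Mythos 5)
Your overall plan—multiply by $N^{2}$, rewrite $e_{0}^{2}(\bk_{IJ})=N^{-2}\d_{t}^{2}\bk_{IJ}-N^{-3}(\d_{t}N)\d_{t}\bk_{IJ}$, extract the principal part $N^{2}e_{K}^{l}e_{K}^{m}E_{l}E_{m}\bk_{IJ}$, and then absorb everything else into $\Omega_{IJ}$ or $\mfK_{IJ}$—is the right skeleton, and it is essentially the paper's. But your bookkeeping for the $\bR_{IJ}$ terms has a genuine gap: you assert that the factors of $E_{i}E_{j}e_{L}^{k}$ coming from the expansion of $\bR_{IJ}$ in the term $-(\d_{t}N)\bR_{IJ}$ "go into $\Omega_{IJ}$," but they cannot. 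By definition, $\Omega_{IJ}$ is a sum of terms whose coefficient is a polynomial in $N$, $f_{I}^{J}$, $e_{I}^{i}$, $\omega^{I}_{i}$, $\bk_{IJ}$ only; $\d_{t}N$ is not in that list. These terms also cannot enter $\mfK_{IJ}$, since $\mfK_{IJ}$ admits $\d_{t}N$ but does not admit the factor $E_{i}E_{j}e_{I}^{k}$. So, taken at face value, your absorption step fails for exactly this class of terms.

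The way out, and the one observation your proposal is missing, is a cancellation. After multiplying by $N^{2}$ and substituting $N^{2}e_{0}^{2}(\bk_{IJ})=\d_{t}^{2}\bk_{IJ}-N^{-1}(\d_{t}N)\d_{t}\bk_{IJ}$, equation (\ref{eq:assumedeqbkIJ}) produces, besides the principal term, an explicit $-(\d_{t}N)\bR_{IJ}$ together with $-e_{0}(N)\dvka_{IJ}$. Since $\dvka_{IJ}$ contains $-N\bR_{IJ}$ (see (\ref{eq:dvkaIJ}) with (\ref{eq:dkaijdefinition})), the latter contributes $-e_{0}(N)(-N\bR_{IJ})=(\d_{t}N)\bR_{IJ}$, which exactly cancels the former. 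All the remaining pieces of $-e_{0}(N)\dvka_{IJ}$ and of $Nf_{I}^{K}\msD_{KJ}+Nf_{J}^{K}\msD_{IK}$ (the latter most cleanly handled via $N\msD_{KJ}=\d_{t}\bk_{KJ}-\dvka_{KJ}$, i.e.\ equation (\ref{eq:geometrickneq}), rather than by expanding the spacetime Ricci tensor as you propose) do fit the stated dependence lists. Without noting the cancellation, the equivalence claim is not established; once you insert it, the rest of your outline goes through.
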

\begin{remark}
  Remark~\ref{remark:t and x dependence} is equally relevant in the present setting.
\end{remark}
\begin{remark}\label{remark:thebkIJeq antisymm}
  Considering (\ref{eq:thebkIJeq}) as an equation for variables $\bk_{IJ}$, where the $\bk_{IJ}$ are not a priori the components of the second
  fundamental form with respect to a frame (so that, in particular, $\bk_{IJ}$ does not necessarily equal $\bk_{JI}$), the sum of the last two
  terms on the right hand side can be written as a sum of a symmetric term and a sum of terms, each of which contains either a factor of the
  form $\bk_{[LM]}$ or a factor of the form $\d_t\bk_{[LM]}$, where we use the notation introduced in (\ref{eq:symm asymm}). 
\end{remark}
\begin{proof}
  Multiply (\ref{eq:assumedeqbkIJ}) by $N^{2}$. The second last term on the right hand side of the resulting equation is given by
  \begin{equation}\label{eq:lasttwoterms}
    Nf_{I}^{K}\msD_{KJ}=f_{I}^{K}(\d_{t}\bk_{KJ}-\dvka_{KJ}),
  \end{equation}
  where $\dvka_{IJ}$ is given by the far right hand side of (\ref{eq:dtbkIJfv}); see (\ref{eq:geometrickneq}). Considering the far right hand
  side of (\ref{eq:dtbkIJfv}), it is clear that (\ref{eq:lasttwoterms}) gives rise to terms that can all be included in either $\mfK_{IJ}$ or
  $\Omega_{IJ}$. A similar argument yields the same conclusion for $N^{2}$ times the last term on the right hand side of (\ref{eq:assumedeqbkIJ}).
  Next, writing $N^{2}e_{0}^{2}(\bk_{IJ})=\d_{t}^{2}\bk_{IJ}-N^{-1}(\d_{t}N)\d_{t}\bk_{IJ}$ etc., (\ref{eq:assumedeqbkIJ}) yields
  \begin{equation}\label{eq:primebkeq}
    \d_{t}^{2}\bk_{IJ} = N^{2}e^{l}_{K}e^{m}_{K}E_{l}E_{m}\bk_{IJ}-(\d_{t}N)\bR_{IJ}-e_{0}(N)\dvka_{IJ}+\Omega_{IJ}'+\mfK_{IJ}',
  \end{equation}
  where $\Omega_{IJ}'$ and $\mfK_{IJ}'$ have the same dependence as $\mfK_{IJ}$ and $\Omega_{IJ}$ above. The only expression in $-e_{0}(N)\dvka_{IJ}$
  which cannot immediately be included into $\mfK_{IJ}$ is
  \[
  -e_{0}(N)(-N\bR_{IJ})=(\d_{t}N)\bR_{IJ}.
  \]
  Combining this observation with (\ref{eq:primebkeq}) yields the conclusion of the lemma.

  Next, in order to justify Remark~\ref{remark:thebkIJeq antisymm}, we divide $\bk_{IJ}$ into a symmetric and antisymmetric part:
  $\bk_{IJ}=\bk_{(IJ)}+\bk_{[IJ]}$, where
  \begin{equation}\label{eq:symm asymm}
    \bk_{(IJ)}:=\tfrac{1}{2}(\bk_{IJ}+\bk_{JI}),\ \ \
    \bk_{[IJ]}:=\tfrac{1}{2}(\bk_{IJ}-\bk_{JI}).
  \end{equation}
  Next, let $b_{IJ}$ be symmetric; i.e., $b_{IJ}=b_{JI}$. Then $f_I^Kb_{KJ}+f_J^Kb_{IK}$ is symmetric in $I$ and $J$. If $b_{IJ}$ is not symmetric, then
  \[
  f_I^Kb_{KJ}+f_J^Kb_{IK}=(f_I^Kb_{(KJ)}+f_J^Kb_{(IK)})+(f_I^Kb_{[KJ]}+f_J^Kb_{[IK]}).
  \]
  Here the first term on the right hand side is symmetric, and each term in the second term on the right hand side contains a factor of $b_{[LM]}$.
  Applying this observation with $b_{IJ}=\d_t\bk_{IJ}-\dvka_{IJ}$ and recalling that the first, fourth, fifth and sixth terms on the far right hand
  side of (\ref{eq:dtbkIJfv}) are symmetric by construction, it follows that the sum of the last two terms on the right hand side of
  (\ref{eq:assumedeqbkIJ}) can be written as a sum of one expression which is symmetric in $I$ and $J$ and one expression which is a sum of terms,
  each of which contains a factor of the form $\bk_{[LM]}$ or $\d_t\bk_{[LM]}$. The third and fourth terms on the right hand side of
  (\ref{eq:assumedeqbkIJ}) are explicitly symmetric. The second term is symmetric by construction. What remains is to consider the antisymmetric
  part of $\dvka_{IJ}$. However, considering the far right hand side of (\ref{eq:dtbkIJfv}), it is clear that each expression in $\dvka_{[IJ]}$
  contains a factor of the form $\bk_{[LM]}$. 
\end{proof}

\textbf{The lapse equation.} Next, we derive an evolution equation for $\vartheta=\tr_{\bge}\bk-t^{-1}$. However, to obtain such an equation, we need
to impose an equation for the lapse function. Moreover, the evolution equation for $\vartheta$ will depend on a function quantifying the deviation
from the Hamiltonian constraint. 

\begin{lemma}
  Given assumptions and notation as in Lemma~\ref{lemma:dtbRIJ}, let
  \begin{equation}\label{eq:msHdef}
    \msH:=-\bS-t^{-2}+|\bk|_{\bge}^{2}+[e_{0}(\phi)]^{2}+|\bna\phi|_{\bge}^{2}+2V\circ\phi,
  \end{equation}
  where $\bS$ denotes the spatial scalar curvature. Assume, moreover, $N$ to satisfy
  \begin{equation}\label{eq:imposedlapseeq}
    -N^{-1}t^{-2}=N^{-1}\Delta_{\bge}N-|\bk|_{\bge}^{2}-[e_{0}(\phi)]^{2}+\tfrac{2}{n-1}V\circ\phi.
  \end{equation}
  Then
  \begin{equation}\label{eq:varthetaevoeq}
    e_{0}(\vartheta)=-t^{-1}\vartheta+\msD_{MM}+\msH.
  \end{equation}
\end{lemma}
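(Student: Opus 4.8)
The plan is to compute $e_0(\vartheta) = e_0(\tr_{\bge}\bk) - e_0(t^{-1})$ directly, using the geometric evolution equations already at hand. First I would recall that $\vartheta = \theta - t^{-1}$ with $\theta = \tr_{\bge}\bk = \bge^{ij}\bk_{ij}$, so that $e_0(\vartheta) = e_0(\theta) + N^{-1}t^{-2}$ (since $e_0 = N^{-1}\d_t$ and $\d_t(t^{-1}) = -t^{-2}$). The main input is the geometric identity $e_0(\bk_{IJ}) = N^{-1}\dvka_{IJ} + \msD_{IJ}$ from (\ref{eq:geometrickneq}); tracing this over the orthonormal frame $\{e_I\}$ gives $e_0(\theta) = N^{-1}\dvka_{MM} + \msD_{MM}$, because the frame is orthonormal on every leaf so the time derivative of $\bge^{IJ}$ in this frame vanishes.

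Next I would compute $N^{-1}\dvka_{MM}$ explicitly. From the far right hand side of (\ref{eq:dtbkIJfv}) — equivalently (\ref{eq:dvkaIJ}) combined with (\ref{eq:dkaijdefinition}) — the trace $\dvka_{MM}$ is $\Delta_{\bge}N + 2N\bk_{IM}\bk_{MI} - Nt^{-1}\theta + N|\bna\phi|_{\bge}^2 + \tfrac{2N}{n-1}(V\circ\phi)\,n - N\bS + 2f_I^M\bk_{MI}$. Using the antisymmetry/symmetry split $f_I^M = -N\bk_{IM} + a_I^M$ (Remark~\ref{remark:fIJbreakdown}), the term $2f_I^M\bk_{MI}$ contributes $-2N\bk_{IM}\bk_{MI}$ (the $a_I^M$ part drops by antisymmetry against the symmetric $\bk_{MI}$), which exactly cancels the $+2N\bk_{IM}\bk_{MI}$ coming from the second term. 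So $N^{-1}\dvka_{MM} = N^{-1}\Delta_{\bge}N + |\bna\phi|_{\bge}^2 + \tfrac{2n}{n-1}V\circ\phi - \bS - t^{-1}\theta$.

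Then I would substitute the imposed lapse equation (\ref{eq:imposedlapseeq}) in the form $N^{-1}\Delta_{\bge}N = -N^{-1}t^{-2} + |\bk|_{\bge}^2 + [e_0(\phi)]^2 - \tfrac{2}{n-1}V\circ\phi$ to eliminate $N^{-1}\Delta_{\bge}N$. This yields $N^{-1}\dvka_{MM} = -N^{-1}t^{-2} + |\bk|_{\bge}^2 + [e_0(\phi)]^2 + |\bna\phi|_{\bge}^2 + 2V\circ\phi - \bS - t^{-1}\theta$, where the potential terms combine as $-\tfrac{2}{n-1}V + \tfrac{2n}{n-1}V = 2V$. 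Comparing with the definition (\ref{eq:msHdef}) of $\msH$, the first six of these terms (excluding $-N^{-1}t^{-2}$ and $-t^{-1}\theta$) are precisely $\msH$, so $N^{-1}\dvka_{MM} = -N^{-1}t^{-2} - t^{-1}\theta + \msH$. Finally, assembling $e_0(\vartheta) = e_0(\theta) + N^{-1}t^{-2} = N^{-1}\dvka_{MM} + \msD_{MM} + N^{-1}t^{-2} = -t^{-1}\theta + \msH + \msD_{MM}$, and rewriting $-t^{-1}\theta = -t^{-1}(\vartheta + t^{-1}) = -t^{-1}\vartheta - t^{-2}$ — wait, more carefully: the $-N^{-1}t^{-2}$ cancels the added $+N^{-1}t^{-2}$, leaving $e_0(\vartheta) = -t^{-1}\theta + \msD_{MM} + \msH = -t^{-1}\vartheta - t^{-2} + \msD_{MM} + \msH$. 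Since $e_0(t^{-1}) = N^{-1}\d_t(t^{-1}) = -N^{-1}t^{-2}$ and actually I should track the $-t^{-2}$ against the definition more carefully; the bookkeeping of the $t$-powers (distinguishing $e_0$ versus $\d_t$ derivatives of $t^{-1}$, and folding the residual $t^{-2}$ into $-t^{-1}\vartheta$) is the one genuinely delicate point, and once it is done correctly (\ref{eq:varthetaevoeq}) follows. The main obstacle is thus purely this sign-and-power-of-$t$ bookkeeping together with verifying the cancellation of the $\bk^2$ terms via the $f_I^J$ split; there is no analytic difficulty.
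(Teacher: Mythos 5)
Your approach is genuinely different from the paper's, and it is viable. The paper starts from the purely geometric formula $e_{0}(\theta)=N^{-1}\Delta_{\bge}N-\theta^{2}+\bge^{ij}R_{ij}-\bS$ (with $R_{ij}$ the spatial components of the spacetime Ricci tensor) and then uses the definition (\ref{eq:msDijdef}) of $\msD$ to rewrite $R_{MM}=\msD_{MM}+\theta\vartheta+|\bna\phi|_{\bge}^{2}+\tfrac{2n}{n-1}V\circ\phi$, so the lapse equation is substituted at the level of $\theta$. You instead trace the identity (\ref{eq:geometrickneq}) (which is permissible here, since the paper derives it for purely geometric reasons, independently of (\ref{eq:assumedeqbkIJ})), compute $\dvka_{MM}$ explicitly, and observe the cancellation of $2N\bk_{IM}\bk_{MI}$ against the contribution of $2f_{I}^{M}\bk_{MI}$ from the symmetric part of $f_{I}^{J}$. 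Both routes rest on the same underlying second-variation/Gauss-equation geometry; yours makes the $f_{I}^{J}$ cancellation visible, whereas the paper's is shorter and sidesteps the explicit trace of $\dvka$.

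However, the place you flag as delicate does contain a real error. After substituting (\ref{eq:imposedlapseeq}) you assert that the surviving terms ``are precisely $\msH$,'' leading to $N^{-1}\dvka_{MM}=-N^{-1}t^{-2}-t^{-1}\theta+\msH$. This is off by $t^{-2}$: the definition (\ref{eq:msHdef}) of $\msH$ carries a built-in $-t^{-2}$, whereas your expression for $N^{-1}\dvka_{MM}$ has no such term (the only power of $t$ present is $-N^{-1}t^{-2}$). The correct identification is
\begin{equation*}
  N^{-1}\dvka_{MM}=-N^{-1}t^{-2}-t^{-1}\theta+\msH+t^{-2}.
\end{equation*}
Once this is noted, the bookkeeping closes cleanly: $e_{0}(\vartheta)=N^{-1}\dvka_{MM}+\msD_{MM}+N^{-1}t^{-2}=\msH+t^{-2}-t^{-1}\theta+\msD_{MM}$, and since $-t^{-1}\theta=-t^{-1}\vartheta-t^{-2}$, the two $t^{-2}$ contributions cancel and (\ref{eq:varthetaevoeq}) follows. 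So your strategy is sound, but you should explicitly track the $-t^{-2}$ inside $\msH$ rather than leaving the resolution of the $t$-power bookkeeping open.
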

\begin{remark}
  Using the notation (\ref{eq:msHdef}), the normal normal component of Einstein's equations reads $\msH+t^{-2}-\theta^{2}=0$ in the
  Einstein-non-linear scalar field setting; i.e., $\msH+t^{-2}-\theta^{2}=0$ is equivalent to the Hamiltonian constraint being satisfied.
\end{remark}
\begin{proof}
  Using the notation $\theta:=\tr_{\bge}\bk$, note that
  \[
  e_{0}(\theta)=N^{-1}\Delta_{\bge}N-\theta^{2}+\bge^{ij}R_{ij}-\bS;
  \]
  see \cite[(258), p.~63]{RinGeo}. Note that this identity is geometric; it is not based on the assumption that Einstein's equations are
  satisfied. Using the notation introduced in (\ref{eq:msDijdef}), it follows that
  \[
  \bge^{ij}R_{ij}=R_{MM}=\msD_{MM}+\theta\vartheta+|\bna\phi|_{\bge}^{2}+\tfrac{2n}{n-1}V\circ\phi.
  \]
  Moreover, combining the last two equalities with (\ref{eq:msHdef})
  yields
  \[
  e_{0}(\theta)=N^{-1}\Delta_{\bge}N-|\bk|_{\bge}^{2}-[e_{0}(\phi)]^{2}+\tfrac{2}{n-1}V\circ\phi-t^{-1}\vartheta+\msD_{MM}+\msH.
  \]
  In terms of $\vartheta$, this equality reads
  \[
  e_{0}(\vartheta)-N^{-1}t^{-2}=N^{-1}\Delta_{\bge}N-|\bk|_{\bge}^{2}-[e_{0}(\phi)]^{2}+\tfrac{2}{n-1}V\circ\phi-t^{-1}\vartheta+\msD_{MM}+\msH.
  \]
  Combining this (general, geometric) equality with the imposed equation on $N$, i.e. (\ref{eq:imposedlapseeq}), yields the conclusion that
  (\ref{eq:varthetaevoeq}) holds. The lemma follows. 
\end{proof}

\textbf{Equation for the scalar field.} Concerning the scalar field, we assume that it satisfies
\begin{equation}\label{eq:modscfiedeq}
  \Box_{g}\phi+\vartheta e_{0}(\phi)-V'\circ\phi=0.
\end{equation}
This means that the stress energy tensor associated with the scalar field, i.e.,
\[
T:=d\phi\otimes d\phi-\left[\tfrac{1}{2}\nabla^{\a}\phi\nabla_{\a}\phi+V\circ\phi\right]g
\]
satisfies
\begin{equation}\label{eq:divTsf}
  \rodiv_{g} T=-\vartheta e_{0}(\phi)d\phi.
\end{equation}
For future reference, it is convenient to make the following observation.

\begin{lemma}\label{lemma:phiess}
  Given assumptions and notation as in Lemma~\ref{lemma:dtbRIJ}, (\ref{eq:modscfiedeq}) is equivalent to
  \begin{equation}\label{eq:phiess}
    \d_{t}^{2}\phi = N^{2}e_{K}^{i}e_{K}^{j}E_{i}E_{j}\phi+\Phi,
  \end{equation}
  where $\Phi$ is polynomial in $N$, $N^{-1}$, $e^{i}_{I}$, $\omega^{I}_{i}$, $\d_{t}N$, $\d_{t}\phi$, $E_{i}N$, $E_{i}\phi$, $E_{j}e_{I}^{i}$ and $V'\circ\phi$
  (and of order at most one in the last expression). Moreover,
  \begin{equation}\label{eq:Ekphiess}
    \d_{t}^{2}(E_{k}\phi)=N^{2}e_{K}^{i}e_{K}^{j}E_{i}E_{j}(E_{k}\phi)+\Xi_{k}+\Phi_{k},
  \end{equation}
  where $\Phi_{k}$ is polynomial in $N$, $N^{-1}$, $e^{i}_{I}$, $\omega^{I}_{i}$, $\d_{t}N$, $\d_{t}\phi$, $E_{i}N$, $E_{i}\phi$, $E_{j}e_{I}^{i}$,
  $E_{i}\d_{t}N$, $E_{i}\d_{t}\phi$, $E_{i}E_{j}N$, $E_{i}E_{j}\phi$, $V'\circ\phi$ and $V''\circ\phi$ (and of order at most one in the last two
  expressions). Moreover, $\Xi_{k}$  is a sum of terms, each of which can be written as a polynomial in $N$, $e_{I}^{i}$, $\omega^{I}_{i}$ and
  $E_{i}\phi$ multiplied by a single factor of the form $E_{i}E_{j}e_{I}^{k}$. Finally,
  \begin{equation}\label{eq:ezphiess}
    \d_{t}^{2}(e_{0}\phi)=N^{2}e_{K}^{i}e_{K}^{j}E_{i}E_{j}(e_{0}\phi)+\Psi,
  \end{equation}
  where $\Psi$ is polynomial in $N$, $N^{-1}$, $e^{i}_{I}$, $\omega^{I}_{i}$, $f_{J}^{K}$, $\d_{t}N$, $\d_{t}\phi$, $E_{i}N$, $E_{i}\phi$, $E_{j}e_{I}^{i}$,
  $E_{i}f_{J}^{K}$, $E_{i}\d_{t}N$, $E_{i}\d_{t}\phi$, $\d_{t}e_{0}(\phi)$, $E_{i}E_{j}N$, $E_{i}E_{j}\phi$, $V'\circ\phi$ and
  $V''\circ\phi$ (and of order at most one in the last two expressions). 
\end{lemma}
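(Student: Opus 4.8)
The plan is to expand the wave operator $\Box_g$ with respect to the frame $\{e_0,e_1,\dots,e_n\}$, where $e_0=N^{-1}\d_t$ and $\{e_I\}$ is the orthonormal spatial frame, and then to trade $e_0$ for $\d_t$ and $\Delta_{\bge}$ for $N^2e_K^ie_K^jE_iE_j$. Since $g=-N^2dt\otimes dt+\bge$ has vanishing shift and $\{e_\a\}$ is orthonormal, one has (in the same spirit as the trace identity \cite[(258), p.~63]{RinGeo} used above)
\[
\Box_g\phi=-e_0(e_0\phi)-(\tr_{\bge}\bk)\,e_0(\phi)+\Delta_{\bge}\phi+\bge^{ij}E_i(\ln N)\,E_j\phi,\qquad \bge^{ij}=\textstyle\sum_K e_K^ie_K^j .
\]
Substituting this into (\ref{eq:modscfiedeq}) and using $\vartheta=\tr_{\bge}\bk-t^{-1}$ cancels the mean-curvature term, leaving $e_0^2\phi=\Delta_{\bge}\phi+\bge^{ij}E_i(\ln N)E_j\phi-t^{-1}e_0\phi-V'\circ\phi$. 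First I would multiply through by $N^2$ and use $\d_t^2\phi=N^2e_0^2\phi+N^{-1}(\d_tN)\d_t\phi$ to isolate the principal term $N^2e_K^ie_K^jE_iE_j\phi$. The remaining pieces — the Christoffel contribution hidden in $\Delta_{\bge}\phi$, the term $N^{-1}(\d_tN)\d_t\phi$, the term $N\bge^{ij}(E_iN)(E_j\phi)$, the term $Nt^{-1}\d_t\phi$, and $-N^2V'\circ\phi$ — are collected into $\Phi$. To check its claimed dependence I would use that $\bge_{ij}=\sum_I\omega^I_i\omega^I_j$ and $\bge^{ij}=\sum_Ke_K^ie_K^j$ are polynomial in $\omega^I_i$, resp.\ $e_K^i$; that $E_k\omega^I_i=-\omega^I_j(E_ke_J^j)\omega^J_i$ rewrites derivatives of $\omega$ through $\omega,e,E_\cdot e$; and that therefore the connection coefficients of $\bge$ in the frame $\{E_i\}$ (modulo the suppressed structure coefficients of $\{E_i\}$, Remark~\ref{remark:Ei str coeff conv}) are polynomial in $e,\omega,E_\cdot e$. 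Since every step is multiplication by a positive power of $N$ or a rearrangement, this proves the equivalence of (\ref{eq:phiess}) with (\ref{eq:modscfiedeq}).

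For (\ref{eq:Ekphiess}) I would apply $E_k$ to (\ref{eq:phiess}). The commutator $[E_k,E_iE_j]$ is a second order operator whose coefficients are built from the structure coefficients of $\{E_i\}$ and their derivatives, hence $E_k(N^2e_K^ie_K^jE_iE_j\phi)=N^2e_K^ie_K^jE_iE_j(E_k\phi)$ plus terms of the type permitted in $\Phi_k$ (polynomials in $N,e,\omega$ and their first $E$-derivatives against $E_iE_j\phi$ or $E_j\phi$). Differentiating $\Phi$ produces the $E_k$-derivatives of its constituents; the only constituent whose derivative is of an order absent from the $\Phi_k$ list is $E_je_I^i$, whose derivative $E_kE_je_I^i$ is second order in the frame coefficients. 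Precisely these terms — each a polynomial in $N,e_I^i,\omega^I_i,E_i\phi$ times a single factor $E_iE_je_I^k$ — are what is packaged into $\Xi_k$, while $E_k(V'\circ\phi)=(V''\circ\phi)E_k\phi$ explains the $V''\circ\phi$ dependence. This gives (\ref{eq:Ekphiess}).

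For (\ref{eq:ezphiess}) I would instead commute $e_0$ through the equation $e_0^2\phi=\Delta_{\bge}\phi+\bge^{ij}E_i(\ln N)E_j\phi-t^{-1}e_0\phi-V'\circ\phi$. The key identity is $e_0(\Delta_{\bge}\phi)=\Delta_{\bge}(e_0\phi)+[e_0,\Delta_{\bge}]\phi$, where the commutator is a second order spatial operator with coefficients built from $\d_t\bge$ and $\d_t\bGa$; by the Koszul formula (\ref{eq:Koszulexpr}) and Lemmas~\ref{lemma:frameorth}--\ref{lemma:Jacobi identity} these are polynomial in $f_I^J,E_if_I^J,e_I^i,\omega^I_i,E_ie_I^j$, where one uses $\bk_{IJ}=-N^{-1}s_I^J$ (Lemma~\ref{lemma:frameorth}, Remark~\ref{remark:fIJbreakdown}) to eliminate every bare occurrence of $\bk$ in favour of $N^{-1}$ and $f$. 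Combining with $\d_t^2(e_0\phi)=N^2e_0^2(e_0\phi)+N^{-1}(\d_tN)\d_t(e_0\phi)$ yields the principal term $N^2e_K^ie_K^jE_iE_j(e_0\phi)$; the factor $N^{-1}(\d_tN)\d_t(e_0\phi)$ accounts for the $\d_te_0(\phi)$ entry and $e_0(V'\circ\phi)=(V''\circ\phi)e_0\phi$ for the $V''\circ\phi$ entry, and everything else assembles into $\Psi$.

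I expect the main obstacle to be exactly the bookkeeping in this last step: one must verify that after commuting $e_0$ past $\Delta_{\bge}$ and $\bge^{ij}E_i(\ln N)E_j$ and expanding, no term survives that involves $\d_t^2N$ or a bare $\bk_{IJ}$ or $\d_t^3\phi$, so that the result genuinely lies in the span of the quantities listed for $\Psi$. Resolving this requires systematically using $\bk_{IJ}=-N^{-1}s_I^J$, the inverse-matrix relation between $\{e_I^i\}$ and $\{\omega^I_i\}$, and — where a $\d_t^2\phi$ reappears from differentiating $\d_t\phi$ inside $\Phi$-type terms — re-substituting (\ref{eq:phiess}) itself; once this reduction discipline is in place, the remaining computations are routine term chasing.
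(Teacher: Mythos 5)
Your proposal is correct and follows essentially the same route as the paper: rewrite the wave equation in the $e_0$-frame form (the paper's (\ref{eq:phi})), trade $e_0$ for $N^{-1}\d_t$ to get (\ref{eq:phiadapt}) and hence (\ref{eq:phiess}), apply $E_k$ for (\ref{eq:Ekphiess}) (with the $E_kE_je_I^i$-type terms from the second-order structure-coefficient and frame-derivative contributions packaged into $\Xi_k$), and apply $e_0$ — via the commutator $[e_0,e_K]=N^{-1}f_K^Je_J+N^{-1}(e_KN)e_0$ — for (\ref{eq:ezphiess}).

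The only thing worth flagging is that the ``main obstacle'' you anticipate in the last paragraph does not in fact arise. First, no bare $\bk_{IJ}$ ever appears: $[e_0,e_K]$ and (\ref{eq:dtbgaIJK}) already express all time derivatives of structure coefficients directly in terms of $f_I^J$, $E_if_I^J$, $e$, $\omega$ and $E_je_I^i$, so there is nothing to eliminate via $s_I^J=-N\bk_{IJ}$. Second, no $\d_t^3\phi$ or $\d_t^2N$ survives and no re-substitution of (\ref{eq:phiess}) is needed: the left-hand side of (\ref{eq:ezphiess}) satisfies $\d_t^2(e_0\phi)=N^2e_0^3\phi+N^{-1}(\d_tN)\d_t(e_0\phi)$, and the $e_0^2\phi$ that appears when differentiating the $-t^{-1}e_0\phi$ term of (\ref{eq:phi}) equals $N^{-1}\d_t(e_0\phi)$; both are covered because $\Psi$ is explicitly allowed to depend on $\d_te_0(\phi)$. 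Once this is observed, the computation is indeed just the term chasing you describe.
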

\begin{remark}
  Remark~\ref{remark:t and x dependence} is equally relevant in the present setting.
\end{remark}
\begin{proof}
  Note, first of all, that (\ref{eq:modscfiedeq}) is equivalent to
  \begin{equation}\label{eq:phi}
    e_{0}^{2}(\phi) = e_{K}e_{K}(\phi)-t^{-1}e_{0}(\phi)+e_{K}(\ln N)e_{K}(\phi)-\bga_{KJ}^{J}e_{K}(\phi)-V'\circ\phi,
  \end{equation}
  where $\bga_{KJ}^{L}$ is defined by $[e_{K},e_{J}]=\bga_{KJ}^{L}e_{L}$; we leave the verification of this statement to the reader. Noting that
  $N^{2}e_{0}^{2}(\phi)=\d_{t}^{2}\phi-N^{-1}(\d_{t}N)\d_{t}\phi$ etc., (\ref{eq:phi}) can be written
  \begin{equation}\label{eq:phiadapt}
    \begin{split}
      \d_{t}^{2}\phi = & N^{2}e_{K}^{i}e_{K}^{j}E_{i}E_{j}\phi+N^{-1}(\d_{t}N)\d_{t}\phi+N^{2}e_{K}(e_{K}^{i})E_{i}\phi-t^{-1}N\d_{t}\phi\\
      & +Ne_{K}(N)e_{K}(\phi)-N^{2}\bga_{KJ}^{J}e_{K}(\phi)-N^{2}V'\circ\phi.
    \end{split}
  \end{equation}
  We conclude that (\ref{eq:phiess}) holds. Applying $E_{k}$ to (\ref{eq:phiadapt}) yields (\ref{eq:Ekphiess}). Next, we wish to apply 
  $e_{0}$ to (\ref{eq:phi}). Note, to this end, that
  \begin{equation}\label{eq:ezeKcomm}
    [e_{0},e_{K}]=N^{-1}f_{K}^{J}e_{J}+N^{-1}e_{K}(N)e_{0}.
  \end{equation}
  It can therefore be computed that   
  \begin{equation*}
    \begin{split}
      e_{0}e_{K}e_{K}(\phi) = & e_{K}e_{K}(e_{0}\phi)+N^{-1}f_{K}^{J}e_{J}e_{K}(\phi)+N^{-1}e_{K}(N)e_{0}e_{K}(\phi)\\
      & +e_{K}[N^{-1}f_{K}^{J}e_{J}(\phi)+N^{-1}e_{K}(N)e_{0}(\phi)].
    \end{split}
  \end{equation*}
  After multiplication by $N^{2}$, the second and fourth terms on the right hand side can be included in $\Psi$; and after multiplying by $N^{2}$ and
  appealing to (\ref{eq:ezeKcomm}), the third term can also be included in $\Psi$. Multiplying the first term on the right hand side by
  $N^{2}$ yields the first term on the right hand side of (\ref{eq:ezphiess}) plus terms that can be included in $\Psi$. Next, $N^{2}e_{0}^{3}(\phi)$
  gives rise to the term on the left hand side of (\ref{eq:ezphiess}) plus terms that can be included in $\Psi$. Applying $N^{2}e_{0}$ to the second
  and last terms on the right hand side of (\ref{eq:phi}) yields terms that can be included in $\Psi$. Appealing to (\ref{eq:dtbgaIJK})
  and (\ref{eq:ezeKcomm}), the terms that result when applying $N^{2}e_{0}$ to the third and fourth terms on the right hand side of
  (\ref{eq:phi}) can be included in $\Psi$. 
\end{proof}

\textbf{Equations for $\msD$ and $\msH$.} Next, note that if $G$ denotes the Einstein tensor, i.e., if $G=\mathrm{Ric}-Sg/2$, where $S$ denotes the
scalar curvature of $(M,g)$, then the Bianchi identities imply that $\rodiv_{g} G=0$. Introduce
\begin{equation}\label{eq:msRdef}
  \msR:=G-T.
\end{equation}
Then $\rodiv_{g}\msR=\vartheta e_{0}(\phi)d\phi$ due to the Bianchi identities combined with (\ref{eq:divTsf}), where we assume (\ref{eq:modscfiedeq})
to hold. Before deducing evolution equations for $\msH$ and $\bmsC$ from this equality, let us relate $\msR$ to $\msH$, $\bmsC$, $\vartheta$ and $\msD$.

\begin{lemma}
  Given assumptions and notation as in Lemma~\ref{lemma:dtbRIJ}, let $\msR$ be defined by (\ref{eq:msRdef}). Then
  \begin{equation}\label{eq:msRzzmsRzI}    
    \msR(e_{0},e_{0})=-\tfrac{1}{2}\msH+\tfrac{1}{2}(\theta+t^{-1})\vartheta,\ \ \
    \msR(e_{0},e_{I})=\msC(e_{I})=\bmsC_{I}-\tfrac{1}{2}e_{I}(\vartheta).
  \end{equation}
  Moreover,
  \begin{equation}\label{eq:msRIJdecomp}
    \msR_{IJ}=\msD_{IJ}+\vartheta\bk_{IJ}-\tfrac{1}{2}\msH \de_{IJ}-\tfrac{1}{2}\vartheta^{2}\de_{IJ}-(\tr_{\bge}\msD)\de_{IJ}.
  \end{equation}
\end{lemma}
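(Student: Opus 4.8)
The plan is to expand $\msR = G - T$ in the orthonormal frame $\{e_\a\}$ with $e_0 = N^{-1}\d_t$ and compare the components against the Gauss–Codazzi decomposition of the spacetime curvature, reading off each piece from the geometric definitions of $\msH$ (see (\ref{eq:msHdef})), $\msC$ (see (\ref{eq:msCIdef})), $\msD$ (see (\ref{eq:msDijdef})) and $\vartheta = \theta - t^{-1}$. First I would recall the standard Gauss equation for a spacelike hypersurface, which gives
\[
  2G(e_0,e_0) = \bS - |\bk|_{\bge}^2 + \theta^2,
\]
and the standard relation $G(e_0,e_I) = \roRic(e_0,e_I) = (\bna_{e_L}\bk)(e_L,e_I) - e_I(\theta)$ (the twice-contracted Codazzi equation). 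On the matter side, $T(e_0,e_0) = \tfrac12([e_0(\phi)]^2 + |\bna\phi|_{\bge}^2) + V\circ\phi$ and $T(e_0,e_I) = e_0(\phi) e_I(\phi)$. Subtracting and then substituting the definition (\ref{eq:msHdef}) of $\msH$ — which is designed precisely so that $\bS = -\msH - t^{-2} + |\bk|_{\bge}^2 + [e_0(\phi)]^2 + |\bna\phi|_{\bge}^2 + 2V\circ\phi$ — converts $2G(e_0,e_0) - 2T(e_0,e_0)$ into $-\msH - t^{-2} + \theta^2$, and writing $\theta^2 - t^{-2} = (\theta - t^{-1})(\theta + t^{-1}) = \vartheta(\theta+t^{-1})$ yields the first identity in (\ref{eq:msRzzmsRzI}). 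For the mixed component, $G(e_0,e_I) - T(e_0,e_I)$ is exactly $(\bna_{e_L}\bk)(e_L,e_I) - e_I(\theta) - e_0(\phi)e_I(\phi) = \msC(e_I)$ by (\ref{eq:msCIdef}); and since $\bmsC := \msC + \tfrac12 d\vartheta$ (as introduced in Lemma~\ref{lemma:ezmsDij}) and $e_I(\vartheta) = e_I(\theta)$ because $t^{-1}$ is spatially constant, we get $\msC(e_I) = \bmsC_I - \tfrac12 e_I(\vartheta)$, which is the second identity.

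For the spatial components (\ref{eq:msRIJdecomp}), I would start from the Gauss equation for the spatial Ricci and scalar curvature, which expresses $R_{IJ}$ (the spatial components of the spacetime Ricci tensor) in terms of $\bR_{IJ}$, $\bk_{IJ}$, $\theta$ and $\d_t\bk_{IJ}$; combined with the definition (\ref{eq:msDijdef}), $R_{IJ} = \msD_{IJ} + \vartheta\bk_{IJ} + e_I(\phi)e_J(\phi) + \tfrac{2}{n-1}(V\circ\phi)\de_{IJ}$. Then $\msR_{IJ} = \roRic_{IJ} - \tfrac12 S\de_{IJ} - T_{IJ}$; here $\roRic_{IJ} = R_{IJ}$ in the frame notation, $T_{IJ} = e_I(\phi)e_J(\phi) - [\tfrac12 \nabla^\a\phi\nabla_\a\phi + V\circ\phi]\de_{IJ}$, and the trace reversal introduces $-\tfrac12 S\de_{IJ}$. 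The matter term $-e_I(\phi)e_J(\phi)$ cancels the corresponding piece of $R_{IJ}$. The remaining scalar part, after collecting the $S$-term and the $|\nabla\phi|_g^2$ and $V\circ\phi$ terms, should reorganise — using the relation $S = -2G(e_0,e_0) + \text{(spatial scalar curv. terms)}$ together with the Hamiltonian-type identity — into $-\tfrac12\msH\de_{IJ} - \tfrac12\vartheta^2\de_{IJ} - (\tr_{\bge}\msD)\de_{IJ}$. The cleanest route is probably to compute $\tr$ of the desired right-hand side and check it equals $\msR_{MM} = S_{\text{spatial pieces}} - \tfrac n2 S - T_{MM}$ consistently, then verify the traceless parts match directly from $R_{IJ} - e_I(\phi)e_J(\phi)$ versus $\msD_{IJ} + \vartheta\bk_{IJ}$; in fact the traceless parts coincide immediately, and only the trace has to be matched.

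The main obstacle I anticipate is bookkeeping the trace-reversal term $-\tfrac12 S g$ correctly: one needs the full spacetime scalar curvature $S$, which by the doubly-contracted Gauss equation equals $\bS + 2\bge^{ij}R_{ij}^{\perp} - \dots$ or more usefully $S = 2(\roRic(e_0,e_0) - G(e_0,e_0))$-type manipulations, and then to feed in both the first identity of (\ref{eq:msRzzmsRzI}) and the trace of $R_{IJ}$. Concretely, from $S = -2\msR(e_0,e_0) + (\text{trace of spatial part})$ one substitutes the already-derived value of $\msR(e_0,e_0)$ and the value of $\msR_{MM}$, solves for $S$, and the $\vartheta^2$ and $\tr_{\bge}\msD$ terms fall out. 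I would organise this as: (i) record the Gauss and Codazzi equations and the matter components; (ii) derive the two equations in (\ref{eq:msRzzmsRzI}) by the substitutions above; (iii) write $\msR_{IJ}$ using $R_{IJ}$ from (\ref{eq:msDijdef}), split off the manifestly cancelling $e_I(\phi)e_J(\phi)$, and reduce the remaining scalar-multiple-of-$\de_{IJ}$ term by taking traces and using the value of $S$ in terms of $\msH$, $\vartheta$, $\theta$ and $\tr_{\bge}\msD$; (iv) simplify $\theta\vartheta - \tfrac12\vartheta^2 - \dots$ arithmetically to the stated form. None of the individual steps is deep — it is purely the Gauss–Codazzi formalism together with the definitions — so the write-up is a matter of careful, honest index computation rather than any new idea.
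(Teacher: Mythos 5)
Your proposal is correct and follows essentially the same route as the paper: Gauss--Codazzi plus the definitions of $\msH$, $\msC$, $\bmsC$ for (\ref{eq:msRzzmsRzI}), and for (\ref{eq:msRIJdecomp}) the observation that $\msR_{IJ}-\msD_{IJ}-\vartheta\bk_{IJ}$ is a multiple of $\de_{IJ}$ (the $e_I(\phi)e_J(\phi)$ term cancels on the nose once one rewrites the definition of $\msD$), with the scalar pinned down by comparing traces and using the already-established value of $\msR(e_0,e_0)$. One small conceptual tidying: to obtain $R_{IJ}=\msD_{IJ}+\vartheta\bk_{IJ}+e_I(\phi)e_J(\phi)+\tfrac{2}{n-1}(V\circ\phi)\de_{IJ}$ you do not need the Gauss equation for the spatial components at all --- it is just a rearrangement of (\ref{eq:msDijdef}); the Gauss--Codazzi machinery is only needed for the $e_0$-components.
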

\begin{proof}
  Due to the definitions of $\msR$, $\msC$, $\bmsC$ and $\msH$, and the Gauß and Codazzi equations
  (cf., e.g., \cite[Proposition~13.3, p. 149]{RinCauchy}), the equalities (\ref{eq:msRzzmsRzI}) hold.
  Next, $\msR_{IJ}=\msD_{IJ}+\vartheta\bk_{IJ}$ for $I\neq J$. It can also be calculated that
  \[
  \msR_{II}=\msD_{II}+\vartheta\bk_{II}+\tfrac{1}{n-1}\tr_{g}\msR
  \]
  (no summation on $I$). Summing over $I$ in this equality yields
  \[
  \tr_{g}\msR-\tfrac{1}{2}\msH+\tfrac{1}{2}(\theta+t^{-1})\vartheta=\tr_{g}\msR+\msR(e_{0},e_{0})=\tr_{\bge}\msD+\vartheta\theta+\tfrac{n}{n-1}\tr_{g}\msR.
  \]
  Combining the above observations yields (\ref{eq:msRIJdecomp}).
\end{proof}
Next, we derive evolution equations for $\bmsC$ and $\msH$.
\begin{lemma}
  Given assumptions and notation as in Lemma~\ref{lemma:dtbRIJ}, assume $\phi$ to satisfy (\ref{eq:modscfiedeq}). Assume, moreover,
  (\ref{eq:assumedeqbkIJ}) and (\ref{eq:imposedlapseeq}) to hold. Then
  \begin{equation}\label{eq:msHevoeq}
    e_{0}(\msH) = \bna_{M}\bna_{M}\vartheta+2\bna_{I}(\ln N)\bna_{I}\vartheta-2\bna_{M}\bmsC_{M}+2\bk_{IJ}\msD_{IJ}
    +A_{\msH}^{I}\bmsC_{I}+B_{\msH}\vartheta,
  \end{equation}
  where $A_{\msH}^{I}$ and $B_{\msH}$ are smooth functions on the domain of definition of the solution. Moreover,
  \begin{equation}\label{eq:ezmsE}
    e_{0}(\msE) = \bna_{M}\bna_{M}\vartheta+2\bna_{I}(\ln N)\bna_{I}\vartheta+A_{\msH}^{I}\bmsC_{I}+B_{\msH}\vartheta,
  \end{equation}
  where
  \begin{equation}\label{eq:msEdef}
    \msE:=\msH+\tr_{\bge}\msD.
  \end{equation}
  Finally,
  \begin{equation}\label{eq:ezbmsCI}
    e_{0}(\bmsC_{I})=\bna_{M}\msD_{MI}-\tfrac{1}{2}\bna_{I}\msD_{MM}+\bk_{IJ}\bna_{J}\vartheta+A^{LM}_{I}\msD_{LM}+A^{M}_{I}\bmsC_{M}+A_{I}\msE+B_{I}\vartheta,
  \end{equation}
  where $A_{I}^{LM}$ etc. are smooth functions on the domain of definition of the solution.
\end{lemma}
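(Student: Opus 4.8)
The plan is to read off the four identities from the twice-contracted Bianchi identity. Since $G$ is the Einstein tensor of $(M,g)$ we have $\rodiv_g G=0$, and since $\phi$ solves (\ref{eq:modscfiedeq}) the stress energy tensor obeys (\ref{eq:divTsf}); hence, with $\msR=G-T$ as in (\ref{eq:msRdef}), the one-form identity $\rodiv_g\msR=\vartheta e_0(\phi)d\phi$ holds, as already noted above. First I would decompose this identity relative to the foliation $\{\bM_t\}$: its normal part is $(\rodiv_g\msR)(e_0)$ and its tangential parts are $(\rodiv_g\msR)(e_I)$. Writing out the divergence of a symmetric $2$-tensor in the $1+n$ splitting — a standard computation with the Levi-Civita connection of $g$, using $e_0=N^{-1}\d_t$, $[e_0,e_I]=N^{-1}f_I^Je_J+N^{-1}e_I(N)e_0$ and the relation between $\bna$, the spacetime connection and $\bk$ — one finds that $(\rodiv_g\msR)(e_0)$ is, up to contractions of $\msR_{IJ}$ against $\bk$ and $\theta$ and a term $e_I(\ln N)\msR(e_0,e_I)$, equal to $e_0\msR(e_0,e_0)+\bna^I\msR(e_0,e_I)$, while $(\rodiv_g\msR)(e_I)$ is, up to contractions of $\msR(e_0,\cdot)$ against $\bk$ and $\theta$ and a term $e_I(\ln N)\msR(e_0,e_0)$, equal to $e_0\msR(e_0,e_I)+\bna^J\msR_{JI}$.

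Next I would substitute the algebraic formulas (\ref{eq:msRzzmsRzI}) and (\ref{eq:msRIJdecomp}), expressing $\msR(e_0,e_0)$, $\msR(e_0,e_I)$ and $\msR_{IJ}$ in terms of $\msH$, $\bmsC_I$, $\vartheta$ and $\msD_{IJ}$. Whenever an $e_0$-derivative of $\msD$ appears I would remove it with (\ref{eq:msDevoeq}) or, after tracing, (\ref{eq:eztrbgemsD}); whenever an $e_0$-derivative of $\vartheta$ appears — including the mixed derivative in $e_0\msR(e_0,e_I)=e_0\bmsC_I-\tfrac12 e_0e_I\vartheta$, which I would first rewrite through $[e_0,e_I]$ as $e_Ie_0\vartheta$ plus lower order — I would remove it with (\ref{eq:varthetaevoeq}). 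The term $-\tfrac12\bna_I\msH$ produced by the divergence of the $-\tfrac12\msH\de_{IJ}$ part of $\msR_{IJ}$, together with the $-\bna_I(\tr_{\bge}\msD)$ from the $-(\tr_{\bge}\msD)\de_{IJ}$ part, I would combine using $\msH=\msE-\tr_{\bge}\msD$ from (\ref{eq:msEdef}); this produces the displayed $-\tfrac12\bna_I\msD_{MM}$, and the residual gradient of $\msE$ cancels the term of the same type generated when (\ref{eq:varthetaevoeq}) is applied to $e_Ie_0\vartheta$. After these substitutions the genuinely principal terms are exactly those in (\ref{eq:msHevoeq}) and (\ref{eq:ezbmsCI}); everything else — in particular the source $\vartheta e_0(\phi)d\phi$, which is proportional to $\vartheta$, and the factor $\vartheta\,\bna^J\bk_{JI}$ inside $\bna^J(\vartheta\bk_{JI})$, rewritten via (\ref{eq:divbkMMJ}) as $\vartheta(\msC_I+\bna_I\theta+e_0(\phi)e_I(\phi)+\dots)$ — is algebraic, i.e. zeroth order, in $\msD_{IJ}$, $\bmsC_I$, $\msE$ and $\vartheta$, with coefficients smooth in the solution (built from $N$, $N^{-1}$, $\bk$, $\theta$, $e_0\phi$, $e_I\phi$ and the frame coefficients). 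Collecting these into $A_\msH^I$, $B_\msH$, $A_I^{LM}$, $A_I^M$, $A_I$ and $B_I$ yields (\ref{eq:msHevoeq}) and (\ref{eq:ezbmsCI}).

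Finally, (\ref{eq:ezmsE}) is an immediate consequence: adding (\ref{eq:msHevoeq}) to (\ref{eq:eztrbgemsD}) and using $\msE=\msH+\tr_{\bge}\msD$, the cancellations $-2\bna_M\bmsC_M+2\bna^i\bmsC_i=0$ and $2\bk_{IJ}\msD_{IJ}-2\bk^{ij}\msD_{ij}=0$ (both valid in an orthonormal frame) leave precisely $e_0(\msE)=\bna_M\bna_M\vartheta+2\bna_I(\ln N)\bna_I\vartheta+A_\msH^I\bmsC_I+B_\msH\vartheta$.

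The hard part will be the bookkeeping in the tangential component: one must commute $e_0$ carefully past the spatial derivatives of $\vartheta$, use the Codazzi identity (\ref{eq:divbkMMJ}) to dispose of $\bna^J\bk_{JI}$, and then verify — using (\ref{eq:varthetaevoeq}) and $\msH=\msE-\tr_{\bge}\msD$ — that no uncontrolled derivative of $\msH$, $\msE$ or $\msD$ beyond the three displayed principal terms survives, so that the remainder is genuinely a smooth coefficient times one of the deviation quantities. The normal component and the $\msE$ identity are comparatively routine once the decomposition formulas are in hand.
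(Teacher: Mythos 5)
Your proposal is correct and follows essentially the same route as the paper: take the $0$- and $I$-components of $\rodiv_g\msR=\vartheta e_0(\phi)d\phi$, substitute (\ref{eq:msRzzmsRzI}) and (\ref{eq:msRIJdecomp}), eliminate $e_0\vartheta$ via (\ref{eq:varthetaevoeq}), commute $e_0$ past $e_I$ in the mixed term, and collect the leftovers into the undetermined smooth coefficients; (\ref{eq:ezmsE}) then follows by adding (\ref{eq:eztrbgemsD}). The only detour is your appeal to (\ref{eq:divbkMMJ}) to rewrite $\vartheta\,\bna^J\bk_{JI}$ — since the overall factor of $\vartheta$ already makes that whole term absorbable into $B_I\vartheta$ without further expansion, this step is unnecessary (though harmless); your bookkeeping in terms of $\bna_I\msE$ rather than the paper's $\bna_I\msH$ plus $\bna_I\msD_{MM}$ produces identical cancellations.
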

\begin{proof}
  As we noted in connection with (\ref{eq:msRdef}), $\rodiv_{g}\msR=\vartheta e_{0}(\phi)d\phi$. In terms of the frame $\{e_{\a}\}$, this equality
  can be written
  \begin{equation}\label{eq:Bianchiidentitiesevolutioneq}
    \begin{split}
      \vartheta e_{0}(\phi)e_{\a}(\phi) = & -(\nabla_{e_{0}}\msR)(e_{0},e_{\a})+(\nabla_{e_{M}}\msR)(e_{M},e_{\a})\\
      = & -e_{0}[\msR(e_{0},e_{\a})]+\msR(\nabla_{e_{0}}e_{0},e_{\a})+\msR(e_{0},\nabla_{e_{0}}e_{\a})+e_{M}[\msR(e_{M},e_{\a})]\\
      & -\msR(\nabla_{e_{M}}e_{M},e_{\a})-\msR(e_{M},\nabla_{e_{M}}e_{\a}).
    \end{split}
  \end{equation}  
  Introducing the notation $\G_{\a\b}^{\g}$ by requiring that $\nabla_{e_{\a}}e_{\b}=\G_{\a\b}^{\g}e_{\g}$, the $0$-component of
  (\ref{eq:Bianchiidentitiesevolutioneq})
  reads
  \begin{equation*}
    \begin{split}
      \vartheta [e_{0}(\phi)]^{2} = & -e_{0}[-\msH/2+(\theta+t^{-1})\vartheta/2]+2\G_{00}^{I}(\bmsC_{I}-\bna_{I}\vartheta/2)
      +e_{M}(\bmsC_{M}-\bna_{M}\vartheta/2)\\
      & -\G_{MM}^{0}[-\msH/2+(\theta+t^{-1})\vartheta/2]-\G_{MM}^{I}(\bmsC_{I}-e_{I}(\vartheta)/2)-\G_{M0}^{I}\msR_{MI},
    \end{split}
  \end{equation*}
  where we appealed to (\ref{eq:msRzzmsRzI}). Appealing to (\ref{eq:varthetaevoeq}) when $e_{0}$ hits $\vartheta$ and ignoring terms that
  contain an undifferentiated factor $\vartheta$ or $\bmsC_{I}$, this equality can be written
  \begin{equation*}
    \begin{split}
      \dots = & e_{0}(\msH)/2-(\theta+t^{-1})\msH/2-(\theta+t^{-1})\msD_{MM}/2-\bna_{I}(\ln N)\bna_{I}\vartheta+\bna_{M}\bmsC_{M}\\
      & -\bna_{M}\bna_{M}\vartheta/2-\G_{MM}^{I}\bna_{I}\vartheta/2+\theta\msH/2+\G_{MM}^{I}\bna_{I}\vartheta/2\\
      & -\bk_{IJ}\msD_{IJ}+\theta\msH/2+\theta\msD_{MM}+\dots,
    \end{split}
  \end{equation*}
  where the dots signify terms that contain an undifferentiated factor $\vartheta$ or $\bmsC_{I}$. Adding up the terms that contain a
  factor $\msH$ results in $\msH\vartheta/2$, which contains a factor $\vartheta$. Adding up the terms that contain a
  factor $\msD_{MM}$ results in $\vartheta\msD_{MM}/2$, which contains a factor $\vartheta$. To conclude, (\ref{eq:msHevoeq}) holds.
  Combining this equality with (\ref{eq:eztrbgemsD}) (which holds if (\ref{eq:assumedeqbkIJ}) is satisfied; cf. Lemma~\ref{lemma:ezmsDij})
  yields (\ref{eq:ezmsE}). 
  
  Next, the $I$-component of
  (\ref{eq:Bianchiidentitiesevolutioneq}) reads
  \begin{equation*}
    \begin{split}
      \vartheta e_{0}(\phi)e_{I}(\phi) = & -e_{0}(\bmsC_{I}-e_{I}(\vartheta)/2)+\G_{00}^{J}\msR_{JI}+\G_{0I}^{0}[-\msH/2+(\theta+t^{-1})\vartheta/2]\\
      & +\G_{0I}^{J}(\bmsC_{J}-e_{J}(\vartheta)/2)+e_{M}(\msR_{MI})-\G_{MM}^{0}(\bmsC_{I}-e_{I}(\vartheta)/2)\\
      & -\G_{MM}^{J}\msR_{JI}-\G_{MI}^{0}(\bmsC_{M}-e_{M}(\vartheta)/2)-\G_{MI}^{J}\msR_{MJ}.
    \end{split}
  \end{equation*}
  Ignoring terms that contain an undifferentiated factor $\vartheta$, $\msE$, $\msD_{IJ}$ or $\bmsC_{I}$ (note that, since $\msH=\msE-\msD_{MM}$,
  we also ignore terms containing an undifferentiated factor $\msH$; and due to (\ref{eq:msRIJdecomp}), we can ignore terms containing an
  undifferentiated factor $\msR_{IJ}$) yields
  \begin{equation}\label{eq:ezbmsCIfstep}
    \begin{split}
      \dots = & -e_{0}(\bmsC_{I})+e_{0}e_{I}(\vartheta)/2-\G_{0I}^{J}e_{J}(\vartheta)/2+e_{M}(\msR_{MI})\\
      & +\G_{MM}^{0}e_{I}(\vartheta)/2+\G_{MI}^{0}e_{M}(\vartheta)/2+\dots,
    \end{split}
  \end{equation}
  where the dots signify the ignored terms. Note that
  \begin{equation*}
    \begin{split}
      e_{0}e_{I}(\vartheta) = & (\nabla_{e_{0}}e_{I}-\nabla_{e_{I}}e_{0})(\vartheta)+e_{I}e_{0}(\vartheta)\\
      = & \G_{0I}^{0}e_{0}(\vartheta)+\G_{0I}^{J}e_{J}(\vartheta)-\G_{I0}^{J}e_{J}(\vartheta)+e_{I}e_{0}(\vartheta).
    \end{split}
  \end{equation*}
  Appealing to (\ref{eq:varthetaevoeq}) yields
  \begin{equation*}
    \begin{split}
      e_{0}e_{I}(\vartheta)/2-\G_{0I}^{J}e_{J}(\vartheta)/2 = &
      -\bk_{IJ}e_{J}(\vartheta)/2-t^{-1}e_{I}(\vartheta)/2+e_{I}(\msD_{MM})/2+e_{I}(\msH)/2+\dots,
    \end{split}
  \end{equation*}
  where the dots signify the ignored terms. Combining this equality with (\ref{eq:msRIJdecomp}) and (\ref{eq:ezbmsCIfstep}) yields
  (\ref{eq:ezbmsCI}).   
\end{proof}

\textbf{Wave equations.} Combining (\ref{eq:varthetaevoeq}), (\ref{eq:msDevoeq}), (\ref{eq:msHevoeq}),
(\ref{eq:ezmsE}) and (\ref{eq:ezbmsCI}) yields a homogeneous system of equations for $\vartheta$, $\msD$, $\msH$,
$\msE$ and $\bmsC$. It is of interest to use this system to prove that if the initial data for $\vartheta$, $\msD$, $\msH$, $\msE$ and
$\bmsC$ vanish, then these quantities vanish on the time interval for which the solution is defined. However, it is not immediately
obvious how to do so. In fact, we need to derive higher order equations for some of the quantities of interest. 

\begin{lemma}
  Given assumptions and notation as in Lemma~\ref{lemma:dtbRIJ}, assume $\phi$ to satisfy (\ref{eq:modscfiedeq}). Assume, moreover,
  (\ref{eq:assumedeqbkIJ}) and (\ref{eq:imposedlapseeq}) to hold. Then
  \begin{equation}\label{eq:ezsqvarth}
    e_{0}^{2}(\vartheta)=\bna_{M}\bna_{M}\vartheta+A^{I}_{\vartheta}\bmsC_{I}+B_{\vartheta}^{\a}e_{\a}(\vartheta)+B_{\vartheta}\vartheta
  \end{equation}
  for some functions $A^{I}_{\vartheta}$, $B_{\vartheta}^{\a}$ and $B_{\vartheta}$ which are smooth on the domain of definition of the solution.
  Moreover,
  \begin{equation}\label{eq:ezsqdtvartheta}
    e_{0}^{2}(\d_{t}\vartheta)=\bna_{M}\bna_{M}\d_{t}\vartheta+\mfL_{\vartheta}[E_{\leq 2}\vartheta,E_{\leq 1}\d_{t}\vartheta,\bmsC,\d_{t}\bmsC],
  \end{equation}
  where $E_{\leq l}\vartheta$ signifies all functions of the form $E_{\bfI}\vartheta$ for $|\bfI|\leq l$. Moreover $\mfL_{\vartheta}$ is linear in
  the arguments appearing inside the bracket. Finally, using analogous notation,
  \begin{equation}\label{eq:ezsqbmsC}
    e_{0}^{2}(\bmsC_{I})=\bna_{M}\bna_{M}\bmsC_{I}
    +\mfL_{\bmsC}[E_{\leq 1}\vartheta,E_{\leq 1}\d_{t}\vartheta,\d_{t}^{2}\vartheta,\d_{t}\bmsC,E_{\leq 1}\bmsC,E_{\leq 1}\msD,\msE].
  \end{equation}
\end{lemma}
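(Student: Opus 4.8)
The plan is to obtain all three identities by differentiating the first order evolution equations (\ref{eq:varthetaevoeq}), (\ref{eq:msDevoeq}), (\ref{eq:eztrbgemsD}), (\ref{eq:msHevoeq}), (\ref{eq:ezmsE}) and (\ref{eq:ezbmsCI}) once more along $e_{0}$ (resp.\ along $\d_{t}$), commuting $e_{0}$ past the spatial covariant derivatives, and re-inserting the first order equations wherever $e_{0}$ falls on one of the quantities $\vartheta$, $\msD$, $\msH$, $\msE$, $\bmsC$. Every remaining object ($N$, the frame coefficients, $\bk_{IJ}$, $\d_{t}\bk_{IJ}$, $\bGa_{IJ}^{K}$, $\bna\ln N$, the spatial curvature, and the combination $\dvka_{IJ}$) is smooth on the domain of the solution, and when $e_{0}$ or $\d_{t}$ hits one of these one simply produces further smooth coefficients; in particular $e_{0}(\bk_{IJ})=N^{-1}\d_{t}\bk_{IJ}$ is a smooth coefficient, so no terms quadratic in the unknowns are created and the resulting $\mfL$'s are indeed linear in their arguments. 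The ingredients are the standard commutators $[e_{0},e_{J}]=N^{-1}f_{J}^{K}e_{K}+N^{-1}e_{J}(N)e_{0}$ (see (\ref{eq:ezeKcomm})), $[\d_{t},e_{0}]=-(\d_{t}\ln N)e_{0}$, the Ricci identity $\bna_{M}\bna_{I}\bmsC_{M}-\bna_{I}\bna^{M}\bmsC_{M}=(\text{curvature})\cdot\bmsC$, and the standard formula for the commutator of $e_{0}$ with $\bna$ on spatial tensors, whose extra terms involve $\dot{\bGa}\sim\bna(N\bk)$ and second fundamental form contractions and are thus of the schematic form $(\text{smooth})\cdot E_{\leq 1}(\cdot)$.

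For (\ref{eq:ezsqvarth}) I would apply $e_{0}$ to (\ref{eq:varthetaevoeq}): $e_{0}(-t^{-1}\vartheta)=N^{-1}t^{-2}\vartheta-t^{-1}e_{0}(\vartheta)$ (the last piece of the form $B_{\vartheta}^{0}e_{0}(\vartheta)$), $e_{0}(\msD_{MM})$ is given by (\ref{eq:eztrbgemsD}), and $e_{0}(\msH)$ by (\ref{eq:msHevoeq}). Adding up, the terms $-2\bk^{ij}\msD_{ij}$ from (\ref{eq:eztrbgemsD}) and $+2\bk_{IJ}\msD_{IJ}$ from (\ref{eq:msHevoeq}) cancel, and the terms $+2\bna^{i}\bmsC_{i}$ from (\ref{eq:eztrbgemsD}) and $-2\bna_{M}\bmsC_{M}$ from (\ref{eq:msHevoeq}) cancel; what survives is $\bna_{M}\bna_{M}\vartheta$, the first order term $2\bna_{I}(\ln N)\bna_{I}\vartheta=\sum_{I}B_{\vartheta}^{I}e_{I}(\vartheta)$, the term $A_{\msH}^{I}\bmsC_{I}$, and $(N^{-1}t^{-2}+B_{\msH})\vartheta$, which is exactly (\ref{eq:ezsqvarth}).

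For (\ref{eq:ezsqdtvartheta}) I would differentiate (\ref{eq:ezsqvarth}) by $\d_{t}$. On the left, using $[\d_{t},e_{0}]=-(\d_{t}\ln N)e_{0}$ repeatedly gives $\d_{t}(e_{0}^{2}\vartheta)=e_{0}^{2}(\d_{t}\vartheta)+(\text{smooth})\,e_{0}(\vartheta)-2(\d_{t}\ln N)e_{0}^{2}(\vartheta)$, and the last term is eliminated by reinserting (\ref{eq:ezsqvarth}). On the right, $\d_{t}(\bna_{M}\bna_{M}\vartheta)=\bna_{M}\bna_{M}(\d_{t}\vartheta)+(\text{smooth})\cdot E_{\leq 2}\vartheta$ (from $\dot{\bge}$ and $\dot{\bGa}$ acting on $\vartheta$), $\d_{t}(A_{\vartheta}^{I}\bmsC_{I})$ yields terms in $\bmsC$ and $\d_{t}\bmsC$, and $\d_{t}(B_{\vartheta}^{\a}e_{\a}(\vartheta)+B_{\vartheta}\vartheta)$ yields $E_{\leq 1}\vartheta$, $E_{\leq 1}\d_{t}\vartheta$ together with a multiple of $\d_{t}^{2}\vartheta$ (arising from $\d_{t}e_{0}(\vartheta)=\d_{t}(N^{-1}\d_{t}\vartheta)$); that $\d_{t}^{2}\vartheta$ is itself expressed through (\ref{eq:ezsqvarth}) via $\d_{t}^{2}\vartheta=N^{2}e_{0}^{2}\vartheta+(\d_{t}\ln N)\d_{t}\vartheta=N^{2}(\bna_{M}\bna_{M}\vartheta+A_{\vartheta}^{I}\bmsC_{I}+B_{\vartheta}^{\a}e_{\a}\vartheta+B_{\vartheta}\vartheta)+(\d_{t}\ln N)\d_{t}\vartheta$, which lands in the claimed list. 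Collecting produces (\ref{eq:ezsqdtvartheta}).

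For (\ref{eq:ezsqbmsC}) I would apply $e_{0}$ to (\ref{eq:ezbmsCI}). Commuting $e_{0}$ past the divergences and using (\ref{eq:msDevoeq}) gives $e_{0}(\bna_{M}\msD_{MI})=\bna_{M}\bna_{M}\bmsC_{I}+\bna_{M}\bna_{I}\bmsC_{M}+(\text{smooth})\cdot E_{\leq 1}\msD$, and using (\ref{eq:eztrbgemsD}) gives $-\tfrac12 e_{0}(\bna_{I}\msD_{MM})=-\bna_{I}\bna^{M}\bmsC_{M}+(\text{smooth})\cdot E_{\leq 1}\msD$; the three second-order-in-$\bmsC$ contributions combine, via $\bna_{M}\bna_{I}\bmsC_{M}-\bna_{I}\bna^{M}\bmsC_{M}=(\text{curvature})\cdot\bmsC$, into the single rough Laplacian $\bna_{M}\bna_{M}\bmsC_{I}$ plus an undifferentiated $\bmsC$ term, so $\bna_{M}\bna_{M}\bmsC_{I}$ is the principal part. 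Among the remaining terms: $e_{0}(A_{I}^{LM}\msD_{LM})$ produces, via (\ref{eq:msDevoeq}), first spatial derivatives of $\bmsC$, i.e.\ $E_{\leq 1}\bmsC$; $e_{0}(A_{I}^{M}\bmsC_{M})$ produces, via (\ref{eq:ezbmsCI}), terms in $E_{\leq 1}\msD$, $E_{\leq 1}\vartheta$, $\bmsC$, $\msE$, $\vartheta$; $e_{0}(B_{I}\vartheta)$ gives $\vartheta$ and $e_{0}(\vartheta)=N^{-1}\d_{t}\vartheta$; $e_{0}(\bk_{IJ}\bna_{J}\vartheta)$ gives, after using $[e_{0},e_{J}]$, the term $\bk_{IJ}e_{J}(N^{-1}\d_{t}\vartheta)$, which lies in $E_{\leq 1}\d_{t}\vartheta$ together with $E_{\leq 1}\vartheta$; and $e_{0}(A_{I}\msE)$ produces, via (\ref{eq:ezmsE}), a multiple of $\bna_{M}\bna_{M}\vartheta$ plus lower terms, and the Laplacian of $\vartheta$ is traded, exactly as in the previous paragraph via (\ref{eq:ezsqvarth}), for a combination of $\d_{t}^{2}\vartheta$, $E_{\leq 1}\d_{t}\vartheta$, $E_{\leq 1}\vartheta$ and $\bmsC$ — which is precisely why $\d_{t}^{2}\vartheta$ appears among the arguments of $\mfL_{\bmsC}$. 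Assembling the pieces gives (\ref{eq:ezsqbmsC}). The main obstacle is the bookkeeping in this last step: one must check carefully that the three ``divergence of $\bmsC$'' contributions assemble into a single rough Laplacian (this uses both (\ref{eq:msDevoeq}) and (\ref{eq:eztrbgemsD}) and the Ricci identity), and one must keep precise track of which occurrences of $\vartheta$ have already been spatially differentiated — those go into $E_{\leq 1}\d_{t}\vartheta$ through $e_{0}\vartheta=N^{-1}\d_{t}\vartheta$, whereas the undifferentiated $\bna_{M}\bna_{M}\vartheta$ coming from (\ref{eq:ezmsE}) must instead be converted into $\d_{t}^{2}\vartheta$; verifying the $e_{0}$–$\bna$ commutator on one-forms is routine but needed to confirm that it only feeds the admissible lower order terms.
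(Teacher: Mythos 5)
Your proposal is correct and essentially reproduces the paper's proof, which is given in a single compressed paragraph. The only cosmetic difference in the first step is that you apply $e_0$ to (\ref{eq:varthetaevoeq}) and invoke (\ref{eq:eztrbgemsD}) and (\ref{eq:msHevoeq}) separately, observing the cancellation of the $\bk\cdot\msD$ and $\bna\cdot\bmsC$ terms; the paper instead first rewrites $e_0(\vartheta)=\msE-t^{-1}\vartheta$ using (\ref{eq:msEdef}) and then appeals to (\ref{eq:ezmsE}), which is precisely the sum of (\ref{eq:eztrbgemsD}) and (\ref{eq:msHevoeq}) and encodes those same cancellations. The remaining two steps (differentiating (\ref{eq:ezsqvarth}) by $\d_t$, then applying $e_0$ to (\ref{eq:ezbmsCI}) and reinserting (\ref{eq:msDevoeq}), (\ref{eq:ezmsE}) and (\ref{eq:ezsqvarth})) coincide with the paper's route; your elaboration of how the three second-order-in-$\bmsC$ contributions from $e_0(\bna_M\msD_{MI})$ and $-\tfrac12 e_0(\bna_I\msD_{MM})$ assemble, via the Ricci identity, into the rough Laplacian $\bna_M\bna_M\bmsC_I$ is the correct explanation of why the principal part of (\ref{eq:ezsqbmsC}) has the stated form, and your conversion of the $\bna_M\bna_M\vartheta$ coming from $A_I e_0(\msE)$ into $\d_t^2\vartheta$ via (\ref{eq:ezsqvarth}) is exactly why $\d_t^2\vartheta$ appears among the arguments of $\mfL_\bmsC$.
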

\begin{remark}
  Note that (\ref{eq:ezsqvarth}), (\ref{eq:ezsqdtvartheta}) and (\ref{eq:ezsqbmsC}) are quite similar to
  \cite[(14.75), pp.~4420--4421]{rasq}.
\end{remark}
\begin{proof}  
  Note, to begin with, that (\ref{eq:varthetaevoeq}) implies $e_{0}(\vartheta)=\msE-t^{-1}\vartheta$,
  where we used the notation introduced in (\ref{eq:msEdef}). Applying $e_{0}$ to this equality and appealing to (\ref{eq:ezmsE}) yields
  (\ref{eq:ezsqvarth}). Note that this equality can be used to express $e_{0}^{2}(\vartheta)$ in terms of spatial derivatives of $\vartheta$
  of order at most two, time derivatives of $\vartheta$ of order at most one and $\bmsC_{I}$. Differentiating (\ref{eq:ezsqvarth}) with respect
  to time yields (\ref{eq:ezsqdtvartheta}). Finally, applying $e_{0}$ to (\ref{eq:ezbmsCI}) and appealing to (\ref{eq:msDevoeq}),
  (\ref{eq:ezmsE}) and (\ref{eq:ezsqvarth}) yields (\ref{eq:ezsqbmsC}).  
\end{proof}

\textbf{Energy}. In order to prove that we have a solution to Einstein's equations with a CMC foliation, we need to demonstrate that $\vartheta$,
$\msH$ etc. vanish. Here we do so by means of an energy. In order to arrive at the relevant expression for the energy, it is convenient to first
consider the natural energies associated with (\ref{eq:ezsqdtvartheta}) and (\ref{eq:ezsqbmsC}). Let, to this end,
\[
\me_{\vartheta}(t):=\textstyle{\int}_{\bM_{t}}\big[|e_{0}\d_{t}\vartheta|^{2}+|\bna \d_{t}\vartheta|_{\bge}^{2}\big]d\mu_{\bge},\ \ \
\me_{\msC}(t):=\textstyle{\int}_{\bM_{t}}\big[\textstyle{\sum}_{I}|e_{0}\bmsC_{I}|^{2}+\textstyle{\sum}_{I,J}|\bna_{I}\bmsC_{J}|^{2}\big]d\mu_{\bge}.
\]
We also need an energy for the remaining expressions, involving fewer derivatives. Introduce, to this end,
\begin{equation}\label{eq:meroremdef}
  \me_{\rorem}(t):=\textstyle{\int}_{\bM_{t}}\big[|e_{0}\vartheta|^{2}+|\bna\vartheta|_{\bge}^{2}+\textstyle{\sum}_{I}|\bmsC_{I}|^{2}
    +\textstyle{\sum}_{I,J}|\msD_{IJ}|^{2}+\vartheta^{2}+\msH^{2}+\msE^{2}\big]d\mu_{\bge}.
\end{equation}
In the end, it will be convenient to combine these energies into
\begin{equation}\label{eq:meeta}
  \me_{\eta}(t):=\me_{\vartheta}(t)+\me_{\msC}(t)+\eta^{2}\me_{\rorem}(t)
\end{equation}
for some $0<\eta\in\rn{}$. However, as will become clear in a moment, this is not sufficient. 

\begin{lemma}\label{lemma:enstepone}
  Given assumptions and notation as in Lemma~\ref{lemma:dtbRIJ}, assume $\phi$ to satisfy (\ref{eq:modscfiedeq}). Assume, moreover,
  (\ref{eq:assumedeqbkIJ}) and (\ref{eq:imposedlapseeq}) to hold. Then there are non-negative continuous functions $C_{\vartheta}$, $C_{\msC}$ and
  $C_{\rorem}$ on $\mI$ and smooth functions $f^{ij}$ and $h^{LJKi}$ on $M$, for $i,j,L,J,K\in\{1,\dots,n\}$ such that
  \begin{subequations}\label{seq:methmscrem}
    \begin{align}
      \big|\d_{t}\me_{\vartheta}(t)-\textstyle{\int}_{\bM_{t}}f^{ij}E_{i}E_{j}(\vartheta)\d_{t}^{2}\vartheta d\mu_{\bge}\big| \leq & C_{\vartheta}(t)\me_{1}(t),\label{eq:meth}\\
      \big|\d_{t}\me_{\msC}(t)-\textstyle{\int}_{\bM_{t}}h^{LJKi}E_{i}(\msD_{LJ})\d_{t}\bmsC_{K}d\mu_{\bge}\big| \leq & C_{\vartheta}(t)\me_{1}(t),\label{eq:memsC}\\
      |\d_{t}\me_{\rorem}(t)| \leq & C_{\rorem}(t)\me_{1}(t)\label{eq:merorem}      
    \end{align}
  \end{subequations}  
  for all $t\in \mI$; here $\me_{1}$ denotes the energy introduced in (\ref{eq:meeta}) with $\eta=1$. 
\end{lemma}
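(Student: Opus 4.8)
The strategy is the standard one for establishing that a collection of energies evolves with right-hand side controlled by the energy itself: time-differentiate each energy, commute $\d_t$ past the various integrands, use the evolution equations derived above to rewrite the top-order time derivatives, and isolate exactly the one top-order term in each case that cannot be absorbed. First I would address $\me_{\rorem}$, since it is the simplest. Differentiating $\me_{\rorem}$ in $t$, the derivative either falls on the volume form $d\mu_{\bge}$ (producing a zeroth-order factor times the integrand, hence $\leq C\me_{\rorem}\leq C\me_1$), or on one of the integrand factors. For the factors $|e_0\vartheta|^2$, $|\bna\vartheta|_{\bge}^2$ and $\vartheta^2$ one uses (\ref{eq:varthetaevoeq}) and (\ref{eq:ezsqvarth}): each produces terms of the schematic form (energy factor)$\times$(expression already present in $\me_\rorem$), possibly after one integration by parts to move a derivative off $\bna\vartheta$ onto the lower-order quantities $\bmsC_I$ etc. For $|\bmsC_I|^2$, $|\msD_{IJ}|^2$, $\msH^2$ and $\msE^2$ one uses (\ref{eq:msDevoeq}), (\ref{eq:msHevoeq}), (\ref{eq:ezmsE}) and (\ref{eq:ezbmsCI}) respectively; the spatial derivatives $\bna\bmsC$, $\bna\vartheta$, $\bna\msD$ appearing on the right-hand sides of those equations are controlled by (the square root of) $\me_\vartheta$, $\me_\msC$ and $\me_\rorem$ after Cauchy--Schwarz, so every term closes against $\me_1$. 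This gives (\ref{eq:merorem}).

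For $\me_\vartheta$, I differentiate under the integral. The derivative on $d\mu_{\bge}$ again gives a harmless $C\me_1$ term. The derivative on $|\bna\d_t\vartheta|_{\bge}^2$ splits into the piece where $\d_t$ hits the metric coefficients defining $\bge$ (bounded, so $\leq C\me_1$) and the piece $2\langle\bna\d_t\vartheta,\bna\d_t^2\vartheta\rangle$, which I integrate by parts in space to produce $-2\int(\Delta_{\bge}\d_t\vartheta)\,\d_t^2\vartheta\,d\mu_{\bge}$ plus a term involving $\rodiv_{\bge}$ of the frame, of lower order. Combining with the $2\int (e_0\d_t\vartheta)(e_0\d_t^2\vartheta)d\mu_{\bge}$ piece and using (\ref{eq:ezsqdtvartheta}) to rewrite $e_0^2(\d_t\vartheta)=\bna_M\bna_M\d_t\vartheta+\mfL_\vartheta[\dots]$, the principal Laplacian terms cancel in the standard way, leaving the linear lower-order remainder $\mfL_\vartheta$ — which, being linear in $E_{\leq 2}\vartheta$, $E_{\leq 1}\d_t\vartheta$, $\bmsC$, $\d_t\bmsC$, is bounded in $L^2$ by the square root of $\me_\vartheta+\me_\msC+\me_\rorem=\me_1$ — together with one genuinely problematic term: a second spatial derivative of $\vartheta$ contracted against $\d_t^2\vartheta$ (this arises because one factor of $E_{\leq 2}\vartheta$ in $\mfL_\vartheta$, paired with the $e_0\d_t\vartheta$ energy density, is at the maximal order the energy can see, but the specific second-order structure is precisely $f^{ij}E_iE_j(\vartheta)\,\d_t^2\vartheta$). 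That term is exactly the one left explicitly on the left of (\ref{eq:meth}), so (\ref{eq:meth}) follows. The treatment of $\me_\msC$ is entirely parallel: differentiate, integrate by parts in the $|\bna_I\bmsC_J|^2$ term, substitute (\ref{eq:ezsqbmsC}) to cancel the principal part, bound the linear remainder $\mfL_{\bmsC}$ by $\me_1^{1/2}$ — except that $\mfL_{\bmsC}$ contains $E_{\leq 1}\msD$, and the top-order contribution $E_i(\msD_{LJ})$ paired against $\d_t\bmsC_K$ cannot be absorbed because $\msD$ is only controlled in $L^2$ (no derivative of $\msD$ is in any of the energies). That single term is $h^{LJKi}E_i(\msD_{LJ})\d_t\bmsC_K$, left explicitly in (\ref{eq:memsC}).

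The constants $C_\vartheta$, $C_\msC$, $C_\rorem$ are continuous in $t$ because they are built from sup-norms over $\bM_t$ of the smooth coefficient functions ($N$, $N^{-1}$, $e_I^i$, $\omega_i^I$, $f_I^J$, $\bk_{IJ}$, their $E$-derivatives, $V^{(l)}\circ\phi$, the structure coefficients $\eta_{jk}^i$, and the metric $\bge$), all of which vary continuously on the domain of the solution; I would simply define each $C_\bullet(t)$ to be a fixed polynomial expression in these sup-norms large enough to dominate every term produced above. The smoothness of $f^{ij}$ and $h^{LJKi}$ is read off directly: $f^{ij}$ is the coefficient of $E_iE_j\vartheta$ appearing when $\bna_M\bna_M$ is expressed in the frame $\{E_i\}$ (so $f^{ij}=N^2e_K^ie_K^j$ up to how one bookkeeps, times the order-two piece of $\mfL_\vartheta$), hence a polynomial in $N$ and $e_K^i$; similarly $h^{LJKi}$ is the coefficient extracted from the $E_{\leq 1}\msD$-part of $\mfL_{\bmsC}$ after the integration by parts in the $\bna_I\bmsC_J$ term, a polynomial in $N$, $e_I^i$ and $\omega_i^I$.

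The main obstacle is the careful bookkeeping in the $\me_\vartheta$ and $\me_\msC$ computations: one must verify that \emph{after} the principal-part cancellation and \emph{after} integrating the $\bna$-terms by parts, the only surviving top-order terms are precisely the two displayed integrands in (\ref{eq:meth}) and (\ref{eq:memsC}) and that everything else — including every commutator term from pushing $E_i$ past $E_j$ and every term where $\d_t$ lands on a metric coefficient — is genuinely of lower order and so bounded by $\me_1$. This is exactly the type of integration-by-parts-in-space-and-time argument flagged in the introduction as the crux of local existence (following \cite{CaK}), and here it reappears in the constraint-propagation setting; the novelty is only that one must keep the two undifferentiated-$\msD$ and second-derivative-$\vartheta$ obstructions visible rather than absorbed, which is why the statement leaves them on the left-hand side.
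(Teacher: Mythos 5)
Your overall architecture matches the paper's: time-differentiate each energy, use the evolution equations derived above to rewrite the top-order time derivatives, isolate the two irreducible top-order terms, and absorb the rest into $\me_1$. Your bookkeeping for $\me_\vartheta$ and $\me_\msC$ is correct, and your identification of which two terms must be left visible on the left-hand sides of (\ref{eq:meth}) and (\ref{eq:memsC}) is precisely right.

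However, your treatment of the $\msH^2$ and $\msE^2$ contributions to $\d_t\me_\rorem$ has a genuine gap. When $\d_t$ hits $\msH$ (resp.\ $\msE$), the right-hand sides of (\ref{eq:msHevoeq}) and (\ref{eq:ezmsE}) contain the \emph{second}-order term $\bna_M\bna_M\vartheta$, but you list only the first derivatives $\bna\bmsC$, $\bna\vartheta$, $\bna\msD$ and assert that ``possibly after one integration by parts'' everything closes. Spatial integration by parts in $\int \msH\, \bna_M\bna_M\vartheta\, d\mu_{\bge}$ moves one derivative onto $\msH$, producing $\bna\msH$ (similarly $\bna\msE$), and no derivative of $\msH$ or $\msE$ sits in any of $\me_\vartheta$, $\me_\msC$, $\me_\rorem$; the resulting integral is therefore not controlled by $\me_1$ and the Gr\"{o}nwall step breaks down. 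The paper closes this differently: it substitutes for $\bna_M\bna_M\vartheta$ using the wave equation (\ref{eq:ezsqvarth}), that is $\bna_M\bna_M\vartheta = e_0^2(\vartheta) - A^I_\vartheta\bmsC_I - B^\a_\vartheta e_\a(\vartheta) - B_\vartheta\vartheta$. The leading term $e_0^2(\vartheta)$, being $N^{-1}e_0\d_t\vartheta$ up to lower-order corrections, \emph{is} controlled by $\me_\vartheta$ through the density $|e_0\d_t\vartheta|^2$, while the remaining terms are zeroth- and first-order in $\vartheta$ and $\bmsC$ and hence visible in $\me_\rorem$ and $\me_\msC$. So $|\int\msH\, \bna_M\bna_M\vartheta\, d\mu_{\bge}|\leq C\|\msH\|_{L^2}(\me_\vartheta^{1/2}+\me_\msC^{1/2}+\me_\rorem^{1/2})\leq C\me_1$, and the argument closes. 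That substitution, not integration by parts, is the missing step. Note also that $\bna\msD$ is controlled by none of the energies (only $\msD$ itself is in $\me_\rorem$), which is precisely why you correctly leave $h^{LJKi}E_i(\msD_{LJ})\d_t\bmsC_K$ on the left of (\ref{eq:memsC}); for the $|\bmsC_I|^2$ piece of $\me_\rorem$, where $\bna\msD$ enters via (\ref{eq:ezbmsCI}), the integration by parts you mention does work there because the partner factor $\bmsC_I$, unlike $\msH$, has its gradient sitting in $\me_\msC$.
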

\begin{remark}
  The functions appearing in the statement of the lemma are allowed to depend on the solution and the coefficients of the equations.
\end{remark}
\begin{proof}
  When differentiating the energies introduced prior to the statement of the lemma, we ignore terms that can be estimated by $C(t)\me_{1}(t)$ for some
  continuous function $C$ on $\mI$; cf. (\ref{seq:methmscrem}). Note, in particular, that when $\d_{t}$ hits $d\mu_{\bge}$ in any of the energies
  introduced above, then the resulting expression can be estimated in this way. Next, note that
  \[
  \d_{t}|e_{0}\d_{t}\vartheta|^{2}=2e_{0}^{2}(\d_{t}\vartheta)\d_{t}^{2}\vartheta
  \]
  and that
  \[
  \d_{t}|\bna \d_{t}\vartheta|_{\bge}^{2}=\d_{t}[\bge^{ij}E_{i}(\d_{t}\vartheta)E_{j}(\d_{t}\vartheta)]
  =(\d_{t}\bge^{ij})E_{i}(\d_{t}\vartheta)E_{j}(\d_{t}\vartheta)+2\bge^{ij}E_{i}(\d_{t}^{2}\vartheta)E_{j}(\d_{t}\vartheta).
  \]
  After integration, the first term on the right hand side can be estimated by $C\me_{1}$. Moreover, after partial integration, the second term gives
  rise to a term in the integrand of the form $-2\d_{t}^{2}\vartheta\Delta_{\bge}(\d_{t}\vartheta)$. To summarise,
  \[
  \d_{t}\me_{\vartheta}(t)
  =2\textstyle{\int}_{\bM_{t}}\big[e_{0}^{2}(\d_{t}\vartheta)-\Delta_{\bge}(\d_{t}\vartheta)\big]\d_{t}^{2}\vartheta d\mu_{\bge}+\dots,
  \]
  where the dots signify terms that can be estimated by $C\me_{1}$. Combining this observation with (\ref{eq:ezsqdtvartheta}) and the definition of the
  energy, we conclude that there are functions $f^{ij}$ such that (\ref{eq:meth}) holds. By a similar argument,
  \begin{equation*}
    \begin{split}
      \d_{t}\me_{\msC}(t) = &
      2\textstyle{\sum}_{I}\textstyle{\int}_{\bM_{t}}\big[e_{0}^{2}(\bmsC_{I})-\bnabla_J\bnabla_J\bmsC_{I}\big]\d_{t}\bmsC_{I}d\mu_{\bge}+\dots,
    \end{split}
  \end{equation*}
  where the dots signify terms that can be estimated by $C\me_1$. Combining this observation with (\ref{eq:ezsqbmsC}) yields smooth 
  functions $h^{IJKi}$ such that (\ref{eq:memsC}) holds.

  Next, due to (\ref{eq:ezsqvarth}) and the definition of the energy,
  \[
  \d_{t}\textstyle{\int}_{\bM_{t}}\big[|e_{0}\vartheta|^{2}+|\bna \vartheta|_{\bge}^{2}\big]d\mu_{\bge}=\dots.
  \]
  Since $\d_{t}\bmsC_{I}$ is included in the energy, it is immediate that
  \[
  \d_{t}\textstyle{\int}_{\bM_{t}}\textstyle{\sum}_{I}|\bmsC_{I}|^{2}d\mu_{\bge}=\dots.
  \]
  Next, consider
  \[
  \d_{t}\textstyle{\sum}_{I,J}\textstyle{\int}_{\bM_{t}}|\msD_{IJ}|^{2}d\mu_{\bge}
  =\d_{t}\textstyle{\int}_{\bM_{t}}\bge^{ij}\bge^{kl}\msD_{ik}\msD_{jl}d\mu_{\bge}=\dots,
  \]
  where we appealed to (\ref{eq:msDevoeq}) and the definition of the energy. Since $\d_t\vartheta$ is included in the energy,
  \[
  \d_{t}\textstyle{\int}_{\bM_{t}}\vartheta^{2}d\mu_{\bge}=\dots.
  \]
  Finally, the arguments concerning the contributions
  from the time derivatives of $\msH$ and $\msE$ are similar, due to (\ref{eq:msHevoeq}), (\ref{eq:ezmsE}), the definition of the energy and
  the fact that $\bna_{M}\bna_{M}\vartheta=\Delta_{\bge}\vartheta$ can be expressed in terms of $\d_{t}^{2}\vartheta$ by appealing to (\ref{eq:ezsqvarth}).
\end{proof}

Next, we need to estimate the contributions from the terms arising from $f^{ij}$ and $h^{LJKi}$. Introduce, to this end,
\begin{equation}\label{eq:hmethmsCdef}
  \hme_{\vartheta}(t):=\me_{\vartheta}(t)+\textstyle{\int}_{\bM_{t}}f^{ij}E_{j}(\vartheta)E_{i}\d_{t}\vartheta d\mu_{\bge},\ \ \
  \hme_{\msC}(t):=\me_{\msC}(t)+\textstyle{\int}_{\bM_{t}}h^{LJKi}\msD_{LJ}E_{i}\bmsC_{K}d\mu_{\bge}. 
\end{equation}

\begin{lemma}
  Given assumptions and notation as in Lemma~\ref{lemma:enstepone}, there are continuous functions $\hC_{\vartheta}$ and $\hC_{\msC}$ on $\mI$ such
  that if $t_{0},t_{1}\in \mI$, then
  \begin{equation}\label{eq:hmethmsCest}
    |\hme_{\vartheta}(t_{1})-\hme_{\vartheta}(t_{0})| \leq  \big|\textstyle{\int}_{t_{0}}^{t_{1}}\hC_{\vartheta}(t)\me_{1}(t)dt\big|,\ \ \
    |\hme_{\msC}(t_{1})-\hme_{\msC}(t_{0})| \leq  \big|\textstyle{\int}_{t_{0}}^{t_{1}}\hC_{\msC}(t)\me_{1}(t)dt\big|.
  \end{equation}
\end{lemma}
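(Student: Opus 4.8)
The plan is to estimate the time derivative of the corrected energies $\hme_{\vartheta}$ and $\hme_{\msC}$ directly, showing that the ``dangerous'' second-order terms retained in Lemma~\ref{lemma:enstepone} are cancelled, up to terms controllable by $\me_1$, by the time derivatives of the correction terms added in (\ref{eq:hmethmsCdef}). First I would differentiate $\int_{\bM_t} f^{ij}E_j(\vartheta)E_i\d_t\vartheta\,d\mu_{\bge}$ in time. When $\d_t$ hits $f^{ij}$ or $d\mu_{\bge}$ or $E_j(\vartheta)$ (picking up $E_j\d_t\vartheta$, which is in $\me_\vartheta$), the resulting integrals are bounded by $C(t)\me_1(t)$ for a continuous $C$ on $\mI$ depending on the solution and the coefficients; the only term that must be handled carefully is $\int_{\bM_t} f^{ij}E_j(\vartheta)E_i\d_t^2\vartheta\,d\mu_{\bge}$. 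I would integrate by parts in the spatial variables to rewrite this as $-\int_{\bM_t} E_i(f^{ij}E_j\vartheta)\d_t^2\vartheta\,d\mu_{\bge}$ plus a divergence-of-frame term; the top-order piece is $-\int_{\bM_t} f^{ij}E_iE_j(\vartheta)\d_t^2\vartheta\,d\mu_{\bge}$, which is precisely the negative of the retained term in (\ref{eq:meth}), and all lower-order pieces are again $O(\me_1)$. Adding this to (\ref{eq:meth}) yields $|\d_t\hme_\vartheta(t)|\le \hC_\vartheta(t)\me_1(t)$, and integrating in $t$ gives the first inequality in (\ref{eq:hmethmsCest}).

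For $\hme_{\msC}$ the strategy is identical. I would differentiate $\int_{\bM_t} h^{LJKi}\msD_{LJ}E_i\bmsC_K\,d\mu_{\bge}$; when $\d_t$ hits $h^{LJKi}$ or $d\mu_{\bge}$ the contribution is $O(\me_1)$, and when it hits $E_i\bmsC_K$ the resulting $E_i\d_t\bmsC_K$ is controlled once we observe that $\d_t\bmsC_K = Ne_0\bmsC_K$ and $e_0\bmsC_K$ together with its single spatial derivative is bounded using $\me_\msC$ — more precisely one needs that $\Delta_{\bge}\bmsC_K$, hence $\d_t^2\bmsC_K$ by (\ref{eq:ezsqbmsC}), and their relevant derivatives are accounted for; here I would use that $\d_t\bmsC$ appears in $\mfL_{\bmsC}$ only linearly and that $\msD$, $\msE$, $\vartheta$ and their first derivatives are all in $\me_\rorem\subset\me_1$. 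The genuinely top-order term is $\int_{\bM_t} h^{LJKi}\d_t(\msD_{LJ})E_i\bmsC_K\,d\mu_{\bge}$; using the evolution equation (\ref{eq:msDevoeq}), $\d_t\msD_{LJ} = Ne_0(\msD_{LJ}) = N(\bna_L\bmsC_J + \bna_J\bmsC_L)$, so this term is, to leading order, a spatial first derivative of $\bmsC$ paired with $E_i\bmsC_K$, which after one integration by parts is $O(\me_\msC)$. Combining with the spatial integration by parts applied to the retained term in (\ref{eq:memsC}) — rewriting $\int h^{LJKi}E_i(\msD_{LJ})\d_t\bmsC_K$ as $-\int h^{LJKi}\msD_{LJ}E_i\d_t\bmsC_K$ modulo $O(\me_1)$, and then using $\d_t\bmsC_K = Ne_0\bmsC_K$ — produces the cancellation, leaving $|\d_t\hme_\msC(t)|\le \hC_\msC(t)\me_1(t)$.

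The bookkeeping point throughout is that every coefficient that appears — $N$, $N^{-1}$, the frame components $e_I^i$, $\omega^I_i$, $f_I^J$, $\bk_{IJ}$, the metric $\bge$, the connection coefficients $\bGa_{IJ}^K$, and the structure functions $\eta^k_{ij}$ of $\{E_i\}$ — is smooth on the (compact-in-$t$, after restricting) domain of the solution, so all resulting ``lower-order'' integrals are bounded by a continuous function of $t$ times $\me_1(t)$; this is exactly why the statement allows $\hC_\vartheta,\hC_\msC$ to depend on the solution and coefficients. The main obstacle I expect is verifying that after the spatial integrations by parts the second-order-in-time, top-order-in-space terms match up exactly — i.e. that the symbol of the retained operator in Lemma~\ref{lemma:enstepone} is the same $f^{ij}$ (resp. involves the same $h^{LJKi}$) as the correction term, with the correct sign — so that no uncompensated term of the form $\int (\text{2nd spatial derivative})\cdot(\text{2nd time derivative})$ survives. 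This is essentially the integration-by-parts-in-space-and-time device from \cite{CaK} referred to in the introduction, specialised here; once the symbols are checked to agree, the remaining estimates are routine applications of Cauchy--Schwarz and the smoothness of the coefficients, and the two inequalities in (\ref{eq:hmethmsCest}) follow by integrating the differential inequalities from $t_0$ to $t_1$.
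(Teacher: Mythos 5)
Your proposal is correct and follows essentially the same route as the paper: both rely on the space--time integration-by-parts cancellation between the retained second-order term in (\ref{eq:meth})/(\ref{eq:memsC}) and the time derivative of the correction integral in (\ref{eq:hmethmsCdef}), and both appeal to (\ref{eq:msDevoeq}) to handle $\d_t\msD$. The only difference is presentational: the paper integrates the bad term over $[t_0,t_1]$ and performs the two successive integrations by parts explicitly, whereas you work at the level of $\d_t\hme_\vartheta$ and $\d_t\hme_\msC$ and then integrate the resulting differential inequality, which is an equivalent rearrangement of the same computation.
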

\begin{proof}
  Integration by parts yields
  \begin{equation}\label{eq:firstbadtermEFEsat}
    \textstyle{\int}_{t_{0}}^{t_{1}}\textstyle{\int}_{\bM_{t}}f^{ij}E_{i}E_{j}(\vartheta)\d_{t}^{2}\vartheta d\mu_{\bge}dt
    =-\textstyle{\int}_{t_{0}}^{t_{1}}\textstyle{\int}_{\bM_{t}}f^{ij}E_{j}(\vartheta)E_{i}\d_{t}^{2}\vartheta d\mu_{\bge}dt+\dots,
  \end{equation}
  where the dots signify terms that can be estimated by
  \[
  \big|\textstyle{\int}_{t_{0}}^{t_{1}}C(t)\me_1(t)dt\big|.
  \]
  Integrating by parts in the first term on the right hand side of (\ref{eq:firstbadtermEFEsat}) yields
  \[
  -\textstyle{\int}_{\bM_{t_{1}}}f^{ij}E_{j}(\vartheta)E_{i}\d_{t}\vartheta d\mu_{\bge}
  +\textstyle{\int}_{\bM_{t_{0}}}f^{ij}E_{j}(\vartheta)E_{i}\d_{t}\vartheta d\mu_{\bge}+\dots.
  \]
  Combining these observations with (\ref{eq:meth}) and the definition of $\hme_{\vartheta}$ yields the first estimate in (\ref{eq:hmethmsCest}).
  Similarly, integration by parts yields
  \begin{equation}\label{eq:secondbadtermEFEsat}
    \textstyle{\int}_{t_{0}}^{t_{1}}\textstyle{\int}_{\bM_{t}}h^{IJKi}E_{i}(\msD_{IJ})\d_{t}\bmsC_{K}d\mu_{\bge}dt
    =-\textstyle{\int}_{t_{0}}^{t_{1}}\textstyle{\int}_{\bM_{t}}h^{IJKi}\msD_{IJ}\d_{t}E_{i}\bmsC_{K}d\mu_{\bge}dt+\dots.
  \end{equation}
  Integrating by parts with respect to $t$ in the first term on the right hand side of (\ref{eq:secondbadtermEFEsat}) yields
  \[
  -\textstyle{\int}_{\bM_{t_{1}}}h^{IJKi}\msD_{IJ}E_{i}\bmsC_{K}d\mu_{\bge}+\textstyle{\int}_{\bM_{t_{0}}}h^{IJKi}\msD_{IJ}E_{i}\bmsC_{K}d\mu_{\bge}+\dots,
  \]
  where we appealed to (\ref{eq:msDevoeq}). Combining these observations with (\ref{eq:memsC}) and the definition of $\hme_{\msC}$ yields the
  second estimate in (\ref{eq:hmethmsCest}).
\end{proof}

The estimates (\ref{eq:hmethmsCest}) are of the desired form. However, the energies do not control the quantities we wish to prove vanish.
On the other hand, this can easily be remedied.

\begin{lemma}
  Given assumptions and notation as in Lemma~\ref{lemma:enstepone}, let $K$ be a compact subinterval of $\mI$. If $1\leq\eta\in\rn{}$ is large
  enough and
  \[
  \hme_{\eta}(t):=\hme_{\vartheta}(t)+\hme_{\msC}(t)+\eta^{2}\me_{\rorem}(t),
  \]
  cf. (\ref{eq:meroremdef}) and (\ref{eq:hmethmsCdef}), then, using the notation introduced in (\ref{eq:meeta}),
  \begin{equation}\label{eq:mehmeequiv}
    \me_{\eta}(t)/2\leq\hme_{\eta}(t)\leq 2\me_{\eta}(t)
  \end{equation}
  for all $t\in K$. Moreover, there is a constant $C_{K}$ such that if $t_{0},t_{1}\in K$, then
  \begin{equation}\label{eq:hmediff}
    |\hme_{\eta}(t_{1})-\hme_{\eta}(t_{0})|\leq C_{K}\big|\textstyle{\int}_{t_{0}}^{t_{1}}\hme_{\eta}(t)dt\big|.    
  \end{equation}
\end{lemma}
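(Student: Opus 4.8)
The plan is to derive the two claimed properties of $\hme_\eta$ directly from the structural estimates already proven for its constituents. First I would establish the norm equivalence \eqref{eq:mehmeequiv}. The point is that $\hme_\vartheta - \me_\vartheta$, $\hme_\msC-\me_\msC$ are the integral correction terms in \eqref{eq:hmethmsCdef}, each of which is a single spatial integral of a smooth coefficient ($f^{ij}$ resp. $h^{LJKi}$) times a product of two first-order factors: for $\hme_\vartheta$ the factors are $E_j(\vartheta)$ and $E_i\d_t\vartheta$, both of which are controlled (in $L^2$ over $\bM_t$, uniformly on the compact set $K$) by $\me_1^{1/2}$, hence by $\me_\eta^{1/2}$; for $\hme_\msC$ the factors are $\msD_{LJ}$ and $E_i\bmsC_K$, again each bounded in $L^2$ by $\me_1^{1/2}\le\me_\eta^{1/2}$. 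By Cauchy--Schwarz the correction terms are thus bounded by $C_K\,\me_1(t)\le C_K\,\eta^{-2}\,\eta^2\me_\rorem(t)\le C_K\eta^{-2}\me_\eta(t)$; more precisely one bounds $|\hme_\vartheta-\me_\vartheta|+|\hme_\msC-\me_\msC|\le C_K\me_1$, where $C_K$ depends only on the sup norms of the (continuous) coefficient functions and the metric on $K$. Since $\me_1\le \me_\vartheta+\me_\msC+\me_\rorem$ and $\me_\eta = \me_\vartheta+\me_\msC+\eta^2\me_\rorem$, choosing $\eta$ large enough that $C_K/\eta^2 \le 1/2$ (or $C_K(1+\eta^{-2})$ small, being careful with the exact constants) gives $\tfrac12\me_\eta\le\hme_\eta\le 2\me_\eta$ on $K$, which is \eqref{eq:mehmeequiv}. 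The only subtlety is making the dependence on $\eta$ explicit: because $\me_\rorem$ appears in $\me_\eta$ with weight $\eta^2$ while the correction terms are controlled purely by $\me_1$ (no $\eta$-weight), enlarging $\eta$ genuinely shrinks the relative size of the corrections.

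Next I would prove the differential inequality \eqref{eq:hmediff}. Add the two estimates in \eqref{eq:hmethmsCest} and $\eta^2$ times \eqref{eq:merorem}: this yields
\[
  |\hme_\eta(t_1)-\hme_\eta(t_0)|\le \Big|\textstyle\int_{t_0}^{t_1}\big(\hC_\vartheta(t)+\hC_\msC(t)+\eta^2 C_\rorem(t)\big)\me_1(t)\,dt\Big|.
\]
On the compact subinterval $K$ the continuous functions $\hC_\vartheta,\hC_\msC,C_\rorem$ are bounded, so the bracket is $\le C'_K$ with $C'_K$ depending on $K$ and on the (now fixed) value of $\eta$. Finally $\me_1(t)\le\me_\eta(t)\le 2\hme_\eta(t)$ by \eqref{eq:mehmeequiv}, so the integrand is $\le 2C'_K\hme_\eta(t)$, giving \eqref{eq:hmediff} with $C_K := 2C'_K$.

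The main obstacle — really the only genuinely delicate point — is bookkeeping the order in which $\eta$ is chosen relative to the constants: one must first fix $K$, then read off the constant $C_K$ governing $|\hme_\vartheta-\me_\vartheta|+|\hme_\msC-\me_\msC|\le C_K\me_1$ from the sup norms of the coefficients over $K$, then choose $\eta$ large depending on this $C_K$ to obtain \eqref{eq:mehmeequiv}, and only afterwards absorb the $\eta$-dependent factor $\eta^2 C_\rorem$ into the constant $C_K$ of \eqref{eq:hmediff}. Once this ordering is respected the proof is a routine combination of \eqref{eq:hmethmsCest}, \eqref{eq:merorem} and the Cauchy--Schwarz bounds on the correction integrals; I would present it as: "Choose $\eta$ large enough that \eqref{eq:mehmeequiv} holds (possible by the above Cauchy--Schwarz estimates and the fact that $\me_1\le\me_\eta$ is cheap against $\eta^2\me_\rorem$); then \eqref{eq:hmediff} follows from \eqref{eq:hmethmsCest}, $\eta^2$ times \eqref{eq:merorem}, boundedness of $\hC_\vartheta,\hC_\msC,C_\rorem$ on $K$, and $\me_1\le 2\hme_\eta$."
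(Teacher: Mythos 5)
Your overall plan — bound the correction integrals by Cauchy--Schwarz, choose $\eta$ large to get the equivalence, then combine the three estimates and absorb via the equivalence — is the right one, and your treatment of \eqref{eq:hmediff} is correct once \eqref{eq:mehmeequiv} is in hand. But the step by which you deduce \eqref{eq:mehmeequiv} has a genuine gap.

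The chain
\[
  C_K\,\me_1(t)\ \le\ C_K\,\eta^{-2}\,\eta^2\me_\rorem(t)\ \le\ C_K\eta^{-2}\me_\eta(t)
\]
is false at the first inequality: since $\me_1=\me_\vartheta+\me_\msC+\me_\rorem$, the claim $\me_1\le\me_\rorem$ fails whenever $\me_\vartheta+\me_\msC>0$. Correspondingly, the surrounding heuristic ("the corrections are controlled purely by $\me_1$, which has no $\eta$-weight, so enlarging $\eta$ shrinks the relative size") does not work: if $\me_\rorem$ happens to be much smaller than $\me_\vartheta+\me_\msC$, then $\me_1/\me_\eta\to 1$ as $\eta\to\infty$, so a bound of the form $|\hme_\eta-\me_\eta|\le C_K\me_1$ gives no smallness unless $C_K<1/2$ outright, and $C_K$ is fixed by the sup norms of the coefficients and cannot be shrunk by enlarging $\eta$.

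What makes the argument work is the \emph{mixed} structure of the correction integrands, which you identify but do not exploit. In the $\hme_\vartheta$ correction, one factor ($E_j\vartheta$) is controlled by $\me_\rorem^{1/2}$ — the part of the energy that carries the $\eta^2$ weight — while the other ($E_i\d_t\vartheta$) is controlled by $\me_\vartheta^{1/2}$. Similarly, in the $\hme_\msC$ correction, $\msD_{LJ}$ is controlled by $\me_\rorem^{1/2}$ and $E_i\bmsC_K$ by $\me_\msC^{1/2}$. Cauchy--Schwarz thus gives
\[
  |\hme_\vartheta-\me_\vartheta|+|\hme_\msC-\me_\msC|\ \le\ C_K\big(\me_\rorem^{1/2}\me_\vartheta^{1/2}+\me_\rorem^{1/2}\me_\msC^{1/2}\big),
\]
and Young's inequality with the $\eta$-balance yields, for instance,
\[
  C_K\me_\rorem^{1/2}\me_\vartheta^{1/2}\le\tfrac12\me_\vartheta+\tfrac12 C_K^2\me_\rorem\le \tfrac12\me_\vartheta+\tfrac12\eta^2\me_\rorem,
\]
once $\eta^2\ge C_K^2$; together with the analogous bound for the $\msC$-term this gives $|\hme_\eta-\me_\eta|\le\tfrac12\me_\eta$ and hence \eqref{eq:mehmeequiv}. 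So the fix is small — keep track of which factor lives in the $\me_\rorem$-controlled part and which in the $\me_\vartheta$- or $\me_\msC$-controlled part — but the reasoning you give is not correct as stated.
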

\begin{remark}\label{remark:sotoEin}
  Assume that $\vartheta|_{t_{0}}=0$, $\msH|_{t_{0}}=0$, $\bmsC_{I}|_{t_{0}}=0$ and $\msD_{IJ}|_{t_{0}}=0$ for some $t_{0}\in \mI$. Then $\msE|_{t_{0}}=0$
  due to (\ref{eq:msEdef}). Moreover, $\d_{t}\vartheta|_{t_{0}}=0$ due to (\ref{eq:varthetaevoeq}), so that $\d_{t}^{2}\vartheta|_{t_{0}}=0$ due to
  (\ref{eq:ezsqvarth}). Finally, due to (\ref{eq:ezbmsCI}), $\d_{t}\bmsC_{I}|_{t_{0}}=0$. Due to these observations, $\hme_{\eta}(t_{0})=0$, so that
  $\hme_{\eta}(t)=0$ for all $t$ in the existence interval of the solution; this is an immediate consequence of (\ref{eq:hmediff}) and Gr\"{o}nwall's
  lemma. Combining this fact with (\ref{eq:msRzzmsRzI}) and (\ref{eq:msRIJdecomp}) yields the conclusion
  that $\msR=0$, so that $G=T$; cf. the definition of $\msR$, (\ref{eq:msRdef}). Thus Einstein's equations are fulfilled. Next, since $\vartheta=0$ and
  (\ref{eq:modscfiedeq}) holds, we know that $\Box_{g}\phi-V'\circ\phi=0$; i.e., the non-linear scalar field equations are satisfied. Finally, since
  $\vartheta=0$, the hypersurface $\bM_t$ has constant mean curvature $1/t$. 
\end{remark}
\begin{proof}
  Due to the definition of $\hme_{\vartheta}$ and $\hme_{\msC}$ and the compactness of $K$, the estimate (\ref{eq:mehmeequiv}) follows immediately by
  choosing $\eta$ to be large enough. Given this estimate, (\ref{eq:hmediff}) follows from (\ref{eq:merorem}), (\ref{eq:hmethmsCest}) and the fact
  that $\eta\geq 1$. 
\end{proof}

\section{Local existence and Cauchy stability}\label{section:thesystemlocexist}

The overall logic behind proving local existence of solutions to (\ref{seq:thesystem}) is to prove local existence of solutions to the following
system:
\begin{subequations}\label{seq:thesystemso}
  \begin{align}
    \d_{t}e_{I}^{i} = & f_{I}^{J}e_{J}^{i},\label{eq:eIievoso}\\
    \d_{t}\omega_{i}^{I} = & -f_{J}^{I}\omega_{i}^{J},\label{eq:omIievoso}\\
    \d_{t}^{2}\bk_{IJ} = & N^{2}e^{l}_{K}e^{m}_{K}E_{l}E_{m}\bk_{IJ}+\Omega_{IJ}+\mfK_{IJ},\label{eq:bkIJsosys}\\
    \d_{t}^{2}\phi = & N^{2}e^{l}_{K}e^{m}_{K}E_{l}E_{m}\phi+\Phi,\label{eq:scalarfieldfeso}\\
    \Delta_{\bge}(N-1) = & \zeta (N-1)+\mfN,\label{eq:lapseeqso}    
  \end{align}
\end{subequations}
where $f_I^J$ is given by (\ref{eq:fIJthmdef}); $\Omega_{IJ}$ and $\mfK_{IJ}$ are the expressions obtained in Lemma~\ref{lemma:thebkIJeq}; and
$\Phi$ is the expression obtained in Lemma~\ref{lemma:phiess}. Moreover,
\begin{equation}\label{eq:zetamfNdef}
  \zeta:=\bk_{IJ}\bk_{IJ}+(N^{-1}\d_t\phi)^{2}-\tfrac{2}{n-1}V\circ\phi,\ \ \
  \mfN:=\bk_{IJ}\bk_{IJ}+(N^{-1}\d_t\phi)^{2}-\tfrac{2}{n-1}V\circ\phi-t^{-2}.
\end{equation}
At this point we think of (\ref{seq:thesystemso}) as being devoid of geometric content (with the one exception that we, for convenience,
use $\Delta_{\bge}$ to denote the Laplace operator associated with the metric $\bge$ which is defined by the requirement that $\{e_I\}$ is
an orthonormal frame, where $e_I:=e_I^iE_i$). In particular, $\bk_{IJ}$ need not be symmetric etc. 

\textbf{Initial data for the first order system.}
Assume now that we have CMC initial data for (\ref{seq:thesystem}). This means $e_{I}^{i}|_{t_{0}}$, $\omega^{I}_{i}|_{t_{0}}$, $\bk_{IJ}|_{t_{0}}$,
$N|_{t_{0}}>0$, $\phi|_{t_{0}}$ and $(U\phi)|_{t_{0}}$ are given, where $(U\phi)|_{t_{0}}=(e_0\phi)|_{t_{0}}$ denotes the future directed unit normal
derivative of the scalar field at the initial hypersurface. Moreover, for $t=t_{0}$, these data satisfy $\bk_{II}|_{t_{0}}=1/t_{0}$,
$\bk_{IJ}|_{t_{0}}=\bk_{JI}|_{t_{0}}$, (\ref{eq:lapseeq}), (\ref{eq:momcon}) and (\ref{eq:Hamcon}), where
\begin{equation}\label{eq:bgebkitof}
  e_{I}|_{t_{0}}:=e^{i}_{I}|_{t_{0}}E_{i};\ 
  \omega^{I}|_{t_{0}}:=\omega^{I}_{i}|_{t_{0}}\eta^{i};\ 
  \bge|_{t_{0}}:=\textstyle{\sum}_{I}\omega^{I}|_{t_{0}}\otimes\omega^{I}|_{t_{0}};\ 
  \bk|_{t_{0}}:=\bk_{IJ}|_{t_{0}}\omega^{I}|_{t_{0}}\otimes\omega^{J}|_{t_{0}}.
\end{equation}
Moreover, $\theta|_{t_0}=\bk_{II}|_{t_{0}}$ and $\bna$ and $\bS$ on $\bM_{t_{0}}$ are the Levi-Civita connection and the scalar curvature associated
with $\bge|_{t_{0}}$. In (\ref{eq:bgebkitof}),
$\{\eta^{i}\}$ is the dual basis of the fixed frame $\{E_{i}\}$. In addition, we require $\zeta$ appearing in (\ref{eq:lapseeqso}) to satisfy
$\zeta|_{t_{0}}>0$; note that (\ref{eq:lapseeq}) is the same as (\ref{eq:lapseeqso}). This means that $N|_{t_{0}}$ is uniquely determined by
(\ref{eq:lapseeqso}) and the initial data for the other unknowns, so that it does not have to be included in the initial data. Next, note
that $(\mfN/\zeta)|_{t_{0}}=1-1/(t_{0}^{2}\zeta|_{t_{0}})$, so that
\begin{equation}\label{eq:Ntzlowerbd}
N|_{t_{0}}\geq 1-(\mfN/\zeta)|_{t_{0}}=1/(t_{0}^{2}\zeta|_{t_{0}})>0
\end{equation}
by the maximum principle. Note also that it is understood that
$\omega^{I}_{i}|_{t_{0}}e_{J}^{i}|_{t_{0}}=\de^{I}_{J}$, so that the initial data for $\omega^{I}_{i}$ are determined by the initial data for $e^{i}_{I}$. To
summarise, CMC initial data for (\ref{seq:thesystem}) are given by $e_{I}^{i}|_{t_{0}}$, $\bk_{IJ}|_{t_{0}}$, $\phi|_{t_{0}}$ and $(U\phi)|_{t_{0}}$ satisfying
$\bk_{II}|_{t_{0}}=1/t_{0}$, $\bk_{IJ}|_{t_{0}}=\bk_{JI}|_{t_{0}}$, (\ref{eq:momcon}), (\ref{eq:Hamcon}) and the condition that $\zeta|_{t_{0}}>0$. Given these
initial data, we wish to solve (\ref{seq:thesystem}) locally. 

\textbf{From initial data for the first order system to initial data for the second order system.} 
Given initial data for (\ref{seq:thesystem}), we define initial data for (\ref{seq:thesystemso}) by the data for (\ref{seq:thesystem}) complemented by
$\d_{t}\bk_{KJ}|_{t_{0}}$, given by the right hand side of (\ref{eq:dtbkIJ}), restricted to $\bM_{t_{0}}$. Thus $\d_{t}\bk_{KJ}|_{t_{0}}=\d_{t}\bk_{JK}|_{t_{0}}$.
Due to this choice of initial data,
$\vartheta|_{t_{0}}=0$, $\msH|_{t_{0}}=0$, $\msC|_{t_{0}}=0$ and $\msD_{KJ}|_{t_{0}}=0$ (the latter equality follows from the fact that (\ref{eq:dtbkIJ}) is
satisfied for $t=t_{0}$). Given these initial data for (\ref{seq:thesystemso}), we solve (\ref{seq:thesystemso}) (for which we, for the moment, assume
that there is a good local existence theory). In solving (\ref{seq:thesystemso}) and in the arguments below, it is understood that
\[
\omega^{K}:=\omega^{K}_{i}\eta^{i},\ \ \
\bge:=\textstyle{\sum}_{K}\omega^{K}\otimes\omega^{K},\ \ \ \bk:=\bk_{KJ}\omega^{K}\otimes\omega^{J},\ \ \ g:=-N^{2}dt\otimes dt+\bge. 
\]
Assume the existence interval for the solution to be $\mI\subset (0,\infty)$, yielding a solution on $M=\bM\times \mI$. Assume, moreover, that
$\zeta>0$ on $M$, where $\zeta$ is introduced in (\ref{eq:zetamfNdef}). Then, by the argument given in connection with (\ref{eq:Ntzlowerbd}), $N>0$
on $M$. Then $(M,g)$ is a Lorentz manifold and $g(\d_{t},\d_{t})=-N^{2}$. Next, by our choices, $\bk_{IJ}|_{t_{0}}$ and $\d_{t}\bk_{IJ}|_{t_{0}}$ are
symmetric. Combining this observation with (\ref{eq:bkIJsosys}) and Remark~\ref{remark:thebkIJeq antisymm}, it is clear that $\bk_{[IJ]}$ satisfies a
homogeneous wave equation with vanishing initial data, so that $\bk_{IJ}$ is symmetric. Since $\{e_{K}\}$, where $e_{K}:=e_{K}^{i}E_{i}$, is an orthonormal
frame, Lemma~\ref{lemma:frameorth} applies and implies that $f_{K}^{J}+f_{J}^{K}=-2N\bkappa_{KJ}$, where $\bkappa_{KJ}$ are the components of the second
fundamental form with respect to the frame $\{e_I\}$. On the other hand, due to (\ref{eq:fIJthmdef}) and the symmetry of $\bk_{IJ}$, we also know that 
$f_{K}^{J}+f_{J}^{K}=-2N\bk_{KJ}$. Thus $\bk_{IJ}$ are the components of the second fundamental form with respect to the frame $\{e_I\}$.

\begin{remark}\label{remark:sotososotofo}
  Due to the above observations, in our setting, the conditions of Lemma~\ref{lemma:dtbRIJ} are satisfied. Next, note that (\ref{eq:assumedeqbkIJ})
  is equivalent to
  (\ref{eq:bkIJsosys}); see Lemma~\ref{lemma:thebkIJeq}. Moreover, (\ref{eq:imposedlapseeq}) is an immediate consequence of (\ref{eq:lapseeqso}).
  Finally, (\ref{eq:modscfiedeq}) is equivalent to (\ref{eq:scalarfieldfeso}); see Lemma~\ref{lemma:phiess}. To conclude, the conditions of
  Lemma~\ref{lemma:enstepone} are fulfilled. Due to Remark~\ref{remark:sotoEin}, we conclude that $\msH=0$, $\msC=0$, $\msD=0$ and $\vartheta=0$
  on $M$. Since $\msH=0$, we conclude that (\ref{eq:Hamcon}) holds on $M$; since $\msC=0$, we conclude that (\ref{eq:momcon}) holds on all of
  $M$; and since $\msD=0$, we conclude that (\ref{eq:dtbkIJ}) holds on all of $M$. Combining these observations with the fact that
  (\ref{seq:thesystemso}) holds yields the conclusion that (\ref{seq:thesystem}) holds on $M$. Finally, since $\vartheta=0$, we conclude that
  $\tr_{\bge}\bk=1/t$ on the interval of existence. This demonstrates local existence of solutions to (\ref{seq:thesystem}). However, it is also of
  interest to note that the solutions to (\ref{seq:thesystem}) solve the Einstein-non-linear scalar field system; see Remark~\ref{remark:sotoEin}.
  What remains is to demonstrate local existence of solutions to
  (\ref{seq:thesystemso}). The idea is to do so by appealing to Corollary~\ref{cor:localexmodsys}. However, the system (\ref{seq:thesystemso})
  does not immediately fall into the framework considered in this corollary. The purpose of the next subsection is to reformulate
  (\ref{seq:thesystemso}) in such a way that it does.
\end{remark}

\subsection{Adapting the equations to the general framework}
Before rewriting (\ref{seq:thesystemso}) so that it fits into the general framework, we need to derive an equation for the time derivative of the
lapse function. Note, to this end, that the lapse equation (\ref{eq:lapseeq}) reads
\begin{equation}\label{eq:lapse}
  e_{K}e_{K}(N)-t^{-2}(N-1)=\bga^{J}_{KJ}e_{K}(N)+N\big(\bk_{IJ}\bk_{IJ}-t^{-2}+[e_{0}(\phi)]^{2}-\tfrac{2}{n-1}V\circ\phi\big).
\end{equation}

\begin{lemma}\label{lemma:Ndoteq}
  Given that (\ref{eq:lapseeq}) holds, $\Ndot:=\d_{t}N$ satisfies
  \begin{equation}\label{eq:lapsedot}
    \begin{split}
      e_{K}e_{K}(\Ndot) -t^{-2}\Ndot = & \bga^{J}_{KJ}e_{K}(\Ndot)+\Ndot\big(\bk_{IJ}\bk_{IJ}-t^{-2}+[e_{0}(\phi)]^{2}-\tfrac{2}{n-1}V\circ\phi\big)
      +\mfR_{N},
    \end{split}
  \end{equation}
  where $\mfR_{N}$ is a polynomial in $N$, $E_{i}N$, $E_{i}E_{j}N$, $e_{K}^{i}$, $\omega^{K}_{i}$, $E_{i}e_{K}^{j}$, $f_{K}^{J}$, $E_{i}f_{K}^{J}$,
  $\bk_{JK}$, $\d_{t}\bk_{JK}$, $\d_{t}\phi$, $e_{0}\phi$, $\d_{t}e_{0}\phi$, and $V'\circ\phi$ (which is of first order in the last expression).
  The function $\mfR_{N}$ is also allowed to depend on the spacetime coordinates and this dependence is smooth, assuming $\mI\subset (0,\infty)$.
\end{lemma}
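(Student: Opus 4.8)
The plan is to differentiate the lapse equation (\ref{eq:lapse}) with respect to $t$ and identify the terms that reproduce the structure of the left-hand side and the bracketed coefficient of $N$ on the right-hand side of (\ref{eq:lapsedot}), collecting everything else into $\mfR_{N}$. First I would write (\ref{eq:lapse}) as
\[
e_{K}e_{K}(N)-t^{-2}(N-1)-\bga^{J}_{KJ}e_{K}(N)-N\big(\bk_{IJ}\bk_{IJ}-t^{-2}+[e_{0}(\phi)]^{2}-\tfrac{2}{n-1}V\circ\phi\big)=0
\]
and apply $\d_{t}$ to the whole identity. When $\d_{t}$ hits the coefficient $N$ in the last term, one gets precisely the term
$-\Ndot\big(\bk_{IJ}\bk_{IJ}-t^{-2}+[e_{0}(\phi)]^{2}-\tfrac{2}{n-1}V\circ\phi\big)$, which is kept on the right-hand side of (\ref{eq:lapsedot}); when $\d_{t}$ hits $N-1$ in the $t^{-2}$ term one gets $-t^{-2}\Ndot$, also kept. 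All remaining contributions must be shown to be polynomial in the allowed quantities.

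The key steps, in order, are: (i) rewrite $e_{K}e_{K}(N)=e_{K}^{i}e_{K}^{j}E_{i}E_{j}N+e_{K}^{i}E_{i}(e_{K}^{j})E_{j}N$ and $e_{K}(N)=e_{K}^{i}E_{i}N$, and similarly expand $e_{0}(\phi)=N^{-1}\d_{t}\phi$, so that everything is expressed through the frame $\{E_i\}$; (ii) compute $\d_{t}$ of each factor: $\d_{t}e_{K}^{i}=f_{K}^{J}e_{J}^{i}$ by (\ref{eq:eIievoso})/(\ref{eq:evoluteqframe}), $\d_{t}\omega^{K}_{i}=-f^{K}_{J}\omega^{J}_{i}$, $\d_{t}\bga^{J}_{KJ}$ is polynomial in $f$, $E_if$, $e$, $\omega$, $E_ie$ by Lemma~\ref{lemma:Jacobi identity} and Remark~\ref{remark:depofbgaIJK}, $\d_{t}(\bk_{IJ}\bk_{IJ})=2\bk_{IJ}\d_{t}\bk_{IJ}$, $\d_{t}([e_0\phi]^2)=2e_0(\phi)\,\d_t e_0(\phi)$ (or, expanding $e_0\phi=N^{-1}\d_t\phi$, a polynomial in $N$, $N^{-1}$, $\d_t\phi$, $\d_t e_0\phi$), and $\d_{t}(V\circ\phi)=(V'\circ\phi)\d_{t}\phi$; (iii) observe that $\d_{t}$ commutes with $E_i$, so that $\d_{t}(E_iE_jN)=E_iE_j\Ndot$ and $\d_{t}(E_jN)=E_j\Ndot$, which produces the principal part $e_{K}^{i}e_{K}^{j}E_iE_j\Ndot$ together with lower-order terms; (iv) one encounters terms of the form $\d_{t}(e_K^ie_K^j)E_iE_jN$, i.e. $2f_K^{L}e_L^ie_K^jE_iE_jN$, which contain two $E$-derivatives of $N$ and must therefore be absorbed into $\mfR_{N}$ — this is legitimate because $\mfR_{N}$ is explicitly allowed to depend on $E_iE_jN$, $f_K^J$, $e_K^i$; (v) collect the two kept terms and everything else into $\mfR_{N}$, and verify by inspection that every factor appearing is among $N$, $E_iN$, $E_iE_jN$, $e_K^i$, $\omega^K_i$, $E_ie_K^j$, $f_K^J$, $E_if_K^J$, $\bk_{JK}$, $\d_t\bk_{JK}$, $\d_t\phi$, $e_0\phi$, $\d_t e_0\phi$, $V'\circ\phi$, plus the structure coefficients $E_{\bfI}\eta^i_{jk}$ per the convention of Remark~\ref{remark:Ei str coeff conv}; smoothness in the spacetime coordinates (given $\mI\subset(0,\infty)$, so $t^{-2}$ is smooth) is immediate.

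The main obstacle, such as it is, is bookkeeping rather than conceptual: one must be careful that the principal part of $\d_t$ of $e_Ke_K(N)$ really is $e_K^ie_K^jE_iE_j\Ndot$ with all other second-$E$-derivative-of-$N$ contributions (those coming from $\d_t$ hitting $e_K^ie_K^j$) correctly shunted into $\mfR_{N}$, and that no time derivative of order higher than one of $\bk$, $\phi$ or $N$ sneaks in — here the point is that $\d_t^2\bk$, $\d_t^2\phi$ are never produced because $\d_t$ is applied only once, and $\d_t\Ndot=\d_t^2N$ does not appear because we are deriving an equation \emph{for} $\Ndot$, not differentiating it. I would also note that, unlike in Lemma~\ref{lemma:thebkIJeq} or Lemma~\ref{lemma:phiess}, there is no need to rescale by powers of $N$ or to commute $e_0$ past spatial derivatives, since the equation (\ref{eq:lapse}) is already written directly in terms of $\d_t$-compatible quantities; this makes the present computation strictly simpler than its counterparts above, and I would state it as such and leave the remaining verification to the reader in the same spirit as the analogous lemmas.
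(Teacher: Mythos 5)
Your proposal is correct and follows essentially the same route as the paper: differentiate (\ref{eq:lapse}) in $t$, keep the two terms reproducing the left-hand side and the $\Ndot\cdot(\text{bracket})$ structure, and shunt the rest into $\mfR_N$. The only cosmetic difference is that the paper tracks the failure of $\d_t$ to commute with $e_K$ via the operator identity $[\d_t,e_K]=f_K^Je_J$, while you do the equivalent bookkeeping by expanding $e_K=e_K^iE_i$ and differentiating the frame coefficients; both correctly locate the second-order contributions $f_K^Le_L^ie_K^jE_iE_jN$ in the allowed arguments of $\mfR_N$.
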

\begin{remark}
  The equation (\ref{eq:lapsedot}) can also be written
  \begin{equation}\label{eq:lapsedotabs}
    \Delta_{\bge}\Ndot =  \zeta\Ndot+\mfR_{N}.
  \end{equation}
\end{remark}
\begin{proof}
  Note, to begin with, that $[\d_{t},e_{K}]=f_{K}^{J}e_{J}$, so that
  \[
  \d_{t}e_{K}e_{K}(N)=e_{K}e_{K}(\Ndot)+f_{K}^{J}e_{J}e_{K}(N)+e_{K}(f_{K}^{J}e_{J}N).
  \]
  Moreover,
  \[
  \d_{t}[\bga^{J}_{KJ}e_{K}(N)]=(\d_{t}\bga^{J}_{KJ})e_{K}(N)+\bga^{J}_{KJ}f_{K}^{L}e_{L}(N)+\bga^{J}_{KJ}e_{K}(\Ndot),
  \]
  where we tacitly expand $\d_{t}\bga^{J}_{KJ}$ by appealing to (\ref{eq:dtbgaIJK}). Adding up these observations yields the desired conclusion. 
\end{proof}

\subsection{Local existence}\label{ssection:locexeinst}

Next, we wish to prove local existence of solutions to (\ref{seq:thesystem}) by proving that (\ref{seq:thesystemso}) can be rewritten so that
Corollary~\ref{cor:localexmodsys} applies. In order to do this, we need to explain what the variables $u$, $v$, $N_{1}$ and $N_{2}$ are in the
present setting. In our setting, $u$ collects all the functions $e_{I}^{i}$ and $\omega^{I}_{i}$; $v$ collects all the functions $\bk_{IJ}$,
$\phi$, $e_{0}\phi$ and $E_{i}\phi$; $N_{1}=N-1$; and $N_{2}=\d_{t}N$. Clearly, an equation of the form (\ref{eq:themodelu}) holds, assuming we make
appropriate assumptions concerning $a_{I}^{J}$; cf. Remark~\ref{remark:fIJbreakdown}. As in the statement of Theorem~\ref{thm:fosyslocalexistence},
$f_{I}^{J}=-N\bk_{IJ}+a_I^J$. The contribution of the first term to $f_{1}$ only depends on $v$ and $N_{1}$. Assuming $a_{K}^{J}$ to only depend
on $e_{I}^{i}$, $\omega^{I}_{i}$, $\bk_{IJ}$, $\phi$, $e_{0}\phi$, $E_{i}\phi$, $N$ and $E_{i}N$, then $f_{1}$ has the right dependence. However, it does
not a priori have the desired boundedness properties. On the other
hand, that can be arranged at a later stage; given initial data, we choose $f_{1}$ so that it coincides with the function arising from the equations
in a neighbourhood of the initial data, but is such that it is constant outside a compact set. We describe the detailed modifications in the proof
of local existence. Turning to (\ref{eq:themodelv}), note that $\phi$ satisfies (\ref{eq:phiess}). Since $\Phi$ on the right hand side of
(\ref{eq:phiess}) can be included in $f_{2}$, we can interpret (\ref{eq:phiess}) as a subsystem of (\ref{eq:themodelv}). Next, consider
(\ref{eq:Ekphiess}). To begin with, $\Phi_{k}$ can be included in $f_{2}$. Moreover, $\Xi_{k}$ can be included in the second term on the right hand
side of (\ref{eq:themodelv}). By similar arguments, it can be verified that $\Psi$ appearing in (\ref{eq:ezphiess}) can be included in $f_{2}$, so that
(\ref{eq:ezphiess}) can be considered to be a subsystem of (\ref{eq:themodelv}). Next, let us turn to (\ref{eq:thebkIJeq}). By arguments similar to
the above, $\mfK_{IJ}$ can be included in $f_{2}$ and $\Omega_{IJ}$ can be included in the second term on the right hand side of (\ref{eq:themodelv}).
Finally, note that the first terms on the right hand sides of (\ref{eq:phiess}), (\ref{eq:Ekphiess}), (\ref{eq:ezphiess}) and (\ref{eq:thebkIJeq})
are such that they are included in the first term on the right hand side of (\ref{eq:themodelv}). Finally, (\ref{eq:lapseeqso}) is of the
form (\ref{eq:themodelN}) and (\ref{eq:lapsedotabs}) is of the form (\ref{eq:themodelNdot}). In fact, $\zeta$ and $f_{3}$ only depend on $v$.
With this information at hand, we are in a position to prove Theorem~\ref{thm:fosyslocalexistence}.

\begin{proof}[Proof of Theorem~\ref{thm:fosyslocalexistence}]
  As mentioned at the beginning of the present subsection, we intend to prove the statement by appealing to Corollary~\ref{cor:localexmodsys}.
  Moreover, the relation between the variables in (\ref{seq:thesystem}) and the variables in (\ref{seq:themodel}) is clarified at the beginning
  of the present subsection. However, we also need to specify initial data for $u$ and $v$. Note, to this end, that given initial data as in
  the statement of the theorem, $N|_{t_{0}}$ is uniquely determined by (\ref{eq:lapseeq}). The initial data for $u$ are already given in the
  statement of the theorem. Moreover, $\bk_{IJ}|_{t_{0}}$, $\phi|_{t_{0}}$, $(\d_{t}\phi)|_{t_{0}}:=N|_{t_{0}}(e_{0}\phi)|_{t_{0}}$, $(E_{i}\phi)|_{t_{0}}$,
  $(\d_{t}E_{i}\phi)|_{t_{0}}$ and $(e_{0}\phi)|_{t_{0}}$ are all given. In addition, $(\d_{t}\bk_{IJ})|_{t_{0}}$ is given by (\ref{eq:dtbkIJ}). The only
  initial datum left to be specified is $(\d_{t}e_{0}\phi)|_{t_{0}}$. However, this quantity is obtained by restricting the right hand side of
  (\ref{eq:phi}) to $t=t_{0}$ and then multiplying the result with $N|_{t_{0}}$. This means that $u|_{t_{0}}$, $v|_{t_{0}}$ and $(\d_{t}v)|_{t_{0}}$
  have all been specified. Moreover, as described at the beginning of the present subsection, the equations we wish to solve take the form
  (\ref{seq:themodel}), and the functions $f_{1}$ etc. have the dependence described in connection with (\ref{seq:themodel}); we refer to this
  as the \textit{original system}. However, the bounds such as (\ref{eq:lambdabd}) etc. also have to be satisfied in order for us to be allowed
  to appeal to Corollary~\ref{cor:localexmodsys}. In order to ensure (\ref{eq:lambdabd}), let $\bh_{1}^{ij}=N^2e_{K}^{i}e_{K}^{j}$ and
  $\bh_{2}^{ij}=e_{K}^{i}e_{K}^{j}$ in a
  neighbourhood of the initial values $N|_{t_{0}}$ and $e_{K}^{i}|_{t_{0}}$, but such that $\bh_{l}^{ij}$ are constant outside a compact set. This means
  that there is a $\lambda>0$ such that (\ref{eq:lambdabd}) is fulfilled. Moreover, the $\bh_{l}^{ij}$ are smooth functions, all of whose derivatives
  are bounded. We can modify the remaining ingredients of (\ref{seq:themodel}) similarly. We refer to the resulting system of equations as the
  \textit{modified system}. Applying Corollary~\ref{cor:localexmodsys} to the modified system yields a local solution to (\ref{seq:themodel}). By
  continuity, this yields a local solution to the original system in an open interval containing the initial time $t_{0}$. 
  
  Next, let us prove uniqueness of solutions to the original system. Assume, to this end, that we have two solutions, say $\sigma_{1}$ and
  $\sigma_{2}$, to the original system on two open intervals, say $\mI_{1}$ and $\mI_{2}$, containing $t_{0}$. Assume, moreover, that the associated
  $\mC_i$, cf. (\ref{eq:mCdef}), satisfy $\mC_i(t)\in\rn{}$ for $t\in\mI_i$. Let $K\subset \mI_{1}\cap \mI_{2}$ be a compact interval
  containing $t_{0}$ in its interior. Then we can modify the original system in such a way that the resulting modified system satisfies the conditions
  of Corollary~\ref{cor:localexmodsys} and coincides with the original system in a neighbourhood of the range of the solutions on $\bM\times K$.
  Then Corollary~\ref{cor:localexmodsys} applies and yields the conclusion that the solutions coincide on $\bM\times K$. Since $K$ was arbitrary,
  we conclude that the solutions coincide on their common domain of definition. This makes it possible to define the maximal interval of existence
  as the union of all the open intervals $\mI$ such that there is a solution on $\bM\times \mI$ with the property that $\mC(t)\in\rn{}$ for $t\in\mI$.

  Let $\mI$ be the maximal interval of existence for solutions to the original system. Next, we wish to prove the continuation criterion. Assume,
  to this end, that $\mI=(t_{-},t_{+})$, that $t_{-}>0$ and that $\mC(t)$ is bounded on $(t_{-},t_{0}]$. This means that $\|u(\cdot,t)\|_{C^{2}}$,
  $\|v(\cdot,t)\|_{C^{2}}$ and $\|\d_{t}v(\cdot,t)\|_{C^{1}}$ are uniformly bounded on $(t_{-},t_{0}]$. Moreover, $\zeta$ is uniformly bounded away from zero
  on $\bM\times (t_{-},t_{0}]$. Combining this observation with the maximum principle, cf. (\ref{eq:Ntzlowerbd}), yields the conclusion that $N_{1}$
  is uniformly bounded away from zero. Next, note that the original $\zeta$ and $f_3$ are bounded in $C^1$ and the original $h_2$ is bounded in
  $C^2$. Moreover, the bounds are uniform for $t\in (t_{-},t_{0}]$. This means that $N_1$ is bounded in $C^{2,1}$, uniformly for $t\in (t_{-},t_{0}]$;
  see Theorem~\ref{thm:globalSchauder}. As a consequence, the original $f_4$ is bounded in $C^{0,1}$, uniformly for $t\in (t_{-},t_{0}]$. Applying
  Theorem~\ref{thm:globalSchauder} to the equation for $N_2$ then yields the conclusion that $N_2$ is bounded in $C^{2,1}$, uniformly for
  $t\in (t_{-},t_{0}]$. For these
  reasons, it is possible to construct a modified system satisfying the conditions of Corollary~\ref{cor:localexmodsys} and such that it coincides
  with the original system on an open neighbourhood of the closure of the range of the solution on $\bM\times (t_{-},t_{0}]$. However, then
  Corollary~\ref{cor:localexmodsys} ensures that it is possible to extend the solution to the modified system beyond $t_{-}$. Moreover, by continuity,
  the solution to the modified system is a solution to the original system in a neighbourhood of $t_{-}$. This contradicts the definition of $t_{-}$
  and proves the continuation criterion to the past. The argument to the future is similar.
  
  Our next goal is to prove that the solution to the original system is a solution to (\ref{seq:thesystem}). The first step is to prove that
  we have a solution to (\ref{seq:thesystemso}). In order to do so, we need to demonstrate that if $v_\phi$, $v_{\phi,i}$ and $v_{\phi,0}$ are the
  components of $v$ corresponding to $\phi$, $E_i\phi$ and $e_0\phi$ respectively, then $E_iv_\phi=v_{\phi,i}$ and $e_0v_\phi=v_{\phi,0}$. We also need
  to demonstrate that $\d_{t}N_{1}=N_{2}$. In order to prove these statements, let us first eliminate the ambiguity inherent in deriving the original
  system; in the derivation, we can think of $E_{i}\phi$ as $E_{i}v_\phi$ or as $v_{\phi,i}$ etc. In the equation for $u$, there is no ambiguity, since
  $f_1$ is only allowed to depend on $u$, $v$, $N_1$ and $E_iN_1$. In other words, in the case of $f_1$, occurrences of $E_i\phi$ and $e_0\phi$ have
  to be interpreted as $v_{\phi,i}$ and $v_{\phi,0}$ respectively. In fact, whenever $f_I^J$ appears, $E_i\phi$ and $e_0\phi$ should be interpreted as
  $v_{\phi,i}$ and $v_{\phi,0}$ respectively (this also applies to the subsystem for $v_{\phi,0}$ and the equation for $N_2$). Next, consider the equation
  for $v_\phi$, considered as a subsystem of the equation for $v$. It arises from (\ref{eq:phiadapt}). Here, $N$ and $\d_tN$ are replaced by $N_1+1$ and
  $N_2$ respectively; and $e^i_I$ and $\omega^I_i$ are replaced by components of $u$. However, every occurrence of $\phi$ is replaced by $v_\phi$. When
  deriving the subsystem for the $v_{\phi,i}$ (corresponding to system of wave equations for $E_i\phi$ obtained by applying $E_k$ to (\ref{eq:phiadapt}),
  see (\ref{eq:Ekphiess})), we interpret $E_i\d_t\phi$ and $\d_t E_i\phi$ as $\d_tv_{\phi,i}$; and $E_jE_i\phi$ as $E_jv_{\phi,i}$. However, occurrences of
  single derivatives, such as $E_i\phi$ and $\d_t\phi$ should be replaced by $E_iv_\phi$ and $\d_tv_\phi$ respectively. In particular, in the subsystem
  of equations for $v_{\phi,i}$, only $u$, $N_1$, $N_2$, $v_\phi$ and $v_{\phi,i}$ occur. Moreover, it follows that $E_iv_\phi$ satisfies the same system of
  wave equations as $v_{\phi,i}$ with the same initial data. This means that $E_iv_\phi=v_{\phi,i}$. Next, consider the equation for $N_1$. It arises from
  (\ref{eq:lapse}). When rewriting this equation in terms of the $u$, $v$ and $N_1$ variables, there is no ambiguity, since $\zeta$ and $f_3$ are only
  allowed to depend on $u$, $v$ and $E_iu$. In particular, we interpret $e_0\phi$ as $v_{\phi,0}$. When
  differentiating (\ref{eq:lapse}) with respect to
  time, time derivatives of $e_I^i$ occur. This leads to expressions of the form $f_I^Je_J^i$, and we interpret $f_I^J$ as described above. 
  On the other hand, when differentiating (\ref{eq:lapse}), $\d_t\phi$ is interpreted as $\d_tv_{\phi}$. Due to these choices, $\d_tN_1$ and
  $N_2$ satisfy the same equation. This means that $N_2=\d_tN_1$, and we define $N:=N_1+1$. Finally, we wish to prove that $e_0v_\phi=v_{\phi,0}$.
  The equation for $v_{\phi,0}$ is derived using (\ref{eq:phi}) as a starting point. Moreover, (\ref{eq:phi}) can be written (\ref{eq:phiadapt}). However,
  once we have rewritten (\ref{eq:phiadapt}) as an equation for $v_\phi$ (in terms of $u$, $v$, $N_1$ and $N_2$), it is no longer obvious what the
  relation between (\ref{eq:phi}) and (\ref{eq:phiadapt}) is. However, now that we know that $\d_tN=\d_tN_1=N_2$, the equation for $v_\phi$ implies that
  (\ref{eq:phi}) holds with $\phi$ replaced by $v_\phi$, and where it is understood that $e_0=N^{-1}\d_t$. When deriving the subsystem corresponding to
  $v_{\phi,0}$, we interpret each occurrence of $f_I^J$ as before. Note that this
  observation also applies when derivatives of $f_I^J$ appear. Moreover, second spatial derivatives of $\phi$ that appear explicitly (as opposed to
  implicitly due to derivatives of $f_I^J$) are interpreted as first derivatives of $v_{\phi,i}$. Moreover, whenever mixed second order space and time
  derivatives (effectively $\d_t E_i\phi$) appear explicitly, we interpret them as $\d_tv_{\phi,i}$.
  In particular, if $e_0e_I\phi$ or $e_Ie_0\phi$ occur, we expand the corresponding expression, and then apply the above convention. If single
  derivatives of $\phi$ occur explicitly, we interpret them as single derivatives applied to $v_\phi$. Only in the expressions corresponding to the
  left hand side and first term on the right hand side of (\ref{eq:themodelv}) and in remaining terms involving second time derivatives (such as
  $e_0^2\phi$) do we introduce $v_{\phi,0}$; in particular, we write $-t^{-1}e_0^2\phi$ as $-t^{-1}N_1^{-1}\d_tv_{\phi,0}$. Due to these choices;
  the fact that $\d_tN=\d_tN_1=N_2$; and the fact that $E_iv_\phi=v_{\phi,i}$, it follows that $e_0v_\phi$ and $v_{\phi,0}$ satisfy the same equation
  with the same initial data. This means that $v_{\phi,0}=e_0v_\phi$. 
  
  To conclude, extracting $e_{I}^{i}$, $\omega_{i}^{I}$, $\bk_{IJ}$, $\phi$ and $N$ from the solution to the original system yields
  a solution to (\ref{seq:thesystemso}). Due to Remark~\ref{remark:sotososotofo} and the observations made at the beginning of the present section,
  we conclude that the solution to the original system yields a solution to (\ref{seq:thesystem}) on the maximal existence interval. Moreover, this
  solution is a solution to the Einstein-non-linear scalar field equations with leaves of constant mean curvature. Finally, $\bk_{IJ}$ are the
  components of the second fundamental form of the hypersurfaces $\bM_{t}$ with respect to the orthonormal frame $\{e_{I}\}$.
  
  What remains to be demonstrated is uniqueness of solutions to (\ref{seq:thesystem}). In order to do so, we prove that solutions to
  (\ref{seq:thesystem}) are also solutions to (\ref{seq:thesystemso}), which yields solutions to what we refer to as the original system in the
  present proof. Considering (\ref{eq:dkaijdefinition}) and (\ref{eq:dvkaIJ}), it is clear that (\ref{eq:dtbkIJ}) is the same as
  $\d_{t}\bk_{IJ}=\dvka_{IJ}$. Combining this observation with (\ref{eq:dtdvkaIJ}) and Remark~\ref{remark:symmetrisation dt dvka} yields
  \[
  \d_{t}^{2}\bk_{IJ}=\d_{t}\dvka_{IJ}=Ne_{M}e_{M}(N\bk_{IJ})-(\d_{t}N)\bR_{IJ}+\hOm_{(IJ)}+\hmfK_{(IJ)}, 
  \]
  where we used the fact that $\msC=0$, see (\ref{eq:momcon}), and the fact that $\vartheta=0$. This equation implies that
  \[
  e_{0}^{2}(\bk_{IJ})=N^{-1}e_{M}e_{M}(N\bk_{IJ})-N^{-2}(\d_{t}N)\bR_{IJ}+N^{-2}\hOm_{(IJ)}+N^{-2}\hmfK_{(IJ)}-e_{0}(\ln N)N^{-1}\d_{t}\bk_{IJ}. 
  \]
  Since $\d_{t}\bk_{IJ}=\dvka_{IJ}$ and $\msD_{IJ}=0$ in the current setting, this implies that (\ref{eq:assumedeqbkIJ}) holds. This
  means that (\ref{eq:bkIJsosys}) holds; see Lemma~\ref{lemma:thebkIJeq}. This means that all the equations in (\ref{seq:thesystemso})
  are satisfied. Appealing to Lemmas~\ref{lemma:phiess} and \ref{lemma:Ndoteq} and proceeding as above demonstrates that we have a
  solution to what is referred to as the original system in the present proof. Moreover, the initial data for this solution is determined
  by the initial data for (and the equations of) (\ref{seq:thesystem}). Due to the uniqueness of solutions to the original system, noted
  above, we conclude that solutions to (\ref{seq:thesystem}) are unique. 
\end{proof}

Finally, we demonstrate Cauchy stability in the case of (\ref{seq:thesystem}).

\begin{proof}[Proof of Theorem~\ref{thm:CauchystabEinstein}]
  The statement is an immediate consequence of Proposition~\ref{prop:CauchyStability}, given the relation between the different systems, explained in
  detail at the beginning of the present subsection and in the proof of Theorem~\ref{thm:fosyslocalexistence}.
\end{proof}

\appendix

\section{Sobolev spaces}\label{appendix:sobolev}

In order not to interrupt the presentation of the main body of the text with technical details, we here summarise some of the notation as well
as some of the observations concerning Sobolev spaces and estimates that we need in this article. 

\subsection{Measurability}
Given a smooth manifold $M$, we begin by defining a $\sigma$-algebra of measurable subsets; cf. \cite[Definition~13.22, p.~204]{RinTop}.
\begin{definition}\label{def:measurability}
  Let $M$ be a smooth $n$-dimensional manifold. Then a subset $A\subseteq M$ is \textit{measurable} if, for every choice of local coordinates
  $(\sfx,U)$, the set $\sfx(A\cap U)$ is a Lebesgue measurable subset of $\rn{n}$. If $X$ is a topological space, $f:M\rightarrow X$ is said to
  be measurable if $f^{-1}(V)$ is measurable for every open set $V\subseteq X$. 
\end{definition}
\begin{remark}
  Note that if $U,V\subseteq\rn{n}$ and $\phi:U\rightarrow V$ is $C^1$, then $\phi$ maps Lebesgue measurable sets to Lebesgue measurable sets;
  cf. \cite[Lemma~13.21, p.~204]{RinTop}. 
\end{remark}
\begin{remark}
  The collection $\msA$ of measurable subsets of $M$ is a $\sigma$-algebra; cf. \cite[Lemma~13.24, p.~204]{RinTop}. Moreover, due to the
  definition (in particular, the properties of the Lebesgue measure), $\msA$ includes all the Borel sets. Finally, if $(\sfx_i,U_i)$ is a countable
  collection of coordinate charts such that the $\{U_i\}$ cover $M$, and if $\sfx_i(A\cap U_i)$ is measurable for all $i$, then $A\in\msA$;
  cf. \cite[Lemma~13.24, p.~204]{RinTop}. 
\end{remark}
Let $(M,h)$ be a smooth Riemannian manifold and $\msA$ be the $\sigma$-algebra of measurable subsets of $M$. We then wish to associate a measure
$\mu_h:\msA\rightarrow [0,\infty]$  with $h$. We do so in several steps. Assume first that $A\in\msA$ and that there are local coordinates $(\sfx,U)$
such that $A\subset U$. We then define
\begin{equation}\label{eq:muh A local}
  \mu_h(A):=\textstyle{\int}_{\sfx(A)} |\det \sfh_\sfx|^{1/2}\circ\sfx^{-1}d\mu,
\end{equation}
where $\mu$ is the standard Lebesgue measure on $\rn{n}$. Moreover, $\sfh_\sfx$ is the matrix valued function on $U$ with components
\begin{equation}\label{eq:sfh sfx def}
  (\sfh_\sfx)_{ij}:=h(\d_{\sfx^i},\d_{\sfx^j}).
\end{equation}
If $(\sfy,V)$ are local coordinates with $A\subseteq V$, then
\[
\textstyle{\int}_{\sfx(A)} |\det \sfh_\sfx|^{1/2}\circ\sfx^{-1}d\mu=\textstyle{\int}_{\sfy(A)} |\det \sfh_\sfy|^{1/2}\circ\sfy^{-1}d\mu
\]
due to the change of variables formula; cf. \cite[Theorem~7.26, pp.~153-154]{rudin}. In particular, the right hand side of (\ref{eq:muh A local}) is independent of the
local coordinates. Moreover, $\mu_h$, given by (\ref{eq:muh A local}), defines a measure on $U$; cf. \cite[Theorem~1.29, p.~23]{rudin}. In particular, considered as
a measure on $U$, $\mu_h$ is countably additive. Next, let $P_i\subseteq M$, $i\in\nn{}$, be such that each $P_i\in\msA$;
$P_i\cap P_j=\varnothing$ for $i\neq j$ and the union of the $P_i$ equals $M$ (i.e., $\{P_i\}$ is a \textit{partition} of $M$); and each $P_i$ is
contained in a coordinate chart, say $(\sfx_i,U_i)$; we leave it to the reader to verify that such covers exist. Define, for any $A\in\msA$,
\begin{equation}\label{eq:muh A}
  \mu_h(A):=\textstyle{\sum}_i\mu_h(A\cap P_i).
\end{equation}
In order to verify that the result is independent of the partition, let $\{Q_i\}$ be another partition of $M$ with analogous properties. Then, due
to the properties we have already verified,
\[
\textstyle{\sum}_i\mu_h(A\cap P_i)=\textstyle{\sum}_i\textstyle{\sum}_j\mu_h(A\cap P_i\cap Q_j)
=\textstyle{\sum}_j\textstyle{\sum}_i\mu_h(A\cap P_i\cap Q_j)=\textstyle{\sum}_j\mu_h(A\cap Q_j),
\]
where we appealed to \cite[Corollary, p.~23]{rudin} in the second step. This means that $\mu_h$ is well defined. We leave it to the reader to verify that
$\mu_h$ is a measure. 

Let $f\in C_c(M)$; i.e., $f$ is a continuous function with compact support in $M$. Assume, moreover, that there is a coordinate chart $(\sfx,U)$ such
that $\supp f\subset U$. Assume, finally, $M$ to be oriented; $(\sfx,U)$ to be a positively oriented chart; and $\Vol_h$ to be the volume form
associated with $h$; cf. \cite[Proposition~15.29, p.~389]{Lee}. We then leave it to the reader to verify that
\begin{equation}\label{eq:int dmuh int vol form}
  \textstyle{\int}_M fd\mu_h=\textstyle{\int}_Mf\Vol_h;
\end{equation}
cf., e.g., \cite[Chapter~16]{Lee} and \cite[Theorem~1.29, p.~23]{rudin}. Combining this observation with a partition of unity implies that
(\ref{eq:int dmuh int vol form}) holds for all $f\in C_c(M)$.

\subsection{$L^q$-spaces} 
Let $r,s\in\nn{}_0$ and $T^{(r,s)}TM$ denote the bundle of mixed tensors of type $(r,s)$; see, e.g.,
\cite[p.~316]{Lee}. A \textit{section} of this bundle is a function $\psi:M\rightarrow T^{(r,s)}TM$ such that the base point of $\psi(p)$ is $p$;
i.e., if $\pi:T^{(r,s)}TM\rightarrow M$ is the natural projection to the base point, then $\pi\circ\psi=\roId_M$. Let $\mT^{(r,s)}(M)$ denote the set
of smooth sections of $T^{(r,s)}TM$. By the definition of measurability, it is clear that a section is measurable if and only if the components of
the section with respect to local coordinates are measurable; cf.
Definition~\ref{def:measurability}. A contraction of a tensor product of two measurable sections is thus measurable. Since $\msA$ includes the
Borel sets, continuous sections are measurable. In what follows, we denote the measurable sections by $\mM^{(r,s)}(M)$.

Let $(M,h)$ be a Riemannian manifold and $u,v\in T^{(r,s)}TM$ with $\pi(u)=\pi(v)$. Then, if $p:=\pi(u)$, 
\[
(u,v)_{h(p)}:=h^{i_1j_1}(p)\cdots h^{i_sj_s}(p)h_{l_1m_1}(p)\cdots h_{l_rm_r}(p)u^{l_1\cdots l_r}_{i_1\cdots i_s}v^{m_1\cdots m_r}_{j_1\cdots j_s}. 
\]
If $\phi$ and $\psi$ are sections of the same type, we define $(\phi,\psi)_h$ to be the function whose value at $p\in M$ is $(\phi(p),\psi(p))_{h(p)}$.
Note that if $\phi$ and $\psi$ are measurable, then $(\phi,\psi)_h$ is a measurable function due to the above observations. Moreover,
$(\phi,\phi)_h\geq 0$. Let $1\leq q\in\ro$. In what follows we say that $\phi\in \mL^q[T^{(r,s)}TM;h]$ if $\phi\in\mM^{(r,s)}(M)$ and
\[
\textstyle{\int}_M|\phi|_h^{q}d\mu_h<\infty,
\]
where $|\phi|_h:=(\phi,\phi)_h^{1/2}$. Moreover, we say that $\phi\in \mL^\infty[T^{(r,s)}TM;h]$ if $\phi\in\mM^{(r,s)}(M)$ and
\[
  \mathrm{ess\ sup}_{M}|\phi|_h<\infty.
\]
Next, we say that $\phi,\psi\in\mM^{(r,s)}(M)$ are equivalent if the set on which they differ has measure zero. Let $L^q[T^{(r,s)}TM;h]$ denote the set
of equivalence classes of elements of $\mL^q[T^{(r,s)}TM;h]$. If $u\in L^q[T^{(r,s)}TM;h]$, $1\leq q<\infty$ and $\phi$ is an element of the equivalence
class $u$, we define
\[
  \|u\|_{L^q(h)}:=\big(\textstyle{\int}_M|\phi|_h^{q}d\mu_h\big)^{1/q}.
\]
If $u\in L^\infty[T^{(r,s)}TM;h]$ and $\phi$ is an element of the equivalence class $u$, we define
\[
  \|u\|_{L^\infty(h)}:=\mathrm{ess\ sup}_{M}|\phi|_h.
\]
We define $\mL^q_{\roloc}[T^{(r,s)}TM;h]$ and $L^q_{\roloc}[T^{(r,s)}TM;h]$ similarly. For the sake of brevity, we often use the simplified notation
$L^2(h)$ etc. Moreover, we will typically not distinguish between functions and equivalence classes of functions that differ on a set of measure
zero. 

\subsection{Weak derivatives}
Assume that $M$ is oriented and that $u\in L^1_{\roloc}[T^{(r,s)}TM;h]$. Then $u$ is said to be \textit{weakly differentiable} if there is
an element, say $D^hu\in L^1_{\roloc}[T^{(r,s+1)}TM;h]$, such that for every $\phi\in \mT^{(r,s+1)}(M)$ with compact support,
\[
\textstyle{\int}_M (D^hu,\phi)_hd\mu_h=-\textstyle{\int}_M(u,\rodiv_h\phi)_hd\mu_h,
\]
where
\[
(\rodiv_h\phi)_{i_1\cdots i_s}^{j_1\dots j_r}=\nabla^{h,i}\phi_{ii_1\cdots i_s}^{j_1\dots j_r}
\]
and $\nabla^h$ denotes the Levi-Civita connection of $(M,h)$. If $u$ is smooth, then $D^hu=\nabla^hu$; this follows from the fact that if
$X$ is a smooth vector field with compact support on $M$, then
\[
\textstyle{\int}_M\rodiv_hX\, d\mu_h=\textstyle{\int}_M\rodiv_hX\, \Vol_h=0.
\]
Higher weak derivatives are defined iteratively and the $k$'th weak derivative is denoted $(D^h)^ku$. Again, if $u$ is smooth, $(D^h)^ku=(\nabla^h)^ku$. 
We leave it to the reader to verify that $u$ is $k$ times weakly differentiable if and only if the components of $u\circ\sfx^{-1}$ with respect to
$\sfx$ are $k$ times weakly differentiable on $\sfx(U)$ for any choice of local coordinates $(\sfx,U)$. In particular, whether a function has a weak
derivative or not does not depend on the choice of Riemannian metric $h$ (though the weak derivative $D^hu$ of course does). Note also that, with
respect to local coordinates, $(D^h)^ku$ is given by replacing derivatives by weak derivatives in the expression for $(\nabla^h)^ku$ with respect to
local coordinates.

\subsection{Sobolev spaces}\label{ssection:sob sp}
Let $1\leq q\in\ro$, $k\in\nn{}_0$ and $u\in L^q[T^{(r,s)}TM;h]$. If $u$ is $k$ times weakly differentiable and
\[
\textstyle{\int}_M|(D^h)^ju|_h^qd\mu_h<\infty
\]
for all $0\leq j\leq k$, then we say that $u\in W^{k,q}[T^{(r,s)}TM;h]$ and use the notation
\[
\|u\|_{W^{k,q}(h)}:=\big(\textstyle{\sum}_{j=0}^k\textstyle{\int}_M|(D^h)^ju|_h^qd\mu_h\big)^{1/q}.
\]
Next, let $k\in\nn{}_0$ and $u\in L^\infty[T^{(r,s)}TM;h]$. If $u$ is $k$ times weakly differentiable and
\[
\|(D^h)^ju\|_{L^\infty(h)}<\infty
\]
for all $0\leq j\leq k$, then we say that $u\in W^{k,\infty}[T^{(r,s)}TM;h]$ and use the notation
\[
\|u\|_{W^{k,\infty}(h)}:=\textstyle{\sum}_{j=0}^k\|(D^h)^ju\|_{L^\infty(h)}.
\]
Define, in addition, $H^{k}[T^{(r,s)}TM;h]:=W^{k,2}[T^{(r,s)}TM;h]$ and, for $u,v\in H^{k}[T^{(r,s)}TM;h]$, 
\[
(u,v)_{H^k(h)}:=\textstyle{\sum}_{j=0}^k\textstyle{\int}_M((D^h)^ju,(D^h)^jv)_hd\mu_h.
\]
For the sake of brevity, we often simply write $H^{k}(h)$.

\textit{Sobolev spaces on closed manifolds.} Let $(M,h)$ be a closed and oriented Riemannian manifold, $1\leq q\in\ro$ and $k\in\nn{}_0$. Let $\bbP$ be a
finite partition of unity $\phi_i$, $i=1,\dots,l$, such that $\supp\phi_i\subset U_i$, where $(\sfx_i,U_i)$ are local coordinates. Define, for
$u\in W^{k,q}[T^{(r,s)}TM;h]$,
\[
\|u\|_{W^{k,q}(M;\bbP)}:=\big(\textstyle{\sum}_{i=1}^l\sum_{i_1,\dots,i_s}\sum_{j_1,\dots,j_r}\sum_{|\a|\leq k}
\int_{\sfx_i(U_i)}(\phi_i|\d^\a u^{j_1\dots j_r}_{i_1\dots i_s}|^q)\circ\sfx_i^{-1}
d\mu\big)^{1/q}.
\]
We leave it as an exercise to verify that the right hand side is finite for $u\in W^{k,q}[T^{(r,s)}TM;h]$. We also leave it as an exercise to verify
that $\|\cdot\|_{W^{k,q}(h)}$ and $\|u\|_{W^{k,q}(M;\bbP)}$ are equivalent, irrespective of the choice of $h$ and $\bbP$, but with constants depending on
$h$ and $\bbP$. Finally, we leave it to the reader to verify that $W^{k,q}[T^{(r,s)}TM;h]$ and $H^{k}[T^{(r,s)}TM;h]$ are Banach and Hilbert spaces,
respectively, and that $\mT^{(r,s)}(M)$ is dense in $W^{k,q}[T^{(r,s)}TM;h]$ for $q<\infty$.

\subsection{Conventions concerning frames}\label{ssection:conventionsframe}
Let $M$ be a manifold. If $U\subset M$ is open; $\{E_i\}_{i=1}^l$ are smooth vector fields on $U$; $\psi\in C^{m}(U)$; $k\leq m$; and
$\bfI=(i_{1},\dots,i_{k})$, where $i_j\in\{1,\dots,l\}$ for $j=1,\dots,k$, then $E_{\bfI}\psi:=E_{i_{1}}\cdots E_{i_{k}}\psi$. If $\bfI=(i_{1},\dots,i_{k})$,
we also use the notation $|\bfI|=k$. In case $\bfI$ is empty, $E_{\bfI}$ denotes multiplication by $1$ and we use the convention that $|\bfI|=0$.

In case $(M,h)$ is a closed $n$-dimensional Riemannian manifold; $\bbE:=\{E_i\}_{i=1}^n$ is a global frame on $M$; and $u\in H^k(h)$, we can define
$E_{\bfI}u$ for $|\bfI|\leq k$ in terms of the weak derivatives of $u$ (as well as the structure coefficients of the frame etc.). For $u\in H^k(h)$,
we then use the notation 
\begin{equation}\label{eq:psiHmdef}
  \|u\|_{H^{k}(M;\bbE)}:=\big(\textstyle{\sum}_{|\bfI|\leq k}\int_{M}|E_{\bfI}u|^{2}d\mu_{h}\big)^{1/2}.
\end{equation}
We leave it to the reader to verify that this norm is equivalent to $\|\cdot\|_{H^k(h)}$.

\subsection{Moser estimates}
In the proof of local existence, one essential tool is Moser estimates. In case $(M,h)$ is a closed Riemannian manifold; $\bbE:=\{E_i\}_{i=1}^n$ is
a global frame on $M$; $\psi_{1},\dots,\psi_{l}\in C^{\infty}(M)$; and $|\bfI_{1}|+\dots+|\bfI_{l}|=m$, the relevant estimate reads
\begin{equation}\label{eq:moserest}
  \|E_{\bfI_{1}}\psi_{1}\cdots E_{\bfI_{l}}\psi_{l}\|_{L^{2}(h)}\leq C\textstyle{\sum}_{j=1}^{l}\|\psi_{j}\|_{H^{m}(M;\bbE)}\prod_{i\neq j}\|\psi_{i}\|_{L^{\infty}(h)},
\end{equation}
where the constant only depends on an upper bound on $|\rodiv_hE_{i}|$, $n$ and $m$ (note that if $\bbE$ is an orthonormal frame with respect to $h$,
then $|\rodiv_hE_{i}|$ can be estimated in terms of the structure coefficients of $\bbE$). This estimate is a special case of \cite[Corollary~B.8, p.~215]{RinWave}. 

\subsection{Modified Moser estimates}\label{ssection:mod moser est}
When deriving elliptic estimates, we, for the sake of completeness, do not want to assume that we have a global frame. Let $(M,h)$ be a closed
Riemannian manifold and $U\subseteq M$ be an open subset on which there is an orthonormal frame $\{E_i\}$ with respect to $h$. Let $\zeta\in C^\infty_c(U)$.
Then we can consider $E_{\zeta,i}:=\zeta E_i$ to be a smooth vector field on $M$. Moreover, the vector fields $E_{\zeta,1},\dots,E_{\zeta,n}$ satisfy the conditions
imposed on the vector fields $W_i$ in \cite[Section~B.1, p.~209]{RinWave} (in the present setting the $W_i$ have no time dependence). Given $\bfI=(i_1,\dots,i_l)$,
we use the notation $E_{\zeta,\bfI}:=E_{\zeta,i_1}\cdots E_{\zeta,i_l}$. Given $u\in H^l(h)$, we use the notation
\[
  \|\mathbf{E}^{\leq l}_{\zeta}u\|_{L^2(h)}:=\big(\textstyle{\sum}_{|\bfI|\leq l}\textstyle{\int}_M|E_{\zeta,\bfI}u|^2d\mu_h\big)^{1/2}.
\]
In case $\psi_{1},\dots,\psi_{l}\in C^{\infty}(M)$ and $|\bfI_{1}|+\dots+|\bfI_{l}|=m$, \cite[Corollary~B.8, p.~215]{RinWave} then implies that
\begin{equation}\label{eq:moserest mod}
  \|E_{\zeta,\bfI_{1}}\psi_{1}\cdots E_{\zeta,\bfI_{l}}\psi_{l}\|_{L^{2}(h)}\leq C\textstyle{\sum}_{j=1}^{l}\|\mathbf{E}^{\leq m}_{\zeta}\psi_j\|_{L^2(h)}\prod_{i\neq j}\|\psi_{i}\|_{L^{\infty}(h)},
\end{equation}
where the constant only depends on an upper bound on $|\rodiv_hE_{\zeta,i}|$, $n$ and $m$ (due to (\ref{eq:rodiv hrefer Ezetai}) below, it follows that
$|\rodiv_hE_{\zeta,i}|$ only depends on $\zeta$ and $\{E_i\}$). 

\subsection{Rellich-Kondrakhov theorem, Poincaré inequality}
Next, we recall the Rellich-Kondrakhov theorem.
\begin{thm}[Rellich-Kondrakhov Theorem]\label{thm:Rellich Kondrakhov}
  Let $(M,h)$ be a closed Riemannian manifold of dimension $n\geq 2$. Then, for every $p,q\geq 1$, $k\in\nn{}_0$ and $l\in\nn{}$ satisfying
  $1/p>1/q-l/n$, the embedding $H^{k+l,q}(h)\hookrightarrow H^{k,p}(h)$ is compact. 
\end{thm}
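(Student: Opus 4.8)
The final statement is the Rellich–Kondrakhov theorem on a closed Riemannian manifold, so the plan is to reduce it to the classical Euclidean statement by means of a finite atlas and a subordinate partition of unity. First I would fix a finite cover $\{U_i\}_{i=1}^l$ of $M$ by coordinate charts $(\sfx_i,U_i)$ together with a subordinate partition of unity $\{\phi_i\}$, and also pick slightly larger open sets $V_i\Subset$ (the image of $\sfx_i$) so that $\sfx_i(\supp\phi_i)\Subset V_i$. The key preliminary fact is the norm equivalence already recorded in Subsection~\ref{ssection:sob sp}: $\|u\|_{W^{k,q}(h)}$ is comparable to $\sum_i\|(\phi_iu)\circ\sfx_i^{-1}\|_{W^{k,q}(V_i)}$ (one should phrase this for each tensor component; since all the metrics and transition maps are smooth on the compact manifold, the coefficients relating coordinate derivatives to $(D^h)^j$ are bounded with bounded derivatives). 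Thus an element of $H^{k+l,q}(h)$ is sent, via $u\mapsto\big((\phi_1u)\circ\sfx_1^{-1},\dots,(\phi_lu)\circ\sfx_l^{-1}\big)$, to a bounded family in $\prod_i W^{k+l,q}(V_i;\mathbb{R}^{N})$, where $N$ is the number of tensor components.

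Next I would invoke the classical Rellich–Kondrakhov theorem on bounded domains in $\rn{n}$ with the Sobolev–gain hypothesis $1/p>1/q-l/n$: the embedding $W^{k+l,q}(V_i)\hookrightarrow W^{k,p}(V_i)$ is compact for each $i$ (applied componentwise). Hence, given a bounded sequence $u_m$ in $H^{k+l,q}(h)$, a diagonal argument over the finitely many charts and finitely many components extracts a subsequence along which each $(\phi_iu_m)\circ\sfx_i^{-1}$ converges in $W^{k,p}(V_i)$. Transporting back through the norm equivalence, $u_m=\sum_i\phi_iu_m$ converges in $W^{k,p}(h)$, because the map sending the tuple of localized pieces back to $\sum_i\phi_i(\cdot)$ is bounded from $\prod_iW^{k,p}(V_i)$ to $W^{k,p}(h)$ (again using smoothness and compactness to control the transition maps and cutoffs). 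This establishes compactness of $H^{k+l,q}(h)\hookrightarrow H^{k,p}(h)$.

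The only genuinely delicate point is bookkeeping rather than analysis: one must make sure the localization/delocalization maps are bounded in \emph{both} the $W^{k+l,q}$ and the $W^{k,p}$ scales simultaneously, and that the Euclidean compact embedding is applied on a fixed bounded open set containing the supports — this is why I introduced the intermediate sets $V_i$ with $\sfx_i(\supp\phi_i)\Subset V_i$, so that multiplication by $\phi_i\circ\sfx_i^{-1}$ lands in $W^{k+l,q}_c(V_i)$ and the classical theorem applies on $V_i$ with its (smooth, bounded) boundary. There is also a minor point that the norm equivalence in Subsection~\ref{ssection:sob sp} is stated for functions; for tensor fields one applies it componentwise in each chart, absorbing the Christoffel-type terms relating $(D^h)^j$ to coordinate partials into the (chart-dependent but fixed) constants. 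I expect writing out this componentwise reduction cleanly to be the main obstacle, but it is entirely routine given compactness of $M$ and smoothness of $h$; no new ideas are needed beyond the Euclidean Rellich–Kondrakhov theorem.

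For completeness I would remark that the Poincaré inequality referenced in the subsection heading follows from the $l=1$, $p=q=2$, $k=0$ case of the theorem by the standard argument: if it failed there would be a sequence $u_m$ with $\|u_m\|_{L^2(h)}=1$, $\int_M u_m\,d\mu_h=0$, and $\|D^hu_m\|_{L^2(h)}\to0$; compactness of $H^{1}(h)\hookrightarrow L^2(h)$ extracts an $L^2$-limit $u$ with $D^hu=0$, hence $u$ constant (as $M$ is connected), hence $u=0$ by the mean-zero condition, contradicting $\|u\|_{L^2(h)}=1$.
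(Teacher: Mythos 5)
The paper does not prove this result; it simply cites \cite[Theorem~3.6, p.~24]{hebey} and \cite[Theorem~1.2.12, p.~34]{baer}. Your argument --- localization by a partition of unity subordinate to a finite atlas, the classical Rellich--Kondrakhov theorem on the chart images, and a diagonal extraction over the finitely many charts and components --- is correct and is in substance the argument used in those references, so there is no real divergence to report. Two small bookkeeping points you may want to clean up when writing this out. First, the equivalent norm $\|\cdot\|_{W^{k,q}(M;\bbP)}$ recorded in Subsection~\ref{ssection:sob sp} places $\phi_i$ as a weight inside the integrand, multiplying $|\d^\a u^{j_1\dots j_r}_{i_1\dots i_s}|^q$, rather than multiplying $u$ itself; so relating it to your quantity $\sum_i\|(\phi_iu)\circ\sfx_i^{-1}\|_{W^{k,q}(V_i)}$ requires an additional Leibniz step. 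This is harmless, since the $\phi_i$ are smooth with uniformly bounded derivatives on the compact manifold, but it should be said. Second, since each $(\phi_iu)\circ\sfx_i^{-1}$ has compact support in a fixed bounded open set, you can extend by zero to a large ball $B$ and apply the Euclidean Rellich--Kondrakhov theorem to $W^{k+l,q}_0(B)\hookrightarrow W^{k,p}_0(B)$, which requires no boundary-regularity hypothesis at all; this is a little cleaner than insisting on $V_i$ with smooth boundary, though both work. The closing remark about the Poincar\'e inequality is fine, but note that it also requires the connectedness of $M$, which is not assumed in the statement of the theorem as given.
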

\begin{remark}\label{remark:Rellich Kondrakhov dim one}
  We are also interested in the case $n=1$ and $M=\sn{1}$. In this case, the embedding from $H^1(\sn{1})$ into $C^{0,\a}(\sn{1})$ is compact for
  $\a<1/2$; see, e.g, \cite[Theorem~1.4.13, p.~48]{baer}. In particular, this means that the embedding from $H^1(\sn{1})$ into $L^2(\sn{1})$ is
  compact. Note also that $W^{1,1}(\sn{1})$ is continuously embedded into $C^0(\sn{1})$ and that $W^{1,1}(\sn{1})$ is compactly embedded into
  $L^1(\sn{1})$; see \cite[Proposition~1.2.5, p.~27]{baer}. Combining these observations with \cite[Proposition~1.2.5, p.~27]{baer} yields the
  conclusion that $W^{1,1}(\sn{1})$ is compactly embedded into $L^p(\sn{1})$ for any $p\in[1,\infty)$. 
\end{remark}
\begin{proof}
  The proof of this statement can be found, e.g., in \cite[Theorem~3.6, p.~24]{hebey} and \cite[Theorem~1.2.12, p.~34]{baer}. 
\end{proof}

\section{H\"{o}lder spaces on manifolds}\label{section:hoelder spaces on manifolds}

Let $(M,g)$ be a closed Riemannian manifold and $\gamma:I\rightarrow M$ be a smooth curve. Then $\mT^{(r,s)}(\gamma)$ is the set of smooth functions
$T:I\rightarrow T^{(r,s)}TM$ such that the base point of $T(t)$ is $\gamma(t)$. We refer to this set as the set of \textit{$(r,s)$-tensor fields on $\gamma$}.
Note, in particular, that if $S\in \mT^{(r,s)}(M)$, then $S\circ\g\in \mT^{(r,s)}(\g)$. Conversely, if $t_0\in I$, $\g'(t_0)\neq 0$ and $T\in \mT^{(r,s)}(\gamma)$,
we can locally write $T$ as a restriction of a smooth tensor field on an open neighbourhood of $\g(t_0)$ to $\g$. Next, we have a notion of induced covariant derivative,
just as in the case of vector fields on $\g$; cf., e.g., \cite[Proposition~18, p.~65]{oneill}. In fact, if $\nabla$ denotes the Levi-Civita connection of $(M,g)$,
there is a unique function taking $T\in \mT^{(r,s)}(\gamma)$ to $T'\in \mT^{(r,s)}(\gamma)$, called the \textit{induced covariant derivative}, such that 
\begin{enumerate}
\item $(aT_1+bT_2)'=aT_1'+bT_2'$ for all $a,b\in\ro$ and $T_1,T_2\in \mT^{(r,s)}(\gamma)$,
\item $(hT)'=h'T+hT'$ for all $h\in C^\infty(I)$ and $T\in \mT^{(r,s)}(\gamma)$,
\item $(S\circ\g)'(t)=\nabla_{\g'(t)}S$ for all $t\in I$ and $S\in \mT^{(r,s)}(M)$.
\end{enumerate}
Moreover, the induced covariant derivative has the property that
\begin{equation}\label{eq:Tone Ttwo prod der}
  \tfrac{d}{dt}(T_1,T_2)_g=(T_1',T_2)_g+(T_1,T_2')_g
\end{equation}
for $T_1,T_2\in \mT^{(r,s)}(\gamma)$. The proof of these statements is similar to the proof of \cite[Proposition~18, p.~65]{oneill}. Writing the equation $T'=0$
in local coordinates, it is clear that it is a linear homogeneous system of ODE's for the components of $T$ with respect to the local coordinates. If $a\in I$
and $\xi\in T^{(r,s)}TM$ with base point $\g(a)$, there is thus a unique $T\in\mT^{(r,s)}(\g)$ such that $T'=0$ and $T(a)=\xi$. If $\g$ is injective
and $b\in I$, we then refer to $T(b)$ as the \textit{parallel transport of $\xi$ from $\g(a)$ to $\g(b)$ along $\g$}. Next, let
\[
  \msO_g:=\{(p,q)\in M\times M\, |\, 0<d_g(p,q)<\roinj(M,g)\},
\]
where $\roinj(M,g)$ denotes the injectivity radius of $(M,g)$ and $d_g$ denotes the topological metric on $M$ induced by $g$. If $(p,q)\in\msO_g$, then there is
an $r>0$ satisfying $d_g(p,q)<r<\roinj(M,g)$. This means that the exponential map based at $p$ is a diffeomorphism from the open ball of radius $r$ in
$T_pM$ onto its image. In particular, there is a corresponding geodesic $\sigma_{pq}$ with the property that $\sigma_{pq}(0)=p$ and $\sigma_{pq}(1)=q$; in fact,
$\sigma_{pq}$ is the unique length minimising geodesic from $p$ to $q$ in $M$; see \cite[Proposition~16, p.~134]{oneill}.
If $(p,q)\in\msO_g$ and $\xi\in T^{(r,s)}TM$ with base point $p$, we, in what follows, denote by $P^{(r,s)}[\sigma_{pq}](\xi)$ the parallel transport of $\xi$ from $p$ to
$q$ along $\sigma_{pq}$. Let $k\in\nn{}_0$, $\a\in (0,1]$, $r,s\in\nn{}_0$ and assume $S$ to be a $C^k$ section of $T^{(r,s)}TM$. Then we define
\begin{align}
  \|S\|_{C^{k}(g)} := & \textstyle{\sup}_{p\in M}\sum_{j\leq k}|((\nabla^g)^jS)(p)|_{g},\label{eq:Ck g def}\\
  [S]_{C^{k,\a}(g)} := & \textstyle{\sup}_{(p,q)\in \msO_g}\tfrac{|P^{(r,s+k)}[\sigma_{pq}](((\nabla^g)^kS)(p))-((\nabla^g)^kS)(q)|_g}{[d_g(p,q)]^\a},\label{eq:Ckalpha brackets g}\\
  \|S\|_{C^{k,\a}(g)} := & \|S\|_{C^{k}(g)}+[S]_{C^{k,\a}(g)}. \label{eq:Ckalpha g}
\end{align}
If $\|S\|_{C^{k,\a}(g)}$ is bounded, we say that $S\in C^{k,\a}(g)$. When deriving Schauder estimates, it is convenient to use an equivalent norm, expressed in terms of
local coordinates. Next, we prove that there are appropriate coverings by coordinate neighbourhoods that can be used to define such a norm. 

\begin{lemma}\label{lemma:local coord Holder}
  Let $(M,g)$ be a closed Riemannian manifold. Given $\e\in (0,1/2)$, there is a finite cover $\{U_i\}_{i=1}^{l_\e}$ of $M$ and local coordinates
  $\sfx_i:U_i\rightarrow\rn{n}$ with the following properties: the Christoffel symbols of $g$ with respect to the coordinates $(\sfx_i,U_i)$ are
  bounded in absolute value by $\e$ on $U_i$ for all $i\in\{1,\dots,l_\e\}$;
  \begin{equation}\label{eq:sfgxi minus delta}
    \|\sfg_{\sfx_i}\circ\sfx_i^{-1}-\roId\|_{C^0(\sfx_i(U_i))}\leq \e,\ \ \
    \|\sfg_{\sfx_i}^{-1}\circ\sfx_i^{-1}-\roId\|_{C^0(\sfx_i(U_i))}\leq \e
  \end{equation}
  for all $i\in\{1,\dots,l_\e\}$; the estimate
  \begin{equation}\label{eq:dpq sfxi p q equiv}
    (1-\e)d(p,q)\leq |\sfx_i(p)-\sfx_i(q)|\leq (1+\e)d(p,q)
  \end{equation}
  holds for all $p,q\in U_i$ and $i\in\{1,\dots,l_\e\}$; for any $k\in\nn{}_0$, there is a constant $C_k\in\ro$ such that
  \begin{equation}\label{eq:sfg sfx Ck bd}
    \|\sfg_{\sfx_i}\circ\sfx_i^{-1}\|_{C^k(\sfx_i(U))}\leq C_k
  \end{equation}
  for all $i\in\{1,\dots,l_\e\}$; and there is an $r>0$ such that for any $p\in M$, the ball $B_r(p)$ is contained in one of the neighbourhoods
  $U_i$.
\end{lemma}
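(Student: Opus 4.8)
The plan is to construct the desired cover by working in normal coordinates centred at well-chosen points, exploiting compactness to extract a finite subcover, and then checking each listed property in turn. First I would fix $\e\in(0,1/2)$. For each $p\in M$, let $(\exp_p)$ denote the Riemannian exponential map, which is a diffeomorphism from the ball $B_{\roinj(M,g)}(0)\subset T_pM$ onto its image. Choosing an orthonormal basis of $T_pM$ identifies this image with an open subset of $\rn{n}$, giving geodesic normal coordinates $\sfx_p$ centred at $p$. In these coordinates the metric satisfies $\sfg_{\sfx_p}(p)=\roId$ and the Christoffel symbols vanish at $p$; by continuity (and uniform continuity of the relevant quantities on the compact manifold $M$), there is a radius $\rho_p>0$, which can be taken uniform in $p$ by compactness — call it $\rho$ — such that on the coordinate ball of radius $\rho$ the Christoffel symbols are bounded by $\e$ and both estimates in (\ref{eq:sfgxi minus delta}) hold. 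Then I would set $U_p$ to be this coordinate ball of radius $\rho$, and extract a finite subcover $\{U_i\}_{i=1}^{l_\e}$ from $\{U_p\}_{p\in M}$ by compactness.

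Next I would verify (\ref{eq:dpq sfxi p q equiv}). Since (\ref{eq:sfgxi minus delta}) says that $\sfg_{\sfx_i}$ is uniformly close to the identity on $U_i$, the coordinate-Euclidean length of any curve in $U_i$ is within a factor $(1\pm\e)^{1/2}$ (which, shrinking $\rho$ if necessary, we may take to be within $(1\pm\e)$ after adjusting the constant) of its $g$-length; integrating along the $g$-minimising geodesic from $p$ to $q$ and along the coordinate straight segment respectively yields the two-sided bound, provided $\rho$ is small enough that these geodesics stay inside $U_i$ and are minimising — which holds as long as $2\rho<\roinj(M,g)$. For (\ref{eq:sfg sfx Ck bd}), note that for each fixed $k$ the $C^k$-norm of $\sfg_{\sfx_p}\circ\sfx_p^{-1}$ on the coordinate ball of radius $\rho$ depends continuously on $p$ (the change of coordinates between normal charts based at nearby points is smooth and varies continuously), so it is bounded uniformly over the compact set of base points; taking $C_k$ to be this bound over $p$ works for all the finitely many $U_i$. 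Finally, for the last property: since $\{U_i\}$ is a finite open cover of the compact metric space $(M,d)$, the Lebesgue number lemma provides an $r>0$ such that every metric ball $B_r(p)$ lies in some $U_i$.

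The main obstacle I anticipate is the bookkeeping in making all the smallness requirements simultaneously compatible: the bound $\e$ on the Christoffel symbols, the closeness of $\sfg_{\sfx_i}$ and $\sfg_{\sfx_i}^{-1}$ to $\roId$, and the distance comparison (\ref{eq:dpq sfxi p q equiv}) all force the coordinate radius $\rho$ to be small, and one must be careful that (\ref{eq:dpq sfxi p q equiv}) — which compares the \emph{intrinsic} distance $d(p,q)$ with the coordinate distance — really follows from the pointwise metric estimate rather than requiring a separate argument; here one uses that geodesics from $p$ to $q$ with $p,q\in U_i$ remain in a slightly larger coordinate ball (still inside the injectivity radius), so their $g$-length and coordinate length are comparable, and the coordinate-straight segment gives the complementary inequality. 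A secondary point requiring a little care is the uniformity in $p$ of the $C^k$ bounds (\ref{eq:sfg sfx Ck bd}), which is where compactness of $M$ is used most essentially; once that is in hand the finiteness of the cover makes the constants $C_k$ automatic. None of these steps is deep, but the statement packages several normalisations at once, and the argument should be presented so that it is transparent that a single choice of $\rho$ (depending only on $\e$, $(M,g)$) serves for all of them.
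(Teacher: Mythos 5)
Your strategy for (\ref{eq:dpq sfxi p q equiv}) is genuinely different from the paper's, and it is worth saying why. The paper exploits the exact identity $d(p,q)=|\sfx_p(q)|$ that holds in normal coordinates \emph{based at $p$}, and then shows that the whole smooth family $\{\sfx_p\}_{p\in\msC}$ of normal coordinate systems in a convex neighbourhood $\msC$ is uniformly close (at the level of first derivatives) to one fixed $\sfx_o$, so that the identity transfers to a two-sided bound with constants $(1\pm\e)$ for the fixed chart. You instead compare the $g$-length and coordinate-Euclidean length of curves directly, using the uniform closeness of $\sfg_{\sfx_i}$ to $\roId$, and integrate once along the minimising geodesic and once along the coordinate segment. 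Your route is more elementary and avoids the derivative estimate on the family of exponential charts, at the cost of needing the minimising geodesic between $p,q\in U_i$ to lie in the chart domain.

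That last point is where there is a genuine gap. You assert that the minimising geodesic from $p$ to $q$ stays inside $U_i$ ``as long as $2\rho<\roinj(M,g)$''. This is not correct: the injectivity radius condition guarantees existence and uniqueness of the minimising geodesic, but not that it remains in $B_\rho(p_i)$. Two points near the boundary of $B_\rho(p_i)$, on the same side, can be joined by a minimiser that leaves the ball; the geodesic is only guaranteed to stay in $B_{3\rho}(p_i)$ (by the triangle inequality and the fact that its length is $<2\rho$). The fix is either to require $\rho$ to be less than the \emph{convexity} radius of $(M,g)$ (positive by compactness, and then $B_\rho(p_i)$ is geodesically convex), or to do what the paper effectively does: keep the coordinate chart defined on a larger ball $B_{2r}(p_i)$ (or $B_{3\rho}(p_i)$) while taking $U_i$ to be the smaller ball $B_r(p_i)$, so the geodesic stays in the chart domain even if it leaves $U_i$, and ensure the pointwise metric estimates (\ref{eq:sfgxi minus delta}) hold on the larger ball. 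Once this is repaired your length-comparison argument goes through, and the remaining items — uniformity in $p$ of the coordinate radius and of the $C^k$-bounds by compactness and smoothness of the exponential map, and the Lebesgue number lemma for the final claim — are handled as in the paper.
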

\begin{remark}
  The coordinates $(\sfx_i,U_i)$ that we construct in the proof are normal coordinates. However, this is not of importance in what follows. In fact,
  we wish to work with coordinates having the properties listed in the lemma but which are not necessarily normal coordinates. 
\end{remark}
\begin{remark}
  For the sake of brevity, we use the notation $d:=d_g$ in the statement of the lemma and the proof. The notation $\sfg_{\sfx_i}$ is introduced in
  (\ref{eq:sfh sfx def}). Note also that the last statement of the lemma is an immediate consequence of the assumptions and the Lebesgue Lemma;
  cf., e.g., \cite[Lemma~9.11, p.~28]{bredon}. However, we state it as a property of the covering for the sake of convenience. 
\end{remark}
\begin{proof}
  One essential feature of the coverings we wish to construct is the estimate (\ref{eq:dpq sfxi p q equiv}). A natural class of coordinate systems
  in which to obtain it is normal coordinates. If $\sfx$ are normal coordinates, based at $p$, on a ball $B_r(p)$, then
  \begin{equation}\label{eq:dpq ito lc}
    d(p,q)=|\sfx(q)|=|\sfx(q)-\sfx(p)|
  \end{equation}
  for $q\in B_r(p)$; see below. Since a convex neighbourhood is a normal neighbourhood of each of its points, it is natural, as a next step, to use
  the corresponding family of normal neighbourhoods. However, we only obtain (\ref{eq:dpq ito lc}) if one of the points $p$, $q$ is the base point
  of the normal coordinates. Next, it is therefore natural to try to obtain uniform control of the family of normal coordinates in (a subset of) a convex
  neighbourhood in terms of one of the normal coordinate systems.

  \textit{Constructing a family of normal neighbourhoods.} To begin with, let $(\sfy_i,V_i)$, $i=1,\dots,l$, be a finite covering of $M$ by local
  coordinate systems. Let
  $\{E_{i,k}\}_{k=1}^n$ be an orthonormal frame on $V_i$ with respect to $g$ for $i=1,\dots,l$. For each $x\in M$, there is a convex neighbourhood of $x$
  with respect to $g$, say $\msC$, contained in one of the $V_i$; cf. \cite[Proposition~7, p.~130]{oneill}. Note that the map taking $p,q\in\msC$ to
  $\sigma'_{pq}(0)$ is a smooth map from $\msC\times\msC$ to $TM$, where $\sigma_{pq}$ is the unique geodesic in $\msC$ such that $\sigma_{pq}(0)=p$ and
  $\sigma_{pq}(1)=q$; cf. \cite[Lemma~9, p.~131]{oneill}. For this reason, we obtain a smooth family of normal coordinates, say $\sfx_{p}$ for
  $p\in\msC$, defined as follows. For $p,q\in\msC$, let
  \[
    \sfx_{p}(q):=[\ldr{\sigma'_{pq}(0),E_{i,1}|_{p}},\dots,\ldr{\sigma'_{pq}(0),E_{i,n}|_{p}}],
  \]
  where $\ldr{\cdot,\cdot}:=g$. Next, note that if $p\in\msC$ and $\rho>0$ are such that $B_{\rho}(p)\subset \msC$, then $B_{\rho}(p)$ is a normal
  neighbourhood of $p$ (since $\msC$ is a normal neighbourhood of $p$, by definition, and the relevant subset of $T_{p}M$ corresponding to $B_{\rho}(p)$
  is starshaped). Moreover, if $B_{\rho}(p)$ is normal, then $d(p,q)=|\sigma'_{pq}(0)|_{g}=|\sfx_{p}(q)|$ for all $q\in B_{\rho}(p)$; cf.
  \cite[Proposition~16, p.~134]{oneill}. Here $|\cdot|$ represents the standard norm in $\rn{n}$.

  \textit{Uniform control of the family in terms of one coordinate system.} Next, note that if $o\in\msC$, then
  \[
    \sfx_{p}(q)=\sfx_{p}(q)-\sfx_{p}(p)=\sfx_{p}\circ\sfx_{o}^{-1}\circ\sfx_{o}(q)-\sfx_{p}\circ\sfx_{o}^{-1}\circ\sfx_{o}(p).
  \]
  Assume now that $B_{2r}(o)\subset\msC$ and that $p,q\in B_{2r}(o)$ for some $r>0$, so that the straight line between $\sfx_{o}(p)$ and $\sfx_{o}(q)$ is
  a subset of the ball of radius $2r$ and center at the origin in $\rn{n}$. Let $\xi(t):=t\sfx_{o}(q)+(1-t)\sfx_{o}(p)$. Then
  \[
    \sfx_{p}\circ\sfx_{o}^{-1}\circ\sfx_{o}(q)-\sfx_{p}\circ\sfx_{o}^{-1}\circ\sfx_{o}(p)=\textstyle{\int}_{0}^{1}D(\sfx_{p}\circ\sfx_{o}^{-1})(\xi(t))dt
    [\sfx_{o}(q)-\sfx_{o}(p)].
  \]
  Next, note that for $p=o$, $D(\sfx_{p}\circ\sfx_{o}^{-1})$ is the identity mapping on $\rn{n}$. Letting $\g$ be a length minimising geodesic with
  $\g(0)=o$ and $\g(1)=p$, it then follows that 
  \[
    D(\sfx_{p}\circ\sfx_{o}^{-1})-\mathrm{Id}=\textstyle{\int}_{0}^{1}\tfrac{d}{dt}\left[D(\sfx_{\g(t)}\circ\sfx_{o}^{-1})\right]dt.
  \]
  However, since $\sfx_{p}(q)$ is smooth as a function of both $p$ and $q$, there is, for each pair of compact subsets $K_0,K_1\subset\msC$, a constant $C$ such
  that
  \begin{equation}\label{eq:Dvariation}
    \big\|\tfrac{d}{dt}\big[D(\sfx_{\g(t)}\circ\sfx_{o}^{-1})\big](r)\big\|\leq C|\g'(t)|_{g},
  \end{equation}
  assuming $\g(t)\in K_0$ and $r\in K_1$, where $\|\cdot\|$ denotes the standard norm on $L(\rn{n})$. This means that, if $\g(t)\in K_0$ for $t\in [0,1]$ and $r\in K_1$, 
  \begin{equation}\label{eq:bsfxpbsfxoest}
    \|D(\sfx_{p}\circ\sfx_{o}^{-1})(r)-\mathrm{Id}\|\leq Cd(o,p).
  \end{equation}
  
  \textit{Constructing a covering.} Let $x\in M$ and let $\msC$ be a convex neighbourhood of $x$ as above. In particular, $\msC\subset V_i$ for some
  $i\in\{1,\dots,l\}$. Let now $r_{0}\in (0,1)$ be such that $K_1:=\overline{B}_{2r_{0}}(x)$ is compact and contained in $\msC$ and let $K_0:=\overline{B}_{r_{0}}(x)$.
  Note that $K_0$ is a compact subset of $\msC$. Let $C\in (0,\infty)$ be the associated constant such that (\ref{eq:Dvariation}) holds for $o=x$,
  $\g(t)\in K_0$ and $r\in K_1$. Next, let $r\leq r_{0}$ be such that $rC<\e$. Letting $o=x$ in the above arguments, and using the notation $\sfx:=\sfx_{x}$, 
  \[
    \sfx_{p}(q)=[\sfx(q)-\sfx(p)]+\textstyle{\int}_{0}^{1}\left[D(\sfx_{p}\circ\sfx^{-1})(\xi(t))-\mathrm{Id}\right]dt\ [\sfx(q)-\sfx(p)]
  \]
  for $p,q\in B_{2r}(x)$. Appealing to (\ref{eq:bsfxpbsfxoest}) and the condition on $r$ yields, in addition,
  \begin{equation}\label{eq:EuclRiemequiv}
    (1-\e)|\sfx(q)-\sfx(p)|\leq d(p,q)\leq (1+\e)|\sfx(q)-\sfx(p)|
  \end{equation}
  for $p\in K:=\overline{B}_{r}(x)$ and $q\in B_{2r}(x)$. Next, by decreasing $r$, if necessary, we can assume the absolute value of the Christoffel
  symbols of $g$ with respect to $\sfx$ to be bounded by $\e$ on $B_{r}(x)$; cf. \cite[Proposition~33, p.~73]{oneill}. Similarly, we can assume estimates
  analogous to (\ref{eq:sfgxi minus delta}) and (\ref{eq:sfg sfx Ck bd}) to hold. For each $x\in M$, we thus obtain local coordinates
  $(\sfx_x,B_{r(x)}(x))$. By compactness of $M$, this covering can be reduced to a finite covering with the properties stated in the lemma. 
\end{proof}

Given a covering of the type constructed in Lemma~\ref{lemma:local coord Holder}, we can define alternate H\"{o}lder norms.

\begin{definition}\label{def:Hoelder bbU}
  Let $(M,g)$ be a closed Riemannian manifold; let $\bbU:=\{U_i\}$ be a covering of the type constructed in Lemma~\ref{lemma:local coord Holder};
  and let $\sfx_i$ be the corresponding local coordinates on $U_i$. Define
  \begin{subequations}\label{seq: C k alpha M bbU}
    \begin{align}
      \|T\|_{C^{k}(M;\bbU)} := &
      \sup_{i\in\{1,\dots,l_\e\}}\sup_{\underset{i_1,\dots,i_s}{\overset{j_1,\dots,j_r}{}}}\|{}^{\sfx_i}T^{j_1\cdots j_r}_{i_1\cdots i_s}\circ\sfx_i^{-1}\|_{C^k(\sfx_i(U_i))},
      \label{eq:Ck bbU}\\
            [T]_{C^{k,\a}(M;\bbU)} := & \sup_{i\in\{1,\dots,l_\e\}}\sup_{\underset{i_1,\dots,i_s}{\overset{j_1,\dots,j_r}{}}}\sup_{|\b|=k}\sup_{\underset{p,q\in U_i}{\overset{p\neq q}{}}}
            \tfrac{|\d^\b({}^{\sfx_i}T^{j_1\cdots j_r}_{i_1\cdots i_s}\circ\sfx_i^{-1})(\sfx_i(p))-\d^\b({}^{\sfx_i}T^{j_1\cdots j_r}_{i_1\cdots i_s}
              \circ\sfx_i^{-1})(\sfx_i(q))|}{|\sfx_i(p)-\sfx_i(q)|^\a},\label{eq:Ckalpha sn bbU}\\
            \|T\|_{C^{k,\a}(M;\bbU)} := & \|T\|_{C^{k}(M;\bbU)}+[T]_{C^{k,\a}(M;\bbU)}\label{eq:Ckalpha bbU}
    \end{align}
  \end{subequations}  
  for $T\in C^{k,\a}(g)$. 
\end{definition}
\begin{remark}
  In the definition, we use the convention that if $(\sfx,U)$ are local coordinates and $T$ is a section of $T^{(r,s)}TM$, then
  \[
    {}^{\sfx}T^{j_1\cdots j_r}_{i_1\cdots i_s}:=T(\d_{\sfx^{i_1}},\dots,\d_{\sfx^{i_s}},d\sfx^{j_1},\dots,d\sfx^{j_r}).
  \]
\end{remark}
Given this definition, we can prove the following equivalence.
\begin{lemma}\label{lemma:equiv Hoelder norms}
  Let $(M,g)$ be a closed Riemannian manifold; let $\bbU:=\{U_i\}$ be a covering of the type constructed in Lemma~\ref{lemma:local coord Holder};
  and let $\sfx_i$ be the corresponding local coordinates on $U_i$. Then, given the notation introduced in
  Definition~\ref{def:Hoelder bbU}, there is a constant $C>1$ such that
  \begin{equation}\label{eq:Holder norms equiv}
    C^{-1}\|S\|_{C^{k,\a}(M;\bbU)}\leq \|S\|_{C^{k,\a}(g)}\leq C\|S\|_{C^{k,\a}(M;\bbU)}
  \end{equation}
  for all $S\in C^{k,\a}(g)$. 
\end{lemma}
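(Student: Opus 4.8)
The claim is the two-sided equivalence of the intrinsic H\"older norm $\|\cdot\|_{C^{k,\a}(g)}$, defined geometrically via covariant derivatives and parallel transport in (\ref{eq:Ck g def})--(\ref{eq:Ckalpha g}), with the coordinate-based norm $\|\cdot\|_{C^{k,\a}(M;\bbU)}$ of Definition~\ref{def:Hoelder bbU}. The whole argument is local and is glued together by the finiteness of the cover $\bbU=\{U_i\}_{i=1}^{l_\e}$ produced by Lemma~\ref{lemma:local coord Holder}: it suffices to prove, for each fixed $i$, a two-sided estimate comparing $\|S\|_{C^{k}(g)}$ and $[S]_{C^{k,\a}(g)}$ restricted to $U_i$ with the corresponding coordinate quantities on $\sfx_i(U_i)$, with constants depending only on $k$, $\a$, $n$ and the uniform bounds $C_k$ on the metric coefficients from (\ref{eq:sfg sfx Ck bd}); then one takes the maximum over the finitely many charts. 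So I would first reduce to a single chart $(\sfx,U)=(\sfx_i,U_i)$ and suppress the index $i$.

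\textbf{Comparing the $C^k$ parts.} In the chart $(\sfx,U)$, the components of $(\nabla^g)^j S$ are polynomial expressions in the components ${}^{\sfx}S^{\cdots}_{\cdots}\circ\sfx^{-1}$, their coordinate partial derivatives up to order $j$, the Christoffel symbols of $g$ and their coordinate derivatives up to order $j-1$. By Lemma~\ref{lemma:local coord Holder}, the $\sfg_{\sfx}\circ\sfx^{-1}$ and $\sfg_{\sfx}^{-1}\circ\sfx^{-1}$ are uniformly $C^{m}$-bounded for every $m$ (with $\sfg_{\sfx}^{\pm1}$ uniformly close to the identity, hence uniformly bounded below), so the Christoffel symbols and all their coordinate derivatives are uniformly bounded; the pointwise norm $|\cdot|_g$ of a tensor is comparable, uniformly, to the Euclidean norm of its coordinate components because $\sfg_{\sfx}$ is uniformly close to $\roId$. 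Running this in both directions --- expressing coordinate partials $\d^\b({}^{\sfx}S^{\cdots}_{\cdots}\circ\sfx^{-1})$ in terms of the components of $(\nabla^g)^{\le k}S$, Christoffel symbols and their derivatives --- gives
\[
C^{-1}\|S\|_{C^k(M;\bbU)}\le \|S\|_{C^k(g)}\le C\|S\|_{C^k(M;\bbU)},
\]
with $C$ depending only on the data listed above. This step is entirely routine linear algebra plus Leibniz.

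\textbf{Comparing the H\"older seminorms; the main obstacle.} The delicate point is the seminorm $[S]_{C^{k,\a}(g)}$ in (\ref{eq:Ckalpha brackets g}), which measures the difference between $((\nabla^g)^kS)(q)$ and the parallel transport of $((\nabla^g)^kS)(p)$ along the minimizing geodesic $\sigma_{pq}$, for $(p,q)\in\msO_g$ --- i.e.\ for all pairs at distance less than $\roinj(M,g)$, not merely for pairs in a single chart. I would handle this in two moves. First, fix $r>0$ as in the last clause of Lemma~\ref{lemma:local coord Holder}, so that every ball $B_r(p)$ lies in some chart $U_i$; for pairs with $d_g(p,q)<r$ one works in that chart, and for pairs with $r\le d_g(p,q)<\roinj(M,g)$ the quotient in (\ref{eq:Ckalpha brackets g}) is controlled crudely by $(2/r)^\a\cdot 2\|(\nabla^g)^kS\|_{C^0(g)}$, which is already dominated by $\|S\|_{C^k(M;\bbU)}$; an analogous crude bound handles the coordinate seminorm since each $U_i$ has bounded diameter. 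So the real work is for $p,q$ in a common chart with $d_g(p,q)$ small. There, by (\ref{eq:dpq sfxi p q equiv}), $d_g(p,q)$ and $|\sfx(p)-\sfx(q)|$ are comparable, so the denominators match up to a constant. For the numerators I would write the parallel transport operator $P^{(r,s+k)}[\sigma_{pq}]$ in coordinates: parallel transport solves the linear ODE $\dot{T}+\Gamma(\dot\sigma_{pq})T=0$ along $\sigma_{pq}$, whose coefficients are uniformly bounded (Christoffel symbols, $\dot\sigma_{pq}$ of length $d_g(p,q)$), so the transported coordinate components of $((\nabla^g)^kS)(p)$ differ from the untransported ones by $O(d_g(p,q))\cdot|((\nabla^g)^kS)(p)|$, and moreover the map $(p,q)\mapsto$ (coordinate matrix of $P^{(r,s+k)}[\sigma_{pq}]$) is smooth in both variables with uniformly bounded first derivatives on the relevant compact pieces. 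Consequently
\[
\big|P^{(r,s+k)}[\sigma_{pq}](((\nabla^g)^kS)(p))-((\nabla^g)^kS)(q)\big|_g
\]
is, up to a uniform constant, comparable to $|((\nabla^g)^kS)(p)-((\nabla^g)^kS)(q)|$ measured in coordinate components plus a term $O(d_g(p,q))\,\|S\|_{C^k(g)}$; dividing by $d_g(p,q)^\a$ and using $\a\le1$ and (already-proved) $C^k$-equivalence, one converts between the geometric seminorm and the coordinate seminorm of the \emph{components of $(\nabla^g)^kS$}. A final application of the $C^k$-comparison --- now at the level of H\"older seminorms, bounding $[\d^\b({}^{\sfx}S^{\cdots}_{\cdots}\circ\sfx^{-1})]_{C^{0,\a}}$ against $[(\nabla^g)^kS\text{-components}]_{C^{0,\a}}$ and lower-order $C^k$ data, and conversely --- closes the loop and yields (\ref{eq:Holder norms equiv}). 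The main obstacle is precisely the bookkeeping in this last paragraph: keeping the parallel-transport discrepancy, the change of metric between the two points, and the conversion between tensor components and geometric norms all uniform in $(p,q)$; but each piece is controlled by an explicit clause of Lemma~\ref{lemma:local coord Holder}, so no new ideas are needed --- only care.
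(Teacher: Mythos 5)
Your proposal is correct and takes essentially the same approach as the paper's proof: the same far/near split of pairs $(p,q)$ according to whether $d_g(p,q)\geq r$, the same crude bound in the far case, the same linear parallel-transport ODE estimate using $|{}^{\sfx_i}\Gamma^m_{jk}|\leq\e$ from Lemma~\ref{lemma:local coord Holder} to get an $O(\e\,d_g(p,q)\,|(\nabla^kS)(p)|_g)$ discrepancy in the near case, and the same final reduction to the coordinate H\"{o}lder quotient of the components of $(\nabla^g)^kS$. The extra observation you insert about smooth, uniformly bounded dependence of the parallel-transport matrix on $(p,q)$ is superfluous --- the paper obtains the needed discrepancy by directly integrating the ODE along $\sigma_{pq}$, exactly as you also describe.
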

\begin{remark}
  Due to (\ref{eq:Holder norms equiv}), it is clear that $C^{k,\a}(g)$ is complete; i.e., that it is a Banach space. 
\end{remark}
\begin{proof}
  In what follows, we, for the sake of brevity, write $\nabla$ and $d$ instead of $\nabla^g$ and $d_g$, respectively. We leave it to the reader to
  verify that the norms $\|\cdot\|_{C^{k}(M;\bbU)}$ and $\|\cdot\|_{C^{k}(g)}$ are equivalent. In order to estimate $[S]_{C^{k,\a}(g)}$, see
  (\ref{eq:Ckalpha brackets g}), in terms of $\|S\|_{C^{k,\a}(M;\bbU)}$, note first that if $d(p,q)\geq r>0$ (where $r$ is the number whose existence is
  guaranteed in Lemma~\ref{lemma:local coord Holder}), then
  \begin{equation*}
    \begin{split}
      \tfrac{|P^{(r,s+k)}[\sigma_{pq}](\nabla^kS(p))-(\nabla^kS)(q)|_g}{[d(p,q)]^\a}
      \leq & r^{-\a}(|P^{(r,s+k)}[\sigma_{pq}](\nabla^kS(p))|_g+|(\nabla^kS)(q)|_g)\\
      = & r^{-\a}(|(\nabla^kS)(p)|_g+|(\nabla^kS)(q)|_g)\leq C_k\|S\|_{C^k(M;\bbU)},
    \end{split}
  \end{equation*}
  where we, in the second last step, used the fact that, due to (\ref{eq:Tone Ttwo prod der}), parallel transport preserves norms and in the last step used
  the fact that the different $C^k$-norms are equivalent. In other words, we can assume that $q\in B_r(p)$. This means that there is an $i\in \{1,\dots,l_\e\}$
  such that $p,q\in U_i$. This, in its turn, means that $d(p,q)\geq |\sfx_i(p)-\sfx_i(q)|/2$; cf. (\ref{eq:dpq sfxi p q equiv}). It is therefore natural to
  focus on the parallel transport of $\nabla^kS$. To this end, we solve the equation $T'=0$ along $\gamma$; i.e., 
  \begin{equation}\label{eq:par trans nabla k S}
    \tfrac{d}{dt}T^{j_1\cdots j_r}_{i_1\cdots i_{s+k}}+\dot{\g}^m\textstyle{\sum}_{a=1}^r{}^{\sfx_i}\Gamma_{ml}^{j_a}\circ\g\cdot T^{j_1\cdots l\cdots j_r}_{i_1\cdots i_{s+k}}
    -\dot{\g}^m\textstyle{\sum}_{b=1}^{s+k}\ {}^{\sfx_i}\Gamma_{mi_b}^l\circ\g\cdot T^{j_1\cdots j_r}_{i_1\cdots l\cdots i_{s+k}}=0,
  \end{equation}
  where $\g:=\sigma_{pq}$; and ${}^{\sfx_i}\Gamma^m_{jk}$ are the Christoffel symbols of $g$ with respect to the coordinates $(\sfx_i,U_i)$. We also impose the
  initial condition
  \[
  T^{j_1\cdots j_r}_{i_1\cdots i_{s+k}}(0)=(\nabla^kS)(\d_{\sfx^{i_1}_i}|_p,\dots,\d_{\sfx^{i_{s+k}}_i}|_p,d\sfx^{j_1}_i|_p,\dots,d\sfx^{j_r}_i|_p).
  \]
  In order to estimate the integral of the second and third terms on the left hand side of (\ref{eq:par trans nabla k S}), note that
  $|{}^{\sfx_i}\Gamma^m_{jk}|\leq \e$ by assumption. Note also that, along the curve,
  \[
    |T^{j_1\cdots j_r}_{i_1\cdots i_{s+2}}|\leq C|(\nabla^kS)(p)|_g,
  \]
  where $C$ is a constant depending only on $r$ and $s$. Here we used the fact that, due to (\ref{eq:Tone Ttwo prod der}), the $g$-norm of $T$ is preserved along the curve and
  the fact that (\ref{eq:sfgxi minus delta}) holds. Similarly, $|\dot{\g}^m|\leq C|\dot{\g}|_g$ for some numerical constant $C$. Combining the above observations
  yields the conclusion that
  \begin{equation*}
    \begin{split}
      |T^{j_1\cdots j_r}_{i_1\cdots i_{s+2}}(1)-T^{j_1\cdots j_r}_{i_1\cdots i_{s+2}}(0)| \leq & C_{n,r,s}\e|(\nabla^kS)(p)|_g\textstyle{\int}_0^1 |\dot{\g}(\tau)|_gd\tau\\
      = & C_{n,r,s}\e |(\nabla^kS)(p)|_gd(p,q),
    \end{split}
  \end{equation*}  
  where $C_{n,r,s}\in\ro$ is a constant depending only on $n$, $r$ and $s$. Thus
  \[
    |{}^{\sfx_i}[P^{(r,s+k)}[\sigma_{pq}](\nabla^kS(p))]^{j_1\cdots j_r}_{i_1\cdots i_{s+k}}-{}^{\sfx_i}(\nabla^kS)^{j_1\cdots j_r}_{i_1\cdots i_{s+k}}(p)|
    \leq C_{n,r,s}\e |(\nabla^kS)(p)|_gd(p,q),
  \]
  so that 
  \begin{equation*}
    \begin{split}
      & |{}^{\sfx_i}[P^{(r,s+k)}[\sigma_{pq}](\nabla^kS(p))]^{j_1\cdots j_r}_{i_1\cdots i_{s+k}}-{}^{\sfx_i}(\nabla^kS)^{j_1\cdots j_r}_{i_1\cdots i_{s+k}}(q)|\\
      \leq & |{}^{\sfx_i}(\nabla^kS)^{j_1\cdots j_r}_{i_1\cdots i_{s+k}}(p)-{}^{\sfx_i}(\nabla^kS)^{j_1\cdots j_r}_{i_1\cdots i_{s+k}}(q)|
      +C_{n,r,s}\e |(\nabla^kS)(p)|_gd(p,q).
    \end{split}
  \end{equation*}
  Dividing by $[d(p,q)]^\a$, the second term on the right hand side can be estimated by a constant times $\|S\|_{C^{k}(M;\bbU)}$. What remains to be estimated is thus
  \[
    \tfrac{|{}^{\sfx_i}(\nabla^kS)^{j_1\cdots j_r}_{i_1\cdots i_{s+k}}(p)-{}^{\sfx_i}(\nabla^kS)^{j_1\cdots j_r}_{i_1\cdots i_{s+k}}(q)|}{|\sfx_i(p)-\sfx_i(q)|^\a}.
  \]  
  We leave it to the reader to verify that this expression can be estimated by $\|S\|_{C^{k,\a}(M;\bbU)}$. There is thus a constant $C_{k,\a}$ such that the second
  inequality in (\ref{eq:Holder norms equiv}) holds. The proof of the first inequality is similar. We leave the details to the reader. 
\end{proof}

\section{Schauder estimates, elliptic model equation}\label{section:schaudermodeleq}

In order to derive a continuation criterion for solutions to (\ref{seq:themodel}), we need Schauder estimates. Moreover, we need to have detailed
information concerning the dependence of the constants (appearing in the estimates) on the unknowns in (\ref{seq:themodel}). To this end, we 
quote Gilbarg and Trudinger's classical text \cite{giatr}, more specifically \cite[Theorem~6.2, p.~90]{giatr}.
To be able to do so easily in the proofs to follow, it is convenient to introduce the relevant notation. Let $\Omega$ be a proper open
subset of $\rn{n}$. For $x,y\in\Omega$, define $d_{x}:=\mathrm{dist}(x,\partial\Omega)$ (where the distance is measured with the standard Euclidean
metric) and $d_{x,y}:=\min\{d_{x},d_{y}\}$. For $k\in\nn{}_0$, $u\in C^{k}(\Omega)$ and $\sigma\in\rn{}$, define
\begin{align*}
  [u]_{k,0;\Omega}^{(\sigma)} := & \sup_{x\in\Omega, |\b|=k}d_{x}^{k+\sigma}|(\d^{\b}u)(x)|,\\
  |u|_{k,0;\Omega}^{(\sigma)} := & \textstyle{\sum}_{j=0}^{k}[u]_{j,0;\Omega}^{(\sigma)}.
\end{align*}
In case $u\in C^{k,\alpha}(\Omega)$ for some $k\in\nn{}_0$ and $0<\alpha\leq 1$, let $\msO_\Omega:=\{(x,y)\in\Omega\times\Omega\, |\, x\neq y\}$ and
\begin{align*}
  [u]_{k,\alpha;\Omega}^{(\sigma)} := & \sup_{(x,y)\in\msO_\Omega}\sup_{|\b|=k}d_{x,y}^{k+\alpha+\sigma}\tfrac{|\d^{\b}u(x)-\d^{\b}u(y)|}{|x-y|^{\alpha}},\\
  |u|_{k,\alpha;\Omega}^{(\sigma)} := & |u|_{k,0;\Omega}^{(\sigma)}+[u]_{k,\alpha;\Omega}^{(\sigma)}.
\end{align*}
In case $\sigma=0$, we use the notation $*$ instead of $(0)$ in the above norms and semi-norms; e.g., $|u|_{k,\alpha;\Omega}^{*}:=|u|_{k,\alpha;\Omega}^{(0)}$ etc.
The statement of \cite[Theorem~6.2, p.~90]{giatr} is then the following:
\begin{thm}[Gilbarg and Trudinger, Theorem~6.2, p.~90]\label{thm:schaudergilantru}
  Let $\Omega$ be a proper and non-empty open subset of $\rn{n}$, $\a\in (0,1]$ and let $u\in C^{2,\a}(\Omega)$ be a bounded solution in $\Omega$ of the
  equation
  \begin{equation}\label{eq:GT elliptic eqn}
    a^{ij}\d_{i}\d_{j}u+b^{i}\d_{i}u+cu=F,
  \end{equation}
  where $F\in C^{\a}(\Omega)$ and there are positive constants $\lambda$ and $\Lambda$ such that the coefficients satisfy
  \begin{equation}\label{eq:aijlowbd}
    a^{ij}\xi_{i}\xi_{j}\geq \lambda |\xi|^{2}
  \end{equation}
  for all $x\in\Omega$ and $\xi\in\rn{n}$, and
  \[
  |a^{ij}|^{*}_{0,\a;\Omega},\   |b^{i}|^{(1)}_{0,\a;\Omega},\   |c|^{(2)}_{0,\a;\Omega} \leq \Lambda.
  \]
  Then
  \[
  |u|_{2,\a;\Omega}^{*}\leq C(|u|_{0;\Omega}+|F|^{(2)}_{0,\a;\Omega}),
  \]
  where $C=C(n,\a,\lambda,\Lambda)$ and $|u|_{0;\Omega}:=\sup_{x\in\Omega}|u(x)|$.
\end{thm}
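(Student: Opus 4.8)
The final statement to be proved is Theorem~\ref{thm:schaudergilantru}, which is quoted verbatim from \cite[Theorem~6.2, p.~90]{giatr}. Since it is a direct citation, the natural ``proof'' is a pointer to the literature together with a brief indication of the argument. The plan is therefore to state that the proof is given in \cite[Section~6.1, pp.~85--90]{giatr} and to sketch the two ingredients on which it rests, so that the reader understands why the constant $C$ depends only on $n$, $\a$, $\lambda$ and $\Lambda$ (which is the feature we actually exploit elsewhere in the paper, e.g.\ in Lemma~\ref{lemma:schauder}).

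First I would recall the two local building blocks. The first is the interior Schauder estimate on a ball: if $u\in C^{2,\a}(B_{R})$ solves (\ref{eq:GT elliptic eqn}) on a ball $B_{R}=B_{R}(x_{0})$ with coefficients satisfying (\ref{eq:aijlowbd}) and $|a^{ij}|_{0,\a;B_R},|b^i|_{0,\a;B_R},|c|_{0,\a;B_R}\le\Lambda$, then on the concentric half-ball $B_{R/2}$ one has
\[
  \|u\|_{C^{2,\a}(B_{R/2})}\le C\big(\|u\|_{C^{0}(B_{R})}+\|F\|_{C^{\a}(B_{R})}\big),
\]
with $C=C(n,\a,\lambda,\Lambda,R)$; this is \cite[Theorem~6.2 in the unweighted interior form, cf.\ Problem~6.1 and Theorem~6.1]{giatr}, proved by freezing the coefficients at $x_{0}$, applying the constant-coefficient Schauder estimate obtained from the Newtonian potential / Hölder estimates for the Poisson equation, and absorbing the error terms coming from the oscillation of $a^{ij}$ and from the lower-order terms $b^i\d_iu+cu$ via a standard interpolation-and-absorption argument on the weighted norms. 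The second block is purely bookkeeping: the weighted global norms $|\cdot|^{(\sigma)}_{k,\a;\Omega}$ are designed precisely so that the family of rescaled interior estimates on balls $B_{d_x/2}(x)$, $x\in\Omega$, can be patched into a single estimate on all of $\Omega$ with a constant independent of $\Omega$; the weights $d_x^{k+\sigma}$ exactly cancel the $R$-dependence of the local constant.

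The key steps, in order, would be: (i) fix $x\in\Omega$, set $R=d_x$, and apply the interior estimate on $B_{R}(x)\subset\Omega$; (ii) multiply through by the appropriate power of $d_x$ and use that $d_y$ is comparable to $d_x$ for $y\in B_{R/2}(x)$ to convert the local $C^{2,\a}(B_{R/2})$ bound into a bound on the contribution of the point $x$ to $|u|^{*}_{2,\a;\Omega}$, with the $F$-term appearing in the weighted form $|F|^{(2)}_{0,\a;\Omega}$; (iii) take the supremum over $x\in\Omega$; (iv) handle the Hölder difference quotients for pairs $(x,y)$ with $|x-y|\ge d_{x,y}/2$ trivially by the triangle inequality (as in the reduction used in the proof of Lemma~\ref{lemma:equiv Hoelder norms}), and for pairs with $|x-y|<d_{x,y}/2$ by noting both points lie in a common ball on which the interior estimate applies. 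Throughout, one uses the interpolation inequalities for the weighted norms (\cite[Lemma~6.32, p.~130]{giatr}) to absorb intermediate terms. The main obstacle---which is genuinely the heart of \cite{giatr} and which I would not reproduce---is establishing the constant-coefficient interior Schauder estimate itself, i.e.\ the $C^{\a}$ regularity of second derivatives of Newtonian potentials; everything after that is the scaling and patching described above. Since all of this is standard and is exactly what is recorded in \cite[Chapter~6]{giatr}, I would simply write:

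\begin{proof}
  This is \cite[Theorem~6.2, p.~90]{giatr}; see \cite[Section~6.1]{giatr} for the proof.
\end{proof}
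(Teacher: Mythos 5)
Your proposal is correct and matches the paper's treatment: the paper does not prove Theorem~\ref{thm:schaudergilantru} but simply quotes it from \cite[Theorem~6.2, p.~90]{giatr}, which is exactly what you conclude one should do. Your sketch of the underlying freezing-of-coefficients and weighted-norm scaling argument is an accurate summary of the Gilbarg--Trudinger proof, but it is supplementary; the citation alone suffices and is what the paper records.
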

\begin{remark}\label{remark:cont dep of constants}
  In our applications, it will be of interest to keep in mind that the dependence of $C$ on $\Lambda$ can be assumed to be continuous and such that
  $C$ is monotonically increasing with $\Lambda$. To justify the statement that $C$ can be assumed to increase monotonically with $\Lambda$, note that
  $C$ can be replaced by
  \[
  C'(n,\a,\lambda,\Lambda):=\textstyle{\inf}_{\eta\geq \Lambda}C(n,\a,\lambda,\eta),
  \]
  which is clearly monotonically increasing. In order to justify the statement concerning continuity, assume $f:(0,\infty)\rightarrow (0,\infty)$
  to be an increasing function. Then $f$ is Lebesgue integrable on compact subintervals of $(0,\infty)$, since it is measurable
  and bounded on compact intervals ($f(t)\leq f(b)$ for all $t\in [a,b]$). Define the function $g:(0,\infty)\rightarrow (0,\infty)$ by the condition
  \[
  g(t)=\textstyle{\int}_{t}^{t+1}f(s)ds.
  \]
  Then $g(t)\geq f(t)$ by construction. Moreover, $g$ is increasing, since $f$ is increasing. Finally, $g$ is continuous, since $f$ is uniformly bounded
  on compact subsets. Applying this observation to $C'(n,\a,\lambda,\cdot)$ yields the desired conclusion. 
\end{remark}

Next, we state the Schauder estimates we need in our applications. 

\begin{thm}\label{thm:globalSchauder}
  Let $(\S,h_{\refer})$ be a closed Riemannian manifold and define H\"{o}lder spaces as at the beginning of
  Appendix~\ref{section:hoelder spaces on manifolds} with $(M,g)=(\S,h_{\refer})$. Let $c_{1},c_2>1$ be constants. Then there is a function
  $\mfS:\ro_{+}^{3}\rightarrow\ro_{+}$, depending only on $c_{1}$, $c_2$, $(\S,h_{\refer})$ and a covering of $(\S,h_{\refer})$ constructed in the
  proof. Moreover, $\mfS$ is continuous, monotonically increasing in each of its arguments, and such that the following holds. Let
  $h\in C^{1,1}$ be a Riemannian metric on $\S$; $X\in C^{0,1}$ be a vector field; and $f,a\in C^{0,1}$ be real valued functions. Assume that 
  \begin{equation}\label{eq:hhrefabd}
    c_{1}^{-1}h_{\refer}\leq h\leq c_{1}h_{\refer},\ \ \
    c_2^{-1}\leq a\leq c_2
  \end{equation}  
  and that $u\in C^{2,1}$ is a solution to
  \begin{equation}\label{eq:Modelelliptic}
    -\Delta_{h}u+Xu+au=f.
  \end{equation}
  Then
  \begin{equation}\label{eq:uCtwooneglbd}
    \|u\|_{C^{2,1}}\leq \mfS(\|h\|_{C^{1,1}},\|X\|_{C^{0,1}},\|a\|_{C^{0,1}})\cdot \|f\|_{C^{0,1}}.
  \end{equation}
\end{thm}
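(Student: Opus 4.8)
The strategy is to localise the global equation (\ref{eq:Modelelliptic}) on the coordinate patches of a fixed covering of $(\S,h_\refer)$ of the type furnished by Lemma~\ref{lemma:local coord Holder}, reduce the resulting local equations to the form (\ref{eq:GT elliptic eqn}) to which Gilbarg and Trudinger's Theorem~\ref{thm:schaudergilantru} applies, and then track the dependence of all constants on the $C^{1,1}$-norm of $h$, the $C^{0,1}$-norm of $X$ and the $C^{0,1}$-norm of $a$ (using the equivalence of the geometric H\"older norms $\|\cdot\|_{C^{k,\a}(h_\refer)}$ with the coordinate H\"older norms $\|\cdot\|_{C^{k,\a}(\S;\bbU)}$ from Lemma~\ref{lemma:equiv Hoelder norms}, and Remark~\ref{remark:cont dep of constants} to obtain the continuous monotone dependence).

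\textbf{Step 1: Fix a covering and reduce to coordinates.} I would apply Lemma~\ref{lemma:local coord Holder} with a fixed $\e\in(0,1/2)$ to obtain a finite cover $\bbU=\{U_i\}_{i=1}^{l_\e}$ with coordinates $\sfx_i$, a radius $r>0$ such that every metric ball $B_r(p)$ lies in some $U_i$, and uniform $C^k$-bounds (\ref{eq:sfg sfx Ck bd}) on the metric coefficients. On each $U_i$, write the equation (\ref{eq:Modelelliptic}) in the coordinates $\sfx_i$: $-\Delta_h u$ becomes $-\tfrac{1}{\sqrt{|\det h|}}\d_i(\sqrt{|\det h|}\,h^{ij}\d_j u)= -h^{ij}\d_i\d_j u + (\text{lower order})$, so one gets an equation of the form $a^{ij}\d_i\d_j u + b^i\d_i u + cu = F$ with $a^{ij}=h^{ij}\circ\sfx_i^{-1}$, and $b^i$, $c$, $F$ built polynomially from $h^{ij}$, $\d h$, the Christoffel symbols, the components of $X$, $a$ and $f$. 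The ellipticity constant $\lambda$ is controlled below by $c_1$ and the covering via (\ref{eq:hhrefabd}), and one checks that $|a^{ij}|^*_{0,\a;\sfx_i(U_i)}$, $|b^i|^{(1)}_{0,\a;\sfx_i(U_i)}$, $|c|^{(2)}_{0,\a;\sfx_i(U_i)}$ are bounded (with $\a$ fixed, say $\a=1/2$) by a continuous increasing function of $\|h\|_{C^{1,1}}$, $\|X\|_{C^{0,1}}$, $\|a\|_{C^{0,1}}$ and the fixed data; here one passes from the interior weighted norms $|\cdot|^{(\sigma)}$ on $\sfx_i(U_i)$ to global H\"older norms by restricting to a slightly smaller subpatch, or more cleanly by noting that $d_x$ is bounded above and below on a compact exhaustion and absorbing the weights.

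\textbf{Step 2: Apply Theorem~\ref{thm:schaudergilantru} and patch.} On each patch, possibly after restricting to a subdomain $U_i'\Subset U_i$ chosen so that $\{U_i'\}$ still covers $\S$ (using the ball-radius property to guarantee this), Theorem~\ref{thm:schaudergilantru} gives $|u|^*_{2,\a;\sfx_i(U_i')}\leq C(|u|_{0;\sfx_i(U_i)}+|F|^{(2)}_{0,\a;\sfx_i(U_i)})$ with $C=C(n,\a,\lambda,\Lambda)$, and by Remark~\ref{remark:cont dep of constants} $C$ can be taken continuous and increasing in $\Lambda$. Summing over the finitely many $i$ and invoking Lemma~\ref{lemma:equiv Hoelder norms}, this yields $\|u\|_{C^{2,\a}(h_\refer)}\leq \mfS_0(\|h\|_{C^{1,1}},\|X\|_{C^{0,1}},\|a\|_{C^{0,1}})(\|u\|_{C^0}+\|f\|_{C^{0,\a}})$. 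To upgrade the exponent from $\a$ to Lipschitz, i.e. from $C^{2,\a}$ to $C^{2,1}$, I would differentiate the coordinate equation once (legitimate since $h\in C^{1,1}$, $X,a,f\in C^{0,1}$ and $u\in C^{2,1}$ by hypothesis) to obtain an elliptic equation for $\d_k u$ with $C^{0,\a}$ coefficients and right-hand side in $C^{0,\a}$, then apply Theorem~\ref{thm:schaudergilantru} again to each $\d_k u$; alternatively, since the statement already assumes $u\in C^{2,1}$, one can also interpolate. Finally I would eliminate $\|u\|_{C^0}$: from (\ref{eq:Modelelliptic}) and $a\geq c_2^{-1}>0$, the maximum principle gives $\|u\|_{C^0}\leq c_2\|f\|_{C^0}$ (at an interior max of $u$, $-\Delta_h u\geq 0$ and $Xu=0$, so $au\leq f$; similarly at a min), so $\|u\|_{C^0}$ is absorbed into the right-hand side, producing the clean estimate (\ref{eq:uCtwooneglbd}).

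\textbf{Main obstacle.} The routine part is writing $\Delta_h$ in coordinates and bounding the coefficients; the delicate bookkeeping is Step 1's transition between (i) the interior-weighted H\"older seminorms $|\cdot|^{(\sigma)}_{k,\a;\Omega}$ native to \cite{giatr}, which blow up near $\d\Omega$, and (ii) the honest global H\"older norms on $\S$, while simultaneously ensuring that the subpatches $U_i'$ still cover $\S$ and that all constants depend \emph{only} on $c_1$, $c_2$, $(\S,h_\refer)$ and the (once-and-for-all) covering — not on $u$ or on further derivatives of $h$. Handling this requires choosing the covering with enough overlap (which Lemma~\ref{lemma:local coord Holder} provides via the uniform ball radius $r$) and a partition-of-unity or cutoff argument so that Schauder estimates on $U_i'$ control the full norm of $u$ on $\S$; I expect this overlap/cutoff argument, together with verifying that the once-differentiated equation still has coefficients bounded by the asserted data, to be where the real work lies.
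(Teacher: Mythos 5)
Your localisation strategy --- fixing a covering via Lemma~\ref{lemma:local coord Holder}, rewriting (\ref{eq:Modelelliptic}) in each chart as an equation of the form (\ref{eq:GT elliptic eqn}), invoking Theorem~\ref{thm:schaudergilantru}, removing $\|u\|_{C^0}$ by the maximum principle, and patching using the uniform ball radius and Lemma~\ref{lemma:equiv Hoelder norms} --- is exactly the paper's approach, and the coefficient bookkeeping you sketch is the right kind. The genuine gap is in your handling of the H\"{o}lder exponent. You propose to first run Theorem~\ref{thm:schaudergilantru} with some fixed $\a<1$ and then upgrade $C^{2,\a}$ to $C^{2,1}$, but neither of your two upgrade mechanisms works.

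Differentiating the coordinate equation with respect to $x_k$ puts $(\d_k a^{ij})D_{ij}u+(\d_k b^i)D_iu+(\d_k c)u-\d_kF$ on the right-hand side. Since $f$, $a$ and $X$ are only $C^{0,1}$, and $b^i$ already involves first derivatives of $h\in C^{1,1}$, the terms $\d_kF$, $(\d_kc)u$ and $(\d_kb^i)D_iu$ are merely $L^\infty$, not $C^{0,\a}$ for any $\a>0$, so the hypotheses of Theorem~\ref{thm:schaudergilantru} cannot be verified for the differentiated equation. Moreover, applying that theorem to $\d_k u$ would require $\d_k u\in C^{2,\a}$ a priori, i.e.\ $u\in C^{3,\a}$, which is neither assumed nor implied by (\ref{eq:uCtwooneglbd}). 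Interpolation fails too: $C^{2,1}$ is strictly smaller than $C^{2,\a}$ for $\a<1$, so one cannot pass from a $C^{2,\a}$ bound to a $C^{2,1}$ bound by interpolating without a $C^{3,\b}$ bound at the other end, which is unavailable here. The paper avoids the detour entirely by observing that Theorem~\ref{thm:schaudergilantru} is stated for $\a\in(0,1]$ and applying it directly at $\a=1$ on $\Omega_i=\sfx_i(B_r(p_i))$; the weighted estimate is then converted to a genuine $C^{2,1}$ bound using $d_x\geq r/4$ on $\omega_i=\sfx_i(B_{r/2}(p_i))$ together with Lemma~\ref{lemma:equiv Hoelder norms}. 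The fix is simply to delete the exponent-upgrade step and take $\a=1$ from the outset.
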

\begin{remark}
  In the statement and proof, we, for the sake of brevity, use the notation $C^{k,\a}:=C^{k,\a}(h_{\refer})$ and $\ro_{+}:=[0,\infty)$. 
\end{remark}
\begin{proof}
  The idea of the proof is to apply Theorem~\ref{thm:schaudergilantru} in local coordinates and to deduce the desired conclusion by appealing
  to Lemma~\ref{lemma:equiv Hoelder norms}. We therefore start by constructing appropriate local coordinates. We first apply
  Lemma~\ref{lemma:local coord Holder} with $(M,g)=(\S,h_{\refer})$. Then $\{B_{r/2}(p)\}_{p\in\S}$ is an open covering of $\S$, where $r>0$ is as
  in the statement of Lemma~\ref{lemma:local coord Holder}. This means that there is a finite subcovering, say $\{B_{r/2}(p_i)\}_{i=1}^l$. By
  construction, there are local coordinates $\sfx_i$ on $U_i:=B_r(p_i)$ for each $i$. Moreover, $(\sfx_i,U_i)$ have the properties listed in
  Lemma~\ref{lemma:local coord Holder} (with a different $r>0$). Then, due to Lemma~\ref{lemma:equiv Hoelder norms}, we know that, for each
  $i\in\{1,\dots,l\}$,
  \[
    u\circ\sfx_i^{-1}\in C^{2,1}(\Omega_i),\ {}^{\sfx_i}h_{jm}\circ\sfx_i^{-1}\in C^{1,1}(\Omega_i),\ 
    {}^{\sfx_i}X^j\circ\sfx_i^{-1},\ a\circ\sfx_i^{-1},\ f\circ\sfx_i^{-1}\in C^{0,1}(\Omega_i),
  \]
  where $\Omega_i:=\sfx_i(B_r(p_i))$. Moreover, for $p\in B_r(p_i)$ and $\xi\in\rn{n}$,
  \[
    {}^{\sfx_i}h^{jm}(p)\xi_j\xi_m\geq \tfrac{1}{c_1}\, {}^{\sfx_i}h^{jm}_{\refer}(p)\xi_j\xi_m\geq \tfrac{1}{2c_1}|\xi|^2,
  \]
  where we used (\ref{eq:hhrefabd}) in the first step and (\ref{eq:sfgxi minus delta}) in the second step. This means that we are in a
  position to apply Theorem~\ref{thm:schaudergilantru}. More specifically, $\Omega:=\Omega_i$ (where $\Omega_i$ by construction is open,
  non-empty and bounded), $\a=1$ and $a^{jm}:={}^{\sfx_i}h^{jm}\circ\sfx_i^{-1}$, so that (\ref{eq:aijlowbd}) holds with $\lambda=1/(2c_1)$.
  In order to determine $b^j$, note that (\ref{eq:Modelelliptic}) can be written
  \[
  {}^{\sfx_i}h^{jm}\d_{\sfx^{j}_i}\d_{\sfx^{m}_i}u-{}^{\sfx_i}\Gamma^{j}\d_{\sfx^{j}_i}u-{}^{\sfx_i}X^j\d_{\sfx^{j}_i}u-au=-f
  \]
  on $B_{r}(p_i)$, where ${}^{\sfx_i}\Gamma^{j}={}^{\sfx_i}h^{kl}\, {}^{\sfx_i}\Gamma^{j}_{kl}$ are the contracted Christoffel symbols. In other words,
  the function $u\circ\sfx_i^{-1}$ satisfies (\ref{eq:GT elliptic eqn}) on $\Omega_i$ with $a^{lm}$ as above,
  \begin{equation}\label{eq:bj def}
    b^j:=-{}^{\sfx_i}\Gamma^{j}\circ\sfx_i^{-1}-{}^{\sfx_i}X^j\circ\sfx_i^{-1},
  \end{equation}
  $c:=-a\circ\sfx_i^{-1}$ and $F:=-f\circ\sfx_i^{-1}$. Next, note that the distance from a point $p\in B_r(p_i)$ to $\d B_r(p_i)$ (with respect to
  the topological metric induced by $h_{\refer}$ on $\Sigma$, say $d_{\refer}$) is bounded from above by $r$. Due to (\ref{eq:dpq sfxi p q equiv}),
  we conclude that if $d_x:=\mathrm{dist}(x,\d\Omega_i)$ (where $\mathrm{dist}$ denotes the distance with respect to the Euclidean metric on
  $\rn{n}$), then $d_x\leq 2r$ for $x\in\Omega_i$. By a similar argument, if $\omega_i:=\sfx_i(B_{r/2}(p_i))$, then $d_x\geq r/4$ for $x\in\omega_i$. Next, 
  \[
  |a^{jm}|_{0,1;\Omega_i}^*=|a^{jm}|_{0,0;\Omega_i}^*+[a^{jm}]_{0,1;\Omega_i}^*\leq 2c_1+\textstyle{\sup}_{(x,y)\in\msO_i}d_{x,y}\tfrac{|a^{jm}(x)-a^{jm}(y)|}{|x-y|},
  \]
  where we appealed to (\ref{eq:sfgxi minus delta}), (\ref{eq:hhrefabd}) and used the notation $\msO_i:=\{(x,y)\in\Omega_i\times\Omega_i\, |\, x\neq y\}$.
  In order to estimate the second term on the far right hand side, let $(x,y)\in\msO_i$ and define, for $t\in[0,1]$,
  \[
  \xi_{lm}(t):=t\cdot {}^{\sfx_i}h_{lm}\circ\sfx_i^{-1}(x)+(1-t)\cdot{}^{\sfx_i}h_{lm}\circ\sfx_i^{-1}(y).
  \]
  Due to (\ref{eq:sfgxi minus delta}) and (\ref{eq:hhrefabd}), the eigenvalues of the matrix with components $\xi_{lm}(t)$ are bounded from below
  by $1/(2c_1)$ and from above by $2c_1$. The matrix with components $\xi_{lm}(t)$ is therefore invertible and the eigenvalues of the inverse
  satisfy the same bounds. Let $\xi^{lm}(t)$ denote the components of the inverse of the matrix with components $\xi_{lm}(t)$. Then
  \begin{equation*}
    \begin{split}
      a^{lm}(x)-a^{lm}(y) = & \xi^{lm}(1)-\xi^{lm}(0)
      =-\textstyle{\int}_{0}^{1}\xi^{lj}(t)\xi^{mk}(t)\tfrac{d}{dt}\xi_{jk}(t)dt\\
      = & -\textstyle{\int}_{0}^{1}\xi^{lj}(t)\xi^{mk}(t)dt [{}^{\sfx_i}h_{jk}\circ\sfx_i^{-1}(x)-{}^{\sfx_i}h_{jk}\circ\sfx_i^{-1}(y)].
    \end{split}
  \end{equation*}

  Combining the above observations with Lemma~\ref{lemma:equiv Hoelder norms} and similar (but much simpler) arguments for $c$ and $F$, it is clear that
  \[
  |a^{jm}|_{0,1;\Omega_i}^*\leq C\|h\|_{C^{0,1}},\ \ \
  |c|_{0,1;\Omega_i}^{(2)}\leq C\|a\|_{C^{0,1}},\ \ \
  |F|_{0,1;\Omega_i}^{(2)}\leq C\|f\|_{C^{0,1}},  
  \]
  where the constants only depends on $r$ and $c_1$. Turning to $b^j$ introduced in (\ref{eq:bj def}), the argument concerning the second term
  on the right hand side is similar to the above. Finally, consider
  \[
    {}^{\sfx_i}\Gamma^{j}\circ\sfx_i^{-1}=\tfrac{1}{2}a^{kl}a^{jm}[\d_k({}^{\sfx_i}h_{lm}\circ\sfx_i^{-1})
    +\d_l({}^{\sfx_i}h_{km}\circ\sfx_i^{-1})-\d_m({}^{\sfx_i}h_{kl}\circ\sfx_i^{-1})].
  \]  
  By arguments similar to the above, it can be verified that
  \[
  |b^{j}|_{0,1;\Omega_i}^{(1)}\leq C\|h\|_{C^{0,1}}\|h\|_{C^{1}}+C\|h\|_{C^{1,1}}+C\|X\|_{C^{0,1}}
  \]
  where the constants only depends on $r$ and $c_1$.

  By the above estimates and observations, we are allowed to apply Theorem~\ref{thm:schaudergilantru}. This yields
  \[
  |u\circ\sfx_i^{-1}|_{2,1;\Omega_i}^*\leq C(|u\circ\sfx_i^{-1}|_{0;\Omega_i}+|F|_{0,1;\Omega_i}^{(2)}),
  \]
  where $C$ only depends on $n$, $c_1$, $r$ and continuously and monotonically increasingly (cf. Remark~\ref{remark:cont dep of constants})
  on $\|h\|_{C^{1,1}}$, $\|X\|_{C^{0,1}}$ and $\|a\|_{C^{0,1}}$. Next, $\|u\|_{C^{0}}\leq c_2\|f\|_{C^{0}}$ due to the maximum principle. To summarise, 
  \[
  |u\circ\sfx_i^{-1}|_{2,1;\Omega_i}^*\leq C\|f\|_{C^{0,1}},
  \]
  where $C$ only depends on $n$, $c_1$, $c_2$, $r$ and continuously and monotonically increasingly on $\|h\|_{C^{1,1}}$, $\|X\|_{C^{0,1}}$ and
  $\|a\|_{C^{0,1}}$. Since $d_x\geq r/4$ for $x\in\omega_i$, this estimate allows us to bound $\|u\|_{C^{2,1}(\S;\bbV)}$, where $\bbV:=\{V_i\}_{i=1}^l$ is
  given by $V_i:=B_{r/2}(p_i)$; cf. (\ref{seq: C k alpha M bbU}). The desired estimate then follows by an application of
  Lemma~\ref{lemma:equiv Hoelder norms}.
\end{proof}

\section{Sobolev estimates, elliptic model equation}\label{section:sobestellmodeleq}

Next, we need Sobolev estimates of solutions to the elliptic PDE's appearing in (\ref{seq:themodel}). The estimates are essentially standard. However,
we here need detailed information concerning how the constants appearing in the estimates depend on the coefficients of the equation. The main result
of the present appendix is the estimate (\ref{eq:Hkptwonormu}) below.  

\begin{thm}[The elliptic estimates] \label{thm: elliptic estimate}
  Let $(\S, h_\refer)$ be a closed, connected and oriented Riemannian manifold of dimension $n$; let $k \in \nn{}_0$; and let $c_1,c_2 > 1$ be real
  numbers. Then there are
  polynomials $\mfD_{1}:\ro_{+}^{2}\rightarrow\ro_{+}$ and $\mfD_{2}:\ro_{+}^{4}\rightarrow\ro_{+}$ with non-negative coefficients; and a continuous function
  $\mfD_{3}:\ro_{+}^{3}\rightarrow\ro_{+}$ which is monotonically increasing in each of its arguments. Here $\mfD_{i}$, $i=1,2,3$, only depend on $c_{1}$,
  $c_{2}$, $n$, $k$ and $(\S,h_{\refer})$. Finally, let $h$ and $a$ be a Riemannian metric and a smooth function on $\S$, respectively, such that
  \begin{equation}\label{eq:hhrefequivellest}
    c_{1}^{-1}h_\refer\leq h\leq  c_1 h_\refer, \ \ \
    c_{2}^{-1}\leq  a\leq c_2.
  \end{equation}
  Then the following holds. First, if $f\in H^k$, and $u\in H^1$ is a weak solution to 
  \[
  (-\Delta_{h}+a)u=f,
  \]
  then $u\in H^{k+2}$. Second, for such $f$ and $u$, 
  \begin{equation}\label{eq:mainHkptwoestell}
    \norm{u}_{H^{k+2}} \leq \mfD_{1}[\norm{h}_{W^{2, \infty}},\norm{a}_{W^{1, \infty}}]
    \left( \norm{f}_{H^k} + \norm{u}_{W^{1, \infty}} \norm{h}_{H^{k+1}}  + \norm{u}_{L^\infty} \norm{a}_{H^k} \right).
  \end{equation}
  Third, if, in addition to the above, $k>n/2$, then
  \begin{equation}\label{eq:Hkestcrude}
    \|u\|_{H^{k+2}} \leq \mfD_{2}[\norm{h}_{W^{2, \infty}},\norm{a}_{W^{1, \infty}},\|a\|_{H^{k}},\|h\|_{H^{k+1}}]\cdot \|f\|_{H^{k}}.
  \end{equation}
  Fourth, if, in addition to the above assumptions (except for the assumption that $k>n/2$), $u\in C^{2,1}$ and $f\in C^{0,1}$, then  
  \begin{equation}\label{eq:Hkptwonormu}    
    \norm{u}_{H^{k+2}} \leq \mfD_{3}[\norm{h}_{W^{2, \infty}},\norm{a}_{W^{1, \infty}},\norm{f}_{C^{0,1}}]
    \cdot (\norm{f}_{H^k}+\|h\|_{H^{k+1}}+\|a\|_{H^{k}}).
  \end{equation}  
\end{thm}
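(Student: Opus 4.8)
\textbf{Proof plan for (\ref{eq:Hkptwonormu}).}
The plan is to bootstrap from the $C^{2,1}$-regularity of $u$ and $f$ already assumed, using first the Schauder theory of Appendix~\ref{section:schaudermodeleq} and then the general elliptic Sobolev estimate (\ref{eq:mainHkptwoestell}). The point is that the three factors inside the parentheses on the right-hand side of (\ref{eq:mainHkptwoestell}), namely $\norm{u}_{W^{1,\infty}}$, $\norm{u}_{L^\infty}$ (appearing multiplied by $\norm{h}_{H^{k+1}}$ and $\norm{a}_{H^k}$) need to be controlled by a function of $\norm{h}_{W^{2,\infty}}$, $\norm{a}_{W^{1,\infty}}$ and $\norm{f}_{C^{0,1}}$ only, rather than of the full high-order Sobolev norms of $u$. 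This is exactly what Theorem~\ref{thm:globalSchauder} delivers: writing the equation $(-\Delta_h+a)u=f$ in the form (\ref{eq:Modelelliptic}) with $X=0$, and noting that (\ref{eq:hhrefequivellest}) implies (\ref{eq:hhrefabd}) with $c_1$ as given and $c_2$ as given, we obtain
\[
\|u\|_{C^{2,1}}\leq \mfS(\|h\|_{C^{1,1}},0,\|a\|_{C^{0,1}})\cdot\|f\|_{C^{0,1}}.
\]
Here I would use the fact, noted in the excerpt, that the $C^{1,1}$- and $C^{0,1}$-norms (defined via $h_\refer$) are dominated by the corresponding $W^{2,\infty}$- and $W^{1,\infty}$-norms up to a constant depending only on $(\S,h_\refer)$; so in particular $\norm{u}_{W^{1,\infty}}$ and $\norm{u}_{L^\infty}$ are both bounded by a continuous, increasing function of $\norm{h}_{W^{2,\infty}}$, $\norm{a}_{W^{1,\infty}}$ and $\norm{f}_{C^{0,1}}$.

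With that in hand, the rest is assembly. First invoke the second assertion of Theorem~\ref{thm: elliptic estimate}: since $f\in C^{0,1}\subset H^k$ (on a closed manifold $C^{0,1}$ embeds into every $H^k$, but more simply $f\in H^k$ is an explicit hypothesis here via $f\in C^{0,1}$ and $h,a$ smooth so that $u$ is smooth too) and $h,a$ are smooth, $u\in H^{k+2}$ and (\ref{eq:mainHkptwoestell}) holds. Then substitute the Schauder bound into the $\norm{u}_{W^{1,\infty}}$ and $\norm{u}_{L^\infty}$ factors of (\ref{eq:mainHkptwoestell}). The prefactor $\mfD_1[\norm{h}_{W^{2,\infty}},\norm{a}_{W^{1,\infty}}]$ is already of the allowed form. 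Collecting terms, the right-hand side becomes a quantity of the shape
\[
\big(\text{continuous increasing function of }\norm{h}_{W^{2,\infty}},\norm{a}_{W^{1,\infty}},\norm{f}_{C^{0,1}}\big)\cdot\big(\norm{f}_{H^k}+\norm{h}_{H^{k+1}}+\norm{a}_{H^k}\big),
\]
where I would absorb the $\norm{f}_{H^k}$ arising directly from (\ref{eq:mainHkptwoestell}) together with the $\norm{f}_{C^{0,1}}\norm{h}_{H^{k+1}}$ and $\norm{f}_{C^{0,1}}\norm{a}_{H^k}$ terms produced by the Schauder substitution into exactly the three-term sum on the right of (\ref{eq:Hkptwonormu}). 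Defining $\mfD_3$ to be (a continuous, monotone majorant of) the resulting function of its three arguments — using the monotonization-and-mollification device of Remark~\ref{remark:cont dep of constants} if needed to guarantee continuity and monotonicity — yields (\ref{eq:Hkptwonormu}).

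The only mild subtlety, and the step I would be most careful about, is the interplay of the various norm conventions: the Schauder theorem and Appendix~\ref{section:hoelder spaces on manifolds} work with $C^{k,\alpha}(h_\refer)$ defined through covariant derivatives and parallel transport, whereas (\ref{eq:mainHkptwoestell}) is phrased with $W^{1,\infty}$, $W^{2,\infty}$ and $H^k$ norms. I would dispatch this by recording once that on the fixed closed manifold $(\S,h_\refer)$ there are constants, depending only on $(\S,h_\refer)$, relating $\|\cdot\|_{C^{0,1}(h_\refer)}$ to $\|\cdot\|_{W^{1,\infty}}$ and $\|\cdot\|_{C^{1,1}(h_\refer)}$ to $\|\cdot\|_{W^{2,\infty}}$, and citing the norm equivalences already established in Appendices~\ref{appendix:sobolev} and~\ref{section:hoelder spaces on manifolds} (and Lemma~\ref{lemma:equiv Hoelder norms}). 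Everything else is a bounded composition of continuous increasing functions, so no genuine analytic obstacle remains; the work is bookkeeping to make sure the final $\mfD_3$ genuinely depends only on $c_1$, $c_2$, $n$, $k$ and $(\S,h_\refer)$ through its three displayed arguments.
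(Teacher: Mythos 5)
Your plan for (\ref{eq:Hkptwonormu}) is correct and matches the paper exactly: the paper dispatches (\ref{eq:Hkptwonormu}) with the single remark that one appeals to Theorem~\ref{thm:globalSchauder} in order to control $\norm{u}_{W^{1,\infty}}$ and $\norm{u}_{L^\infty}$ on the right-hand side of (\ref{eq:mainHkptwoestell}), which is precisely what you do. Your handling of the norm conventions ($C^{1,1}(h_\refer)\lesssim W^{2,\infty}$, $C^{0,1}(h_\refer)\lesssim W^{1,\infty}$, via Lemma~\ref{lemma:equiv Hoelder norms}) is also fine, as is the use of the monotonisation device of Remark~\ref{remark:cont dep of constants} to produce a continuous, monotone $\mfD_3$. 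One small wobble: the parenthetical claim that $C^{0,1}\subset H^k$ on a closed manifold is false for $k\geq 2$, but you immediately correct it by noting that $f\in H^k$ is a standing hypothesis, so this is harmless.

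The problem is that (\ref{eq:Hkptwonormu}) is the last and easiest of the four assertions of the theorem, and your proposal proves only that one, taking the first three — the $H^{k+2}$ regularity of $u$, the estimate (\ref{eq:mainHkptwoestell}), and (\ref{eq:Hkestcrude}) — as given. In particular you explicitly ``invoke the second assertion of Theorem~\ref{thm: elliptic estimate}'', which is circular as a proof of that theorem. Those three assertions are where essentially all the work lies, and they are not addressed. In the paper, the regularity statement comes from Lemma~\ref{lemma:elliptic reg}; the core estimate (\ref{eq:mainHkptwoestell}) is obtained by applying Proposition~\ref{prop: bound on the inverse} to localised derivatives $E_{\zeta,\bfI}u$, controlling the commutator $[L,E_{\zeta,\bfI}]u$ in $L^2$ via the modified Moser estimates of Lemma~\ref{le: commutator Delta higher derivatives}, and then running an induction over $k$ to replace the $\norm{u}_{H^{k+1}}$ term by $\norm{u}_{H^1}$, which is finally absorbed by another appeal to Proposition~\ref{prop: bound on the inverse}; and (\ref{eq:Hkestcrude}) then follows from (\ref{eq:indassHkptwoestnoHo}) together with Sobolev embedding, the interpolation inequality (\ref{eq:psiinterpol}), and a Young-type absorption of $\norm{u}_{H^{k+2}}$. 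None of this machinery appears in your proposal, so as a proof of the full theorem it has a genuine gap — the commutator analysis and the induction that make the constants depend \emph{polynomially} on $\norm{h}_{W^{2,\infty}}$ and $\norm{a}_{W^{1,\infty}}$, and the interpolation argument for (\ref{eq:Hkestcrude}), are the missing ideas you would need to supply.
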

\begin{remark}
  It is understood that we use the metric $h_{\refer}$ to define all the spaces and corresponding norms. 
\end{remark}
\begin{remark}
  When we say that the $\mfD_i$ depend on $(\S,h_{\refer})$, we include dependence of coverings of $\S$, partitions of unity, local orthonormal
  frames with respect to $h_\refer$ etc. constructed in the proof.  
\end{remark}
\begin{remark}
  It is also useful to keep in mind that the following, more primitive, estimate holds
  \begin{equation}\label{eq:u Hk plus two Lu Hk}
    \|u\|_{H^{k+2}}\leq C\|Lu\|_{H^{k}},
  \end{equation}
  where $C$ only depends on $c_{1}$, $c_{2}$, $n$, $k$, $(\S,h_{\refer})$, the $C^{l_k+1}$-norm of $h$, where $l_k:=\max\{k,1\}$, and the $C^{l_k}$-norm
  of $a$. This estimate is analogous to (\ref{eq:Hkestcrude}). However, it is better in the sense that we do not require $k>n/2$ and it is worse
  in the sense that we require control of $h$ and $a$ in $C^{l_k+1}$ and $C^{l_k}$ respectively. 
\end{remark}
The proof of Theorem \ref{thm: elliptic estimate} is to be found at the end of Subsection~\ref{ssection:loc fr comm} and rests on several lemmas.
We begin with some basic observations.

\subsection{Basic estimates and observations}
We begin by noting that the volume forms and measures associated with different Riemannian metrics are equivalent as far as estimates are concerned. 

\begin{lemma}[Equivalence of volume forms] \label{le: equivalence volume forms}
  Let $(\S, h_\refer)$ be a closed, connected and oriented Riemannian manifold of dimension $n$. Assume that $h$ is a Riemannian metric on $\S$ and
  that $c_1^{-1} h_\refer\leq h\leq c_1 h_\refer$ for some constant $c_1 > 1$. Then there is a constant $c$, depending only on $c_1$ and $n$, such that
  \begin{equation}\label{eq:phi int h h ref equiv}
    c^{-1}\textstyle{\int}_\S|\phi|\Vol_{h_\refer}\leq \textstyle{\int}_\S|\phi|\Vol_h\leq c \textstyle{\int}_\S|\phi|\Vol_{h_\refer}
  \end{equation}
  for all $\phi\in C(M)$. Similarly, if $\phi\in L^1(h_\refer)$, then $\phi\in L^1(h)$ and
  \begin{equation}\label{eq:phi int h h ref L one equiv}
    c^{-1}\textstyle{\int}_\S|\phi|d\mu_{h_\refer}\leq \textstyle{\int}_\S|\phi|d\mu_h\leq c \textstyle{\int}_\S|\phi| d\mu_{h_\refer}.
  \end{equation}
\end{lemma}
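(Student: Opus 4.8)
The statement is the elementary fact that two comparable Riemannian metrics yield comparable volume measures, so the plan is to reduce everything to a pointwise comparison of the density functions in local coordinates. First I would fix a point $p\in\S$ and work in an arbitrary positively oriented chart $(\sfx,U)$ around $p$; write $\sfh_{\sfx}$ and $\sfh_{\refer,\sfx}$ for the coordinate matrices of $h$ and $h_\refer$ as in (\ref{eq:sfh sfx def}). The hypothesis $c_1^{-1}h_\refer\leq h\leq c_1 h_\refer$ says precisely that, at each point of $U$, $c_1^{-1}\sfh_{\refer,\sfx}\leq \sfh_{\sfx}\leq c_1\sfh_{\refer,\sfx}$ as symmetric bilinear forms. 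Writing $A:=\sfh_{\refer,\sfx}^{-1/2}\sfh_{\sfx}\sfh_{\refer,\sfx}^{-1/2}$, this is equivalent to $c_1^{-1}\roId\leq A\leq c_1\roId$, hence every eigenvalue of $A$ lies in $[c_1^{-1},c_1]$ and therefore $c_1^{-n}\leq \det A\leq c_1^{n}$. Since $\det\sfh_{\sfx}=\det A\cdot\det\sfh_{\refer,\sfx}$, taking square roots gives the pointwise density comparison
\begin{equation*}
  c_1^{-n/2}\,|\det\sfh_{\refer,\sfx}|^{1/2}\leq |\det\sfh_{\sfx}|^{1/2}\leq c_1^{n/2}\,|\det\sfh_{\refer,\sfx}|^{1/2}
\end{equation*}
on all of $U$, with the constant $c:=c_1^{n/2}$ depending only on $c_1$ and $n$.

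Next I would upgrade this local density comparison to the global integral comparison for the measure $\mu_h$, using only the construction of $\mu_h$ recalled in the excerpt. Take a partition $\{P_i\}$ of $\S$ as in the definition of (\ref{eq:muh A}), with each $P_i$ contained in a chart $(\sfx_i,U_i)$. For $\phi\in L^1(h_\refer)$ nonnegative and measurable, apply (\ref{eq:muh A local}) on each $P_i$: $\int_{P_i}|\phi|\,d\mu_h=\int_{\sfx_i(P_i)}(|\phi|\,|\det\sfh_{\sfx_i}|^{1/2})\circ\sfx_i^{-1}\,d\mu$, and likewise with $h_\refer$. The pointwise density bound of the previous paragraph, inserted under the Lebesgue integral, gives $c^{-1}\int_{P_i}|\phi|\,d\mu_{h_\refer}\leq \int_{P_i}|\phi|\,d\mu_h\leq c\int_{P_i}|\phi|\,d\mu_{h_\refer}$; in particular the finiteness of the $h_\refer$-integral forces $\phi\in L^1(h)$. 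Summing over $i$ and invoking countable additivity of both measures (equivalently, the monotone convergence theorem applied to partial sums) yields (\ref{eq:phi int h h ref L one equiv}). The volume-form statement (\ref{eq:phi int h h ref equiv}) then follows immediately for $\phi\in C_c(\S)=C(\S)$ (since $\S$ is closed) from (\ref{eq:int dmuh int vol form}), which identifies $\int_\S f\,\Vol_h$ with $\int_\S f\,d\mu_h$ for continuous $f$ on the closed oriented manifold $\S$; apply this to $f=|\phi|\geq 0$.

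There is no real obstacle here; the only point requiring a little care is the measurability bookkeeping — one must check that $|\phi|\,|\det\sfh_{\sfx_i}|^{1/2}$ is Lebesgue measurable on $\sfx_i(P_i)$ (it is, being a product of a measurable function with a continuous one) so that the coordinate expression (\ref{eq:muh A local}) makes sense for $\phi$ only assumed integrable rather than continuous, and that the passage to a general (not necessarily nonnegative, real or even complex-valued) $\phi$ is handled by applying the nonnegative case to $|\phi|$. I would state the lemma's conclusion for $|\phi|$ throughout, as it is written, so this last remark is automatic. All constants produced are powers of $c_1$ times dimensional constants, as claimed.
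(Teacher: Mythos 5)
Your proof is correct but takes a genuinely different route from the paper's. You prove the measure-theoretic estimate (\ref{eq:phi int h h ref L one equiv}) directly, by working with the measurable partition $\{P_i\}$ from the construction of $\mu_h$ and comparing the coordinate densities $|\det\sfh_{\sfx}|^{1/2}$ and $|\det\sfh_{\refer,\sfx}|^{1/2}$ pointwise; the volume-form estimate (\ref{eq:phi int h h ref equiv}) then drops out via (\ref{eq:int dmuh int vol form}). The paper goes the other way: it first proves (\ref{eq:phi int h h ref equiv}) for continuous $\phi$ by reducing, via a partition of unity and an $h_\refer$-orthonormal frame (so $\sfh_{\refer}$ becomes the identity), to a determinant estimate, and then deduces (\ref{eq:phi int h h ref L one equiv}) by density. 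The determinant estimates also differ in an interesting way: the paper bounds $|\det(h_{ij})|$ crudely via the Leibniz expansion, giving a constant of the form $(n!\,c_1^n)^{1/2}$, whereas your simultaneous-diagonalization argument with $A=\sfh_{\refer,\sfx}^{-1/2}\sfh_{\sfx}\sfh_{\refer,\sfx}^{-1/2}$ gives the sharp constant $c_1^{n/2}$ (and, as a byproduct, treats the upper and lower bounds symmetrically, where the paper must say ``the first inequality is proven similarly''). In exchange, the paper's argument requires nothing beyond the Leibniz formula and the pointwise bound $|h_{ij}|\leq c_1$, while yours uses a bit of spectral theory for positive-definite symmetric matrices and the extension of the coordinate formula (\ref{eq:muh A local}) from indicator functions to nonnegative measurable integrands; both are routine. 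Either route is fine for the purposes the lemma is put to.
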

\begin{proof}
  Using a finite partition of unity, it is sufficient to prove (\ref{eq:phi int h h ref equiv}) for functions
  $\phi$ with support in a coordinate neighbourhood, say $U$. Let $\{E_i\}$ be an orthonormal frame on $U$ with respect to $h_\refer$. In
  order to prove the second inequality in (\ref{eq:phi int h h ref equiv}), it is sufficient to prove that the determinant of
  $h_{ij} := h(E_i, E_j)$ is bounded by the determinant of $h_\refer(E_i, E_j) = \de_{ij}$ (i.e., $1$) times a factor that only depends on $c_1$ and
  $n$. However,
  \begin{align*}
    \abs{\det(h_{ij})}
    &= \left| \textstyle{\sum}_{\s \in S_n}  \mathrm{sgn}(\s) \prod_{i = 1}^n h_{i\s(i)} \right| 
    \leq n! \cdot \max_{i,j = 1, \hdots, n}\abs{h_{ij}}^n\leq n! \cdot c_1^n.
  \end{align*}
  The first inequality in (\ref{eq:phi int h h ref equiv}) is proven similarly. The estimate (\ref{eq:phi int h h ref L one equiv}) follows from
  (\ref{eq:int dmuh int vol form}), (\ref{eq:phi int h h ref equiv}) and a density argument. 
\end{proof}

\begin{lemma}[Equivalence of Sobolev norms] \label{le: equivalence Sobolev norms}
  Let $(\S, h_\refer)$ be a closed, connected and oriented Riemannian manifold of dimension $n$. Assume that $h$ is a Riemannian metric on $\S$
  satisfying $c_1^{-1} h_\refer\leq h\leq c_1 h_\refer$ for some constant $c_1 > 1$. Then there is a constant $c>1$, depending only on
  $c_1$ and $n$, such that
  \begin{equation}\label{eq:Hoh Hohref}
    c^{-1} \norm{u}_{H^1(h_\refer)}\leq \norm{u}_{H^1(h)}\leq c \norm{u}_{H^1(h_\refer)}
  \end{equation}
  for all $u \in H^1(h)$.
\end{lemma}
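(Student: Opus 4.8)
The statement to prove is Lemma~\ref{le: equivalence Sobolev norms}, the equivalence of the $H^1$-norms associated with $h$ and $h_\refer$ under the pointwise comparability $c_1^{-1}h_\refer \leq h \leq c_1 h_\refer$. My plan is to reduce everything to two ingredients: the equivalence of the $L^2$-norms of functions, which follows immediately from Lemma~\ref{le: equivalence volume forms} applied with $\phi = |u|^2$; and the equivalence of the $L^2$-norms of the gradient one-form $du$, where both the fibre metric on $T^*\S$ and the volume form change. For the latter, I would first observe that $|du|_h^2 = h^{ij}(\partial_i u)(\partial_j u)$ in local coordinates (or, better, with respect to an $h_\refer$-orthonormal frame $\{E_i\}$ on a coordinate neighbourhood, $|du|_h^2 = h^{ij}(E_i u)(E_j u)$ where $h^{ij}$ is the inverse of $h_{ij} = h(E_i,E_j)$), and that the hypothesis $c_1^{-1}h_\refer \leq h \leq c_1 h_\refer$ passes to the inverse metrics: $c_1^{-1}h_\refer^{-1} \leq h^{-1} \leq c_1 h_\refer^{-1}$ as quadratic forms on $T^*\S$. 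Hence $c_1^{-1}|du|_{h_\refer}^2 \leq |du|_h^2 \leq c_1 |du|_{h_\refer}^2$ pointwise.

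Having the two pointwise comparisons $c_1^{-1}|u|^2 \leq |u|^2 \leq c_1|u|^2$ (trivially) and $c_1^{-1}|du|_{h_\refer}^2 \leq |du|_h^2 \leq c_1|du|_{h_\refer}^2$, together with the comparison of measures $c^{-1}d\mu_{h_\refer} \leq d\mu_h \leq c\, d\mu_{h_\refer}$ from Lemma~\ref{le: equivalence volume forms} (with $c$ depending only on $c_1$ and $n$), I would integrate:
\[
\|u\|_{H^1(h)}^2 = \textstyle{\int}_\S(|u|^2 + |du|_h^2)\,d\mu_h \leq c\textstyle{\int}_\S(|u|^2 + c_1|du|_{h_\refer}^2)\,d\mu_{h_\refer} \leq c\,c_1\,\|u\|_{H^1(h_\refer)}^2,
\]
and symmetrically for the lower bound, giving (\ref{eq:Hoh Hohref}) with a new constant depending only on $c_1$ and $n$. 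One should also remark, for completeness, that the hypothesis guarantees $H^1(h) = H^1(h_\refer)$ as sets — weak differentiability is metric-independent (as noted in the subsection on weak derivatives), so the statement ``for all $u \in H^1(h)$'' is unambiguous — and that the computation with an $h_\refer$-orthonormal frame, plus a finite partition of unity subordinate to a cover by such frame neighbourhoods, makes the local coordinate manipulations rigorous exactly as in the proof of Lemma~\ref{le: equivalence volume forms}.

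The only mildly delicate point — and thus the ``main obstacle,'' though it is minor — is the passage from $c_1^{-1}h_\refer \leq h \leq c_1 h_\refer$ at the level of the metrics to the corresponding two-sided bound for the inverse metrics, since matrix inversion reverses inequalities in the Loewner order. This is standard linear algebra: if $A, B$ are symmetric positive definite $n\times n$ matrices with $c_1^{-1}B \leq A \leq c_1 B$, then conjugating by $B^{-1/2}$ reduces to the case $B = I$, where $c_1^{-1}I \leq A \leq c_1 I$ immediately gives $c_1^{-1}I \leq A^{-1} \leq c_1 I$ by the spectral theorem, and conjugating back yields $c_1^{-1}B^{-1} \leq A^{-1} \leq c_1 B^{-1}$. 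Applying this with $A = (h_{ij})$, $B = ((h_\refer)_{ij}) = \roId$ in an $h_\refer$-orthonormal frame gives the bound on $h^{ij}$ directly. I would state this as an elementary observation and not belabour it. Everything else is routine integration and bookkeeping of constants.
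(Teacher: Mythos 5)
Your proof is correct and takes essentially the same route as the paper: combine the volume-form equivalence (Lemma~\ref{le: equivalence volume forms}) with the pointwise comparison of the gradient terms, then integrate. The one genuine service you render is spelling out the Loewner-order inversion step passing from $c_1^{-1}h_\refer\leq h\leq c_1 h_\refer$ to the same two-sided bound for the inverse metrics acting on $T^*\S$; the paper's proof leaves this implicit (it just writes the middle display and invokes ``the assumptions''), so your explicit linear-algebra remark is a fair unpacking rather than a new idea. One small difference in bookkeeping: the paper extends the estimate from smooth $u$ to $u\in H^1(h)$ by density, whereas you observe that weak differentiability is metric-independent so the pointwise comparison of $|du|_h$ and $|du|_{h_\refer}$ already makes sense almost everywhere for $u\in H^1(h)$ and no density argument is needed. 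Both closings are fine; yours is marginally more direct.
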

\begin{proof}
  Due to Lemma~\ref{le: equivalence volume forms} and the assumptions, there is a constant $c>1$, depending only on $c_1$ and $n$, such that
  \begin{align*}
    c^{-1} \norm{u}_{H^1(h_\refer)}^2\leq  & \textstyle{\int}_\S [h^{ij}\nabla_i^h u\nabla^h_j u+u^2]\Vol_h\leq c \norm{u}_{H^1(h_\refer)}^2,
  \end{align*}
  where $\nabla^h$ denotes the Levi-Civita connection of $(\S,h)$. This proves (\ref{eq:Hoh Hohref}) for $u \in C^\infty(\S)$. Thus
  (\ref{eq:Hoh Hohref}) holds for $u \in H^1(h)$ since $C^\infty(\S)$ is dense in $H^1(h)$. 
\end{proof}

\begin{lemma}[The Laplace operator] \label{le: Laplace operator}
  Let $(\S, h_\refer)$ be a Riemannian manifold and let $\{E_{i}\}$ be a local orthonormal frame, defined on an open subset
  $U\subseteq \S$. For a second Riemannian metric $h$ on $\S$ and $u\in C^\infty(U)$, the Laplace operator is given by
  \begin{equation}\label{eq:DeltahitoEi}
    \Delta_hu = h^{ij}E_iE_ju + \left( E_i h^{ik} \right) E_ku + \tfrac{1}2 h^{ij}(E_l h_{ij}) h^{lk}E_ku + h^{ij}\lambda_{ki}^{k}E_ju,
  \end{equation}
  where $h_{ij}:=h(E_i,E_j)$; $h^{ij}$ are the components of the inverse of the matrix with components $h_{ij}$; and
  $\lambda_{ij}^{k}$ are defined by $[E_{i},E_{j}]=\lambda_{ij}^{k}E_{k}$. 
\end{lemma}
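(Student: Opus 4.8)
The statement to prove is Lemma~\ref{le: Laplace operator}, the formula (\ref{eq:DeltahitoEi}) for $\Delta_h u$ in terms of a local orthonormal frame $\{E_i\}$ for $h_\refer$.

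\textbf{Approach.} The plan is to start from the standard coordinate-invariant expression for the Laplacian, $\Delta_h u = h^{ij}(\bna^h_i\bna^h_j u)$, where $\bna^h$ is the Levi-Civita connection of $h$, rewrite this as $\Delta_h u = h^{ij}\big(E_iE_j u - \bGa^k_{ij}E_k u\big)$ with $\bGa^k_{ij}$ the connection coefficients of $h$ relative to the frame $\{E_i\}$ (so $\bna^h_{E_i}E_j = \bGa^k_{ij}E_k$), and then compute $h^{ij}\bGa^k_{ij}$ explicitly via the Koszul formula adapted to a non-coordinate frame. The key point is that for a general frame the Koszul formula carries extra structure-coefficient terms compared to the coordinate case, which is exactly what produces the last term $h^{ij}\lambda^k_{ki}E_j u$ in (\ref{eq:DeltahitoEi}). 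So the computation is: (i) express $\bGa^k_{ij}$; (ii) contract with $h^{ij}$; (iii) identify the resulting three correction terms with those in (\ref{eq:DeltahitoEi}); (iv) use $E_i(h^{ik}) + h^{ij}h^{lk}\cdot\tfrac12(\text{something}) = \dots$ type bookkeeping to match signs.

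\textbf{Key steps.} First I would recall the Koszul formula for a frame: for vector fields $X,Y,Z$,
\[
2\,h(\bna^h_X Y, Z) = X h(Y,Z) + Y h(X,Z) - Z h(X,Y) + h([X,Y],Z) - h([X,Z],Y) - h([Y,Z],X).
\]
Applying this with $X=E_i$, $Y=E_j$, $Z=E_l$ and writing $h([E_i,E_j],E_l) = \lambda^m_{ij}h_{ml}$ etc., one gets
\[
2 h_{ml}\bGa^m_{ij} = E_i h_{jl} + E_j h_{il} - E_l h_{ij} + \lambda^m_{ij}h_{ml} - \lambda^m_{il}h_{mj} - \lambda^m_{jl}h_{mi},
\]
hence $\bGa^k_{ij} = \tfrac12 h^{kl}\big(E_i h_{jl} + E_j h_{il} - E_l h_{ij} + \lambda^m_{ij}h_{ml} - \lambda^m_{il}h_{mj} - \lambda^m_{jl}h_{mi}\big)$. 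Second, I would contract with $h^{ij}$. In the symmetric-in-$(i,j)$ sum the two $\lambda^m_{il}h_{mj}$ and $\lambda^m_{jl}h_{mi}$ terms combine (after relabelling) into $-h^{ij}h^{kl}\lambda^m_{il}h_{mj} = -h^{ij}\lambda^k_{il}\,\delta$-type contraction; carefully tracking indices, $h^{ij}\cdot(-\tfrac12 h^{kl}\lambda^m_{il}h_{mj}) + h^{ij}\cdot(-\tfrac12 h^{kl}\lambda^m_{jl}h_{mi})$ collapses to $h^{ij}\lambda^k_{jl}\,(\text{with appropriate sign})$, while the $\tfrac12 h^{ij}h^{kl}\lambda^m_{ij}h_{ml}=\tfrac12 h^{ij}\lambda^k_{ij}=0$ by antisymmetry of $\lambda$ in its lower indices. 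Third, for the first three (metric-derivative) terms: $h^{ij}\cdot\tfrac12 h^{kl}(E_i h_{jl}+E_j h_{il}) = h^{ij}h^{kl}E_i h_{jl}$ by symmetry, and using $0 = E_i(h^{kl}h_{lj}) = (E_i h^{kl})h_{lj} + h^{kl}E_i h_{lj}$ one rewrites $h^{ij}h^{kl}E_i h_{jl} = -h^{ij}h_{lj}E_i h^{kl} = -E_i h^{ki}$, which — together with $-h^{ij}\bGa^k_{ij}E_k u$ carrying an overall minus sign — yields the term $+(E_i h^{ik})E_k u$ in (\ref{eq:DeltahitoEi}); the remaining $-\tfrac12 h^{ij}h^{kl}E_l h_{ij}$ term, after the overall sign flip, becomes $+\tfrac12 h^{ij}(E_l h_{ij})h^{lk}E_k u$. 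Finally, assembling $\Delta_h u = h^{ij}E_iE_j u - h^{ij}\bGa^k_{ij}E_k u$ with these identifications gives exactly (\ref{eq:DeltahitoEi}), and a density/smoothness remark handles that the identity is pointwise on $U$ for $u\in C^\infty(U)$.

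\textbf{Main obstacle.} There is no deep difficulty here — the lemma is a bookkeeping computation — so the only real hazard is sign and index management: making sure the antisymmetry of $\lambda^k_{ij}$ is invoked at the right place to kill the $\tfrac12 h^{ij}\lambda^k_{ij}$ term, getting the surviving structure-coefficient term into the precise form $h^{ij}\lambda^k_{ki}E_j u$ (note the index pattern $\lambda^k_{ki}$, which is the contracted structure coefficient, analogous to the contracted Christoffel symbol), and correctly threading the overall minus sign from $-\bGa^k_{ij}E_k u$ through the Leibniz manipulation $(E_i h^{kl})h_{lj} = -h^{kl}E_i h_{lj}$. I would double-check the final formula by specialising to a coordinate frame (where all $\lambda^k_{ij}=0$), in which case (\ref{eq:DeltahitoEi}) must reduce to the familiar $\Delta_h u = h^{ij}\partial_i\partial_j u + (\partial_i h^{ik})\partial_k u + \tfrac12 h^{ij}(\partial_l h_{ij})h^{lk}\partial_k u = \tfrac{1}{\sqrt{|h|}}\partial_i(\sqrt{|h|}\,h^{ij}\partial_j u)$, which serves as a consistency check on the constants.
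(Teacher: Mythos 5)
Your proposal is correct and follows essentially the same route as the paper: write $\Delta_h u = h^{ij}E_iE_j u - h^{ij}\bGa^k_{ij}E_k u$, compute $\bGa^k_{ij}$ via the Koszul formula in the non-coordinate frame, contract with $h^{ij}$ so the antisymmetric $\lambda^m_{ij}h_{ml}$ term dies, and convert $h^{ij}h^{kl}E_ih_{jl}$ into $-E_ih^{ki}$ by differentiating $h^{kl}h_{jl}=\de^k_j$. The one spot you hedge ("collapses... with appropriate sign") does work out: the two surviving structure-coefficient contractions each give $\tfrac12 h^{kl}\lambda^m_{ml}E_ku$, summing to $h^{kl}\lambda^m_{ml}E_ku=h^{ij}\lambda^k_{ki}E_ju$, exactly the last term of (\ref{eq:DeltahitoEi}).
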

\begin{remark}
  In the applications of this observation, it is convenient to note that
  \begin{equation}\label{eq:Deltahftt}
    h^{ij}E_iE_ju + \left( E_i h^{ik} \right) E_ku=E_{i}(h^{ij}E_{j}u).
  \end{equation}
\end{remark}
\begin{proof}
Note that $-\Delta_h= - h^{ij} (\n^h)^2_{E_i, E_j}= - h^{ij}E_iE_j + h^{ij} \n^h_{E_i}E_j$.
Moreover,
\begin{align*}
	2 h(\n^h_{E_i}E_j, E_l)
		=& E_i h_{jl} + E_j h_{il} - E_l h_{ij} - \lambda_{ji}^mh_{ml} - \lambda_{il}^mh_{mj} - \lambda_{jl}^mh_{mi},
\end{align*}
due to the Koszul formula. Next, $\n^h_{E_i}E_j=h(\n^h_{E_i}E_j, E_l) h^{lk}E_k$. Thus
\begin{align*}
	-\Delta_h
		= & - h^{ij}E_iE_j + \tfrac{1}{2}h^{ij} \left( 2E_i h_{jl} - E_lh_{ij} \right) h^{lk}E_k 
	- \tfrac{1}{2}h^{ij} \left( \lambda_{il}^mh_{mj}
        + \lambda_{jl}^mh_{mi}\right) h^{lk}E_k \\
        = & - h^{ij}E_iE_j - \left( E_i h^{ik} \right) E_k - \tfrac{1}{2}h^{ij} (E_l h_{ij}) h^{lk}E_k - h^{lk} \lambda_{ml}^m E_k.
\end{align*}
This concludes the proof.
\end{proof}

\subsection{The space $H^{-1}$}

Next we consider the space $H^{-1}(h):= \left( H^1(h) \right)^*$ with norm
\begin{equation}\label{eq:Hmonormdef}
  \norm{ \lambda }_{H^{-1}(h)} := \sup\Big\{\tfrac{\abs{\lambda(v)}}{\norm{v}_{H^1(h)}}\, \Big|\, v \in H^1(h),\  v\neq 0\Big\}.
\end{equation}
There is a natural embedding $\iota_h: L^2(h)\hookrightarrow H^{-1}(h)$ defined by 
\[
\iota_h[u](v) := \textstyle{\int}_\S  v u d\mu_h.
\]
If $X$ is a smooth vector field on $\S$ and $u\in C^{\infty}(\S)$, it is of interest to estimate
\begin{equation}\label{eq:iotaXunormdef}
  \|\iota_h[Xu]\|_{H^{-1}(h)}=\sup\Big\{\tfrac{\abs{\iota_h[Xu](v)}}{\norm{v}_{H^1(h)}}\, |\, v \in H^1(h),\  v\neq 0\Big\}.
\end{equation}
However, assuming $v$ to be $C^{1}$ (cf. (\ref{eq:partial int vf}) below),
\begin{equation}\label{eq:lambda Xu v}
  \iota_h[Xu](v)=\textstyle{\int}_{\S} vXu\Vol_{h}
  =-\textstyle{\int}_{\S}vu(\rodiv_{h}X)\Vol_{h}-\textstyle{\int}_{\S}(Xv)u\Vol_{h}.
\end{equation}
Due to this equality we can define the left hand side via the far right hand side, given $u\in L^2(h)$, $v\in H^1(h)$ and
a $C^1$-vector field $X$. We therefore, for $u\in L^2(h)$ and a $C^1$-vector field $X$ define $X[u]\in H^{-1}(h)$ by
\begin{equation}\label{eq:X brackets u def}
  X[u](v):=-\textstyle{\int}_{\S}vu(\rodiv_{h}X)d\mu_{h}-\textstyle{\int}_{\S}(Xv)ud\mu_{h}.
\end{equation}
This means that 
\begin{equation}\label{eq:lambdaXu}
  \begin{split}
    |X[u](v)| \leq & \|\rodiv_{h}X\|_{C^{0}(h)}\|v\|_{L^{2}(h)}\|u\|_{L^{2}(h)}+\|Xv\|_{L^{2}(h)}\|u\|_{L^{2}(h)}\\
    \leq & [\|\rodiv_{h}X\|_{C^{0}(h)}^{2}+\|X\|_{C^{0}(h)}^{2}]^{1/2}\|v\|_{H^{1}(h)}\|u\|_{L^{2}(h)},
  \end{split}
\end{equation}
where we use the notation introduced in (\ref{eq:Ck g def}). Combining (\ref{eq:lambdaXu}) with (\ref{eq:Hmonormdef}) yields
\begin{equation}\label{eq:XuHmonormest}
  \|X[u]\|_{H^{-1}(h)}\leq [\|\rodiv_{h}X\|_{C^{0}(h)}^{2}+\|X\|_{C^{0}(h)}^{2}]^{1/2}\|u\|_{L^{2}(h)}.
\end{equation}
Combining the above observations yields the conclusion that, given a $C^1$ vector field $X$ and an $L^2$-function $u$, we can think of
$Xu$ as an element of $H^{-1}(h)$.

\subsection{Estimating elliptic second order operators in $H^{-1}$}
Next, it is of interest to note that, given $u\in H^1(h)$ and $L:=-\Delta_h+a$, $Lu$ can be interpreted as an element of $H^{-1}(h)$. Assume, to this end,
first that $u$ and $v$ are smooth functions on $\S$. Then
\begin{equation}\label{eq:-Delta plus a as element of H minus one}
  \textstyle{\int}_\S vLu\Vol_h=\textstyle{\int}_\S [h(\grad_h v,\grad_hu)+avu]\Vol_h
  =\textstyle{\int}_\S [du(\grad_h v)+avu]\Vol_h.
\end{equation}
In particular, it is clear that the right hand side of (\ref{eq:-Delta plus a as element of H minus one}) is meaningful for
$u,v\in H^1(h)$. Given $u\in H^1(h)$, we can therefore define $L[u]\in H^{-1}(h)$ by 
\[
  L[u](v):=\textstyle{\int}_\S [h(\grad_h v,\grad_hu)+avu]d\mu_h. 
\]
Assume that there is a constant $c_2>1$ such that $c_2^{-1}\leq a\leq c_2$ on $\S$. Then
\[
c_2^{-1}\|u\|_{H^1(h)}^2\leq L[u](u)\leq \|L[u]\|_{H^{-1}(h)}\|u\|_{H^1(h)},
\]
so that
\[
\|u\|_{H^1(h)}\leq c_2\|L[u]\|_{H^{-1}(h)}.
\]
Note, however, that $L[u]$ depends on $h$ in two ways. First, $\Delta_h$ is of course the Laplace operator associated with $h$. However, there
is a second dependence on $h$ arising from the choice of the volume form $\Vol_h$ in (\ref{eq:-Delta plus a as element of H minus one}).
The second dependence on $h$ is, in a sense, artificial. One could also consider, for $u,v\in C^\infty(\S)$,
\begin{equation*}
  \begin{split}
    \textstyle{\int}_\S vLu\Vol_{h_{\refer}} = & \textstyle{\int}_\S [h(\grad_h (\psi_h v),\grad_hu)+a\psi_h vu]\Vol_h\\
    = & \textstyle{\int}_\S [\psi_h^{-1}h(\grad_h (\psi_h v),\grad_hu)+avu]\Vol_{h_{\refer}},
  \end{split}
\end{equation*}
where $\psi_h$ is defined by the condition that $\Vol_{h_{\refer}}=\psi_h\Vol_h$. Given $u,v\in H^1(h_{\refer})$, we can thus define
\begin{equation*}
  \begin{split}
    L_{\refer}[u](v) := & \textstyle{\int}_\S [\psi_h^{-1}h(\grad_h (\psi_h v),\grad_hu)+avu]d\mu_{h_{\refer}}\\
    = & \textstyle{\int}_\S [du(\psi_h^{-1}\grad_h (\psi_h v))+avu]d\mu_{h_{\refer}}.
  \end{split}
\end{equation*}
Thus
\begin{equation*}
  \begin{split}
    |L_{\refer}[u](v)| \leq &  \|\nabla^{h_{\refer}}u\|_{L^2(h_{\refer})}\|\psi_h^{-1}\grad_h(\psi_h v)\|_{L^2(h_{\refer})}
    +c_2\|v\|_{L^2(h_{\refer})}\|u\|_{L^2(h_{\refer})}\\
    \leq &  c_1\|\nabla^{h_{\refer}}u\|_{L^2(h_{\refer})}\|\psi_h^{-1}\grad_{h_{\refer}}(\psi_h v)\|_{L^2(h_{\refer})}
    +c_2\|v\|_{L^2(h_{\refer})}\|u\|_{L^2(h_{\refer})}\\
    \leq & C\|h\|_{W^{1,\infty}(h_{\refer})}\|v\|_{H^1(h_{\refer})}\|u\|_{H^1(h_{\refer})},
  \end{split}
\end{equation*}
where $C$ only depends on $c_1$, $c_2$ and $n$. This means that
\begin{equation}\label{eq: H1 invertibility}
  \|L_{\refer}[u]\|_{H^{-1}(h_{\refer})}\leq C\|h\|_{W^{1,\infty}(h_{\refer})}\|u\|_{H^1(h_{\refer})}.
\end{equation}
Next, note that
\begin{equation}\label{eq:Lref uu}
  \begin{split}
    L_{\refer}[u](\psi_h^{-1} u) = & \textstyle{\int}_\S [\psi_h^{-1}h(\grad_h u,\grad_hu)+\psi_h^{-1}au^2]d\mu_{h_{\refer}}.
  \end{split}
\end{equation}
There are thus constants $C$, depending only on $n$, $c_1$ and $c_2$ such that
\begin{equation*}
  \begin{split}
    \|u\|_{H^{1}(h_{\refer})}^2 \leq & CL_{\refer}[u](\psi_h^{-1} u)\leq C\|L_{\refer}[u]\|_{H^{-1}(h_{\refer})}\|\psi_h^{-1} u\|_{H^1(h_{\refer})}\\
    \leq & C\|h\|_{W^{1,\infty}(h_{\refer})}\|L_{\refer}[u]\|_{H^{-1}(h_{\refer})}\|u\|_{H^1(h_{\refer})}.
  \end{split}  
\end{equation*}
Thus
\begin{equation}\label{eq:u Hone bd ito Lref u}
  \|u\|_{H^{1}(h_{\refer})}\leq C\|h\|_{W^{1,\infty}(h_{\refer})}\|L_{\refer}[u]\|_{H^{-1}(h_{\refer})},
\end{equation}
where $C$ is a constant depending only on $n$, $c_1$ and $c_2$. 

\subsection{Elliptic regularity}
Next, we recall elliptic regularity. 
\begin{lemma}\label{lemma:elliptic reg}
  Let $(\S,h)$ be a closed, connected and oriented Riemannian manifold, $a\in C^\infty(\S)$ and $L:=-\Delta_h+a$. Assume $u\in H^1(h)$ to be such
  that $L[u]=\iota_{h}[f]$, where $f\in H^k(h)$ and $k\in\nn{}_0$. Then $u\in H^{k+2}(h)$. Moreover, $Lu=f$ a.e. 
\end{lemma}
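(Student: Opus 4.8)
\textbf{Proof plan for Lemma~\ref{lemma:elliptic reg}.} The approach is the standard bootstrap via difference quotients (or, equivalently, commuting the equation with a basis of vector fields), reducing regularity on a closed manifold to interior estimates in local coordinates and then iterating. I will first establish the base case $k=0$: given $u\in H^1(h)$ with $L[u]=\iota_h[f]$, $f\in L^2(h)$, I claim $u\in H^2(h)$. For this I cover $\S$ by finitely many coordinate charts $(\sfx_i,U_i)$ with subordinate partition of unity, and work in a fixed chart. In local coordinates the weak equation reads $\int (h^{jm}\partial_j u\,\partial_m v + (\text{lower order})\,v)\,\Vol_h = \int f v\,\Vol_h$ for all test functions $v$; applying difference quotients $\Delta^\tau_l$ (in a direction $l$) as test functions and using the uniform ellipticity $h^{jm}\xi_j\xi_m\ge c_1^{-1}|\xi|^2$, together with the smoothness of $h$ and $a$, yields a uniform bound on $\|\Delta^\tau_l\nabla u\|_{L^2_{\loc}}$, hence $\partial_l\partial_m u\in L^2_{\loc}$ for all $l,m$. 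Summing over the partition of unity gives $u\in H^2(h)$. Alternatively, and more in the spirit of the frame-based conventions of the paper, one commutes $L$ with a local orthonormal frame vector field $E_l$: formally $L(E_l u) = E_l f + [\text{commutator terms involving at most }\nabla^2 u,\ \nabla u,\ \nabla h,\ \nabla a]$, all of which lie in $H^{-1}$ once $u\in H^1$, so the $H^1$-solvability estimate \eqref{eq:u Hone bd ito Lref u} (applied with a cutoff) upgrades $E_l u\in H^1$, i.e.\ $u\in H^2$.

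Once the base case is in hand, I proceed by induction on $k$. Suppose the claim holds up to $k-1$, and assume $f\in H^k(h)$. Then by the inductive hypothesis $u\in H^{k+1}(h)$. Pick a local orthonormal frame $\{E_i\}$ on a chart $U$ and a cutoff $\zeta\in C_c^\infty(U)$. Commuting the equation with $E_l$ and multiplying by $\zeta$, one gets that $\zeta E_l u$ is an $H^1$ weak solution of an equation of the form $L[\zeta E_l u] = \iota_h[g]$ where $g$ is a sum of: $\zeta E_l f$ (in $H^{k-1}$ since $f\in H^k$); terms $(\nabla^m h)\cdot\nabla^2 u$ and $(\nabla a)\cdot u$ and $(\nabla\zeta)$-terms, each of which involves at most $k$ derivatives distributed so that they lie in $H^{k-1}(h)$ given $u\in H^{k+1}(h)$ and the smoothness of $h,a$. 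By the inductive hypothesis applied to the equation for $\zeta E_l u$, we conclude $\zeta E_l u\in H^{k+1}(h)$; since this holds for every chart, cutoff, and frame direction $l$, and $u\in H^{k+1}$ already, we get $u\in H^{k+2}(h)$. (Here I am using crucially that $h,a\in C^\infty(\S)$, so there is no loss from differentiating the coefficients.)

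Finally, to see that $Lu=f$ a.e.: since $u\in H^{k+2}(h)\subseteq H^2(h)$, the expression $Lu=-\Delta_h u+au$ is a well-defined element of $L^2(h)$ obtained by taking genuine (weak, now classical a.e.) second derivatives. Integration by parts against an arbitrary $v\in C^\infty(\S)$ gives $\int v\,Lu\,\Vol_h = \int [h(\grad_h v,\grad_h u)+avu]\,\Vol_h = L[u](v) = \iota_h[f](v) = \int vf\,\Vol_h$. Since $C^\infty(\S)$ is dense in $L^2(h)$, this forces $Lu=f$ a.e.

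\textbf{Main obstacle.} The routine-looking but genuinely technical point is the induction step bookkeeping: verifying that after commuting $L$ with $E_l$ and inserting the cutoff $\zeta$, every term in the resulting source $g$ actually lies in $H^{k-1}(h)$ (not $H^{k-2}$). This requires carefully counting, using the product/Leibniz rule and the formula \eqref{eq:DeltahitoEi} for $\Delta_h$ in a frame, how many derivatives fall on $u$ versus on the coefficients $h,a$ and the cutoff, and confirming that the worst term — $(E_m h^{jp})\,E_p E_l E_\bfI u$ type expressions — can be controlled because $u\in H^{k+1}$ already gives $E_pE_lE_\bfI u\in L^2$ for $|\bfI|\le k-1$. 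The ellipticity and the $C^\infty$ regularity of the coefficients make each such verification elementary, but the combinatorics must be organized correctly; everything else (difference quotients, density, the $H^1$ a priori estimate) is standard and can be quoted from the $H^1$-solvability estimates already established in this appendix.
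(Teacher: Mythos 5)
Your proposal is correct and follows essentially the same route as the paper: localize with a cutoff function, reduce to a standard interior elliptic regularity statement (the paper derives the equation for $\zeta u$ and cites Evans' Theorem~2, p.~332, which is itself proved by difference quotients), combine over a partition of unity, and iterate/induct to reach $H^{k+2}$, with $Lu=f$ a.e.\ then following by integration by parts against test functions and density. The paper is terser about the inductive step, saying only ``we can iterate this argument,'' whereas you spell out the commutator bookkeeping more explicitly, but the underlying argument is the same.
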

\begin{remark}
  In this lemma, we do not assume $a$ to be strictly positive. Moreover, analogous conclusions hold in case $u$ satisfies
  $L[u]+X[u]=\iota_h[f]$ for some smooth vector field $X$.
\end{remark}
\begin{proof}
  Let $(\sfx,U)$ be local coordinates. Let $\zeta\in C^\infty_c(U)$ and consider $L[\zeta u]$. Given $v\in H^1(h)$,
  \begin{equation}\label{eq:L zeta u zeta v diff}
    \begin{split}
      L[\zeta u](v)-L[u](\zeta v) = &
      \textstyle{\int}_\S [d(\zeta u)(\grad_h v)-du(\grad_h (\zeta v))]d\mu_{h}
    \end{split}
  \end{equation}
  Let $f_1:=-du(\grad_h\zeta)$. Then $f_1\in L^2(h)$ and (\ref{eq:L zeta u zeta v diff}) can be written
  \[
  L[\zeta u](v)-L[u](\zeta v) = \iota_{h}[f_1](v)+\textstyle{\int}_{\S}ud\zeta(\grad_h v)d\mu_{h}.
  \]
  In order to rewrite the second term on the right hand side, note that
  \begin{equation*}
    \begin{split}
      ud\zeta(\grad_h v) = & uh(\grad_h\zeta,\grad_h v)=udv(\grad_h\zeta)\\
      = & u(\grad_h\zeta)(v)=(\grad_h\zeta)(uv)-v(\grad_h\zeta)(u).
    \end{split}
  \end{equation*}
  On the other hand, due to Cartan's magic formula (i.e., $\ml_X=d\circ\iota_X+\iota_X\circ d$), Stokes' theorem and
  the fact that $\ml_X\Vol_h=(\rodiv_h X)\Vol_h$, 
  \begin{equation}\label{eq:partial int vf}
    \textstyle{\int}_\S X(\psi)\Vol_h=-\textstyle{\int}_\S\psi (\rodiv_hX) \Vol_h
  \end{equation}
  for all $\psi\in C^\infty(\S)$. In particular, we thus have
  \[
  \textstyle{\int}_\S (\grad_h\zeta)(uv)d\mu_h =-\textstyle{\int}_\S uv\cdot\Delta_h\zeta d\mu_h.
  \]
  Introducing
  \[
  f_2:=-2du(\grad_h\zeta)-u\Delta_h\zeta,
  \]
  it follows that
  \[
  L[\zeta u](v)=L[u](\zeta v)+\iota_h[f_2](v)=\iota_h[f](\zeta v)+\iota_h[f_2](v)=\iota_h[\zeta f+f_2](v).
  \]
  Due to this equality, which holds for all $v\in H^1(h)$, it follows that $(\zeta u)\circ\sfx^{-1}$ is a weak solution
  to an elliptic problem on $\sfx(U)$. Moreover, this weak solution belongs to $H^1(\sfx(U))$. Due to \cite[Theorem~2, p.~332]{evans},
  it follows that $(\zeta u)\circ\sfx^{-1}\in H^2(\sfx(U))$. Thus $\zeta u\in H^2(h)$. Combining this observation with a partition
  of unity, it follows that $u\in H^2(h)$, so that $Lu$ is well defined and equals $f$. If $k\geq 1$, we can iterate this argument
  in order to obtain the desired conclusion. 
\end{proof}

\subsection{Local frames and commutators}\label{ssection:loc fr comm}
Let $(\S,h_{\refer})$ be a closed, connected and oriented Riemannian manifold and let $\{E_i\}$ be a local orthonormal frame with respect to
$h_\refer$, defined on an open subset $U\subseteq\S$ as in the statement of Lemma~\ref{le: Laplace operator}. Fix a $\zeta\in C^\infty_c(U)$ and
a compact set $K\subset U$ such that $\supp \zeta\subset K$. Assume, moreover, that $0\leq\zeta\leq 1$. Then
$E_{\zeta,i}:=\zeta E_i$ is a smooth vector field on $\S$ and $\|E_{\zeta,i}\|_{C^{0}(h_{\refer})}\leq 1$. Moreover, 
\begin{equation}\label{eq:rodiv hrefer Ezetai}
  \rodiv_{h_{\refer}}E_{\zeta,i}=\textstyle{\sum}_{k}h_{\refer}(\nabla^{h_{\refer}}_{E_{k}}E_{\zeta,i},E_{k})
  =\textstyle{\sum}_{k}h_{\refer}([E_{k},E_{\zeta,i}],E_{k})=E_i(\zeta)+\sum_{k}\zeta\lambda^{k}_{ki},
\end{equation}
where the structure coefficients $\lambda_{ij}^{k}$ are defined by $[E_{i},E_{j}]=\lambda_{ij}^{k}E_{k}$. Combining these observations with
(\ref{eq:XuHmonormest}) (in which we replace $h$ by $h_{\refer}$) yields
\begin{equation}\label{eq:EiuHmoest}
  \|E_{\zeta,i}[u]\|_{H^{-1}(h_{\refer})}\leq C\|u\|_{L^{2}(h_{\refer})},
\end{equation}
where $C$ only depends on the frame $\{E_{i}\}$, $n$, $\zeta$ and $K$.

\begin{lemma}[$H^{-1}$-bounds on the commutator] \label{le: commutator Delta X}
  Let $(\S, h_\refer)$ be a closed, connected and oriented Riemannian manifold of dimension $n$. Let $U$, $\{E_{i}\}$, $\zeta$ and $K$ be as
  described prior to the statement of the lemma. Assume $h$ to be a Riemannian metric on $\S$ such that $c_1^{-1} h_\refer\leq h\leq c_1 h_\refer$
  for some constant $c_1 > 1$. Then there is a constant $C$, depending only on $c_1$, $n$, $\{E_i\}$, $\zeta$ and $K$, such that
  \begin{equation}\label{eq:DeltahEicommmod}
    \begin{split}
      \|L_{\refer}[E_{\zeta,i} u]-E_{\zeta,i}[Lu]\|_{H^{-1}}
      \leq & C \big( \norm{h}_{W^{2, \infty}} + \norm{h}_{W^{1, \infty}}^2
      +\|a\|_{W^{1,\infty}}\big)\norm{u}_{H^1}
    \end{split}
  \end{equation}
  for all $u\in H^2$. 
\end{lemma}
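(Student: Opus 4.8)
\textbf{Proof proposal for Lemma~\ref{le: commutator Delta X}.} The plan is to reduce the estimate to the elementary $H^{-1}$ and $L^2$ bounds already established in this appendix, by expanding both $L_{\refer}[E_{\zeta,i}u]$ and $E_{\zeta,i}[Lu]$ against a test function $v\in H^1$ and carefully tracking the terms that do not cancel. First I would recall the explicit coordinate-free formula for the Laplacian provided by Lemma~\ref{le: Laplace operator}, namely
\begin{equation*}
  \Delta_hu = E_k(h^{kl}E_lu) + \tfrac{1}{2}h^{ij}(E_mh_{ij})h^{ml}E_lu + h^{ij}\lambda_{mi}^{m}E_ju,
\end{equation*}
where I have used the identity (\ref{eq:Deltahftt}). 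So $Lu = -\Delta_h u + au$ is a first-order divergence-form operator plus lower-order terms with coefficients built polynomially from $h^{ij}$, $E_mh_{ij}$, the structure coefficients $\lambda$, and $a$. The key point is that $E_{\zeta,i}=\zeta E_i$ and $E_k(h^{kl}E_l(\cdot))$ are both first-order, so $[E_{\zeta,i},\Delta_h]$ is a \emph{second-order} operator, but every second-order term in it has at least one derivative falling on a coefficient of $h$; hence it can be written as $E_k(b^{kl}E_l u) + c^lE_lu + (\text{terms with }u)$ where $b^{kl}$ involves one derivative of $h$ times $\zeta$, and $c^l$, etc., involve up to two derivatives of $h$, one derivative of $a$, and derivatives of $\zeta$.

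The main computational step is therefore the expansion of $L_{\refer}[E_{\zeta,i}u](v) - E_{\zeta,i}[Lu](v)$. I would write $L_{\refer}[w](v) = \int_\S [dw(\psi_h^{-1}\grad_h(\psi_h v)) + awv]\,d\mu_{h_\refer}$ as in the definition preceding (\ref{eq:Lref uu}), and $E_{\zeta,i}[w](v)$ via (\ref{eq:X brackets u def}) with $h$ replaced by $h_\refer$, i.e. $E_{\zeta,i}[w](v) = -\int_\S vw(\rodiv_{h_\refer}E_{\zeta,i})\,d\mu_{h_\refer} - \int_\S (E_{\zeta,i}v)w\,d\mu_{h_\refer}$. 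For smooth $u,v$ both pairings reduce to honest integrals of $v\,(\text{second-order operator in }u)$, and after integrating by parts once to put everything in divergence form, the leading (two-derivatives-on-$u$) pieces cancel exactly — this is the defining property of the commutator — leaving a bilinear form in $v$ and $u$ in which $v$ appears with at most one derivative and $u$ with at most one derivative, with coefficients bounded by a polynomial in $\|h\|_{W^{2,\infty}}$, $\|h\|_{W^{1,\infty}}$ (entering quadratically through products $h^{ij}E h_{kl}\cdot h^{mn}$ and through $\psi_h^{-1}E\psi_h$, which is controlled by $\|h\|_{W^{1,\infty}}$ and $\|h\|_{W^{1,\infty}}^2$ after differentiating $\det$), and $\|a\|_{W^{1,\infty}}$. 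Estimating this bilinear form by Cauchy–Schwarz, using $\|\psi_h^{-1}\grad_h(\psi_h v)\|_{L^2}\le C\|h\|_{W^{1,\infty}}\|v\|_{H^1}$ (the computation already done just before (\ref{eq: H1 invertibility})) and Lemma~\ref{le: equivalence volume forms}/Lemma~\ref{le: equivalence Sobolev norms} to pass between $h$- and $h_\refer$-norms, and finally dividing by $\|v\|_{H^1(h_\refer)}$ and taking the supremum, yields (\ref{eq:DeltahEicommmod}). The extension from smooth $u$ to $u\in H^2$ is by density, noting that both sides of (\ref{eq:DeltahEicommmod}) are continuous in $u$ with respect to the $H^2$-topology: indeed $u\mapsto Lu$ is bounded $H^2\to L^2\hookrightarrow H^{-1}$, $u\mapsto E_{\zeta,i}u$ is bounded $H^2\to H^1$, and $L_{\refer}[\cdot]$, $E_{\zeta,i}[\cdot]$ are bounded on $H^1$ resp. $L^2$ into $H^{-1}$ by (\ref{eq: H1 invertibility}) and (\ref{eq:XuHmonormest}).

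The step I expect to be the main obstacle is the bookkeeping in the exact cancellation of the second-order terms, together with confirming that the surviving coefficients genuinely only require $\|h\|_{W^{2,\infty}}$ (and not, say, $W^{3,\infty}$) and $\|a\|_{W^{1,\infty}}$. Concretely: when $E_{\zeta,i}$ is commuted past the divergence-form principal part $E_k(h^{kl}E_l(\cdot))$, one gets $E_k\big([E_{\zeta,i},h^{kl}E_l]u\big) + [E_{\zeta,i},E_k]\big(h^{kl}E_lu\big)$; the first of these, after being tested against $v$ and integrated by parts, contributes a term $\int (E_kv)\,(\text{one deriv of }h)\,(\text{one deriv of }u)$ which is fine, but one must be careful that the stray $E_k$ acting on $\zeta$-weighted, $h$-weighted coefficients does not produce a second derivative of $h$ multiplying a derivative of $u$ \emph{without} a compensating integration by parts onto $v$ — a naive expansion would suggest a term needing $\|h\|_{W^{2,\infty}}\|u\|_{H^1}$, which is exactly what the right-hand side of (\ref{eq:DeltahEicommmod}) allows, so in fact this is acceptable; the genuine care is in not generating anything worse. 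I would handle this by systematically writing every term in one of the three normal forms $\int v\,P\,u\,d\mu_{h_\refer}$, $\int (E_jv)\,Q\,u\,d\mu_{h_\refer}$, $\int (E_jv)\,R\,E_lu\,d\mu_{h_\refer}$ with $P$ bounded by $(\|h\|_{W^{2,\infty}}+\|h\|_{W^{1,\infty}}^2+\|a\|_{W^{1,\infty}})$, $Q,R$ bounded by $(\|h\|_{W^{1,\infty}}+\|h\|_{W^{1,\infty}}^2)$, and then applying Cauchy–Schwarz and the norm equivalences. Once the terms are organized this way the estimate is immediate; the content of the lemma is the organization, and I would present it with the principal-part cancellation spelled out and the lower-order bookkeeping left to the reader, in keeping with the style of the surrounding appendix.
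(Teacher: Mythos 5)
Your proposal follows essentially the same route as the paper's proof: for smooth $u,v$ reduce $L_{\refer}[E_{\zeta,i}u](v)-E_{\zeta,i}[Lu](v)$ to the integral $\int_\S(LE_{\zeta,i}u-E_{\zeta,i}Lu)v\,d\mu_{h_\refer}$, expand the commutator using the explicit formula for $\Delta_h$ from Lemma~\ref{le: Laplace operator}, integrate by parts once to put one derivative on $v$, estimate the resulting bilinear form by Cauchy--Schwarz, and pass to general $u\in H^2$ by density. The paper carries out precisely the bookkeeping you describe — in particular it organizes the commutator into terms of the form $\varphi E_l(\psi E_m u)$, applies (\ref{eq:partial int vf}) to move one derivative onto $v$, and verifies that the surviving coefficients are controlled by $\norm{h}_{W^{2,\infty}}$, $\norm{h}_{W^{1,\infty}}^2$ and $\norm{a}_{W^{1,\infty}}$ exactly as you anticipate.
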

\begin{remark}
  In the lemma, it is understood that $L=-\Delta_h+a$, where $a\in C^\infty(\S)$, and that we use the metric $h_{\refer}$ to define all the spaces
  and corresponding norms. Moreover, in the definition of $E_{\zeta,i}[Lu]$, we take it for granted that $h$ is replaced by $h_{\refer}$ on the right hand
  side of (\ref{eq:X brackets u def}).
\end{remark}
\begin{proof}
  Compute, for $u,v\in C^\infty(\S)$, 
  \begin{equation}\label{eq:Lref Ezetai comm}
    \begin{split}
      L_{\refer}[E_{\zeta,i} u](v)-E_{\zeta,i}[Lu](v) = & \textstyle{\int}_\S(LE_{\zeta,i}u-E_{\zeta,i}Lu)vd\mu_{h_{\refer}}.
    \end{split}
  \end{equation}
  On the other hand
  \[
  LE_{\zeta,i}u-E_{\zeta,i}Lu=E_{\zeta,i}\Delta_hu-\Delta_hE_{\zeta,i}u-E_{\zeta,i}(a)u.
  \]
  The contribution to (\ref{eq:Lref Ezetai comm}) arising from the last term on the right hand side can be estimated by
  \[
  C\|a\|_{W^{1,\infty}}\norm{u}_{L^2}\norm{v}_{L^2},
  \]
  where $C$ is numerical; again, all norms are defined using $h_\refer$. Next, we focus on $[E_{\zeta,i},\Delta_h]$. Let us begin by considering the
  commutator with the right hand side of (\ref{eq:Deltahftt}):
  \begin{equation}\label{eq:first terms in Delta h comm}
    \begin{split}
      & E_{\zeta,i}[E_{l}(h^{lj}E_{j}u)]-E_{l}(h^{lj}E_{j}E_{\zeta,i}u)\\
      = & \zeta\lambda_{il}^{m}E_{m}(h^{lj}E_{j}u)+\zeta E_{l}(E_{i}(h^{lj}E_{j}u))-E_l(h^{lj}E_{j}E_{\zeta,i}u)\\
      = & \zeta\lambda_{il}^{m}E_{m}(h^{lj}E_{j}u)+\zeta E_{l}(E_{i}(h^{lj})E_{j}u+h^{lj}[E_{i},E_{j}]u)
      +\zeta E_{l}(h^{lj}E_{j}E_{i}u)-E_l(h^{lj}E_{j}E_{\zeta,i}u).
    \end{split}
  \end{equation}
  The first two terms on the far right hand side can be written as a sum of terms of the form $\varphi E_l(\psi E_mu)$. Here, and in what
  follows, we assume that either $\supp\varphi\subset K$ or $\supp\psi\subset K$. The corresponding contribution to (\ref{eq:Lref Ezetai comm}) is
  \begin{equation*}
    \begin{split}
      \textstyle{\int}_\S \varphi E_l(\psi E_mu)vd\mu_{h_{\refer}}
      = & \textstyle{\int}_\S  E_l(v\varphi\psi E_mu)d\mu_{h_{\refer}}
      -\textstyle{\int}_\S  \psi E_mu\cdot E_l(\varphi v)d\mu_{h_{\refer}}.
    \end{split}
  \end{equation*}
  Recalling (\ref{eq:partial int vf}) (with $h$ replaced by $h_{\refer}$) and (\ref{eq:rodiv hrefer Ezetai}) (with $\zeta=1$), it follows that
  \begin{equation}\label{eq:typexempel comm}
    \big|\textstyle{\int}_\S \varphi E_l(\psi E_mu)vd\mu_{h_{\refer}}\big|
    \leq C\|\varphi\|_{W^{1,\infty}}\|\psi\|_{L^{\infty}}\|u\|_{H^1}\|v\|_{H^1},
  \end{equation}
  where $C$ only depends on $\{E_i\}$ and $K$. The contribution of the first two terms on the far right hand side of
  (\ref{eq:first terms in Delta h comm}) to the left hand side of (\ref{eq:Lref Ezetai comm}) can thus be estimated by 
  \begin{equation}\label{eq:typ right hs}
    C\|h\|_{W^{1,\infty}}\|u\|_{H^1}\|v\|_{H^1},
  \end{equation}
  where $C$ only depends on $c_1$, $n$, $\{E_i\}$, $\zeta$ and $K$; here we used the fact that $|h^{ij}|\leq c_1$ and the fact that
  $\|h\|_{L^\infty}$ is bounded from below by a constant depending only on $c_1$ and $n$. Next, the last two terms on the far right
  hand side of
  (\ref{eq:first terms in Delta h comm}) can be written
  \[
  -E_l(\zeta)h^{lj}E_{j}E_{i}u-E_l(h^{lj}E_{j}(\zeta)E_{i}u).
  \]
  The corresponding contribution to the left hand side of (\ref{eq:Lref Ezetai comm}) can be estimated by appealing to
  (\ref{eq:typexempel comm}). This results in terms of the form (\ref{eq:typ right hs}).

  Next, consider the last term on the right hand side of (\ref{eq:DeltahitoEi}). The corresponding commutator is
  \[
  -h^{lj} \lambda_{kl}^{k} E_jE_{\zeta,i}u+E_{\zeta,i}(h^{lj}\lambda_{kl}^{k}E_ju).
  \]
  These terms are of the same form as ones we have previously handled, and the resulting contribution to the estimate is
  the same. Finally, we need to consider
  \begin{equation}\label{eq:problemtermcomm}
    \begin{split}
      & - \tfrac{1}{2}h^{kj} (E_l h_{kj}) h^{lm}E_mE_{\zeta,i}u+\tfrac{1}{2}E_{\zeta,i}(h^{kj}(E_l h_{kj}) h^{lm}E_mu)\\
      = & \tfrac{1}{2}h^{kj} (E_l h_{kj}) h^{lm}[E_{\zeta,i},E_{m}]u+\tfrac{1}{2}E_{\zeta,i}(h^{kj}(E_l h_{kj}) h^{lm})E_mu.
    \end{split}    
  \end{equation}
  The contribution of both of these terms can be estimated directly. Combining the above observations yields the
  estimate
  \begin{equation*}
    \begin{split}
      & |L_{\refer}[E_{\zeta,i} u](v)-E_{\zeta,i}[Lu](v)|\\
      \leq & C \big( \norm{h}_{W^{2, \infty}} + \norm{h}_{W^{1, \infty}}^2
      +\|a\|_{W^{1,\infty}}\big)\norm{u}_{H^1}
      \norm{v}_{H^1},
    \end{split}
  \end{equation*}
  where $C$ only depends on $c_1$, $n$, $\{E_i\}$, $\zeta$ and $K$. By density and the definition of the $H^{-1}$-norm,
  the desired conclusion follows. 
\end{proof}

\begin{prop} \label{prop: bound on the inverse}
  Let $(\S, h_\refer)$ be a closed, connected and oriented Riemannian manifold of dimension $n$. Assume that $h$ is a Riemannian metric on $\S$
  and that $a$ is a smooth function on $\S$, such that $c_{1}^{-1} h_\refer\leq h\leq c_{1} h_\refer$ and $c_{2}^{-1}\leq a\leq c_{2}$ for some constants
  $c_1,c_2 > 1$. Let $L:=-\Delta_h+a$. Assume $u\in H^1(h)$ and $f\in L^2(h)$ to be such that $L[u]=\iota_h[f]$. Then
  $u\in H^2(h)$ and $Lu=f$ a.e. Next, fix a finite partition of unity subordinate to a finite cover of $\S$ by coordinate neighbourhoods. Then
  there is a constant $C$, depending only on $n$, $c_1$, $c_2$, the cover and the partition of unity such that 
  \begin{equation}\label{eq:basic Htwo est}
    \norm{u}_{H^2}
    \leq C \norm{h}_{W^{1,\infty}}\big(\norm{h}_{W^{1,\infty}}^2 + \norm{h}_{W^{2,\infty}}+ \norm{a}_{W^{1,\infty}} \big) \norm{Lu}_{L^2},
  \end{equation}
  where all the norms are defined using $h_\refer$.
\end{prop}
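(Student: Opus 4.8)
The statement packages together elliptic regularity and an a priori $H^2$-estimate with explicit constant dependence, and the plan is to assemble it from the machinery developed earlier in this appendix. First I would dispose of the regularity claim: since $c_2^{-1}\le a\le c_2$, the operator $L=-\Delta_h+a$ is coercive, and $L[u]=\iota_h[f]$ with $f\in L^2(h)$ means $u$ is a weak $H^1$-solution; Lemma~\ref{lemma:elliptic reg} (with $k=0$) then gives $u\in H^2(h)$ and $Lu=f$ almost everywhere. By Lemma~\ref{le: equivalence Sobolev norms} and Lemma~\ref{le: equivalence volume forms}, this is the same as saying $u\in H^2(h_\refer)$, and all norms below may be taken with respect to $h_\refer$ at the cost of constants depending only on $c_1,n$.

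\textbf{The estimate.} For the quantitative bound I would argue locally. Fix a finite cover $\{U_i\}$ of $\S$ by coordinate neighbourhoods carrying $h_\refer$-orthonormal frames $\{E^{(i)}_k\}$, and a subordinate partition of unity $\{\zeta_i^2\}$ with $0\le\zeta_i\le 1$; choose compact $K_i\subset U_i$ with $\supp\zeta_i\subset K_i$. It suffices to estimate $\sum_{i,k}\|E_{\zeta_i,k}u\|_{H^1}$, since together with $\|u\|_{H^1}$ these control $\|u\|_{H^2}$ up to a constant depending only on the cover, the partition of unity and the frames. Now apply the coercivity inequality (\ref{eq:u Hone bd ito Lref u}), which gives
\begin{equation*}
  \|E_{\zeta_i,k}u\|_{H^1}\le C\|h\|_{W^{1,\infty}}\,\|L_\refer[E_{\zeta_i,k}u]\|_{H^{-1}},
\end{equation*}
with $C=C(n,c_1,c_2)$. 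Write $L_\refer[E_{\zeta_i,k}u]=\big(L_\refer[E_{\zeta_i,k}u]-E_{\zeta_i,k}[Lu]\big)+E_{\zeta_i,k}[Lu]$. The commutator term is controlled by Lemma~\ref{le: commutator Delta X}, yielding a bound $C\big(\|h\|_{W^{2,\infty}}+\|h\|_{W^{1,\infty}}^2+\|a\|_{W^{1,\infty}}\big)\|u\|_{H^1}$. The remaining term $E_{\zeta_i,k}[Lu]=E_{\zeta_i,k}[f]$ is controlled via (\ref{eq:EiuHmoest}) (equivalently, (\ref{eq:XuHmonormest}) with $h$ replaced by $h_\refer$ and $X=E_{\zeta_i,k}$, using the computation (\ref{eq:rodiv hrefer Ezetai}) of $\rodiv_{h_\refer}E_{\zeta_i,k}$), giving $\|E_{\zeta_i,k}[f]\|_{H^{-1}}\le C\|f\|_{L^2}=C\|Lu\|_{L^2}$, with $C$ depending only on the frames, $\zeta_i$ and $K_i$. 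Finally, the zeroth-order piece is handled by coercivity directly: $\|u\|_{H^1}\le c_2\|L[u]\|_{H^{-1}}\le c_2\|Lu\|_{L^2}$. Collecting the three contributions, multiplying through the outer factor $\|h\|_{W^{1,\infty}}$, and noting that $\|Lu\|_{L^2}$ absorbs the $\|u\|_{H^1}$ appearing in the commutator bound (again by coercivity), produces precisely (\ref{eq:basic Htwo est}), with $C$ depending only on $n$, $c_1$, $c_2$, the cover and the partition of unity.

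\textbf{Main obstacle.} The only genuinely delicate point is the bookkeeping of which metric is used where: $L_\refer$ differs from $L$ by the artificial dependence on the volume-form ratio $\psi_h$ (this is why one works with $L_\refer$ and the $h_\refer$-norms throughout, rather than with $L$ and the $h$-norms), and one must keep track that the $H^{-1}$-duality pairing, the divergences $\rodiv_{h_\refer}$, and the integration-by-parts identity (\ref{eq:partial int vf}) are all taken with respect to $h_\refer$ — exactly as set up in Lemma~\ref{le: commutator Delta X}. Once this is pinned down, all the estimates are invocations of results already in place. A final remark: if one only wants the cruder bound (\ref{eq:u Hk plus two Lu Hk}) quoted in the text for general $k$, the same local-frame commutator scheme iterates, trading one more derivative of $h$ and $a$ in $C^\bullet$ for each extra order; but for $k=0$ the form above is what is needed, and it is the input to the higher-$k$ estimate of Theorem~\ref{thm: elliptic estimate}.
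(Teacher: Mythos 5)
Your proof is correct and follows essentially the same route as the paper's: both derive the $H^2$-bound by applying the $H^1$-coercivity estimate (\ref{eq:u Hone bd ito Lref u}) to $E_{\zeta,i}u$, splitting off the commutator $L_{\refer}[E_{\zeta,i}u]-E_{\zeta,i}[Lu]$ (controlled by Lemma~\ref{le: commutator Delta X}) and the term $E_{\zeta,i}[Lu]$ (controlled by (\ref{eq:EiuHmoest})), then absorbing the resulting $\|u\|_{H^1}$ by the zeroth-order coercivity bound $\|u\|_{H^1}\lesssim\|Lu\|_{L^2}$. The only cosmetic deviation is that you obtain that last bound from the $H^{-1}$-duality chain $\|u\|_{H^1(h)}\le c_2\|L[u]\|_{H^{-1}(h)}\le c_2\|Lu\|_{L^2(h)}$, whereas the paper goes through (\ref{eq:u Hone L Ltwo}) directly; the two are interchangeable.
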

\begin{remark}\label{remark:basic Htwo est low reg}
  In (\ref{eq:basic Htwo est}), it is sufficient to assume $u$ to be $H^2$, $h$ to be $C^{2}$ and $a$ to be $C^{1}$. In order to see this, it is
  sufficient to take a
  sequence of smooth metrics $h_{j}$ converging to $h$ in $C^{2}$; and a sequence of smooth functions $a_{j}$ converging to $a$ in $C^{1}$. Then
  (\ref{eq:basic Htwo est}) applies with $h=h_{j}$; $a=a_{j}$; and $c_1$ and $c_2$ replaced by $2c_1$ and $2c_2$ respectively. Taking the
  limit of this estimate yields the desired conclusion. 
\end{remark}
\begin{proof}
  The statements concerning the regularity of $u$ follow from the assumptions and Lemma~\ref{lemma:elliptic reg}. In order to prove
  (\ref{eq:basic Htwo est}), we recall (\ref{eq:u Hone bd ito Lref u}). Moreover, from now on, we drop all reference to $h_{\refer}$ since all the
  norms in the present proof are calculated with respect to $h_{\refer}$. Let $U$, $\{E_{i}\}$, $\zeta$, $E_{\zeta,i}$ and $K$ be as described at the
  beginning of Subsection~\ref{ssection:loc fr comm}. Applying (\ref{eq:u Hone bd ito Lref u}) with $u$ replaced by $E_{\zeta,i}u$ yields
  \begin{equation}\label{eq:EiuHoestfs}
    \begin{split}
      \norm{E_{\zeta,i}u}_{H^1} \leq & C \|h\|_{W^{1,\infty}}\norm{E_{\zeta,i}[Lu]-L_{\refer}[E_{\zeta,i}u]}_{H^{-1}}
      +C \|h\|_{W^{1,\infty}}\norm{E_{\zeta,i}[Lu]}_{H^{-1}}.
    \end{split}
  \end{equation}
  Combining this estimate with (\ref{eq:EiuHmoest}) and (\ref{eq:DeltahEicommmod}) yields
  \begin{equation}\label{eq:Ezeta i Lu est}
    \begin{split}
      \norm{E_{\zeta,i}u}_{H^1} \leq & C \|h\|_{W^{1,\infty}}\big( \norm{h}_{W^{2, \infty}} + \norm{h}_{W^{1, \infty}}^2
      +\|a\|_{W^{1,\infty}}\big)\norm{u}_{H^1}\\
      & +C \|h\|_{W^{1,\infty}}\norm{Lu}_{L^2},
    \end{split}
  \end{equation}
  where $C$ only depends on $c_1$, $c_2$, $n$, $\{E_i\}$, $\zeta$ and $K$. On the other hand, due to Lemma~\ref{le: equivalence Sobolev norms},
  (\ref{eq:-Delta plus a as element of H minus one}) and Lemma~\ref{le: equivalence volume forms},
  \begin{equation}\label{eq:u Hone L Ltwo}
    \|u\|_{H^1}^2\leq C\|u\|_{H^1(h)}^2\leq C\|Lu\|_{L^2}\|u\|_{L^2}\leq C\|Lu\|_{L^2}\|u\|_{H^1},
  \end{equation}
  where $C$ only depends on $c_1$, $c_2$ and $n$. Combining (\ref{eq:Ezeta i Lu est}) and (\ref{eq:u Hone L Ltwo}) with a finite partition of unity
  subordinate to a finite cover of $\S$ by coordinate neighbourhoods (on the coordinate  neighbourhoods, we can use the Gram-Schmidt procedure
  to construct an orthonormal frame using the coordinate vector fields) yields the desired conclusion. 
\end{proof}

\begin{lemma}[$L^2$-bounds on higher commutators] \label{le: commutator Delta higher derivatives}
  Let $(\S, h_\refer)$ be a closed, connected and oriented Riemannian manifold of dimension $n$. Let $U$, $\{E_{i}\}$, $\zeta$ and $K$ be as described
  at the beginning of Subsection~\ref{ssection:loc fr comm}. Let $\eta\in C^\infty_c(U)$ be such that $0\leq \eta\leq 1$ and such that $\eta(x)=1$ for
  all $x\in\supp\zeta$. Assume $h$ to be a Riemannian metric on $\S$ and $a$ to be a smooth function on $\S$, such that
  $c_{1}^{-1} h_\refer\leq h\leq c_{1} h_\refer$ and $c_{2}^{-1}\leq a\leq c_{2}$ for some constants $c_1,c_2 > 1$. Let $L:=-\Delta_h+a$. Then there is a
  constant $C$, depending only on $c_1$, $c_2$, $n$, $k$, $\{E_i\}$, $\zeta$ and $\eta$, such that if $|\bfI|\leq k$, then, using the notation
  introduced in Subsection~\ref{ssection:mod moser est},
  \begin{equation}\label{eq:commutHkest}
    \norm{[L, E_{\zeta,\bfI}] u}_{L^2}
    \leq C \left( \norm{h}_{W^{1,\infty}}\norm{u}_{H^{k+1}} +\norm{h}_{W^{1,\infty}}\norm{u}_{W^{1,\infty}}\norm{h}_{H^{k+1}} + \norm{u}_{L^\infty}\norm{a}_{H^k} \right)
  \end{equation}
  for all $u\in H^{k+2}$, where all spaces and norms are defined with respect to $h_\refer$.
\end{lemma}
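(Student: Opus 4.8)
The plan is to prove \eqref{eq:commutHkest} by induction on $k$, mirroring the structure already used for the first-order commutator in Lemma~\ref{le: commutator Delta X}. The base case $k=0$ is trivial since $E_{\zeta,\bfI}=\roId$ for empty $\bfI$. For the inductive step, write $E_{\zeta,\bfI}=E_{\zeta,i}E_{\zeta,\bfJ}$ with $|\bfJ|=|\bfI|-1$, and expand
\begin{equation*}
  [L,E_{\zeta,\bfI}]u=[L,E_{\zeta,i}]E_{\zeta,\bfJ}u+E_{\zeta,i}[L,E_{\zeta,\bfJ}]u.
\end{equation*}
The second term is handled by applying the inductive hypothesis to $[L,E_{\zeta,\bfJ}]u$ and then noting that $E_{\zeta,i}$ is a first-order operator with bounded coefficients; here one also has to move the extra $E_{\zeta,i}$ past the various norms, which costs one derivative on $u$, $h$, or $a$, exactly matching the right-hand side of \eqref{eq:commutHkest} (with the order-$k$ norms becoming order $k+1$ compatible after the shift, using that the inductive hypothesis is stated for $k-1$). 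The first term is the genuinely new piece: I would reuse the computation from the proof of Lemma~\ref{le: commutator Delta X}, which expresses $[L,E_{\zeta,i}]w$ (for $w=E_{\zeta,\bfJ}u$) as a sum of terms that are, schematically, first-order operators in $w$ with coefficients built from $h$, $a$ and their first derivatives. The function $\eta$ enters because $E_{\zeta,\bfJ}u$ is supported in $\supp\zeta$ and $\eta\equiv1$ there, so one may freely insert $\eta$ and thereby keep all intermediate quantities globally defined on $\S$ while only ever differentiating in the region where the local frame $\{E_i\}$ is available.

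The key technical step will be to estimate, for each schematic term of the form $\varphi\cdot E_{\zeta,\bfI_1}(\text{coeff})\cdot E_{\zeta,\bfI_2}u$ arising from the full expansion of $[L,E_{\zeta,\bfI}]u$, the $L^2$-norm by the right-hand side of \eqref{eq:commutHkest}. For this I would invoke the modified Moser estimates \eqref{eq:moserest mod}: extract the lowest-order factors (which are bounded in $L^\infty$ by $\|h\|_{W^{1,\infty}}$, $\|a\|_{W^{1,\infty}}$, or $\|u\|_{W^{1,\infty}}$) in the sup norm, and apply the Moser inequality to the remaining high-order factors, producing terms of the form $\|h\|_{W^{1,\infty}}\|u\|_{H^{k+1}}$, $\|h\|_{W^{1,\infty}}\|u\|_{W^{1,\infty}}\|h\|_{H^{k+1}}$, and $\|u\|_{L^\infty}\|a\|_{H^k}$. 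One must be careful about the total derivative count: the Laplacian $\Delta_h$ contributes two derivatives, $E_{\zeta,\bfI}$ contributes $|\bfI|\le k$, but the commutator always lowers the top-order derivative on $u$ by at least one, leaving at most $k+1$ derivatives on $u$ in the worst term (matching $\|u\|_{H^{k+1}}$), and correspondingly at most $k+1$ derivatives distributed on the metric coefficients (matching $\|h\|_{H^{k+1}}$) or $k$ on $a$ (matching $\|a\|_{H^k}$). The constant's dependence on $\zeta$, $\eta$, $\{E_i\}$, $n$, $k$, $c_1$, $c_2$ only—and not on $h$ or $a$—follows because the Moser constant in \eqref{eq:moserest mod} depends only on $\rodiv_{h_\refer}E_{\zeta,i}$, $n$, $m$, while the pointwise bounds $|h^{ij}|\le c_1$, $c_2^{-1}\le a\le c_2$, and the lower bound on $\|h\|_{L^\infty}$ in terms of $c_1,n$ account for the remaining algebraic manipulations (in particular expanding $h^{ij}$ as in \eqref{eq:problemtermcomm} and differentiating the inverse matrix as in the proof of Proposition~\ref{prop: bound on the inverse}).

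The main obstacle I anticipate is bookkeeping: organizing the expansion of $[L,E_{\zeta,\bfI}]$ so that every term is visibly of the three admissible types, and verifying that no term secretly carries $k+2$ derivatives on $u$ or more than $k+1$ on $h$. The cleanest way to control this is to first establish, as a sublemma, a ``Leibniz-type'' structural statement: $[L,E_{\zeta,\bfI}]u$ is a finite sum of terms $c_{\bfI_0\bfI_1\cdots}\,(E_{\zeta,\bfI_0}u)\prod_j E_{\zeta,\bfJ_j}(g_j)$ where each $g_j\in\{h^{lm}, h_{lm}, a, \lambda^k_{ij}, \zeta, \eta\}$, $|\bfI_0|\le |\bfI|+1$, at most one factor involves $a$ (and then with no $u$-derivative beyond order $\le|\bfI|$), and the total number of derivatives satisfies $|\bfI_0|+\sum_j|\bfJ_j|\le|\bfI|+1$; moreover if $|\bfI_0|=|\bfI|+1$ then no $g_j$-factor is differentiated. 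Such a statement is proved by the same induction and essentially records the outcome of the computation in Lemma~\ref{le: commutator Delta X} iterated. Once this structural form is in hand, applying \eqref{eq:moserest mod} termwise is routine, and the estimate \eqref{eq:commutHkest} follows, completing the proof (and, together with Proposition~\ref{prop: bound on the inverse} and an induction on $k$, feeding into the proof of Theorem~\ref{thm: elliptic estimate}).
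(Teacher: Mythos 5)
Your core idea --- expand $[L,E_{\zeta,\bfI}]u$ into a controlled sum of products of derivatives of $u$, $h$, $a$ and fixed cutoffs, then estimate each term via the modified Moser inequality (\ref{eq:moserest mod}) --- is exactly what the paper does; the paper simply performs this expansion directly from the expression (\ref{eq:Deltahcomp}) for $\Delta_h$ rather than organizing it as an induction. Your structural sublemma, including the derivative-budget constraints and the clause that the order-$(|\bfI|+1)$ term on $u$ carries undifferentiated coefficients, records the same bookkeeping, and your observation about the role of $\eta$ (freely inserting it on $\supp\zeta$) appears in the paper in the form of the identity $[E_{\zeta,\bfI},E_i]=\sum_{|\bfJ|\leq|\bfI|}f^{\bfJ}_{\bfI,i}E_{\eta,\bfJ}$ with $f^{\bfJ}_{\bfI,i}$ depending only on $\zeta$ and the frame.

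The one genuine gap is in the first paragraph's direct $L^2$-induction: the inductive hypothesis controls $\|[L,E_{\zeta,\bfJ}]u\|_{L^2}$, but to control $\|E_{\zeta,i}[L,E_{\zeta,\bfJ}]u\|_{L^2}$ you would need the $H^1$-norm of $[L,E_{\zeta,\bfJ}]u$, which the hypothesis does not give; ``moving $E_{\zeta,i}$ past the various norms'' is not a legitimate step, since one cannot differentiate through an a priori estimate without knowing the structure of the quantity being bounded. (Strengthening the inductive hypothesis to an $H^1$-bound would in turn cost an extra derivative on $h$ and $a$ and spoil the specific dependence $\|h\|_{W^{1,\infty}}$, $\|a\|_{W^{1,\infty}}$ on the right-hand side.) This dead end is fully repaired by the structural sublemma you sketch in the final paragraph: once you know the explicit algebraic form of $[L,E_{\zeta,\bfI}]u$, you never need to estimate $E_{\zeta,i}$ applied to a commutator --- you apply (\ref{eq:moserest mod}) to each term directly. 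You should scrap the first paragraph's $L^2$-induction, keep the structural sublemma as the organizing principle (proved by straightforward Leibniz expansion, as the paper does inline, or by induction on $|\bfI|$ at the algebraic level), and the remainder goes through as in the paper.
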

\begin{remark}
  In some situations, it is useful to keep the following simpler estimate in mind:
  \begin{equation}\label{eq:commutHkest primitive}
    \norm{[L, E_{\zeta,\bfI}] u}_{L^2} \leq C \norm{u}_{H^{k+1}}
  \end{equation}
  for all $u\in H^{k+2}$, where all spaces and norms are defined with respect to $h_\refer$. Moreover, the constant $C$ only depends on $c_1$, $c_2$,
  $n$, $k$, $\{E_i\}$, $\zeta$, $\eta$, the $C^{k+1}$-norm of $h$ and the $C^k$ norm of $a$. 
\end{remark}
\begin{proof}
  Combining (\ref{eq:DeltahitoEi}) and (\ref{eq:Deltahftt}) yields
  \begin{equation}\label{eq:Deltahcomp}
    \Delta_hu = E_{i}(h^{ij}E_{j}u) + \tfrac{1}{2}h^{ij} (E_l h_{ij}) h^{lm}E_mu + h^{ij}\lambda_{mi}^{m}E_ju.
  \end{equation}
  In order to estimate $[-\Delta_h,E_{\zeta,\bfI}]u$, it is convenient to estimate the contribution from each of the terms on the right hand side of
  (\ref{eq:Deltahcomp}). To begin with, we need to estimate
  \begin{equation}\label{eq:fitecomm}
    -E_{i}(h^{ij}E_{j}E_{\zeta,\bfI}u)+E_{\zeta,\bfI}E_{i}(h^{ij}E_{j}u)=[E_{\zeta,\bfI},E_{i}](h^{ij}E_{j}u)+E_{i}(E_{\zeta,\bfI}(h^{ij}E_{j}u)-h^{ij}E_{j}E_{\zeta,\bfI}u).
  \end{equation}
  We leave it to the reader to verify that
  \[
  [E_{\zeta,\bfI},E_{i}]=\textstyle{\sum}_{|\bfJ|\leq |\bfI|}f_{\bfI,i}^{\bfJ}E_{\bfJ}=\textstyle{\sum}_{|\bfJ|\leq |\bfI|}f_{\bfI,i}^{\bfJ}E_{\eta,\bfJ}
  \]
  for some functions $f_{\bfI,i}^{\bfJ}$ vanishing outside of the support of $\zeta$ (a property which justifies the last equality) and depending only on
  the frame and $\zeta$ (including derivatives of $\zeta$). In order to estimate the first term on the right hand side of (\ref{eq:fitecomm}), it is thus
  sufficient to estimate expressions of the form
  \[
    \|E_{\eta,\bfJ_1}h_{k_1 l_1}\cdots E_{\eta,\bfJ_m}h_{k_m l_m}\cdot E_{\eta,\bfJ_0}E_{\eta,j}u\|_{L^2},
  \]
  where $|\bfJ_0|+\dots+|\bfJ_m|\leq |\bfI|$ and we used the fact that $|h^{ij}|$ can be estimated by $c_1$. Appealing to (\ref{eq:moserest mod}), we
  conclude that the first term on the right hand side of (\ref{eq:fitecomm}) can be estimated by
  \[
  C\|u\|_{H^{k+1}}+C\|u\|_{W^{1,\infty}}\|h\|_{H^{k}},
  \]  
  where $C$ only depends on $c_1$, $n$, $k$, $\{E_i\}$ $\zeta$ and $\eta$, where we used the fact that $|h_{ij}|\leq c_1$. It can also be estimated
  by the right hand side of (\ref{eq:commutHkest primitive}). Expanding the second term on the right hand side of (\ref{eq:fitecomm}) yields two types of terms:
  \[
  E_{i}(E_{\zeta,\bfJ}(h^{ij})E_{\zeta,\bfK}E_{j}u), \ \ \ E_{i}(h^{ij}[E_{\zeta,\bfI},E_{j}]u),
  \]
  where $0<|\bfJ|\leq|\bfI|$ and $|\bfJ|+|\bfK|=|\bfI|$ in the first expression. Due to (\ref{eq:moserest mod}) and arguments similar to the above, these
  expressions can, in $L^{2}$, be estimated by
  \begin{equation}\label{eq:Deltahcommestmainterm}
    C\|h\|_{W^{1,\infty}}(\|u\|_{H^{k+1}}+\|u\|_{W^{1,\infty}}\|h\|_{H^{k+1}}),
  \end{equation}
  where $C$ only depends on $c_1$, $n$, $k$, $\{E_i\}$ $\zeta$ and $\eta$. They can also be estimated by the right hand side of
  (\ref{eq:commutHkest primitive}). Next, we need to consider the commutator arising from the second term on the right hand side of
  (\ref{eq:Deltahcomp}). In practice, we need to estimate
  \[
  E_{\zeta,\bfI}(h^{ij}E_{l}(h_{ij})h^{lm}E_{m}u)-h^{ij}E_{l}(h_{ij})h^{lm}E_{m}E_{\zeta,\bfI}u.
  \]
  By arguments similar to the above, this expression can be estimated by (\ref{eq:Deltahcommestmainterm}) in $L^{2}$. It can also be estimated
  by the right hand side of (\ref{eq:commutHkest primitive}). The estimate for the contribution to the commutator from the last term on the right
  hand side of (\ref{eq:Deltahcomp}) is even better. Finally, the contribution from $[E_{\zeta,\bfI},a]$ can be estimated by 
  \[
  C\|u\|_{H^{k}}+C\|u\|_{L^{\infty}}\|a\|_{H^{k}}
  \]
  as well as by the right hand side of (\ref{eq:commutHkest primitive}). The lemma and remark follow, keeping in mind that $\norm{h}_{W^{1,\infty}}$ is
  bounded by below by a constant depending only on $c_{1}$ and $n$. 
\end{proof}

We are finally in a position to prove Theorem~\ref{thm: elliptic estimate}:

\begin{proof}[Proof of Theorem \ref{thm: elliptic estimate}]
  The regularity statement follows immediately from Lemma~\ref{lemma:elliptic reg}. Next, applying Proposition \ref{prop: bound on the inverse} to
  $E_{\zeta,\bfI} u$ yields
  \begin{equation}\label{eq:Hkptwoestfstep}
    \begin{split}
      \norm{E_{\zeta,\bfI} u}_{H^2}
      &\leq C \norm{LE_{\zeta,\bfI} u}_{L^2} \leq C \norm{ E_{\zeta,\bfI} Lu}_{L^2} +  C \norm{ [L, E_{\zeta,\bfI}] u}_{L^2},
    \end{split}
  \end{equation}  
  where $C$ only depends on a cover of $\S$, a partition of unity, $n$, $c_{1}$, $c_{2}$, $\|h\|_{W^{2,\infty}}$ and $\|a\|_{W^{1,\infty}}$, and the
  dependence on the last two quantities is polynomial. By Lemma \ref{le: commutator Delta higher derivatives}, it follows that if $|\bfI|\leq k$, then
  \begin{equation}\label{eq:Hkptwoestfirstste}
    \norm{E_{\zeta,\bfI} u}_{H^{2}} \leq C\left( \norm{Lu}_{H^k} + \norm{u}_{H^{k+1}}
    + \norm{u}_{W^{1, \infty}} \norm{h}_{H^{k+1}}  + \norm{u}_{L^\infty} \norm{a}_{H^k} \right)
  \end{equation}
  for any $k \geq 0$, where $C$ only depends on a cover of $\S$, a partition of unity, $n$, $k$, $c_{1}$, $c_{2}$, $\{E_{i}\}$, $\zeta$, $\eta$,
  $\|h\|_{W^{2,\infty}}$ and $\|a\|_{W^{1,\infty}}$, and the dependence on the last two quantities is polynomial. This means that 
  \begin{equation}\label{eq:indassHkptwoest}
    \norm{u}_{H^{k+2}}\leq C\left( \norm{Lu}_{H^k} +  \norm{u}_{W^{1, \infty}}\norm{h}_{H^{k+1}}
    + \norm{u}_{L^\infty}\norm{a}_{H^k}  + \norm{u}_{H^{k+1}} \right),
  \end{equation}
  where $C$ only depends on $n$, $k$, $c_{1}$, $c_{2}$, a finite cover of $\S$, a finite partition of unity, $\|h\|_{W^{2,\infty}}$ and $\|a\|_{W^{1,\infty}}$,
  the dependence
  on the last two quantities being polynomial. Next, we wish to prove that (\ref{eq:indassHkptwoest}) holds with the $H^{k+1}$-norm of $u$ on the right hand
  side replaced by the $H^1$ norm. If $k=0$, this follows from (\ref{eq:indassHkptwoest}). Assume that an estimate of this form holds for some $k\geq 0$.
  Combining (\ref{eq:indassHkptwoest}), with $k$ replaced by $k+1$, with the inductive hypothesis proves that the inductive hypothesis is satisfied with
  $k$ replaced by $k+1$. Thus the desired estimate holds for all $k\geq 0$. Estimating the
  $H^{1}$-norm of $u$ appearing on the right hand side by appealing to Proposition \ref{prop: bound on the inverse} yields
  \begin{equation}\label{eq:indassHkptwoestnoHo}
    \norm{u}_{H^{k+2}}\leq C\left( \norm{Lu}_{H^k} +  \norm{u}_{W^{1, \infty}}\norm{h}_{H^{k+1}}
    + \norm{u}_{L^\infty}\norm{a}_{H^k} \right),
  \end{equation}
  where $C$ has the same dependence as in the case of (\ref{eq:indassHkptwoest}). The estimate (\ref{eq:mainHkptwoestell}) follows. In order to take the step
  from (\ref{eq:mainHkptwoestell}) to (\ref{eq:Hkptwonormu}), it is sufficient to appeal to Theorem~\ref{thm:globalSchauder}.
  
  In order to prove (\ref{eq:Hkestcrude}), note that the following interpolation estimate holds: Given integers $k$ and $l$ such that
  $0\leq l\leq k$ and $k\geq 1$, there is a constant $C$, depending only on $k$, $l$, $(\S,h_{\refer})$ and a partition of unity, such that
  \begin{equation}\label{eq:psiinterpol}
    \|\psi\|_{H^{l}}\leq C\|\psi\|_{2}^{1-l/k}\|\psi\|_{H^{k}}^{l/k}
  \end{equation}
  for all $\psi\in H^k$ on $\S$. In the case of parallelisable manifolds, a proof of this statement is to be found in \cite[Lemma~120, p.~72]{GPR}. In
  the general case, we can use a partition of unity to divide $\psi$ into a finite number of parts that have compact support in sets that can be considered
  to be subsets of the $n$-torus. Applying (\ref{eq:psiinterpol}) in the case of the $n$-torus then yields (\ref{eq:psiinterpol}) for closed manifolds that
  are not parallelisable. Applying (\ref{eq:psiinterpol}) with $l=k-1$ and $\psi$ replaced by $E_{\zeta,\bfI}\psi$ for $|\bfI|\leq 2$ yields
  \[
    \|E_{\zeta,\bfI}\psi\|_{H^{k-1}}\leq C\|\psi\|_{H^{2}}^{1/k}\|\psi\|_{H^{k+2}}^{1-1/k}.
  \]
  From this estimate, we deduce that
  \[
    \|\psi\|_{H^{k+1}}\leq C\|\psi\|_{H^{2}}^{1/k}\|\psi\|_{H^{k+2}}^{1-1/k}.
  \]
  Combining this estimate with (\ref{eq:indassHkptwoestnoHo}) and Sobolev embedding yields, if $k>n/2$, 
  \begin{equation}\label{eq:u H k plus two aux est}
    \begin{split}
      \norm{u}_{H^{k+2}} 
      \leq & C\norm{Lu}_{H^k}+C(\norm{h}_{H^{k+1}}+\norm{a}_{H^k})\norm{u}_{H^{k+1}}\\
      \leq & C\norm{Lu}_{H^k}+C(\norm{h}_{H^{k+1}}+\norm{a}_{H^k})\|u\|_{H^{2}}^{1/k}\|u\|_{H^{k+2}}^{1-1/k}\\
      \leq & C\norm{Lu}_{H^k}+\tfrac{1}{k}C^{k}(\norm{h}_{H^{k+1}}+\norm{a}_{H^k})^{k}\|u\|_{H^{2}}
      +\tfrac{k-1}{k}\|u\|_{H^{k+2}}.
    \end{split}
  \end{equation}
  The second term on the far right hand side can be estimated by appealing to Proposition~\ref{prop: bound on the inverse}. The third term can be
  taken over to the left hand side. This proves (\ref{eq:Hkestcrude}).

  In order to prove (\ref{eq:u Hk plus two Lu Hk}), note that combining (\ref{eq:Hkptwoestfstep}) with (\ref{eq:commutHkest primitive}) yields
  \begin{equation}\label{eq:indassHkptwoest prim}
    \norm{u}_{H^{k+2}}\leq C( \norm{Lu}_{H^k} +  \|u\|_{H^{k+1}}),
  \end{equation}
  where $C$ only depends on $n$, $k$, $c_{1}$, $c_{2}$, a finite cover of $\S$, a finite partition of unity, the $C^{l_k+1}$-norm of $h$, where
  $l_k:=\max\{k,1\}$, and the $C^{l_k}$-norm of $a$. At this point, an estimate analogous with (\ref{eq:u H k plus two aux est}) combined with
  Proposition~\ref{prop: bound on the inverse} yields (\ref{eq:u Hk plus two Lu Hk}). 
\end{proof}

\subsection{Existence of solutions}

Since existence of solutions to the elliptic equations of interest here is an immediate consequence of the observations made in the present
appendix, and since we did not find a good reference for the statements we need we, next, write down the relevant existence result, even though
it is standard.

\begin{lemma}\label{lemma:existence sol to elliptic PDE}
  Let $(\Sigma,h_{\refer})$ be a closed, connected and oriented Riemannian manifold of dimension $n$. Let $k>n/2+1$ and assume
  $h\in H^{k+1}[T^{(0,2)}T\S;h_{\refer}]$ to be a Riemannian metric and $a\in H^{k}(h_{\refer})$ to be strictly positive. Assume that $f\in L^2(h_{\refer})$. Then
  there is a unique weak solution $u\in H^1(h_{\refer})$ to $L[u]=\iota_h[f]$. Moreover, if $f\in H^k(h_{\refer})$, then $u\in H^{k+2}(h_{\refer})$. 

  Next, let $c_1>1$ and $c_2>1$ be constants and $(\S,h_{\refer})$, $n$ and $k$ be as above. Then there are polynomials
  $\mfD_{1}:\ro_{+}^{2}\rightarrow\ro_{+}$ and $\mfD_{2}:\ro_{+}^{4}\rightarrow\ro_{+}$ with non-negative coefficients; and a continuous function
  $\mfD_{3}:\ro_{+}^{3}\rightarrow\ro_{+}$ which is monotonically increasing in each of its arguments. Here $\mfD_{i}$, $i=1,2,3$, only depend on $c_{1}$,
  $c_{2}$, $n$, $k$ and $(\S,h_{\refer})$. Moreover, if $h$ and $a$ are as above; $c_1^{-1}h_{\refer}\leq h\leq c_1 h_{\refer}$; $c_2^{-1}\leq a\leq c_2$;
  $f\in H^k(h_{\refer})$; and $u\in H^{k+2}(h_{\refer})$ is the solution obtained in the first part of the lemma, then (\ref{eq:mainHkptwoestell}),
  (\ref{eq:Hkestcrude}) and (\ref{eq:Hkptwonormu}) hold. 
\end{lemma}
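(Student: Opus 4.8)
The plan is to assemble the statement from the pieces already developed in this appendix, treating the existence claim and the estimates separately. For the existence of a unique weak solution, I would first observe that since $h\in H^{k+1}[T^{(0,2)}T\S;h_\refer]$ with $k>n/2+1$, Sobolev embedding gives $h\in C^{1,1}$ (indeed better), and similarly $a\in H^k$ with $k>n/2$ gives $a\in C^{0}$, hence $a$ is continuous and, being strictly positive on the closed manifold $\S$, bounded below by a positive constant; thus $h$ and $a$ verify bounds of the form \eqref{eq:hhrefequivellest} for suitable $c_1,c_2$. The bilinear form $L_\refer[u](v)$ (or equivalently the form $\int_\S[h(\grad_h v,\grad_h u)+avu]\,d\mu_h$) is then bounded on $H^1(h_\refer)\times H^1(h_\refer)$ by \eqref{eq: H1 invertibility} and coercive by \eqref{eq:Lref uu}--\eqref{eq:u Hone bd ito Lref u}; the Lax--Milgram theorem then produces a unique $u\in H^1(h_\refer)$ with $L[u]=\iota_h[f]$. (One must note that the coefficients here are only Sobolev, not smooth, but the arguments in the appendix leading to \eqref{eq:u Hone bd ito Lref u} use only the $W^{1,\infty}$-bound on $h$ and the $L^\infty$-bounds on $h,a$, which hold by the embedding; alternatively one approximates $h$ and $a$ by smooth metrics and functions and passes to the limit, as in Remark~\ref{remark:basic Htwo est low reg}.)

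Next I would promote the regularity: if $f\in H^k(h_\refer)$, then $u\in H^{k+2}(h_\refer)$. For smooth coefficients this is exactly Lemma~\ref{lemma:elliptic reg} (iterating the local interior regularity statement \cite[Theorem~2, p.~332]{evans} together with a partition of unity). For Sobolev coefficients one again argues by approximation: take $h_j\to h$ in $H^{k+1}$ and $a_j\to a$ in $H^k$, smooth and satisfying uniform bounds $c_1^{-1}h_\refer\le h_j\le c_1 h_\refer$, $c_2^{-1}\le a_j\le c_2$ (possibly after doubling the constants); let $u_j$ be the corresponding solutions, which are smooth enough to apply Lemma~\ref{le: commutator Delta higher derivatives} and Proposition~\ref{prop: bound on the inverse}; the estimate \eqref{eq:mainHkptwoestell} (whose right-hand side is controlled uniformly in $j$, since $\|h_j\|_{W^{2,\infty}}$, $\|a_j\|_{W^{1,\infty}}$, $\|h_j\|_{H^{k+1}}$, $\|a_j\|_{H^k}$ and, via $k>n/2$ and $k>n/2+1$, $\|u_j\|_{W^{1,\infty}}$, $\|u_j\|_{L^\infty}$ are all bounded) shows $(u_j)$ is bounded in $H^{k+2}(h_\refer)$; a weak limit is then a solution with the right data, and by uniqueness equals $u$, so $u\in H^{k+2}(h_\refer)$.

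Finally, for the stated estimates \eqref{eq:mainHkptwoestell}, \eqref{eq:Hkestcrude}, \eqref{eq:Hkptwonormu}: once we know $u\in H^{k+2}(h_\refer)$, these are precisely the conclusions of Theorem~\ref{thm: elliptic estimate}, applied with the same $c_1,c_2$. The only caveat is again that Theorem~\ref{thm: elliptic estimate} is phrased with $a$ smooth; but the approximation argument of the previous paragraph, combined with the polynomial/continuous-increasing dependence of $\mfD_1,\mfD_2,\mfD_3$ on the norms, lets us pass \eqref{eq:mainHkptwoestell} and \eqref{eq:Hkestcrude} to the limit, and then \eqref{eq:Hkptwonormu} follows from \eqref{eq:mainHkptwoestell} exactly as in the proof of that theorem, via the Schauder estimate Theorem~\ref{thm:globalSchauder} (for which $h\in C^{1,1}$, $f\in C^{0,1}$ — note \eqref{eq:Hkptwonormu} only needs $h\in W^{2,\infty}$ and, strictly, $C^{1,1}$-control of $h$, which the embedding $H^{k+1}\hookrightarrow C^{1,1}$ supplies since $k+1>n/2+2$).

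\textbf{Main obstacle.} The routine but slightly delicate point is the rigorous handling of the low (merely Sobolev) regularity of the coefficients throughout: ensuring that Lax--Milgram, the commutator lemmas, and Theorems~\ref{thm: elliptic estimate} and \ref{thm:globalSchauder} — all of which are stated either for smooth or for $C^{k,\alpha}$ coefficients — transfer to $h\in H^{k+1}$, $a\in H^k$ by a uniform-bound-plus-approximation argument, using the embeddings $H^{k+1}\hookrightarrow C^{1,1}$ and $H^k\hookrightarrow C^0$ guaranteed by $k>n/2+1$, together with the monotone/continuous dependence of the constants. No genuinely new estimate is required.
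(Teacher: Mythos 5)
Your overall architecture mirrors the paper's: Lax--Milgram for existence and uniqueness of the weak $H^1$ solution, followed by an approximation-by-smooth-coefficients argument to upgrade the regularity and transfer the estimates of Theorem~\ref{thm: elliptic estimate}. There is, however, a genuine gap in the step where you bound $(u_j)$ uniformly in $H^{k+2}(h_\refer)$.

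You apply \eqref{eq:mainHkptwoestell} to the approximating solutions $u_j$ and assert that the right-hand side is uniformly controlled because ``via $k>n/2$ and $k>n/2+1$, $\|u_j\|_{W^{1,\infty}}$, $\|u_j\|_{L^\infty}$ are all bounded.'' This is circular: the only Sobolev embedding that yields a bound on $\|u_j\|_{W^{1,\infty}}$ from the stated hypotheses is $H^{k+2}\hookrightarrow W^{1,\infty}$, yet $\|u_j\|_{H^{k+2}}$ is precisely what you are trying to bound (and $H^{2}\hookrightarrow W^{1,\infty}$, which you might otherwise have in mind, fails for $n\geq 2$). The paper avoids this by invoking \eqref{eq:Hkestcrude} rather than \eqref{eq:mainHkptwoestell}: its right-hand side depends only on $\|h_j\|_{W^{2,\infty}}$, $\|a_j\|_{W^{1,\infty}}$, $\|h_j\|_{H^{k+1}}$, $\|a_j\|_{H^k}$ and $\|f\|_{H^k}$, none of which involves $u_j$, and so is uniformly bounded along the sequence; indeed \eqref{eq:Hkestcrude} was derived from \eqref{eq:mainHkptwoestell} in Theorem~\ref{thm: elliptic estimate} precisely by interpolating away the $u$-dependence under the hypothesis $k>n/2$. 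Alternatively one could first invoke the Schauder estimate Theorem~\ref{thm:globalSchauder} to obtain a $j$-uniform bound on $\|u_j\|_{C^{2,1}}$ and hence on $\|u_j\|_{W^{1,\infty}}$, but your proposal as written does not do this. One further small omission: to conclude that the weak $H^{k+2}$-limit of $u_j$ solves $Lu=f$, one needs some strong convergence of $u_j$ (e.g.\ in $H^{k+1}$) to pass to the limit in the variable-coefficient bilinear form; the paper obtains this from the Rellich--Kondrakhov theorem (Theorem~\ref{thm:Rellich Kondrakhov}). With these two repairs the argument goes through as in the paper.
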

\begin{remark}
  By the assumptions and Sobolev embedding, we can assume $h$ to be $C^2$ and $a$ to be $C^1$. In particular, $a$ has a strictly positive minimum
  and a finite maximum. 
\end{remark}
\begin{remark}
  Due Lemma~\ref{lemma:elliptic reg}, it follows that if $h$ and $a$ are smooth and $f\in H^l(h_{\refer})$ for some $l\in\nn{}_0$, then
  $u\in H^{l+2}(h_{\refer})$. However, we are here mainly interested in the case that the coefficients are not smooth. 
\end{remark}
\begin{remark}\label{remark:coeff smooth sol smooth}
  If $h$, $a$ and $f$ are all smooth, then an immediate consequence of the lemma is that $u$ is smooth.
\end{remark}
\begin{proof}
  As noted in Subsection~\ref{ssection:sob sp}, $H:=H^1(h)$ is a Hilbert space. Define $B:H\times H\rightarrow \ro$ by
  \[
  B[u,v]:=\textstyle{\int}_\S (h(\grad_hu,\grad_hv)+auv)d\mu_h.
  \]
  By assumption, there is a constant $c_2>1$ such that $c_2^{-1}\leq a\leq c_2$. Thus
  \[
  |B[u,v]|\leq c_2\|u\|_H\|v\|_H,\ \ \
  B[u,u]\geq c_2^{-1}\|u\|_H^2.
  \]
  Next, let $f\in L^2(h)$ and define $F:H\rightarrow \ro$ by  
  \[
  F(v):=\textstyle{\int}_\S fvd\mu_h.
  \]
  Then $F$ is a bounded linear functional. By the Lax-Milgram Theorem, see, e.g., \cite[Theorem~1, p.~315--316]{evans}, it follows that there is a
  unique element $u\in H$ such that $B[u,v]=F(v)$ for all $v\in H$. Due to Lemma~\ref{lemma:elliptic reg}, it follows that if $a$ and $h$ are smooth,
  and $f\in H^l$ for some $l\in\nn{}_0$, then $u\in H^{l+2}$.
  
  Assume now that $h\in H^{k+1}[T^{(0,2)}T\S;h_{\refer}]$ is a Riemannian metric, that $a\in H^{k}(h_{\refer})$ is strictly positive and that $f\in H^k(h_{\refer})$.
  Then there are constants $c_1>1$ and $c_2>1$ such that
  $c_1^{-1}h_{\refer}\leq h\leq c_1 h_{\refer}$ and $c_2^{-1}\leq a\leq c_2$. Next, choose a sequence $h_m$ of smooth Riemannian metrics converging to $h$
  in the $H^{k+1}$-norm and a sequence of smooth strictly positive functions $a_m$ converging to $a$ in the $H^k$-norm. Let $L_m:=-\Delta_{h_m}+a_m$. Then
  there is, by the first part of the proof, functions $u_m\in H^{k+2}(h_{\refer})$ such that $L_mu_m=f$. Letting $b_i=2c_i$, $i=1,2$, we can assume that
  \[
  b_1^{-1}h_{\refer}\leq h_m\leq b_1h_{\refer},\ \ \
  b_2^{-1}\leq a_m\leq b_2
  \]
  for all $m$. Due to Theorem~\ref{thm: elliptic estimate}, there is then a polynomial $\mfD_2:\ro_+^4\rightarrow\ro_+$ with non-negative coefficients,
  depending only on $b_1$, $b_2$, $n$, $k$ and $(\S,h_{\refer})$, such that 
  \[
  \norm{u_m}_{H^{k+2}} \leq \mfD_{2}[\norm{h_m}_{W^{2, \infty}},\norm{a_m}_{W^{1, \infty}},\norm{a_m}_{H^k},\norm{h_m}_{H^{k+1}}] \norm{f}_{H^k}.
  \]
  Since the unit ball in $H^k$ is compact in the weak topology, there is a subsequence of $u_m$, which we still denote $u_{m}$, converging weakly to,
  say, $u$. Moreover, due to the fact that $k>n/2+1$, Sobolev embedding and the fact that $\mfD_2$ is a polynomial, (\ref{eq:Hkestcrude}) holds. 
  Next, due to the Rellich-Kondrakhov theorem, Theorem~\ref{thm:Rellich Kondrakhov}, and Remark~\ref{remark:Rellich Kondrakhov dim one} there is a
  subsequence of $u_m$, which we still denote $u_{m}$, converging in $H^{k+1}$. This means that $u$ is a solution to $Lu=f$. Moreover, combining the
  above observations with Theorem~\ref{thm: elliptic estimate}, we conclude that there is then a polynomial $\mfD_1:\ro_+^2\rightarrow\ro_+$ with
  non-negative coefficients, depending only on $b_1$, $b_2$, $n$, $k$ and $(\S,h_{\refer})$, such that (\ref{eq:mainHkptwoestell}) holds. 
  Next, note that $u\in C^3$ and $f\in C^1$. Moreover, due to Theorem~\ref{thm: elliptic estimate}, there is a continuous function
  $\mfD_3:\ro_+^3\rightarrow\ro_+$, depending only on $b_1$, $b_2$, $n$, $k$ and $(\S,h_{\refer})$, such that (\ref{eq:Hkptwonormu}) holds.
\end{proof}
Next, we prove that there is a complete set of eigenfunctions of the Laplacian. 
\begin{thm}\label{thm:complete set of evs}
  Let $(M,h)$ be a closed, connected and oriented Riemannian manifold of dimension $n$. Then each eigenvalue of $-\Delta_h$ is real and has finite
  multiplicity. Moreover, repeating each eigenvalue according to its multiplicity, the set of eigenvalues is countably infinite and can be ordered
  \[
  0=\lambda_0<\lambda_1\leq \lambda_2\leq\dots.
  \]
  In addition, $\lambda_k\rightarrow\infty$ as $k\rightarrow\infty$. Finally, there is an orthonormal basis $\{u_k\}_{k=0}^\infty$ of $L^2(h)$,
  where $u_0$ is constant; $u_k\in H^1_0(h)$ for $k\geq 1$; and $u_k$ is an eigenfunction of $-\Delta_h$ corresponding to $\lambda_k$.
\end{thm}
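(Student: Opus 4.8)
The plan is to prove Theorem~\ref{thm:complete set of evs} by the standard spectral-theory argument for the Laplacian on a closed manifold, using the elliptic machinery already developed in this appendix. First I would set $L:=-\Delta_h+1$ and observe, via the weak formulation and the Lax--Milgram theorem (exactly as in the proof of Lemma~\ref{lemma:existence sol to elliptic PDE}, with $a\equiv 1$), that for every $f\in L^2(h)$ there is a unique $u\in H^1(h)$ with $L[u]=\iota_h[f]$; this defines a bounded linear solution operator $T:L^2(h)\to L^2(h)$, $Tf:=u$. The coercivity estimate $B[u,u]\ge\|u\|_{H^1(h)}^2$ shows that $T$ maps into $H^1(h)$ with $\|Tf\|_{H^1(h)}\le C\|f\|_{L^2(h)}$, and by the Rellich--Kondrakhov theorem (Theorem~\ref{thm:Rellich Kondrakhov}, together with Remark~\ref{remark:Rellich Kondrakhov dim one} for $n=1$) the inclusion $H^1(h)\hookrightarrow L^2(h)$ is compact, so $T$ is a compact operator on $L^2(h)$.

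Next I would check that $T$ is self-adjoint and positive on $L^2(h)$: self-adjointness follows from the symmetry of the bilinear form $B$, i.e.\ $(Tf,g)_{L^2(h)}=B[Tf,Tg]=(f,Tg)_{L^2(h)}$, and positivity from $(Tf,f)_{L^2(h)}=B[Tf,Tf]\ge\|Tf\|_{H^1(h)}^2>0$ for $f\neq 0$; in particular $T$ is injective, so $0$ is not an eigenvalue of $T$. The Hilbert--Schmidt spectral theorem for compact self-adjoint operators then yields an orthonormal basis $\{u_k\}_{k\ge0}$ of $L^2(h)$ consisting of eigenfunctions of $T$, with eigenvalues $\mu_k>0$ of finite multiplicity and $\mu_k\to0$. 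Writing $\mu_k=(1+\lambda_k)^{-1}$, unravelling $Tu_k=\mu_k u_k$ gives $L[u_k]=\mu_k^{-1}\iota_h[u_k]$, i.e.\ $u_k$ is a weak solution of $-\Delta_h u_k=\lambda_k u_k$; by Lemma~\ref{lemma:elliptic reg} (applied iteratively, since $u_k\in L^2\subset H^0$) it follows that $u_k\in H^m(h)$ for all $m$, hence $u_k$ is smooth and $-\Delta_h u_k=\lambda_k u_k$ classically. Each $\lambda_k\ge0$ since $\lambda_k\|u_k\|_{L^2(h)}^2=\int_\S|\grad_h u_k|_h^2\,d\mu_h\ge0$, and $\lambda_k\to\infty$ because $\mu_k\to0$; reordering so that $\lambda_0\le\lambda_1\le\cdots$ and repeating according to multiplicity gives the claimed ordering, and the list is countably infinite because $L^2(h)$ is infinite-dimensional.

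Finally I would identify $\lambda_0=0$ with constant eigenfunction and argue $\lambda_1>0$: if $-\Delta_h u=0$ then $\int_\S|\grad_h u|_h^2\,d\mu_h=0$, so $u$ is constant (using connectedness of $\S$), whence the kernel of $-\Delta_h$ is one-dimensional, spanned by the constant $u_0:=(\mu_h(\S))^{-1/2}$; all other eigenvalues are therefore strictly positive and the corresponding $u_k$ ($k\ge1$) are orthogonal to the constants, hence have mean zero — this is the appropriate reading of the ``$u_k\in H^1_0(h)$'' clause on a closed manifold (where $H^1_0=H^1$ and the nontrivial information is the mean-zero orthogonality to $u_0$). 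The main obstacle, and the only point requiring genuine care rather than bookkeeping, is verifying the compactness of $T$ cleanly in the manifold setting and the $n=1$ edge case; once the compact self-adjoint operator $T$ is in hand, everything else is a routine application of the spectral theorem combined with the elliptic regularity already proved in Lemma~\ref{lemma:elliptic reg}. I would also remark that the realness of the eigenvalues is automatic from self-adjointness of $T$ (or directly from the symmetry of $-\Delta_h$), and finite multiplicity from compactness of $T$.
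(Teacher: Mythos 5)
Your proof is correct and takes essentially the same approach as the paper: both define the compact self-adjoint solution operator $(-\Delta_h+1)^{-1}$ on $L^2(h)$ via Lax--Milgram, invoke Rellich--Kondrakhov (with the $n=1$ caveat) for compactness, apply the spectral theorem for compact self-adjoint operators, and then translate eigenvalues of the resolvent back to eigenvalues of $-\Delta_h$. The minor differences are cosmetic — the paper relegates smoothness of the eigenfunctions to a separate remark rather than incorporating it into the proof, and your observation that on a closed manifold the clause ``$u_k\in H^1_0(h)$'' is only informative if read as mean-zero orthogonality to $u_0$ is a fair point not addressed in the paper.
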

\begin{remark}\label{remark:efns smooth}
  Due to Lemma~\ref{lemma:elliptic reg}, the eigenfunctions $u_k$ are smooth.
\end{remark}
\begin{proof}
  Due to Lemma~\ref{lemma:existence sol to elliptic PDE}, we can solve the equation $-\Delta_hu+u=f$ for $f\in L^2(h)$. Then
  \begin{equation}\label{eq:weak sol poisson eq}
    \textstyle{\int}_M[h(\grad_hv,\grad_hu)+uv]d\mu_h=\textstyle{\int}_Mfvd\mu_h
  \end{equation}
  for all $v\in H^1(h)$. In particular, we can define an operator $S:L^2(h)\rightarrow H^1(h)$ taking $f$ to the solution $u$ to
  $-\Delta_hu+u=f$. Due to (\ref{eq:weak sol poisson eq}) with $v=u$ and the Cauchy-Schwartz inequality, it is clear that $S$ is
  continuous. If we consider $S$ to be an operator from $L^2(h)$ to itself, it follows from Theorem~\ref{thm:Rellich Kondrakhov}
  and Remark~\ref{remark:Rellich Kondrakhov dim one} that $S$ is a bounded, linear and compact operator from $L^2(h)$ to itself.
  In order to prove that $S$ is symmetric, let $f,g\in L^2(h)$; $u:=Sf$; and $v:=Sg$. Then 
  \[
  \textstyle{\int}_MSf\cdot gd\mu_h=\textstyle{\int}_Mu\cdot gd\mu_h=\textstyle{\int}_M[h(\grad_hv,\grad_hu)+uv]d\mu_h
  \]
  and
  \[
  \textstyle{\int}_Mf\cdot Sgd\mu_h=\textstyle{\int}_Mf\cdot vd\mu_h=\textstyle{\int}_M[h(\grad_hv,\grad_hu)+uv]d\mu_h.
  \]
  Thus $S$ is symmetric. Finally, $(Sf,f)\geq 0$, where $(\cdot,\cdot)$ denotes the $L^2$-inner product. At this point we can, just as in
  the proof of \cite[Theorem~1, p.~355]{evans}, appeal to \cite[§D.6]{evans} in order to obtain the desired conclusion. More specifically, note
  that $\mH:=L^2(h)$ is an infinite dimensional separable Hilbert space. Due to \cite[Theorem~7, p.~728]{evans}, there is a countable orthonormal
  basis of $\mH$ consisting of eigenvectors of $S$. Let, as in the proof of \cite[Theorem~7, p.~728]{evans}, $\{\eta_k\}$ denote the sequence of distinct
  eigenvalues of $S$, excepting $0$. On the other hand, if $Sf=\mu f$ for some $\mu\in\ro$ and $f\in L^2(h)$, then, if $u:=Sf$,
  \begin{equation}\label{eq:Sf mu estimate}
    \|Sf\|_{L^2(h)}^2\leq \|u\|_{H^1(h)}^2=(Sf,f)=\mu (f,f). 
  \end{equation}
  In particular, $\mu\geq 0$ and if $\mu=0$, then $u=0$, so that $f=0$. Thus $0$ is not an eigenvalue of $S$. If $\mu>0$, then $f=\mu^{-1}Sf$ and,
  since $f\neq 0$
  for an eigenvector, (\ref{eq:Sf mu estimate}) yields the conclusion that $\mu\leq 1$. In other words, if $\mu$ is an eigenvalue of $S$, it has
  to satisfy $0<\mu\leq 1$. Next, let $\mH_k:=N(S-\eta_k I)$ denote the null space
  of $S-\eta_k I$. Then $0<\dim \mH_k<\infty$; see the proof of \cite[Theorem~7, p.~728]{evans}. There are thus infinitely many eigenvalues and we
  can order them $1\geq \eta_1>\eta_2>\cdots>0$. Due to \cite[Theorem~6, p.~727]{evans}, $\eta_k$ is a sequence tending to $0$. On the other hand, 
  $\eta_k>0$ is an eigenvalue of $S$ if and only if $1/\eta_k-1$ is an eigenvalue of $-\Delta_h$ (and $u$ is an eigenvector of $S$ if and only if it
  is an eigenvector of $-\Delta_h$). Note also that $-\Delta_h$ only has one unit eigenfunction with zero eigenvalue. The theorem follows. 
\end{proof}

\section{A spectral definition of Sobolev spaces}\label{appendix:spec def sob}

Next, we give a spectral definition of Sobolev spaces and verify that this definition gives rise to equivalent norms in the case of non-negative
integer Sobolev exponents. 

\subsection{A spectral definition of Sobolev spaces}\label{ssection:specsob}
Let $(\S,h)$ be a closed, connected and oriented Riemannian manifold of dimension $n$. Then $\Delta_{h}$ has eigenvalues $\{-\lambda_{i}^{2}\}$,
where $0=\lambda_{0}<\lambda_{1}\leq\dots$, and a corresponding complete set of orthonormal eigenfunctions $\varphi_{i}$, $i=0,1,\dots$, with respect to
the $L^2(h)$-inner product; see Theorem~\ref{thm:complete set of evs}. Moreover, the $\varphi_i$ are smooth; see Remark~\ref{remark:efns smooth}.

We define a distribution on $\S$ to be a map from $C^{\infty}(\S)$ to $\ro$, say $u$, such that there is a $k\in \nn{}_0$ with the
property that
\[
|u(\psi)|\leq C\|\psi\|_{C^{k}(h)}
\]
for all $\psi\in C^{\infty}(\S)$. We denote the set of distributions on $\S$ by $\mD'(\S)$. We define the set of $\rn{l}$-valued
distributions similarly, and we denote them by $\mD'(\S,\rn{l})$. Given $u\in \mD'(\S,\rn{l})$ and $i\in \nn{}_0$, define
\begin{equation}\label{eq:huidef}
  \hu(i):=u(\varphi_{i}).
\end{equation}
Let $s\in\rn{}$. We say that $u\in H_{(s)}(h;\rn{l})$ if $u\in\mD'(\S,\rn{l})$ and
\[
\textstyle{\sum}_{i=0}^{\infty}\ldr{\lambda_{i}}^{2s}|\hu(i)|^{2}<\infty.
\]
Moreover, we use the notation $H_{(s)}(h):=H_{(s)}(h;\ro)$ and 
\begin{equation}\label{eq:specsobsp}
  \|u\|_{(s)}:=\left(\textstyle{\sum}_{i=0}^{\infty}\ldr{\lambda_{i}}^{2s}|\hu(i)|^{2}\right)^{1/2}.
\end{equation}
Next, note that, since $\{\varphi_i\}$ is complete, if $u,v\in L^{2}(h)$, then
\begin{equation}\label{eq:L two ip ito fc}
  (u,v)_{L^2(h)}:=\textstyle{\int}_{\S}u\cdot vd\mu_{h}=\textstyle{\sum}_{i=0}^{\infty}\hu(i)\cdot\hv(i),
\end{equation}
where the dot denotes the standard Euclidean inner product. This inner product naturally generalises to an inner product on $H_{(s)}(h;\rn{l})$:
\[
(u,v)_{(s)}:=\textstyle{\sum}_{i=0}^{\infty}\ldr{\lambda_{i}}^{2s}\hu(i)\cdot\hv(i). 
\]
Next, note that if $u\in L^{2}(h)$, then we can think of $u$ as an element of $\mD'(\S)$ via
\[
u(\phi):=\textstyle{\int}_{\S}\phi ud\mu_{h}.
\]
Moreover, if $u\in C^\infty(\S)$, $i,k\in\nn{}_0$ and $u_k:=(-\Delta_h+1)^ku$, then, by partial integration,
$\hu_k(i)=\ldr{\lambda_i}^{2k}\hu(i)$. Thus, given $s\in\ro$ and $k\in\nn{}_0$ such that $s\leq 2k$, 
\begin{equation}\label{eq:c infty in Hs}
  \textstyle{\sum}_{i=0}^{\infty}\ldr{\lambda_{i}}^{2s}|\hu(i)|^{2}
  \leq \textstyle{\sum}_{i=0}^{\infty}\ldr{\lambda_{i}}^{4k}|\hu(i)|^{2}
  = \textstyle{\sum}_{i=0}^{\infty}|\hu_k(i)|^{2}=\|u_k\|_{L^2(h)}^2<\infty.
\end{equation}
In particular, $C^\infty(\S)\subset H_{(s)}(h)$. Next, let $s\in\ro$ and $\a_i\in\rn{l}$ for $i\in\nn{}_0$ be such that
\[
  \left(\textstyle{\sum}_{i=0}^{\infty}\ldr{\lambda_{i}}^{2s}|\a_i|^{2}\right)^{1/2}<\infty.
\]
We then wish to prove that there is a $u\in H_{(s)}(h;\rn{l})$ such that $\hu(i)=\a_i$. Define, for $\psi\in C^{\infty}(\S)$,
\[
  u(\psi):=\textstyle{\sum}_i\hat{\psi}(i)\a_i.
\]
Given a $k\in\nn{}_0$ such that $-s\leq 2k$, H\"{o}lder's inequality then yields
\begin{equation*}
  \begin{split}
    |u(\psi)| \leq & \left(\textstyle{\sum}_{i=0}^{\infty}\ldr{\lambda_{i}}^{2s}|\a_i|^{2}\right)^{1/2}
    \left(\textstyle{\sum}_{i=0}^{\infty}\ldr{\lambda_{i}}^{-2s}|\hat{\psi}(i)|^{2}\right)^{1/2}\\
    \leq & \left(\textstyle{\sum}_{i=0}^{\infty}\ldr{\lambda_{i}}^{2s}|\a_i|^{2}\right)^{1/2}\|(-\Delta_h+1)^k\psi\|_{L^2(h)},
\end{split}
\end{equation*}
where we appealed to an estimate analogous to (\ref{eq:c infty in Hs}) in the last step. This means that $u\in \mD'(\S,\rn{l})$.
Moreover, $\hu(i)=\a_i$, so that $u\in H_{(s)}(h;\rn{l})$. Due to the completeness of $\ell^2$, we conclude that $H_{(s)}(h;\rn{l})$
is a Hilbert space. Note also that if $u\in H_{(s)}(h;\rn{l})$ and
\[
  u_k:=\textstyle{\sum}_{i=0}^k\hu(i)\varphi_i
\]
for $k\in\nn{}_0$, then $u_k\in C^\infty(\S,\rn{l})$ and $\|u_k-u\|_{(s)}\rightarrow 0$ as $k\rightarrow\infty$. Thus
$C^\infty(\S,\rn{l})$ is dense in $H_{(s)}(h;\rn{l})$. Next, note that the pairing (\ref{eq:L two ip ito fc}) can be generalised to
a pairing
\[
  \ldr{\cdot,\cdot}:H_{(s)}(h;\rn{l})\times H_{(-s)}(h;\rn{l})\rightarrow\ro,
\]
defined, for $u\in H_{(s)}(h;\rn{l})$ and $v\in H_{(-s)}(h;\rn{l})$, by 
\[
\ldr{u,v}:=\textstyle{\sum}_{i=0}^{\infty}\hu(i)\cdot\hv(i),
\]
where $\cdot$ denotes the standard Euclidean inner product; due to Hölder's inequality
\[
|\ldr{u,v}|\leq \|u\|_{(s)}\|v\|_{(-s)}.
\]

\textbf{Interpolation.} Say that $u\in H_{(s_{0})}(h;\rn{l})\cap H_{(s_{1})}(h;\rn{l})$, where $s_{0}<s_{1}$. Assume, moreover, that $s\in (s_{0},s_{1})$.
Then there is a $\tau\in (0,1)$ such that $s=\tau s_{1}+(1-\tau)s_{0}$. Let $p:=1/\tau$ and $q=1/(1-\tau)$. Then $p,q>1$ and $1/p+1/q=1$. Appealing to
Hölder's inequality then yields
\[
\|u\|_{(s)}^{2}=\textstyle{\sum}_{i=0}^{\infty}\ldr{\lambda_{i}}^{2\tau s_{1}}|\hu(i)|^{2\tau}\ldr{\lambda_{i}}^{2(1-\tau)s_{0}}|\hu(i)|^{2(1-\tau)}
\leq \|u\|_{(s_{1})}^{2\tau}\|u\|_{(s_{0})}^{2(1-\tau)}.
\]
In particular,
\begin{equation}\label{eq:Hsinterpol}
  \|u\|_{(s)}\leq \|u\|_{(s_{1})}^{\tau}\|u\|_{(s_{0})}^{1-\tau}.
\end{equation}

\textbf{The dual space.} Assume that $f$ is an element of the dual of $H_{(s)}(h)$ and let $\|f\|$ denote the norm of $f$ considered as a bounded
linear operator from $H_{(s)}(h)$ to $\ro$. Let, given $N\in\nn{}_0$, 
\[
\a_i:=\ldr{\lambda_{i}}^{-2s}\hf(i), \ \ \
\psi_{N}:=\textstyle{\sum}_{i=0}^{N}\a_i\varphi_{i};
\]
since $C^\infty(\S)\subset H_{(s)}(h)$, we can define $\hf(i)$ by (\ref{eq:huidef}). By definition, $|f(\psi_{N})|\leq \|f\|\cdot \|\psi_{N}\|_{(s)}$.
Moreover, 
\[
f(\psi_{N})=\textstyle{\sum}_{i=0}^{N}\ldr{\lambda_{i}}^{2s}\a_i^2,\ \ \
\|\psi_{N}\|_{(s)}=\big(\sum_{i=0}^{N}\ldr{\lambda_{i}}^{2s}\a_i^{2}\big)^{1/2}.
\]
Combining these observations yields $\|\psi_{N}\|_{(s)}\leq \|f\|$ for all $N\in\nn{}_0$. This means that
\[
\textstyle{\sum}_i\ldr{\lambda_i}^{-2s}|\hf(i)|^2\leq \|f\|^2<\infty,
\]
so that $\chi:=\textstyle{\sum}_{i=0}^{\infty}\hf(i)\varphi_{i}$ defines an element of $H_{(-s)}(h)$ with $\|\chi\|_{(-s)}\leq \|f\|$. Moreover, we can
associate an element $f_{\chi}$ in the dual of $H_{(s)}(h)$ with $\chi$; it is given by
\begin{equation}\label{eq:fchidef}
  f_{\chi}(\phi):=\ldr{\phi,\chi}=\textstyle{\sum}_{i=0}^{\infty}\hat{\chi}(i)\hat{\phi}(i)
  =\sum_{i=0}^{\infty}\hf(i)\hat{\phi}(i)=f(\phi).
\end{equation}
Note also that $\|f_\chi\|\leq \|\chi\|_{(-s)}$. In other words, the dual of $H_{(s)}(h)$ can be identified with $H_{(-s)}(h)$. 

In case $f$ is an element of the dual of $H_{(s)}(h;\rn{k})$, define $f_j$ in the dual of $H_{(s)}(h)$ by $f_j(\phi):=f(\phi e_j)$, where
$e_j$ is the $j$'th element of the standard basis of $\rn{k}$. By the above observations, there is then a $\chi_j\in H_{(-s)}(h)$ such
that $f_j(\phi)=\ldr{\phi,\chi_j}$. This means that
\begin{equation}\label{eq:fchidef md}
  f(\phi)=f\big(\textstyle{\sum}_j\phi_j e_j\big)=\textstyle{\sum}_jf_j(\phi_j)=\textstyle{\sum}_j\ldr{\phi_j,\chi_j}
  =\textstyle{\sum}_i\hat{\chi}(i)\cdot\hat{\phi}(i)=:\ldr{\phi,\chi},
\end{equation}
where $\chi:=\sum_j\chi_je_j\in H_{(-s)}(h;\rn{k})$. 

\subsection{Relating different definitions of Sobolev spaces}\label{subsection:relatingsobolevnorms}
Let $(\S,h)$ be a closed, connected and oriented Riemannian manifold of dimension $n$. Above, we define the norm associated with the Sobolev
space $H_{(s)}(h)$ and in Subsection~\ref{ssection:sob sp}, we introduce the norm $\|\cdot\|_{H^{m}(h)}$. It is of interest to relate the two
when $s=m\in\nn{}_0$. Note, to this end, that for $\psi\in C^\infty(\S)$, 
\[
\ldr{(-\Delta_{h}+1)^{m}\psi,\varphi_{i}}=\ldr{\lambda_{i}}^{2m}\hat{\psi}(i).
\]
This means that
\[
\|\psi\|_{(m)}^{2}=\textstyle{\sum}_{i}\ldr{\lambda_{i}}^{2m}|\hat{\psi}(i)|^2=\ldr{(-\Delta_{h}+1)^{m}\psi,\psi}.
\]
If $m$ is even, we can integrate by parts $m/2$ times in order to obtain $\|\psi\|_{(m)}\leq C\|\psi\|_{H^{m}(h)}$ for a constant $C$ depending
only on $m$ and $(\S,h)$. In case $m=2k+1$, we can integrate by parts $k$ times in order to obtain
\begin{equation}\label{eq:spec norm k odd}
  \begin{split}
    \|\psi\|_{(m)}^{2} = & \textstyle{\int}_\S(-\Delta_{h}+1)^{k+1}\psi\cdot(-\Delta_{h}+1)^{k}\psi\Vol_h\\
    = & \textstyle{\int}_\S(-\Delta_{h}+1)^{k}\psi\cdot(-\Delta_{h}+1)^{k}\psi\Vol_h+\textstyle{\int}_\S|\nabla^h(-\Delta_{h}+1)^{k}\psi|^2\Vol_h.
  \end{split}
\end{equation}
Again, we obtain the desired estimate:
\[
\|\psi\|_{(m)}\leq C\|\psi\|_{H^{m}(h)}
\]
for a constant $C$ depending only on $m$ and $(\S,h)$. Turning to the opposite estimate, if $m=2k$, we can appeal to (\ref{eq:u Hk plus two Lu Hk}) $k$
times. This yields 
\begin{equation}\label{eq:psi spec m geo m}
  \|\psi\|_{H^{m}(h)}\leq C\|(-\Delta_{h}+1)^{k}\psi\|_{2}\leq C\|\psi\|_{(m)}
\end{equation}
for a constant $C$ depending only on $m$ and $(\S,h)$. Assume now that $m=2k+1$. Then, due to (\ref{eq:psi spec m geo m}),
\begin{equation}\label{eq:even part spec bd}
  \|\psi\|_{H^{2k}(h)}\leq C\|\psi\|_{(2k)}\leq C\|\psi\|_{(m)}
\end{equation}
for a constant $C$ depending only on $m$ and $(\S,h)$. Next, let $U$, $\{E_{i}\}$, $\zeta$ and $K$ be as described at the beginning
of Subsection~\ref{ssection:loc fr comm}. Then, due to (\ref{eq:psi spec m geo m}),
\begin{equation*}
  \begin{split}
    \|E_{\zeta,i}\psi\|_{H^{2k}(h)} \leq & C\|E_{\zeta,i}(-\Delta_{h}+1)^{k}\psi\|_{2}+C\|[(-\Delta_{h}+1)^{k},E_{\zeta,i}]\psi\|_{2}\\
    \leq & C\|\nabla^h(-\Delta_{h}+1)^{k}\psi\|_{2}+C\|\psi\|_{H^{2k}(h)}.
  \end{split}
\end{equation*}
Combining this estimate with (\ref{eq:spec norm k odd}) and (\ref{eq:even part spec bd}) yields the conclusion that
$\|\psi\|_{H^{m}(h)}\leq C\|\psi\|_{(m)}$ holds when $m\in\nn{}_0$ is odd. Combining the above estimates with a density argument yields the conclusion
that, given $m\in\nn{}_0$, there is a constant $C_m>1$ depending only on $m$ and $(\S,h)$ such that 
\begin{equation}\label{eq:HmHpmequiv}
  C_{m}^{-1}\|\psi\|_{H^{m}(h)}\leq \|\psi\|_{(m)}\leq C_{m}\|\psi\|_{H^{m}(h)}
\end{equation}
for all $\psi\in H^{m}(h)$.

\section{Smooth dependence on coefficients, elliptic model equation}\label{section:contdeponcoeffellmodeleq}

Let $(\S, h_\refer)$ be a closed, connected and oriented Riemannian manifold of dimension $n$. In what follows, we denote the set of elements in
$H^{k}[T^{(0,m)}T\S;h_\refer]$ that take their values in the set of symmetric covariant $m$-tensor fields by $H^{k}[S^{(0,m)}T\S;h_\refer]$.

\begin{thm} \label{thm: continuous dependence}
  Let $(\S, h_\refer)$ be a closed, connected and oriented Riemannian manifold of dimension $n$. Let $k\in\nn{}$ satisfy $k>n/2+1$; let
  $h_0 \in H^{k + 1}[S^{(0,2)}T\S;h_\refer]$ be a Riemannian metric; let $a_0 \in H^k(h_\refer)$ be a strictly positive function; and let
  $f_0 \in H^k(h_\refer)$. Then there is an open subset
  \[
  \U\subseteq H^{k+1}[S^{(0,2)}T\S;h_\refer] \oplus H^k(h_\refer) \oplus H^k(h_\refer)
  \]
  with $(h_0, a_0, f_0) \in \U$, and a smooth map $\Phi: \U \to H^{k + 2}(h_\refer)$ such that
  \[
  \left( -\Delta_{h}+a \right) \Phi(h, a, f)= f.
  \]
\end{thm}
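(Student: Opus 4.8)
The plan is to realize $\Phi$ as the map furnished by the implicit function theorem for smooth maps of Banach spaces, applied to the defect map. Set $X := H^{k+1}[S^{(0,2)}T\S;h_\refer] \oplus H^k(h_\refer) \oplus H^k(h_\refer)$, $Y := H^{k+2}(h_\refer)$, $Z := H^k(h_\refer)$, and define $F$ on an open subset of $X \times Y$ by $F(h,a,f,u) := (-\Delta_h + a)u - f$. Since $k > n/2+1$, Sobolev embedding gives $H^{k+1}(h_\refer) \hookrightarrow C^1$ and $H^k(h_\refer) \hookrightarrow C^0$, so the set $\msO$ of triples $(h,a,f) \in X$ with $h$ pointwise positive definite and $a$ pointwise strictly positive is open in $X$ and contains $(h_0,a_0,f_0)$; on $\msO \times Y$ the expression for $F$ is well defined. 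First I would check that $F$ indeed maps into $Z$: using the local expression $(\ref{eq:DeltahitoEi})$ for $\Delta_h$ relative to a fixed local orthonormal frame $\{E_i\}$ for $h_\refer$ together with a subordinate partition of unity (so that no global frame is needed), $\Delta_h u$ is a finite sum of products of components of $h^{-1}$ and their first frame-derivatives, first frame-derivatives of $h$, the (fixed, smooth) structure coefficients $\lambda^k_{ij}$, and first and second frame-derivatives of $u$. Since $H^k(h_\refer)$ is a Banach algebra (as $k>n/2$) and multiplication $H^k \times H^m \to H^k$ is bounded bilinear for $m \geq k$, each such product lies in $H^k(h_\refer)$; hence $F$ maps $\msO \times Y$ into $Z$.

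Next I would verify that $F$ is smooth (indeed real-analytic). The only non-polynomial ingredient is $h \mapsto h^{-1}$; by Cramer's rule $h^{-1} = \mathrm{adj}(h)/\det(h)$, where $\mathrm{adj}(h)$ is polynomial in the components of $h$ — hence a smooth map $X \to H^{k+1}[S^{(0,2)}T\S;h_\refer]$ by the algebra property of $H^{k+1}(h_\refer)$ — and $\det(h)$ is polynomial in the components of $h$ and, on $\msO$, bounded away from $0$ by the $C^0$-embedding, so $t\mapsto 1/t$ composed with $\det(h)$ is smooth into $H^{k+1}(h_\refer)$ (the $\omega$-lemma for composition with smooth functions on Banach algebras containing $C^0$). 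All remaining operations — frame-derivatives, multiplications, the bounded linear maps $u \mapsto E_iE_j u$, and the affine dependence on $a$ and $f$ — are bounded multilinear or linear, hence smooth. Therefore $F \in C^\infty(\msO \times Y,Z)$.

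The crucial point is that the partial differential of $F$ with respect to $u$ at the base point — which, since $F$ is affine in $u$, equals the linear operator $L_0 := -\Delta_{h_0} + a_0 : H^{k+2}(h_\refer) \to H^k(h_\refer)$ regardless of the chosen $u$ — is a topological isomorphism. By Lemma~\ref{lemma:existence sol to elliptic PDE}, since $h_0 \in H^{k+1}[S^{(0,2)}T\S;h_\refer]$, $a_0 \in H^k(h_\refer)$ is strictly positive and $k > n/2+1$, for every $f \in H^k(h_\refer)$ there is a unique $u \in H^1(h_\refer)$ with $L_0[u] = \iota_{h_0}[f]$, and in fact $u \in H^{k+2}(h_\refer)$; thus $L_0$ is a bijection $Y \to Z$. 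Its inverse is bounded by the elliptic estimate $(\ref{eq:Hkestcrude})$ — or, more primitively, $(\ref{eq:u Hk plus two Lu Hk})$ — from Theorem~\ref{thm: elliptic estimate}, so $L_0$ is an isomorphism. The implicit function theorem then yields an open neighbourhood $\U \subseteq \msO$ of $(h_0,a_0,f_0)$ and a smooth map $\Phi : \U \to H^{k+2}(h_\refer)$ with $F(h,a,f,\Phi(h,a,f)) = 0$, i.e. $(-\Delta_h + a)\Phi(h,a,f) = f$ for all $(h,a,f) \in \U$.

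The main obstacle is the smoothness verification of $F$: bookkeeping the Sobolev indices so that every product occurring in $\Delta_h u$ (second derivatives of $u$ against $h^{-1}$; first derivatives of $u$ against first derivatives of $h$ and against products of components of $h^{-1}$) lands in $H^k(h_\refer)$, and establishing that $h \mapsto h^{-1}$ is a smooth map into the $H^{k+1}$ sections on the open set of positive-definite metrics. Everything else is routine. One should also note that the construction is independent of the auxiliary choices: since we only have local frames, one fixes a finite cover of $\S$ by coordinate neighbourhoods and a subordinate partition of unity, as in Appendix~\ref{appendix:sobolev}, reads off the claims from the local expressions, and observes that $\Phi$, being characterised by $(-\Delta_h+a)\Phi = f$ together with $\Phi(h,a,f)\in H^{k+2}(h_\refer)$ and the uniqueness in Lemma~\ref{lemma:existence sol to elliptic PDE}, does not depend on them.
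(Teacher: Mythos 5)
Your proposal follows essentially the same route as the paper: apply the implicit function theorem for Banach spaces to the map $\Psi((h,a),v) := (-\Delta_h + a)v$, using Lemma~\ref{lemma:existence sol to elliptic PDE} to verify that the partial derivative in $v$ at the base point is an isomorphism $H^{k+2}(h_\refer)\to H^k(h_\refer)$. The paper invokes \cite[Theorem~2.5.7]{AMR1988} (which solves $\Psi(x,\Phi(x,w))=w$ directly, so $f$ need not be folded into the parameter space, but that is only a packaging difference) and leaves the smoothness of $\Psi$ to the reader, which you usefully spell out via the Banach-algebra property of $H^k$ and Cramer's rule for $h\mapsto h^{-1}$.
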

\begin{remark}
  Since, due to Lemma~\ref{lemma:existence sol to elliptic PDE}, $\Phi$ is well defined on the set of all
  \[
  (h,a,f)\in H^{k+1}[S^{(0,2)}T\S;h_\refer] \oplus H^k(h_\refer) \oplus H^k(h_\refer)
  \]
  such that $h$ is a Riemannian metric and $a$ is strictly positive, we can assume $\U$ to coincide with this set. 
\end{remark}

Theorem \ref{thm: continuous dependence} implies for example the following:

\begin{cor} \label{cor: Ck regularity}
  Let $(\S, h_\refer)$ be a closed, connected and oriented Riemannian manifold of dimension $n$. Let $k$ be an integer such that $k >n/2+1$, let
  $m \in \nn{}_0$, let $t_1<t_2$ be real numbers and let 
  \begin{align*}
    \g:(t_1, t_2)&\to H^{k + 1}[S^{(0,2)}T\S;h_\refer] \oplus H^k(h_\refer) \oplus H^k(h_\refer)
  \end{align*}
  be $C^m$. Assume, moreover, that if $\g(t)=(h_t, a_t, f_t)$, then $h_t$ is positive definite and $a_t$ is strictly positive for all
  $t \in (t_1, t_2)$. Let $\lambda:(t_1, t_2) \to H^{k+2}(\S)$, be defined as follows: for $t\in (t_1,t_2)$, $\lambda(t)$ is the unique solution
  $v$ to $\left( -\Delta_{h_t} + a_t \right)v= f_t$. Then $\lambda$ is $C^m$.
\end{cor}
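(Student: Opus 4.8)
The plan is to deduce Corollary~\ref{cor: Ck regularity} from Theorem~\ref{thm: continuous dependence} by a simple composition argument. The map $\lambda$ in the statement of the corollary is, by definition and by the uniqueness part of Lemma~\ref{lemma:existence sol to elliptic PDE}, nothing but $\lambda = \Phi\circ\g$, where $\Phi$ is the map provided by Theorem~\ref{thm: continuous dependence} (and where, as noted in the remark following that theorem, we may take the domain $\U$ of $\Phi$ to be the full set of triples $(h,a,f)$ in $H^{k+1}[S^{(0,2)}T\S;h_\refer]\oplus H^k(h_\refer)\oplus H^k(h_\refer)$ with $h$ positive definite and $a$ strictly positive). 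The hypothesis on $\g$ guarantees precisely that $\g$ maps $(t_1,t_2)$ into this set, so the composition is well defined. Since $\Phi$ is smooth, in particular $C^m$, and $\g$ is $C^m$ by assumption, the chain rule for Fréchet-differentiable maps between Banach spaces gives that $\lambda=\Phi\circ\g$ is $C^m$.

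First I would recall why $\lambda(t)=\Phi(\g(t))$ genuinely holds as an equality of maps: for each fixed $t\in(t_1,t_2)$, $\Phi(h_t,a_t,f_t)$ is by construction a solution $v\in H^{k+2}(h_\refer)$ of $(-\Delta_{h_t}+a_t)v=f_t$, and Lemma~\ref{lemma:existence sol to elliptic PDE} asserts that such a solution is unique; hence it coincides with the $\lambda(t)$ defined in the statement of the corollary. Next I would invoke the standard fact that a composition of $C^m$ maps between open subsets of Banach spaces is $C^m$; here the ``outer'' map $\Phi:\U\to H^{k+2}(h_\refer)$ is $C^m$ (indeed $C^\infty$) and the ``inner'' map $\g:(t_1,t_2)\to\U$ is $C^m$, with $(t_1,t_2)$ viewed as an open subset of $\ro$. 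Composing therefore yields $\lambda\in C^m\bigl((t_1,t_2),H^{k+2}(h_\refer)\bigr)$, and since the $H^{k+2}(h_\refer)$-norm is equivalent to the $H^{k+2}(\S)$-norm (Subsection~\ref{ssection:conventionsframe}), this is the desired conclusion.

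There is essentially no obstacle: all the analytic content — the construction of a smooth solution operator on an open neighbourhood in the relevant Banach spaces — has been placed in Theorem~\ref{thm: continuous dependence}, and the corollary is a formal consequence. The only points that require a word of care, and which I would spell out, are (i) the identification of $\lambda$ with $\Phi\circ\g$ via uniqueness, so that one is entitled to differentiate the explicit smooth map $\Phi$ rather than $\lambda$ directly; and (ii) the observation that the domain $\U$ of $\Phi$ can be taken large enough to contain the entire image of $\g$, which is exactly what the remark after Theorem~\ref{thm: continuous dependence} provides, so that no shrinking of $(t_1,t_2)$ or patching over subintervals is needed. With those two remarks in place the proof is a one-line application of the chain rule.
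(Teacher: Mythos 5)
Your proof is correct and takes essentially the same route as the paper's: identify $\lambda$ with $\Phi\circ\g$ via uniqueness and apply the chain rule. The only cosmetic difference is that the paper argues locally (choosing $\delta(\tau)$ for each $\tau$ and noting $C^m$-regularity is a local property), whereas you invoke the remark after Theorem~\ref{thm: continuous dependence} to take $\U$ globally; both are equivalent.
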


\begin{proof}[Proof of Corollary \ref{cor: Ck regularity}, assuming Theorem \ref{thm: continuous dependence}]
  That $\lambda$ is well defined follows from Lemma~\ref{lemma:existence sol to elliptic PDE}. Given any $\tau \in (t_1, t_2)$, the map $\Phi$ in
  Theorem~\ref{thm: continuous dependence} is smooth near $(h_\tau, a_\tau, f_\tau)$, so the map
  \begin{align*}
    (\tau-\de, \tau + \de)
    &\to H^{k+2}(\S);\ \ \ t\mapsto \lambda(t) = \Phi(h_t, a_t, f_t),
  \end{align*}
  is of class $C^m$ for some small $\de = \de(\tau) > 0$. Since this is true for all $\tau \in (t_1, t_2)$, the desired statement follows. 
\end{proof}

The main ingredient in the proof of Theorem~\ref{thm: continuous dependence} is the following:

\begin{lemma}[The implicit function theorem in Banach spaces, see \cite{AMR1988}, Theorem~2.5.7, p.~121] \label{le: IFT}
  Let $\BaE$, $\BaF$, $\BaG$ be Banach spaces and let $\U \subset \BaE$, $\V \subset \BaF$ be open subsets.
  Let $\Psi: \U \times \V \to \BaG$ be of class $C^r$ for some $r \geq 1$. 
  For some $x_0 \in \U$, $y_0 \in \V$, assume that
  \begin{equation} \label{eq: second slot derivative}
    D_2\Psi(x_0, y_0): \BaF \to \BaG
  \end{equation}
  is an isomorphism. Then there are neighborhoods $\U_0$ of $x_0$ and $\W_0$ of $\Psi(x_0, y_0)$ and a unique $C^r$ map $\Phi: \U_0 \times \W_0 \to \V$,
  such that for all $(x, w) \in \U_0 \times \W_0$, the equality $\Psi(x, \Phi(x, w))= w$ holds.
\end{lemma}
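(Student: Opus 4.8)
The plan is to give the classical proof by reducing the equation $\Psi(x,y)=w$ to a parametrised fixed point problem and invoking the Banach fixed point theorem. Write $w_0:=\Psi(x_0,y_0)$ and $L:=D_2\Psi(x_0,y_0)\in L(\BaF,\BaG)$; by hypothesis $L$ is a bounded linear isomorphism, and by the bounded inverse theorem $L^{-1}\in L(\BaG,\BaF)$ is bounded. For $(x,w)\in\U\times\BaG$ and $y\in\V$ set
\[
  T_{x,w}(y):=y-L^{-1}\big(\Psi(x,y)-w\big),
\]
so that, for fixed $(x,w)$, a point $y$ is a fixed point of $T_{x,w}$ if and only if $\Psi(x,y)=w$. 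The first task is to produce a closed ball $\bar B_\delta(y_0)\subset\V$ and neighbourhoods $\U_0\ni x_0$, $\W_0\ni w_0$ so that $T_{x,w}$ is a $\tfrac12$-contraction of $\bar B_\delta(y_0)$ into itself for every $(x,w)\in\U_0\times\W_0$.

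For the contraction bound I would compute $D_2T_{x,w}(y)=\roId_{\BaF}-L^{-1}D_2\Psi(x,y)=L^{-1}\big(L-D_2\Psi(x,y)\big)$ and use the continuity of $(x,y)\mapsto D_2\Psi(x,y)$ (valid since $\Psi\in C^1$) together with $D_2\Psi(x_0,y_0)=L$ to choose $\delta>0$ and $\U_0$ with $\bar B_\delta(y_0)\subset\V$ and $\norm{D_2T_{x,w}(y)}\le\tfrac12$ for all $x\in\U_0$, $y\in\bar B_\delta(y_0)$, $w\in\BaG$; the mean value inequality then gives $\norm{T_{x,w}(y)-T_{x,w}(y')}\le\tfrac12\norm{y-y'}$ on $\bar B_\delta(y_0)$. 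For the self-mapping property, $\norm{T_{x,w}(y_0)-y_0}=\norm{L^{-1}(\Psi(x,y_0)-w)}$ can be made $\le\delta/2$ by shrinking $\U_0$ (continuity of $x\mapsto\Psi(x,y_0)$) and taking $\W_0$ a small ball around $w_0$; with the contraction estimate this forces $T_{x,w}(\bar B_\delta(y_0))\subseteq\bar B_\delta(y_0)$. The Banach fixed point theorem then yields, for each $(x,w)\in\U_0\times\W_0$, a unique $y=\Phi(x,w)\in\bar B_\delta(y_0)$ with $\Psi(x,\Phi(x,w))=w$; shrinking once more so that any continuous solution map on these neighbourhoods has range inside $\bar B_\delta(y_0)$ gives the uniqueness assertion.

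Next I would prove regularity of $\Phi$. Local Lipschitz continuity follows from the uniform contraction principle: from $\Phi(x,w)=T_{x,w}(\Phi(x,w))$ and the $\tfrac12$-contraction estimate one gets $\norm{\Phi(x,w)-\Phi(x',w')}\le2\norm{T_{x,w}(\Phi(x',w'))-T_{x',w'}(\Phi(x',w'))}$, and the right side is bounded by $\norm{x-x'}+\norm{w-w'}$ using locally bounded derivatives of $\Psi$ and boundedness of $L^{-1}$. For differentiability, I would first shrink the neighbourhoods so that $D_2\Psi(x,\Phi(x,w))$ lies in the open set $\mathrm{GL}(\BaF,\BaG)$ of invertible operators throughout $\U_0\times\W_0$ (possible since that set is open and $D_2\Psi(x_0,y_0)=L$ is invertible), and then verify that
\[
  \Lambda(x,w)(\xi,\omega):=\big[D_2\Psi(x,\Phi(x,w))\big]^{-1}\big(\omega-D_1\Psi(x,\Phi(x,w))\,\xi\big)
\]
is the Fréchet derivative of $\Phi$. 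This is the core estimate: one expands $0=\Psi(x+\xi,\Phi(x+\xi,w+\omega))-\Psi(x,\Phi(x,w))-\omega$ using the first-order Taylor expansion of $\Psi$ and the Lipschitz bound on $\Phi$, solves for $\Phi(x+\xi,w+\omega)-\Phi(x,w)-\Lambda(x,w)(\xi,\omega)$, and checks the remainder is $o(\norm{\xi}+\norm{\omega})$, so that $\Phi\in C^1$ with $D\Phi=\Lambda$.

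Finally, the passage to $C^r$ is a bootstrap. The inversion map $A\mapsto A^{-1}$ is $C^\infty$ on $\mathrm{GL}(\BaF,\BaG)$ and the composition and evaluation maps of operator spaces are smooth, so the identity $D\Phi=\Lambda$ exhibits $D\Phi$ as a smooth function of $\Phi$, of $D_1\Psi\circ(\roId,\Phi)$ and of $D_2\Psi\circ(\roId,\Phi)$; since $D_1\Psi,D_2\Psi\in C^{r-1}$, an induction gives: if $\Phi\in C^j$ with $1\le j\le r-1$ then $D_i\Psi\circ(\roId,\Phi)\in C^j$, hence $D\Phi\in C^j$ and $\Phi\in C^{j+1}$, and iterating from $j=1$ to $j=r-1$ yields $\Phi\in C^r$ (continuing indefinitely if $r=\infty$). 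I expect the main obstacle to be the differentiability verification in the previous paragraph: controlling the Taylor remainder of $\Psi$ composed with $\Phi$ requires simultaneously using the first-order expansion of $\Psi$, the Lipschitz estimate for $\Phi$, and uniform bounds on $[D_2\Psi]^{-1}$ near the base point, whereas the contraction step and the bootstrap are routine.
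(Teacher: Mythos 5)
The paper does not prove this lemma at all: it is quoted verbatim as a known result, with the proof delegated to the cited reference (Abraham--Marsden--Ratiu, Theorem~2.5.7), so there is no in-paper argument to compare against. Your proposal is the standard, correct proof by a parametrised contraction: the map $T_{x,w}(y)=y-L^{-1}(\Psi(x,y)-w)$, the uniform $\tfrac12$-contraction on a closed ball obtained from continuity of $D_2\Psi$, the self-mapping estimate, Lipschitz continuity of $\Phi$ from the uniform contraction principle, the explicit candidate derivative $\Lambda=[D_2\Psi]^{-1}(\omega-D_1\Psi\,\xi)$ verified via the first-order Taylor expansion of $\Psi$ together with the Lipschitz bound on $\Phi$, and the $C^r$ bootstrap using smoothness of inversion on the open set of invertible operators. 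This differs slightly in route from the cited source, which deduces the implicit function theorem from the inverse function theorem applied to $(x,y)\mapsto(x,\Psi(x,y))$; your direct fixed-point argument is equally standard and, if anything, more self-contained, at the cost of redoing the differentiability estimate that the inverse-function-theorem route inherits. One small caveat: the uniqueness assertion, as literally stated with target $\V$, only holds after also shrinking the target to the small ball $\bar B_\delta(y_0)$ (or restricting to solution maps passing through $y_0$); your parenthetical remark about "shrinking once more" does not by itself force a competing continuous solution map with values elsewhere in $\V$ into that ball, but this is a defect of the common phrasing of the statement rather than of your argument, and is how the cited reference is meant to be read.
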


\begin{proof}[Proof of Theorem \ref{thm: continuous dependence}]
  The proof consists of an application of Lemma \ref{le: IFT}. We begin by choosing
  \begin{align*}
    \BaE := H^{k+1}[S^{(0,2)}T\S;h_\refer] \oplus H^k(h_\refer), \ \ \
    \BaF := H^{k+2}(h_\refer), \ \ \
    \BaG := H^k(h_\refer).
  \end{align*}
  We continue by choosing $\V=\BaF$ and $\U$ to be the set of $(h,a)\in \BaE$ such that $h$ is a Riemannian metric and $a$ is strictly
  positive. Let us now define the map $\Psi: \U \times \V\to \BaG$ by 
  \[
    \Psi((h, a), v):=\left( -\Delta_{h}+a \right) v.
  \]
  We leave it to the reader to verify that $\Psi$ is a smooth map. Next, let $x_0= (h_0, a_0) \in \U$ and $y_0:=v_0 \in \V$, where $v_0$ is
  the unique solution to
  \[
  \left( -\Delta_{h_0} + a_0 \right) v_0= f_0;
  \]
  see Lemma~\ref{lemma:existence sol to elliptic PDE}. Compute
  \[
  D_2 \Psi(x_0, y_0)(v)	= \left( -\Delta_{h_0} + a_0 \right) v.
  \]
  Thus $D_2\Psi(x_0, y_0): \BaF \to \BaG$ is an isomorphism due to Lemma~\ref{lemma:existence sol to elliptic PDE}.  
  The implicit function theorem, Lemma \ref{le: IFT}, gives a smooth map
  \[
  \Phi: \U_0 \times \W_0 \subset \BaE\oplus \BaG \to \BaF,
  \]
  where $\U_0$ is an open neighborhood of $x_0 = (h_0, a_0)$ and $\W_0$ is an open neighborhood of $\Psi(x_0, y_0) = f_0$, such that
  \[
  \Psi((h, a), \Phi((h, a), f)) = f,
  \]
  for all $(h, a) \in \U_0$ and $f \in \W_0$. In other words, $\Phi$ takes $((h, a), f)$ to the unique solution $v$ to 
  \[
  \left( -\Delta_{h}+a \right) v = f.
  \]
  This completes the proof.
\end{proof}

\end{document}